\setlist{leftmargin=9mm}
\DeclarePairedDelimiter{\ceil}{\lceil}{\rceil}
\newcommand{\ubar}[1]{\underaccent{\bar}{#1}}
\newcommand*\botimes{{\mathpalette\botimes@{1.5}}}
\newcommand*\botimes@[2]{\mathbin{\vcenter{\hbox{\scalebox{#2}{\hspace{0.0mm}$\m@th#1\otimes$\hspace{0.0mm}}}}}}
\newcommand*\Cdot{{\mathpalette\Cdot@{.7}}}
\newcommand*\Cdot@[2]{\mathbin{\vcenter{\hbox{\scalebox{#2}{\hspace{0.1mm}$\m@th#1\bullet$\hspace{0.1mm}}}}}}
\newcommand{\VERT}[1]{{\left\vert\kern-0.25ex\left\vert\kern-0.25ex\left\vert #1 
    \right\vert\kern-0.25ex\right\vert\kern-0.25ex\right\vert}}
\newcommand{\bA}{\mathbb{A}}
\newcommand{\bF}{\mathbb{F}} 
\newcommand{\bN}{\mathbb{N}} 
\newcommand{\bC}{\mathbb{C}}
\newcommand{\bZ}{\mathbb{Z}}
\newcommand{\bR}{\mathbb{R}}
\newcommand{\bT}{\mathbb{T}}
\newcommand{\bK}{\mathbb{K}}
\newcommand{\cS}{\mathcal{S}}
\newcommand{\cP}{\mathcal{P}}
\newcommand{\cX}{\mathcal{X}}
\newcommand{\fA}{\mathbf{A}}
\newcommand{\fB}{\mathbf{B}}
\newcommand{\fC}{\mathbf{C}}
\newcommand{\fD}{\mathbf{D}}
\newcommand{\fE}{\mathbf{E}}
\newcommand{\fV}{\mathbf{V}}
\newcommand{\fI}{\mathbf{I}}
\newcommand{\fJ}{\mathbf{J}}
\newcommand{\fF}{\mathbf{F}}
\newcommand{\fS}{\mathbf{S}}
\newcommand{\fL}{\mathbf{L}}
\newcommand{\fH}{\mathbf{H}}
\newcommand{\fP}{\mathbf{P}}
\newcommand{\fG}{\mathbf{G}}
\newcommand{\fR}{\mathbf{R}}
\newcommand{\fT}{\mathbf{T}}
\newcommand{\fX}{\mathbf{X}}
\newcommand{\fZ}{\mathbf{Z}}
\newcommand{\fW}{\mathbf{W}}
\newcommand{\sA}{\mathscr{A}}
\newcommand{\sS}{\mathscr{S}}
\newcommand{\sB}{\mathscr{B}}
\newcommand{\sC}{\mathscr{C}}
\newcommand{\sM}{\mathscr{M}}
\newcommand{\sN}{\mathscr{N}}
\newcommand{\sNN}{\mathscr{N}_{\sharp}}
\newcommand{\sW}{\mathscr{W}}
\newcommand{\sV}{\mathscr{V}}
\newcommand{\sH}{\mathscr{H}}
\newcommand{\sF}{\mathscr{F}}
\newcommand{\sG}{\mathscr{G}}
\newcommand{\sX}{\mathscr{X}}
\newcommand{\sY}{\mathscr{Y}}
\newcommand{\sZ}{\mathscr{Z}}
\newcommand{\sK}{\mathscr{K}}
\newcommand{\sP}{\mathscr{P}}
\newcommand{\sT}{\mathscr{T}}
\newcommand{\sU}{\mathscr{U}}
\newcommand{\bFF}{\mathbb{G}}
\newcommand{\ri}{\mathrm{i}}
\newcommand{\rd}{\mathrm{d}}
\newcommand{\rD}{\mathrm{D}}
\newcommand{\re}{\mathrm{e}}
\newcommand{\rc}{\mathrm{c}}
\newcommand{\rK}{\mathrm{K}}
\renewcommand{\Im}{\mathrm{Im}}
\newcommand{\supp}{\mathrm{supp}}
\definecolor{darkblue}{rgb}{0.0,0.0,0.5}
\newtheorem{thm}{Theorem}
\newtheorem{lem}[thm]{Lemma}
\newtheorem{cor}[thm]{Corollary}
\theoremstyle{remark}
\newtheorem{rem}[thm]{Remark}
\theoremstyle{remark}
\theoremstyle{definition}
\newtheorem{dfn}[thm]{Definition}
\numberwithin{equation}{section}
\numberwithin{thm}{section}
\numberwithin{example}{section}
\begin{document}
\title{Construction of Gross-Neveu model using \\
Polchinski flow equation}
\author{Pawe{\l} Duch
\\
Faculty of Mathematics and Computer Science \\  
Adam Mickiewicz University in Pozna\'n\\
ul. Uniwersytetu Pozna\'nskiego 4, 61-614 Pozna\'n, Poland\\
pawel.duch@amu.edu.pl
}
\date{\today}

\maketitle

\begin{abstract}
The Gross-Neveu model is a quantum field theory model of Dirac fermions in two dimensions with a quartic interaction term. Like Yang-Mills theory in four dimensions, the model is scaling critical (i.e. renormalizable but not super-renormalizable) and asymptotically free (i.e. its short-distance behavior is governed by the free theory). We give a new construction of the massive Euclidean Gross-Neveu model in infinite volume. The distinctive feature of the construction is that it does not involve cluster expansion, discretization of phase-space or a tree expansion ansatz and is based solely on the renormalization group flow equation. We express the Schwinger functions of the Gross-Neveu model in terms of the effective potential and construct the effective potential by solving the flow equation using the Banach fixed point theorem. Moreover, we construct a random field in the probability space of the free field such that its moments coincide with the Schwinger functions of the  Gross-Neveu model. This is the first construction of a strong coupling between the free and interacting fields for a scaling-critical QFT. Since we use crucially the fact that fermionic fields can be represented as bounded operators our construction does not extend to models including bosons. However, it is applicable to other asymptotically free purely fermionic theories such as the symplectic fermion model.
\end{abstract}

\tableofcontents

\section{Introduction}

We give a new construction of the massive Gross-Neveu model on the plane based entirely on the Polchinski flow equation. The Gross-Neveu model is one of the simplest barely-renormalizable (i.e. scaling critical) and asymptotically free models of quantum field theory. It is defined in terms of $2N\in\bN_+$ Dirac fields
\begin{equation}
 \psi\equiv(\bar\psi,\ubar\psi)\equiv 
 ((\bar\psi^{1},\ldots,\bar\psi^{N}),(\ubar\psi^{1},\ldots,\ubar\psi^{N})).
\end{equation}
For every $n\in\{1,\ldots,N\}$ the fields $\bar\psi^{n}\equiv(\bar\psi^{n,\alpha})_{\alpha\in\{1,2\}}$, $\ubar\psi^{n}\equiv(\ubar\psi^{n,\alpha})_{\alpha\in\{1,2\}}$ transform under rotations of the plane as two-component spinors. The fields take values in a Grassmann algebra and, in particular, they all anti-commute. In what follows, we use the notation $\bar\psi\equiv\psi^-$, $\ubar\psi\equiv\psi^+$, $\psi\equiv(\psi^\sigma)_{\sigma\in\bFF}$, where $\bFF:=\{-,+\}\times\{1,\ldots,N\}\times\{1,2\}$, and
\begin{equation}
 \bar\psi\cdot\ubar\psi=\sum_{n=1}^N\sum_{\alpha=1}^2 \bar\psi^{n,\alpha}\,\ubar\psi^{n,\alpha},
 \qquad
 \bar\psi\cdot \slashed{\partial}\ubar\psi
 =
 \sum_{n=1}^N\sum_{\alpha_1,\alpha_2=1}^2\sum_{j=1}^2
 \bar\psi^{n,\alpha_1}\,\gamma^{\alpha_1,\alpha_2}_j\partial_j\ubar\psi^{n,\alpha_2},
\end{equation}
where $\gamma_j=(\gamma_j^{\alpha_1,\alpha_2})_{\alpha_1,\alpha_2\in\{1,2\}}$, $j\in\{1,2\}$, are the Pauli matrices and $\partial_j$, $j\in\{1,2\}$, denote the derivatives with respect to the Cartesian coordinates of the plane. We first introduce a Gross-Neveu model with cutoffs $\tau,\varepsilon\in(0,1]$ that is defined on a two-dimensional torus $\bT_\tau^2$ of size $1/\tau$ in terms of Dirac fields containing only the Fourier modes with frequencies less than $1/\varepsilon$. Subsequently, we study the limit $\tau,\varepsilon\searrow0$. The free part of the action takes the form
\begin{equation}\label{eq:free_action_intro}
 A_{\tau}(\psi) 
 = 
 \int_{\bT_\tau^2} 
 \bar\psi(x)\cdot\ubar\psi(x)
 \,\rd x
 +
 \int_{\bT_\tau^2} 
 \bar\psi(x)\cdot(\slashed{\partial}\ubar\psi)(x)\,\rd x.
\end{equation}
The first term on the RHS of the above equality is called the mass term whereas the second one is called the kinetic term. The interaction potential is given by
\begin{equation}\label{eq:potential_intro}
 U_{\tau,\varepsilon}(\psi) = 
 \int_{\bT_\tau^2} 
 1/g_{\tau,\varepsilon}~(\bar\psi(x)\cdot\ubar\psi(x))^2
 \,\rd x
 +
 \int_{\bT_\tau^2} 
 r_{\tau,\varepsilon}\,\bar\psi(x)\cdot\ubar\psi(x)\,\rd x.
\end{equation}
The parameters $1/g_{\tau,\varepsilon},r_{\tau,\varepsilon}\in\bR$ are called the bare coupling constant and the mass counterterm, respectively. The Grassmann measure of the Gross-Neveu model with the infrared cutoff $\tau\in(0,1]$ and the ultraviolet cutoff $\varepsilon\in(0,1]$ is defined by the formula
\begin{equation}\label{eq:measure_intro}
 \mu_{\tau,\varepsilon}(F) := 
 \frac{\int F(\vartheta_\varepsilon\ast\psi_{\tau,\varepsilon})\,\exp\!\big(-A_{\tau}(\psi_{\tau,\varepsilon})+U_{\tau,\varepsilon}(\vartheta_\varepsilon\ast\psi_{\tau,\varepsilon})\big)\,\rd\psi_{\tau,\varepsilon}}{\int \exp\!\big(-A_{\tau}(\psi_{\tau,\varepsilon})+U_{\tau,\varepsilon}(\vartheta_\varepsilon\ast\psi_{\tau,\varepsilon})\big)\,\rd\psi_{\tau,\varepsilon}}
\end{equation}
for functionals of polynomial type $F$ on Grassmann-valued Schwartz distributions on $\bR^2$. We use the notation $\int F(\psi_{\tau,\varepsilon}) \,\rd\psi_{\tau,\varepsilon}$ for the so-called Berezin integral~\cite{Ber87}. The Berezin integral can be viewed as a fermionic analog of the Lebesgue integral, even though it is not an integral in the usual sense. The Grassmann field $\psi_{\tau,\varepsilon}\equiv(\psi^\sigma_{\tau,\varepsilon})_{\sigma\in\bFF}$ on $\bR^2$ appearing in the above formula is periodic with period $1/\tau$ and contains only Fourier modes with frequencies smaller than $1/\varepsilon$. Since working with the sharp cutoff in momentum space would result in slow decay of correlations in position space we convolve the Grassmann field $\psi_{\tau,\varepsilon}$ with a Schwartz function $\vartheta_\varepsilon\in\sS(\bR^2)$ supported in Fourier space inside the ball of radius $1/\varepsilon$. We denote by $\sT^\bK$ the set of tuples $(V^k)_{k\in\bK}$ such that $V^k\in\sT$ for all $k\in\bK$. Now we are ready to state our main result.

\begin{thm}\label{thm:main}
Let $N\in\{2,3,\ldots\}$. There exist $\lambda_\star\in(0,1]$ and a choice of parameters $(g_{\tau,\varepsilon},r_{\tau,\varepsilon})_{\tau,\varepsilon\in(0,1]}$ such that for all $\lambda\in(0,\lambda_\star]$ the following statements are true.
\begin{itemize}
 \item[(A)] For every $m\in\bN_+$ and $\varphi_1,\ldots,\varphi_m\in\sS(\bR^2)^\bFF$ the limit
\begin{equation}
 \langle S^m_\lambda,\varphi_1\otimes\ldots\otimes\varphi_m\rangle :=\lim_{\tau,\varepsilon\searrow0} \int \psi(\varphi_1)\ldots\psi(\varphi_m)\,\mu_{\tau,\varepsilon}(\rd\psi)
\end{equation}
exists and defines the $m$-point Schwinger function $S^m_\lambda\in\sS'(\bR^{2m})^{\bFF^{m}}$, where $\mu_{\tau,\varepsilon}$ is the Grassmann measure of the Gross-Neveu model with cutoffs $\tau,\varepsilon\in(0,1]$ defined in terms of the parameters $g_{\tau,\varepsilon}$, $r_{\tau,\varepsilon}\in\bR$.

\item[(B)] For every $m\in\bN_+$ the distribution $S^m_\lambda\in\sS'(\bR^{2m})^{\bFF^{m}}$ is invariant under Euclidean transformations of the plane.

\item[(C)] For $m\in\bN_+$ let $T^m_\lambda\in\sS'(\bR^{2m})^{\bFF^{m}}$ be the $m$-point truncated Schwinger function associated to the hierarchy $(S^m_\lambda)_{m\in\bN_+}$. For every $m\in\bN_+$ there exist a collection $(\hat T_\lambda^{m,a,\sigma})_{a\in\bA^m,\sigma\in\bFF^m}$ of finite Borel measures on $\bR^{2(m-1)}$ such that
\begin{equation}
 \langle T^m_\lambda,\varphi\rangle= \sum_{a\in\bA^m}\sum_{\sigma\in\bFF^m} \int_{\bR^{2m}} \hat T_\lambda^{m,a,\sigma}(\rd y_1,\ldots,\rd y_{m-1})\,(\partial^a\varphi^\sigma)(x,x+y_1,\ldots,x+y_{m-1})\,\rd x
\end{equation}
for all $\varphi\in\sS(\bR^{2m})^{\bFF^m}$, where $\bA:=\{0,1,2\}^2$.

\item[(D)] The equalities
\begin{equation}
\begin{gathered}
 \sum_{\sigma\in\bFF^2}\psi^{\sigma_1}\psi^{\sigma_2}\int_{\bR^2}\hat T_\lambda^{2,0,\sigma}(\rd y) =\ubar\psi\cdot\bar\psi,
 \\
 \sum_{\sigma\in\bFF^4}\psi^{\sigma_1}\ldots\psi^{\sigma_4}\int_{\bR^6}\hat T_\lambda^{4,0,\sigma}(\rd y_1,\rd y_2,\rd y_3) = \lambda \,(\ubar\psi\cdot\bar\psi)^2
\end{gathered} 
\end{equation}
hold for all Grassmann-valued $\psi\equiv(\psi^\sigma)_{\sigma\in\bFF}\equiv(\bar\psi^{\alpha,\varsigma},\ubar\psi^{\alpha,\varsigma})_{\alpha\in\{1,2\},\varsigma\in\{1,\ldots,N\}}$.

\item[(E)] The truncated Schwinger functions decay  stretched exponentially, more precisely
\begin{equation}
 \lim_{|x|\to\infty}\exp(|x|^{1/2})\,|\langle T_\lambda^{m+n},\varphi_x\otimes\psi\rangle|=0
\end{equation}
for all $m,n\in\bN_+$ and $\varphi\in C^\infty_\rc(\bR^{2m})^{\bFF^m}$, \mbox{$\psi\in C^\infty_\rc(\bR^{2n})^{\bFF^n}$}, where for $x\in\bR^2$ we define $\varphi_x\in C^\infty_\rc(\bR^{2m})^{\bFF^m}$ by the equality $\varphi_x(y_1,\ldots,y_m):=\varphi(y_1-x,\ldots,y_m-x)$ for all $y_1,\ldots,y_m\in\bR^2$.
\end{itemize}
\end{thm}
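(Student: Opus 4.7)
The plan is to reformulate the construction of the Schwinger functions as a fixed-point problem for the effective potential defined by Polchinski's renormalization group flow. Concretely, using a smooth scale decomposition of the propagator $C_{\tau,\varepsilon}=\int_\varepsilon^1\dot C_{\tau,t}\,\rd t$, introduce the family of effective potentials $U_{\tau,\varepsilon,t}$ interpolating between the bare potential $U_{\tau,\varepsilon}$ at $t=\varepsilon$ and the fully integrated effective action at $t=1$, and governed by the Polchinski equation
\begin{equation}
\partial_t U_{\tau,\varepsilon,t}
=\tfrac{1}{2}\langle \dot C_{\tau,t},\delta^2 U_{\tau,\varepsilon,t}/\delta\psi^2\rangle
-\tfrac{1}{2}\langle \dot C_{\tau,t},\delta U_{\tau,\varepsilon,t}/\delta\psi\otimes \delta U_{\tau,\varepsilon,t}/\delta\psi\rangle.
\end{equation}
The Schwinger functions in Part (A) can then be read off from functional derivatives of $U_{\tau,\varepsilon,1}$ (plus a Gaussian piece from the propagator), so the whole theorem reduces to uniform-in-$(\tau,\varepsilon)$ bounds and convergence of $U_{\tau,\varepsilon,t}$.

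\textbf{Fixed point.} I would recast the flow equation in integral form by splitting the relevant (marginal and super-renormalizable) components of $U_{\tau,\varepsilon,t}$, namely the $(\bar\psi\cdot\ubar\psi)^2$ and $\bar\psi\cdot\ubar\psi$ operators, from the irrelevant remainder. The integration constants for these two directions are fixed by the renormalization conditions in (D), while the remaining components are integrated from $t=\varepsilon$ using vanishing bare input. The bare parameters $g_{\tau,\varepsilon}$ and $r_{\tau,\varepsilon}$ are then defined implicitly in order to ensure consistency at $t=\varepsilon$. Working in a Banach space of translation-covariant kernels equipped with a norm that combines (i) a running-coupling weight $\lambda(t)$ solving the leading asymptotically free beta function equation of Gross-Neveu (this is where $N\geq 2$ enters, to guarantee the correct sign), (ii) scale-dependent weights $t^{[\cdot]}$ tuned to the scaling dimension of each kernel, and (iii) stretched exponential position-space weights $\exp(c|x|^{1/2})$, the right-hand side of the integral flow equation defines a map on a ball of radius $O(\lambda)$ that is a contraction provided $\lambda\leq\lambda_\star$ is sufficiently small. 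The Banach fixed point theorem then yields $U_{\tau,\varepsilon,t}$ together with uniform bounds.

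\textbf{Consequences.} The uniform bounds together with the structure of the fixed-point equation imply convergence of $U_{\tau,\varepsilon,t}$ as $\tau,\varepsilon\searrow 0$ on each scale, hence of the Schwinger functions, establishing (A). Euclidean invariance in (B) follows from uniqueness of the fixed point combined with the Euclidean invariance of $\dot C_{\tau,t}$, the renormalization conditions, and the convolution mollifier $\vartheta_\varepsilon$ chosen rotation invariant. The position-space kernel representation built into the Banach norm directly provides the finite Borel measures in (C), via decomposing each kernel into translates parametrized by relative positions together with an $x$-integration; the stretched exponential weights give (E). Statement (D) is built in by construction, as it constitutes the IR boundary data used to invert the flow on the relevant directions.

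\textbf{Main obstacle.} The heart of the argument lies in the marginal character of the quartic coupling: the flow is not a small perturbation of the identity, and naive Picard iteration diverges logarithmically at each scale. Taming this requires that the nonlinear quadratic term in the Polchinski equation generate \emph{precisely} the asymptotically free running of $\lambda(t)$, and that all irrelevant kernels be estimated by explicit positive powers of $\lambda(t)$ so that the bilinear loss in each step is absorbed by the running. Designing the Banach norm so that relevant-operator subtraction, scale-dependent coupling bookkeeping, and stretched exponential localization all coexist, and verifying that with this norm the integrated flow is indeed a contraction, is the technically demanding step that the remainder of the paper must carry out.
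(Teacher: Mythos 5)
Your overall architecture (mild Polchinski flow, splitting the relevant/marginal directions fixed by renormalization conditions at unit scale from the irrelevant remainder, a Banach fixed point in a space weighted by the running coupling, the scaling dimension of each kernel, and stretched exponential position-space weights) coincides with the paper's. The genuine gap is that the Banach norm you specify is exactly the kind of norm for which the contraction cannot be closed. With a purely kernel-based norm of the type $\|\cdot\|_{\sV^{\alpha,\beta;\gamma}}$ (powers of $\lambda_t$, powers of $t$, stretched exponential weights), the estimate one needs for the quadratic term of the integrated flow, namely a bound of the form $\big\|t\mapsto\int_0^t \fE\fA_{\tau,\varepsilon;t,s}\fB_{\varepsilon;s}(V_s)\,\rd s\big\|_{\sV^{\alpha,\beta;2\gamma}}\leq C\,\|V_\Cdot\|_{\sV^{\alpha,\beta;\gamma}}^2$ with comparable indices on both sides, is argued in the paper to be false: the scale decomposition of the free field is only $1/2$-H\"older in the scale parameter, $\|\partial^a\varPsi^\sigma_{\tau,\varepsilon;t,s}\|_{\sC}\lesssim (s^{-1-2|a|}-t^{-1-2|a|})^{1/2}$ rather than locally Lipschitz, and this is precisely the obstruction behind the Brydges--Wright error pointed out by Salmhofer--Wieczerkowski. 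The paper's central new ingredient, which your proposal lacks, is the stronger norm $\|V_\Cdot\|_{\sW^{\alpha,\beta;\gamma}_{\tau,\varepsilon}}=\sup_{u}\|s\mapsto \fA_{\tau,\varepsilon;u\vee s,s}V_s\|_{\sV^{\alpha,\beta;\gamma}}$, in which kernels are composed with translations by the free-field increments and hence become $\sB(\sH)$-valued measures, combined with the filtered non-commutative probability space and the conditional expectation $\fE_t$; the identities $\fA_{\tau,\varepsilon;u,t}\fE\fA_{\tau,\varepsilon;t,s}\fB_{\varepsilon;s}(V)=\fE_t\fB_{\varepsilon;s}(\fA_{\tau,\varepsilon;u,s}V)$ and $\|\fE_t V\|_{\sM^m}\leq\|V\|_{\sM^m}$ are what make the bilinear estimate closable. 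Without this device (or an equivalent one) your claim that the integrated flow is a contraction on a ball of radius $O(\lambda)$ is unsubstantiated, and by the paper's analysis most likely wrong for the norm you describe; identifying "designing the norm" as demanding is not a substitute for the idea that resolves it.

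A secondary gap: you assert that the uniform bounds "together with the structure of the fixed-point equation" imply convergence as $\tau,\varepsilon\searrow0$, but uniform-in-cutoff bounds alone do not produce the joint ultraviolet and infinite-volume limit. The paper handles this by comparing the cutoff fixed point to the cutoff-free one in a strictly weaker norm built from weights with polynomial decay in the first argument (the tilde norms $\tilde\sV$, $\tilde\sW$, $\tilde\sX_{\tau,\varepsilon}$), establishing $\|X_\Cdot-X_{\tau,\varepsilon;\Cdot}\|_{\tilde\sX_{\tau,\varepsilon}}\leq C\,\lambda_{\tau\vee\varepsilon}^\kappa$ and only then transporting this to the Schwinger functions through the generating-functional identity at unit scale; some quantitative comparison of this kind must be part of the argument. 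Finally, note that besides the quartic and mass directions one must also track the marginal kinetic component (the paper's $z$-parameter, extracted with $\fL_\partial$); the paper treats it explicitly, even though no non-trivial renormalization condition is imposed on it.
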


\begin{proof}
The theorem is an immediate consequence of Theorem~\ref{thm:convergence}. 
\end{proof}

\begin{rem}
The precise definition of the Berezin integral and the Gross-Neveu measure with cutoffs can be found in Sec.~\ref{sec:gross-neveu}. The notion of invariance under Euclidean transformations used in Item~(B) of the above theorem is defined in Sec.~\ref{sec:symmetries}.
\end{rem}

\begin{rem}\label{rem:ren_conditions}
The equalities stated in Item~(D) of the above theorem should be viewed as renormalization conditions that fix implicitly the bare coupling constant $g_{\tau,\varepsilon}$ and the mass counterterm $r_{\tau,\varepsilon}$ appearing in the expression for the measure $\mu_{\tau,\varepsilon}$. We parameterize the Gross-Neveu models in terms of numerical constants linked directly to the Schwinger functions instead of using the parameters $g_{\tau,\varepsilon}$ and $r_{\tau,\varepsilon}$. Note that Item~(D) implies in particular that the truncated four-point Schwinger function is non-zero. As a result, the construction yields a non-trivial quantum field theory.
\end{rem}

\begin{rem}
The result stated in Item~(E) of the above theorem is not optimal. Actually, the truncated Schwinger functions decay exponentially. However, the exponential decay does not follow immediately from the construction presented in the paper and its proof requires a~separate argument, which is not presented in the paper.
\end{rem}

\begin{thm}\label{thm:main2}
Let $N\in\{2,3,\ldots\}$. There exists $\lambda_\star\in(0,1]$ such that the following statement is true for all $\lambda\in(0,\lambda_\star]$. There exists a Banach algebra $\sB$ with the identity $\mathds{1}$ and continuous linear maps
\begin{equation}
 \fE\,:\,\sB\to\bC,
 \qquad
 \varPsi,\varPhi,\varPsi_{\tau,\varepsilon},\varPhi_{\tau,\varepsilon} \,:\,\sS(\bR^2)^\bFF \to \sB,\quad \tau,\varepsilon\in(0,1],
\end{equation}
such that the following statements are true.
\begin{enumerate}
 \item[(A)] The expected value satisfies the normalization condition $\fE \mathds{1}=1$.

 \item[(B)] For all $\tau,\varepsilon\in(0,1]$, $m\in\bN_+$, $\varphi_1,\ldots,\varphi_m\in\sS(\bR^2)^\bFF$ it holds
\begin{equation}
\begin{aligned}
\fE \varPsi_{\tau,\varepsilon}(\varphi_1)\ldots\varPsi_{\tau,\varepsilon}(\varphi_m) &=
 \int \psi(\varphi_1)\ldots\psi(\varphi_m)\,\nu_{\tau,\varepsilon}(\rd\psi),
 \\
\fE \varPhi_{\tau,\varepsilon}(\varphi_1)\ldots\varPhi_{\tau,\varepsilon}(\varphi_m)&=
 \int \psi(\varphi_1)\ldots\psi(\varphi_m)\,\mu_{\tau,\varepsilon}(\rd\psi)
\end{aligned}
\end{equation}
and
\begin{equation}
\begin{aligned}
 \fE \varPsi(\varphi_1)\ldots\varPsi(\varphi_m) &=\langle S_{0}^m,\varphi_1\otimes\ldots\otimes\varphi_m\rangle,
 \\
 \fE \varPhi(\varphi_1)\ldots\varPhi(\varphi_m)&=\langle S^m_\lambda,\varphi_1\otimes\ldots\otimes\varphi_m\rangle,
\end{aligned}
\end{equation}
where $\nu_{\tau,\varepsilon}$ is the free field measure with cutoffs $\tau,\varepsilon\in(0,1]$ defined by Eq.~\eqref{eq:measure_intro} with $U_{\tau,\varepsilon}$ set to zero, $\mu_{\tau,\varepsilon}$ is the Gross-Neveu measure with cutoffs $\tau,\varepsilon\in(0,1]$, $(S^m_0)_{m\in\bN_+}$ are the Schwinger functions of the free theory and $(S^m_\lambda)_{m\in\bN_+}$ are the Schwinger functions of the Gross-Neveu model the constructed in Theorem~\ref{thm:main}.

 \item[(C)] For all $\sigma\in\bFF$ it holds $\varPsi^\sigma,\varPhi^\sigma\in\sC^{\alpha}$ with $\alpha=-1/2$ and
\begin{equation}
 \lim_{\tau,\varepsilon\searrow0} \|\varPsi^\sigma - \varPsi_{\tau,\varepsilon}^\sigma\|_{\tilde\sC^\alpha} = 0,
 \qquad
 \lim_{\tau,\varepsilon\searrow0} \|\varPhi^\sigma - \varPhi_{\tau,\varepsilon}^\sigma\|_{\tilde\sC^\alpha} = 0
\end{equation}
for all $\alpha\in(-\infty,-1/2)$, where $\sC^\alpha$ is the Besov space with regularity index $\alpha\in\bR$ of Schwartz distributions valued in $\sB$ and $\tilde\sC^\alpha$ is an analogous Besov space with a polynomially decaying weight.

\item[(D)] For all $\sigma\in\bFF$ it holds
 \begin{equation}\label{eq:main2_limsup}
  \limsup_{i\to\infty} 2^{\alpha i} \sup_{x\in\bR^2}\|(\Delta_i \varPsi^\sigma)(x)\|_{\sB}>0
 \end{equation}
and
 \begin{equation}\label{eq:main2_diff}
  \sup_{i\in\{-1,0,1,\ldots\}} (i+2)^{1/2}\,2^{\alpha i} \sup_{x\in\bR^2}\|(\Delta_i (\varPhi^\sigma-\varPsi^\sigma))(x)\|_{\sB}<\infty,
 \end{equation}
where $(\Delta_i)_{i\in\{-1,0,1,\ldots\}}$ are the Littlewood-Paley blocks on $\bR^2$ and $\alpha=-1/2$.

\end{enumerate}
\end{thm}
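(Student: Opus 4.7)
The plan is to realize $\sB$ as the C*-algebra generated by the limiting and regularized free Dirac fields in their canonical anti-commutation (CAR) representation on a Fock-type Hilbert space, with the vacuum expectation playing the role of $\fE$; this gives the normalization (A) for free. The regularized free field maps $\varPsi_{\tau,\varepsilon}$ are defined by smearing the CAR field $\psi_{\tau,\varepsilon}$ against the test function $\varphi\in\sS(\bR^2)^\bFF$, and $\varPsi$ is obtained as the limit of $\varPsi_{\tau,\varepsilon}$ in the weighted Besov topology $\tilde\sC^\alpha$ for any $\alpha<-1/2$. Existence of this limit follows from a Gaussian computation with the regularized Dirac propagator, and the first line of (B) is then the classical identification of a fermionic Gaussian Berezin integral with a vacuum expectation of a Wick-ordered CAR polynomial.

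For the interacting maps I would use the effective generating functional $Z^{\mathrm{eff}}_\mu$ produced by the Polchinski flow in the proof of Theorem~\ref{thm:convergence}, evaluated at scale $\mu=0$. In the CAR representation the effective potential $U_\mu$ at every scale is a bounded element of $\sB$, and hence $Z^{\mathrm{eff}}_0$ is a bounded operator-valued functional of the Grassmann source. Defining $\varPhi_{\tau,\varepsilon}(\varphi)$ as the appropriate source-derivative of $\log Z^{\mathrm{eff}}_0$ at $\varphi$, viewed as an element of $\sB$, delivers the second line of (B) by construction. The difference $\varPhi_{\tau,\varepsilon}-\varPsi_{\tau,\varepsilon}$ then admits a scale-by-scale representation as an integral over $\mu$ of the $\psi$-linear part of the effective vertex, so that the flow estimates translate directly into Besov bounds on the difference.

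Item~(C) follows by combining the Gaussian estimates for $\varPsi_{\tau,\varepsilon}$ with the Cauchy-in-$\tau,\varepsilon$ estimates on the effective generating functional already used to construct the Schwinger functions; the value $\alpha=-1/2$ is the intrinsic regularity of the free two-dimensional Dirac field. The lower bound \eqref{eq:main2_limsup} is a direct Littlewood-Paley computation: the $\sB$-norm of $(\Delta_i\varPsi^\sigma)(x)$ is bounded below by the square root of the $i$-th block of the free two-point function on the diagonal, which saturates the scaling $2^{-i}$. The improved bound \eqref{eq:main2_diff} is the operator-algebraic imprint of asymptotic freedom: the effective coupling produced by the flow decays like $(\log(1/\mu))^{-1}$ as $\mu\searrow 0$, the $\psi$-linear piece of the effective vertex at scale $\mu=2^{-i}$ carries one power of this coupling, and a Gram/Hadamard-type fermionic bound converts this into a gain of exactly $(i+2)^{-1/2}$ relative to the naive scaling estimate.

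The main technical obstacle is proving \eqref{eq:main2_diff} uniformly in the cutoffs $\tau,\varepsilon$, because the expansion of $\varPhi^\sigma-\varPsi^\sigma$ in powers of the running coupling must be summed while preserving the half-power logarithmic gain. This requires combining the asymptotic-freedom bounds on $\lambda_\mu$ supplied by the flow with the Gram/Hadamard bounds intrinsic to the CAR representation of fermionic polynomials, uniformly across all scales. The crude Besov bound $\varPhi^\sigma-\varPsi^\sigma\in\sC^\alpha$ is much easier and follows from the same fixed-point estimate as Theorem~\ref{thm:convergence}; it is the extra $(i+2)^{1/2}$ improvement that demands careful bookkeeping of the running coupling through every higher-order contribution. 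Once \eqref{eq:main2_diff} is in hand, the limit $\tau,\varepsilon\searrow 0$ is controlled by the Cauchy estimate already used to construct the Schwinger functions, completing (C) and (D).
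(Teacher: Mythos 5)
Your construction of the free part matches the paper: the non-commutative probability space is built from a fermionic Fock space with $\fE$ the vacuum state, $\varPsi_{\tau,\varepsilon}$ is a smeared CAR-type field, and the first line of (B) together with the bounds in (C)--(D) for $\varPsi$ follow from the Gaussian/Berezin identification and Littlewood--Paley estimates; this is exactly Lemmas~\ref{lem:free_measure} and~\ref{lem:Psi_bounds}.

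The genuine gap is in the construction of the interacting maps. Defining $\varPhi_{\tau,\varepsilon}(\varphi)$ as a ``source-derivative of $\log Z^{\mathrm{eff}}_0$'' does not produce an element of $\sB$ with the required properties: derivatives of the logarithm of a (Grassmann-valued, scalar) generating functional yield the \emph{truncated} correlation functions as numbers, not an operator-valued distribution whose mixed moments $\fE\,\varPhi_{\tau,\varepsilon}(\varphi_1)\ldots\varPhi_{\tau,\varepsilon}(\varphi_m)$ reproduce the full Schwinger-function hierarchy, and it gives no coupling between $\varPhi$ and $\varPsi$ on the same space, which is precisely what the difference bound~\eqref{eq:main2_diff} requires. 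Your subsequent ``scale-by-scale representation of $\varPhi_{\tau,\varepsilon}-\varPsi_{\tau,\varepsilon}$'' is therefore unsupported by the definition you gave. The paper instead constructs $\varPhi_{\tau,\varepsilon}$ as the endpoint of the backward stochastic equation
\begin{equation}
 \varPhi_{\tau,\varepsilon;t} = -\int_t^1 \dot G_{\tau,\varepsilon;s}\ast_\tau \rD V_{\tau,\varepsilon;s}[\fJ\varPhi_{\tau,\varepsilon;s}]\,\rd s + \varPsi_{\tau,\varepsilon;t},
\end{equation}
driven by the scale decomposition of the free field, solved by a Banach fixed point in the space $\sZ$ (Theorem~\ref{thm:contraction_fbsde}, Corollary~\ref{cor:fbsde}); the identity $\fE F(\varPhi_{\tau,\varepsilon})=\mu_{\tau,\varepsilon}(F)$ then comes from the FBSDE result of~\cite{DFG22}, the cutoff convergence in $\tilde\sC^\alpha$ comes from the stability of the fixed point, and the improvement in~\eqref{eq:main2_diff} is immediate because $\varPhi_t-\varPsi_t$ \emph{is} the drift integral, whose $\sZ$-norm carries the factor $\lambda_s^{1-40\kappa}$ from the flow-equation estimates on $V_s$ (bound~(E) of Theorem~\ref{thm:contraction_fbsde}); no Gram--Hadamard input is needed, only the operator-norm boundedness of the fermionic fields already built into the norms. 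Without an analogue of this fixed-point coupling, your outline establishes at best convergence of correlation functions (Theorem~\ref{thm:main}), not the operator-level statements (B)--(D) of Theorem~\ref{thm:main2}.
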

\begin{proof}
The non-commutative probability space $(\sB,\fE)$ as well as the free fields $\varPsi,\varPsi_{\tau,\varepsilon}$ are constructed in Sec.~\ref{sec:free_field_decomposition}. The interacting fields $\varPhi,\varPhi_{\tau,\varepsilon}$ are constructed in Sec.~\ref{sec:interacting_field}. The statements regarding the free fields $\varPsi,\varPsi_{\tau,\varepsilon}$ are straightforward to prove and are consequences of Lemmas~\ref{lem:free_measure} and~\ref{lem:Psi_bounds}.  The statements regarding the interacting fields $\varPhi,\varPhi_{\tau,\varepsilon}$  follow from Corollary~\ref{cor:fbsde}.
\end{proof}

\begin{rem}
The above theorem states the existence of a coupling between the free and interacting fields. More specifically, we construct the Gross-Neveu field as the final solution $\varPhi=\varPhi_0$ to the backward SDE
\begin{equation}
 \varPhi_t = -\int_t^1 \dot G_s\ast \rD V_s[\varPhi_s] + \varPsi_t
\end{equation}
where:
\begin{itemize}
 \item $t\mapsto \dot G_t$ is the scale decomposition of the free field covariance such that $\dot G_t$ is supported in the Fourier space in a shell of radius of order $1/t$,
 \item $t\mapsto \varPsi_t$ is a Grassmann Brownian martingale, called  the scale decomposition of the free field, such that $\varPsi_t$ is supported in a Fourier space in a ball of radius of order $1/t$ and
 \item $\rD V_t$ is the gradient of the so-called effective potential solving the Polchinski equation.
\end{itemize}
Let us mention that similar couplings were constructed before in~\cite{BH22,BGH23,DFG22,GM24} in super-renormalizable models.
\end{rem}

\begin{rem}
Note that $\|\varPsi^\sigma\|_{\sC^{\alpha}}$ with $\alpha=-1/2$ coincides with the expression on the LHS of the bound~\eqref{eq:main2_limsup} with $\limsup$ replaced by $\sup$. Consequently, the condition $\varPsi^\sigma\in \sC^{-1/2}$ stated in Item~(C) is optimal. The sub-logarithmic (in the scale parameter $2^i$) improvement of the regularity for the difference $\varPhi-\varPsi$ in the bound~\eqref{eq:main2_diff} is a manifestation of the so-called asymptotic freedom of the Gross-Neveu model. Actually, a stronger bound with $(2+i)^{1/2}$ replaced by $(2+i)$ is also true. However, its proof requires some extra work and is omitted. A coupling between $\varPhi$ and $\varPsi$ such that $\varPhi-\varPsi$ satisfies the bound~\eqref{eq:main2_diff} with $(2+i)^{1/2}$ replaced by $(2+i)^\beta$ with $\beta>1$ is not expected to exist.
\end{rem}

\noindent{\bf Method of the proof}

Our construction of the Gross-Neveu model, like the previous constructions~\cite{GK85a,FMRS86}, utilizes the Wilsonian renormalization group theory~\cite{Wil71}. The fundamental object of this theory is the so-called effective potential $U_{\tau,\varepsilon;t}$ depending on the spatial scale $t\in[0,1]$ and the cutoffs $\tau,\varepsilon\in(0,1]$. The effective potential $U_{\tau,\varepsilon;t}$ at the spatial scale $t=0$ coincides with the interaction potential $U_{\tau,\varepsilon}$ defined by Eq.~\eqref{eq:potential_intro}. The goal is to construct the effective potential $U_{\tau,\varepsilon;t}$ at the spatial scale $t=1$, which coincides with the generating functional of the so-called connected amputated Schwinger functions. To this end, one usually solves a certain equation that relates the effective potentials at different scales. The previous constructions~\cite{GK85a,FMRS86} of the Gross-Neveu model used the renormalization group transformation that relates the effective potentials $U_{\tau,\varepsilon;t}$ at different discrete values of the scale parameter $t\in\{L^{-k}\,|\,k\in\bN_0\}\subset (0,1]$, where $L\in(1,\infty)$ is sufficiently big. In contrast, the approach of this paper is based on the so-called Polchinski flow equation~\cite{WK74,Pol84} that is a certain differential equation in the scale parameter $t\in(0,1]$. 

It was recognized long time ago that the flow equation is a very powerful tool in perturbative quantum field theory. In fact, there is a short and general proof of perturbative renormalizability of QFT models based on the flow equation~\cite{Pol84,Kop07,Mul03,Sal99}. The renormalization problem is solved using a simple inductive argument and amounts to imposing appropriate boundary conditions for the flow equation. In particular, the proof avoids the problem of overlapping divergences appearing in all approaches based on the Feynman diagrammatical representations such as the BPHZ approach. However, the applicability of the Polchinski flow equation in non-perturbative constructions is limited. For bosonic theories it is generally believed that the flow equation is not useful non-perturbatively as it does not allow to take advantage of the boundedness from above of the interaction potential. The exception is the sine-Gordon model, which up to the second threshold was constructed~\cite{BK87,BB21,GM24} with the use of the Polchinski equation. In quantum field theory models with bosons the boundedness from above of the effective potential is usually necessary to ensure that the interacting measure has finite total mass. This is the so-called large field problem. It is possible to address this problem in the approach based on the renormalization group transformation but the problem seems intractable in the approach based on the Polchinski equation. As an aside, let us mention that the large field problem can be solved in the framework for singular stochastic PDEs developed in~\cite{Duc21,Duc22} based on a certain flow equation that plays an analogous role to the Polchinski equation. In fact, this framework was used recently to construct the bosonic fractional $\Phi^4_3$ model in full super-renormalizable regime~\cite{DGR23}.  

Since fermionic fields can be represented by bounded operators the large field problem should be absent in purely fermionic models of quantum field theory and a construction of such models based on the Polchinski flow equation should be feasible. Such a construction is not straight-forward because the effective potential solving the Polchinski equation is necessarily a highly non-local functional, which in particular is generically not of polynomial type. A~progress in this direction has recently been made by De Vecchi, Fresta, Gubinelli in~\cite{DFG22}, where a new approach to super-renormalizable fermionic theories was developed based on a certain forward-backward stochastic differential equation (FBSDE). The main advantage of the FBSDE approach is the fact that in this approach one only needs to construct an effective potential that satisfies the flow equation up to a sufficiently small error term. In particular, for super-renormalizable models it is always possible to construct a suitable effective potential in the space of functionals of polynomial type using a certain recursive procedure. In principle the FBSDE approach can be also useful for barely-renormalizable models. However, it is clear that in the case of such models a suitable approximate solution of the Polchinski equation cannot be a functional of polynomial type. Consequently, allowing for an error term in the Polchinski equation does not provide an obvious benefit. 

The main obstacle in constructing an exact solution of the Polchinski equation was the lack of a suitable norm in the space of functionals that is compatible with this equation. The use of the flow equation to prove bounds for correlations in fermionic QFTs goes back to the work~\cite{BW88} by Brydges and Wright, which however contains an error~\cite{BW99} pointed out in~\cite{SW00}. At technical level, the problem seems to be related to the fact that the scale decomposition of the free fermionic field, which behaves like a Grassmann cylindrical Wiener process, is not Lipschitz continuous in the scale parameter but only $1/2$ H{\"o}lder continuous. We refer the reader to~\cite{SW00} and Sec.~\ref{sec:strategy} for more details. The important contribution of the paper is a novel norm in the space of functionals in which the flow equation can be solved directly without relying on a tree expansion ansatz~\cite{DR00}. The norm is defined with the use of the filtered non-commutative probability space introduced in~\cite{DFG22}. Using the new norm we rewrite the mild form of the Polchinski equation as a fixed point problem posed in a certain complete metric space and solve it using the Banach theorem. Finally, we express the Schwinger functions in terms of the effective potential $U_{\tau,\varepsilon;t}$ at the spatial scale $t=1$ and prove their convergence as $\tau,\varepsilon\searrow0$.

\begin{rem}
The error in\cite{BW88,BW99} mentioned above was fixed very recently in the preprint~\cite{KMS24} by Krochinsky, Marchetti, and Salmhofer, which appeared shortly after this work was submitted to the arXiv. The authors proved that the Polchinski equation can be controlled using the so-called majorant method and studied an application of their technique in a toy model (ignoring the renormalization problem). It would be interesting to see whether the approach of~\cite{KMS24} can be used to construct fermionic QFT models such as the Gross-Neveu model.
\end{rem}

\begin{rem}
The sign of the parameter $\lambda$ that appears in Item~(D) of Theorem~\ref{thm:main} plays a very important role in our construction. The Gross-Neveu model with $\lambda>0$ and $N\in\{2,3,\ldots\}$ that we construct is asymptotically free -- at high energies the Schwinger functions of the model are well approximated by the Schwinger functions of the free theory. We refer the reader to~\cite{GK89} for a discussion of the significance of the asymptotic freedom in constructions of barely renormalizable models of quantum field theory. For $\lambda<0$ and $N\in\{2,3,\ldots\}$ the Gross-Neveu model is not expected to be asymptotically free and most likely does not exist non-perturbatively. For $N=1$ the Gross-Neveu model coincides with the so-called Thirring model. Even though the Thirring model is not asymptotically free it has some special properties that allow its non-perturbative construction~\cite{BFM07}.
\end{rem}

\begin{rem}
One of the manifestations of the asymptotic freedom of the Gross-Neveu model is the fact that the bare coupling constant $1/g_{\tau,\varepsilon}>0$ fixed by the renormalization condition stated in Item~(D) of Theorem~\ref{thm:main} vanishes logarithmically in the limit $\varepsilon\searrow0$. Thus, naively one could expect that the construction yields a free theory. However, as we pointed out in Remark~\ref{rem:ren_conditions} the hierarchy of the constructed Schwinger functions is non-trivial. The underlying intuitive reason for the non-triviality is the fact that in the limit $\varepsilon\searrow0$ the fields become genuine distributions and consequently the quartic interaction term involving a pointwise products of the fields is very singular. See~\cite{Hai24} for a related phenomenon in the context of subcritical singular stochastic PDEs.
\end{rem}

\begin{rem}
We rewrite the Polchinski equation as a fixed point of a certain map  that is well-defined also in the limit $\tau,\varepsilon\searrow0$. The Schwinger functions of the Gross-Neveu model without cutoffs are directly related to the above fixed point. Let us stress that one of the main difficulties in the proof of Theorem~\ref{thm:main} is to show that the Schwinger functions are well-approximated by the moments of the Grassmann measure~\eqref{eq:measure_intro} with the cutoffs $\tau,\varepsilon\in(0,1]$. In particular, the proof would simplify drastically if one was only interested in constructing the Schwinger functions without cutoffs and the existence of the limit $\tau,\varepsilon\searrow0$ was disregarded. The proof would also simplify significantly if one only wanted to show bounds uniform in the cutoffs $\tau,\varepsilon\in(0,1]$. Both simplifications are common in mathematical physics literature. We could not find a reference in which the existence of both the infinite volume $\tau\searrow0$ and the ultraviolet $\varepsilon\searrow0$ limits of the Gross-Neveu correlations is rigorously proved.
\end{rem}

\noindent{\bf Possible generalizations}

The method developed in this work is applicable to other purely fermionic renormalizable models of quantum theory such as for example the symplectic fermion model with $N\neq8$. The symplectic fermion model is a fermionic analog of the bosonic $\Phi^4$ model in four-dimensions. The model is barely renormalizable. It describes $N\in\{4,5,\ldots\}$ scalar fermionic fields. The kinetic part of the action of the model contains the Laplacian and the quartic interaction potential is invariant under symplectic transformation of the fields. See~\cite{GMR21} for a precise definition of the model. The assumption $N\neq8$ is probably not essential and is related to the fact that the construction presented in the paper requires that the one-loop beta-function $\beta_2$ is positive. Our construction can be also applied to super-renormalizable models such as the Gross-Neveu model or symplectic fermion model with modified propagators that are less singular at the origin than the standard propagators. Even though our construction simplifies drastically in the super-renormalizable regime the FBSDE approach developed in~\cite{DFG22}, which was discussed briefly above, may be advantageous. Let us also mention that the method developed in this paper should be applicable to the sine-Gordon model of quantum field theory up to the second threshold. In addition, the results of the paper can be useful in the construction of weakly interacting fixed points of the renormalization group transformation.

\noindent{\bf Overview of the literature}

The Gross-Neveu model appeared for the first time in the work~\cite{MW73} by Mitter and Weisz who investigated the flow of renormalization group transformations in this model. In the paper~\cite{GN74} Gross and Neveu introduced the same model as a toy model of Yang-Mills theory in four dimensions. They studied the dynamical mass generation in a model with the chiral symmetry. They presented an argument suggesting that despite the fact that the action of such a model does not contain a mass term, which is prohibited by the chiral symmetry, the truncated Schwinger functions decay exponentially, that is the model is massive. Note that in this paper we construct the version of the Gross-Neveu model with a mass term.

The first mathematical construction of the Gross-Neveu model was given by Gaw{\k e}dzki and Kupiainen in~\cite{GK85a}. Another construction was given shortly thereafter in~\cite{FMRS86} by Feldman, Magnen, Rivasseau and Seneor. Both constructions rely on the discrete renormalization group method and the estimates for fermionic correlations derived with the use of the Gram-Hadamard inequality and the cluster expansion. In~\cite{DR00} Disertori and Rivasseau constructed the Gross-Neveu model by proving convergence of appropriately rearranged perturbation theory. Note that while the Polchinski flow equation was studied in~\cite{DR00} the construction was not based directly on the Polchinski equation. The authors constructed the effective potential using a convergent tree expansion and subsequently verified that it satisfies the flow equation. Let us also mention the recent paper~\cite{DY23} by Dimock and Yuan who studied the flow of the renormalization group transformations in the massless Gross-Neveu model on a torus and established uniform boundedness of the partition function of the model in the UV cutoff.

The above-mentioned works concerned the Gross-Neveu model with a mass term on the plane or the model with the chiral symmetry on the unit torus. The problem of the dynamical mass generation was addressed in~\cite{KMR95}, where it was proved that the two-point Schwinger functions of the chiral model with a fixed UV cutoff falls off exponentially. Some properties of the Gross-Neveu model related to the particle interpretation and the asymptotic completeness were established in \cite{IM87,IM88a,IM88b}. The super-renormalizable Gross-Neveu model with less singular propagator was studied in~\cite{SW02}. A construction of the Gross-Neveu model with a more singular propagator, which is perturbatively non-renormalizable, was given in~\cite{GK85b}. 

Let us discuss some related results. The two-dimensional Yukawa model was constructed by Lesniewski in~\cite{Les87} using the technique developed in~\cite{GK85a}. The important contribution of~\cite{Les87} is a new elegant proof of bounds for fermionic connected correlations based on the Brydges-Battle-Federbush formula. An alternative approach to bounds for fermionic correlations was given by Salmhofer and Wieczerkowski in~\cite{SW00}. The infrared stable non-Gaussian fixed point of the renormalization group transformation in the symplectic fermion model with weakly relevant quartic interaction was constructed in~\cite{GMR21}. An alternative construction of this fixed point using the Polchinski equation and the tree expansion ansatz was given in~\cite{Gre24}. Let us mention once again the stochastic quantization approach to fermionic models based on a FBSDE developed in~\cite{DFG22}, which was used to construct the symplectic fermion model with $N=4$ in infinite volume in full super-renormalizable regime. First steps towards the construction of fermionic quantum field theories using the parabolic stochastic quantization method were made in~\cite{ABDG20,CHP23,DFGG23}, where the framework of non-commutative probability suitable for fermionic stochastic analysis was developed and local well-posedness for the parabolic stochastic quantization equation of the two-dimensional Yukawa model was proved. We refer the reader to~\cite{FKT00} for a general overview of fermionic functional integrals with applications to constructive quantum field theory.

\section{Strategy of the proof}\label{sec:strategy}

\begin{dfn}\label{dfn:kappa}
We fix $N\in\{2,3,\ldots\}$ and a small parameter $\kappa:=1/1000$.  
\end{dfn}

\begin{dfn}\label{dfn:lambda}
Let $\beta_2:=2(N-1)/\pi$. For $t,\lambda\in(0,1]$ we define $\lambda_t:=(\lambda^{-1}-\beta_2\log t)^{-1}$. For $t=0$ we set $\lambda_t=0$.
\end{dfn}

\begin{rem}
We identify functions and distributions on the torus $\bT^2_\tau$ with periodic functions and distributions on $\bR^2$ with period $1/\tau$.  
\end{rem}

In this section we describe in informal terms the main steps of the proof of Theorem~\ref{thm:main} highlighting the most important ideas. We use the framework of non-commutative probability. A non-commutative probability space is the pair $(\sF,\fE)$ consisting of a unital Banach subalgebra $\sF$ of the algebra of operators acting on a separable Hilbert space and a continuous normalized linear functional $\sF\to\bC$. Recall that the Grassmann measure~\eqref{eq:measure_intro} of the Gross-Neveu model is defined with the use of the Berezin integral, which, like the Lebesgue integral, is only well-defined in finite dimension. For this reason, it is advantageous to use as a reference measure a certain Gaussian Grassmann measure, called the free field measure with cutoffs $\tau,\varepsilon\in(0,1]$, defined by the formula
\begin{equation}
 \nu_{\tau,\varepsilon}(F) := 
 \frac{\int F(\vartheta_\varepsilon\ast\psi_{\tau,\varepsilon})\exp(-A_{\tau}(\psi_{\tau,\varepsilon}))\,\rd\psi_{\tau,\varepsilon}}{\int \exp(-A_{\tau}(\psi_{\tau,\varepsilon}))\,\rd\psi_{\tau,\varepsilon}}
\end{equation}
for functionals $F$ of polynomial type. Note that the RHS of the above equality involves only the free part of the action $A_{\tau}(\psi_{\tau,\varepsilon})$, which is quadratic in the field $\psi_{\tau,\varepsilon}$. We stress that the above formula for the measure $\nu_{\tau,\varepsilon}$ is meaningful only if $\tau,\varepsilon\in(0,1]$. We would like to construct a collection of anti-commuting random variables $(\varPsi_{\tau,\varepsilon})_{\tau,\varepsilon\in[0,1]}$ in a non-commutative probability space such that $\nu_{\tau,\varepsilon}(F)=\fE(F(\varPsi_{\tau,\varepsilon}))$ for all $\tau,\varepsilon\in(0,1]$. To this end, we use the so-called Osterwalder-Schrader construction~\cite{OS72,OS73}.  For all $\tau,\varepsilon\in[0,1]$ we define $\varPsi_{\tau,\varepsilon}\in \sS'(\bR^2,\sB(\mathscr{H}))^\bFF$ as a certain linear combination of creation and annihilation operators acting in a fermionic Fock space $\mathscr{H}$ with the vacuum state $\Omega\in \mathscr{H}$. Let $\sB(\mathscr{H})$ be the algebra of bounded operators on $\mathscr{H}$. The unital algebra $\sF$ is defined as the Banach subalgebra of $\sB(\mathscr{H})$ generated by $\langle \varPsi_{\tau,\varepsilon},\varphi\rangle$ with $\varphi\in L^2(\bR^2)^\bFF$. The expected value $\fE\,:\,\sF\to\bC$ is defined by the formula $\fE F(\varPsi_{\tau,\varepsilon})=(\Omega,F(\varPsi_{\tau,\varepsilon})\Omega)_{\mathscr{H}}$, where $(\Cdot,\Cdot)_{\mathscr{H}}$ denotes the scalar product in $\mathscr{H}$. The fields $\varPsi_{\tau,\varepsilon}$ are defined in such a way that they all anti-commute,
\begin{equation}
 \fE \langle\varPsi_{\tau,\varepsilon},\phi\rangle =0,
 \qquad
 \fE \langle\varPsi_{\tau,\varepsilon},\phi\rangle 
 \langle\varPsi_{\tau,\varepsilon},\eta\rangle 
 =
 \langle \phi, G_{\tau,\varepsilon}\ast\eta\rangle
\end{equation}
and higher moments of $\varPsi_{\tau,\varepsilon}$ can be expressed in terms of the covariance using an analog of the formula valid for commuting Gaussian random variables. If $\tau\in(0,1]$, then $\varPsi_{\tau,\varepsilon}$ is periodic with period $1/\tau$. The propagator $G_{\tau,\varepsilon}$ with the IR cutoff $\tau\in(0,1]$ and the UV cutoff $\varepsilon\in(0,1]$ is defined as a periodization of $G_\varepsilon$ with the period $1/\tau$, the propagator $G_\varepsilon$ with the UV cutoff $\varepsilon\in(0,1]$ is defined by $G_\varepsilon:=\vartheta_\varepsilon\ast G\ast\vartheta_\varepsilon$ and the propagator $G$ is the inverse of the differential operator 
\begin{equation}
\frac{1}{2}
\begin{pmatrix}
0 & \slashed{\partial}+1
\\
\slashed{\partial}{}^{\mathrm{t}}-1 & 0
\end{pmatrix}
\end{equation}
appearing in the free part of the action~\eqref{eq:free_action_intro}. The family of functions $\vartheta_\varepsilon\in\sS(\bR^2)$ parameterized by $\varepsilon\in(0,1]$ is chosen in such a way that it converges to the Dirac delta at the origin as $\varepsilon\searrow0$ and for all $\varepsilon\in(0,1]$ the Fourier transform of $\vartheta_\varepsilon$ is supported in a ball of radius $1/\varepsilon$. Moreover, if $\varepsilon=1$, then $\vartheta_\varepsilon=0$ and we define $\vartheta_\varepsilon$ with $\varepsilon=0$ to be the Dirac delta at the origin. If $\tau=0$, then $G_{\tau,\varepsilon}:=G_\varepsilon$ and if $\varepsilon=0$, then $G_\varepsilon:=G$. We call $\varPsi_{\tau,\varepsilon}$ the free field with cutoffs $\tau,\varepsilon\in(0,1]$. Using $\varPsi_{\tau,\varepsilon}$ one rewrites the formula~\eqref{eq:measure_intro} in the following way
\begin{equation}\label{eq:measure_intro2}
 \mu_{\tau,\varepsilon}(F) 
 =
 \frac{\fE\big( F(\varPsi_{\tau,\varepsilon}) \re^{U_{\tau,\varepsilon}(\varPsi_{\tau,\varepsilon})}\big)}{\fE\big(\re^{U_{\tau,\varepsilon}(\varPsi_{\tau,\varepsilon})}\big)}.
\end{equation}
Note that for all $\varepsilon\in(0,1]$ the free field $\varPsi_{\tau,\varepsilon}$ belongs to $C^\infty(\bR^2,\sB(\mathscr{H}))^\bFF$. Consequently, the RHS of the above equality is well-defined for all $\tau,\varepsilon\in(0,1]$ (provided the denominator is not zero). Since in the limit $\varepsilon\searrow0$ the free field $\varPsi_{\tau,\varepsilon}$ is not a function over $\bR^2$ but only a Schwartz distribution the pointwise products in the expression~\eqref{eq:potential_intro} for the potential $U_{\tau,\varepsilon}$ become meaningless. In the limit $\tau\searrow0$ the expression~\eqref{eq:potential_intro} for the potential $U_{\tau,\varepsilon}$ is ill-defined due to unbounded domain of integration. Thus, the expression~\eqref{eq:measure_intro2} for the measure $\mu_{\tau,\varepsilon}$ becomes singular in the limit $\varepsilon\searrow0$ or $\tau\searrow0$. Because of the presence of the unit mass term in the free part of the action the proof of the existence of the limit $\tau\searrow0$ is relatively easy. The limit $\varepsilon\searrow0$ is quite subtle as it only exists if the parameters $g_{\tau,\varepsilon}$ and $r_{\tau,\varepsilon}$ of the potential~\eqref{eq:potential_intro} diverge at particular rate as $\varepsilon\searrow0$.

In order to study the limit $\varepsilon\searrow0$ of $\mu_{\tau,\varepsilon}(F)$ we use the renormalization group theory~\cite{Wil71,Pol84}. To this end, we introduce a certain scale decomposition of the propagator $G_{\tau,\varepsilon}$ and the free field $\varPsi_{\tau,\varepsilon}$. The scale decomposition $[0,1]\ni t \mapsto G_{\varepsilon;t}\in L^1(\bR^2)^{\bFF^2}$ of the propagator $G_\varepsilon$ is defined by $G_{\varepsilon;t}:=\vartheta_t \ast G_{\varepsilon}\ast \vartheta_t$ for $t\in[0,1]$. Then $G_{\tau,\varepsilon;t}$ is defined as the periodization of $G_{\varepsilon;t}$. The parameter $t\in[0,1]$ plays the role of an artificial UV cutoff. Note that due to the properties of the function $\vartheta_t$ for every $t\in(0,1]$ and $\tau,\varepsilon\in[0,1]$ the propagator $G_{\tau,\varepsilon;t}$ is smooth and essentially constant at spatial scales smaller than $t$ and it captures the behavior of $G_{\tau,\varepsilon}$ at spatial scales larger than $t$. The scale decomposition $[0,1]\mapsto \varPsi_{\tau,\varepsilon;t}$ of the free field $\varPsi_{\tau,\varepsilon}$ is defined in a non-commutative probability space containing the probability space of the free field as a collection of anti-commuting random variables such that for all $\tau,\varepsilon\in[0,1]$ the following conditions are satisfied:
\begin{itemize}
 \item[(1)] $\varPsi_{\tau,\varepsilon;t}$ is Gaussian with mean zero and covariance $G_{\tau,\varepsilon;t}$ for all $t\in[0,1]$,
 \item[(2)] $\varPsi_{\tau,\varepsilon;t}\in C^\infty(\bR^2,\sB(\mathscr{H}))$ for all $t\in(0,1]$,
 \item[(3)] $\varPsi_{\tau,\varepsilon;t}$ is essentially constant at spatial scales smaller than $t\in(0,1]$, 
 \item[(4)] $\varPsi_{\tau,\varepsilon;t}$ captures the behavior of $\varPsi_{\tau,\varepsilon}$ at spatial scales larger than $t\in(0,1]$, 
 \item[(5)] $t\mapsto \varPsi_{\tau,\varepsilon;t}$ has independent increments. 
\end{itemize}
Observe that the condition (1) implies, in particular, that $\mathrm{Law}(\varPsi_{\tau,\varepsilon;t=0})=\mathrm{Law}(\varPsi_{\tau,\varepsilon})$ and $\varPsi_{\tau,\varepsilon;t=1}=0$. A~scale decomposition satisfying the conditions (1)--(4) could be defined by the formula $\varPsi_{\tau,\varepsilon;t}=\vartheta_t\ast \varPsi_{\tau,\varepsilon}$ for all $t\in[0,1]$. However, with this definition, the process $t\mapsto \varPsi_{\tau,\varepsilon;t}$ does not have independent increments, a property that plays a crucial role in the whole construction. In order to define a scale decomposition of the free field satisfying all conditions (1)--(5) we follow the strategy proposed in~\cite{DFG22}. To this end, we use the so-called Fermionic white noise $\xi\in\sS'(\bR^2\times [0,1],\sB(\mathscr{H}))^\bFF$ on spacetime $\bR^2\times [0,1]$. The Fermionic white noise is a anti-commuting Gaussian random variable in some non-commutative probability space $(\sF,\fE)$ defined using a variant of the Osterwalder-Schrader construction described above. The Hilbert space $\mathscr{H}$ is a conveniently chosen Fermionic Fock space, the expected value $\fE$ is defined as above, $\xi$ is expressed as a certain linear combination of creation and annihilation operators such that $\xi$ has mean zero and its covariance coincides with the Dirac delta on the diagonal in $(\bR^2\times [0,1])^2$ and $\sF$ is the sub-algebra of $\sB(\mathscr{H})$ generated by $\langle \xi,f\rangle$ with $f\in L^2(\bR^2\times[0,1])^\bFF$. We also introduce propagators $G^\pm_{\tau,\varepsilon;t}$ such that for all $t\in(0,1]$ the Fourier transform of $G^\pm_{\tau,\varepsilon;t}$ is contained in a shell of radius of order $1/t$ and $G^+_{\tau,\varepsilon;t}\ast G^-_{\tau,\varepsilon;t}:= -\partial_t G_{\tau,\varepsilon;t}$. The scale decomposition $[0,1]\mapsto \varPsi_{\tau,\varepsilon;t}\equiv(\bar\varPsi_{\tau,\varepsilon;t},\ubar \varPsi_{\tau,\varepsilon;t})$ of the free field $\varPsi_{\tau,\varepsilon}\equiv(\bar\varPsi_{\tau,\varepsilon},\ubar\varPsi_{\tau,\varepsilon})$ satisfying the conditions (1)--(5) is defined by the formulas
\begin{equation}
 \bar\varPsi_{\tau,\varepsilon;t}
 :=
 \int_{\bT_\tau^2\times [t,1]} 
 G^{-}_{\tau,\varepsilon;s}(\Cdot-y)\,\xi(\rd y,\rd s),
 \qquad
 \ubar\varPsi_{\tau,\varepsilon;t}
 :=
 \int_{\bT_\tau^2\times [t,1]} 
 G^{+}_{\tau,\varepsilon;s}(\Cdot-y)\,\xi(\rd y,\rd s).
\end{equation}
The free field is defined to be $\varPsi_{\tau,\varepsilon}:=\varPsi_{\tau,\varepsilon;0}$. For $t,s\in[0,1]$ such that $s\leq t$ the increment $\varPsi_{\tau,\varepsilon;t,s}$ of the scale decomposition of the free field is defined by $\varPsi_{\tau,\varepsilon;t,s}:=\varPsi_{\tau,\varepsilon;s}-\varPsi_{\tau,\varepsilon;t}$ and the algebra $\sF_{t,s}$ is defined as the Banach subalgebra of $\sF\subset\sB(\mathscr{H})$ generated by $\xi(f)$, where $f\in L^2(\bR^2\times[0,1])$ is s.t. $\supp\,f\subset\bR^2\times [s,t]$. Note that $\varPsi_{\tau,\varepsilon;t,s}$ is supported in the Fourier space in a shell of radii of order $1/t$ and $1/s$ and depends only on the white noise in the time interval $[s,t]$. Hence, $\langle\varPsi_{\tau,\varepsilon;t,s},\varphi\rangle\in\sF_{t,s}$ for all $\varphi\in\sS(\bR^2)^\bFF$. Observe also that for all $t\in(0,1]$ the free field can be decomposed into the low and high frequency part $\varPsi_{\tau,\varepsilon}=\varPsi_{\tau,\varepsilon;1,t}+\varPsi_{\tau,\varepsilon;t,0}$, where $\varPsi_{\tau,\varepsilon;1,t}$ contains only the Fourier modes with frequencies in the ball with radius of order $1/t$ and $\varPsi_{\tau,\varepsilon;t,0}$ contains only the Fourier modes with frequencies outside the ball with radius of order $1/t$. Finally, let us mention that in the non-commutative probability space $(\sF,\fE)$ of the white noise it is possible to construct~\cite{DFG22} the conditional expectation $\fE_t\,:\,\sF\to\sF_{1,t}$ given the algebra $\sF_{1,t}$ that has the usual properties of the conditional expectation. In intuitive terms, $\fE_t$ integrates out the high frequency part $\varPsi_{\tau,\varepsilon;t,0}$ of the free field $\varPsi_{\tau,\varepsilon}=\varPsi_{\tau,\varepsilon;1,t}+\varPsi_{\tau,\varepsilon;t,0}$ and acts trivially on the low frequency part $\varPsi_{\tau,\varepsilon;1,t}$.

Now let us introduce the effective potential and the Polchinski flow equation that play a central role in our construction. To this end, fix $t\in(0,1]$ and let $F$ be a functional of polynomial type such that $F(\varPsi_{\tau,\varepsilon})=F({\varPsi_{\tau,\varepsilon;1,t}})$, that is $F$ depends only on the low frequency part of the free field. For such functional the formula~\eqref{eq:measure_intro2} for the interacting measure $\mu_{\tau,\varepsilon}$ can be rewritten in the form
\begin{equation}\label{eq:measure_intro3}
 \mu_{\tau,\varepsilon}(F) 
 \,\propto\,
 \fE{\fE_t}\big( F({\varPsi_{\tau,\varepsilon;1,t}}) \re^{U_{\tau,\varepsilon}(\varPsi_{\tau,\varepsilon})}\big)
 =
 \fE\big( F({\varPsi_{\tau,\varepsilon;1,t}}) {\fE_t}\re^{U_{\tau,\varepsilon}({\varPsi_{\tau,\varepsilon;1,t}}+{\varPsi_{\tau,\varepsilon;t,0}})}\big),
\end{equation}
where we used the tower property of the conditional expectation as well as the fact that $F({\varPsi_{\tau,\varepsilon;1,t}})\in\sF_{1,t}$. Suppose that a functional $U_{\tau,\varepsilon;t}$ satisfies the following equality 
\begin{equation}\label{eq:effective_potential_intro}
 \exp(U_{\tau,\varepsilon;t}(\phi))
 :=
 \fE\exp(U_{\tau,\varepsilon}(\phi+{\varPsi_{\tau,\varepsilon;t,0}}))
\end{equation}
for all $\phi\in C^\infty(\bT^2_\tau)$ valued in the odd part of some Grassmann algebra $\sG$ independent of the white noise $\xi(f)$. The choice of the Grassmann algebra $\sG$ does not play an important role. However, $\sG$ has to have at least as many generators as the Grassmann algebra of the free field with cutoffs $\tau,\varepsilon\in(0,1]$. For this reason, it is convenient to assume that $\sG$ is infinite-dimensional. We call a functional as above an effective potential at spatial scale $t\in(0,1]$. Using an effective potential and Eq.~\eqref{eq:measure_intro3} we conclude that
\begin{equation}
 \mu_{\tau,\varepsilon}(F) 
 \,\propto\,
 \fE\big( F({\varPsi_{\tau,\varepsilon;1,t}}) 
 \re^{U_{\tau,\varepsilon;t}({\varPsi_{\tau,\varepsilon;1,t}})}\big)
\end{equation}
for all functionals $F$ as above. Note that the RHS of the above equality involves only the low frequency part ${\varPsi_{\tau,\varepsilon;1,t}}$ of the free field, which is smooth for all $\tau,\varepsilon\in[0,1]$. Consequently, the above formula remains meaningful without cutoffs provided an effective potential $U_{\tau,\varepsilon;t}$ is well-defined. This suggest that the family of effective potentials $(U_{\tau,\varepsilon;t})_{t\in(0,1]}$ determines the measure $\mu_{\tau,\varepsilon}$ completely and one can prove the existence of the limit $\tau,\varepsilon\searrow0$ of $\mu_{\tau,\varepsilon}$ by studying the limit $\tau,\varepsilon\searrow0$ of $(U_{\tau,\varepsilon;t})_{t\in(0,1]}$. Actually, using the translational invariance of the Berezin integral one shows the following formula
\begin{equation}\label{eq:generating_effective_potential_intro}
 \mu_{\tau,\varepsilon}(\exp(\langle\Cdot,\varphi\rangle))
 =
 \exp(\langle \varphi,G_{\tau,\varepsilon}\ast\varphi\rangle/2+U_{\tau,\varepsilon;1}(G_{\tau,\varepsilon}\ast\varphi)-U_{\tau,\varepsilon;1}(0))
\end{equation}
for all $\phi\in C^\infty(\bT^2_\tau)$ valued in the odd part of $\sG$, which provides a direct link between an effective potential $U_{\tau,\varepsilon;t}$ at the spatial scale $t=1$ and a generating functional of the Schwinger functions. The upshot is that in order to construct the Schwinger functions and prove their convergence as $\tau,\varepsilon\searrow0$ it is enough to construct the family of effective potentials and prove its convergence as $\tau,\varepsilon\searrow0$. The basic idea of the renormalization group theory is to derive a certain equation that relates the effective potentials at different scales and then use it to construct the effective potential. In the discrete renormalization group method one studies an equation that relates the effective potential at different discrete values of the scale parameter. Our construction uses instead the so-called Polchinski flow equation, which is a Hamilton-Jacobi-Bellman type equation for the function $(t,\phi)\mapsto U_{\tau,\varepsilon;t}(\phi)$. The mild form of this equation is given by
\begin{multline}\label{eq:flow_mild_intro}
 U_{\tau,\varepsilon;t}(\phi) = \fE U_{\tau,\varepsilon}(\varPsi_{\tau,\varepsilon;t,0}+\phi) 
 \\
 + \frac{1}{2} \int_0^t \fE\langle \rD_\phi U_{\tau,\varepsilon;s}(\varPsi_{\tau,\varepsilon;t,s}+\phi),\dot G_{\varepsilon;s}\ast \rD_\phi U_{\tau,\varepsilon;s}(\varPsi_{\tau,\varepsilon;t,s}+\phi)\rangle_\tau\,\rd s
\end{multline} 
for all $t\in[0,1]$ and $\phi\in C^\infty(\bT^2_\tau)^\bFF$ valued in the odd part of $\sG$. In the above formula, $\dot G_{\varepsilon;t}:=\partial_t G_{\varepsilon;t}$ and $\rD_\phi F$ stands for the functional derivative of $F$. The flow equation~\eqref{eq:flow_mild_intro} is well-posed provided $\tau,\varepsilon\in(0,1]$. Note that the first term on the RHS of Eq.~\eqref{eq:flow_mild_intro} involves the original potential $U_{\tau,\varepsilon}$ defined by Eq.~\eqref{eq:potential_intro}, which depends on the parameters $g_{\tau,\varepsilon},r_{\tau,\varepsilon}\in\bR$. Recall that these parameters are not given. Our task is to find $(g_{\tau,\varepsilon},r_{\tau,\varepsilon})_{\tau,\varepsilon\in(0,1]}$ such that the Schwinger functions converge as $\tau,\varepsilon\searrow0$. This is the so-called renormalization problem.

A comment on functionals is in order. We identify functionals $U$ with collections of kernels $(U^{m,\sigma})_{m\in\bN_0,\sigma\in\bFF^m}$ such that
\begin{equation}\label{eq:functional_U_intro}
 U(\phi)=U^0 + \sum_{m\in\bN_+} \sum_{\sigma\in\bFF^m}\langle U^{m,\sigma},\phi^{\sigma_1}\otimes\ldots\otimes\phi^{\sigma_m}\rangle_\tau
\end{equation}
for all $\phi\in C^\infty(\bT_\tau^2)^\bFF$ valued in the odd part of $\sG$, where $U^0\in\bC$ and $U^{m,\sigma}\in\sS'(\bT_\tau^2)$ is antisymmetric for $m\in\bN_+,\sigma\in\bFF^m$. We interpret the flow equation as an equation for the hierarchy of kernels $(U^{m,\sigma})_{m\in\bN_0,\sigma\in\bFF^m}$. In order to study the infinite volume limit it is useful to consider instead the hierarchy of kernels $(V^{m,\sigma})_{m\in\bN_+,\sigma\in\bFF^m}$ such that $V^{m,\sigma}\in \sS'(\bR^2)$ for all $m\in\bN_+,\sigma\in\bFF^m$ and such that $V^{m,\sigma}$ is related to $U^{m,\sigma}$ by the equality
\begin{equation}\label{eq:U_V_intro}
 \langle U^{m,\sigma},\phi_1\otimes\ldots\otimes\phi_m\rangle_\tau 
 =
 \langle V^{m,\sigma},\chi_\tau\phi_1 \otimes \phi_2 \otimes\ldots\otimes\phi_m\rangle
\end{equation}
for all $m\in\bN_+,\sigma\in\bFF^m$ and $\phi_1,\ldots,\phi_m\in C^\infty(\bT^2_\tau)$, where $\chi_\tau\in C^\infty_\rc(\bR^2)$ is such that its periodization with period $1/\tau$ coincides with the constant function equal to one. The paring $\langle\Cdot,\Cdot\rangle_\tau$ is the paring between distributions and test functions on $\bT^2_\tau$, whereas the paring $\langle\Cdot,\Cdot\rangle$ is the paring between distributions and test functions on $\bR^2$. By the translational invariance of $V^{m,\sigma}$ and the periodicity of $\phi_1,\ldots,\phi_m$ the RHS of Eq.~\eqref{eq:U_V_intro} is independent of the choice of the function $\chi_\tau$. In informal terms, $U^{m,\sigma}$ is obtained by the periodization of $V^{m,\sigma}$ in $m-1$ variables. Because of the assumption that $V^{m,\sigma}$ is antisymmetric the choice of these $m-1$ variables is not important. Note that the hierarchy $(V^{m,\sigma})_{m\in\bN_+,\sigma\in\bFF^m}$ contains no information about $U^0$, which is redundant. It turns out that if the flow equation~\eqref{eq:flow_mild_intro} is satisfied up to a constant, then it is possible to choose the constant such that the flow equation holds exactly. The advantage of working with the kernels $(V^{m,\sigma})_{m\in\bN_+,\sigma\in\bFF^m}$ rather than $(U^{m,\sigma})_{m\in\bN_0,\sigma\in\bFF^m}$ is that the kernels $(V^{m,\sigma})_{m\in\bN_+,\sigma\in\bFF^m}$ are not periodic and decay rapidly in directions transversal to the diagonal. Collections of kernels $(V^{m,\sigma})_{m\in\bN_+,\sigma\in\bFF^m}$ are identified with functionals $V$ by the formula 
\begin{equation}\label{eq:functional_V_intro}
 V[\varphi] = \sum_{m\in\bN_+}\sum_{\sigma\in\bFF^m} \langle V^{m,\sigma},\varphi^{\sigma_1}\otimes\ldots\otimes\varphi^{\sigma_m}\rangle
\end{equation}
for all $\varphi\in \sS(\bR^2)^\bFF$ valued in the odd part of $\sG$. In order to introduce a topology in the space of functionals $V\equiv(V^{m,\sigma})_{m\in\bN_+,\sigma\in\bFF^m}$ for all $m\in\bN_+$, $\sigma\in\bFF^m$ it is convenient to make the following ansatz
\begin{equation}\label{eq:functionals_derivatives}
 \langle V^{m,\sigma},\varphi_1\otimes\ldots\otimes\varphi_m\rangle =
 \sum_{a\in\bA^m}\langle V^{m,a,\sigma},\partial^{a_1}\varphi_1 \otimes\ldots\otimes\partial^{a_m}\varphi_m\rangle
\end{equation}
for all $\varphi_1,\ldots,\varphi_m\in\sS(\bR^2)$, where $\bA=\{0,1,2\}^2$ is the set of spatial multi-indices and for all $m\in\bN_+$, $a\in\bA^m$, $\sigma\in\bFF^m$ the distributions $V^{m,a,\sigma}\in\sS'(\bR^{2m})$ are measures such that the following norm
\begin{equation} 
 \|V^{m,a,\sigma}\|_{\sM^m} = \sup_{x_1\in\bR^2}\int |V^{m,a,\sigma}(x_1,\rd x_2,\ldots,\rd x_m)|
\end{equation}
is finite. We also introduce the notation $\fJ\varphi=(\partial^a\varphi^\sigma)_{a\in\bA,\sigma\in\bFF}\in\sS(\bR^2)^{\bA\times\bFF}$ for the jet extension of the function $\varphi\in\sS(\bR^2)^\bFF$. The representation~\eqref{eq:functionals_derivatives} of $V^{m,\sigma}$ in terms of $(V^{m,a,\sigma})_{a\in\bA^m}$ is of course not unique. The fact that the potential of the Gross-Neveu model $U_{\tau,\varepsilon}$ does not involve derivatives suggests that one could set $V^{m,a,\sigma}=0$ unless $a=0$. However, it turns out that another representation of $V^{m,\sigma}$ in terms of $(V^{m,a,\sigma})_{a\in\bA^m}$ is more convenient to solve the renormalization problem. In what follows, we identify functionals $V$ with collections of kernels $(V^{m,a,\sigma})_{m\in\bN_+,a\in\bA^m,\sigma\in\bFF^m}$ such that Eqs.~\eqref{eq:functional_V_intro} and~\eqref{eq:functionals_derivatives} are satisfied. We also use the notation $V^m$ for the collection of kernels $(V^{m,a,\sigma})_{a\in\bA^m,\sigma\in\bFF^m}$. For $\tau,\varepsilon\in[0,1]$ and $t,s\in(0,1]$ we introduce the following maps in the space of functionals
\begin{equation}\label{eq:A_B_def_intro}
\begin{gathered}
 (\fA_{\tau,\varepsilon;t,s}V)[\varphi]:= \Pi_\circ V[\fJ\varPsi_{\tau,\varepsilon;t\vee s,s}+\varphi],
 \\
 \fB_{\varepsilon;s}(V)[\varphi]=\Pi_\circ\langle \rD_\varphi V[\varphi]\otimes\rD_\varphi V[\varphi], (\fJ\otimes\fJ)\dot G_{\varepsilon;s}(\Cdot-\Cdot)\rangle
\end{gathered} 
\end{equation}
where $\Pi_\circ V[\varphi]:=V[\varphi]-V[0]$ and $\varphi\in\sS(\bR^2)^{\bA\times\bFF}$ is an arbitrary Schwartz function valued in the odd part of $\sG$. The above definitions should be interpreted as hierarchies of equations for the kernels $V\equiv(V^{m,a,\sigma})_{m\in\bN_+,a\in\bA^m,\sigma\in\bFF^m}$. Note that the map $\Pi_\circ$  ensures that the functionals $(\fA_{\tau,\varepsilon;t,s}V)[\varphi]$ and $(\fB_{\varepsilon;s}(V))[\varphi]$ vanish for $\varphi=0$ and are of the form~\eqref{eq:functional_V_intro}. We stress that $\fA_{\tau,\varepsilon;t,s}V$ is a functional taking values in $\sF_{t,s}\subset\sB(\mathscr{H})$. We denote by $\fA^{m,a,\sigma}_{\tau,\varepsilon;t,s}V$ and $\fB^{m,a,\sigma}_{\varepsilon;s}(V)$ the kernels of the functionals $\fA_{\tau,\varepsilon;t,s}V$ and $\fB_{\varepsilon;s}(V)$. The symbols $\fA^{m}_{\tau,\varepsilon;t,s}V$ and $\fB^{m}_{\varepsilon;s}(V)$ denote collections of kernels.

Now we shall rewrite the flow equation~\eqref{eq:flow_mild_intro} in a form that is suitable for studying the infinite volume limit $\tau\searrow0$. To this end, let the functional $V_{\tau,\varepsilon}\equiv(V^{m,a,\sigma}_{\tau,\varepsilon})_{m\in\bN_+,a\in\bA^m,\sigma\in\bFF^m}$ be related to the potential~\eqref{eq:potential_intro} of the Gross-Neveu model by Eqs.~\eqref{eq:functional_U_intro},~\eqref{eq:U_V_intro} and~\eqref{eq:functionals_derivatives} and let $t\mapsto V_{\tau,\varepsilon;t}\equiv(V^{m,a,\sigma}_{\tau,\varepsilon;t})_{m\in\bN_+,a\in\bA^m,\sigma\in\bFF^m}$ be a solution of the following flow equation
\begin{equation}\label{eq:flow_2_intro}
 V_{\tau,\varepsilon;t}[\fJ\varphi] =
 \fE\fA_{\tau,\varepsilon;t,0} V_{\tau,\varepsilon}[\fJ\varphi]
 +
 \int_0^t \fE\fA_{\tau,\varepsilon;t,s}\fB_{\varepsilon;s}(V_{\tau,\varepsilon;s})[\fJ\varphi]\,\rd s
\end{equation} 
for all $\varphi\in \sS(\bR^2)^{\bFF}$ valued in the odd part of $\sG$. The above equation is written using the notation introduced in the previous paragraph. For $t\in[0,1]$ we define the functional \mbox{$U_{\tau,\varepsilon;t}\equiv(U^{m,\sigma}_{\tau,\varepsilon;t})_{m\in\bN_0,\sigma\in\bFF^m}$} related to $V_{\tau,\varepsilon;t}$ by Eqs.~\eqref{eq:U_V_intro} and~\eqref{eq:functionals_derivatives}. This leaves $U^0_{\tau,\varepsilon;t}\in\bC$ unspecified but for a suitable choice of $t\mapsto U^0_{\tau,\varepsilon;t}$ the function $(t,\phi)\mapsto U_{\tau,\varepsilon;t}(\phi)$ satisfies the flow equation~\eqref{eq:flow_mild_intro}. Hence, we are led to study the flow equation~\eqref{eq:flow_2_intro}. The advantage of Eq.~\eqref{eq:flow_2_intro} over Eq.~\eqref{eq:flow_mild_intro} is that it is well-posed in the limit $\tau\searrow0$. Let us stress that a solution $t\mapsto V_{\tau,\varepsilon;t}\equiv(V^{m,a,\sigma}_{\tau,\varepsilon;t})_{m\in\bN_+,a\in\bA^m,\sigma\in\bFF^m}$ of Eq.~\eqref{eq:flow_2_intro} is not unique. Observe that a solution of the following equation
\begin{equation}\label{eq:flow_3_intro}
 V_{\tau,\varepsilon;t}[\varphi] =
 \fE\fA_{\tau,\varepsilon;t,0} V_{\tau,\varepsilon}[\varphi]
 +
 \int_0^t \fE\fA_{\tau,\varepsilon;t,s}\fB_{\varepsilon;s}(V_{\tau,\varepsilon;s})[\varphi]\,\rd s
\end{equation} 
for all $\varphi\in C^\infty(\bR^2)^{\bA\times\bFF}$ valued in the odd part of $\sG$ is also a solution of Eq.~\eqref{eq:flow_2_intro}. However, the reverse implication is not true because of the non-uniqueness of the representation~\eqref{eq:functionals_derivatives} of $V^{m,\sigma}$ in terms of $(V^{m,a,\sigma})_{a\in\bA^m}$. We shall take advantage of this non-uniqueness in order to solve the renormalization problem. In particular, a solution of Eq.~\eqref{eq:flow_2_intro} that we are going to construct will not satisfy Eq.~\eqref{eq:flow_3_intro}.

Let us now investigate the renormalization problem. To this end, we have to study the limit $\varepsilon\searrow0$. First note  that since for $\varepsilon=0$ the increment $\varPsi_{\tau,\varepsilon;t,0}$ is not a function over $\bR^2$ but only a distribution and the functional $V_{\tau,\varepsilon}$ involves pointwise products the first term on the RHS of Eq.~\eqref{eq:flow_2_intro} becomes singular in the limit $\varepsilon\searrow0$. We hope that the second term becomes singular in this limit as well and for specific choices of the parameters $g_{\tau,\varepsilon},r_{\tau,\varepsilon}\in\bR$ of the original potential the singularities of both terms cancel out. We would like to rewrite Eq.~\eqref{eq:flow_2_intro} in a form that exhibits the above-mentioned cancellation of singularities. To this end, we have to come up with an appropriate space of functionals. The norm in this space should control the dependence of the norm $\|V^{m,a,\sigma}_{\tau,\varepsilon;t}\|_{\sM^m}$ of the kernels of the functional $V_{\tau,\varepsilon;t}$ on $t\in(0,1]$ and $m\in\bN_+$, $a\in\bA^m$, $\sigma\in\bFF^m$. Let us first concentrate on the dependence on $t\in(0,1]$. Using the fact that the perturbative corrections to the kernels $V^{m,a,\sigma}_{\tau,\varepsilon;t}$ can be expressed in terms of Feynman diagrams by naive power counting argument for all $m\in\bN_+$, $a\in\bA^m$, $\sigma\in\bFF^m$ the following bound
\begin{equation}\label{eq:kernel_bound_intro}
 \|V^{m,a,\sigma}_{\tau,\varepsilon;t}\|_{\sM^m} \lesssim t^{m/2+|a|-2}
\end{equation}
should hold uniformly in $\tau,\varepsilon\in[0,1]$ and $t\in(0,1]$ up to logarithmic corrections, which we ignore for the moment. Since $V^{m,a,\sigma}_{\tau,\varepsilon;t}=0$ if $m\in\bN_+\setminus 2\bN_+$ by the charge conjugation invariance one can restrict attention to kernels $V^{m,a,\sigma}_{\tau,\varepsilon;t}$ with $m\in2\bN_+$. This suggest that for $\varepsilon=0$ the norm of the kernel 
\begin{equation}
 (V^{2,0,\sigma}_{\tau,\varepsilon;t})_{\sigma\in\bFF^2},
\end{equation}
may diverge polynomially and the norms of the kernels
\begin{equation}
 (V^{4,0,\sigma}_{\tau,\varepsilon;t})_{\sigma\in\bFF^4},
 \qquad
 (V^{2,a,\sigma}_{\tau,\varepsilon;t})_{a\in\bA^2,|a|=1,\sigma\in\bFF^2}
\end{equation}
may diverge logarithmically as $t\searrow0$. We call the above kernels relevant and marginal, respectively. The remaining kernels vanish in the limit $t\searrow0$ and are called irrelevant. Note that the functional $V_{\tau,\varepsilon}$, which is closely related to the original potential $U_{\tau,\varepsilon}$ defined by Eq.~\eqref{eq:potential_intro}, contains only relevant and marginal kernels. Using properties of the maps $\fA_{\tau,\varepsilon;t,s}$ and $\fB_{\varepsilon;s}$ as well as the bound~\eqref{eq:kernel_bound_intro} one shows that for all $m\in\bN_+$, $a\in\bA^m$, $\sigma\in\bFF^m$ the bound
\begin{equation}
 \|\fE\fA^{m,a,\sigma}_{\tau,\varepsilon;t,s}\fB_{\varepsilon;s}(V_{\tau,\varepsilon;s})\|_{\sM^m}
 \lesssim s^{m/2+|a|-3}
\end{equation}
holds uniformly in $\tau,\varepsilon\in[0,1]$ and $t,s\in(0,1]$ up to logarithmic corrections. By the Minkowski inequality this implies the bound
\begin{multline}\label{eq:I_B_A_intro}
 \big\|\textstyle\int_0^t \fE\fA^{m,a,\sigma}_{\tau,\varepsilon;t,s}\fB_{\varepsilon;s}(V_{s})\, \rd s\big\|_{\sM^m}\leq
 \textstyle\int_0^t \big\|\fE\fA^{m,a,\sigma}_{\tau,\varepsilon;t,s}\fB_{\varepsilon;s}(V_{s})\big\|_{\sM^m}\, \rd s
 \\
 \lesssim \textstyle\int_0^t s^{m/2+|a|-3} \, \rd s \lesssim t^{m/2+|a|-2}
\end{multline}
uniform in $\tau,\varepsilon\in[0,1]$ and $t\in(0,1]$ up to logarithmic corrections unless $m=4$ and $a=0$, or $m=2$ and $|a|\leq1$. Consequently, for the irrelevant kernels the bound~\eqref{eq:kernel_bound_intro} is consistent with the flow equation~\eqref{eq:flow_2_intro}. The relevant kernels require special treatment. Indeed, if $m=4$ and $a=0$, or $m=2$ and $|a|\leq1$, then the last estimate in~\eqref{eq:I_B_A_intro} is false.

In order to address the above-mentioned problem with the estimates for the relevant kernels we make the following ansatz
\begin{equation}\label{eq:ansatz_intro}
 V_{\tau,\varepsilon;t}= U(1/g_{\tau,\varepsilon;t},r_{\tau,\varepsilon;t},z_{\tau,\varepsilon;t})+W_{\tau,\varepsilon;t},
\end{equation}
where $g_{\tau,\varepsilon;t}\in(0,\infty)$, $r_{\tau,\varepsilon;t},z_{\tau,\varepsilon;t}\in\bR$ are some parameters, $W_{\tau,\varepsilon;t}$ is a functional and for all $g,r,z\in\bR$ the functional $U(g,r,z)$ is defined by the equality
\begin{equation}
 U(g,r,z)[\psi] := \int_{\bR^2} 
 \big(
 g\,(\bar\psi(x)\cdot\ubar\psi(x))^2
 + r\,\bar\psi(x)\cdot\ubar\psi(x)
 + z\,\bar\psi(x)\cdot(\slashed{\partial}\ubar\psi)(x)
 \big)\,\rd x
\end{equation}
for all $\psi=(\psi^\sigma)_{\sigma\in\bFF}\in\sS(\bR^2)^\bFF$ valued in the odd part of $\sG$. Since the functional $U(g,r,z)$ is a linear combination of quartic and quadratic terms it holds $V_{\tau,\varepsilon;t}^{m,a,\sigma}=W_{\tau,\varepsilon;t}^{m,a,\sigma}$ unless $m=4$ and $a=0$, or $m=2$ and $|a|\leq1$. 

In order to specify the relation between $V_{\tau,\varepsilon;t}$ and $(g_{\tau,\varepsilon;t},r_{\tau,\varepsilon;t},z_{\tau,\varepsilon;t},W_{\tau,\varepsilon;t})$ we have to first introduce the notion of the local part of a kernel and the remainder. The local part of a collection of kernels $V^{4}_{\tau,\varepsilon;t}=(V^{4,a,\sigma})_{a\in\bFF^4,\sigma\in\bFF^4}$ is defined by the equality
\begin{equation}
 \fL V^4\,(\ubar\psi\cdot\bar\psi)^2 := \sum_{\sigma\in\bFF^4}\psi^{\sigma_1}\ldots\psi^{\sigma_4}\int_{\bR^6} V^{4,0,\sigma}(x_1,\rd x_2,\rd x_3,\rd x_4) 
\end{equation}
for all Grassmann-valued numbers $\psi\equiv(\psi^\sigma)_{\sigma\in\bFF}\equiv(\bar\psi^{\alpha,\varsigma},\ubar\psi^{\alpha,\varsigma})_{\alpha\in\{1,2\},\varsigma\in\{1,\ldots,N\}}$. Note that the RHS of the above formula does not depend on $x_1\in\bR^2$ because of the translational invariance of the kernels of the functional $V_{\tau,\varepsilon;t}$. It turns out that given a collection of kernels $V^4=(V^{4,a,\sigma})_{a\in\bA^4,\sigma\in\bFF^4}$ that posses certain symmetries there exists a collection of kernels $\fR V^4=((\fR V^4)^{a,\sigma})_{a\in\bA^4,\sigma\in\bFF^4}$
such that $(\fR V^4)^{a,\sigma}=0$ if $a=0$ and
\begin{multline}\label{eq:L_R_4_intro}
 \fL V^4\,\int_{\bR^2}(\ubar\psi(x)\cdot\bar\psi(x))^2\,\rd x
 +
 \sum_{a\in\bA^4}\sum_{\sigma\in\bFF^4}\langle(\fR V^4)^{a,\sigma},
 \partial^{a_1}\psi^{\sigma_1}\otimes
 \ldots
 \otimes \partial^{a_4}\psi^{\sigma_4}
 \rangle
 \\
 =
 \sum_{a\in\bA^4}\sum_{\sigma\in\bFF^4}\langle V^{4,a,\sigma},
 \partial^{a_1}\psi^{\sigma_1}\otimes
 \ldots
 \otimes \partial^{a_4}\psi^{\sigma_4}
 \rangle
\end{multline}
for all $\psi\in\sS(\bR)^\bFF$ valued in the odd part of $\sG$. Similarly, given a collection of kernels $V^2=(V^{2,a,\sigma})_{a\in\bA^2,\sigma\in\bFF^2}$ that posses certain symmetries there exist numbers $\fL V^2,\fL_\partial V^2\in\bR$ and a collection of kernels $\fR V^2=((\fR V^2)^{a,\sigma})_{a\in\bA^2,\sigma\in\bFF^2}$
such that $(\fR V^2)^{a,\sigma}=0$ if $|a|\leq1$ and
\begin{multline}\label{eq:L_R_2_intro}
 \fL V^2\,\int_{\bR^2}\ubar\psi(x)\cdot\bar\psi(x)\,\rd x
 +
 \fL_\partial V^2\,\int_{\bR^2}\bar\psi(x)\cdot(\slashed{\partial}\ubar\psi)(x)\,\rd x
 \\
 +
 \sum_{a\in\bA^2}\sum_{\sigma\in\bFF^2}\langle(\fR V^2)^{a,\sigma},
 \partial^{a_1}\psi^{\sigma_1}\otimes
 \partial^{a_2}\psi^{\sigma_2}
 \rangle
 =
 \sum_{a\in\bA^2}\sum_{\sigma\in\bFF^2}\langle V^{2,a,\sigma},
 \partial^{a_1}\psi^{\sigma_1}\otimes
 \partial^{a_2}\psi^{\sigma_2}
 \rangle
\end{multline}
for all $\psi\in\sS(\bR)^\bFF$ valued in the odd part of $\sG$. The identities~\eqref{eq:L_R_4_intro} and~\eqref{eq:L_R_2_intro} are consequences of the Taylor theorem and the form of the local terms
\begin{equation}
 \int_{\bR^2} 
 \bar\psi(x)\cdot\ubar\psi(x)\,\rd x,
 \qquad
 \int_{\bR^2} 
 \bar\psi(x)\cdot(\slashed{\partial}\ubar\psi)(x)\,\rd x,
 \qquad
 \int_{\bR^2} 
 (\bar\psi(x)\cdot\ubar\psi(x))^2\,\rd x
\end{equation}
appearing in these identities is dictated by the symmetries of the Gross-Neveu model. The parameter $1/g_{\tau,\varepsilon;t}\in(0,\infty)$ is called the effective coupling constant at spatial scale $t\in(0,1]$ and is defined so that it satisfies the equation
\begin{equation}\label{eq:flow_2_intro_g}
 1/g_{\tau,\varepsilon;t}
 =
 \fL \fE\fA^4_{\tau,\varepsilon;t,0} V_{\tau,\varepsilon}
 +
 \int_0^t \fL\fE\fA^4_{\tau,\varepsilon;1,s}\fB_{\varepsilon;s}(V_{\tau,\varepsilon;s})\,\rd s.
\end{equation} 
The parameter $r_{\tau,\varepsilon;t}$ satisfies the equation
\begin{multline}\label{eq:flow_2_intro_r}
 r_{\tau,\varepsilon;t}
 =
 \fL\fE\fA^2_{\tau,\varepsilon;0,t} V_{\tau,\varepsilon}
 +
 \fL\fE\fA^2_{\tau,\varepsilon;1,t} U(1/g_{\tau,\varepsilon;0}-1/g_{\tau,\varepsilon;t},0,0)
 \\
 +
 \int_0^t
 \fL\fE\fA^2_{\tau,\varepsilon;1,s}\fB_{\varepsilon;s}(V_{\tau,\varepsilon;s})\,\rd s.
\end{multline}
The parameter $z_{\tau,\varepsilon;t}$ satisfies the equation
\begin{equation}\label{eq:flow_2_intro_z}
 z_{\tau,\varepsilon;t}
 =
 \int_0^t
 \fL_\partial\fE\fA^2_{\tau,\varepsilon;1,s}\fB_{\varepsilon;s}(V_{\tau,\varepsilon;s})\,\rd s.
\end{equation}
Finally, the functional $W_{\tau,\varepsilon;t}$ satisfies the equations
\begin{equation}\label{eq:flow_2_intro_W1}
  W^m_{\tau,\varepsilon;t} 
  =
 \int_0^t \fE\fA^m_{\tau,\varepsilon;t,s}\fB_{\varepsilon;s}(V_{\tau,\varepsilon;s})\,\rd s,
 \qquad
 m\in\bN_+\setminus\{2,4\},
\end{equation}
\begin{equation}\label{eq:flow_2_intro_W2}
 W^m_{\tau,\varepsilon;t} 
 =
 \int_0^t \fR\fE\fA^m_{\tau,\varepsilon;1,s}\fB_{\varepsilon;s}(V_{\tau,\varepsilon;s})\,\rd s
 -
 \fE\fC^m_{\tau,\varepsilon;1,t} W_{\tau,\varepsilon;t},
 \qquad m\in\{2,4\},
\end{equation} 
where $\fC_{\tau,\varepsilon;1,t}V:=\fA_{\tau,\varepsilon;1,t}V-V$. Note that $\fC^{m}_{\tau,\varepsilon;1,t} W_{\tau,\varepsilon;t}$ depends only on $(W^{m+k}_{\tau,\varepsilon;t})_{k\in\bN_+}$. Hence, the RHS of Eq.~\eqref{eq:flow_2_intro_W2} does not depend on $W^m_{\tau,\varepsilon;t}$. 
The functional $V_{\tau,\varepsilon;s}$ appearing in Eqs.~\eqref{eq:flow_2_intro_g}, \eqref{eq:flow_2_intro_r}, \eqref{eq:flow_2_intro_z}, \eqref{eq:flow_2_intro_W1}, \eqref{eq:flow_2_intro_W2} is related to $(g_{\tau,\varepsilon;s},r_{\tau,\varepsilon;s},z_{\tau,\varepsilon;s},W_{\tau,\varepsilon;s})$ by Eq.~\eqref{eq:ansatz_intro}. In this way we obtain a closed system of equations for $(g_{\tau,\varepsilon;\Cdot},r_{\tau,\varepsilon;\Cdot},z_{\tau,\varepsilon;\Cdot},W_{\tau,\varepsilon;\Cdot})$. Using Eqs.~\eqref{eq:L_R_4_intro} and~\eqref{eq:L_R_2_intro} one proves that given a solution $(g_{\tau,\varepsilon;\Cdot},r_{\tau,\varepsilon;\Cdot},z_{\tau,\varepsilon;\Cdot},W_{\tau,\varepsilon;\Cdot})$ of this system of equations the functional $V_{\tau,\varepsilon;\Cdot}$ related to $(g_{\tau,\varepsilon;\Cdot},r_{\tau,\varepsilon;\Cdot},z_{\tau,\varepsilon;\Cdot},W_{\tau,\varepsilon;\Cdot})$ by Eq.~\eqref{eq:ansatz_intro} is a solution of the flow equation~\eqref{eq:flow_2_intro}. Let us mention that this solution does not satisfy Eq.~\eqref{eq:flow_3_intro}. The argument presented in the previous paragraph together with the fact that the relevant kernels of the functional $\fR\fE\fA_{\tau,\varepsilon;1,s}\fB_{\varepsilon;s}(V_{\tau,\varepsilon;s})$ vanish identically by the definition of the map $\fR$ suggest that the integrands in Eqs.~\eqref{eq:flow_2_intro_W1}, \eqref{eq:flow_2_intro_W2} 
are absolutely integrable also for $\varepsilon=0$. Now let us study more closely Eq.~\eqref{eq:flow_2_intro_g} for the effective coupling constant. Noting that
\begin{equation}
 \partial_t g_{\tau,\varepsilon;t} 
 =
 - g_{\tau,\varepsilon;t}^2\,\partial_t (1/g_{\tau,\varepsilon;t})
 =
 - g_{\tau,\varepsilon;t}^2\,\fL\fE\fA^4_{\tau,\varepsilon;1,t}\fB_{\varepsilon;t}(V_{\tau,\varepsilon;t})
\end{equation}
and using the fact that $\fL \fE\fA^4_{\tau,\varepsilon;t,0} V_{\tau,\varepsilon}=1/g_{\tau,\varepsilon}$, where $g_{\tau,\varepsilon}$ is the parameter of the potential~\eqref{eq:potential_intro}, we obtain
\begin{equation}
 g_{\tau,\varepsilon;t}=g_{\tau,\varepsilon} 
 -
 \int_0^t g_{\tau,\varepsilon;s}^2\,\fL\fE\fA^4_{\tau,\varepsilon;1,s}\fB_{\varepsilon;s}(V_{\tau,\varepsilon;s})\,\rd s.
\end{equation}
We fix the parameter $g_{\tau,\varepsilon}=g_{\tau,\varepsilon;t=0}$ implicitly by imposing the following renormalization condition
\begin{equation}\label{eq:ren_condition_g_intro}
 g_{\tau,\varepsilon;t=1}=1/\lambda,
\end{equation}
where $\lambda\in(0,1]$ is assumed to be sufficiently small. Taking into account the above boundary condition we rewrite the equation for $g_{\tau,\varepsilon;t}$ in the following form
\begin{equation}\label{eq:flow_g_intro_ren}
 g_{\tau,\varepsilon;t}=1/\lambda 
 +
 \int_t^1 g_{\tau,\varepsilon;s}^2\,\fL\fE\fA^4_{\tau,\varepsilon;1,s}\fB_{\varepsilon;s}(V_{\tau,\varepsilon;s})\,\rd s.
\end{equation}
It turns out that the asymptotic behavior of 
\begin{equation}
 s\mapsto g_{\tau,\varepsilon;s}^2\,\fL\fE\fA^4_{\tau,\varepsilon;1,s}\fB_{\varepsilon;s}(V_{\tau,\varepsilon;s})
 =
 g_{\tau,\varepsilon;s}^2\,\fL\fE\fA^4_{\tau,\varepsilon;1,s}\fB_{\varepsilon;s}(U(1/g_{\tau,\varepsilon;s},r_{\tau,\varepsilon;s},z_{\tau,\varepsilon;s})+W_{\tau,\varepsilon;s})
\end{equation}
at $s=0$ coincides with the asymptotic behavior of 
\begin{equation}
 s\mapsto g_{\tau,\varepsilon;s}^2\,\fL\fE\fA^4_{\tau,\varepsilon;1,s}\fB_{\varepsilon;s}(U(1/g_{\tau,\varepsilon;s},0,0))=
 \fL\fE\fA^4_{\tau,\varepsilon;1,s}\fB_{\varepsilon;s}(U(1,0,0))
\end{equation}
The above claim is justified a posteriori by assuming that the tuple $(g_{\tau,\varepsilon;\Cdot},r_{\tau,\varepsilon;\Cdot},z_{\tau,\varepsilon;\Cdot},W_{\tau,\varepsilon;\Cdot})$ belongs to the set $\sY_{\tau,\varepsilon}$ defined below. Moreover, a direct computation yields
\begin{equation}
 \fL\fE\fA^4_{\tau,\varepsilon;1,s}\fB_{\varepsilon;s}(U(1,0,0))=\beta_2\,s^{-1},
 \qquad
 \beta_2=2(N-1)/\pi, 
\end{equation}
for $\tau=0$ and $\varepsilon=0$. Consequently, neglecting sub-leading corrections, for $\tau=0$ and $\varepsilon=0$ the asymptotic behavior of $t\mapsto 1/g_{\tau,\varepsilon;t}$ at $t=0$ coincides with the asymptotic behavior of $t\mapsto \lambda_t$, where $\lambda_t$ was introduced in Def.~\ref{dfn:lambda}. Note that the asymptotic behavior of the effective coupling constant $1/g_{\tau,\varepsilon;t}$ is easiest to determine by studying Eq.~\eqref{eq:flow_g_intro_ren} for its inverse $g_{\tau,\varepsilon;t}$. This is one of the reasons why we chose $g_{\tau,\varepsilon;t}$ rather than the effective coupling constant $1/g_{\tau,\varepsilon;t}$ to be one of the fundamental variables. Note that by the above argument the effective coupling constant $1/g_{\tau,\varepsilon;t}$ with $\varepsilon=0$ vanishes logarithmically as $t\searrow0$. Thus, at high energy the behavior of Schwinger functions of the Gross-Neveu model without the UV cutoff should not differ much from the behavior of the Schwinger functions of the free theory. This property is called in the literature the asymptotic freedom and plays a crucial role in our construction. Let us remark that our scale decomposition $G_{\varepsilon;t}$ of the propagator $G_{\varepsilon}$ satisfies the condition $\dot G_{\varepsilon;t}=0$ for $t\in(0,\varepsilon]$. As a result, $\fB_{\varepsilon;t}(\Cdot)=0$ and $g_{\tau,\varepsilon;t}=g_{\tau,\varepsilon;\varepsilon}$ for $t\in[0,\varepsilon]$. Hence, the flow of the effective coupling constant in the model with the UV cutoff $\varepsilon\in(0,1]$ halts when the scale parameter $t$ reaches the value $\varepsilon$. In particular, the model with the UV cutoff is not asymptotically free and the value of the parameter $g_{\tau,\varepsilon}=g_{\tau,\varepsilon;0}=g_{\tau,\varepsilon;\varepsilon}$ of the potential~\eqref{eq:potential_intro} fixed by the renormalization condition~\eqref{eq:ren_condition_g_intro} vanishes logarithmically as $\varepsilon\searrow0$. As a side remark, we mention that the assumption $\lambda>0$ is necessary for the asymptotic freedom of the model without the UV cutoff as for $\lambda<0$ the function $t\mapsto\lambda_t$ blows up at the scale $t=\re^{1/(\lambda\beta_2)}$, which is called in the literature the Landau pole. Because of the asymptotic freedom the logarithmic corrections, which were ignored in the bound~\eqref{eq:kernel_bound_intro}, improve the asymptotic behavior of $t\mapsto\|V^{m,a,\sigma}_{\tau,\varepsilon;t}\|_{\sM^m}$ at $t=0$. Actually, taking into account the logarithmic vanishing of the effective coupling constant a direct inspection of perturbative corrections yields the following bound
\begin{equation}
 \|V^{m,a,\sigma}_t\|_{\sM^m} \lesssim \lambda_{\varepsilon\vee t}^{(m/2-1)\vee1}\, t^{m/2+|a|-2}
\end{equation}
uniform in $\tau,\varepsilon\in[0,1]$ and $t\in(0,1]$ for all $m\in\bN_+$, $a\in\bA^m$, $\sigma\in\bFF^m$. Using the above bound one shows that 
\begin{equation}
 \|\fE\fA^{m,a,\sigma}_{\tau,\varepsilon;t,s}\fB_{\varepsilon;s}(V_{\tau,\varepsilon;s})\|_{\sM^m}
 \lesssim \lambda_s^{(m/2-1)\vee2}\, s^{m/2+|a|-3}
\end{equation}
uniformly in $\tau,\varepsilon\in[0,1]$ and $t,s\in(0,1]$. Hence, using Lemma~\ref{lem:bounds_relevant_irrelevant}~(A) we obtain
\begin{multline}\label{eq:A_B_irrelevant_intro}
 \big\|\textstyle\int_0^t \fE\fA^{m,a,\sigma}_{\tau,\varepsilon;u,s}\fB_{\varepsilon;s}(V_{\tau,\varepsilon;s})\, \rd s\big\|_{\sM^m} \lesssim \int_0^t \lambda_s^{(m/2-1)\vee2}\,s^{m/2+|a|-3} \, \rd s 
 \\
 \lesssim \lambda_t^{(m/2-1)\vee2}\, t^{m/2+|a|-2}
\end{multline}
uniform in $\tau,\varepsilon\in[0,1]$ and $t,u\in(0,1]$ unless $m=4$ and $a=0$, or $m=2$ and $|a|\leq1$. By Lemma~\ref{lem:bounds_relevant_irrelevant}~(D) applied with $\rho=1$ we have
\begin{equation}\label{eq:A_B_marginal_intro}
 \big\|\textstyle\int_0^t \fE\fA^{m,a,\sigma}_{\tau,\varepsilon;u,s}\fB_{\varepsilon;s}(V_{\tau,\varepsilon;s})\, \rd s\big\|_{\sM^m} \lesssim \int_0^t \lambda_s^{2}\,s^{-1} \, \rd s \lesssim \lambda_t
\end{equation}
uniformly in $\tau,\varepsilon\in[0,1]$ and $t,u\in(0,1]$ for $m=4$ and $a=0$, or $m=2$ and $|a|=1$. If $m=2$ and $a=0$, then for $\varepsilon=0$ the function $s\mapsto\fE\fA^{m,a,\sigma}_{\tau,\varepsilon;u,s}\fB_{\varepsilon;s}(V_{s})$ is not integrable at $s=0$. However, by Lemma~\ref{lem:bounds_relevant_irrelevant}~(C) applied with $\rho=2$, $\varrho=-1$ and $\lambda\in(0,1]$ small enough we have
\begin{equation}\label{eq:A_B_relevant_intro}
 \big\|\textstyle\int_t^1 \fE\fA^{m,a,\sigma}_{\tau,\varepsilon;u,s}\fB_{\varepsilon;s}(V_{\tau,\varepsilon;s})\, \rd s\big\|_{\sM^m} \lesssim \int_t^1 \lambda_s^{2}\,s^{-2} \, \rd s \lesssim \lambda_t^2\, t^{-1}
\end{equation}
uniformly in $\tau,\varepsilon\in[0,1]$ and $t,u\in(0,1]$ for $m=2$ and $a=0$. The bound~\eqref{eq:A_B_marginal_intro} suggests that Eq.~\eqref{eq:flow_2_intro_z} for $z_{\tau,\varepsilon;t}$ is well-posed for $\varepsilon=0$. On the other hand, Eq.~\eqref{eq:flow_2_intro_r} for $r_{\tau,\varepsilon;t}$ is not expected to be well-posed for $\varepsilon=0$. To address the problem with Eq.~\eqref{eq:flow_2_intro_r} we impose the following renormalization condition
\begin{equation}\label{eq:ren_condition_r_intro}
 r_{\tau,\varepsilon;t=1}=0
\end{equation}
that fixes implicitly the mass counterterm $r_{\tau,\varepsilon}=r_{\tau,\varepsilon;t=0}$ and rewrite this equation so that it involves the integral appearing on the LHS of the bound~\eqref{eq:A_B_relevant_intro}. Using Eq.~\eqref{eq:flow_2_intro_r} we obtain the following equation
\begin{equation}\label{eq:flow_r_intro_ren}
 r_{\tau,\varepsilon;t}
 =
 -\fL\fE\fA^2_{\tau,\varepsilon;1,t} U(1/g_{\tau,\varepsilon;t},0,0)
 -
 \int_t^1
 \fL\fE\fA^2_{\tau,\varepsilon;1,s}\fB_{\varepsilon;s}(V_{\tau,\varepsilon;s})\,\rd s.
\end{equation}
Let us remark that since we do not impose any non-trivial renormalization condition for the parameter $z_{\tau,\varepsilon;t}$ introducing this parameter in the ansatz~\eqref{eq:ansatz_intro} is not necessary and is done mainly because it is convenient when proving estimates.

The claim is that the system of equations \eqref{eq:flow_g_intro_ren}, \eqref{eq:flow_r_intro_ren},
\eqref{eq:flow_2_intro_z}, \eqref{eq:flow_2_intro_W1}, 
\eqref{eq:flow_2_intro_W2} for the tuple $(g_{\tau,\varepsilon;\Cdot},r_{\tau,\varepsilon;\Cdot},z_{\tau,\varepsilon;\Cdot},W_{\tau,\varepsilon;\Cdot})$ remains well-posed in the limit $\tau,\varepsilon\searrow0$. Since we would like to construct a solution of this system of equations using the contraction principle we have to first find an appropriate complete metric space of tuples $(g_{\tau,\varepsilon;\Cdot},r_{\tau,\varepsilon;\Cdot},z_{\tau,\varepsilon;\Cdot},W_{\tau,\varepsilon;\Cdot})$. We will define this metric space as a closed subset of a certain Banach space. The norm in the Banach space has to control the dependence of $g_{\tau,\varepsilon;t},r_{\tau,\varepsilon;t},z_{\tau,\varepsilon;t}\in\bR$ on $t\in(0,1]$ as well as the dependence of $\|W^{m,a,\sigma}_{\tau,\varepsilon;t}\|_{\sM^m}$ on $t\in(0,1]$ and $m\in\bN_+$, $a\in\bA^m$, $\sigma\in\bFF^m$. To this end, for $\alpha,\beta\in[1,\infty)$ and $\gamma\in[0,\infty)$ we introduce the Banach space of functionals $s\mapsto W_s=(W^{m,a,\sigma}_s)_{m\in\bN_+,a\in\bA^m,\sigma\in\bFF^m}$ depending on the scale parameter $s\in(0,1]$ equipped with the following norm
\begin{equation}
 \|W_\Cdot\|_{\sV^{\alpha,\beta;\gamma}}: = \sup_{m\in\bN_+} \alpha^m m^\beta\, \|W^m_\Cdot\|_{\sV^{m;\gamma}},
\end{equation}
where
\begin{equation}\label{eq:sV_m_intro}
 \|W^m_\Cdot\|_{\sV^{m;\gamma}}:=\sum_{a\in\bA^m}\sum_{\sigma\in\bFF^m}\sup_{s\in(0,1]} \lambda_s^{-\rho_{\gamma,\kappa}(m)} \, s^{2-m/2-|a|}\, \|w^m_s W^{m,a,\sigma}_s\|_{\sM^m}.
\end{equation}
The weight $w_s^m$ depending only on the relative coordinates and growing stretched exponentially is needed to establish stretched exponential decay of truncated correlations as well as to prove the desired estimates for the map $\fR$ introduced above. The bound~\eqref{eq:A_B_irrelevant_intro} suggest that the kernels of a solution $W_{\tau,\varepsilon;\Cdot}$ of Eqs.~\eqref{eq:flow_2_intro_W1} and~\eqref{eq:flow_2_intro_W2} should have finite norms $\|W_{\tau,\varepsilon;\Cdot}^m\|_{\sV^{m;\gamma}}$ defined in terms of $\rho_{\gamma,\kappa}(m)=(m/2-1)\vee2$. In order to give oneself some wiggle room when proving estimates we choose instead $\rho_{\gamma,\kappa}(m):=\gamma+2\kappa m$ with $\gamma=2-80\kappa$, where $\kappa=1/1000$ was fixed in Def.~\ref{dfn:kappa}. It turns out the kernels of the functional $V_{\tau,\varepsilon;\Cdot}$ related to $(g_{\tau,\varepsilon;\Cdot},r_{\tau,\varepsilon;\Cdot},z_{\tau,\varepsilon;\Cdot},W_{\tau,\varepsilon;\Cdot})$ by Eq.~\eqref{eq:ansatz_intro} have finite norms $\|t\mapsto \theta(t-\varepsilon)\,V_{\tau,\varepsilon;t}^m\|_{\sV^{m;\gamma}}$ with $\gamma=1-40\kappa$. The presence of the function $\theta(t-\varepsilon)$ in the above expression is needed because in the model with the UV cutoff the flow of the effective coupling constant $1/g_{\tau,\varepsilon;t}$ halts at $t=\varepsilon$ and consequently $1/g_{\tau,\varepsilon;t}$ does not vanish in the limit $t\searrow0$ if $\varepsilon\in(0,1]$. Recall that $\dot G_{\varepsilon;t}=0$ and $\fB_{\varepsilon;t}(\Cdot)=0$ if $t\in(0,\varepsilon]$. As a result, the RHS of Eqs. \eqref{eq:flow_g_intro_ren}, \eqref{eq:flow_r_intro_ren},
\eqref{eq:flow_2_intro_z}, \eqref{eq:flow_2_intro_W1}, 
\eqref{eq:flow_2_intro_W2} depend only on $t\mapsto \theta(t-\varepsilon)\,V_{\tau,\varepsilon;t}$. The usefulness of the norm $\|\Cdot\|_{\sV^{\alpha,\beta;\gamma}}$ comes from the following estimates
\begin{equation}\label{eq:bound_B0_intro}
 \big\|s\mapsto s\,\fB_{\varepsilon;s}(V_s)\big\|_{\sV^{\alpha,\beta-1;2\gamma}}\leq C\,\|s\mapsto V_s\|_{\sV^{\alpha,\beta;\gamma}}^2
\end{equation}
and
\begin{equation}\label{eq:bound_I_intro}
 \big\|t\mapsto \textstyle\int_0^t \Pi_{>4}V_s/s\,\rd s\big\|_{\sV^{\alpha,\beta;\gamma}}\leq C\,\|s\mapsto V_s\|_{\sV^{\alpha,\beta-1;\gamma}}
\end{equation}
valid for $\alpha\in[1,\infty)$, $\beta\in(2,\infty)$, $\gamma\in[0,\infty)$, where $C\in(0,\infty)$ is a universal constant and given a functional $V=(V^{m,a,\sigma})_{m\in\bN_+,a\in\bA^m,\sigma\in\bFF^m}$ we define the functional $\Pi_{>4}V$ by the equalities $(\Pi_{>4}V)^{m,a,\sigma}=0$ if $m\leq4$ and $(\Pi_{>4}V)^{m,a,\sigma}=V^{m,a,\sigma}$ if $m>4$. The presence of the operator $\Pi_{>4}$ in the second of the above estimates is related to the fact that the  relevant kernels of $s\mapsto \fB_{\varepsilon;s}(V_s)$ are not absolutely integrable at $s=0$ and need a special treatment. Observe also that since the scale dependent functional $s\mapsto V_s-\Pi_{>4}V_s$ has only finitely many non-zero kernels the parameters $\alpha$ and $\beta$ do not play any role when estimating its $\|\Cdot\|_{\sV^{\alpha,\beta;\gamma}}$ norm. The estimates~\eqref{eq:bound_B0_intro} and~\eqref{eq:bound_I_intro} imply in particular that
\begin{equation}\label{eq:bound_B_intro}
 \big\|t\mapsto \textstyle\int_0^t \Pi_{>4}\fB_{\varepsilon;s}(V_s)\,\rd s\big\|_{\sV^{\alpha,\beta;2\gamma}}\leq C\,\|s\mapsto V_s\|_{\sV^{\alpha,\beta;\gamma}}^2.
\end{equation}
The remarkable property of the above bound is the fact that the norms on both sides of this bound have the same parameter $\alpha$ and $\beta$. Recall that the RHS of Eq.~\eqref{eq:flow_2_intro_W1} involves $\fE\fA_{\tau,\varepsilon;t,s}\fB_{\varepsilon;s}(V_{\tau,\varepsilon;s})$. In particular, to be able to solve the above system in the space $\sV^{\alpha,\beta;\gamma}$ of functionals $W_{\tau,\varepsilon;\Cdot}$ the validity of an estimate of the form
\begin{equation}\label{eq:bound_A_intro}
 \|t\mapsto \,\textstyle\int_0^t \Pi_{>4}\fE\fA_{\tau,\varepsilon;t,s}V_s/s\,\rd s\|_{\sV^{\alpha,\beta;\gamma}} \leq C\,\|s\mapsto V_s\|_{\sV^{\alpha,\beta-1;\gamma}}
 \qquad\mathrm{(false)}
\end{equation}
with a universal constant $C\in(0,\infty)$ appears to be crucial. Indeed, the bounds~\eqref{eq:bound_B0_intro} and~\eqref{eq:bound_A_intro} imply the bound
\begin{equation}\label{eq:bound_A_B_intro}
 \big\|t\mapsto \textstyle\int_0^t \Pi_{>4}\fE\fA_{\tau,\varepsilon;t,s}\fB_{\varepsilon;s}(V_s)\,\rd s\big\|_{\sV^{\alpha,\beta;2\gamma}}\leq C\,\|s\mapsto V_s\|_{\sV^{\alpha,\beta;\gamma}}^2\qquad\mathrm{(false)}
\end{equation}
with a universal constant $C\in(0,\infty)$, which would allow to control the $\|\Cdot\|_{\sV^{\alpha,\beta;\gamma}}$ norm of the RHS of Eq.~\eqref{eq:flow_2_intro_W1}. Recall that the definition of the map $\fA_{\tau,\varepsilon;t,s}$ involves the increment $\varPsi_{\tau,\varepsilon;t,s}$ of the scale decomposition of the free field. Using the bound
\begin{equation}
 |\fE(\varPsi_{\tau,\varepsilon;t,s}^{\sigma_1}(x_1)\ldots\varPsi_{\tau,\varepsilon;t,s}^{\sigma_k}(x_k))|
 \leq
 \sup_{\sigma\in\bFF}\|\varPsi_{\tau,\varepsilon;t,s}^\sigma\|^k_\sC
\end{equation}
where $\|\phi\|_{\sC}:=\sup_{x\in\bR^2}\|\phi(x)\|_{\sB(\mathscr{H})}$, as well as a similar bound for $\partial^a \varPsi_{\tau,\varepsilon;t,s}^\sigma$ one shows, with some effort, that the bound~\eqref{eq:bound_A_intro} would be true for sufficiently small $\lambda\in(0,1]$ if the following bound
\begin{equation}\label{eq:intro_psi_estimate_wrong}
 \|\partial^a\varPsi_{\tau,\varepsilon;t,s}^\sigma\|_\sC \leq c\,(s^{-1/2-|a|}-t^{-1/2-|a|}),
 \qquad 0<s\leq t\leq 1,\qquad\mathrm{(false)}
\end{equation}
with a universal constant $c\in(0,\infty)$ was true. The above estimate would imply in particular that the scale decomposition of the free field is locally Lipschitz continuous. However, the scale decomposition of the free field should have similar regularity in the scale parameter, which plays the role of time, to the Grassmann Brownian motion, which is only H{\"o}lder continuous with exponent $1/2$. In fact, the following bound
\begin{equation}\label{eq:intro_psi_estimate}
 \|\partial^a\varPsi_{\tau,\varepsilon;t,s}^\sigma\|_\sC \leq c\,(s^{-1-2|a|}-t^{-1-2|a|})^{1/2},
 \qquad 0<s\leq t\leq 1,
\end{equation}
holds true with a universal constant $c\in(0,\infty)$. This indicates that the bound~\eqref{eq:bound_A_intro} is false. The above argument is obviously not conclusive. However, closer analysis reveals that it is unlikely that the estimate~\eqref{eq:bound_A_B_intro} is true. We refer the reader to~\cite{SW00} for related comments. As an aside, we mention that as observed in~\cite{SW00} the proof of estimates for fermionic correlations given in~\cite{BW88,BW99} is not correct precisely because the scale decomposition of the free field is not Lipschitz continuous. This suggest that, in a sense, the difficulty we face is related to the fact that the scale decomposition of the free field is only $1/2$ H{\"o}lder continuous. Note also that in the discrete renormalization group one studies scales $t,s\in(0,1]$ such that $s=t/L$ with $L\in(1,\infty)$ fixed. For scales $t,s\in(0,1]$ satisfying the above relation the estimates~\eqref{eq:intro_psi_estimate_wrong} and~\eqref{eq:intro_psi_estimate} with some constants $c\in(0,\infty)$ are equivalent.

One of the novel ideas of this work is the use of the following norm 
\begin{equation}\label{eq:sW_intro}
 \|W_\Cdot\|_{\sW^{\alpha,\beta;\gamma}_{\tau,\varepsilon}}:=\sup_{u\in[0,1]}\|s\mapsto\fA_{\tau,\varepsilon;u\vee s,s} W_s\|_{\sV^{\alpha,\beta;\gamma}}
\end{equation}
in the space of scale dependent functionals $s\mapsto W_s=(W^{m,a,\sigma}_s)_{m\in\bN_+,a\in\bA^m,\sigma\in\bFF^m}$. Since $\fA_{\tau,\varepsilon;s,s} V_s=V_s$ the norm $\|\Cdot\|_{\sW^{\alpha,\beta;\gamma}_{\tau,\varepsilon}}$ is stronger than $\|\Cdot\|_{\sV^{\alpha,\beta;\gamma}}$. We stress that the norm $\|\Cdot\|_{\sW^{\alpha,\beta;\gamma}_{\tau,\varepsilon}}$ depends on the cutoffs $\tau,\varepsilon\in(0,1]$. We also note that the functional $\fA_{\tau,\varepsilon;u\vee s,s} V_s$, which appears in the definition of the norm $\|\Cdot\|_{\sW^{\alpha,\beta;\gamma}_{\tau,\varepsilon}}$, takes values in $\sF_{u\vee s,s}\subset\sF\subset\sB(\mathscr{H})$. For this reason we are forced to work with kernels that are measures valued in a Banach space $\sB(\mathscr{H})$. The main advantage of the norm $\|\Cdot\|_{\sW^{\alpha,\beta;\gamma}_{\tau,\varepsilon}}$ over $\|\Cdot\|_{\sV^{\alpha,\beta;\gamma}}$ is the validity of the following estimate
\begin{equation}\label{eq:bound_A_B_2_intro}
 \big\|t\mapsto{\textstyle\int_0^t} \Pi_{>4}\fE\fA_{\tau,\varepsilon;t,s}\fB_{\varepsilon;s}(V_s)\,\rd s\big\|_{\sW^{\alpha,\beta;2\gamma}_{\tau,\varepsilon}}
 \leq
 C\,\|s\mapsto V_s\|_{\sW^{\alpha,\beta;\gamma}_{\tau,\varepsilon}}^2,
\end{equation}
where $C\in(0,\infty)$ is some universal constant. As we argued above an analogous estimate is almost certainly false if the norm $\|\Cdot\|_{\sW^{\alpha,\beta;\gamma}_{\tau,\varepsilon}}$ is replaced by the more standard norm $\|\Cdot\|_{\sV^{\alpha,\beta;\gamma}}$. Let us stress that the lack of a norm in the space of scale dependent functionals for which the estimate of the form~\eqref{eq:bound_A_B_2_intro} holds true was the main obstacle in the construction of Grassmann measures of fermionic quantum field theories with the use of the Polchinski flow equation. Since the estimate~\eqref{eq:bound_A_B_2_intro} is of crucial importance let us sketch the main ideas behind its proof. First, observe that in order to prove the estimate~\eqref{eq:bound_A_B_2_intro} we have to control the $\|\Cdot\|_{\sV^{\alpha,\beta;2\gamma}_{\tau,\varepsilon}}$ norm of the functional $\fA_{\tau,\varepsilon;u,t} \fE\fA_{\tau,\varepsilon;t,s}\fB_{\varepsilon;s}(V_s)$ in terms of the $\|\Cdot\|_{\sV^{\alpha,\beta;\gamma}_{\tau,\varepsilon}}$ norm of the functional $\fA^m_{\tau,\varepsilon;u,s} V_s$. Next, note that for $0<s\leq t\leq u\leq 1$ it holds
\begin{equation}
\begin{gathered}
 \fA_{\tau,\varepsilon;u,s}V=\fA_{\tau,\varepsilon;u,t}\fA_{\tau,\varepsilon;t,s}V,
 \qquad
 \fA_{\tau,\varepsilon;u,s} \fB_{\varepsilon;s}(V) = \fB_{\varepsilon;s}(\fA_{\tau,\varepsilon;u,s}V),
 \\
 \fA_{\tau,\varepsilon;u,t}\fE\fA_{\tau,\varepsilon;t,s}V
 =
 \fA_{\tau,\varepsilon;u,t}\fE_t\fA_{\tau,\varepsilon;t,s}V
 = 
 \fE_t\fA_{\tau,\varepsilon;u,t}\fA_{\tau,\varepsilon;t,s}V
 =
 \fE_t\fA_{\tau,\varepsilon;u,s}V.
\end{gathered} 
\end{equation}
The identities in the first line above follow easily from the definition~\eqref{eq:A_B_def_intro} of the maps $\fA_{\tau,\varepsilon;t,s}$ and $\fB_{\varepsilon;s}$. The first equality in the second line is a consequence of the fact that the kernels of the functional $\fA_{\tau,\varepsilon;t,s}V$ take values in $\sF_{t,s}$ and on $\sF_{t,s}\subset\sF_{t,0}$ the expected value $\fE$ coincides with the conditional expectation $\fE_t$. To prove the second equality in the second line we use the fact that the conditional expectation $\fE_t$ act trivially on $\sF_{u,t}\subset\sF_{1,t}$. As a result, the following identity
\begin{equation}\label{eq:A_B_identity}
 \fA_{\tau,\varepsilon;u,t}\fE\fA_{\tau,\varepsilon;t,s}\fB_{\varepsilon;s}(V) 
 = 
 \fE_t\fB_{\varepsilon;s}(\fA_{\tau,\varepsilon;u,s}V)
\end{equation}
holds true. We also note that $\|\fE_t V\|_{\sM^m}\leq \|V\|_{\sM^m}$ and
\begin{equation}\label{eq:sV_E_intro}
 \|u\mapsto\fE_t V_u\|_{\sV^{\alpha,\beta;\gamma}}
 \leq
 \|u\mapsto V_u\|_{\sV^{\alpha,\beta;\gamma}}.
\end{equation}
Consequently, ignoring the presence of the operator $\Pi_{>4}$ in the estimate~\eqref{eq:bound_B_intro} we obtain
\begin{multline}
 \big\|t\mapsto{\textstyle\int_0^t} \fE\fA_{\tau,\varepsilon;t,s}\fB_{\varepsilon;s}(V_s)\,\rd s\big\|_{\sW^{\alpha,\beta;2\gamma}_{\tau,\varepsilon}}
 =
 \sup_{u\in(0,1]}\big\|t\mapsto\textstyle\int_0^t \fA_{\tau,\varepsilon;u,t}\fE\fA_{\tau,\varepsilon;t,s}\fB_{\varepsilon;s}(V_s)\,\rd s\big\|_{\sV^{\alpha,\beta;2\gamma}}
 \\
 =
 \sup_{u\in(0,1]}
 \big\|t\mapsto{\textstyle\int_0^t} \fE_t\fB_{\varepsilon;s}(\fA_{\tau,\varepsilon;u,s} V_s)\,\rd s\big\|_{\sV^{\alpha,\beta;2\gamma}}
 \leq
 \sup_{u\in(0,1]}
 \big\|t\mapsto\textstyle\int_0^t \fB_{\varepsilon;s}(\fA_{\tau,\varepsilon;u,s} V_s)\,\rd s\big\|_{\sV^{\alpha,\beta;2\gamma}}
 \\
 \leq C\,
 \sup_{u\in(0,1]}\|s\mapsto \fA_{\tau,\varepsilon;u,s}V_s\|_{\sV^{\alpha,\beta;\gamma}}^2
 =
 C\,\|s\mapsto V_s\|_{\sW^{\alpha,\beta;\gamma}_{\tau,\varepsilon}}^2.
\end{multline}
The first and the last equality above follow from the definition of the norm $\|\Cdot\|_{\sW^{\alpha,\beta;\gamma}_{\tau,\varepsilon}}$. The second equality above follows from the identity~\eqref{eq:A_B_identity}. The first bound above follows from the estimate~\eqref{eq:sV_E_intro}. Finally, to prove the second bound above we used the estimate~\eqref{eq:bound_B_intro} with the operator $\Pi_{>4}$ omitted. Since the presence of the operator $\Pi_{>4}$ in the estimate~\eqref{eq:bound_B_intro} is crucial the above reasoning is of course not correct. However, a more complicated argument not ignoring the presence of the operator $\Pi_{>4}$ in the estimate~\eqref{eq:bound_B_intro} allows to establish rigorously the desired bound~\eqref{eq:bound_A_B_2_intro}.

We are ready to define the complete metric space in which we will solve the system of equations~\eqref{eq:flow_g_intro_ren}, \eqref{eq:flow_r_intro_ren},
\eqref{eq:flow_2_intro_z}, \eqref{eq:flow_2_intro_W1}, 
\eqref{eq:flow_2_intro_W2}. Let us first introduce the Banach space of tuples
\begin{equation}
 X_\Cdot\equiv (g_\Cdot,r_\Cdot,z_\Cdot,W_\Cdot)\in C((0,1],\bC)\times C((0,1],\bC)\times C((0,1],\bC)\times \sW^{8,4;2-80\kappa}_{\tau,\varepsilon}=: \sX_{\tau,\varepsilon}
\end{equation}
equipped with the norm
\begin{equation}
 \|X_\Cdot\|
 := 
 \sup_{t\in(0,1]}\lambda_t^{\kappa_1+1}\, |g_t|
 +
 \sup_{t\in(0,1]}\lambda_t^{\kappa_2-1}\, t\,|r_t|
 +
 \sup_{t\in(0,1]}\lambda_t^{\kappa_3-1}\,|z_t|
 +
 \|W_\Cdot\|_{\sW^{8,4;2-80\kappa}_{\tau,\varepsilon}},
\end{equation} 
where $\kappa_1,\kappa_2,\kappa_3\in(0,1)$ are certain small parameters. Next, for $\tau,\varepsilon\in[0,1]$ we define the complete metric space
\begin{equation}
 \sY_{\tau,\varepsilon}:=\{X_\Cdot\in \sX_{\tau,\varepsilon}\,|\,\|X_\Cdot\|_{\sX_{\tau,\varepsilon}}\leq 1,~\forall_{t\in(0,1]}\,\Im\, g_t=\Im\, r_t=\Im\, z_t=0,\,\lambda_{\varepsilon\vee t}\, g_t\geq\lambda^\kappa\}
\end{equation}
and a map
\begin{equation}
 \sY_{\tau,\varepsilon}\ni X_\Cdot\mapsto \fX_{\tau,\varepsilon;t}(X_\Cdot):=(\mathbf{g}_{\tau,\varepsilon;t}(X_\Cdot),\mathbf{r}_{\tau,\varepsilon;t}(X_\Cdot),\mathbf{z}_{\tau,\varepsilon;t}(X_\Cdot),\fW_{\tau,\varepsilon;t}(X_\Cdot)),
 \qquad t\in(0,1],
\end{equation}
such that a fixed point of $\fX_{\tau,\varepsilon;\Cdot}$ coincides with a solution of the system of equations~\eqref{eq:flow_g_intro_ren}, \eqref{eq:flow_r_intro_ren},
\eqref{eq:flow_2_intro_z}, \eqref{eq:flow_2_intro_W1}, 
\eqref{eq:flow_2_intro_W2}. In particular, for all $\tau,\varepsilon\in[0,1]$ it holds $\mathbf{g}_{\tau,\varepsilon;1}(X_\Cdot)=1/\lambda$ and $\mathbf{r}_{\tau,\varepsilon;1}(X_\Cdot)=0$. We prove that there exists $\lambda_\star\in(0,1]$ such that for all $\lambda\in(0,\lambda_\star]$ and $\tau,\varepsilon\in[0,1]$ the map $\fX_{\tau,\varepsilon;\Cdot}\,:\,\sY_{\tau,\varepsilon}\to\sY_{\tau,\varepsilon}$ is well-defined and is a contraction. We denote by $X_{\tau,\varepsilon;\Cdot}$ the fixed point of $\fX_{\tau,\varepsilon;\Cdot}$. We omit $\tau$ and $\varepsilon$ if $\tau=0$ and $\varepsilon=0$. In order to establish convergence of the Schwinger functions as $\tau,\varepsilon\searrow0$ we have to control $X_{\Cdot}-X_{\tau,\varepsilon;\Cdot}$. To this end, we introduce a certain norm $\|\Cdot\|_{\tilde\sX_{\tau,\varepsilon}}$ in $\sX\cup\sX_{\tau,\varepsilon}$ weaker than $\|\Cdot\|_{\sX}\vee\|\Cdot\|_{\sX_{\tau,\varepsilon}}$ and prove that
\begin{equation}\label{eq:limit_X_intro}
 \lim_{\tau,\varepsilon\searrow0}\|X_{\Cdot}-X_{\tau,\varepsilon;\Cdot}\|_{\tilde\sX_{\tau,\varepsilon}}=0.
\end{equation}
The norm $\|(g_{\Cdot},r_{\Cdot},z_{\Cdot},W_{\Cdot})\|_{\tilde\sX_{\tau,\varepsilon}}$ is defined in terms of a norm $\|W_{\Cdot}\|_{\tilde\sW^{2,3;2-80\kappa}_{\tau,\varepsilon}}$ that has a form similar to the norm $\|W_{\Cdot}\|_{\sW^{2,3;2-80\kappa}_{\tau,\varepsilon}}$ but involves a different weight. The use of a different weight having some decay at infinity is crucial for the existence of the infinite volume limit $\tau\searrow0$. Let us also mention that the norm $\|\Cdot\|_{\tilde\sW^{2,3;2-80\kappa}_{\tau,\varepsilon}}$ is stronger than a certain norm $\|\Cdot\|_{\tilde\sV^{2,3;2-80\kappa}}$ independent of $\tau,\varepsilon\in(0,1]$ whose definition is very similar to the definition of the norm $\|\Cdot\|_{\sV^{2,3;2-80\kappa}}$ but involves a different weight.

Finally, let us discuss the convergence of the Schwinger functions as $\tau,\varepsilon\searrow0$. To this end, first recall that the fixed point $X_{\tau,\varepsilon;\Cdot}=(g_{\tau,\varepsilon;\Cdot},r_{\tau,\varepsilon;\Cdot},z_{\tau,\varepsilon;\Cdot},W_{\tau,\varepsilon;\Cdot})$ of the map $\fX_{\tau,\varepsilon;\Cdot}$ is related to a solution $V_{\tau,\varepsilon;\Cdot}= U(1/g_{\tau,\varepsilon;\Cdot},r_{\tau,\varepsilon;\Cdot},z_{\tau,\varepsilon;\Cdot})+W_{\tau,\varepsilon;\Cdot}$ of the flow equation~\eqref{eq:flow_2_intro} and $V_{\tau,\varepsilon;\Cdot}=(V^{m,a,\sigma}_{\tau,\varepsilon;\Cdot})_{m\in\bN_+,a\in\bA^m,\sigma\in\bFF^m}$ is related by Eqs.~\eqref{eq:functional_U_intro},~\eqref{eq:U_V_intro} and~\eqref{eq:functionals_derivatives} to a solution $U_{\tau,\varepsilon;\Cdot}$ of the flow equation~\eqref{eq:flow_mild_intro}. We conclude that the functional $U_{\tau,\varepsilon;t}$ is an effective potential at the scale $t\in(0,1]$, that is $U_{\tau,\varepsilon;t}$ satisfies Eq.~\eqref{eq:effective_potential_intro}, and $U_{\tau,\varepsilon;t=1}$ is related by Eq.~\eqref{eq:generating_effective_potential_intro} to the generating functional of the Schwinger functions. The convergence of the Schwinger functions as $\tau,\varepsilon\searrow0$ follows now from the existence of the limit~\eqref{eq:limit_X_intro}.

\begin{lem}\label{lem:bounds_relevant_irrelevant}
For all $\rho,\varrho\in\bR$ and $\eta\in(0,1)$ there exists $C\in(0,\infty)$ such that the following statements are true for all $\lambda\in(0,1]$ and $t\in(0,1]$.
\begin{itemize}
 \item[(A)] If $\rho\geq 0$ and $\varrho> 0$, then $\int_0^t \lambda_s^\rho\, s^{\varrho-1}\,\rd s 
 \leq
 \lambda_t^\rho\, t^\varrho/\varrho$.
 \item[(B)] If $\rho\geq 0$ and $\varrho> 0$, then $\int_0^t \lambda_s^\rho\, s^{\varrho-1}\,(1-s/t)^{-\eta}\,\rd s
 \leq
 C\,\lambda_t^\rho\,t^\varrho$.
 \item[(C)] If $\rho\geq 0$ and $\varrho<-\beta_2\,\rho\,\lambda$, then $\int_t^1 \lambda_s^\rho\, s^{\varrho-1}\,\rd s 
 \leq C\,\lambda_t^\rho\, t^\varrho$.
 \item[(D)] If $\rho> 0$, then $\int_0^t \lambda_s^{\rho+1}\, s^{-1}\,\rd s
 \leq
 C\,\lambda_t^\rho$.
 \item[(E)] If $\rho<0$, then $\int_t^1 \lambda_s^{\rho+1}\, s^{-1}\,\rd s 
 \leq C\,\lambda_t^\rho$.
\end{itemize}
\end{lem}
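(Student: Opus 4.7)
The key computation underlying every item is the ODE satisfied by the running coupling: from $\lambda_s^{-1} = \lambda^{-1} - \beta_2 \log s$ one differentiates to obtain
\begin{equation}
 \partial_s \lambda_s = \beta_2\, \lambda_s^2/s,
 \qquad
 \partial_s(\lambda_s^\rho) = \rho\,\beta_2\, \lambda_s^{\rho+1}/s,
 \qquad
 \partial_s(\lambda_s^\rho s^\varrho)= \lambda_s^\rho\, s^{\varrho-1}\,(\varrho + \rho\,\beta_2\,\lambda_s).
\end{equation}
Moreover, since $\log s \leq 0$ on $(0,1]$, the map $s\mapsto\lambda_s$ is nonnegative, increasing, and bounded above by $\lambda\leq1$. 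These three facts will do all the work; the main (mild) subtlety is to ensure the constant $C$ in items (B) and (C) does not depend on $\lambda\in(0,1]$.

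For items (A) and (B) the plan is to use monotonicity: on $(0,t]$ one has $\lambda_s\leq\lambda_t$, so $\lambda_s^\rho\leq\lambda_t^\rho$ (using $\rho\geq 0$) can be pulled out of the integral. Item (A) then reduces to $\int_0^t s^{\varrho-1}\,\rd s = t^\varrho/\varrho$. Item (B) reduces via the substitution $s=tu$ to $t^\varrho\int_0^1 u^{\varrho-1}(1-u)^{-\eta}\,\rd u$, which is the Beta function $B(\varrho,1-\eta)$, finite since $\varrho>0$ and $\eta<1$.

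Items (D) and (E) are immediate from the exact antiderivative relation $\lambda_s^{\rho+1}/s = (\rho\,\beta_2)^{-1}\,\partial_s\lambda_s^\rho$. Indeed, for item (D) (where $\rho>0$) one integrates from $0$ to $t$, using $\lim_{s\searrow 0}\lambda_s^\rho = 0$, to get $\int_0^t \lambda_s^{\rho+1}\,s^{-1}\,\rd s = \lambda_t^\rho/(\rho\,\beta_2)$. For item (E) (where $\rho<0$) one integrates from $t$ to $1$ and uses $\lambda_1 \geq \lambda_t$ together with $\rho<0$ to conclude $\int_t^1 \lambda_s^{\rho+1}\,s^{-1}\,\rd s = (\lambda_1^\rho - \lambda_t^\rho)/(\rho\,\beta_2) \leq \lambda_t^\rho/(|\rho|\,\beta_2)$.

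Item (C) is the only one requiring genuine thought; it is where the hypothesis $\varrho<-\beta_2\,\rho\,\lambda$ enters. The strategy is to use the exact derivative identity for $\lambda_s^\rho s^\varrho$ above. Since $\lambda_s\leq\lambda$, the hypothesis gives $\varrho + \rho\,\beta_2\,\lambda_s \leq \varrho + \rho\,\beta_2\,\lambda < 0$, so $\partial_s(\lambda_s^\rho s^\varrho) \leq (\varrho + \rho\,\beta_2\,\lambda)\,\lambda_s^\rho s^{\varrho-1}$. Integrating from $t$ to $1$ and rearranging yields
\begin{equation}
 \int_t^1 \lambda_s^\rho\, s^{\varrho-1}\,\rd s
 \leq
 \frac{\lambda_t^\rho\, t^\varrho - \lambda_1^\rho}{-\varrho - \rho\,\beta_2\,\lambda}
 \leq
 \frac{\lambda_t^\rho\, t^\varrho}{-\varrho - \rho\,\beta_2\,\lambda}.
\end{equation}
To obtain a constant independent of $\lambda\in(0,1]$ one notes that if $\rho>0$ then $-\rho\,\beta_2\,\lambda\geq -\rho\,\beta_2$, hence the denominator is bounded below by $-\varrho - \rho\,\beta_2 > 0$ (the latter strict positivity being exactly the assumption applied at $\lambda=1$); and if $\rho=0$ the hypothesis reduces to $\varrho<0$ and the statement is the elementary bound on $\int_t^1 s^{\varrho-1}\,\rd s$. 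This is the step I expect to be the most delicate, since one must be careful that the condition $\varrho < -\beta_2\,\rho\,\lambda$ is uniform enough in $\lambda\in(0,1]$ to yield a $\lambda$-independent constant; the argument just given shows that it is.
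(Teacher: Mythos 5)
Your items (A), (B), (D) and (E) are correct and follow the paper's proof essentially verbatim: monotonicity of $s\mapsto\lambda_s$ together with $\rho\ge0$ for (A) and (B) (the latter reducing to the convergent Beta integral $\int_0^1u^{\varrho-1}(1-u)^{-\eta}\,\rd u$), and the exact antiderivative $\partial_s\lambda_s^\rho=\rho\,\beta_2\,\lambda_s^{\rho+1}/s$ for (D) and (E). For (C) your core step is also the paper's: from $\partial_s(\lambda_s^\rho s^\varrho)=\lambda_s^\rho s^{\varrho-1}(\varrho+\rho\,\beta_2\,\lambda_s)$ and $\lambda_s\le\lambda$ one gets $\lambda_s^\rho s^{\varrho-1}\le(-\varrho-\beta_2\,\rho\,\lambda)^{-1}\,\partial_s(-\lambda_s^\rho s^\varrho)$ and hence $\int_t^1\lambda_s^\rho s^{\varrho-1}\,\rd s\le\lambda_t^\rho t^\varrho/(-\varrho-\beta_2\,\rho\,\lambda)$, which is exactly where the paper's proof stops.

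The one defective step is your attempt to make the constant in (C) uniform in $\lambda$. You claim $-\varrho-\rho\,\beta_2\,\lambda\ge-\varrho-\rho\,\beta_2>0$, "the latter strict positivity being exactly the assumption applied at $\lambda=1$". But the hypothesis $\varrho<-\beta_2\,\rho\,\lambda$ is assumed only for the particular $\lambda$ at hand, not at $\lambda=1$, and $-\varrho-\rho\,\beta_2$ is in general negative precisely in the regime where the lemma is invoked: for $N=2$ (so $\beta_2=2/\pi$), $\rho=2$, $\varrho=-1$, $\lambda=1/2$ the hypothesis holds, yet $-\varrho-\rho\,\beta_2=1-4/\pi<0$, so your lower bound on the denominator is vacuous. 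Indeed, as $\lambda$ increases towards $-\varrho/(\beta_2\rho)$ the constant $(-\varrho-\beta_2\,\rho\,\lambda)^{-1}$ blows up, so a genuinely $\lambda$-independent constant cannot be extracted this way; it would require a different argument, e.g.\ splitting the integral at the scale $s_0$ where $\beta_2\,\rho\,\lambda_{s_0}=-\varrho/2$ (below $s_0$ the derivative trick gives the constant $2/|\varrho|$, while the piece above $s_0$ lives on a region of logarithmic length of order $\rho/|\varrho|$ and is bounded by elementary means). Note, however, that the paper's own proof of (C) also stops at the $\lambda$-dependent constant, and in every application the lemma is used with $\lambda$ small enough that $-\varrho-\beta_2\,\rho\,\lambda$ is bounded away from zero; so your main estimate is exactly as strong as what the paper establishes, and only the extra uniformity claim is incorrectly argued.
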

\begin{rem}
Recall that $\lambda_t$ and $\beta_2$ were introduced in Def.~\ref{dfn:lambda}.
\end{rem}

\begin{proof}
For $\rho\in[0,\infty)$ and $\varrho\in(0,\infty)$ we have
\begin{equation}
 \int_0^t \lambda_s^\rho\, s^{\varrho-1}\,\rd s 
 \leq
 \lambda_t^\rho\int_0^t s^{\varrho-1}\,\rd s
 =
 \lambda_t^\rho\, t^\varrho/\varrho
\end{equation}
as well as
\begin{equation}
 \int_0^t \lambda_s^\rho\, s^{\varrho-1}\,(1-s/t)^{-\eta}\,\rd s
 \leq
 \lambda_t^\rho \int_0^t s^{\varrho-1}\,(1-s/t)^{-\eta}\,\rd s
 \leq 
 \lambda_t^\rho\,t^\varrho \int_0^1 s^{\varrho-1}\,(1-s)^{-\eta}\,\rd s.
\end{equation}
This proves the bounds~(A) and~(B). To prove the bound~(C) we use the fact that for $\rho\in[0,\infty)$ and $\varrho\in(-\infty,-\beta_2\,\rho\,\lambda)$ the following estimates
\begin{multline}
 \int_t^1 \lambda_s^\rho\, s^{\varrho-1}\,\rd s 
 \leq
 (-\varrho-\beta_2\,\rho\, \lambda)^{-1}\int_t^1  (-\varrho-\beta_2\,\rho\, \lambda_s)\,\lambda_s^\rho\,s^{-\varrho-1} 
 \,\rd s 
 \\
 \leq
 (-\varrho-\beta_2\,\rho\, \lambda)^{-1}\int_t^1  \partial_s (-\lambda_s^\rho\,s^\varrho)
 \,\rd s 
 \leq \lambda_t^\rho\, t^\varrho/(-\varrho-\beta_2\,\rho\, \lambda)
\end{multline}
hold true. The bound~(D) is a consequence of the equalities
\begin{equation}
 \int_0^t \lambda_s^{\rho+1}\, s^{-1}\,\rd s
 =
 \beta_2^{-1}\int_0^t \lambda_s^{\rho-1}\, \rd \lambda_s
 = \lambda_t^\rho/(\beta_2\rho)
\end{equation}
valid for $\rho\in(0,\infty)$ and the bound~(E) is a consequence of the estimates
\begin{equation}
 \int_t^1 \lambda_s^{\rho+1}\, s^{-1}\,\rd s 
 =
 \beta_2^{-1}\int_t^1 \lambda_s^{\rho-1}\, \rd \lambda_s
 \leq \lambda_t^\rho/(-\beta_2\rho)
\end{equation} 
valid for $\rho\in(-\infty,0)$. This finishes the proof.
\end{proof}

The rest of this article is organised as follows. In Sec.~\ref{sec:torus_functionals} we define the spaces of distributions and functionals on a torus and introduce the notation used in the paper. Sec.~\ref{sec:gross-neveu} contains the definition of the Berezin integral and the  Grassmann measure $\mu_{\tau,\varepsilon}$ of the Gross-Neveu model with cutoffs. In Sec.~\ref{sec:covariance_decomposition} we define the scale decomposition $G_{\tau,\varepsilon;t}$ of the propagator $G_{\tau,\varepsilon}$ and discuss its properties. In Sec.~\ref{sec:free_field_decomposition} we introduce the filtered non-commutative probability space $(\sF,\fE)$ of the spacetime white noise $\xi$ and define the conditional expected value $\fE_t$ as well as the scale decomposition of the free field $\varPsi_{\tau,\varepsilon;t}$. In Sec.~\ref{sec:effective} we present the equation for an effective potential and examine its relation with the Polchinski equation. In Sec.~\ref{sec:symmetries} we discuss the symmetries of the Gross-Neveu model. In Sec.~\ref{sec:weights} we define the weights. Sec.~\ref{sec:topology} contains the definitions of the spaces $\sV^{m;\gamma}$, $\sV^{\alpha,\beta;\gamma}$, $\sW^{\alpha,\beta;\gamma}_{\tau,\varepsilon}$ of kernels and functionals in infinite volume. In Sec.~\ref{sec:loc_ren} we introduce the decomposition of kernels into the local part and the remainder and discuss its properties. In Sec.~\ref{sec:maps} we establish bounds for the maps $\fA_{\tau,\varepsilon;t,s}$ and $\fB_{\varepsilon;t}$. In Sec.~\ref{sec:fixed_point} we introduce the space $\sX_{\tau,\varepsilon}$ and the map $\fX_{\tau,\varepsilon}$ and prove that the map $\fX_{\tau,\varepsilon}$ is a contraction. In Sec.~\ref{sec:relation_polchinski} we prove that the fixed point of the map $\fX_{\tau,\varepsilon}$ yields a solution of the Polchinski equation. In Sec.~\ref{sec:convergence} we show that the effective potential at unit scale is directly related to the generating functional of the Schwinger functions and prove Theorem~\ref{thm:main} stated in the introduction. The core of the proof of the main result is contained in Sec.~\ref{sec:fixed_point},~\ref{sec:relation_polchinski} and~\ref{sec:convergence}.

\section{Distributions and functionals on torus}\label{sec:torus_functionals}

In this section we introduce the basic notation we use throughout the paper. In particular, we define the spaces of distributions and functionals on a torus.

\begin{dfn}
For $m\in\bN_+$ we denote by $\sS(\bR^{2m})$ and $\sS'(\bR^{2m})$ the space of Schwartz functions and distributions, respectively. We denote by $\langle V,\phi\rangle\in\bC$ the standard paring between a distribution \mbox{$V\in\sS'(\bR^{2m})$} and a test function $\phi\in\sS(\bR^{2m})$. By $\ast$ we mean the convolution in $\sS'(\bR^{2m})$. We use the following convention for the Fourier transform
\begin{equation}
 (\fF f)(p):= \int_{\bR^{2m}} f(x)\,\re^{-\ri p\cdot x}\,\rd x,
 \qquad
 f(x) = \frac{1}{(2\pi)^{2m}}\int_{\bR^{2m}} (\fF f)(p)\,\re^{\ri p\cdot x}\,\rd p.
\end{equation}
\end{dfn}

\begin{dfn}\label{dfn:distributions_torus}
Let $\tau\in(0,1]$. We define $\bT^2_\tau := (\bR/\tau^{-1}\bZ)^2$. Occasionally, we identify $\bT^2_\tau$ with the set $(-1/(2\tau),1/(2\tau))^2\subset\bR^2$. For $m\in\bN_+$ we denote by $\sS'(\bT_\tau^{2m})$ the subspace of $\sS'(\bR^{2m})$ consisting of distributions that are $1/\tau$ periodic. Given $\varphi\in L^1(\bR^2)$ we define $\fP_\tau \varphi\in L^1(\bT_\tau^2)$ by the equality
\begin{equation}
 \fP_\tau \varphi =  \sum_{n\in\bZ^2} \varphi(\Cdot+n/\tau).
\end{equation}
For $V\in\sS'(\bT_\tau^{2m})$ and $\phi\in C^\infty(\bT_\tau^{2m})$ we write
\begin{equation}\label{eq:bracket_tau}
 \langle V,\phi\rangle_\tau:=
  \langle V,\chi_\tau^{\otimes m}\phi\rangle \in\bC,
\end{equation} 
where $\chi_\tau\in C^\infty_\rc(\bR^2)$ is such that $\fP_\tau\chi_\tau=1$.  By $\ast_\tau$ we mean the convolution in $\sS'(\bT_\tau^{2m})$. We use the following convention for the Fourier transform
\begin{equation}
 (\fF_\tau f)(p):= \int_{\bT^{2m}_\tau} f(x)\,\re^{-\ri p\cdot x}\,\rd x,
 \qquad
 f(x) = \tau^{2m}\sum_{p\in(2\pi\tau\bZ)^{2m}} (\fF_\tau f)(p)\,\re^{\ri p\cdot x}\,\rd p.
\end{equation}
\end{dfn}
\begin{rem}
Using periodicity of $V\in\sS'(\bT_\tau^{2m})$ and $\phi\in C^\infty(\bT_\tau^{2m})$ one shows that the bracket $\langle V,\phi\rangle_\tau$ does not depend on the choice of $\chi_\tau$.
\end{rem}
\begin{rem}\label{rem:convolution_tau}
Note that $G\ast \phi = \fP_\tau G\ast_\tau \phi\in C(\bT_\tau^2)$ for all $\tau\in(0,1]$, $G\in L^1(\bR^2)$, $\phi\in C(\bT_\tau^2)$. 
\end{rem}
\begin{rem}
For $\tau=0$ we identify $\bT_\tau^2$ with $\bR^2$. 
\end{rem}

\begin{dfn}
Let $\bK$ be a finite set and let $\sT$ be a topological space. We denote by $\sT^\bK$ the set of maps $\bK\to\sT$. We identity elements $V$ of $\sT^\bK$ with tuples $(V^k)_{k\in\bK}$ and equip $\sT^\bK$ with the product topology. 
\end{dfn}

\begin{dfn}\label{dfn:dist_K}
Fix $\tau\in(0,1]$ and a finite set $\bK$. For $m\in\bN_+$, \mbox{$U\equiv(U^k)_{k\in\bK^m}\in\sS'(\bT_\tau^{2m})^{\bK^m}$} and $\varphi\equiv(\varphi^k)_{k\in\bK^m}\in C^\infty(\bT_\tau^{2m})^{\bK^m}$ we write $\langle U,\varphi\rangle_\tau:=\sum_{k\in\bK} \langle U^k,\varphi^k\rangle_\tau$. For $m,n\in\bN_+$, \mbox{$\varphi\in C^\infty(\bT_\tau^{2m})^{\bK^m}$} and $\psi\in C^\infty(\bT_\tau^{2n})^{\bK^n}$ we define $\varphi\otimes\psi\in C^\infty(\bT_\tau^{2(m+n)})^{\bK^{m+n}}$ by the equality 
\begin{equation}
 (\varphi\otimes\psi)^{(k,l)}:=\varphi^{k}\otimes\psi^{l},
 \qquad
 k\in\bK^m,~l\in\bK^n.
\end{equation}
For $G\in L^1(\bR^2)^{\bK^2}$ and $\phi\in C^\infty(\bT_\tau^{2})^{\bK}$ we define $G\ast\phi \in C^\infty(\bT_\tau^{2})^{\bK}$ by the equality
\begin{equation}
 (G\ast\phi)^k = \sum_{l\in\bK} G^{k,l}\ast \phi^{l}.
\end{equation}
We define $\ast_\tau$ in an analogous way.
\end{dfn}

\begin{dfn}
Let $\tau\in(0,1]$, $m\in\bN_+$ and $\bK$ be a finite set. We say that $U\in\sS'(\bT_\tau^{2m})^{\bK^m}$ is antisymmetric iff 
\begin{equation}
 \langle U,\varphi_1\otimes\ldots\otimes\varphi_m\rangle _\tau
 = (-1)^{\mathrm{sgn}(\pi)}
 \langle U,\varphi_{\pi(1)}\otimes\ldots\otimes\varphi_{\pi(m)}\rangle_\tau
\end{equation} 
for all $\varphi_1,\ldots,\varphi_m\in C^\infty(\bT_\tau^2)^{\bK}$ and all permutations $\pi\in\mathcal{P}_m$.
\end{dfn}

\begin{dfn}[Grassmann algebra]\label{dfn:grassmann_algebra}
Let $E$ be a vector space over $\bC$. By definition the Grassmann algebra $\sG(E)$ is the exterior algebra of $E$. We define a unique grading \mbox{$\sG=\sG^+\oplus\sG^-$} such that $E\subset\sG^-$ and such that the exterior product satisfies the conditions $\sG^\pm\sG^\pm\subset\sG^+$ and $\sG^\pm\sG^\mp\subset\sG^-$. We say that an algebra $\sA$ is a Grassmann algebra if $\sA=\sG(E)$ for some vector space $E$. A Grassmann algebra $\sG(E)$ is finite/infinite dimensional if $E$ is finite/infinite dimensional.
\end{dfn}

\begin{rem}
In what follows, the symbol $\otimes_{\mathrm{alg}}$ denotes the algebraic tensor product of vector spaces (the linear span of the set of simple tensors). 
\end{rem}

\begin{dfn}
Let $\tau\in(0,1]$, $\sG$ be a Grassmann algebra and $\bK$ be a finite set. For $m\in\bN_+$, $U\in\sS'(\bT_\tau^{2m})^{\bK^m}\otimes_{\mathrm{alg}}\sG$ and $\phi\in C^\infty(\bT_\tau^{2m})^{\bK^m}\otimes_{\mathrm{alg}}\sG$ we define $\langle U,\phi\rangle_\tau\in\sG$ by the equality
\begin{equation}\label{eq:bracket_tau_sG}
 \langle U,\phi\rangle_\tau:=
 \sum_{i=1}^k\sum_{j=1}^l\langle U_i,\phi_j\rangle_\tau \,g_i h_j\in\sG,
 \qquad
 U=\sum_{i=1}^k U_i \otimes g_i,
 \quad
 \phi=\sum_{j=1}^l\phi_j \otimes h_j,
\end{equation} 
for all $k,l\in\bN_+$, $U_1,\ldots,U_k\in \sS'(\bT_\tau^{2m})^{\bK^m}$, $g_1,\ldots,g_k\in\sG$ and $\phi_1,\ldots,\phi_l\in C^\infty(\bT_\tau^{2m})^{\bK^m}$, $h_1,\ldots,h_l\in\sG$. For $m,n\in\bN_+$ and $\varphi\in C^\infty(\bT_\tau^{2n})^{\bK^n}\otimes_{\mathrm{alg}}\sG$, $\psi\in C^\infty(\bT_\tau^{2m})^{\bK^m}\otimes_{\mathrm{alg}}\sG$ we define $\varphi\otimes\psi\in C^\infty(\bT_\tau^{2(n+m)})^{\bK^{n+m}}\otimes_{\mathrm{alg}}\sG$ by the equality
\begin{equation}
 \varphi\otimes\psi:=\sum_{i=1}^k\sum_{j=1}^l (\varphi_i\otimes \psi_j) \otimes g_i h_j,
 \qquad
 \varphi=\sum_{i=1}^k\varphi_i \otimes g_i,
 \quad
 \psi=\sum_{j=1}^l\psi_j \otimes h_j,
\end{equation}
for all $k,l\in\bN_+$, $\varphi_1,\ldots,\varphi_k\in C^\infty(\bT_\tau^{2n})^{\bK^n}$, $g_1,\ldots,g_k\in\sG$ and $\psi_1,\ldots,\psi_l\in C^\infty(\bT_\tau^{2m})^{\bK^m}$, $h_1,\ldots,h_l\in\sG$.
\end{dfn}

\begin{rem}
The bracket $\langle U,\phi\rangle_\tau$ and the tensor product $\varphi\otimes\psi$ do not depend on the choice of a representations of $U,\phi,\varphi,\psi$ as sums of simple tensors.
\end{rem}

\begin{rem}\label{rem:graded_uniqueness}
Let $m\in\bN_+$ and $E$ be a vector space of dimension grater or equal to $m$. An~antisymmetric distribution $U\in\sS'(\bT_\tau^{2m})^{\bK^m}$ is uniquely determined by the map 
\begin{equation}\label{eq:map_unique_grassmann}
 C^\infty(\bT_\tau^2)^{\bK}\otimes_{\mathrm{alg}}\sG^-\ni\varphi \mapsto \langle U,\varphi^{\otimes m}\rangle_\tau\in \sG,
\end{equation}
where $\sG:=\sG(E)$. Indeed, given $\psi_1,\ldots,\psi_m\in C^\infty(\bT_\tau^2)^{\bK}$ we choose $\varphi=\sum_{j=1}^m \psi_j\otimes g_j$, where $g_1,\ldots,g_m\in\sG$ are such that $g=g_1\ldots g_m\neq 0$. Then $m!\,\langle U,\psi_1\otimes\ldots\otimes\psi_m\rangle_\tau\otimes g =\langle U,\varphi^{\otimes m}\rangle_\tau$. Note also that if $\sG$ is an infinite-dimensional Grassmann algebra, then for every $m\in\bN_+$ an antisymmetric distribution $U\in\sS'(\bT_\tau^{2m})^{\bK^m}$ is uniquely determined by the map~\eqref{eq:map_unique_grassmann}. 
\end{rem}

\begin{dfn}\label{dfn:functional}
Let $\tau\in(0,1]$ and $\bK$ be a finite set. We call a functional a collection $U=(U^m)_{m\in\bN_0}$ such that $U^0\in\bC$ and $U^m\in \sS'(\bT_\tau^{2m})^{\bK^m}$ is antisymmetric for $m\in\bN_+$. We denote the vector space of functionals by
\begin{equation}
 \sN(C^\infty(\bT_\tau^2)^\bK)\subset\bC\times\bigtimes_{m\in\bN_+} 
 \sS'(\bT_\tau^{2m})^{\bK^m}
\end{equation}
and endow it with the product topology. Let $\sG$ be an infinite-dimensional Grassmann algebra. For $U\in\sN(C^\infty(\bT_\tau^2)^\bK)$ and $\phi\in C^\infty(\bT_\tau^2)^\bK\otimes_{\mathrm{alg}}\sG^-$ we define
\begin{equation}\label{eq:functional_sum}
 U(\phi):= 
 U^0+\sum_{m\in\bN_+} \langle U^m,\phi^{\otimes m}\rangle_\tau \in \sG.
\end{equation}
For $U\in\sN(C^\infty(\bT_\tau^2)^\bK)$, an entire function $f\in \bC\to\bC$ and $\phi\in C^\infty(\bT_\tau^2)^\bK\otimes_{\mathrm{alg}}\sG^-$ we define $f(U(\phi))$ by 
\begin{equation}
 f(U(\phi)):= \sum_{n\in\bN_0} a_n U(\phi)^n,
 \qquad
 f(z)=\sum_{n\in\bN_0} a_n\, z^n,
 \quad
 z\in\bC,~a_n\in\bC,~n\in\bN_0.
\end{equation}
For $k\in\bN_0$, $U\in\sN(C^\infty(\bT_\tau^2)^\bK)$ and $\phi\in C^\infty(\bT_\tau^2)^{\bK}\otimes_{\mathrm{alg}}\sG^-$ the antisymmetric distribution $$\rD_\phi^k U(\phi)\in \sS'(\bT_\tau^{2k})^{\bK^k}\otimes_{\mathrm{alg}}\sG$$ is uniquely defined by the equality
\begin{equation}
 \langle \rD_\phi^k U(\phi),\psi^{\otimes k}\rangle_\tau := \partial_u^k U(\phi+u\psi)\big|_{u=0}
\end{equation}
for all $\phi,\psi\in C^\infty(\bT_\tau^2)^{\bK}\otimes_{\mathrm{alg}}\sG^-$. 
\end{dfn}

\begin{rem}
The assumption that $\sG$ is an infinite-dimensional Grassmann algebra is only needed to ensure that for all $k\in\bN_0$ the antisymmetric distribution $\rD_\phi^k U(\phi)$ is uniquely defined, cf.~Remark~\ref{rem:graded_uniqueness}. Moreover, observe that given $U\equiv (U^m)_{m\in\bN_0}\in\sN(C^\infty(\bT_\tau^2)^\bK)$ for all $k\in\bN_0$ it holds $\rD_\phi^k U(\phi)|_{\phi=0}=k!\,U^k$. 
\end{rem}

\begin{rem}
Note that for all $\phi\in C^\infty(\bT_\tau^2)^\bK\otimes_{\mathrm{alg}}\sG^-$ there are only finitely many non-zero terms in the series on the RHS of Eq.~\eqref{eq:functional_sum} and in the series defining $f(U(\phi))$. Indeed, if $\phi=\sum_{i=1}^n\phi_i \otimes g_i \in C^\infty(\bT_\tau^2)^\bK\otimes_{\mathrm{alg}}\sG^-$, then $\phi^{\otimes m}=0$ and $U(\phi)^m=0$ for $m>n$. Note that the unique functional $W=(W^m)_{m\in\bN_0}\in\sN(C^\infty(\bT_\tau^2)^\bK)$ such that $f(U(\phi))=W(\phi)$ for $\phi\in C^\infty(\bT_\tau^2)^{\bK}\otimes_{\mathrm{alg}}\sG^-$ is defined by the equations $W^k:=\rD_\phi^k f(U(\phi))|_{\phi=0}$ for all $k\in\bN_+$.
\end{rem}

\begin{rem}
As an example, let us consider a functional $U\in\sN(C^\infty(\bT_\tau^2))$ of the form \mbox{$U(\phi)=\phi(x_1)\phi(x_2)$} for some fixed $x_1,x_2\in\bR^2$. Then $\rD_\phi U(\phi) = -\phi(x_2)\delta_{x_1}+\phi(x_1)\delta_{x_2}$ and \mbox{$\rD^2_\phi U(\phi) = \delta_{x_1}\otimes\delta_{x_2}-\delta_{x_2}\otimes \delta_{x_1}$}, where $\delta_x\in\sS'(\bT^2_\tau)$ is the periodization of the Dirac delta at $x\in\bR^2$. Furthermore, we have $\langle \rD_\phi U(\phi),\psi\rangle_\tau = \psi(x_1) \phi(x_2)+\phi(x_1)\psi(x_2)$ as well as \mbox{$\langle \rD_\phi^2 U(\phi),\psi^{\otimes2}\rangle_\tau = 2\psi(x_1) \psi(x_2)$}.
\end{rem}

\section{Gross-Neveu model with cutoffs}\label{sec:gross-neveu}

In this section we introduce the Grassmann algebra of the Gross-Neveu model and use it to define the Berezin integral and the  Grassmann interacting measure $\mu_{\tau,\varepsilon}$ of the Gross-Neveu model with cutoffs $\tau,\varepsilon\in(0,1]$. We also state the conditions for the bump function $\vartheta_\varepsilon$ used to implement the UV cutoff.

\begin{dfn}
Let $\bF:=\{1,\ldots,N\}\times\{1,2\}$, $\bFF:=\{-,+\}\times\bF$, $\bA:=\{0,1,2\}^2$. We call $\varsigma\in\bF$ and $\sigma\in\bFF$ spinor indices and $a\in\bA$ a spatial multi-index.
\end{dfn}

\begin{dfn}\label{dfn:gamma}
The so-called gamma matrices are defined by
\begin{equation}
 \gamma_1
 :=
 \begin{pmatrix}
  0 & 1\\1 & 0
 \end{pmatrix},
 \quad
 \gamma_2
 :=
 \begin{pmatrix}
  0 & -\ri\\\ri & 0
 \end{pmatrix},
 \quad
 \Gamma_j
 \equiv(\Gamma_j^{\varsigma_1,\varsigma_2})_{\varsigma_1,\varsigma_2\in\bF}
 :=(\gamma_j)^{\oplus N},
 \quad
 j\in\{1,2\}.
\end{equation}
\end{dfn}
\begin{rem}\label{rem:gamma}
Note that $\gamma_1,\gamma_2$ are complex $2\times2$ matrices and $\Gamma_1,\Gamma_2$ are complex block diagonal $2N\times2N$ matrices. It holds
\begin{equation}
 \gamma_2^{\mathrm{t}} \gamma_2 = -1,
 \qquad
 \gamma_2^{\mathrm{t}}\gamma_1 \gamma_2 = \gamma_1^{\mathrm{t}},
 \qquad 
 \gamma_2^{\mathrm{t}} \gamma_2 \gamma_2 = \gamma_2^{\mathrm{t}},
\end{equation}
where $A^{\mathrm{t}}$ denotes the transposition of the matrix $A$. The matrices $\Gamma_1,\Gamma_2$ satisfy the same identities.
\end{rem}

\begin{dfn}
For $\phi=(\phi^\varsigma)_{\varsigma\in\bF}\in \bC^\bF$ and $\psi=(\psi^\varsigma)_{\varsigma\in\bF}\in \bC^\bF$ we write $ \phi\cdot\psi = \sum_{\varsigma\in\bF} \phi^\varsigma \psi^\varsigma$. 
\end{dfn}

\begin{dfn}\label{dfn:GN_grassmann}
Let $\tau,\varepsilon\in(0,1]$, $\omega(p):=(|p|^2+1)^{1/2}$, $\Lambda_{\tau,\varepsilon}:=\{p\in (2\pi\tau\bZ)^2\,|\,\varepsilon\omega(p)\leq 4\}$ and $\mathcal{C}_{\tau,\varepsilon}:=(\mathrm{Span}\{x\mapsto\re^{\ri p\cdot x}\,|\,p\in \Lambda_{\tau,\varepsilon}\})^\bFF\subset C^\infty(\bT_\tau^2)^\bFF$. The Grassmann algebra of the Gross-Neveu model $\sG_{\tau,\varepsilon}$ is the unital complex algebra whose generators
\begin{equation}\label{eq:generators_GN}
 \{(\fF_\tau\psi_{\tau,\varepsilon}^\sigma)(p)\,|\,p\in\Lambda_{\tau,\varepsilon}\}
\end{equation}
satisfy the conditions
\begin{equation}
\begin{aligned}
 (\fF_\tau\psi_{\tau,\varepsilon}^{\sigma_1})(p_1)\, (\fF_\tau\psi_{\tau,\varepsilon}^{\sigma_2})(p_2)
 +
 (\fF_\tau\psi_{\tau,\varepsilon}^{\sigma_2})(p_2)\, (\fF_\tau\psi_{\tau,\varepsilon}^{\sigma_1})(p_1)=0
\end{aligned}
\end{equation} 
for all $p_1,p_2\in\Lambda_{\tau,\varepsilon}$ and $\sigma_1,\sigma_2\in\bFF$. We define a unique grading \mbox{$\sG_{\tau,\varepsilon}=\sG_{\tau,\varepsilon}^+\oplus\sG_{\tau,\varepsilon}^-$} such that $(\fF_\tau\psi_{\tau,\varepsilon}^{\sigma})(p)\in \sG_{\tau,\varepsilon}^-$ for all $p\in\Lambda_{\tau,\varepsilon}$, $\sigma\in\bFF$ and such that the product in $\sG_{\tau,\varepsilon}$ satisfies the conditions $\sG_{\tau,\varepsilon}^\pm\sG_{\tau,\varepsilon}^\pm\subset\sG_{\tau,\varepsilon}^+$ and $\sG_{\tau,\varepsilon}^\pm\sG_{\tau,\varepsilon}^\mp\subset\sG_{\tau,\varepsilon}^-$. Moreover, we define the free Grassmann field
\begin{equation}
 \psi_{\tau,\varepsilon}=(\psi_{\tau,\varepsilon}^\sigma)_{\sigma\in\bFF}\in \mathcal{C}_{\tau,\varepsilon}\otimes \sG_{\tau,\varepsilon}^-\subset C^\infty(\bT_\tau^2)^\bFF\otimes\sG_{\tau,\varepsilon}^-
\end{equation}
by the following equality
\begin{equation}\label{eq:Grassmann_field}
 \psi_{\tau,\varepsilon}^\sigma(x) := \tau^2\sum_{p\in\Lambda_{\tau,\varepsilon}} (\fF_\tau\psi_{\tau,\varepsilon}^\sigma)(p)\,\re^{\ri p\cdot x}\,\rd p
\end{equation}
for all $x\in\bR^2$ and $\sigma\in\bFF$. For $\varsigma\in\bF$ we set $\bar\psi_{\tau,\varepsilon}^\varsigma(x) := \psi_{\tau,\varepsilon}^{-,\varsigma}(x)$ and $\ubar\psi_{\tau,\varepsilon}^\varsigma(x) := \psi_{\tau,\varepsilon}^{+,\varsigma}(x)$.
\end{dfn}

\begin{rem}
For all $\tau,\varepsilon\in(0,1]$ it holds $\sG_{\tau,\varepsilon}=\sG(E_{\tau,\varepsilon})$, where the vector space $E_{\tau,\varepsilon}$ coincides with the span of the set~\eqref{eq:generators_GN}. In particular, $\sG_{\tau,\varepsilon}$ is a finite-dimensional Grassmann algebra in the sense of Def.~\ref{dfn:grassmann_algebra}.
\end{rem}

\begin{rem}\label{rem:psi_finite}
Observe that for every $\tau,\varepsilon\in(0,1]$ it holds $\psi_{\tau,\varepsilon}^{\otimes m}=0$ for all $m\in\bN_+$ provided $m>|\bFF|\,|\Lambda_{\tau,\varepsilon}|$, where $|S|$ denotes the number of elements of a set $S$.
\end{rem}

\begin{dfn}\label{dfn:berezin_def}
Let $\tau,\varepsilon\in(0,1]$. The Berezin integral is a linear map 
\begin{equation}
 \sG_{\tau,\varepsilon}\ni g\mapsto \int g
 \,\rd\psi_{\tau,\varepsilon}\in\bC
\end{equation}
such that 
\begin{equation}\label{eq:berezin_def}
 \int g
 \,\rd\psi_{\tau,\varepsilon}=1,
 \qquad
 g=
 \prod_{p\in\Lambda_{\tau,\varepsilon}}\prod_{\varsigma\in\mathbb{F}}
 (\fF_\tau\bar\psi_{\tau,\varepsilon}^\varsigma)(p)\,
 (\fF_\tau\ubar\psi_{\tau,\varepsilon}^\varsigma)(p),
\end{equation}
and such that it vanishes on any other monomial in the generators of $\sG_{\tau,\varepsilon}$.
\end{dfn}

\begin{rem}
Note that the order in the product in Eq.~\eqref{eq:berezin_def} is unimportant because the elements of the set $\{(\fF_\tau\bar\psi_{\tau,\varepsilon}^\varsigma)(p)\,(\fF_\tau\ubar\psi_{\tau,\varepsilon}^\varsigma)(p)\,|\,p\in\Lambda_{\tau,\varepsilon},\,\varsigma\in\bF\}$ commute. For a discussion of properties of the Berezin integral see for example~\cite[Appendix~A]{CCS13}.  
\end{rem}

\begin{rem}
Let $\sG_1,\sG_2$ be Grassmann algebras. We denote by $\sG_1\otimes_{\mathrm{alg}}\sG_2$ the algebraic graded tensor product of graded algebras $\sG_1$ and $\sG_2$. Note that $\sG_1\otimes_{\mathrm{alg}}\sG_2$ is a Grassmann algebra. Given $g_1\in\sG_1$ and $g_2\in\sG_2$ we define $g_1+g_2:=g_1\otimes1+1\otimes g_2\in \sG_1\otimes_{\mathrm{alg}}\sG_2$. 
\end{rem}

\begin{rem}
Let $\sG$ be a Grassmann algebra. We define the Berezin integral on $\sG_{\tau,\varepsilon}\otimes_{\mathrm{alg}}\sG$ as a linear map 
\begin{equation}
 \sG_{\tau,\varepsilon}\otimes_{\mathrm{alg}}\sG\ni g\mapsto \int g
 \,\rd\psi_{\tau,\varepsilon}\in\sG
\end{equation}
such that
\begin{equation}
 \int (g\otimes h) \,\rd\psi_{\tau,\varepsilon}:=
\left(\int g \,\rd\psi_{\tau,\varepsilon}\right) h
\end{equation}
for all $g\in \sG_{\tau,\varepsilon}$ and $h\in\sG$.
\end{rem}
\begin{lem}\label{lem:berezin_translation}
Let $\tau,\varepsilon\in(0,1]$ and $\sG$ be a Grassmann algebra. For all $F\in \sN(C^\infty(\bT_\tau^2)^\bK)$ and $\phi\in \mathcal{C}_{\tau,\varepsilon}\otimes \sG^-\subset C^\infty(\bT_\tau^2)^\bFF\otimes\sG^-$ it holds
\begin{equation}
 \int F(\psi_{\tau,\varepsilon}+\phi)\,\rd\psi_{\tau,\varepsilon}
 =
 \int F(\psi_{\tau,\varepsilon})\,\rd\psi_{\tau,\varepsilon},
\end{equation}
where $\psi_{\tau,\varepsilon}+\phi\in C^\infty(\bT_\tau^2)^\bFF\otimes(\sG_{\tau,\varepsilon}\otimes\sG)^-$.
\end{lem}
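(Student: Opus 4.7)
The plan is to reduce the identity to the invariance, under a translation of Grassmann generators by elements of a disjoint Grassmann algebra, of the coefficient of the top monomial of a polynomial in those generators. This coefficient is precisely what the Berezin integral extracts.

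Concretely, I would first use the Fourier expansion~\eqref{eq:Grassmann_field} to regard $\psi_{\tau,\varepsilon}$ as a tuple of linear combinations of the finitely many generators $\eta_i := (\fF_\tau\psi_{\tau,\varepsilon}^\sigma)(p)$ indexed by $i=(p,\sigma)\in\mathbf{I}:=\Lambda_{\tau,\varepsilon}\times\bFF$. Inserting this expansion into $F(\psi_{\tau,\varepsilon})$ and applying Def.~\ref{dfn:functional}, the result is a finite polynomial $F(\psi_{\tau,\varepsilon})=\sum_{\mathcal{S}\subset\mathbf{I}} a_\mathcal{S}\,\eta_\mathcal{S}\in\sG_{\tau,\varepsilon}$, where $\eta_\mathcal{S}$ denotes the ordered product $\prod_{i\in\mathcal{S}}\eta_i$ (with a fixed convention) and $a_\mathcal{S}\in\bC$. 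Writing $\phi\in\mathcal{C}_{\tau,\varepsilon}\otimes\sG^-$ in the same Fourier basis produces coefficients $c_i\in\sG^-$ such that the Fourier modes of $\psi_{\tau,\varepsilon}+\phi$ are $\eta_i+c_i$, so the same substitution gives $F(\psi_{\tau,\varepsilon}+\phi)=\sum_{\mathcal{S}\subset\mathbf{I}} a_\mathcal{S}\,(\eta+c)_\mathcal{S}$ with identical coefficients $a_\mathcal{S}$.

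I would then invoke $\sG$-linearity of the Berezin integral to reduce the lemma to the identity $\int(\eta+c)_\mathcal{S}\,\rd\psi_{\tau,\varepsilon}=\int\eta_\mathcal{S}\,\rd\psi_{\tau,\varepsilon}$ for every subset $\mathcal{S}\subset\mathbf{I}$. Expanding the product $(\eta+c)_\mathcal{S}$ yields a sum over $\mathcal{T}\subset\mathcal{S}$ of monomials carrying $|\mathcal{T}|$ factors $\eta_i$ (for $i\in\mathcal{T}$) together with $|\mathcal{S}\setminus\mathcal{T}|$ factors $c_j$, with signs from the graded commutation in $\sG_{\tau,\varepsilon}\otimes_{\mathrm{alg}}\sG$. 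Moving all $c_j$ past the $\eta_i$ converts each such monomial into $\pm g\cdot\eta_\mathcal{T}$ for some $g\in\sG$. By Def.~\ref{dfn:berezin_def}, the Berezin integral of $\eta_\mathcal{T}$ vanishes unless $\mathcal{T}=\mathbf{I}$ and equals $1$ in that case. Since $\mathcal{T}\subset\mathcal{S}\subset\mathbf{I}$, the only surviving term corresponds to $\mathcal{S}=\mathcal{T}=\mathbf{I}$, i.e. the choice with no $c$-substitution, which coincides exactly with the single non-vanishing contribution to $\int\eta_\mathcal{S}\,\rd\psi_{\tau,\varepsilon}$.

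Summing over $\mathcal{S}$, both sides of the claimed identity reduce to $a_\mathbf{I}$, finishing the proof. The main technicality I anticipate is the bookkeeping of grading signs when separating the $\eta$- and $c$-factors in the expansion of $(\eta+c)_\mathcal{S}$; however, because the only surviving contribution involves no $c$-substitution at all, these signs never enter the final identity. For a tidier write-up, one may first reduce by linearity to the case where $\phi$ is a pure tensor $\phi_1\otimes h$ with $\phi_1\in\mathcal{C}_{\tau,\varepsilon}$ and $h\in\sG^-$ (so that all $c_i$ are scalar multiples of the single odd element $h$), and then extend to general $\phi$ by a finite linear combination.
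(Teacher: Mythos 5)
Your argument is correct and amounts to the same idea the paper uses: the paper's proof is a one-line citation of the translation invariance of the Berezin integral (\cite[Proposition~A.12]{CCS13}), and your expansion of $F$ in the finite monomial basis $\eta_\mathcal{S}$, together with the observation that only the $c$-free top monomial survives the integral, is exactly a self-contained verification of that cited invariance. The only step you leave implicit — that $F(\psi_{\tau,\varepsilon}+\phi)$ is obtained from the monomial expansion of $F(\psi_{\tau,\varepsilon})$ by the substitution $\eta_i\mapsto\eta_i+c_i$, which is legitimate because the shifted modes anticommute and square to zero in $\sG_{\tau,\varepsilon}\otimes_{\mathrm{alg}}\sG$ — is easily supplied and does not affect the conclusion.
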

\begin{proof}
The lemma follows from the invariance under translation of the Berezin integral, cf.~\cite[Proposition A.12]{CCS13}. 
\end{proof}

\begin{dfn}\label{dfn:theta}
Let $\zeta_\flat\in(4/5,1)$ and $\vartheta\in C^\infty(\bR)$ be a function belonging to the Gevrey class of order $1/\zeta_\flat$ such that $\vartheta(t)=1$ for $t\in(-\infty,1/2]$ and $\vartheta(t)=0$ for $t\in[1,\infty)$. For $\varepsilon\in(0,1]$ we define $\vartheta_\varepsilon\in C^\infty(\bR^2)$ by $\fF\vartheta_\varepsilon(p)=\vartheta(2\varepsilon\omega(p))^{1/2}$.
\end{dfn}
\begin{rem}
Recall that the class of Gevrey functions of order $s\in[1,\infty)$ on $\bR^d$ consists of $\varphi\in C^\infty(\bR^d)$ such that for some $C\in(0,\infty)$ the bound
\begin{equation}
 \sup_{p\in\bR^d} |\partial^a\varphi(p)|\leq C^{1+|a|}\,(a!)^{s}
\end{equation}
is satisfied for all $a\in\bN_0^d$. 
\end{rem}
\begin{rem}
For an explicit construction of a function $\vartheta$ satisfying the conditions stated in the above definition see~\cite[App. A.1]{GMR21}. 
\end{rem}

\begin{dfn}\label{dfn:measures}
The free action functional $A_\tau\in \sN(C^\infty(\bT_\tau^2)^\bFF)$ and the interaction term $U_{\tau,\varepsilon}\equiv U_\tau(g_{\tau,\varepsilon},r_{\tau,\varepsilon})\in\sN(C^\infty(\bT_\tau^2)^\bFF)$ with cutoffs $\tau,\varepsilon\in(0,1]$ are defined uniquely by Eqs.~\eqref{eq:free_action_intro} and~\eqref{eq:potential_intro}
for all $\psi\equiv(\bar\psi,\ubar\psi)\in C^\infty(\bT_\tau^2)^\bFF\otimes_{\mathrm{alg}}\sG^-\equiv (C^\infty(\bT_\tau^2)^\bF\times C^\infty(\bT_\tau^2)^\bF)\otimes_{\mathrm{alg}}\sG^-$, where $\sG$ is an infinite-dimensional Grassmann algebra. The free and interacting measures with cutoffs $\tau,\varepsilon\in(0,1]$ are defined with the use of the Berezin integral introduced in Def.~\ref{dfn:berezin_def} by the equations
\begin{equation}
 \nu_{\tau,\varepsilon}(F) := 
 \frac{\int F(\vartheta_\varepsilon\ast\psi_{\tau,\varepsilon})\exp(-A_{\tau}(\psi_{\tau,\varepsilon})))\,\rd\psi_{\tau,\varepsilon}}{\int \exp(-A_{\tau}(\psi_{\tau,\varepsilon}))\,\rd\psi_{\tau,\varepsilon}}
\end{equation}
and
\begin{equation}
 \mu_{\tau,\varepsilon}(F) := 
 \frac{\int F(\vartheta_\varepsilon\ast\psi_{\tau,\varepsilon})\exp(-A_{\tau}(\psi_{\tau,\varepsilon})+U_{\tau,\varepsilon}(\vartheta_\varepsilon\ast\psi_{\tau,\varepsilon}))\,\rd\psi_{\tau,\varepsilon}}{\int \exp(-A_{\tau}(\psi_{\tau,\varepsilon})+U_{\tau,\varepsilon}(\vartheta_\varepsilon\ast\psi_{\tau,\varepsilon}))\,\rd\psi_{\tau,\varepsilon}},
\end{equation}
respectively, for all functionals $F\in\sN(C^\infty(\bT_\tau^2)^\bFF)$, where $\psi_{\tau,\varepsilon}\in C^\infty(\bT_\tau^2)^\bFF\otimes\sG_{\tau,\varepsilon}^-$ is defined by Eq.~\eqref{eq:Grassmann_field}.
\end{dfn}

\begin{rem}
Note that by~\cite[Proposition~A.14]{CCS13} it holds
\begin{equation}
 \int \exp(-A_{\tau}(\psi_{\tau,\varepsilon}))\,\rd\psi_{\tau,\varepsilon}
 = \prod_{p\in\Lambda_{\tau,\varepsilon}} \det((-\Xi^{\varsigma_1,\varsigma_2}(p))_{\varsigma_1,\varsigma_2\in\bF}) 
 =
 \prod_{p\in\Lambda_{\tau,\varepsilon}} (-1)^N\,(1+|p|^2)^N\neq 0,
\end{equation}
where $\Xi^{\varsigma_1,\varsigma_2}(p):=1^{\varsigma_1,\varsigma_2}-\sum_{j\in\{1,2\}}\ri\Gamma_j^{\varsigma_1,\varsigma_2} p^j$ for $\varsigma_1,\varsigma_2\in\bF$. As a result, the free field measure is well-defined for all $\tau,\varepsilon\in(0,1]$. The interacting measure with cutoffs $\tau,\varepsilon\in(0,1]$ is well-defined provided
\begin{equation}
 \frac{\int \exp(-A_{\tau}(\psi_{\tau,\varepsilon})+U_{\tau,\varepsilon}(\vartheta_\varepsilon\ast\psi_{\tau,\varepsilon}))\,\rd\psi_{\tau,\varepsilon}}{\int \exp(-A_{\tau}(\psi_{\tau,\varepsilon}))\,\rd\psi_{\tau,\varepsilon}}
 =
 \nu_{\tau,\varepsilon}(\exp(U_{\tau,\varepsilon}(\Cdot)))
 \neq 0.
\end{equation}
This is proved in Remark~\ref{rem:Z_not_zero} for a specific choice of parameters $g_{\tau,\varepsilon}$, $r_{\tau,\varepsilon}$ fixed as in Theorem~\ref{thm:main}.
\end{rem}

\section{Scale decomposition of propagator}\label{sec:covariance_decomposition}

In this section we define the scale decomposition $G_{\tau,\varepsilon;t}$ of the propagator $G_{\tau,\varepsilon}$. We also introduce auxiliary propagators $G^{\pm;\varsigma,\tilde\varsigma}_{\varepsilon;t}$ that will be used in Sec.~\ref{sec:free_field_decomposition} to construct the scale decomposition $\varPsi_{\tau,\varepsilon;t}$ of the free field that satisfies the conditions discussed in Sec.~\ref{sec:strategy}.  

\begin{dfn}\label{dfn:covariance}
For $\varsigma_1,\varsigma_2\in\bF$ and $p\in\bR^2$ let $\Xi^{\varsigma_1,\varsigma_2}(p):=1^{\varsigma_1,\varsigma_2}-\sum_{j\in\{1,2\}}\ri\Gamma_j^{\varsigma_1,\varsigma_2} p^j$ and $\omega(p):=(|p|^2+1)^{1/2}$. The free field covariance $$G\equiv(G^\sigma)_{\sigma\in\bFF^2}\in L^1(\bR^2)^{\bFF^2}$$ is defined by
\begin{equation}
 G^\sigma(x) =
 \frac{1}{(2\pi)^2}\int_{\bR^2} (\fF G^\sigma)(p)\,\re^{\ri p\cdot x}\,\rd p,
\end{equation}
where
\begin{equation}
\begin{gathered}
 (\fF G^{+,\varsigma_1,+,\varsigma_2})(p):=0,
 \qquad
 (\fF G^{+,\varsigma_1,-,\varsigma_2})(p):=
 \Xi^{\varsigma_1,\varsigma_2}(p)/(\omega(p))^2,
 \\
 (\fF G^{-,\varsigma_1,-,\varsigma_2})(p):=0,
 \qquad
 (\fF G^{-,\varsigma_1,+,\varsigma_2})(p):=-(\fF G^{+,\varsigma_2,-,\varsigma_1})(-p)
\end{gathered} 
\end{equation}
for all $\varsigma_1,\varsigma_2\in\bF$. For $\varepsilon,t\in[0,1]$ we define
\begin{equation}
 G_\varepsilon\equiv(G^\sigma_\varepsilon)_{\sigma\in\bFF^2},
 ~
 G_{\varepsilon;t}\equiv(G^\sigma_{\varepsilon;t})_{\sigma\in\bFF^2},
 ~
 \dot G_{\varepsilon;t}\equiv(\dot G^\sigma_{\varepsilon;t})_{\sigma\in\bFF^2}\in L^1(\bR^2)^{\bFF^2},
\end{equation}
by the equalities
\begin{equation}
 (\fF G^\sigma_\varepsilon)(p):=\vartheta(2\varepsilon\omega(p))\,(\fF G^\sigma)(p),
 \quad
 (\fF G^\sigma_{\varepsilon;t})(p):=\vartheta(t\omega(p))\,(\fF G^\sigma_\varepsilon)(p),
 \quad
 \dot G^\sigma_{\varepsilon;t}:=\partial_t G^\sigma_{\varepsilon;t},
\end{equation}
where the bump function $\vartheta\in C^\infty(\bR)$ was introduced in Def.~\ref{dfn:theta}. For $\varepsilon\in[0,1]$, $t\in(0,1]$ and $\sigma\in\bFF^2$, $\varsigma,\tilde\varsigma\in\bF$ we define
$
 G^{\pm;\varsigma,\tilde\varsigma}_{\varepsilon;t}\in L^1(\bR^2),
$
by the equalities
\begin{equation}
 (\fF G^{+;\varsigma,\tilde\varsigma}_{\varepsilon;t})(p):=-
 (\fF \dot G^{+,\varsigma,-,\tilde\varsigma}_{\varepsilon;t})(p),
 \qquad
 (\fF G^{-;\varsigma,\tilde\varsigma}_{\varepsilon;t})(p):=\vartheta(\varepsilon/2t)\,\vartheta(t\omega(p)/2)\,1^{\varsigma,\tilde\varsigma}.
\end{equation}
For $\tau\in(0,1]$ the periodizations of the kernels $G^\sigma, G^\sigma_{\varepsilon},G^\sigma_{\varepsilon;t},\dot G^\sigma_{\varepsilon;t},G^{\pm;\varsigma,\tilde\varsigma}_{\varepsilon;t}$ are defined with the use of the operator $\fP_\tau$ introduced in Def.~\ref{dfn:distributions_torus} and are denoted by $G^\sigma_{\tau}, G^\sigma_{\tau,\varepsilon},G^\sigma_{\tau,\varepsilon;t},\dot G^\sigma_{\tau,\varepsilon;t},G^{\pm;\varsigma,\tilde\varsigma}_{\tau,\varepsilon;t}$, respectively. We use the symbols $\dot G^\sigma_{t},G^{\pm;\varsigma,\tilde\varsigma}_{t}$ to denote $\dot G^\sigma_{\varepsilon;t},G^{\pm;\varsigma,\tilde\varsigma}_{\varepsilon;t}$ with $\varepsilon=0$.
\end{dfn}

\begin{rem}\label{rem:G}
The kernels introduced in the above definition have the following properties.
\begin{itemize}
 \item[(A)] If $t\in(0,\varepsilon]$, then $\vartheta(2\varepsilon\omega(p))\vartheta(t\omega(p))=\vartheta(2\varepsilon\omega(p))$ and $G^\sigma_{\varepsilon;t}=G^\sigma_\varepsilon$, $\dot G^\sigma_{\varepsilon;t}=0$. 
 
 \item[(B)] If $t\in(0,\varepsilon/2]$, then $G^{\pm;\varsigma,\tilde\varsigma}_{\varepsilon;t}=0$. Moreover, $\supp\,\fF_\tau G^{\pm;\varsigma,\tilde\varsigma}_{\tau,\varepsilon;t}\subset\Lambda_{\tau,\varepsilon}$ for all $t\in(0,1]$.
 
 \item[(C)] If $t\in[4\varepsilon,1]$, then $\vartheta(2\varepsilon\omega(p))\dot\vartheta(t\omega(p))=\dot\vartheta(t\omega(p))$ and $\dot G^\sigma_{\varepsilon;t}=\dot G^\sigma_{t}$, $G^{\pm;\varsigma,\tilde\varsigma}_{\varepsilon;t}=G^{\pm;\varsigma,\tilde\varsigma}_{t}$.
 
 \item[(D)] If $t\in(0,1/2)$, then $\dot\vartheta(t\omega(0))=0$ and $\int_{\bR^2} \dot G^\sigma_{\varepsilon;t}=0$.
 
 \item[(E)] Since $\vartheta(\varepsilon/2t)\,\vartheta(t\omega(p)/2)=1$ on the support of $p\mapsto \vartheta(2\varepsilon\omega(p))\,\dot\vartheta(t\omega(p))$, it holds
\begin{equation}
 \dot G^{+;\varsigma,-,\tilde\varsigma}_{\varepsilon;t} = - \textstyle\sum_{\hat\varsigma\in\bF} G^{+;\varsigma,\hat\varsigma}_{\varepsilon;t} \ast G^{-;\hat\varsigma,\tilde\varsigma}_{\varepsilon;t},
 \qquad
 \dot G^{+;\varsigma,-,\tilde\varsigma}_{\tau,\varepsilon;t} = - \textstyle\sum_{\hat\varsigma\in\bF} G^{+;\varsigma,\hat\varsigma}_{\tau,\varepsilon;t} \ast_\tau G^{-;\hat\varsigma,\tilde\varsigma}_{\tau,\varepsilon;t}.
\end{equation}
Note also that $\dot G^{+,\varsigma,-,\tilde\varsigma}_{\varepsilon;t}(x)
= -\dot G^{-,\tilde\varsigma,+,\varsigma}_{\varepsilon;t}(-x)$ as well as $\dot G^{+,\varsigma;+,\tilde\varsigma}_{\varepsilon;t}=0$ and $\dot G^{-,\varsigma;-,\tilde\varsigma}_{\varepsilon;t}=0$.
\item[(F)] Since $G_{\varepsilon;1}=0$ it holds $G_\varepsilon=-\int_\varepsilon^1 \dot G_{\varepsilon;t}\,\rd t$ and $G-G_\varepsilon=-\int_0^{4\varepsilon} (\dot G_t-\dot G_{\varepsilon;t})\,\rd t$.
\end{itemize}
\end{rem}

\begin{lem}\label{lem:dot_G_infty}
For all $a\in\bN_0^2$ there exist $\delta\in(0,1)$ and $C\in(0,\infty)$ such that for all $\varepsilon\in[0,1]$, $t\in(0,1]$, $\varsigma_1,\varsigma_2\in\bF$ and $x\in\bR^2$ it holds
\begin{equation}
 \exp(\delta\,t^{-\zeta_\flat}\,|x|^{\zeta_\flat})~|\partial^a G_{\varepsilon;t}^{\pm;\varsigma_1,\varsigma_2}(x)|
 \leq C\,t^{-|a|-2},
\end{equation} 
where $\zeta_\flat\in(4/5,1)$ is the parameter introduced in Def.~\ref{dfn:theta}.
\end{lem}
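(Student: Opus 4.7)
The plan is to work from the Fourier representations of $G^{\pm;\varsigma_1,\varsigma_2}_{\varepsilon;t}$ in Definition~\ref{dfn:covariance}, rescale momenta by the substitution $p=q/t$ to normalise the support scale, and invoke a Gevrey-class Paley--Wiener theorem: the inverse Fourier transform of a function that is compactly supported and Gevrey of order $1/\zeta_\flat$ satisfies $|\check\chi(y)|\leq C\exp(-\delta|y|^{\zeta_\flat})$, with $C$ and $\delta$ depending only on the Gevrey bounds of $\chi$ and on the diameter of $\supp\chi$ (see e.g.\ \cite[App.~A.1]{GMR21}). The prefactor $t^{-2-|a|}$ will come from the Jacobian $\rd p=t^{-2}\rd q$ together with the factor $q^a$ produced by differentiating $e^{iq\cdot x/t}$ in $x$.

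For $G^-$ the substitution $p=q/t$ yields
\[
 G^{-;\varsigma_1,\varsigma_2}_{\varepsilon;t}(x)=\frac{\vartheta(\varepsilon/2t)\,1^{\varsigma_1,\varsigma_2}}{(2\pi)^2\,t^2}\int_{\bR^2}\vartheta\!\big(\tfrac{1}{2}\sqrt{|q|^2+t^2}\big)\,e^{iq\cdot x/t}\,\rd q.
\]
Since $\vartheta\equiv 1$ on $(-\infty,1/2]$, the $q$-integrand equals $1$ on $\{\sqrt{|q|^2+t^2}\leq 1\}$, so all of its $q$-derivatives are supported where $\sqrt{|q|^2+t^2}\geq 1$; on that set the smooth function $q\mapsto\tfrac12\sqrt{|q|^2+t^2}$ has derivatives bounded uniformly in $t\in(0,1]$, and Fa\`a di Bruno combined with the Gevrey bounds for $\vartheta$ produces Gevrey-$(1/\zeta_\flat)$ estimates on the integrand uniform in $t$. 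Uniformity in $\varepsilon\in[0,1]$ is immediate since the prefactor $\vartheta(\varepsilon/2t)$ is bounded by~$1$. The $G^+$ case is handled analogously: the support of $\vartheta'(t\omega(p))$ forces $t\omega(p)\in[1/2,1]$, while $\vartheta(2\varepsilon\omega(p))$ requires $\omega(p)\leq 1/(2\varepsilon)$, and these intersect only when $\varepsilon\leq t$, so one may restrict to $\varepsilon\leq t$. Setting $u(q):=\sqrt{|q|^2+t^2}=t\omega(q/t)$ and substituting $p=q/t$ gives
\[
 G^{+;\varsigma_1,\varsigma_2}_{\varepsilon;t}(x)=-\frac{1}{(2\pi)^2\,t^2}\int_{\bR^2}\vartheta\!\big(2\varepsilon u(q)/t\big)\,\vartheta'(u(q))\,\frac{t\,1^{\varsigma_1,\varsigma_2}-i\sum_j\Gamma^{\varsigma_1,\varsigma_2}_j q^j}{u(q)}\,e^{iq\cdot x/t}\,\rd q.
\]
On $\supp\vartheta'(u)=\{u\in[1/2,1]\}$ the fraction $\Xi/u$ is smooth with $t$-uniform bounds, and the chain-rule derivatives of $q\mapsto\vartheta(2\varepsilon u(q)/t)$ carry factors $(\varepsilon/t)^k\leq 1$, yielding Gevrey-$(1/\zeta_\flat)$ bounds on the integrand uniform in $(\varepsilon,t)$.

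In both cases, differentiating in $x$ inserts $(iq)^a/t^{|a|}$ into the integral, leaving a compactly supported Gevrey function of $q$ whose Gevrey constants depend only on $|a|$ and whose support lies in a ball independent of $t,\varepsilon$. Applying the Gevrey Paley--Wiener theorem and evaluating the resulting inverse Fourier transform at $y=x/t$ produces the stretched-exponential factor $\exp(-\delta_a\,t^{-\zeta_\flat}|x|^{\zeta_\flat})$, which combined with the $t^{-2-|a|}$ prefactor delivers the claimed estimate (one shrinks $\delta_a$ if necessary to land in $(0,1)$). The main obstacle is ensuring uniformity in the UV cutoff $\varepsilon$; this is automatic for $G^-$ because the $\varepsilon$-dependence sits in the bounded multiplicative prefactor $\vartheta(\varepsilon/2t)$, and for $G^+$ it relies on the observation that the non-trivial regime of $\varepsilon$ is confined to $\varepsilon\leq t$, which forces the chain-rule factors $(\varepsilon/t)^k$ to be bounded by~$1$.
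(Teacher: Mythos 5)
Your proposal is correct and follows essentially the same route as the paper: Fourier representation of $G^{\pm}_{\varepsilon;t}$, Gevrey-$(1/\zeta_\flat)$ bounds on the symbols uniform in $\varepsilon,t$ (using flatness of $\vartheta$ near the origin, the support localization $t\omega(p)\sim1$, and the observation that the $G^+$ integrand vanishes unless $\varepsilon\leq t$), yielding the stretched-exponential decay with prefactor $t^{-|a|-2}$. The only difference is packaging: you rescale $p=q/t$ and invoke a Gevrey Paley--Wiener estimate as a black box, whereas the paper derives the same decay by hand from the moment bounds $(|x|/t)^k|\partial^aG^{\pm}_{\varepsilon;t}(x)|\leq C^{1+k}(k!)^{1/\zeta_\flat}t^{-|a|-2}$ followed by optimization in $k$.
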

\begin{proof}
We first note the following identities
\begin{equation}\label{eq:dot_G_b_a}
 x^b \partial^a  G_{\varepsilon;t}^{+,\varsigma_1,\varsigma_2}(x) 
 =
 \frac{\ri^{|a|-|b|}}{(2\pi)^2}\int_{\bR^2} 
 \partial^b_p \Big(p^a\,\vartheta(2\varepsilon\omega(p))\,\dot\vartheta(t\omega(p))\,
 \Xi^{\varsigma_1,\varsigma_2}(p)/\omega(p)\Big)
 \,\re^{\ri p\cdot x}\,\rd p
\end{equation}
and
\begin{equation}\label{eq:dot_G_b_a_minus}
 x^b \partial^a G_{\varepsilon;t}^{-,\varsigma_1,\varsigma_2}(x) 
 =
 \frac{\ri^{|a|-|b|}}{(2\pi)^2}\int_{\bR^2} 
 \partial^b_p \Big(p^a\,\vartheta(\varepsilon/2t)\,\vartheta(t\omega(p)/2)\,1^{\varsigma_1,\varsigma_2}\Big)
 \,\re^{\ri p\cdot x}\,\rd p. 
\end{equation}
Note that the composition of a Gevrey function of order $s\in[1,\infty)$ with an analytic function is a Gevrey function of order $s\in[1,\infty)$, cf.~\cite[Prop. 1.4.6]{Rod93}. Consequently, since the function $\vartheta$ is flat in the vicinity of the origin the function
\begin{equation}
 \bR\times \bR^2\ni (\varepsilon,p)\mapsto \vartheta((\varepsilon^2+p^2)^{1/2})\in\bR
\end{equation}
belongs to the Gevrey class of order $1/{\zeta_\flat}$. As a result, there exists $C\in(0,\infty)$ such that for all $\varepsilon\in[0,1]$ and $a\in\bN_0^2$ it holds
\begin{equation}
 \sup_{p\in\bR^2} |\partial^a_p\vartheta(\varepsilon\omega(p))|
 =
 \varepsilon^{|a|}
 \sup_{p\in\bR^2} |\partial^a_p\vartheta((\varepsilon^2+p^2)^{1/2})|
 \leq C^{1+|a|}\,(a!)^{1/{\zeta_\flat}}\,\varepsilon^{|a|}.
\end{equation}
By a similar argument, there exists $C\in(0,\infty)$ such that for all $t\in[0,1]$ and $a\in\bN_0^2$ it holds
\begin{equation}
\begin{aligned}
 \sup_{p\in\bR^2} |\partial^a_p\vartheta(t\omega(p)/2)|
 &=
 t^{|a|}
 \sup_{p\in\bR^2} |\partial^a_p\vartheta((t^2+p^2)^{1/2}/2)|
 \leq C^{1+|a|}\,(a!)^{1/{\zeta_\flat}}\,t^{|a|},
 \\
 \sup_{p\in\bR^2} |\partial^a_p\dot\vartheta(t\omega(p))|
 &=
 t^{|a|}
 \sup_{p\in\bR^2} |\partial^a_p\dot\vartheta((t^2+p^2)^{1/2})|
 \leq C^{1+|a|}\,(a!)^{1/{\zeta_\flat}}\,t^{|a|}.
\end{aligned} 
\end{equation}
Since the function $p\mapsto \Xi^{\varsigma_1,\varsigma_2}(p)/\omega(p)\in\bR$ is analytic in a strip of size one around the real axis there exists $C\in(0,\infty)$ such that for all $a\in\bN_0^2$ it holds
\begin{equation}
 \sup_{p\in\bR^2} |\partial^a_p(\Xi^{\varsigma_1,\varsigma_2}(p)/\omega(p))|\leq C^{1+|a|}\,a!\,.
\end{equation}
Noting that $\vartheta(2\varepsilon\omega(p))\,\dot\vartheta(t\omega(p))$ vanishes unless $t\omega(p)\in(1/2,1)$ and $t>\varepsilon$ and $\vartheta(t\omega(p)/2)$ vanishes unless $t\omega(p)\in[0,2]$ and using the general Leibniz rule we show there exists \mbox{$C\in(0,\infty)$} such that for all $\varepsilon\in[0,1]$, $t\in(0,1]$ it holds
\begin{equation}
 \partial^b_p \left(p^a\,\vartheta(2\varepsilon\omega(p))\,\dot\vartheta(t\omega(p))\,
 \Xi^{\varsigma_1,\varsigma_2}(p)/\omega(p)\right)
 \leq C^{1+|b|}\,(b!)^{1/{\zeta_\flat}}\,t^{|b|-|a|}
 \,1_{[0,2]}(t\omega(p))
\end{equation}
and
\begin{equation}
 \partial^b_p \left(p^a\,\vartheta(\varepsilon/2t)\,\vartheta(t\omega(p)/2)\,1^{\varsigma_1,\varsigma_2}
 \right)
 \leq C^{1+|b|}\,(b!)^{1/{\zeta_\flat}}\,t^{|b|-|a|}
 \,1_{[0,2]}(t\omega(p)).
\end{equation}
Using Eqs.~\eqref{eq:dot_G_b_a} and~\eqref{eq:dot_G_b_a_minus} we obtain that there exists $C\in(0,\infty)$ such that for all $\varepsilon\in[0,1]$, $t\in(0,1]$ and $k\in\bN_0$ it holds
\begin{equation}
 (|x|/t)^k~|\partial^a G_{\varepsilon;t}^{\pm,\varsigma_1,\varsigma_2}(x)|
 \leq C^{1+k}\,(k!)^{1/{\zeta_\flat}}\,t^{-|a|-2}/2
\end{equation}
We conclude that there exists $C\in[\re,\infty)$ such that for all $\varepsilon\in[0,1]$, $t\in(0,1]$ and $l\in\bN_+$ it holds
\begin{equation}
 (|x|/t)^{{\zeta_\flat} l}~|\partial^a G_{\varepsilon;t}^{\pm,\varsigma_1,\varsigma_2}(x)|
 \leq C^{1+k}\,(k!)^{1/{\zeta_\flat}}\,t^{-|a|-2}/2
 \leq C^{1+3l/{\zeta_\flat}}\,l!\,t^{-|a|-2}/2
\end{equation}
where $k=\ceil{{\zeta_\flat} l}\leq l\wedge ({\zeta_\flat} l+1)$. The second of the above bounds follows from the estimates $(k!)^{1/{\zeta_\flat}}\leq k^{k/{\zeta_\flat}}\leq l^{l+1/{\zeta_\flat}}\leq \re^{2l/{\zeta_\flat}} l!$. This implies that the bound stated in the lemma holds true for $\delta=C^{-3/{\zeta_\flat}}/2$.
\end{proof}

\section{Scale decomposition of free field}\label{sec:free_field_decomposition}

In this section we introduce the scale decomposition of the free field $\varPsi_{\tau,\varepsilon;t}$ satisfying properties listed in Sec.~\ref{sec:strategy}. The scale decomposition of the free field is defined in the non-commutative probability space $(\sF,\fE)$ of the fermionic spacetime white noise $\xi$. We also introduce a filtration $(\sF_t)_{t\in[0,1]}$ in $\sF$ as well as the conditional expected value $\fE_t$ that will play an essential role in the proof of the estimates stated in Sec.~\ref{sec:maps}.

\begin{dfn}
Let $\mathfrak{h}$ be a Hilbert space. We denote by $\Gamma_{\mathrm{a}}(\mathfrak{h})$ the antisymmetric Fock space built on $\mathfrak{h}$. We denote by $\mathds{N}$ the number operator and by $\mathds{P}:=(-1)^{\mathds{N}}$ the parity operator. 
\end{dfn}

\begin{dfn}\label{dfn:white_noise}
Let
$
 \sH = \Gamma_{\mathrm{a}}(L^2([0,1]\times\bR^2\times\bF))^{\otimes 2}.
$
The creation and annihilation operators in $\Gamma_{\mathrm{a}}(L^2([0,1]\times\bR^2\times\bF))$ are denoted by $a^*(f,\varsigma)$ and $a(f,\varsigma)$, where $f\in L^2([0,1]\times\bR^2)$ and $\varsigma\in\bF$. 
The vacuum state in $\sH$ is denoted by $\Omega$. Let $\sB:=\sB(\sH)$ be the Banach algebra of bounded operators on $\sH$. The fermionic white noise is defined by
\begin{gather}
 \xi^{-,\varsigma}(f)\equiv \int_{[0,1]\times\bR^2} f(s,x)\, \xi^{-,\varsigma}(\rd s,\rd x)
 :=
 (\mathds{P}\otimes a^*(f,\varsigma) - a(f^\rc,\varsigma)\otimes\mathds{1}),
 \\
 \xi^{+,\varsigma}(f)\equiv \int_{[0,1]\times\bR^2} f(s,x)\, \xi^{+,\varsigma}(\rd s,\rd x) 
 :=
 (a^*(f,\varsigma)\otimes\mathds{1} + \mathds{P}\otimes a(f^\rc,\varsigma))
\end{gather}
for all $f\in L^2([0,1]\times\bR^2)$, $\varsigma\in\bF$, where $f^\rc$ denotes the complex conjugate of the function~$f$. We define $\sF$ to be the unital Banach subalgebra of $\sB$ generated by $\xi^{\pm,\varsigma}(f)$, where \mbox{$f\in L^2([0,1]\times \bR^2)$}, $\varsigma\in\bF$. We define a unique grading \mbox{$\sF=\sF^+\oplus\sF^-$} such that $\xi^{\pm,\varsigma}(f)\in\sF^-$ and such that the product in $\sF$ satisfies the conditions $\sF^\pm\sF^\pm\subset \sF^+$ and $\sF^\pm\sF^\mp\subset \sF^-$. For $s\in[0,1]$ and $t\in[s,1]$ we define $\sF_{s,t}$ to be the unital Banach subalgebra of $\sF$ generated by $\xi^{\pm,\varsigma}(f)$, where $f\in L^2([0,1]\times \bR^2)$, $\varsigma\in\bF$, $\supp\,f\subset[s,t]\times \bR^2$. For $t\in[0,1]$ we we set $\sF_t:=\sF_{1,t}$. 
\end{dfn}

\begin{rem}
We note that that $\sF$ is an infinite-dimensional Grassmann algebra. One shows that elements of $\sF^-$ anti-commute with each other and elements of $\sF^+$ commute with all elements of $\sF$. 
\end{rem}

\begin{rem}\label{rem:bound_noise}
It what follows, the following bound
\begin{equation}
 \textstyle\sum_{\varsigma\in\bF}\|f^\varsigma\|_{L^2([0,1]\times\bR^2)}^2\leq \left\|\textstyle\sum_{\varsigma\in\bF}\xi^{\pm,\varsigma}(f^\varsigma)\right\|_\sB^2\leq  2\textstyle\sum_{\varsigma\in\bF}\|f^\varsigma\|_{L^2([0,1]\times\bR^2)}^2
\end{equation}
valid for all $f=(f^\varsigma)_{\varsigma\in\bF} \in L^2([0,1]\times\bR^2)^\bF$ will play an important role. To prove the first bound we apply the operator to the vacuum state $\Omega$. The second of the above bounds is an immediate consequence of the identity
\begin{equation}
 \|\textstyle\sum_{\varsigma\in\bF}a(f^\varsigma,\varsigma)\|_\sB  = \|\textstyle\sum_{\varsigma\in\bF}a^*(f^\varsigma,\varsigma)\|_\sB= \textstyle\sum_{\varsigma\in\bF}\|f^\varsigma\|_{L^2([0,1]\times\bR^2)},
\end{equation}
which follows easily from the anti-commutation relations of the fermionic creation and annihilation operators, for details see for example~\cite[Proposition 5.2.2]{BR97}.
\end{rem}

\begin{dfn}[Expected value]
We define a continuous linear functional $\fE\,:\,\sF\to\bC$ by the equality $\fE F:=(\Omega,F\Omega)$ for all $F\in\sF$.
\end{dfn}

\begin{rem}\label{rem:xi_E}
Note that it holds
\begin{equation}
 \fE(\xi^{\pm,\varsigma}(f)\,\xi^{\mp,\tilde\varsigma}(\tilde f)) = \pm 1_{\varsigma,\tilde\varsigma}\,\langle f,\tilde f\rangle_{L^2([0,1]\times\bR^2)},
 \qquad
 \fE(\xi^{\pm,\varsigma}(f)\,\xi^{\pm,\tilde\varsigma}(\tilde f)) = 0
\end{equation}
for all $f,\tilde f\in L^2([0,1]\times\bR^2)$, $\varsigma,\tilde\varsigma\in\bF$. 
\end{rem}

\begin{dfn}[Conditional expected value]\label{dfn:conditional_exp}
A family of continuous linear functionals \mbox{$(\fE_t\,:\,\sF\to\sF)_{t\in[0,1]}$} is called the conditional expected value if
\begin{enumerate}
 \item[(0)] $\fE_t\,:\,\sF\to \sF_t$ for all $t\in[0,1]$, 
 
 \item[(1)] $\fE_t F=\fE F$ for all $t\in[0,1]$ and $F\in\sF_{0,t}$,
 \item[(2)] $\fE_t=\fE_t\fE_s:\,\sF \to \sF_t$ for $s\in[0,1]$ and $t\in[s,1]$,
 \item[(3)] $\fE_t(GFH)=G\fE_t(F)H$ for all $t\in[0,1]$ and $G,H\in\sF_t$, $F\in\sF$,
 \item[(4)] $\|\fE_t F\|_\sB \leq \|F\|_\sB$ for all $t\in[0,1]$ and $F\in\sF$.
\end{enumerate}
\end{dfn}
\begin{rem}
For the construction of the conditional expected value see~\cite[Appendix~A]{DFG22}.
\end{rem}
\begin{rem}
Note that the condition~(1) implies in particular that $\fE_1=\fE$. 
\end{rem}
\begin{rem}
Let $\sG$ be a Grassmann algebra. We define $\sF\otimes_{\mathrm{alg}}\sG$ to be the algebraic graded tensor product of graded algebras $\sF$ and $\sG$. We introduce the family of functionals \mbox{$(\fE_t\,:\,\sF\otimes_{\mathrm{alg}}\sG\to\sF_t\otimes_{\mathrm{alg}}\sG)_{t\in[0,1]}$} defined uniquely by the equality $\fE_t(F\otimes g)=(\fE_tF)\otimes g$ for all $F\in\sF$, $g\in\sG$ and $t\in[0,1]$. The functional $\fE\,:\,\sF\otimes_{\mathrm{alg}}\sG\to\sG$ is defined in an analogous way.
\end{rem}

\begin{dfn}\label{dfn:Psi}
Let $\tau,\varepsilon\in[0,1]$, $s\in[0,1]$ and $t\in[s,1]$. We define operator-valued distributions
\begin{equation}
 \varPsi_{\tau,\varepsilon;t,s}
 \equiv
 (\varPsi^-_{\tau,\varepsilon;t,s},\varPsi^+_{\tau,\varepsilon;t,s})
 \equiv
 (\bar\varPsi_{\tau,\varepsilon;t,s},\ubar\varPsi_{\tau,\varepsilon;t,s})
 \equiv (\varPsi^{\sigma}_{\tau,\varepsilon;t,s})_{\sigma\in\bFF}\in \sS'(\bT_\tau^2,\sF_{t,s}\cap\sF^-)^\bFF
\end{equation}
by the equation
\begin{equation}
 \langle\varPsi^{\pm,\varsigma}_{\tau,\varepsilon;t,s},\varphi\rangle
 :=
 \sum_{\tilde\varsigma\in\bF}\int_{[s,t]\times\bT_\tau^2} 
 (G^{\pm;\varsigma,\tilde\varsigma}_{\tau,\varepsilon;u}\ast\varphi)(x)\,\xi^{\pm,\tilde\varsigma}(\rd u,\rd x)
\end{equation}
for all $\varsigma\in\bF$ and $\varphi\in\sS(\bR^2)$. We call $\varPsi_{\tau,\varepsilon}:=\varPsi_{\tau,\varepsilon;1,0}$ the free field with IR cutoff $\tau\in[0,1]$ and UV cutoff $\varepsilon\in[0,1]$. We also define $\varPsi_{\tau,\varepsilon;s}:=\varPsi_{\tau,\varepsilon;1,s}$. We omit $\tau$ and $\varepsilon$ if $\tau=0$ and $\varepsilon=0$.
\end{dfn}

\begin{rem}\label{rem:Psi_finite}
Observe that by Remark~\ref{rem:G}~(B) for $\tau,\varepsilon\in(0,1]$ it holds
\begin{equation}
 \varPsi_{\tau,\varepsilon}^\sigma(x) := \tau^2\sum_{p\in\Lambda_{\tau,\varepsilon}} (\fF_\tau\varPsi_{\tau,\varepsilon}^\sigma)(p)\,\re^{\ri p\cdot x}.
\end{equation} 
In particular, $\varPsi_{\tau,\varepsilon}^{\otimes m}=0$ for all $m\in\bN_+$ provided $m>|\bFF|\,|\Lambda_{\tau,\varepsilon}|$, where $|S|$ denotes the number of elements of a set $S$.
\end{rem}

\begin{rem}
Let $\sG$ be a Grassmann algebra. For $\phi\in C^\infty(\bT_\tau^2)^{\bFF}\otimes_{\mathrm{alg}} \sG$ we define the paring $\langle\varPsi_{\tau,\varepsilon;t,s},\phi\rangle_\tau\in \sF\otimes_{\mathrm{alg}}\sG$ by the equality
\begin{equation}
 \langle\varPsi_{\tau,\varepsilon;t,s},\phi\rangle_\tau
 :=
 \sum_{i=1}^n\langle\varPsi_{\tau,\varepsilon;t,s},\phi_i\rangle_\tau\otimes g_i,
 \qquad
 \phi = \sum_{i=1}^n \phi_i\otimes g_i,
\end{equation}
for all $n\in\bN_+$, $\phi_1,\ldots,\phi_n\in C^\infty(\bT_\tau^2)^\bFF$ and $g_1,\ldots,g_n\in \sG$. We identify $A\in\sF$ and $g\in\sG$ with 
\begin{equation}
 A\otimes 1\in \sF\otimes_{\mathrm{alg}}\sG
 \qquad\mathrm{and}\qquad
 \mathds{1}\otimes g\in \sF\otimes_{\mathrm{alg}}\sG,
\end{equation}
respectively, and use these identifications to make sense of $A+g,Ag\in \sF\otimes_{\mathrm{alg}}\sG$.
\end{rem}

\begin{rem}\label{rem:Wick}
Let $\tau,\varepsilon\in[0,1]$, $s\in[0,1]$ and $t\in[s,1]$ be such that $\varepsilon>0$ or $s>0$. Using Remark~\ref{rem:bound_noise}, the fact that $G^{\pm;\varsigma,\tilde\varsigma}_{\varepsilon;t}=0$ for $t\in(0,\varepsilon/2]$ by Remark~\ref{rem:G}~(B) and Lemma~\ref{lem:dot_G_infty} one shows  that $\varPsi_{\tau,\varepsilon;t,s}\in C^\infty(\bT_\tau^2,\sF_{t,s})^\bFF$. Note that it holds $\fE\varPsi^{\sigma}_{\tau,\varepsilon;t}(x)=0$ and
\begin{equation}\label{eq:covariance_Psi_H}
 \fE(\varPsi^{\sigma_1}_{\tau,\varepsilon;t}(x)\,\varPsi^{\sigma_2}_{\tau,\varepsilon;s}(y)) 
 =
 G^{\sigma_1,\sigma_2}_{\tau,\varepsilon;t}(x-y),
\end{equation} 
where $G_{\tau,\varepsilon;t}\in C^\infty(\bR^2)^{\bFF^2}$ was introduced in Def.~\ref{dfn:covariance}. Using the equality $\varPsi_{\tau,\varepsilon;t,s}=\varPsi_{\tau,\varepsilon;s}-\varPsi_{\tau,\varepsilon;t}$ one shows that it holds $\fE\varPsi^{\sigma}_{\tau,\varepsilon;t,s}(x)=0$ and 
\begin{multline}
 \fE(\varPsi^{\sigma_1}_{\tau,\varepsilon;t,s}(x)\,\varPsi^{\sigma_2}_{\tau,\varepsilon;t,s}(y)) 
 =
 \fE(\,(\varPsi^{\sigma_1}_{\tau,\varepsilon;s}(x)-\varPsi^{\sigma_1}_{\tau,\varepsilon;t}(x))\,(\varPsi^{\sigma_2}_{\tau,\varepsilon;s}(y)
 -\varPsi^{\sigma_2}_{\tau,\varepsilon;t}(y))\,)
 \\
 =
 (G_{\tau,\varepsilon;s}^{\sigma_1,\sigma_2}
 -
 G_{\tau,\varepsilon;t}^{\sigma_1,\sigma_2})(x-y).
\end{multline} 
The expected values of products of more than two fields $\varPsi_{\tau,\varepsilon;t,s}$ can be expressed in terms of the covariance with the use of the generating functional
\begin{equation}\label{eq:generating_functional_free_measure}
 C^\infty(\bT_\tau^2)^{\bFF}\otimes_{\mathrm{alg}} \sG^-\ni\phi\mapsto\fE\exp(\langle\varPsi_{\tau,\varepsilon;t,s},\phi\rangle_\tau)=\exp(-\langle \phi,(G_{\varepsilon;s}-G_{\varepsilon;t})\ast\phi\rangle_\tau/2)\in\sG,
\end{equation}
where $\sG$ is an infinite-dimensional Grassmann algebra. In order to prove~\eqref{eq:generating_functional_free_measure} one decomposes $\varPsi_{\tau,\varepsilon;t,s}$ into a sum of two terms $\varPsi_{\tau,\varepsilon;t,s}^{(\pm)}$ involving only the creation and annihilation operators, applies the Baker–Campbell–Hausdorff formula 
\begin{equation}
 \exp(\langle\varPsi_{\tau,\varepsilon;t,s},\phi\rangle_\tau)
 =
 \exp(\langle\varPsi_{\tau,\varepsilon;t,s}^{(+)},\phi\rangle_\tau)
 \,
 \exp(\langle\varPsi_{\tau,\varepsilon;t,s}^{(-)},\phi\rangle_\tau)
 \,
 \exp(-[\langle\varPsi_{\tau,\varepsilon;t,s}^{(+ )},\phi\rangle_\tau,\langle\varPsi_{\tau,\varepsilon;t,s}^{(-)},\phi\rangle_\tau]/2),
\end{equation}
where $[\Cdot,\Cdot]$ denotes the commutator, and uses the fact that the vacuum state $\Omega$ belongs to the intersection of the kernels of the annihilation operators.
\end{rem}

\begin{lem}\label{lem:free_measure}
Let $\tau,\varepsilon\in(0,1]$. For all $F\in\sN(C^\infty(\bT_\tau^2)^\bFF)$ it holds
\begin{equation}
 \nu_{\tau,\varepsilon}(F)
 \equiv
 \frac{\int F(\vartheta_\varepsilon\ast\psi_{\tau,\varepsilon})\exp(-A_{\tau}(\psi_{\tau,\varepsilon}))\,\rd\psi_{\tau,\varepsilon}}{\int \exp(-A_{\tau}(\psi_{\tau,\varepsilon}))\,\rd\psi_{\tau,\varepsilon}}
 = \fE F(\varPsi_{\tau,\varepsilon}).
\end{equation}
\end{lem}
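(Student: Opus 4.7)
Plan.\ I will verify the identity by reducing it to a check of generating functionals and matching Gaussian covariances on the two sides. Both $F\mapsto\nu_{\tau,\varepsilon}(F)$ and $F\mapsto\fE F(\varPsi_{\tau,\varepsilon})$ are linear on $\sN(C^\infty(\bT_\tau^2)^\bFF)$, and by Remark~\ref{rem:graded_uniqueness} any functional is uniquely specified by the map $\phi\mapsto F(\phi)\in\sG$ for $\phi\in C^\infty(\bT_\tau^2)^\bFF\otimes_{\mathrm{alg}}\sG^-$ with $\sG$ an infinite-dimensional Grassmann algebra taken independent of $\sG_{\tau,\varepsilon}$ and $\sF$. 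It therefore suffices to prove
\begin{equation}
 \nu_{\tau,\varepsilon}\!\big(\exp(\langle\Cdot,\phi\rangle_\tau)\big)
 =
 \fE\exp(\langle\varPsi_{\tau,\varepsilon},\phi\rangle_\tau)
 \qquad\text{in }\sG
\end{equation}
for every such $\phi$; recall that by Remark~\ref{rem:psi_finite} and $\phi\in\sG^-$ both exponentials collapse to finite sums.

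The right-hand side follows from Remark~\ref{rem:Wick}: specializing the generating functional~\eqref{eq:generating_functional_free_measure} to $(t,s)=(1,0)$ and using $G_{\varepsilon;0}=G_\varepsilon$ together with $G_{\varepsilon;1}=0$ (the latter because $\vartheta(\omega(p))=0$ for all $p\in\bR^2$ since $\omega(p)\ge 1$), one obtains $\exp(-\langle\phi,G_\varepsilon\ast\phi\rangle_\tau/2)=\exp(-\langle\phi,G_{\tau,\varepsilon}\ast\phi\rangle_\tau/2)$, the last equality being Remark~\ref{rem:convolution_tau}. For the left-hand side I perform the classical fermionic Gaussian integration with a source. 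Using that $\vartheta_\varepsilon$ is even I first move the convolution onto the test field, $\langle\vartheta_\varepsilon\ast\psi_{\tau,\varepsilon},\phi\rangle_\tau=\langle\psi_{\tau,\varepsilon},\tilde\phi\rangle_\tau$ with $\tilde\phi:=\vartheta_\varepsilon\ast\phi$; by construction of $\vartheta_\varepsilon$ (Def.~\ref{dfn:theta}) the Fourier support of $\tilde\phi$ lies inside $\Lambda_{\tau,\varepsilon}$, so $\tilde\phi\in\mathcal{C}_{\tau,\varepsilon}\otimes_{\mathrm{alg}}\sG^-$. I next choose $\eta\in\mathcal{C}_{\tau,\varepsilon}\otimes_{\mathrm{alg}}\sG^-$ that inverts the kinetic operator $1+\slashed{\partial}$ appearing in $A_\tau$ against $\tilde\phi$ in the bi-spinor sense (so that $\ubar\eta$ is built from $\tilde\phi^-$ and $\bar\eta$ from $\tilde\phi^+$ via the off-diagonal blocks of the propagator $G$ of Def.~\ref{dfn:covariance}), and apply the substitution $\psi_{\tau,\varepsilon}\to\psi_{\tau,\varepsilon}+\eta$ inside the Berezin integral over $\sG_{\tau,\varepsilon}\otimes_{\mathrm{alg}}\sG$. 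This substitution is legitimate by translation invariance of the Berezin integral (Lemma~\ref{lem:berezin_translation}), and completes the square: the linear source in $\psi_{\tau,\varepsilon}$ is cancelled and a common factor $\exp(-\langle\tilde\phi,G\ast\tilde\phi\rangle_\tau/2)$ is produced which survives the normalization ratio. Using $G_\varepsilon=\vartheta_\varepsilon\ast G\ast\vartheta_\varepsilon$ together with Remark~\ref{rem:convolution_tau} the left-hand side becomes $\exp(-\langle\phi,G_{\tau,\varepsilon}\ast\phi\rangle_\tau/2)$, matching the right-hand side.

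The main care is with the Grassmann completion-of-the-square: one must pick the bi-spinor components of $\eta$ with the sign conventions induced by Def.~\ref{dfn:gamma}, Def.~\ref{dfn:berezin_def} and the pairing conventions of Def.~\ref{dfn:dist_K}, so that, after accounting for all Grassmann anti-commutations between $\psi_{\tau,\varepsilon}$, $\eta$ and $\tilde\phi$, the residual quadratic term has exactly the sign matching~\eqref{eq:generating_functional_free_measure}. Since everything takes place in the finite-dimensional Grassmann algebra $\sG_{\tau,\varepsilon}$ attached to the finite Fourier set $\Lambda_{\tau,\varepsilon}$, on which $1+\slashed{\partial}$ is invertible, this is purely an algebraic check and a special case of the standard fermionic Gaussian integration formula (see e.g.~\cite[Appendix~A]{CCS13}); no new estimates or analytic input beyond the material of Sec.~\ref{sec:gross-neveu}--\ref{sec:free_field_decomposition} are required.
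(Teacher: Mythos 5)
Your proposal is correct and follows essentially the same route as the paper's proof: reduce to generating functionals $F_\eta(\phi)=\exp(\langle\phi,\eta\rangle_\tau)$, evaluate the operator side via \eqref{eq:generating_functional_free_measure}, and evaluate the Berezin side as a Gaussian integral with source, identifying the covariances through $\vartheta_\varepsilon\ast G\ast\vartheta_\varepsilon=G_\varepsilon$, the support property $\supp\fF_\tau\fP_\tau\vartheta_\varepsilon\subset\Lambda_{\tau,\varepsilon}$ and Remark~\ref{rem:convolution_tau}. The only difference is cosmetic: where you complete the square by hand using Lemma~\ref{lem:berezin_translation}, the paper directly invokes the Gaussian Berezin integration formula of \cite[Theorem~A.16]{CCS13} (which is also where your deferred sign/algebra check ultimately lands).
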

\begin{rem}
Note that in particular it holds
\begin{equation}
 \int\psi^{\sigma_1}(x)\psi^{\sigma_2}(y)\, \nu_{\tau,\varepsilon}(\rd\psi)
 =
 G_{\tau,\varepsilon}^{\sigma_1,\sigma_2}(x-y).
\end{equation}
\end{rem}

\begin{proof}
In view of Remarks~\ref{rem:psi_finite} and~\ref{rem:Psi_finite} it suffices to prove the stated equality for functionals $F=(F^m)_{m\in\bN_0}\in\sN(C^\infty(\bT_\tau^2)^\bFF)$ such that $F^m\neq 0$ only for finitely many \mbox{$m\in\bN_0$}. In consequence, it is enough to prove the stated equality for functionals $F=F_\eta$ for all \mbox{$\eta\in C^\infty(\bT_\tau^2)^{\bFF}\otimes_{\mathrm{alg}} \sG^-$}, where the functional $F_\eta\in\sN(C^\infty(\bT_\tau^2)^\bFF)$ is defined by the equality $F_\eta(\phi):=\exp(\langle\phi,\eta\rangle_\tau)$ for all $\phi\in C^\infty(\bT_\tau^2)^{\bFF}\otimes_{\mathrm{alg}} \sG^-$. By~\cite[Theorem~A.16]{CCS13} it holds
\begin{equation}
 \frac{\int F_\eta(\psi_{\tau,\varepsilon})\exp(-A_{\tau}(\psi_{\tau,\varepsilon}))\,\rd\psi_{\tau,\varepsilon}}{\int \exp(-A_{\tau}(\psi_{\tau,\varepsilon}))\,\rd\psi_{\tau,\varepsilon}} 
 =
 \exp(-\langle \eta,\tilde G_{\tau,\varepsilon}\ast_\tau\eta\rangle_\tau/2),
\end{equation}
where $\fF_\tau\tilde G_{\tau,\varepsilon}(p):=1_{\Lambda_{\tau,\varepsilon}}(p)\,\fF G(p)$. Note that 
\begin{equation}
 F_\eta(\vartheta_\varepsilon\ast\psi_{\tau,\varepsilon})=F_{\vartheta_\varepsilon\ast\eta}(\psi_{\tau,\varepsilon}).
\end{equation}
Using the fact that $\supp\,\fF_\tau\fP_\tau\vartheta_\varepsilon\subset \Lambda_{\tau,\varepsilon}$ we show that
\begin{equation}
 \vartheta_\varepsilon\ast\tilde G_{\tau,\varepsilon}\ast \vartheta_\varepsilon
 =
 \vartheta_\varepsilon\ast
 G_{\tau}
 \ast\vartheta_\varepsilon
 =
 G_{\tau,\varepsilon}
\end{equation}
and
\begin{equation}
 \frac{\int F_\eta(\vartheta_\varepsilon\ast\psi_{\tau,\varepsilon})\exp(-A_{\tau}(\psi_{\tau,\varepsilon}))\,\rd\psi_{\tau,\varepsilon}}{\int \exp(-A_{\tau}(\psi_{\tau,\varepsilon}))\,\rd\psi_{\tau,\varepsilon}} 
 =
 \exp(-\langle \eta,G_{\tau,\varepsilon}\ast_\tau\eta\rangle_\tau/2).
\end{equation}
On the other hand,
\begin{equation}
 \fE F_\eta(\varPsi_{\tau,\varepsilon})=\exp(-\langle \eta,G_{\tau,\varepsilon}\ast_\tau\eta\rangle_\tau/2)
\end{equation}
by Eq.~\eqref{eq:generating_functional_free_measure} and Remark~\ref{rem:convolution_tau}. This finishes the proof.
\end{proof}

\begin{lem}\label{lem:expected_F_incremenet}
Let $\tau,\varepsilon\in(0,1]$, $s\in[0,1]$ and $t\in[s,1]$. For all $F\in\sN(C^\infty(\bT_\tau^2)^\bFF)$ and $\phi\in C^\infty(\bT_\tau^2)^{\bFF}\otimes_{\mathrm{alg}}\sG^-$ it holds
\begin{equation}
 \fE F(\varPsi_{\tau,\varepsilon;t,s}+\phi) = (\exp(\fD_{\tau,\varepsilon;t,s}/2)F)(\phi),
\end{equation}
where the map $\fD_{\tau,\varepsilon;t,s}$ is defined by the equation
\begin{equation}
 (\fD_{\tau,\varepsilon;t,s}F)(\phi) := \langle\rD_\phi^2 F(\phi),(G_{\tau,\varepsilon;s}-G_{\tau,\varepsilon;t})(\Cdot-\Cdot)\rangle_\tau
\end{equation}
for all $F\in\sN(C^\infty(\bT_\tau^2)^\bFF)$ and $\phi\in C^\infty(\bT_\tau^2)^{\bFF}\otimes_{\mathrm{alg}}\sG^-$.
\end{lem}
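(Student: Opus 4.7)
The plan is to reduce the identity to a generating-function computation on the one-parameter family of exponential functionals
\[
F_\eta(\phi):=\exp(\langle\phi,\eta\rangle_\tau),\qquad \eta\in C^\infty(\bT_\tau^2)^\bFF\otimes_{\mathrm{alg}}\sG^-,
\]
and then invoke \eqref{eq:generating_functional_free_measure}. As a first step I would argue that it suffices to check the equality on this family. By Remark~\ref{rem:Psi_finite} one has $\varPsi_{\tau,\varepsilon;t,s}^{\otimes m}=0$ for $m$ larger than $|\bFF|\,|\Lambda_{\tau,\varepsilon}|$, so for any fixed $F\in\sN(C^\infty(\bT_\tau^2)^\bFF)$ and $\phi$ the quantity $F(\varPsi_{\tau,\varepsilon;t,s}+\phi)$ depends on only finitely many kernels $F^m$; for the same reason the formal series $\exp(\fD_{\tau,\varepsilon;t,s}/2)F$ evaluated at~$\phi$ contains only finitely many nonzero terms, so both sides of the claim are well-defined finite linear combinations of the $F^m$. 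Since each antisymmetric kernel $F^m$ can be recovered from the family $\{F_\eta\}_\eta$ by polarization (as in the proof of Lemma~\ref{lem:free_measure}), verifying the identity for $F=F_\eta$ suffices.

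For $F=F_\eta$ the left-hand side is straightforward. Because $\langle\phi,\eta\rangle_\tau$ and $\langle\varPsi_{\tau,\varepsilon;t,s},\eta\rangle_\tau$ belong to the even parts of their (graded-tensor) ambient Grassmann algebras, they commute, so
\[
F_\eta(\varPsi_{\tau,\varepsilon;t,s}+\phi)=\exp(\langle\varPsi_{\tau,\varepsilon;t,s},\eta\rangle_\tau)\,\exp(\langle\phi,\eta\rangle_\tau),
\]
and the second factor can be pulled outside $\fE$. Applying the generating functional identity \eqref{eq:generating_functional_free_measure} yields
\[
\fE F_\eta(\varPsi_{\tau,\varepsilon;t,s}+\phi)=\exp\!\bigl(-\langle\eta,(G_{\tau,\varepsilon;s}-G_{\tau,\varepsilon;t})\ast_\tau\eta\rangle_\tau/2\bigr)\,\exp(\langle\phi,\eta\rangle_\tau).
\]

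For the right-hand side I would compute $\rD^2_\phi F_\eta(\phi)$ directly from Def.~\ref{dfn:functional}: using $\partial_u^2F_\eta(\phi+u\psi)|_{u=0}=\langle\psi,\eta\rangle_\tau^2\,F_\eta(\phi)$ and carefully tracking the sign produced by commuting the two odd Grassmann factors inside $\langle\psi,\eta\rangle_\tau^2$, one finds
\[
\langle\rD^2_\phi F_\eta(\phi),\psi^{\otimes 2}\rangle_\tau = -\langle\eta^{\otimes 2},\psi^{\otimes 2}\rangle_\tau\,F_\eta(\phi),
\]
so that
\[
(\fD_{\tau,\varepsilon;t,s}F_\eta)(\phi)=-\langle\eta,(G_{\tau,\varepsilon;s}-G_{\tau,\varepsilon;t})\ast_\tau\eta\rangle_\tau\,F_\eta(\phi).
\]
The scalar coefficient is in $\sG^+$ and commutes with everything, so $\fD_{\tau,\varepsilon;t,s}$ acts on $F_\eta$ by scalar multiplication, and the exponential series reduces to
\[
(\exp(\fD_{\tau,\varepsilon;t,s}/2)F_\eta)(\phi)=\exp\!\bigl(-\langle\eta,(G_{\tau,\varepsilon;s}-G_{\tau,\varepsilon;t})\ast_\tau\eta\rangle_\tau/2\bigr)\,F_\eta(\phi),
\]
which agrees with the left-hand side.

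The main obstacle is purely bookkeeping: carefully handling the Grassmann signs when computing $\rD^2_\phi F_\eta$ and when commuting elements of $\sG^-$ past one another, so as to secure the correct sign in $(\fD_{\tau,\varepsilon;t,s}F_\eta)(\phi)$ and thereby reproduce the minus sign in the exponent of \eqref{eq:generating_functional_free_measure}. The reduction step and the convergence of the exponential series are both trivial given the finite-mode truncation coming from the cutoffs.
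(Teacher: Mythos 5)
Your proposal is correct and follows essentially the same route as the paper's proof: reduce to the exponential functionals $F_\eta$ (as in the proof of Lemma~\ref{lem:free_measure}), show that $\fD_{\tau,\varepsilon;t,s}$ acts on $F_\eta$ by multiplication by an even element proportional to $\langle\eta,(G_{\tau,\varepsilon;s}-G_{\tau,\varepsilon;t})\ast_\tau\eta\rangle_\tau$, resum the exponential series, and compare with the generating functional~\eqref{eq:generating_functional_free_measure}. The only discrepancy is the sign of $\fD_{\tau,\varepsilon;t,s}F_\eta$, where your minus sign is the one consistent with the final identity (the paper's intermediate display appears to carry a sign typo), so nothing substantive changes.
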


\begin{proof}
By the argument from the proof of Lemma~\ref{lem:free_measure} it suffices to prove the statement for all functionals of the form $F=F_\eta$ for some $\eta\in C^\infty(\bT_\tau^2)^{\bFF}\otimes_{\mathrm{alg}} \sG^-$, where $F_\eta(\phi):=\exp(\langle\phi,\eta\rangle_\tau)$ for all $\phi\in C^\infty(\bT_\tau^2)^{\bFF}\otimes_{\mathrm{alg}} \sG^-$. By direct computation we obtain
\begin{equation}
 \fD_{\tau,\varepsilon;t,s}F_\eta
 =
 \langle \eta,(G_{\tau,\varepsilon;s}-G_{\tau,\varepsilon;t})\ast_\tau\eta\rangle_\tau
 ~
 F_\eta.
\end{equation}
Since $\langle \eta,(G_{\tau,\varepsilon;s}-G_{\tau,\varepsilon;t})\ast_\tau\eta\rangle_\tau\in\sG^+$ we arrive at
\begin{multline}
 \exp(\fD_{\tau,\varepsilon;t,s}/2)F_\eta
 =
 \sum_{n=0}^\infty \frac{1}{2^n n!}\fD^n_{\tau,\varepsilon;t,s}F_\eta
 =
 \sum_{n=0}^\infty \frac{1}{2^n n!}
 \langle \eta,(G_{\tau,\varepsilon;s}-G_{\tau,\varepsilon;t})\ast_\tau\eta\rangle_\tau^n
 ~
 F_\eta
 \\
 =
 \exp(-\langle \eta,(G_{\tau,\varepsilon;s}-G_{\tau,\varepsilon;t})\ast_\tau\eta\rangle_\tau/2)
 ~
 F_\eta.
\end{multline}
On the other hand, 
\begin{equation}
 \fE F_\eta(\varPsi_{\tau,\varepsilon;t,s}+\phi)
 =
 \fE F_\eta(\varPsi_{\tau,\varepsilon;t,s})~F_\eta(\phi)
 =
 \exp(-\langle \eta,(G_{\tau,\varepsilon;s}-G_{\tau,\varepsilon;t})\ast_\tau\eta\rangle_\tau/2)
 ~
 F_\eta(\phi)
\end{equation}
by Eq.~\eqref{eq:generating_functional_free_measure} and Remark~\ref{rem:convolution_tau}. This finishes the proof.
\end{proof}

\begin{dfn}\label{dfn:sC}
The vector space $\sC$ consists of continuous functions $\phi\,:\,\bR^2\to\sB$ such that the following norm
\begin{equation}
 \|\phi\|_{\sC}:=\sup_{x\in\bR^2}\|\phi(x)\|_\sB
\end{equation}
is finite. For $\phi\in\sC$ we define $\|\phi\|_{\tilde\sC}:=\|\tilde w\,\phi\|_\sC$, where $\tilde w\in C^\infty(\bR^2)$ is defined by the equality $\tilde w(x)=(1+|x|)^{-1/2}$ for all $x\in\bR^2$.
\end{dfn}

\begin{dfn}\label{dfn:Besov}
We denote by $(\Delta_i)_{i\in\{-1,0,1,\ldots\}}$ the Littlewood-Paley blocks on $\bR^2$. For $\alpha\in\bR$ the Besov space $\sC^\alpha$ consists of linear and continuous maps
\begin{equation}
 \phi\,:\,\sS(\bR^2) \to \sB
\end{equation}
such that the following norm
\begin{equation}
 \|\phi\|_{\sC^\alpha}:=\sup_{i\in\{-1,0,1,\ldots\}}2^{\alpha i}\,\|\Delta_i \phi\|_\sC
\end{equation}
is finite. For $\phi\in\sC^\alpha$ we define
\begin{equation}
 \|\phi\|_{\tilde\sC^\alpha}\equiv \|\phi\|_{\sC^\alpha(\tilde w)}:=\sup_{i\in\{-1,0,1,\ldots\}}2^{\alpha i}\,\|\tilde w\,(\Delta_i \phi)\|_\sC.
\end{equation}
\end{dfn}

\begin{lem}\label{lem:Psi_bounds}
There exists $C\in(0,\infty)$ such that for all $\lambda\in(0,1]$, $\tau,\varepsilon\in[0,1]$, $t\in(0,1]$, $s\in(0,t]$, $\alpha\in(-\infty,-1/2)$ and $a\in\bA=\{0,1,2\}^2$, $\sigma\in\bFF$ it holds
\begin{enumerate}
 \item[(A)] $\|\partial^a\varPsi^\sigma_{\tau,\varepsilon;t,s}\|_\sC \leq C\, s^{-1/2-|a|}$,
 \item[(B)] $\|\tilde w(\partial^a\varPsi^\sigma_{t,s}-
 \partial^a\varPsi^\sigma_{\tau,\varepsilon;t,s})\|_\sC \leq C\,\lambda_{\tau\vee\varepsilon}^\kappa\,\lambda_s^{-\kappa}\,s^{-1/2-|a|}$,
 \item[(C)] $\varPsi^\sigma\in \sC^{-1/2}$ and $\lim_{\tau,\varepsilon\searrow0}\|\varPsi^\sigma-\varPsi^\sigma_{\tau,\varepsilon}\|_{\tilde\sC^\alpha}=0$,
 \item[(D)] $\limsup_{i\to\infty} 2^{-i/2} \|\Delta_i \varPsi^\sigma\|_{\sC}>0$,
\end{enumerate}
where $\lambda_s$ is a function of $\lambda$ introduced in Def.~\ref{dfn:lambda}.
\end{lem}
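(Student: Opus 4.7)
The starting point for all four parts is the white-noise stochastic-integral representation of Definition~\ref{dfn:Psi} combined with the two-sided $L^2$ bound on Grassmann white-noise integrals recalled in Remark~\ref{rem:bound_noise}. For (A), the upper half of that bound reduces the claim to showing
\[
 \sum_{\tilde\varsigma\in\bF}\int_s^t\!\int_{\bT_\tau^2}|\partial^a G^{\pm;\varsigma,\tilde\varsigma}_{\tau,\varepsilon;u}(x-y)|^2\,\rd y\,\rd u \lesssim s^{-2|a|-1}.
\]
The stretched-exponential pointwise bound of Lemma~\ref{lem:dot_G_infty} combined with a change of variables gives $\|\partial^a G^{\pm}_{\varepsilon;u}\|_{L^2(\bR^2)}^2\lesssim u^{-2|a|-2}$, and the periodization does not inflate this because the nontrivial translates $G^{\pm}_{\varepsilon;u}(\Cdot+n/\tau)$ with $n\neq 0$ are super-polynomially suppressed on the fundamental domain. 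Integrating $u^{-2|a|-2}$ from $s$ to $t$ and taking the square root yields (A).

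For (B) the same recipe would be applied to the difference $\partial^a G^{\pm}_u(x-y) - \partial^a G^{\pm;\varsigma,\tilde\varsigma}_{\tau,\varepsilon;u}(x-y)$, extended by the white noise outside $\bT_\tau^2$. This splits into a UV part coming from replacing $\vartheta(2\varepsilon\omega(p))$ by $1$, which by Remark~\ref{rem:G}(C) is supported on $u\in[s,4\varepsilon\wedge t]$ and therefore empty once $s\ge 4\varepsilon$, and an IR part consisting of the spurious periodic translates in $\fP_\tau G^{\pm}_{\varepsilon;u}$ together with the noise integrated outside the fundamental domain. The UV part is bounded by the argument of (A) restricted to the shorter interval, which in the unfavourable regime $s\le\tau\vee\varepsilon$ trivially matches $\lambda_{\tau\vee\varepsilon}^\kappa\lambda_s^{-\kappa}s^{-1/2-|a|}$ since the latter is $\ge s^{-1/2-|a|}$ there. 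The IR part is then handled in the weighted norm: the spurious translates and the excluded noise are located at distance $\gtrsim 1/\tau$ from the origin, where the stretched-exponential decay of Lemma~\ref{lem:dot_G_infty} easily beats the polynomial growth $\tilde w^{-1}(x)=(1+|x|)^{1/2}$, producing a bound of order $\tau^{1/2}\,s^{-1/2-|a|}$. Since $\tau^{1/2}\ll\lambda_\tau^\kappa$ for small $\tau$, this is absorbed into $\lambda_{\tau\vee\varepsilon}^\kappa\lambda_s^{-\kappa}s^{-1/2-|a|}$ in the regime $s>\tau\vee\varepsilon$. This is expected to be the main technical obstacle, as one must simultaneously juggle the weight, the two cutoffs and the $\lambda_t$-dressing.

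For (C) the Littlewood-Paley decomposition together with the splitting $\varPsi^\sigma=\varPsi^\sigma_{2^{-i},0}+\varPsi^\sigma_{2^{-i}}$ reduces $\|\Delta_i\varPsi^\sigma\|_{\sC}$ to $\|\Delta_i\varPsi^\sigma_{2^{-i},0}\|_{\sC}$ modulo a remainder concentrated at the boundary scale, and applying (A) with $s=2^{-i}$ yields $\|\Delta_i\varPsi^\sigma\|_\sC\lesssim 2^{i/2}$, i.e.\ $\varPsi^\sigma\in\sC^{-1/2}$. The convergence in $\tilde\sC^\alpha$ for $\alpha<-1/2$ would follow by applying (B) block-by-block and using the strict gap $\alpha<-1/2$ to absorb the $\lambda_s^{-\kappa}$ factor into a geometrically decaying sum. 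Finally, for (D) one applies the $L^2$ \emph{lower} bound of Remark~\ref{rem:bound_noise} at $x=0$ to obtain
\[
 \|(\Delta_i\varPsi^{-,\varsigma})(0)\|_\sB^2 \ge \sum_{\tilde\varsigma\in\bF}\int_0^1\!\int_{\bR^2}|\rho_i\ast G^{-;\varsigma,\tilde\varsigma}_u(y)|^2\,\rd y\,\rd u,
\]
where $\rho_i$ denotes the Littlewood-Paley kernel at scale $2^{-i}$. By Plancherel and the explicit formula $(\fF G^{-;\varsigma,\varsigma}_u)(p)=\vartheta(u\omega(p)/2)$, the $u$-integral produces a factor $\gtrsim 1/\omega(p)\sim 2^{-i}$ on the Fourier support of $\rho_i$, whereas the remaining Fourier integral is of order $2^{2i}$, giving $\|\Delta_i\varPsi^\sigma\|_\sC\gtrsim 2^{i/2}$ uniformly in large $i$ and hence (D).
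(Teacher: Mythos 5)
Your proposal follows the paper's own route essentially step by step: items (A) and (B) via the two-sided operator-norm bound of Remark~\ref{rem:bound_noise} together with the Gevrey kernel estimate of Lemma~\ref{lem:dot_G_infty}, with the UV part of the difference confined to $u\le 4\varepsilon$ by Remark~\ref{rem:G}~(C), the periodization/IR error controlled only in the weighted norm and of size $\tau^{1/2}s^{-1/2-|a|}$, finally absorbed via $\tau^{1/2}\lesssim\lambda_\tau^\kappa$; and item (D) via the lower half of Remark~\ref{rem:bound_noise} and Plancherel, exactly as in the paper. Two cosmetic points on (B): the UV piece is nonempty for all $s<4\varepsilon$, not only $s\le\tau\vee\varepsilon$, so you also need the harmless comparison $\lambda_{4\varepsilon}\simeq\lambda_\varepsilon$ (constants depending on $N$ only); and the $\tau^{1/2}$ in the IR piece does not come from the kernel decay ``beating'' $\tilde w^{-1}$ — away from the origin the plane and periodized fields differ by $O(s^{-1/2-|a|})$ in operator norm, and the smallness there is produced solely by $\tilde w(x)\lesssim\tau^{1/2}$ for $|x|\gtrsim1/\tau$, while near the origin the difference of kernels is exponentially small; this is precisely the three-region splitting in the paper and your sketch lands on the same final bound.

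The one step that does not work as written is the reduction in (C): you discard $\varPsi^\sigma_{2^{-i}}$ as a boundary remainder, keep $\Delta_i\varPsi^\sigma_{2^{-i},0}$, and propose to bound it by ``(A) with $s=2^{-i}$''. But in the paper's notation $\varPsi_{2^{-i},0}$ is the increment over $[0,2^{-i}]$, whose lower scale is $0$, so item (A) gives no information about it (formally the bound would be $0^{-1/2}$); what (A) with $s\sim2^{-i}$ controls is the other piece, $\varPsi_{2^{-i}}=\varPsi_{1,2^{-i}}$, through $\|\Delta_i\varPsi_{1,s}\|_\sC\lesssim\|\varPsi_{1,s}\|_\sC\lesssim s^{-1/2}$. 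The paper's argument is the mirror image of yours: for $s<c\,2^{-i}$ the block does not see the high-frequency increment, so $\Delta_i\varPsi_{\tau,\varepsilon}=\Delta_i\varPsi_{\tau,\varepsilon;1,s}$, and then (A), respectively (B), is applied block by block (with the factor $\lambda_{2^{-i}}^{-\kappa}$ beaten by $2^{(\alpha+1/2)i}$ exactly as you say). If you swap the roles of the two pieces — keep $\Delta_i\varPsi_{1,2^{-i}}$, kill or separately estimate $\Delta_i\varPsi_{2^{-i},0}$ by its Fourier localization — both halves of (C) go through with your ingredients. Also, for (D) your displayed lower bound treats only $\sigma=(-,\varsigma)$; the claim is for all $\sigma\in\bFF$, so one should add the identical computation with $\fF G^{+;\varsigma,\tilde\varsigma}_u(p)=-\dot\vartheta(u\omega(p))\,\Xi^{\varsigma,\tilde\varsigma}(p)/\omega(p)$, which yields the same factor $1/\omega(p)$ after the $u$-integration.
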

\begin{proof}
Suppose that $\sigma=(\pm,\varsigma)$ for some $\varsigma\in\bF$. By Remark~\ref{rem:bound_noise} we obtain
\begin{equation}\label{eq:Psi_proof}
 \|\partial^a\varPsi^\sigma_{\tau,\varepsilon;t,s}\|^2_\sC
 =
 \sup_{x\in\bR^2}\|\partial^a\varPsi_{\tau,\varepsilon;t,s}^{\pm,\varsigma}(x)\|_\sB^2
 \leq 2
 \sup_{x\in\bR^2}\sum_{\tilde\varsigma\in\bF}\int_s^t\int_{\bT_\tau^2}|\partial^a G^{\pm;\varsigma,\tilde\varsigma}_{\tau,\varepsilon;u}(x-y)|^2\,\rd y\,\rd u.
\end{equation}
Using Lemma~\ref{lem:dot_G_infty} and the fact that for all $y\in\bT_\tau^2=(-1/(2\tau),1/(2\tau))^2$ and $n\in\bZ^2$ such that $|n|_\infty\geq 2$ it holds $|y+n/\tau|_\infty\geq |y|_\infty + |n|_\infty/(2\tau)$ one shows that 
\begin{multline}
 \int_{\bT_\tau^2}|\partial^a G^{\pm;\varsigma,\tilde\varsigma}_{\tau,\varepsilon;u}(x-y)|^2\,\rd y
 =
 \int_{\bT^2_{\tau}}
 |\partial^a G^{\pm;\varsigma,\tilde\varsigma}_{\tau,\varepsilon;u}(y)|^2
 \,\rd y
 \\
 \leq
 \sum_{n\in\bN_0^2}\int_{\bT^2_{\tau}}
 |\partial^a G^{\pm;\varsigma,\tilde\varsigma}_{\varepsilon;u}(y+n/\tau)|^2
 \,\rd y
 \leq
 C\,u^{-2-2|a|}.
\end{multline}
As a result, we conclude that there exists $C\in(0,\infty)$ such that for all $\varepsilon\in[0,1]$, $t\in(0,1]$, $s\in(0,t]$ and $a\in\bA$, $\sigma\in\bFF$ it holds
\begin{equation}\label{eq:Psi_proof_bound}
 \|\partial^a\varPsi^\sigma_{\tau,\varepsilon;t,s}\|_\sC \leq C\, s^{-1/2-|a|}.
\end{equation}
This proves Item~(A). Next, we observe that
\begin{multline}\label{eq:Psi_diff_proof}
 \|\partial^a\varPsi^\sigma_{t,s}-
 \partial^a\varPsi^\sigma_{\varepsilon;t,s}\|^2_\sC
 =
 \sup_{x\in\bR^2}\|\partial^a\varPsi_{t,s}^{\pm,\varsigma}(x)-\partial^a\varPsi_{\varepsilon;t,s}^{\pm,\varsigma}(x)\|_\sB^2
 \\
 \leq 2
 \sup_{x\in\bR^2}\sum_{\tilde\varsigma\in\bF}\int_s^t\int_{\bR^2}
 |\partial^a G^{\pm;\varsigma,\tilde\varsigma}_{u}(x-y)
 -\partial^a G^{\pm;\varsigma,\tilde\varsigma}_{\varepsilon;u}(x-y)|^2
 \,\rd y\,\rd u.
\end{multline}
The integrand on the RHS of the above equality vanishes identically if $u\in[4\varepsilon,1]$ by Remark~\ref{rem:G}~(C). As a result, by the bound~\eqref{eq:Psi_proof_bound} we obtain
\begin{equation}\label{eq:Psi2_proof_bound}
 \|\partial^a\varPsi^\sigma_{t,s}-
 \partial^a\varPsi^\sigma_{\varepsilon;t,s}\|_\sC \leq C\,\lambda_{4\varepsilon}^\kappa\,\lambda_s^{-\kappa}\,s^{-1/2-|a|}.
\end{equation}
Next, note that
\begin{multline}\label{eq:Psi_diff2_proof}
 \|\tilde w(\partial^a\varPsi^\sigma_{\varepsilon;t,s}-
 \partial^a\varPsi^\sigma_{\tau,\varepsilon;t,s})\|^2_\sC
 =
 \sup_{x\in\bR^2}\tilde w(x)^2\,\|\partial^a\varPsi_{\varepsilon;t,s}^{\pm,\varsigma}(x)-\partial^a\varPsi_{\tau,\varepsilon;t,s}^{\pm,\varsigma}(x)\|_\sB^2
 \\
 \leq 2
 \sup_{x\in\bR^2}\tilde w(x)^2\sum_{\tilde\varsigma\in\bF}\int_s^t\int_{\bR^2}
 |\partial^a G^{\pm;\varsigma,\tilde\varsigma}_{\varepsilon;u}(x-y)
 -\partial^a G^{\pm;\varsigma,\tilde\varsigma}_{\tau,\varepsilon;u}(x-y)\,1_{\bT^2_{\tau}}(y)|^2
 \,\rd y\,\rd u,
\end{multline}
where $1_{\bT^2_{\tau}}$ is the characteristic function of the set $\bT^2_{\tau}=(-1/(2\tau),1/(2\tau))$.
Observe that $\tilde w(x)^2\leq\tilde w(y)^2\, \tilde w(x-y)^{-2}$. Consequently, it holds
\begin{multline}
 \sup_{x\in\bR^2}\tilde w(x)^2\int_{\bR^2}
 |\partial^a G^{\pm;\varsigma,\tilde\varsigma}_{\varepsilon;u}(x-y)
 -\partial^a G^{\pm;\varsigma,\tilde\varsigma}_{\tau,\varepsilon;u}(x-y)\,1_{\bT^2_{\tau}}(y)|^2
 \,\rd y
 \\
 \leq
 \sup_{x\in\bT^2_{2\tau}}
 \int_{\bT^2_{\tau}}
 |\partial^a G^{\pm;\varsigma,\tilde\varsigma}_{\varepsilon;u}(x-y)
 -\partial^a G^{\pm;\varsigma,\tilde\varsigma}_{\tau,\varepsilon;u}(x-y)|^2
 \,\rd y
 \\
 +
 \sup_{x\in\bR^2\setminus\bT^2_{2\tau}}
 \tilde w(x)^2\int_{\bT^2_{\tau}}
 |\partial^a G^{\pm;\varsigma,\tilde\varsigma}_{\varepsilon;u}(x-y)
 -\partial^a G^{\pm;\varsigma,\tilde\varsigma}_{\tau,\varepsilon;u}(x-y)|^2
 \,\rd y
 \\
 +
 \sup_{x\in\bR^2}\int_{\bR^2\setminus\bT^2_{\tau}}
 \tilde w(y)^2\, \tilde w(x-y)^{-2}\,
 |\partial^a G^{\pm;\varsigma,\tilde\varsigma}_{\varepsilon;u}(x-y)|^2
 \,\rd y.
\end{multline}
Using the fact that $\tilde w(x)=(1+|x|)^{-1/2}$ we obtain
\begin{multline}
 \sup_{x\in\bR^2}\tilde w(x)^2\int_{\bR^2}
 |\partial^a G^{\pm;\varsigma,\tilde\varsigma}_{\varepsilon;u}(x-y)
 -\partial^a G^{\pm;\varsigma,\tilde\varsigma}_{\tau,\varepsilon;u}(x-y)\,1_{\bT^2_{\tau}}(y)|^2
 \,\rd y
 \\
 \leq
 \sup_{x\in\bT^2_{2\tau}}\int_{\bT^2_{\tau}}
 |\partial^a G^{\pm;\varsigma,\tilde\varsigma}_{\varepsilon;u}(x-y)
 -\partial^a G^{\pm;\varsigma,\tilde\varsigma}_{\tau,\varepsilon;u}(x-y)|^2
 \,\rd y
 \\
 +
 (2\tau)
 \int_{\bT^2_{\tau}}
 |\partial^a G^{\pm;\varsigma,\tilde\varsigma}_{\varepsilon;u}(y)
 -\partial^a G^{\pm;\varsigma,\tilde\varsigma}_{\tau,\varepsilon;u}(y)|^2
 \,\rd y
 \\
 +
 \tau \int_{\bR^2}
 \tilde w(y)^{-2}\,
 |\partial^a G^{\pm;\varsigma,\tilde\varsigma}_{\varepsilon;u}(y)|^2
 \,\rd y.
\end{multline}
Since for all $x\in\bT^2_{2\tau}$, $y\in\bT^2_{\tau}$ and $n\in\bZ^2\setminus\{0\}$ it holds $|x-y+n/\tau|_\infty\geq   |n|_\infty/(4\tau)$ by Lemma~\ref{lem:dot_G_infty} we obtain that there exist $c,C\in(0,\infty)$ such that for all $\varepsilon\in[0,1]$, $u\in(0,1]$ and $a\in\bA$, $\varsigma,\tilde\varsigma\in\bF$ it holds
\begin{equation}
 \sup_{x\in\bT^2_{2\tau}}\sup_{y\in \bT^2_{\tau}}
 |\partial^a G^{\pm;\varsigma,\tilde\varsigma}_{\varepsilon;u}(x-y)
 -\partial^a G^{\pm;\varsigma,\tilde\varsigma}_{\tau,\varepsilon;u}(x-y)|
 \leq C\,u^{-2-|a|}\exp(-c/\tau^{\zeta_\flat})
\end{equation}
This implies the bound
\begin{equation}
 \sup_{x\in\bT^2_{2\tau}}\int_{\bT^2_{\tau}}
 |\partial^a G^{\pm;\varsigma,\tilde\varsigma}_{\varepsilon;u}(x-y)
 -\partial^a G^{\pm;\varsigma,\tilde\varsigma}_{\tau,\varepsilon;u}(x-y)|^2
 \,\rd y
 \leq 
 C\,u^{-2-2|a|}
\end{equation}
with a possibly different constant $C\in(0,\infty)$. Moreover, by Lemma~\ref{lem:dot_G_infty} we have
\begin{equation}
 \int_{\bR^2}\tilde w(y)^{-2}\,
 |\partial^a G^{\pm;\varsigma,\tilde\varsigma}_{\varepsilon;u}(y)|^2
 \,\rd y \leq
 C\,u^{-2-2|a|}.
\end{equation}
Consequently, we obtain
\begin{equation}\label{eq:Psi3_proof_bound}
 \|\partial^a\varPsi^\sigma_{\varepsilon;t,s}-
 \partial^a\varPsi^\sigma_{\tau,\varepsilon;t,s}\|_\sC \leq C\,\tau^{1/2}\,s^{-1/2-|a|}.
\end{equation}
Item~(B) follows now from the bounds~\eqref{eq:Psi2_proof_bound},~\eqref{eq:Psi3_proof_bound}. Item~(C) is a consequence of Items~(A),~(B) and the identity
\begin{equation}
 \Delta_i \varPsi_{\tau,\varepsilon} = \Delta_i\varPhi_{\tau,\varepsilon;s,1}
\end{equation}
valid for all $\tau,\varepsilon\in[0,1]$, $i\in\{-1,0,1,\ldots\}$ and $s\in(0,c\,2^{-i})$, where $c\in(0,\infty)$ is a universal constant depending on the convention used in defining the Littlewood-Paley blocks. To prove Item~(D) we note that by Remark~\ref{rem:bound_noise}
\begin{equation}
 \|\Delta_i\varPsi^\sigma\|^2_\sC
 =
 \sup_{x\in\bR^2}\|\Delta_i\varPsi^{\pm,\varsigma}(x)\|_\sB^2
 \geq
 \sup_{x\in\bR^2}\sum_{\tilde\varsigma\in\bF}\int_0^1\int_{\bR^2}|(\Delta_i \Delta_i G^{\pm;\varsigma,\tilde\varsigma}_{u})(x-y)|^2\,\rd y\,\rd u
\end{equation}
and express the RHS of the above bound as an integral in the frequency space. This finishes the proof.
\end{proof}

\section{Effective potential and Polchinski equation}\label{sec:effective}

In this section we introduce the notion of an effective potential and prove that a solution of the mild form of the Polchinski equation~\eqref{eq:polchinski_integral} satisfies the equation for an effective potential. A solution $U_{\tau,\varepsilon;\Cdot}$ of the Polchinski equation will be constructed in Sec.~\ref{sec:relation_polchinski} in terms of the fixed point $X_{\tau,\varepsilon;\Cdot}$ of the map $\fX_{\tau,\varepsilon;\Cdot}$ constructed in Sec.~\ref{sec:fixed_point}. Then by the results of this section such $U_{\tau,\varepsilon;t}$ is an effective potential at the scale $t$. Using this fact in Sec.~\ref{sec:convergence} we will show that $U_{\tau,\varepsilon;t=0}$ is the generating functional of the connected amputated Schwinger functions.

\begin{lem}\label{lem:interacting_measure}
Let $\tau,\varepsilon\in(0,1]$. Suppose that $\fE\exp(U_{\tau,\varepsilon}(\varPsi_{\tau,\varepsilon}))
 \neq 0$. The interacting measure satisfies the equality
\begin{equation}
 \mu_{\tau,\varepsilon}(F) 
 = \frac{\fE(F(\varPsi_{\tau,\varepsilon})\exp(U_{\tau,\varepsilon}(\varPsi_{\tau,\varepsilon})))}{\fE\exp(U_{\tau,\varepsilon}(\varPsi_{\tau,\varepsilon}))}\in\bC
\end{equation}
for all functionals $F\in\sN(C^\infty(\bT_\tau^2)^\bFF)$.
\end{lem}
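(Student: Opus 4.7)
The plan is to reduce the statement to a direct application of Lemma~\ref{lem:free_measure} by recognizing both the numerator and denominator of $\mu_{\tau,\varepsilon}(F)$ as evaluations of the free-field Berezin-integral formula at suitable functionals. The key observation is that the definition of $\mu_{\tau,\varepsilon}$ can be rewritten as a ratio $\nu_{\tau,\varepsilon}(G)/\nu_{\tau,\varepsilon}(H)$ for well-chosen $G,H\in\sN(C^\infty(\bT_\tau^2)^\bFF)$, where
\[
 G(\phi):=F(\phi)\exp(U_{\tau,\varepsilon}(\phi)),\qquad H(\phi):=\exp(U_{\tau,\varepsilon}(\phi)).
\]

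First I would check that $G$ and $H$ indeed belong to $\sN(C^\infty(\bT_\tau^2)^\bFF)$. Since $U_{\tau,\varepsilon}$ is a polynomial in $\phi$ of even degree, we have $U_{\tau,\varepsilon}(\phi)\in\sG^+$ whenever $\phi\in C^\infty(\bT_\tau^2)^\bFF\otimes_{\mathrm{alg}}\sG^-$, so the exponential series defining $\exp(U_{\tau,\varepsilon}(\phi))$ has the form prescribed in Def.~\ref{dfn:functional} and yields a well-defined functional (the kernels $H^k\in\sS'(\bT_\tau^{2k})^{\bFF^k}$ are obtained by $H^k=\rD_\phi^k\exp(U_{\tau,\varepsilon}(\phi))|_{\phi=0}/k!$). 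The product $G=F\,H$ is then again a functional by the convolution/product rule for Grassmann functionals. This verification is essentially bookkeeping and is the only place one has to be careful with the algebraic formalism.

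Second, I would apply Lemma~\ref{lem:free_measure} to $G$ and to $H$ separately. Applying it to $G$ yields
\[
 \nu_{\tau,\varepsilon}(G)
 =\frac{\int F(\vartheta_\varepsilon\ast\psi_{\tau,\varepsilon})\exp\bigl(U_{\tau,\varepsilon}(\vartheta_\varepsilon\ast\psi_{\tau,\varepsilon})-A_\tau(\psi_{\tau,\varepsilon})\bigr)\,\rd\psi_{\tau,\varepsilon}}{\int\exp(-A_\tau(\psi_{\tau,\varepsilon}))\,\rd\psi_{\tau,\varepsilon}}
 =\fE\bigl(F(\varPsi_{\tau,\varepsilon})\exp(U_{\tau,\varepsilon}(\varPsi_{\tau,\varepsilon}))\bigr),
\]
and applied to $H$ it yields the analogous identity with $F$ removed, so
\[
 \nu_{\tau,\varepsilon}(H)=\fE\exp(U_{\tau,\varepsilon}(\varPsi_{\tau,\varepsilon})).
\]

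Finally, forming the quotient gives
\[
 \mu_{\tau,\varepsilon}(F)=\frac{\nu_{\tau,\varepsilon}(G)}{\nu_{\tau,\varepsilon}(H)}
 =\frac{\fE(F(\varPsi_{\tau,\varepsilon})\exp(U_{\tau,\varepsilon}(\varPsi_{\tau,\varepsilon})))}{\fE\exp(U_{\tau,\varepsilon}(\varPsi_{\tau,\varepsilon}))},
\]
where the division is legitimate by the standing hypothesis $\fE\exp(U_{\tau,\varepsilon}(\varPsi_{\tau,\varepsilon}))\neq 0$. There is no real obstacle beyond the initial functional-theoretic check in the first paragraph; once $G$ and $H$ are identified as bona fide elements of $\sN(C^\infty(\bT_\tau^2)^\bFF)$, the result is an immediate corollary of Lemma~\ref{lem:free_measure}.
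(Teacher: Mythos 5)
Your proposal is correct and follows essentially the same route as the paper, whose proof is exactly the one-line observation that the claim follows from Def.~\ref{dfn:measures} together with Lemma~\ref{lem:free_measure} applied to the functionals $F\exp(U_{\tau,\varepsilon}(\Cdot))$ and $\exp(U_{\tau,\varepsilon}(\Cdot))$. Your extra bookkeeping (checking that these products are genuine elements of $\sN(C^\infty(\bT_\tau^2)^\bFF)$, using that $U_{\tau,\varepsilon}(\phi)\in\sG^+$ so the exponential factors commute) is exactly the implicit content of the paper's argument.
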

\begin{proof}
The lemma is a consequence of Def.~\ref{dfn:measures} of the measure $\mu_{\tau,\varepsilon}$ and Lemma~\ref{lem:free_measure}.
\end{proof}

\begin{dfn}\label{dfn:effective_potential}
Let $\tau,\varepsilon\in(0,1]$ and $U_{\tau,\varepsilon}\equiv U_{\tau}(g_{\tau,\varepsilon},r_{\tau,\varepsilon})\in\sN(C^\infty(\bT_\tau^2)^\bFF)$ be defined as specified in Def.~\ref{dfn:measures}. We call a functional $U_{\tau,\varepsilon;t}\in\sN(C^\infty(\bT_\tau^2)^\bFF)$ an effective potential at the spatial scale $t\in[0,1]$ (of the Gross-Neveu model with the parameters $g_{\tau,\varepsilon},r_{\tau,\varepsilon}\in\bR$) if it satisfies the following equality
\begin{equation}
 \exp(U_{\tau,\varepsilon;t}(\phi))
 =
 \fE\exp(U_{\tau,\varepsilon}(\varPsi_{\tau,\varepsilon;t,0}+\phi))\in\sG
\end{equation} 
for all $\phi\in \mathcal{C}_{\tau,\varepsilon}\otimes \sG^-$, where $\mathcal{C}_{\tau,\varepsilon}=(\mathrm{Span}\{x\mapsto\re^{\ri p\cdot x}\,|\,p\in \Lambda_{\tau,\varepsilon}\})^\bFF\otimes \sG^-\subset C^\infty(\bT_\tau^2)^\bFF\otimes\sG^-$ was introduced in Def.~\ref{dfn:GN_grassmann} and $\sG$ is an infinite-dimensional Grassmann algebra.
\end{dfn}

\begin{rem}
Note that for $s\leq t$ it holds
\begin{equation}
 \exp(U_{\tau,\varepsilon;s}(\varPsi_{\tau,\varepsilon;t,s}+\phi))
 =
 \fE_s\exp(U_{\tau,\varepsilon}(\varPsi_{\tau,\varepsilon;t,0}+\phi)).
\end{equation}
Consequently, for $s\leq t$  an effective potential fulfills the following identity
\begin{equation}
 \exp(U_{\tau,\varepsilon;t}(\phi))
 =
 \fE\exp(U_{\tau,\varepsilon;s}(\varPsi_{\tau,\varepsilon;t,s}+\phi)).
\end{equation} 
\end{rem}

\begin{rem}\label{rem:dot_G_finite}
For all $\tau,\varepsilon,t\in(0,1]$ we have
\begin{equation}
 \dot G_{\tau,\varepsilon;t}(\Cdot-\Cdot)\in \mathcal{C}_{\tau,\varepsilon}\otimes \mathcal{C}_{\tau,\varepsilon}\subset C^\infty(\bR^2\times\bR^2),
\end{equation}
where the function $G_{\tau,\varepsilon;t}(\Cdot-\Cdot)\in C(\bR^2\times\bR^2)$ coincides with the map $(x,y)\mapsto G_{\tau,\varepsilon;t}(x-y)$.
\end{rem}

\begin{lem}\label{lem:polchinski}
Let $\tau,\varepsilon\in(0,1]$ and $\sG$ be an infinite-dimensional Grassmann algebra. Suppose that the function $[0,1]\ni t\mapsto U_{\tau,\varepsilon;t}\in\sN(C^\infty(\bT_\tau^2)^\bFF)$ satisfies the Polchinski flow equation
\begin{equation}\label{eq:polchinski}
 \partial_t U_{\tau,\varepsilon;t}(\phi)
 =- \frac{1}{2} \langle\rD_\phi^2 U_{\tau,\varepsilon;t}(\phi),\dot G_{\tau,\varepsilon;t}(\Cdot-\Cdot)\rangle_\tau
 +\frac{1}{2} \langle \rD_\phi U_{\tau,\varepsilon;t}(\phi),\dot G_{\varepsilon;t} \ast \rD_\phi U_{\tau,\varepsilon;t}(\phi)\rangle_\tau
\end{equation}
with the initial condition $U_{\tau,\varepsilon;0}(\phi)=U_{\tau,\varepsilon}(\phi)$ for all $\phi\in C^\infty(\bT^2_\tau)^\bFF\otimes_{\mathrm{alg}}\sG^-$. Then for every $t\in[0,1]$ the functional \mbox{$U_{\tau,\varepsilon;t}\in \sN(C^\infty(\bT_\tau^2)^\bFF)$} is an effective potential at the spatial scale $t$.
\end{lem}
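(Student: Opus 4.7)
My plan is to show that the two functionals
\begin{equation*}
 E_t(\phi):=\fE\exp(U_{\tau,\varepsilon}(\varPsi_{\tau,\varepsilon;t,0}+\phi)),\qquad F_t(\phi):=\exp(U_{\tau,\varepsilon;t}(\phi))
\end{equation*}
solve the same linear evolution equation on $[0,1]$ with the same initial condition $E_0=F_0=\exp(U_{\tau,\varepsilon})$, and to conclude $E_t=F_t$, which is precisely the effective-potential identity of Definition~\ref{dfn:effective_potential}.

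For $E_t$, since $U_{\tau,\varepsilon}$ has nonzero kernels only for $m=2$ and $m=4$, the formal exponential $\exp(U_{\tau,\varepsilon})$ is a well-defined functional in $\sN(C^\infty(\bT_\tau^2)^\bFF)$ whose degree-$m$ kernel is a finite antisymmetrized sum of tensor products of those of $U_{\tau,\varepsilon}$. Lemma~\ref{lem:expected_F_incremenet} with $s=0$ then gives $E_t=\exp(\fD_{\tau,\varepsilon;t,0}/2)\exp(U_{\tau,\varepsilon})$, and since the operators $\fD_{\tau,\varepsilon;t,0}$ all act as $\rD_\phi^2$ contracted with a scalar kernel (hence commute across different values of $t$) and satisfy $\partial_t\fD_{\tau,\varepsilon;t,0}F(\phi)=-\langle\rD_\phi^2 F(\phi),\dot G_{\tau,\varepsilon;t}(\Cdot-\Cdot)\rangle_\tau$, differentiating yields
\begin{equation*}
 \partial_t E_t(\phi)=-\tfrac{1}{2}\langle\rD_\phi^2 E_t(\phi),\dot G_{\tau,\varepsilon;t}(\Cdot-\Cdot)\rangle_\tau,\qquad E_0(\phi)=\exp(U_{\tau,\varepsilon}(\phi)).
\end{equation*}

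To verify that $F_t$ satisfies the same PDE, I would use the chain rule $\partial_t F_t=(\partial_t U_{\tau,\varepsilon;t})F_t$ (valid because $U_{\tau,\varepsilon;t}(\phi)\in\sG^+$ commutes with itself) and substitute the Polchinski equation~\eqref{eq:polchinski}. This reduces matters to the fermionic Cole--Hopf identity at the level of distributional kernels,
\begin{equation*}
 \rD_\phi^2\exp(U_{\tau,\varepsilon;t})(\phi) = \big[\rD_\phi^2 U_{\tau,\varepsilon;t}(\phi)-\rD_\phi U_{\tau,\varepsilon;t}(\phi)\otimes\rD_\phi U_{\tau,\varepsilon;t}(\phi)\big]\exp(U_{\tau,\varepsilon;t}(\phi)),
\end{equation*}
which upon contracting with the scalar antisymmetric kernel $\dot G_{\tau,\varepsilon;t}(\Cdot-\Cdot)$ (Remark~\ref{rem:G}(E)) produces exactly $\langle\rD_\phi^2 F_t,\dot G_{\tau,\varepsilon;t}(\Cdot-\Cdot)\rangle_\tau = [\langle\rD_\phi^2 U_{\tau,\varepsilon;t},\dot G_{\tau,\varepsilon;t}(\Cdot-\Cdot)\rangle_\tau-\langle\rD_\phi U_{\tau,\varepsilon;t},\dot G_{\varepsilon;t}\ast\rD_\phi U_{\tau,\varepsilon;t}\rangle_\tau]F_t$, matching the PDE for $E_t$. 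I would prove this identity by polarization: choose test fields $\psi_i=e_i\otimes\alpha_i\in C^\infty(\bT_\tau^2)^\bFF\otimes_{\mathrm{alg}}\sG^-$ with $\alpha_i$ odd, compute $\partial_{u_1}\partial_{u_2}\exp(U_{\tau,\varepsilon;t}(\phi+u_1\psi_1+u_2\psi_2))|_{u=0}$ by the ordinary chain rule in the commuting parameters $u_1,u_2$, and strip off the factor $\alpha_1\alpha_2$; the distinguishing minus sign arises when one anticommutes the odd distribution $\rD_\phi U_{\tau,\varepsilon;t}$ past $\alpha_1$.

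Granting the Cole--Hopf identity, both $F_t$ and $E_t$ satisfy the linear ODE $\partial_tV_t=\tfrac{1}{2}(\partial_t\fD_{\tau,\varepsilon;t,0})V_t$ with $V_0=\exp(U_{\tau,\varepsilon})$; since $\fD_{\tau,\varepsilon;t,0}$ commutes with itself across $t$ and acts on the finite-dimensional space of functionals generated by evaluation on $\mathcal{C}_{\tau,\varepsilon}\otimes_{\mathrm{alg}}\sG^-$, the unique solution is $V_t=\exp(\fD_{\tau,\varepsilon;t,0}/2)V_0$, so $F_t=E_t$ on $[0,1]$, which is the assertion of the lemma. The main obstacle is the sign bookkeeping in the fermionic Cole--Hopf identity: whereas the bosonic analogue has a $+$ sign in front of the quadratic term, the fermionic version has a $-$, and this is precisely what reconciles the $+\tfrac{1}{2}\langle\rD_\phi U,\dot G_{\varepsilon;t}\ast\rD_\phi U\rangle_\tau$ term in the Polchinski equation~\eqref{eq:polchinski} with the uniform $-\tfrac{1}{2}$ in the linear PDE obeyed by $E_t$.
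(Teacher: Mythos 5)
Your proposal is correct and follows essentially the same route as the paper: the fermionic Cole--Hopf identity converts the Polchinski equation into the linear equation $\partial_t\exp(U_{\tau,\varepsilon;t}(\phi))=-\tfrac12\langle\rD_\phi^2\exp(U_{\tau,\varepsilon;t}(\phi)),\dot G_{\tau,\varepsilon;t}(\Cdot-\Cdot)\rangle_\tau$, Lemma~\ref{lem:expected_F_incremenet} shows that $\fE\exp(U_{\tau,\varepsilon}(\varPsi_{\tau,\varepsilon;t,0}+\phi))$ solves the same equation with the same initial condition, and equality follows from uniqueness after restriction to $\mathcal{C}_{\tau,\varepsilon}\otimes_{\mathrm{alg}}\sG^-$. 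Your uniqueness step via the commuting operators $\fD_{\tau,\varepsilon;t,0}$ and the explicit semigroup solution is just a rephrasing of the paper's Lemma~\ref{lem:polchinski_Z}, which reduces the difference of the two solutions to a finite-dimensional linear ODE with vanishing initial data.
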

\begin{proof}
Let us first observe that for every $U\in \sN(C^\infty(\bT_\tau^2)^\bFF)$ such that $U(\phi)\in\sG^+$ for all $\phi\in C^\infty(\bT^2_\tau)^\bFF\otimes_{\mathrm{alg}}\sG^-$ it holds
\begin{equation}
\begin{gathered}
 \rD_\phi \exp(U(\phi)) = \exp(U(\phi)) \,\rD_\phi U(\phi) \in\sS'(\bT_\tau^2)^\bFF\otimes_{\mathrm{alg}}\sG^-,
 \\
 \rD_\phi^2 \exp(U(\phi)) = \exp(U(\phi)) \,\rD^2_\phi U(\phi)
 -
 \exp(U(\phi)) \,(\rD_\phi U(\phi)\otimes \rD_\phi U(\phi))\in\sS'(\bT_\tau^4)^{\bFF^2}\otimes_{\mathrm{alg}}\sG^+.
\end{gathered} 
\end{equation}
In consequence, the function $[0,1]\ni t\mapsto U_{\tau,\varepsilon;t}\in\sN(C^\infty(\bT_\tau^2)^\bFF)$ satisfies the Polchinski flow equation \eqref{eq:polchinski} for all $\phi\in C^\infty(\bT^2_\tau)^\bFF\otimes_{\mathrm{alg}}\sG^-$ if and only if it satisfies the equation
\begin{equation}
 \partial_t \exp(U_{\tau,\varepsilon;t}(\phi)) =
 - \frac{1}{2} \langle\rD_\phi^2 \exp(U_{\tau,\varepsilon;t}(\phi)),\dot G_{\tau,\varepsilon;t}(\Cdot-\Cdot)\rangle_\tau
\end{equation}
for all $t\in(0,1]$ and $\phi\in
C^\infty(\bT^2_\tau)^\bFF\otimes_{\mathrm{alg}}\sG^-$. Note that by Lemma~\ref{lem:expected_F_incremenet} the equation
\begin{equation}
 \partial_t \fE\exp(U_{\tau,\varepsilon}(\varPsi_{\tau,\varepsilon;t,0}+\phi)) =
 - \frac{1}{2} \langle\rD_\phi^2 \fE\exp(U_{\tau,\varepsilon}(\varPsi_{\tau,\varepsilon;t,0}+\phi)),\dot G_{\tau,\varepsilon;t}(\Cdot-\Cdot)\rangle_\tau
\end{equation}
holds true for all $t\in(0,1]$ and $\phi\in
C^\infty(\bT^2_\tau)^\bFF\otimes_{\mathrm{alg}}\sG^-$. The statement follows now from Lemma~\ref{lem:polchinski_Z}.
\end{proof}

\begin{lem}\label{lem:polchinski_Z}
Let $\tau,\varepsilon\in(0,1]$ and $\sG$ be an infinite-dimensional Grassmann algebra. Suppose that the function $[0,1]\ni t\mapsto Z_{\tau,\varepsilon;t}\in\sN(C^\infty(\bT_\tau^2)^\bFF)$ satisfies the equation
\begin{equation}\label{eq:Polchinski_Z}
 \partial_t Z_{\tau,\varepsilon;t}(\phi) =
 - \frac{1}{2} \langle\rD_\phi^2 Z_{\tau,\varepsilon;t}(\phi),\dot G_{\tau,\varepsilon;t}(\Cdot-\Cdot)\rangle_\tau
\end{equation}
with the initial condition $Z_{\tau,\varepsilon;0}(\phi)=0$ for all $\phi\in \mathcal{C}_{\tau,\varepsilon}\otimes\sG^-$. Then $Z_{\tau,\varepsilon;t}(\phi)=0$ for all $t\in[0,1]$ and $\phi\in \mathcal{C}_{\tau,\varepsilon}\otimes \sG^-$.
\end{lem}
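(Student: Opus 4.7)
The plan is to use the Gaussian semigroup supplied by Lemma~\ref{lem:expected_F_incremenet} to run a backward-in-scale argument that trivialises the problem once the initial condition is invoked. Fix $t \in [0,1]$ and, for $s\in[0,t]$ and $\phi\in\mathcal{C}_{\tau,\varepsilon}\otimes\sG^-$, set
\[
  R_{s,t}(\phi) := \fE\,Z_{\tau,\varepsilon;s}\bigl(\varPsi_{\tau,\varepsilon;t,s}+\phi\bigr).
\]
By Remark~\ref{rem:Psi_finite} the increment $\varPsi_{\tau,\varepsilon;t,s}$ has Fourier support in $\Lambda_{\tau,\varepsilon}$, so $\varPsi_{\tau,\varepsilon;t,s}+\phi$ lies in $\mathcal{C}_{\tau,\varepsilon}\otimes(\sF\otimes_{\mathrm{alg}}\sG)^-$. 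The hypothesis $Z_{\tau,\varepsilon;0}(\phi')=0$ on $\mathcal{C}_{\tau,\varepsilon}\otimes\sG^-$, combined with Remark~\ref{rem:graded_uniqueness} applied to each kernel $Z^m_{\tau,\varepsilon;0}$, forces the restrictions of these kernels to $\mathcal{C}_{\tau,\varepsilon}^{\otimes m}$ to vanish; consequently $Z_{\tau,\varepsilon;0}$ annihilates every argument with Fourier support in $\Lambda_{\tau,\varepsilon}$, which gives $R_{0,t}(\phi)=0$. Since $\varPsi_{\tau,\varepsilon;t,t}=0$ by Def.~\ref{dfn:Psi}, we also have $R_{t,t}(\phi)=Z_{\tau,\varepsilon;t}(\phi)$, so the lemma reduces to showing that $s\mapsto R_{s,t}(\phi)$ is constant on $[0,t]$.

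By Lemma~\ref{lem:expected_F_incremenet}, $R_{s,t}(\phi) = (\exp(\fD_{\tau,\varepsilon;t,s}/2)\,Z_{\tau,\varepsilon;s})(\phi)$. Setting $\mathcal{L}_s F(\phi) := \langle\rD_\phi^2 F(\phi),\dot G_{\tau,\varepsilon;s}(\Cdot-\Cdot)\rangle_\tau$ and using $\partial_s(G_{\tau,\varepsilon;s}-G_{\tau,\varepsilon;t})=\dot G_{\tau,\varepsilon;s}$ we have $\partial_s\fD_{\tau,\varepsilon;t,s}=\mathcal{L}_s$. Granting the commutativity of $\fD_{\tau,\varepsilon;t,s}$ with $\mathcal{L}_s$ on polynomial-type functionals (addressed below), the product rule for the operator exponential yields
\[
  \partial_s R_{s,t}(\phi) = \exp(\fD_{\tau,\varepsilon;t,s}/2)\Bigl(\partial_s Z_{\tau,\varepsilon;s} + \tfrac{1}{2}\mathcal{L}_s Z_{\tau,\varepsilon;s}\Bigr)(\phi).
\]
The flow equation~\eqref{eq:Polchinski_Z} asserts exactly that the bracket vanishes on $\mathcal{C}_{\tau,\varepsilon}\otimes\sG^-$, and by Lemma~\ref{lem:expected_F_incremenet} (read in reverse) the outer operator $\exp(\fD_{\tau,\varepsilon;t,s}/2)$ is averaging against a Grassmann Gaussian with Fourier support in $\Lambda_{\tau,\varepsilon}$, which preserves vanishing on $\mathcal{C}_{\tau,\varepsilon}\otimes\sG^-$. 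Hence $\partial_s R_{s,t}(\phi)=0$ and $R_{0,t}(\phi)=R_{t,t}(\phi)$, finishing the argument.

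The main delicate point is the commutativity used in the product rule. I would avoid a direct commutator calculation and argue via Gaussian integration instead: for any translation-invariant covariance $K$ with Fourier support in $\Lambda_{\tau,\varepsilon}$, set $E_K[F](\phi):=\fE F(\Xi_K+\phi)$ for a Grassmann Gaussian $\Xi_K$ of covariance $K$ independent of $\phi$; Lemma~\ref{lem:expected_F_incremenet} identifies $E_K=\exp(\fD_K/2)$ with $\fD_K F(\phi):=\langle\rD_\phi^2 F(\phi),K(\Cdot-\Cdot)\rangle_\tau$. The independent-increments property of the scale decomposition $\varPsi_{\tau,\varepsilon;\Cdot,\Cdot}$ recorded as condition~(5) of Sec.~\ref{sec:strategy} yields the semigroup identity $E_{K_1}\circ E_{K_2}=E_{K_1+K_2}=E_{K_2}\circ E_{K_1}$; differentiating this in the covariance parameters supplies $[\fD_{K_1},\fD_{K_2}]=0$ and legitimises the termwise differentiation of $\exp(\fD_{\tau,\varepsilon;t,s}/2)$ in $s$. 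All relevant series terminate, because $\mathcal{C}_{\tau,\varepsilon}$ is finite-dimensional and $\phi$ involves only finitely many Grassmann generators, so $\phi^{\otimes m}=0$ for sufficiently large $m$ and both the exponential series $\exp(\fD/2)F=\sum_n(\fD^n F)/(2^n n!)$ and the polynomial expansion of $F$ reduce to finite sums.
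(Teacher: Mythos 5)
Your argument is correct, but it takes a genuinely different route from the paper's. The paper proceeds by a direct finite-dimensional ODE uniqueness argument: using the basis $\{e_1,\ldots,e_{m_{\tau,\varepsilon}}\}$ of $\mathcal{C}_{\tau,\varepsilon}$ it forms the matrix elements $Z^{(i_1,\ldots,i_m)}_{\tau,\varepsilon;t}$, notes that antisymmetry forces these to vanish for $m>m_{\tau,\varepsilon}$, and then Remark~\ref{rem:dot_G_finite} shows the remaining finite collection (together with $Z^0_{\tau,\varepsilon;\Cdot}$) satisfies a closed first-order linear ODE with trivial initial condition — hence vanishes. You instead exhibit the putative solution formula: setting $R_{s,t}(\phi)=\fE Z_{\tau,\varepsilon;s}(\varPsi_{\tau,\varepsilon;t,s}+\phi)$, you show $\partial_s R_{s,t}=0$ by combining Lemma~\ref{lem:expected_F_incremenet}, the Polchinski equation, and the semigroup/commutativity structure of the Gaussian heat operators $\exp(\fD_K/2)$, and then deduce $Z_{\tau,\varepsilon;t}(\phi)=R_{t,t}(\phi)=R_{0,t}(\phi)=0$ from the initial condition. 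Both approaches ultimately rest on the same finiteness fact ($\dim\mathcal{C}_{\tau,\varepsilon}<\infty$ together with antisymmetry, which is what makes the operator-exponential series terminate and makes the differentiation of $\exp(\fD_{\tau,\varepsilon;t,s}/2)$ in $s$ legitimate); the paper's route is shorter and more elementary, while yours buys the explicit solution representation $Z_{\tau,\varepsilon;t}(\phi)=\fE Z_{\tau,\varepsilon;0}(\varPsi_{\tau,\varepsilon;t,0}+\phi)$ and makes the connection to the Gaussian convolution structure transparent. One small imprecision: the termination of the exponential series $\sum_n(\fD^n F)/(2^n n!)$ is not a consequence of $\phi^{\otimes m}=0$ (which only truncates the polynomial expansion of $F$) but of the fact that after restricting kernels to $\mathcal{C}_{\tau,\varepsilon}^{\otimes m}$ only the levels $m\le m_{\tau,\varepsilon}$ survive; you correctly invoke $\dim\mathcal{C}_{\tau,\varepsilon}<\infty$, but the causal chain as written conflates the two mechanisms.
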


\begin{proof}
Recall that a functional $Z_{\tau,\varepsilon;t}\in\sN(C^\infty(\bT_\tau^2)^\bFF)$ is a collection $Z_{\tau,\varepsilon;t}=(Z_{\tau,\varepsilon;t}^m)_{m\in\bN_0}$ such that $Z_{\tau,\varepsilon;t}^0\in\bC$ and $Z_{\tau,\varepsilon;t}^m\in \sS'(\bT_\tau^{2m})^{\bFF^m}$ is antisymmetric for $m\in\bN_+$. Next, set $m_{\tau,\varepsilon}:=|\Lambda_{\tau,\varepsilon}||\bFF|=\dim(\mathcal{C}_{\tau,\varepsilon})$ and let $\{e_1,\ldots,e_{m_{\tau,\varepsilon}}\}$ be a basis of $\mathcal{C}_{\tau,\varepsilon}$. For $m\in\bN_+$ and $i_1,\ldots,i_m\in\{1,\ldots,m_{\tau,\varepsilon}\}$ define
\begin{equation}
 Z^{(i_1,\ldots,i_m)}_{\tau,\varepsilon;t}
 :=
 \frac{1}{m!}\sum_{\pi\in\mathcal{P}_m}(-1)^{\mathrm{sgn}(\pi)}\langle Z^m_{\tau,\varepsilon;t},e_{\pi(i_1)}\otimes\ldots\otimes e_{\pi(i_m)}\rangle\in\bC,
\end{equation}
where $\mathcal{P}_m$ is the set of permutations of $\{1,\ldots,m\}$. Observe that $Z^{(i_1,\ldots,i_m)}_{\tau,\varepsilon;t}$ vanishes identically if $m>m_{\tau,\varepsilon}$. Hence, by Remark~\ref{rem:dot_G_finite} Eq.~\eqref{eq:Polchinski_Z} implies that the finite collection
\begin{equation}
 \{Z^0_{\tau,\varepsilon;\Cdot}\}\cup \{Z^{(i_1,\ldots,i_m)}_{\tau,\varepsilon;\Cdot}\,|\,m,i_1,\ldots,i_m\in\{1,\ldots,m_{\tau,\varepsilon}\}\}
\end{equation}
satisfies a first order linear ODE with a trivial boundary condition. This proves the claim.
\end{proof}

\begin{lem}\label{lem:polchinski2}
Let $\tau,\varepsilon\in(0,1]$ and $\sG$ be an infinite-dimensional Grassmann algebra. Suppose that a continuous function $[0,1]\ni t\mapsto U_{\tau,\varepsilon;t}\in\sN(C^\infty(\bT_\tau^2)^\bFF)$ satisfies the integral form of the Polchinski equation
\begin{multline}\label{eq:polchinski_integral}
 U_{\tau,\varepsilon;t}(\phi) = \fE U_{\tau,\varepsilon}(\varPsi_{\tau,\varepsilon;t,0}+\phi) 
 \\
 + \frac{1}{2} \int_0^t \fE\langle \rD_\phi U_{\tau,\varepsilon;s}(\varPsi_{\tau,\varepsilon;t,s}+\phi),\dot G_{\varepsilon;s}\ast \rD_\phi U_{\tau,\varepsilon;s}(\varPsi_{\tau,\varepsilon;t,s}+\phi)\rangle_\tau\,\rd s
\end{multline} 
for all $\phi\in C^\infty(\bT^2_\tau)^\bFF\otimes_{\mathrm{alg}}\sG^-$. Then for every $t\in[0,1]$ the functional $U_{\tau,\varepsilon;t}\in \sN(C^\infty(\bT_\tau^2)^\bFF)$ is an effective potential at the spatial scale $t$.
\end{lem}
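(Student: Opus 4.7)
The plan is to show that any continuous solution of the mild equation~\eqref{eq:polchinski_integral} is in fact $C^1$ in $t$ with the correct initial value and satisfies the differential Polchinski equation~\eqref{eq:polchinski}, at which point Lemma~\ref{lem:polchinski} immediately yields the conclusion. Setting $t=0$ in~\eqref{eq:polchinski_integral} and using $\varPsi_{\tau,\varepsilon;0,0}=0$ gives $U_{\tau,\varepsilon;0}(\phi)=U_{\tau,\varepsilon}(\phi)$ for free, so only the evolution identity remains to be checked.

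First I would reduce to a finite-dimensional situation exactly as in Lemma~\ref{lem:polchinski_Z}: restrict $\phi$ to $\mathcal{C}_{\tau,\varepsilon}\otimes\sG^-$, expand the kernels $U_{\tau,\varepsilon;t}^m$ in the finite basis of $\mathcal{C}_{\tau,\varepsilon}$, and observe that only finitely many scalar coefficients $U^{(i_1,\ldots,i_m)}_{\tau,\varepsilon;t}$ enter the equation. On the right-hand side of~\eqref{eq:polchinski_integral}, Lemma~\ref{lem:expected_F_incremenet} represents both expectations as $(\exp(\fD_{\tau,\varepsilon;t,\Cdot}/2)\Cdot)(\phi)$, whose $t$-dependence is carried entirely by the smooth kernel $(G_{\tau,\varepsilon;s}-G_{\tau,\varepsilon;t})(\Cdot-\Cdot)\in \mathcal{C}_{\tau,\varepsilon}\otimes\mathcal{C}_{\tau,\varepsilon}$, cf.\ Remark~\ref{rem:dot_G_finite}. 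Combined with the continuity of $s\mapsto U_{\tau,\varepsilon;s}$, this shows that each $U^{(i_1,\ldots,i_m)}_{\tau,\varepsilon;t}$ is continuously differentiable in $t$ and that differentiation under the integral in~\eqref{eq:polchinski_integral} is permissible.

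The key calculation is then to differentiate~\eqref{eq:polchinski_integral} in $t$. Applying $\partial_t$ to the identity $\fE F(\varPsi_{\tau,\varepsilon;t,s}+\phi)=(\exp(\fD_{\tau,\varepsilon;t,s}/2)F)(\phi)$ from Lemma~\ref{lem:expected_F_incremenet} and using $\partial_t\fD_{\tau,\varepsilon;t,s}F(\phi)=-\langle \rD_\phi^2 F(\phi),\dot G_{\tau,\varepsilon;t}(\Cdot-\Cdot)\rangle_\tau$ yields
\begin{equation}
 \partial_t \fE F(\varPsi_{\tau,\varepsilon;t,s}+\phi)=-\tfrac{1}{2}\langle \rD_\phi^2 \fE F(\varPsi_{\tau,\varepsilon;t,s}+\phi),\dot G_{\tau,\varepsilon;t}(\Cdot-\Cdot)\rangle_\tau
\end{equation}
for any admissible $F$. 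This identity applied to the first summand of~\eqref{eq:polchinski_integral} and, through Leibniz, to the integrand produces contributions of the form $-\tfrac{1}{2}\langle \rD_\phi^2(\Cdot),\dot G_{\tau,\varepsilon;t}(\Cdot-\Cdot)\rangle_\tau$. The boundary term at $s=t$, obtained from the upper limit of the integral, equals $\tfrac{1}{2}\langle\rD_\phi U_{\tau,\varepsilon;t}(\phi),\dot G_{\varepsilon;t}\ast\rD_\phi U_{\tau,\varepsilon;t}(\phi)\rangle_\tau$ because $\varPsi_{\tau,\varepsilon;t,t}=0$. Summing all three contributions and recognising the original right-hand side of~\eqref{eq:polchinski_integral} under the $\rD_\phi^2$ produces exactly the differential Polchinski equation~\eqref{eq:polchinski}. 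Lemma~\ref{lem:polchinski} then concludes that $U_{\tau,\varepsilon;t}$ is an effective potential at scale $t$.

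The main obstacle I expect is purely bookkeeping: I need to write out the $\partial_t$ of the integral term and combine the boundary and interior parts so that the $\rD_\phi^2$ acts on the right-hand side of~\eqref{eq:polchinski_integral} itself, and then invoke that equation once more to replace that right-hand side by $U_{\tau,\varepsilon;t}(\phi)$. No new analytic ideas are needed beyond Lemmas~\ref{lem:expected_F_incremenet} and~\ref{lem:polchinski}, but the bookkeeping must be done carefully enough that the two $-\tfrac{1}{2}\langle\rD_\phi^2\Cdot,\dot G_{\tau,\varepsilon;t}\rangle_\tau$ contributions fuse into a single instance evaluated at $U_{\tau,\varepsilon;t}$, and that the boundary term provides the HJB quadratic piece.
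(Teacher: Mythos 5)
Your proposal is correct and follows essentially the same route as the paper: differentiate the mild equation in $t$, use Lemma~\ref{lem:expected_F_incremenet} to express the $t$-derivative of the expectations as $-\tfrac{1}{2}\langle\rD_\phi^2(\Cdot),\dot G_{\tau,\varepsilon;t}(\Cdot-\Cdot)\rangle_\tau$, pick up the quadratic boundary term at $s=t$, recombine via Eq.~\eqref{eq:polchinski_integral} itself, and conclude with Lemma~\ref{lem:polchinski}. The finite-dimensional reduction you add to justify differentiability is not spelled out in the paper but is a harmless (indeed, slightly more careful) extra step.
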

\begin{proof}
It holds
\begin{multline}
 \partial_t U_{\tau,\varepsilon;t}(\phi) = \partial_t \fE U_{\tau,\varepsilon;0}(\varPsi_{\tau,\varepsilon;t,0}+\phi) 
 \\
 +
 \frac{1}{2} \int_0^t \partial_t\fE\langle \rD_\phi U_{\tau,\varepsilon;s}(\varPsi_{\tau,\varepsilon;t,s}+\phi),\dot G_{\varepsilon;s}\ast \rD_\phi U_{\tau,\varepsilon;s}(\varPsi_{\tau,\varepsilon;t,s}+\phi)\rangle_\tau\,\rd s
 \\
 + 
 \frac{1}{2} 
 \langle \rD_\phi U_{\tau,\varepsilon;t}(\phi),\dot G_{\varepsilon;t}\ast \rD_\phi U_{\tau,\varepsilon;t}(\phi)\rangle_\tau.
\end{multline} 
Lemma~\ref{lem:expected_F_incremenet} implies that
\begin{equation}
 \partial_t\fE F(\varPsi_{\tau,\varepsilon;t,s}+\phi) = -\frac{1}{2} \langle\rD_\phi^2 F(\varPsi_{\tau,\varepsilon;t,s}+\phi),\dot G_{\tau,\varepsilon;t}(\Cdot-\Cdot)\rangle_\tau
\end{equation}
for all $F\in\sN(C^\infty(\bT^2_\tau))^\bFF$. Hence, we obtain
\begin{multline}
 \partial_t U_{\tau,\varepsilon;t}(\phi) = 
 -\frac{1}{2}\langle\rD_\phi^2  \fE U_{\tau,\varepsilon;0}(\varPsi_{\tau,\varepsilon;t,0}+\phi),\dot G_{\tau,\varepsilon;t}(\Cdot-\Cdot)\rangle_\tau
 \\
 -
 \frac{1}{4} \int_0^t \langle\rD_\phi^2 \fE\langle \rD_\phi U_{\tau,\varepsilon;s}(\varPsi_{\tau,\varepsilon;t,s}+\phi),\dot G_{\varepsilon;s}\ast \rD_\phi U_{\tau,\varepsilon;s}(\varPsi_{\tau,\varepsilon;t,s}+\phi)\rangle_\tau,\dot G_{\tau,\varepsilon;t}(\Cdot-\Cdot)\rangle_\tau\,\rd s
 \\
 +
 \frac{1}{2} 
 \langle \rD_\phi U_{\tau,\varepsilon;t}(\phi),\dot G_{\varepsilon;t}\ast \rD_\phi U_{\tau,\varepsilon;t}(\phi)\rangle_\tau.
\end{multline} 
Using Eq.~\eqref{eq:polchinski_integral} the sum of the first two terms on the RHS of the above equation can be rewritten as
\begin{equation}
 -\frac{1}{2}\langle\rD_\phi^2 U_{\tau,\varepsilon;t}(\phi) ,\dot G_{\tau,\varepsilon;t}(\Cdot-\Cdot)\rangle_\tau.
\end{equation} 
This implies that the function $[0,1]\ni t\mapsto U_{\tau,\varepsilon;t}\in\sN(C^\infty(\bT_\tau^2)^\bFF)$ satisfies the Polchinski flow equation~\eqref{eq:polchinski} with the initial condition $U_{\tau,\varepsilon;0}(\phi)=U_{\tau,\varepsilon}(\phi)$. The statement follows now from Lemma~\ref{lem:polchinski}.
\end{proof}

\section{Symmetries}\label{sec:symmetries}

In this section we discuss the symmetries of the Gross-Neveu model that are instrumental in the decomposition of kernels of functionals into the local part and the remainder, which is defined in Sec.~\ref{sec:loc_ren}. As we mentioned in Sec.~\ref{sec:strategy} the form of the local terms that appear in this decomposition is restricted by the symmetries of the kernels.

\begin{dfn}\label{dfn:jet}
The jet prolongation of $\varphi\in C^\infty(\bR^2)^\bFF$ is defined by
\begin{equation}
 \fJ\,:\,C^\infty(\bR^2)^\bFF
 \to C^\infty(\bR^2)^{\bA\times\bFF},
 \qquad
 (\fJ\varphi)^{a,\sigma}
 \equiv
 \fJ\varphi^{a,\sigma} 
 =
 \partial^a\varphi^\sigma,
 \quad 
 a\in\bA,~\sigma\in\bFF.
\end{equation} 
\end{dfn}

\begin{dfn}[Symmetries of plane]\label{dfn:symm_plane}
Let $x\in\bR^2$ and let $R\,:\,\bR^2\to\bR^2$ be a matrix of a rotation around the origin or a reflection with respect to a line passing through the origin. For every $R=(R^{jk})_{j,k\in\{1,2\}}$ as above we choose a complex invertible $2\times2$ matrix $\gamma(R)$ such that 
$$
\gamma(R)^{-1}\,\gamma_j\,\gamma(R) = \sum_{k\in\{1,2\}} R^{jk}\gamma_k,
$$ 
where $\gamma_1,\gamma_2$ are the gamma matrices introduced in Def.~\ref{dfn:gamma}. We set $\Gamma(R)=\gamma(R)^{\oplus N}$. For $\varphi\equiv(\bar\varphi,\ubar\varphi)\in\sS(\bR^2)^\bFF$ we define
\begin{equation}
 \fT(x,R)\varphi\equiv(\fT(x,R) \bar\varphi,\fT(x,R) \bar\varphi)\in\sS(\bR^2)^\bFF
\end{equation}
by the equalities
\begin{equation}
 (\fT(x,R) \bar\varphi)(y) := \Gamma(R^{-1})^\mathrm{t}\bar\varphi(R^{-1}(y-x)),
 \qquad
 (\fT(x,R) \ubar\varphi)(y) := \Gamma(R)\ubar\varphi(R^{-1}(y-x))
\end{equation}
for all $y\in\bR^2$. Let $m\in\bN_+$. We say that a Schwartz distribution $V\in\sS'(\bR^{2m})^{\bFF^m}$ is invariant under the symmetries of the plane iff
\begin{equation}\label{eq:plane_symmetries}
 \langle V,\varphi_1\otimes\ldots\otimes\varphi_m\rangle = 
 \langle V,\fT(x,R)\varphi_1\otimes\ldots\fT(x,R)\varphi_m\rangle
\end{equation}
for all $\varphi_1,\ldots,\varphi_m\in \sS(\bR^2)^{\bFF}$, all $x\in\bR^2$ and all $R$ as above. The action of $\fT(x,R)$ on $\sS(\bR^2)^{\bA\times\bFF}$ is defined in such a way that
\begin{equation}
 \fT(x,R)(\fJ\varphi) := \fJ(\fT(x,R)\varphi)
\end{equation}
for all $\varphi\in \sS(\bR^2)^\bFF$, where $\fJ$ is the jet prolongation introduced in Definition~\ref{dfn:jet}. We say that a Schwartz distribution $V\in\sS'(\bR^{2m})^{\bA^m\times\bFF^m}$ is invariant under the symmetries of the plane iff the condition~\eqref{eq:plane_symmetries} is satisfied for all $\varphi_1,\ldots,\varphi_m\in \sS(\bR^2)^{\bA\times\bFF}$, all $x\in\bR^2$ and all $R$ as above.
\end{dfn}

\begin{rem}
It is possible to choose the assignment $R\mapsto\gamma(R)$ such that $(x,R)\mapsto \fT(x,R)$ is a projective representation of the Euclidean group on $\sS(\bR^2)^\bFF$. The choice of the assignment $R\mapsto\gamma(R)$ does not play a role in what follows.
\end{rem}

\begin{dfn}[Symmetries of torus]
A Schwartz distribution $V\in\sS'(\bR^{2m})^{\bA^m\times\bFF^m}$ is invariant under the symmetries of the torus iff the condition~\eqref{eq:plane_symmetries} is satisfied for all $x\in\bR^2$ and all $R\,:\,\bR^2\to\bR^2$ that are a rotation by a multiple of $\pi/2$ or a reflection with respect to the line $x^1=0$ or $x^2=0$.
\end{dfn}

\begin{dfn}[Internal rotations]
Let $\varphi\equiv(\bar\varphi,\ubar\varphi)\equiv(\varphi^-,\varphi^+)\in\sS(\bR^2)^{\bA\times\bFF}$ and a permutation $\pi\in\mathcal{P}_N$. We define $\fT(\pi)\varphi\in\sS(\bR^2)^\bFF$ by the equality
\begin{equation}
 (\fT(\pi)\varphi)^{a,(\pm,n,\alpha)}(x) := \varphi^{a,(\pm,\pi(n),\alpha)}(x)
\end{equation}
for all $a\in\bA$, $n\in\{1,\ldots,N\}$, $\alpha\in\{1,2\}$ and $x\in\bR^2$. Let $m\in\bN_+$. We say that a Schwartz distribution $V\in\sS'(\bR^{2m})^{\bA^m\times\bFF^m}$ is invariant under the internal symmetries iff
\begin{equation}
 \langle V,\varphi_1\otimes\ldots\varphi_m\rangle = 
 \langle V,\fT(\pi)\varphi_1\otimes\ldots\fT(\pi)\varphi_m\rangle
\end{equation}
for all $\pi\in\mathcal{P}_N$ and all $\varphi_1,\ldots,\varphi_m\in\sS(\bR^2)^{\bA\times\bFF}$.
\end{dfn}

\begin{dfn}[Charge conjugation]\label{dfn:charge}
For $\varphi\equiv(\bar\varphi,\ubar\varphi)\in\sS(\bR^2)^{\bA\times\bFF}$ we define we define
$
 \fC\varphi\equiv(\fC \bar\varphi,\fC \bar\varphi)\in\sS(\bR^2)^\bFF
$
by the equalities
\begin{equation}\label{eq:charge_conjugation}
 (\fC\bar\varphi)^a(x) = \Gamma_2\ubar\varphi^a(x),
 \qquad
 (\fC \ubar\varphi)^a(x) = \Gamma_2\bar\varphi^a(x)
\end{equation}
for $a\in\bA$, $x\in\bR^2$. Let $m\in\bN_+$. We say that a distribution $V\in\sS'(\bR^{2m})^{\bA^m\times\bFF^m}$ is invariant under the charge conjugation symmetry iff
\begin{equation}
 \langle V,\varphi_1\otimes\ldots\otimes\varphi_{m}\rangle = 
 \langle V,\fC\varphi_1\otimes\ldots\otimes\fC\varphi_{m}\rangle
\end{equation} 
for all $\varphi_1,\ldots,\varphi_m\in\sS(\bR^2)^\bFF$.
\end{dfn}
\begin{rem}\label{rem:odd_vanish}
One shows that if $m\in\bN_+\setminus2\bN_+$ and $V\in\sS'(\bR^{2m})^{\bA^m\times\bFF^m}$ is invariant under the charge conjugation symmetry, then $V=0$.
\end{rem}

\begin{rem}\label{rem:symmetries}
The only local functionals of degree two in $\psi\in\cS(\bR^2)^\bFF\otimes_{\mathrm{alg}}\sG$ with up to one derivative compatible with all of the above symmetries are of the form 
\begin{equation}\label{eq:inv_functionals_quadratic}
 \int_{\bR^2} 
 \bar\psi(x)\cdot\ubar\psi(x)\,\rd x,
 \qquad
 \int_{\bR^2} 
 \bar\psi(x)\cdot((\Gamma_1\partial_1+\Gamma_2\partial_2)\ubar\psi)(x)\,\rd x.
\end{equation}
The only local functionals of degree four in $\psi\in\cS(\bR^2)^\bFF\otimes_{\mathrm{alg}}\sG$ without derivatives compatible with all of the above symmetries are of the form 
\begin{equation}\label{eq:inv_functionals}
 \int_{\bR^2} 
 (\bar\psi(x)\cdot\ubar\psi(x))^2\,\rd x,
 \quad
 \sum_{j\in\{1,2\}}\int_{\bR^2} 
 (\bar\psi(x)\cdot\Gamma_j\ubar\psi(x))^2\,\rd x,
 \quad
 \int_{\bR^2} 
 (\bar\psi(x)\cdot\Gamma_1\Gamma_2\ubar\psi(x))^2\,\rd x.
\end{equation}
For the proof of the above claims we refer the reader to~\cite[Appendix]{MW73} or \cite[Appendix~C]{DY23}.
\end{rem}

\section{Weights}\label{sec:weights}

In this section we define the weights that are used in the definitions of various norms and establish some of their properties. The use of weights growing stretched exponentially allows to prove stretched exponential decay of truncated correlations. This choice of the weights will also play a role in the proof of the estimates for the map $\fR$ presented in Sec.~\ref{sec:loc_ren}.

\begin{dfn}\label{dfn:weights}
The diameter of the set of points $\{x_1,\ldots,x_m\}\subset\bR^2$ is defined by
\begin{equation}
 \rD(x_1,\ldots,x_m) := \max_{i,j\in\{1,\ldots,m\}} |x_i-x_j|.
\end{equation}
Let $\zeta:=4/5$. For $m\in\bN_+$, $\nu\in[0,1/2]$ and $t\in(0,1]$ we define the weights $w_\nu\in C(\bR^2)$ and $w^{m}_{t;\nu}\in C(\bR^{2m})$ by $w_\nu(x)=(1+|x|)^{-\nu}$ and
\begin{equation}
 w^{m}_{t;\nu}(x_1,\ldots,x_m) := 
 (1+|x_1|)^{-\nu}(1+\rD(x_1,\ldots,x_m))^{1/2-\nu}\exp(t^{-\zeta}\rD(x_1,\ldots,x_m)^\zeta).
\end{equation}
We set $\tilde w:=w_{1/2}$, $w^m_t:=w^{m}_{t;0}$, $\tilde w^m_t:=w^{m}_{t;1/2}$ and $\zeta_\star:=-2(1/\zeta-7/8)=-3/4$.
\end{dfn}
\begin{rem}
Note that the upper index $m\in\bN_+$ of $w^m_{t;\nu}$ does not denote the power. 
\end{rem}
\begin{rem}
We will frequently use the fact that $\tilde w^m_t(x_1,\ldots,x_m)\leq w^m_t(x_1,\ldots,x_m)$ for all $m\in\bN_+$, $t\in(0,1]$ and $x_1,\ldots,x_m\in\bR^2$. The weights $\tilde w^m_t$ will be used to study the infinite volume limit. Observe that the weight $\tilde w$ appeared already in Lemma~\ref{lem:Psi_bounds}.
\end{rem}
\begin{rem}\label{rem:w_tilde_young}
It holds $1/\tilde w(x)\leq 1/\tilde w(y) \, 1/\tilde w(x-y)$ for all $x,y\in\bR^2$. 
\end{rem}

\begin{lem}\label{lem:weights}
There exists $C\in(0,\infty)$ such that for all $l\in\{0,1\}$, $\nu\in[0,1/2]$, $t\in(0,1]$, $s\in(0,t)$, $k\in\{1,\ldots,m\}$ and $x_1,\ldots,x_m,y_1,\ldots,y_m,y,z\in\bR^2$ it holds
\begin{enumerate}
\item[(a)]
$
 w^{m}_{t;\nu}(x_1,\ldots,x_m)\leq 
 w^{k+1}_{t;\nu}(x_1,\ldots,x_k,y)
 \,w^2_t(y,z)\,w^{m-k+1}_t(z,x_{k+1},\ldots,x_m),
$

\item[(b)]
$
 w^{m}_{t;\nu}(x_m,\ldots,x_1)\leq 
 w^{m-k+1}_{t}(z,x_m,\ldots,x_{k+1})
 \,w^2_t(y,z)\,w^{k+1}_{t;\nu}(y,x_k,\ldots,x_1),
$

\item[(c)]
$
 w^{m}_{t;\nu}(x_1,\ldots,x_m)\leq 
 w^{m}_{t;\nu}(y_1,\ldots,y_m)
 \,w^2_t(y_1,x_1)\ldots w^2_t(y_m,x_m),
$

\item[(d)]
$
 w^{k}_{t;\nu}(x_1,\ldots,x_k)
 \leq
 w^{m}_{t;\nu}(x_1,\ldots,x_k,x_{k+1},\ldots,x_m),
$

\item[(e)]
$
 w^{k}_{t;\nu}(x_1,\ldots,x_k)
 \leq
 w^{m}_t(x_1,\ldots,x_k,x_{k+1},\ldots,x_m)
 w_\nu(x_m),
$

\item[(f)]
$
\int_0^1 (1-u)^l\rD(x_1,\ldots,x_m)^{l+1} w^{m}_{t;\nu}(ux_1,\ldots,ux_m)\,\rd u
\leq
C\,s^{l+1}\,(1-s/t)^{\zeta_\star}\,w^{m}_{s;\nu}(x_1,\ldots,x_m)
$.
\end{enumerate}
\end{lem}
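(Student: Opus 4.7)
The first five claims (a)--(e) are elementary consequences of: (i) the triangle-type inequality for diameters $\rD(x_1,\ldots,x_m)\le \rD(x_1,\ldots,x_k,y)+|y-z|+\rD(z,x_{k+1},\ldots,x_m)$ (and its variants), (ii) the subadditivity $(a+b+c)^\zeta\le a^\zeta+b^\zeta+c^\zeta$ for $\zeta\in(0,1]$ (so the exponentials factor multiplicatively), and (iii) $1+a_1+\ldots+a_n\le\prod_i(1+a_i)$ together with $(1+a)^{1/2-\nu}\le(1+a)^{1/2}$ since $\nu\ge0$. For (c) I would additionally use the pointwise estimate $1+|x_1|\le(1+|y_1|)(1+|x_1-y_1|)$ to relocate the origin-confining factor, absorbing the residual $(1+|x_1-y_1|)^{\nu}$ into the $(1+|x_1-y_1|)^{1/2}$ already present in $w^2_t(y_1,x_1)$ (which uses $\nu\le1/2$). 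For (d) the point is just monotonicity of $\rD$, and for (e) the decisive observation is $1+|x_m|\le(1+|x_1|)(1+\rD(x_1,\ldots,x_m))$, which lets one trade the $(1+|x_1|)^{-\nu}$ factor for $(1+|x_m|)^{-\nu}$ at the cost of absorbing $(1+\rD)^\nu$ into $(1+\rD)^{1/2-\nu}$.

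The real work is (f). Using $u(1+|x_1|)\le 1+u|x_1|$ for $u\in(0,1]$ one gets $(1+u|x_1|)^{-\nu}\le u^{-\nu}(1+|x_1|)^{-\nu}$, while $(1+uD)^{1/2-\nu}\le(1+D)^{1/2-\nu}$ because $uD\le D$ and $1/2-\nu\ge0$. After extracting $(1+|x_1|)^{-\nu}(1+D)^{1/2-\nu}$ from both sides, the problem reduces to the scalar estimate
\begin{equation*}
D^{l+1}\!\int_0^1(1-u)^l u^{-\nu}\,e^{t^{-\zeta}u^\zeta D^\zeta}\,du \;\le\; C\,s^{l+1}\,(1-s/t)^{\zeta_\star}\,e^{s^{-\zeta}D^\zeta},\qquad D=\rD(x_1,\ldots,x_m).
\end{equation*}

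I would split into $D\le t$ and $D>t$. In the first range, $t^{-\zeta}u^\zeta D^\zeta\le 1$, so the integral is $O(D^{l+1})$; writing $x=D/s$ and applying the universal bound $x^{l+1}e^{-x^\zeta}\le C_{l,\zeta}$ yields $D^{l+1}\le C s^{l+1}e^{s^{-\zeta}D^\zeta}$, which is already stronger than needed since $(1-s/t)^{\zeta_\star}\ge1$. In the range $D>t$ I would substitute $u=1-w$, use the concavity bound $(1-w)^\zeta\le 1-\zeta w$ to factor out $e^{t^{-\zeta}D^\zeta}=e^{s^{-\zeta}D^\zeta}e^{-\sigma D^\zeta}$ with $\sigma:=s^{-\zeta}-t^{-\zeta}$, split the remaining $w$-integral at $w=1/2$, recognise a Gamma integral on $[0,1/2]$ (extended to $[0,\infty)$), and absorb the tail $w\in[1/2,1]$ by the factor $e^{-\zeta t^{-\zeta}D^\zeta/2}$. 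This reduces the task to showing
\begin{equation*}
 t^{\zeta(l+1)}\,D^{(1-\zeta)(l+1)}\,e^{-\sigma D^\zeta} \;\le\; C\,s^{l+1}(1-s/t)^{\zeta_\star}\qquad\text{for all }D>t.
\end{equation*}

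The main obstacle is to land on the exponent $\zeta_\star=-3/4$, and the subtlety is that the \emph{unconstrained} maximiser $D_\ast^\zeta=(1-\zeta)(l+1)/(\zeta\sigma)$ of the left-hand side lies inside $[0,t]$ (the wrong half-line) precisely when $s/t$ is small: a naive pointwise bound $x^\alpha e^{-\sigma x^\zeta}\le(C/\sigma)^{\alpha/\zeta}$ combined with $\sigma\ge\zeta s^{-\zeta}(1-s/t)$ then produces a factor $(s/t)^{-\zeta(l+1)/\zeta_\sharp}$ that blows up as $s/t\to0$. I would therefore perform a secondary case split on the position of $D_\ast$: when $D_\ast\ge t$ (i.e.\ $s/t$ close to $1$), use the interior maximum together with $1-(s/t)^\zeta\ge\zeta(1-s/t)$ to obtain a bound with polynomial blow-up $(1-s/t)^{-(l+1)/4}$ which is comfortably dominated by $(1-s/t)^{\zeta_\star}=(1-s/t)^{-3/4}$; when $D_\ast<t$ the function is monotone decreasing on $D>t$ and the maximum is at the boundary $D=t$, where the exponential $e^{-\sigma t^\zeta}=e^{1-(t/s)^\zeta}$ is super-polynomially small in $t/s$ and swallows the polynomial factor $(t/s)^{l+1}$ coming from the ratio to $s^{l+1}$. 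The choice $\zeta_\star=-3/4$ then leaves quantitative room to spare; the numerical value is dictated by the exponent $\zeta=4/5$ of the stretched-exponential weight rather than by the two cases above.
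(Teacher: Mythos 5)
Your treatment of (a)--(e) is essentially the paper's: the same diameter triangle inequalities, subadditivity of $x\mapsto x^\zeta$, the bound $(1+a+b)^{\mu}\le(1+a)^{\mu}(1+b)^{\mu}$, and the relocation of the $(1+|x_1|)^{-\nu}$ factor using $|x_m|\le|x_1|+\rD$ (for (c) you quoted the inequality in the reversed direction -- what is needed is $1+|y_1|\le(1+|x_1|)(1+|x_1-y_1|)$ -- but this is a harmless slip since both directions hold). For (f) your first reduction (extracting $u^{-\nu}$ via $u(1+|x_1|)\le 1+u|x_1|$ and dropping $(1+uD)^{1/2-\nu}\le(1+D)^{1/2-\nu}$) coincides with the paper's, but after that the routes genuinely differ. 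The paper avoids any case analysis: it applies the uniform-in-$d$ bound $d^{l+1}e^{-r d^\zeta}\le \hat C\, r^{-(l+1)/\zeta}$ with the $u$-dependent rate $r=s^{-\zeta}-u^\zeta t^{-\zeta}$, and then estimates the single beta-type integral $\int_0^1 u^{-\nu}(1-u)^l\,(1-u^\zeta(s/t)^\zeta)^{-(l+1)/\zeta}\rd u$ by splitting the exponent $(l+1)/\zeta=\zeta_\sharp(l)+(l+1)(1/\zeta-7/8)$, the first piece being integrable against $(1-u)^l$ and the second producing exactly $(1-s/t)^{\zeta_\star}$. You instead keep the $D$-dependence, split on $D\le t$ versus $D>t$, use $(1-w)^\zeta\le 1-\zeta w$ and a Gamma integral in the latter case, and finish by optimizing $D\mapsto D^{(1-\zeta)(l+1)}e^{-\sigma D^\zeta}$ on $[t,\infty)$ with a secondary split on the location of the maximizer; in the interior-maximum regime $t/s$ is bounded, which is what neutralizes the $(t/s)^{\zeta(l+1)}$ factor that would wreck the naive bound (your stray ``$\zeta_\sharp$'' in that sentence is spurious, and ``$s/t$ close to $1$'' should read ``$s/t$ bounded below by a fixed constant,'' but the argument is unaffected). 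Both proofs are correct; the paper's is shorter and produces $\zeta_\star$ directly from the $u\to1$ singularity, while yours makes explicit that the loss in your bound is only $(1-s/t)^{-(l+1)/4}$, confirming your remark that $\zeta_\star=-3/4$ leaves room to spare.
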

\begin{rem}
The above lemma will be only used with $\nu\in\{0,1/2\}$.
\end{rem}
\begin{proof}
Items~(a) and~(b) follow from the bounds
\begin{equation}
\begin{gathered}
 \rD(x_1,\ldots,x_m) \leq
 \rD(x_1,\ldots,x_k,y)+|y-z|+\rD(z,x_{k+1},\ldots,x_m),
 \\
 |y|\leq |y-z|+\rD(z,x_{k+1},\ldots,x_m)+|x_m|,
\end{gathered} 
\end{equation}
which are consequences of the triangle inequality, and the bounds 
\begin{equation}
 (a+b+c)^\mu\leq a^\mu + b^\mu + c^\mu,
 \qquad
 (1+a+b+c)^\mu\leq (1+a)^\mu (1+b)^\mu (1+c)^\mu
\end{equation}
valid for $\mu\in(0,1]$ and $a,b,c\in[0,\infty)$. Item~(c) is proved along the same lines with the use of the bounds
\begin{equation}
 \rD(x_1,\ldots,x_m) \leq
 \rD(y_1,\ldots,y_m)+|y_1-x_1|+\ldots+|y_m-x_m|,
 \qquad
 |y_1|\leq |y_1-x_1| + |x_1|.
\end{equation}
To prove Item~(d) one uses the fact that the function
\begin{equation}
 [0,\infty)\ni d\mapsto (1+d)^{1/2-\nu}\,\exp(t^{-\zeta} d^\zeta)\in\bR
\end{equation}
is monotonic. Item~(e) follows from Item~(d), the bound
$|x_m|\leq |x_1| + \rD(x_1,\ldots,x_m)$ and the inequality $(1+a+b)^\mu\leq (1+a)^\mu (1+b)^\mu$ valid for $\mu\in(0,1]$ and $a,b\in[0,\infty)$. We proceed to the proof of Item~(f). Let $\rD(x_1,\ldots,x_m)=d$. Observe that it holds
\begin{multline}
 d^{l+1}\frac{w^{m;\nu}_{t}(ux_1,\ldots,ux_m)}{w^{m;\nu}_{s}(x_1,\ldots,x_m)}
 =
 \frac{(1+|x_1|)^\nu}{(1+u\,|x_1|)^\nu} \frac{(1+u\, d)^{1/2-\nu}}{(1+d)^{1/2-\nu}}d^{l+1}\exp((u^\zeta t^{-\zeta}-s^{-\zeta})d^\zeta)
 \\
 \leq
 u^{-\nu} d^{l+1}\exp((u^\zeta t^{-\zeta}-s^{-\zeta})d^\zeta)
 \leq
 \hat C\,u^{-\nu}\,(s^{-\zeta}-u^\zeta t^{-\zeta})^{-(l+1)/\zeta},
\end{multline}
where $\hat C\in(0,\infty)$ is such that 
$
 d^{l+1}\exp(-d^\zeta r)
 \leq
 \hat C\,r^{-(l+1)/\zeta}
$
for all $l\in\{0,1\}$, $r\in(0,\infty)$ and $\zeta\in(0,1]$. Consequently, it holds
\begin{equation}
 \int_0^1 (1-u)^l\rD(x_1,\ldots,x_m)^{l+1} \frac{w^{m;\nu}_{t}(ux_1,\ldots,ux_m)}{w^{m;\nu}_{s}(x_1,\ldots,x_m)}\,\rd u
 \leq
 \hat C \int_0^1 \frac{s^{l+1}\,u^{-\nu}\,(1-u)^l}{(1-u^\zeta (s/t)^{\zeta})^{(l+1)/\zeta}}\,\rd u
\end{equation}
Note that for $0<s<t\leq 1$ and $0<u\leq1$ it holds $1-u^\zeta\geq \zeta (1- u)$ and
\begin{equation}
 (1-u^\zeta (s/t)^{\zeta})^{(l+1)/\zeta}
 \geq
 \zeta^{(l+1)/\zeta}\, (1-u (s/t))^{(l+1)/\zeta}
 \\
 \geq 
 \zeta^{(l+1)/\zeta}\, (1-u)^{\zeta_\sharp(l)}
 \,
 (1-s/t)^{(l+1)(1/\zeta-7/8)},
\end{equation}
where $\zeta_\sharp(l):=(l+1)/\zeta-(l+1)(1/\zeta-7/8)>0$. Since $l\in\{0,1\}$ and $l-\zeta_\sharp(l)>-1$ the bound stated in Item~(f) holds true with
\begin{equation}
 C=\max_{l\in\{0,1\}}\zeta^{-2/\zeta}\, \hat C \int_0^1 \frac{u^{-1/2}\,(1-u)^l}{(1-u)^{\zeta_\sharp(l)}}\,\rd u <\infty.
\end{equation}
This finishes the proof.
\end{proof}

\section{Spaces of kernels and functionals}\label{sec:topology}

In this section we define the spaces of kernels and functionals in infinite volume that are used in Sec.~\ref{sec:fixed_point} to solve the system of equations~\eqref{eq:flow_g_intro_ren}, \eqref{eq:flow_r_intro_ren},
\eqref{eq:flow_2_intro_z}, \eqref{eq:flow_2_intro_W1}, 
\eqref{eq:flow_2_intro_W2} introduced in Sec.~\ref{sec:strategy} using the Banach fixed point theorem. We also define the map $\fA_{\tau,\varepsilon;t,s}$ and analyze its properties. We refer the reader to Sec.~\ref{sec:strategy} for the motivation behind the definitions stated in this section.

\begin{dfn}\label{dfn:Banach_distributions}
Let $m,n\in\bN_+$ and $\sA$ be a unital Banach algebra. We denote by $\sS(\bR^{2m},\sA)$ the space of Schwartz test functions valued in $\sA$ equipped with the usual family of semi-norms. The space of $\sA$-valued Schwartz distributions is denoted by $\sS'(\bR^{2m},\sA)$ and by definition coincides with the space of continuous maps $\sS(\bR^{2m})\to\sA$ equipped with the topology of pointwise convergence. We denote by 
$$\langle\Cdot,\Cdot\rangle\,:\,\sS'(\bR^{2m},\sA)\times \sS(\bR^{2m},\sA)\to\sA$$ 
the unique bilinear map such that $\langle V,\varphi A \rangle := \langle V,\varphi \rangle\, A$ for all $V\in\sS'(\bR^{2m},\sA)$, $\varphi\in \sS(\bR^{2m})$ and $A\in\sA$, where $\langle V,\varphi \rangle$ is the standard paring between a distribution and a test function. We denote by 
$$\Cdot\otimes\Cdot\,:\,\sS(\bR^{2m},\sA)\times \sS(\bR^{2n},\sA)\to\sS(\bR^{m+n},\sA)$$ 
the unique bilinear map such that $\varphi A \otimes\psi B := (\varphi\otimes\psi)\, AB$ for all $\varphi\in\sS(\bR^{2m})$, $\psi\in \sS(\bR^{2n})$ and $A,B\in\sA$, where $\varphi\otimes\psi\in\sS(\bR^{2m+2n})$ is the standard tensor product of Schwartz functions. Let $\bK$ be a finite set. The maps
\begin{equation}
\begin{gathered}
\langle\Cdot,\Cdot\rangle\,:\,\sS'(\bR^{2m},\sA)^\bK\times \sS(\bR^{2m},\sA)^\bK\to\sA,
\\
\Cdot\otimes\Cdot\,:\,\sS(\bR^{2m},\sA)^{\bK^{m}}\times \sS(\bR^{2n},\sA)^{\bK^{n}}\to\sS(\bR^{2m+2n},\sA)^{\bK^{m+n}}
\end{gathered} 
\end{equation}
are defined in analogy to Def.~\ref{dfn:dist_K}.
\end{dfn}

\begin{rem}
In what follows, we assume that $\sG$ is an infinite dimensional Grassmann algebra. It is important $\sG$ is infinite dimensional as we shall frequently use the fact that for every $n\in\bN_+$ there exist $g_1,\ldots,g_n\in\sG$ such that $g_1\ldots g_n\neq0$. We do not equip $\sG$ with any topology.
\end{rem}

\begin{rem}
Let $\sA=\sA^+\oplus\sA^-$ be a graded unital algebra. We define $\sA\otimes_{\mathrm{alg}}\sG$ to be the algebraic graded tensor product of graded algebras $\sA$ and $\sG$. In particular, the product in $\sA\otimes_{\mathrm{alg}}\sG$ satisfies the condition $(A\otimes g)(B\otimes h)=(-1)^{\deg(g)\deg(B)}\,AB\otimes gh$ for all $A,B\in\sA$ and $gh\in\sG$ of definite parity. We identify $A\in\sA$ and $g\in\sG$ with 
\begin{equation}
 A\otimes 1\in \sA\otimes_{\mathrm{alg}}\sG
 \qquad\mathrm{and}\qquad
 \mathds{1}\otimes g\in \sA\otimes_{\mathrm{alg}}\sG,
\end{equation}
respectively. We use these identifications to make sense of $A+g,Ag\in \sA\otimes_{\mathrm{alg}}\sG$. Let $m\in\bN_+$ and $\bK$ be a finite set. We define the paring
\begin{equation}
\langle\Cdot,\Cdot\rangle\,:\,\sS'(\bR^{2m},\sA)^{\bK^m}\times (\sS(\bR^{2m},\sA)^{\bK^m}\otimes_{\mathrm{alg}}\sG)\to\sA\otimes_{\mathrm{alg}}\sG
\end{equation}
as the unique bilinear map such that $\langle V,\varphi \otimes g \rangle := \langle V,\varphi \rangle\, \otimes g$ for all $V\in\sS'(\bR^{2m},\sA)^\bK$, $\varphi\in \sS(\bR^{2m})^\bK$ and $g\in\sG$. We denote by
\begin{equation}
 \Cdot\otimes\Cdot\,:\,(\sS(\bR^{2m},\sA)^{\bK^{m}}\otimes_{\mathrm{alg}}\sG)\times (\sS(\bR^{2n},\sA)^{\bK^{n}}\otimes_{\mathrm{alg}}\sG)\to\sS(\bR^{2m+2n},\sA)^{\bK^{m+n}}\otimes_{\mathrm{alg}}\sG
\end{equation}
the unique bilinear map such that $(\varphi \otimes g) \otimes(\psi \otimes h) := (-1)^{\deg(g)\deg(\psi)}(\varphi\otimes\psi)\otimes gh$ for all $\varphi\in\sS(\bR^{2m},\sA)^{\bK^m}$, $\psi\in \sS(\bR^{2n},\sA)^{\bK^n}$ and $g,h\in\sG$ of definite parity.
\end{rem}

\begin{dfn}\label{dfn:permutations}
Let $m\in\bN_+$ and $\sA$ be a unital Banach algebra. We denote by $\mathcal{P}_m$ the group of permutations of the set $\{1,\ldots,m\}$. The antisymmetric part of a~distribution $V\in\sS'(\bR^{2m},\sA)^{\bA^m\times\bFF^m}$ is the distribution \mbox{$\fS V\in\sS'(\bR^{2m},\sA)^{\bA^m\times\bFF^m}$} defined by
\begin{equation}
 \langle \fS V,\varphi_1\otimes\ldots\otimes\varphi_m\rangle 
 = \frac{1}{m!}\sum_{\pi\in\mathcal{P}_m}(-1)^{\mathrm{sgn}(\pi)}
 \langle V,\varphi_{\pi(1)}\otimes\ldots\otimes\varphi_{\pi(m)}\rangle
\end{equation} 
for all $\varphi_1,\ldots,\varphi_m\in\sS(\bR^2)^{\bA\times\bFF}$. We say that a distribution $V\in\sS'(\bR^{2m},\sA)^{\bA^m\times\bFF^m}$ is antisymmetric iff $V=\fS V$.
\end{dfn}

\begin{rem}
Recall that $\sB=\sB(\mathscr{H})$ is the algebra of bounded operators acting on the Hilbert space $\mathscr{H}$ introduced in Def.~\ref{dfn:white_noise} 
\end{rem}

\begin{dfn}\label{dfn:sM}
Let $m\in\bN_+$ and $\sA$ be a unital Banach subalgebra of $\sB$. We denote by $\sM^m(\sA)$ the vector space of Schwartz distributions $V\in\sS'(\bR^{2m},\sA)$ such that there exists a kernel $V_\rK\,:\,\bR^2 \times \mathrm{Borel}(\bR^{2(m-1)}) \to \sA$ satisfying the following conditions: 
\begin{itemize}
\item[(1)] for every $A\in\mathrm{Borel}(\bR^{2(m-1)})$ the map $x\mapsto V_\rK(x,A)$ is Borel measurable,
\item[(2)] for every $x\in\bR^2$ the map $A\mapsto V_\rK(x,A)$ is a Banach space valued measure, 
\item[(3)] the following norm
\begin{equation}
 \|V\|_{\sM^m}
 :=
 \sup_{x_1\in\bR^2}\int_{\bR^{2(m-1)}} \|V_\rK(x_1,\rd x_2\ldots\rd x_m)\|_\sB 
\end{equation}
is finite, where $\|V_\rK(x_1,\rd x_2\ldots\rd x_m)\|_\sB$ denotes the variation of $V_\rK(x_1,\rd x_2\ldots\rd x_m)$,
\item[(4)] for all $\varphi\in\sS(\bR^{2m})$ it holds
\begin{equation}\label{eq:V_rK}
 \langle V,\varphi\rangle = \int_{\bR^{2m}} V_\rK(x_1,\rd x_2\ldots\rd x_m)\,\varphi(x_1,\ldots,x_m)\,\rd x_1.
\end{equation}
\end{itemize} 
We write $\sM^m:=\sM^m(\bC)$.
\end{dfn}
\begin{rem}
Note that it holds $\sM^m(\bC)\subset\sM^m(\sA)\subset\sM^m(\sB)$.
\end{rem}
\begin{rem}
Let $K\,:\,\bR^{2m}\to\sB$ be measurable such that
\begin{equation}
 \|K\|_{\sK^m}:=\sup_{x_1\in\bR^2}\int_{\bR^{2(m-1)}}\|K(x_1,\ldots,x_m)\|_\sB\,\rd x_2\ldots\rd x_m <\infty.
\end{equation}
Then $V\in\sS'(\bR^{2m},\sB)$ defined by Eq.~\eqref{eq:V_rK} with 
\begin{equation}
 V_\rK(x_1,\rd x_2\ldots\rd x_m)=K(x_1,\ldots,x_m)\,\rd x_2\ldots\rd x_m
\end{equation}
belongs to $\sM^m(\sB)$ and $\|V\|_{\sM^m}=\|K\|_{\sK^m}$. 
\end{rem}
\begin{rem}\label{rem:Dirac_diagonal}
Let $\delta^{(m)}\in\sS'(\bR^{2m})$ be the Dirac measure on the diagonal, i.e.
\begin{equation}
 \langle \delta^{(m)},\varphi_1\otimes\ldots\otimes\varphi_m\rangle:= \int_{\bR^2}
 \varphi_1(x)\ldots\varphi_m(x)\,\rd x
\end{equation}
for all $\varphi_1,\ldots,\varphi_m\in\sS(\bR^2)$. Then $\delta^{(m)}\in\sM^m$ and $\|\delta^{(m)}\|_{\sM^m}=1$. 
\end{rem}

\begin{rem}\label{rem:fM_fE}
It follows from the definition of the variation of the vector measure and the properties~(0) and~(4) of the conditional expected value introduced in Def.~\ref{dfn:conditional_exp} that
\begin{equation}
 \fE_t V\in \sM^m(\sF_t),
 \qquad
 \|\fE_t V\|_{\sM^m}\leq \|V\|_{\sM^m}
\end{equation}
for all $V\in\sM^m(\sF)$ and $t\in[0,1]$.
\end{rem}

\begin{rem}
Recall that $w_1^{m}$ and $\tilde w_1^{m}$ are weights introduced in Def.~\ref{dfn:weights}.
\end{rem}

\begin{dfn}\label{dfn:sN}
Let $m\in\bN_+$ and $\sA$ be a unital Banach subalgebra of $\sB$. The vector space $\sNN^m(\sA)\subset\sM^m(\sA)^{\bA^m\times\bFF^m}$ consists of Schwartz distributions 
\begin{equation}
 V
 \equiv
 (V^{a,\sigma})_{a\in\bA^m,\sigma\in\bFF^m}\in\sS'(\bR^{2m},\sA)^{\bA^m\times\bFF^m}
\end{equation}
such that the following norm
\begin{equation}
 \|V\|_{\sN^m}:=\sum_{a\in\bA^m}\sum_{\sigma\in\bFF^m}\|w_1^{m}V^{a,\sigma}\|_{\sM^m}
\end{equation}
is finite. We define the norm $\|V\|_{\tilde\sN^m}$ in an analogous way with $w_1^{m}$ replaced by $\tilde w_1^{m}$. We denote by $\sN^m(\sA)$ the subspace of $\sNN^m(\sA)$ consisting of antisymmetric Schwartz distributions, cf. Def.~\ref{dfn:permutations}. The vector space $\sN(\sA):=\bigtimes_{m\in\bN_+}\sN^m(\sA)$ consists of tuples
\begin{equation}
 V\equiv(V^m)_{m\in\bN_+}
 \equiv
 (V^{m,a,\sigma})_{m\in\bN_+,a\in\bA^m,\sigma\in\bFF^m}.
\end{equation}
We endow $\sN(\sA)$ with the product topology. The subspace $\sN^{\mathrm{fin}}(\sA)\subset \sN(\sA)$ consists of tuples $V$ such that $V^m=0$ for all but finitely many $m\in\bN_+$. We set \mbox{$\Pi^mV:=V^m$}
and $\Pi^{m,a,\sigma} V:=V^{m,a,\sigma}$. For $V\in\sN(\sA)$ and $k\in\bN_+$ we define \mbox{$\Pi_k V\equiv((\Pi_k V)^m)_{m\in\bN_+}\in\sN(\sA)$} by the equalities $(\Pi_k V)^m:=0$ for all $m\in\bN_+\setminus\{k\}$ and $(\Pi_k V)^m:=V^m$ for $m=k$. We also define \mbox{$\Pi_{>k} V\equiv((\Pi_{>k} V)^m)_{m\in\bN_+}\in\sN(\sA)$} by the equalities $(\Pi_{>k} V)^m:=0$ for all \mbox{$m\in\{1,\ldots,k\}$} and $(\Pi_{>k} V)^m:=V^m$ for $m\in\bN_+\setminus\{1,\ldots,k\}$. For $V\in\sN^{\mathrm{fin}}(\sA)$ and $\varphi\in\sS(\bR^2,\sA)^{\bA\times\bFF}\otimes_{\mathrm{alg}}\sG$ we write
\begin{equation}
 V[\varphi]:=\sum_{m\in\bN_+} \langle V^m,\varphi^{\otimes m}\rangle\in\sA\otimes_{\mathrm{alg}}\sG.
\end{equation}
For $k\in\bN_0$ we define $\rD^k_\varphi V[\varphi]$ in analogy to Def.~\ref{dfn:functional}. We set $\sNN^m:=\sNN^m(\bC)$, $\sN^m:=\sN^m(\bC)$, $\sN:=\sN(\bC)$ and $\sN^{\mathrm{fin}}:=\sN^{\mathrm{fin}}(\bC)$.
\end{dfn}

\begin{rem}\label{eq:N_polynomial}
Let $V\in\sNN^m(\sA)$. Since the norm $\|V\|_{\sN^m}<\infty$ involves a weight of stretched exponential growth it holds  $\cX V^{a,\sigma}\in\sM$ for all $a\in\bA^m,\sigma\in\bFF^m$ and all translationally-invariant polynomials $\cX\in C^\infty(\bR^{2m})$.
\end{rem}

\begin{rem}\label{rem:sN_m}
Using the fact that $V\in\sN^m(\sA)$ is antisymmetric one shows that it is uniquely determined by the map 
\begin{equation}
 \sS(\bR^2)^{\bA\times\bFF}\otimes_{\mathrm{alg}}\sG^-\ni\varphi \mapsto \langle V,\varphi^{\otimes m}\rangle\in\sA\otimes_{\mathrm{alg}}\sG.
\end{equation}
Actually, given $\psi_1,\ldots,\psi_m\in\sS(\bR^2)^{\bA\times\bFF}$ we choose $\varphi=\sum_{j=1}^m \psi_j g_j$, where $g_1,\ldots,g_m\in\sG$ are such that $g=g_1\ldots g_m\neq 0$. Then $m!\,\langle V,\psi_1\otimes\ldots\otimes\psi_m\rangle\otimes g =\langle V,\varphi^{\otimes m}\rangle$.
\end{rem}

\begin{dfn}\label{dfn:V_space}
Let $m\in\bN_+$, $\alpha\in(0,\infty)$, $\beta,\gamma\in[0,\infty)$ and 
$
 \rho_{\gamma,\kappa}(m) 
 :=
 \gamma+2\kappa m.
$
For continuous maps
\begin{equation}
 (0,1]\ni s\mapsto V^m_s\equiv(V^{m,a,\sigma}_s)_{a\in\bA^m,\sigma\in\bFF^m}\in \sN^m(\sB)
\end{equation}
we define
\begin{equation}\label{eq:norm_sV_m}
 \|V^m_\Cdot\|_{\sV^{m;\gamma}}:=\sum_{a\in\bA^m}\sum_{\sigma\in\bFF^m}\sup_{s\in(0,1]}
 \lambda_s^{-\rho_{\gamma,\kappa}(m)}\,
 s^{2-m/2-|a|}\,\|w^m_s V_s^{m,a,\sigma}\|_{\sM^m}.
\end{equation}
The vector space $\sV^{m;\gamma}$ consists of continuous maps
\begin{equation}
 (0,1]\ni s\mapsto V^m_s\equiv(V^{m,a,\sigma}_s)_{a\in\bA^m,\sigma\in\bFF^m}\in \sN^m\equiv\sN^m(\bC)
\end{equation}
such that $\|V^m_\Cdot\|_{\sV^{m;\gamma}}<\infty$. The vector space $\sV^{\gamma}$ consists of continuous maps
\begin{equation}
 (0,1]\ni s\mapsto V_s\in\sN\equiv\sN(\bC)
\end{equation}
such that $V_\Cdot^m\equiv\Pi^m V_\Cdot\in\sV^{m;\gamma}$ for all $m\in\bN_+$. The vector space $\sV^{\mathrm{fin};\gamma}$ consists of continuous maps 
\begin{equation}
 (0,1]\ni s\mapsto V_s\in\sN^{\mathrm{fin}}\equiv\sN^{\mathrm{fin}}(\bC)
\end{equation}
such that $V_\Cdot^m\equiv\Pi^m V_\Cdot\in\sV^{m;\gamma}$ for all $m\in\bN_+$. We denote by $\sV^{\alpha,\beta;\gamma}$ the closure of \mbox{$\sV^{\mathrm{fin};\gamma}\subset \sV^{\gamma}$} with respect to the norm
\begin{equation}\label{eq:norm_sV_alpha_beta}
 \|V_\Cdot\|_{\sV^{\alpha,\beta;\gamma}}:=\sup_{m\in\bN_+} \alpha^m\,m^\beta\, \|V^m_\Cdot\|_{\sV^{m;\gamma}}.
\end{equation}
For $V^m_\Cdot\in\sV^{m;\gamma}$ we define $\|V^m_\Cdot\|_{\tilde\sV^{m;\gamma}}$ in an analogous way to $\|V^m_\Cdot\|_{\sV^{m;\gamma}}$ with $w^m_s$ replaced by $\tilde w^m_s$ in Eq.~\eqref{eq:norm_sV_m}. For $V_\Cdot\in\sV^{\alpha,\beta;\gamma}$ we define $\|V_\Cdot\|_{\tilde\sV^{\alpha,\beta;\gamma}}$ in an analogous way to $\|V^m_\Cdot\|_{\sV^{\alpha,\beta;\gamma}}$ with $\|\Cdot\|_{\sV^{m;\gamma}}$ replaced by $\|\Cdot\|_{\tilde\sV^{m;\gamma}}$ in Eq.~\eqref{eq:norm_sV_alpha_beta}.
\end{dfn}

\begin{rem}\label{rem:sV_tilde}
Since $\tilde w^m_t\leq w^m_t$ the norms $\|\Cdot\|_{\tilde\sV^{m;\gamma}}$ and $\|\Cdot\|_{\tilde\sV^{\alpha,\beta;\gamma}}$ are weaker than the norms $\|\Cdot\|_{\sV^{m;\gamma}}$ and $\|\Cdot\|_{\sV^{\alpha,\beta;\gamma}}$, respectively. The norms $\|\Cdot\|_{\tilde\sV^{m;\gamma}}$ and $\|\Cdot\|_{\tilde\sV^{\alpha,\beta;\gamma}}$ will be used to study the infinite volume limit. Observe that we did not define the spaces $\tilde\sV^{m;\gamma}$ and $\tilde\sV^{\alpha,\beta;\gamma}$.
\end{rem}

\begin{rem}\label{rem:sV_tilde2}
Note that the weight $\tilde w^m_t$ used in the definition of the norms $\|\Cdot\|_{\tilde\sV^{m;\gamma}}$ and $\|\Cdot\|_{\tilde\sV^{\alpha,\beta;\gamma}}$ is not invariant under permutations of its arguments. For a permutation $\pi\in\cP_m$ define $\fS_\pi \tilde w^m_t\in C(\bR^{2m})$ by the equality $(\fS_\pi \tilde w^m_t)(x_1,\ldots,x_m):=\tilde w^m_t(x_{\pi(1)},\ldots,x_{\pi(m)})$. Using the fact that 
\begin{equation}
 V^m\equiv  (V^{m,a,\sigma})_{a\in\bA^m,\sigma\in\bFF^m}\in\sN^m(\sB)\subset \sS'(\bR^{2m},\sA)^{\bA^m\times\bFF^m}
\end{equation}
is antisymmetric one shows that  
\begin{equation}
 \sum_{a\in\bA^m}\sum_{\sigma\in\bFF^m}\|(\fS_\pi\tilde w^m_s) V^{m,a,\sigma}\|_{\sM^m}=\sum_{a\in\bA^m}\sum_{\sigma\in\bFF^m}\|\tilde w^m_s V^{m,a,\sigma}\|_{\sM^m}.
\end{equation}
\end{rem}

\begin{rem}\label{rem:sV_norm_property}
For all $\alpha,\tilde \alpha\in[1,\infty)$, $\beta,\tilde\beta,\gamma,\tilde\gamma\in[0,\infty)$ and $V_\Cdot\in \sV^{\alpha,\beta;\gamma}$ such that $\tilde\alpha\leq\alpha$, $\tilde\beta<\beta$, $\tilde\gamma\leq\gamma$ it holds 
\begin{equation}
 \|V_\Cdot\|_{\sV^{\tilde\alpha,\tilde\beta;\tilde\gamma}}
 \leq
 \|V_\Cdot\|_{\sV^{\alpha,\beta;\gamma}},
 \qquad
 \|V_\Cdot\|_{\tilde\sV^{\tilde\alpha,\tilde\beta;\tilde\gamma}}
 \leq
 \|V_\Cdot\|_{\tilde\sV^{\alpha,\beta;\gamma}}.
\end{equation}
\end{rem}

\begin{rem}\label{rem:sV_norm_estimate}
For all $\alpha,\beta,\tilde \alpha,\tilde\beta\in[1,\infty)$, $k\in\bN_+$ there exists $C\in(0,\infty)$ such that for all $\gamma\in[0,\infty)$ and $V_\Cdot\in \sV^{\alpha,\beta;\gamma}$ it holds
\begin{equation}
 \|\Pi^k V_\Cdot\|_{\sV^{k;\gamma}}
 \leq
 C\,\|V_\Cdot\|_{\sV^{\alpha,\beta;\gamma}},
 \qquad
 \|\Pi_k V_\Cdot\|_{\sV^{\tilde\alpha,\tilde\beta;\gamma}}
 \leq
 C\,\|V_\Cdot\|_{\sV^{\alpha,\beta;\gamma}}.
\end{equation}
\end{rem}

\begin{lem}\label{lem:V_banach}
For all $m\in\bN_+$ and $\alpha\in(0,\infty)$, $\beta,\gamma\in[0,\infty)$ the spaces $(\sV^{m;\gamma},\|\Cdot\|_{\sV^{m;\gamma}})$ and $(\sV^{\alpha,\beta;\gamma},\|\Cdot\|_{\sV^{\alpha,\beta;\gamma}})$ are Banach spaces.
\end{lem}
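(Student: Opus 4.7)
My plan is to establish the first assertion by hand and deduce the second from it by a closure-in-a-Banach-space argument. The non-trivial aspect is that $\|\cdot\|_{\sV^{m;\gamma}}$ weights kernels by the scale-dependent factor $\lambda_s^{-\rho_{\gamma,\kappa}(m)} s^{2-m/2-|a|}$ and weight $w^m_s$, whereas continuity of $s\mapsto V^m_s$ is imposed in the fixed-weight norm $\|\cdot\|_{\sN^m}$ on $\sN^m$. Given a Cauchy sequence $(V^m_{(n)})_n$ in $\sV^{m;\gamma}$, for every fixed $s\in(0,1]$, $a\in\bA^m$, $\sigma\in\bFF^m$ the scale factor is a strictly positive finite constant, so $(w^m_s V^{m,a,\sigma}_{(n),s})_n$ is Cauchy in $\sM^m$. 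Completeness of $\sM^m$ in the norm of Def.~\ref{dfn:sM} is standard: for each $x_1\in\bR^2$ the kernel $V_\rK(x_1,\Cdot)$ is Cauchy in the Banach space of finite Borel measures on $\bR^{2(m-1)}$ under total variation, the $\sup_{x_1}$ then yields a uniform bound, and measurability in $x_1$ is preserved by pointwise limits. Dividing by the continuous strictly positive $w^m_s$ produces a candidate limit $V^{m,a,\sigma}_s\in\sM^m$; antisymmetry is preserved by norm limits, so $V^m_s:=(V^{m,a,\sigma}_s)_{a,\sigma}\in\sN^m$.

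Next, letting $n'\to\infty$ in the Cauchy bound $\|V^m_{(n)}-V^m_{(n')}\|_{\sV^{m;\gamma}}<\epsilon$ term-wise in $s$, $a$, $\sigma$, and then taking the supremum, yields $\|V^m_{(n)}-V^m\|_{\sV^{m;\gamma}}\leq\epsilon$ for $n\geq N$, whence $\|V^m\|_{\sV^{m;\gamma}}<\infty$. To verify the required continuity $s\mapsto V^m_s$ in $\sN^m$, I would argue that on any compact sub-interval $[s_1,s_2]\subset(0,1]$ the scale factors $\lambda_s^{-\rho_{\gamma,\kappa}(m)} s^{2-m/2-|a|}$ are bounded above and below and $w^m_s\geq w_1^m$, so convergence in $\|\cdot\|_{\sV^{m;\gamma}}$ implies uniform-in-$s$ convergence in $\|\cdot\|_{\sN^m}$ on $[s_1,s_2]$; a uniform limit of continuous $\sN^m$-valued maps is continuous, completing the first claim.

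For the second assertion, it suffices to show that the ambient normed space $(\{V_\Cdot\in\sV^\gamma:\|V_\Cdot\|_{\sV^{\alpha,\beta;\gamma}}<\infty\},\|\cdot\|_{\sV^{\alpha,\beta;\gamma}})$ is complete, since by definition $\sV^{\alpha,\beta;\gamma}$ is a closed subspace of it. For a Cauchy sequence $(V_{(n)})_n$ there, the bound $\alpha^m m^\beta\|V^m_\Cdot\|_{\sV^{m;\gamma}}\leq\|V_\Cdot\|_{\sV^{\alpha,\beta;\gamma}}$ shows it is Cauchy in every $\sV^{m;\gamma}$, producing limits $V^m\in\sV^{m;\gamma}$ by the first part. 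The same limit-then-sup argument then gives $\|V_{(n)}-V\|_{\sV^{\alpha,\beta;\gamma}}\to 0$ with $V_\Cdot:=(V^m_\Cdot)_m$, establishing completeness. The step I expect to require the most care is the preservation of continuity of $s\mapsto V^m_s$, since the defining norm uses a scale-dependent weight and scalar prefactor; every other step is a routine invocation of completeness of spaces of finite Borel measures under total variation.
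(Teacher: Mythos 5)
Your proof is correct and follows essentially the same route as the paper: the paper deduces completeness from the facts that $\sM^m$ is Banach and that $C_{\mathrm{b}}((0,1],\sT)$ and $l^\infty_0(\sT)$ are Banach together with the non-vanishing of the weight factors, which is exactly the content you unpack by hand via the Cauchy-sequence argument, the closed-subspace argument for $\sV^{\alpha,\beta;\gamma}$, and the local-uniform-convergence step that preserves continuity of $s\mapsto V^m_s$ (the step the paper compresses into the citation of $C_{\mathrm{b}}$-completeness).
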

\begin{proof}
We first observe that for every $m\in\bN_+$ the space $(\sM^m(\bC),\|\Cdot\|_{\sM^m})$ is a Banach space and given a Banach space $\sT$ the spaces $C_{\mathrm{b}}((0,1],\sT)$ and $l^\infty_0(\sT)$ are also Banach spaces. Using the fact that $\lambda_s^{-\rho_{\gamma,\kappa}(m)}\,s^{2-m/2-|a|}\,w^m_s\neq 0$ for all $s\in(0,1]$, $a\in\bA^m$ we conclude that the space $(\sV^{m;\gamma},\|\Cdot\|_{\sV^{m;\gamma}})$ is a Banach space. Similarly, since $\alpha^m\,m^\beta\neq 0$ for all $m\in\bN_+$ the space $(\sV^{\alpha,\beta;\gamma},\|\Cdot\|_{\sV^{\alpha,\beta;\gamma}})$ is a Banach space.
\end{proof}

\begin{rem}
Recall that $\sF$ is the Banach subalgebra of $\sB=\sB(\mathscr{H})$ generated by the white noise, cf. Def.~\ref{dfn:white_noise}.
\end{rem}

\begin{dfn}\label{dfn:A_map}
Let $\tau,\varepsilon\in[0,1]$ and $t,s\in(0,1]$. The maps $$\fA_{\tau,\varepsilon;t,s},\fC_{\tau,\varepsilon;t,s}\,:\,\sN^{\mathrm{fin}}(\sF)\to\sN^{\mathrm{fin}}(\sF)$$  are defined by the equalities
\begin{equation}
 \fA_{\tau,\varepsilon;t,s}V[\varphi] :=  V[\varphi + \fJ\varPsi_{\tau,\varepsilon;t\vee s,s}]- V[\fJ\varPsi_{\tau,\varepsilon;t\vee s,s}],
 \qquad
 \varphi\in\sS(\bR^2)^{\bA\times\bFF}\otimes_{\mathrm{alg}}\sG^-,
\end{equation}
\begin{equation}
 \fC_{\tau,\varepsilon;t,s}V:=\fA_{\tau,\varepsilon;t,s}V-V,
\end{equation}
where $\varPsi_{\tau,\varepsilon;t,s}$ was introduced in Def.~\ref{dfn:Psi} and $\fJ$ is the jet prolongation introduced in Def.~\ref{dfn:jet}. We also set $\fA^m_{\tau,\varepsilon;t,s} :=\Pi^m\fA_{\tau,\varepsilon;t,s}$, $\fA^{m,a}_{t,s} :=\Pi^{m,a}\fA_{\tau,\varepsilon;t,s}$ and $\fA^{m,a,\sigma}_{t,s} :=\Pi^{m,a,\sigma}\fA_{\tau,\varepsilon;t,s}$ and analogously for the map $\fC_{\tau,\varepsilon;t,s}$. We omit $\tau$ and $\varepsilon$ if $\tau=0$ and $\varepsilon=0$.
\end{dfn}

\begin{rem}
The equation defining the map $\fA_{\tau,\varepsilon;t,s}$ should be interpreted as
\begin{equation}\label{eq:fA_map_def_alt}
 \langle\fA^m_{\tau,\varepsilon;t,s}V,\varphi^{\otimes m}\rangle = \sum_{k\in\bN_0} \frac{(m+k)!}{m!k!}\, \langle V^{m+k},\varphi^{\otimes m}\otimes \fJ\varPsi_{\tau,\varepsilon;t\vee s,s}^{\otimes k}\rangle
\end{equation}
for all $m\in\bN_+$. The map $\fA_{\tau,\varepsilon;t,s}$ is well defined thanks to Remark~\ref{rem:sN_m} and the fact that $\|\fJ\varPsi^{a,\sigma}_{\tau,\varepsilon;t\vee s,s}\|_\sC<\infty$ for all $\tau,\varepsilon\in[0,1]$, $t,s\in(0,1]$ and $a\in\bA$, $\sigma\in\bFF$, which follows from Lemma~\ref{lem:Psi_bounds}.
\end{rem}
\begin{rem}\label{rem:fA_fE}
Let $t,s\in(0,1]$ be such that $t\geq s$ and let $V\in\sN^{\mathrm{fin}}(\sF_s)$. Then we have
\begin{equation}
 \fA_{\tau,\varepsilon;t,s}V\in\sN^{\mathrm{fin}}(\sF_t),
 \qquad
 \fE_t\fA_{\tau,\varepsilon;t,s}V=\fE\fA_{\tau,\varepsilon;t,s}V,
 \qquad
 \fE_s\fA_{\tau,\varepsilon;t,s}V=\fA_{\tau,\varepsilon;t,s} \fE V
\end{equation}
by Def.~\ref{dfn:Psi} of $\varPsi_{\tau,\varepsilon;t,s}$ and the properties~(1) and (3) of the conditional expected value introduced in Def.~\ref{dfn:conditional_exp}.
\end{rem}

\begin{rem}\label{rem:fA_fE2}
Let $t,u,s\in(0,1]$ be such that $t\geq u\geq s$ and let $V\in\sN^{\mathrm{fin}}(\bC)$. Then we have
\begin{itemize}
 \item[(A)] 
 $\fA_{\tau,\varepsilon;t,s}V
 =
 \fA_{\tau,\varepsilon;t,u}\fA_{\tau,\varepsilon;u,s}V$,
 
 \item[(B)] 
 $\fA_{\tau,\varepsilon;t,u}\fE\fA_{\tau,\varepsilon;u,s}V
%
 =
 \fE_u\fA_{\tau,\varepsilon;t,u}\fA_{\tau,\varepsilon;u,s}V
 =
 \fE_u\fA_{\tau,\varepsilon;t,s}V$,
 
 \item[(C)] 
 $\fE\fA_{\tau,\varepsilon;t,u}\fE\fA_{\tau,\varepsilon;u,s}V
 =
 \fE\fA_{\tau,\varepsilon;t,s}V$
\end{itemize}
by Remark~\ref{rem:fA_fE} and the tower property~(2) of the conditional expected value introduced in Def.~\ref{dfn:conditional_exp}.
\end{rem}

\begin{lem}\label{lem:A_map_bound}
There exists $\lambda_\star\in(0,1]$ such that for all $\lambda\in(0,\lambda_\star]$ the following is true. Let $m\in\bN_+$, $\alpha,\beta\in[1,\infty)$, $\gamma\in[0,\infty)$, $\tau,\varepsilon\in[0,1]$ and $V_\Cdot\in\sV^{\mathrm{fin};\gamma}$. For all $t\in(0,1]$ the map
\begin{equation}\label{eq:A_m_map_lem}
 (0,1]\ni s\mapsto\fA^m_{\tau,\varepsilon;t,s} V_s\in\sN^m(\sF_{t,s})
\end{equation}
is continuous and it holds
\begin{itemize}
 \item[(A)] $\|s\mapsto\fA_{\tau,\varepsilon;t,s} V_s\|_{\sV^{\alpha/2,\beta;\gamma}} 
 \leq
 \|V_\Cdot\|_{\sV^{\alpha,\beta;\gamma}}$,
 \item[(B)] $\|s\mapsto\fC_{\tau,\varepsilon;t,s} V_s\|_{\sV^{\alpha/2,\beta-1;\gamma}} 
 \leq
 \lambda^\kappa\,
 \|V_\Cdot\|_{\sV^{\alpha,\beta;\gamma}}$,
 \item[(C)] $\|s\mapsto\fA_{\tau,\varepsilon;t,s} V_s\|_{\tilde\sV^{\alpha/2,\beta;\gamma}} 
 \leq
 \|V_\Cdot\|_{\tilde\sV^{\alpha,\beta;\gamma}}$,
 \item[(D)] $\|s\mapsto\fC_{\tau,\varepsilon;t,s} V_s\|_{\tilde\sV^{\alpha/2,\beta-1;\gamma}} 
 \leq
 \lambda^\kappa\,
 \|V_\Cdot\|_{\tilde\sV^{\alpha,\beta;\gamma}}$,
 \item[(E)] $\|s\mapsto(\fA_{t,s}-\fA_{\tau,\varepsilon;t,s}) V_s\|_{\tilde\sV^{\alpha/2,\beta-1;\gamma}}
 \leq
 \lambda_{\tau\vee\varepsilon}^\kappa\,\|V_\Cdot\|_{\sV^{\alpha,\beta;\gamma}}$.
\end{itemize}
In particular the map~\eqref{eq:A_m_map_lem} is well-defined for all $V_\Cdot\in\sV^{\alpha,\beta;\gamma}$.
\end{lem}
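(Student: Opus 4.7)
The plan is to deduce all five estimates from the defining expansion
\begin{equation}
\langle \fA^m_{\tau,\varepsilon;t,s} V_s, \varphi^{\otimes m}\rangle
=\sum_{k\in\bN_0} \tfrac{(m+k)!}{m!\,k!}\,\langle V^{m+k}_s, \varphi^{\otimes m} \otimes \fJ\varPsi_{\tau,\varepsilon;t\vee s,s}^{\otimes k}\rangle
\end{equation}
recorded in~\eqref{eq:fA_map_def_alt}, by performing the summation in $k$ kernel-wise and exploiting the $L^\infty$ bounds on $\varPsi$. Fixing $(a,\sigma)\in\bA^m\times\bFF^m$ and using the $\sM^{m+k}$ representation of the kernels of $V^{m+k}$, I would bound the $\sM^m$ norm of the $k$-th summand by pulling the quantities $\|\partial^{a'_j}\varPsi^{\sigma'_j}_{\tau,\varepsilon;t\vee s,s}\|_\sC$ out of the integration over the $k$ additional variables and dominating $w^m_s(x_1,\ldots,x_m)$ by $w^{m+k}_s(x_1,\ldots,x_m,y_1,\ldots,y_k)$ via Lemma~\ref{lem:weights}(d). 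Together with Lemma~\ref{lem:Psi_bounds}(A) and the definition of $\|V^{m+k}_\Cdot\|_{\sV^{m+k;\gamma}}$ this yields
\begin{equation}
s^{2-m/2-|a|}\lambda_s^{-\rho_{\gamma,\kappa}(m)}\,\|w^m_s (\fA_{\tau,\varepsilon;t,s} V_s)^{m,a,\sigma}\|_{\sM^m}\le \sum_{k\ge 0}\binom{m+k}{k}(C\lambda^{2\kappa})^k\,\|V^{m+k}_\Cdot\|_{\sV^{m+k;\gamma}},
\end{equation}
where the constant $C$ absorbs $|\bA|,|\bFF|$ and the implicit constant from Lemma~\ref{lem:Psi_bounds}(A), and the bound $\lambda_s^{2\kappa k}\le\lambda^{2\kappa k}$ (valid since $\lambda_s\le\lambda$ for $s\le 1$) has been used.

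For (A) I then multiply both sides by $(\alpha/2)^m m^\beta$, insert $\|V^{m+k}_\Cdot\|_{\sV^{m+k;\gamma}}\le\alpha^{-(m+k)}(m+k)^{-\beta}\|V_\Cdot\|_{\sV^{\alpha,\beta;\gamma}}$, and apply the generating identity $\sum_k\binom{m+k}{k}x^k=(1-x)^{-(m+1)}$ with $x=C\lambda^{2\kappa}/\alpha$. The factor $2^{-m}$ arising from $(\alpha/2)^m/\alpha^m$ is paired against $\binom{m+k}{k}\le 2^{m+k}$; convergence together with the absolute bound $\le 1$ then follows once $\lambda$ is small enough that $(1-C\lambda^{2\kappa}/\alpha)^2\ge 1/2$ uniformly in $m\in\bN_+$. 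For (B) and (D) the same machinery applies to $\fC_{\tau,\varepsilon;t,s}=\fA_{\tau,\varepsilon;t,s}-\mathrm{id}$, which removes precisely the $k=0$ summand, so every remaining contribution carries a factor $\lambda^{2\kappa}\le\lambda^\kappa$; the drop from $\beta$ to $\beta-1$ is compensated by the elementary estimate $m^{\beta-1}/(m+k)^\beta\le 1$ available whenever $k\ge 1$ and $\beta\ge 1$.

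For (C) the argument is verbatim the same with $w^m_s$ replaced by $\tilde w^m_s$, since Lemma~\ref{lem:weights}(d) is stated for arbitrary $\nu\in[0,1/2]$ and Remark~\ref{rem:sV_tilde2} ensures that the antisymmetry of $V^m$ makes the choice of privileged coordinate in $\tilde w^m_s$ immaterial. For (E) I would expand the difference $(\fA_{t,s}-\fA_{\tau,\varepsilon;t,s})V_s$ telescopically, so that in each $k$-th term exactly one copy of $\fJ\varPsi$ in the tensor product is replaced by $\fJ\varPsi_{t,s}-\fJ\varPsi_{\tau,\varepsilon;t,s}$ while the remaining copies are either $\fJ\varPsi_{t,s}$ or $\fJ\varPsi_{\tau,\varepsilon;t,s}$; the $L^\infty$ bound on the distinguished copy comes from Lemma~\ref{lem:Psi_bounds}(B), which simultaneously supplies the weight $\tilde w$ appearing on the left-hand side and the gain $\lambda_{\tau\vee\varepsilon}^\kappa\lambda_s^{-\kappa}$, the $\lambda_s^{-\kappa}$ being absorbed by one factor $\lambda_s^{2\kappa}$ extracted from the rest of the expansion. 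Continuity of the map~\eqref{eq:A_m_map_lem} in $s\in(0,1]$ finally follows from continuity of $s\mapsto V_s$, continuity of $s\mapsto\fJ\varPsi_{\tau,\varepsilon;t\vee s,s}$ in $\sC$, and locally uniform convergence in $k$ of the defining series.

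The main obstacle will be the combinatorial bookkeeping of the first step: the binomial $\binom{m+k}{k}$, the counting factors $|\bA|^k|\bFF|^k$ from summing over jet and spinor indices, the weight rearrangements via Lemma~\ref{lem:weights}, and the double weighting $\alpha^m m^\beta$ of $\|\Cdot\|_{\sV^{\alpha,\beta;\gamma}}$ must all be tracked simultaneously. The key structural input is that the bound uses $\alpha/2$ on the left and $\alpha$ on the right, so the factor $2^{-m}$ from $(\alpha/2)^m/\alpha^m$ neutralises $\binom{m+k}{k}\le 2^{m+k}$ and reduces convergence to smallness of $\lambda^{2\kappa}$.
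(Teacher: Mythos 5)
Your proposal is correct and follows essentially the same route as the paper's proof: expand $\fA^m_{\tau,\varepsilon;t,s}V_s$ via \eqref{eq:fA_map_def_alt}, bound the contracted fields in $\sC$ by Lemma~\ref{lem:Psi_bounds}, dominate the weights by Lemma~\ref{lem:weights}, use $\rho_{\gamma,\kappa}(m+k)-\rho_{\gamma,\kappa}(m)=2\kappa k$ to extract a factor $\lambda^{\kappa k}$, and sum the binomial series against $\alpha^{-(m+k)}(m+k)^{-\beta}$, with the $k=0$ term removed for (B), (D), (E) and the loss $\alpha\to\alpha/2$ absorbing $(1-\lambda^\kappa)^{-m-1}\leq 2^m$, exactly as in the paper. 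The only step you leave implicit is the weight transfer in (E): the factor $\tilde w$ in $\tilde w^m_s$ sits on the external variable $x_1$, and moving it onto the integrated variable carrying $\fJ\varPsi_{t,s}-\fJ\varPsi_{\tau,\varepsilon;t,s}$ (so that Lemma~\ref{lem:Psi_bounds}(B) applies while the kernel of $V^{m+k}_s$ keeps the plain weight $w^{m+k}_s$, matching the untilded norm on the right-hand side) is precisely Lemma~\ref{lem:weights}(e), not a consequence of Lemma~\ref{lem:Psi_bounds}(B) alone.
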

\begin{rem}
Note that the parameters $\alpha$ of the norms $\sV^{\alpha,\beta;\gamma}$, $\tilde\sV^{\alpha,\beta;\gamma}$ that appear on both sides of the bounds stated in the above lemma are different. As we discussed in Sec.~\ref{sec:strategy} the bounds of this type appear to be false if the parameters $\alpha$ of the norms are the same on both sides of the bounds. This is the main reason why we work with the space $(\sW^{\alpha,\beta;\gamma}_{\tau,\varepsilon},\|\Cdot\|_{\sW^{\alpha,\beta;\gamma}_{\tau,\varepsilon}})$ introduced below. One of the consequences of the above lemma is the relation between the norms $\|\Cdot\|_{\sW^{\alpha,\beta;\gamma}_{\tau,\varepsilon}}$ and $\|\Cdot\|_{\sV^{2\alpha,\beta;\gamma}}$ stated in Remark~\ref{rem:sW_sV}.
\end{rem}

\begin{proof}
First recall that $\lambda_s\leq \lambda$ by Def.~\ref{dfn:lambda} and
\begin{equation}
 \|\fJ\varPsi^{a,\sigma}_{\tau,\varepsilon;t,s}\|_\sC \leq \lambda^{-\kappa}\, s^{-1/2},
 \qquad
 \|\tilde w(\fJ\varPsi^{a,\sigma}_{t,s}-
 \fJ\varPsi^{a,\sigma}_{\tau,\varepsilon;t,s})\|_\sC \leq \lambda^{-\kappa}\,\lambda_{\tau\vee\varepsilon}^\kappa\,\lambda_s^{-\kappa}\,s^{-1/2}
\end{equation}
by Lemma~\ref{lem:Psi_bounds} provided $\lambda\in(0,1]$ is small enough. Noting that
$
 \rho_{\gamma,\kappa}(m) = \rho_{\gamma,\kappa}(m+k)-2\kappa k
$
and using Lemma~\ref{lem:weights}~(d) we obtain
\begin{multline}
 \sum_{a\in\bA^m}\sum_{\sigma\in\bFF^m}\lambda_s^{-\rho_{\gamma,\kappa}(m)}\,s^{2-m/2-|a|}\,\|w^{m}_{s}\fA_{\tau,\varepsilon;t,s}^{m,a,\sigma}V_s\|_{\sM^m}
 \\
 \leq
 \sum_{k\in\bN_0}
 \sum_{a\in\bA^{m+k}}\sum_{\sigma\in\bFF^{m+k}}
 \frac{(m+k)!}{m!k!}\,
 \lambda_s^{-\rho_{\gamma,\kappa}(m+k)+\kappa k}\,s^{2-m/2-k/2-|a|}\,\|w^{m+k}_{s}\,V_s^{m+k,a,\sigma}\|_{\sM^{m+k}}.
\end{multline}
Using Definition~\ref{dfn:V_space} of the norms $\|\Cdot\|_{\sV^{m+k;\gamma}}$ and $\|\Cdot\|_{\sV^{\alpha,\beta;\gamma}}$ we estimate the expression on the RHS of the above bound by
\begin{multline}
 \sum_{k\in\bN_0} \frac{(m+k)!}{m!k!}\,\lambda^{\kappa k}\,
 \|V^{m+k}_\Cdot\|_{\sV^{m+k;\gamma}} 
 \leq
 \sum_{k\in\bN_0} \frac{(m+k)!}{m!k!}\,\lambda^{\kappa k}\,\alpha^{-m-k}(m+k)^{-\beta}\,\|V_\Cdot\|_{\sV^{\alpha,\beta;\gamma}} 
 \\
 \leq
 \alpha^{-m}\,m^{-\beta}\,(1-\lambda^\kappa)^{-m-1}\,\|V_\Cdot\|_{\sV^{\alpha,\beta;\gamma}}
 \leq
 2^m\,\alpha^{-m}m^{-\beta}\, \|V_\Cdot\|_{\sV^{\alpha,\beta;\gamma}}.
\end{multline}
The last bound above is true of $\lambda^\kappa\leq 1/5$. This proves the bound~(A). Using analogous estimates as above with the sum over $k\in\bN_0$ replaced by the sum over $k\in\bN_+$ we obtain
\begin{multline}
 \sum_{a\in\bA^m}\sum_{\sigma\in\bFF^m}\lambda_s^{-\rho_{\gamma,\kappa}(m)}\,s^{2-m/2-|a|}\,\|w^{m}_{s}(\fA_{\tau,\varepsilon;t,s}^{m,a,\sigma}V_s-V_s^{m,a,\sigma})\|_{\sM^m}
 \\
 \leq
 \alpha^{-m}\,(m+1)^{-\beta}\,((1-\lambda^\kappa)^{-m-1}-1)\,\|V_\Cdot\|_{\sV^{\alpha,\beta;\gamma}}
 \leq
 \lambda^\kappa\,2^m\,\alpha^{-m}m^{1-\beta}\, \|V_\Cdot\|_{\sV^{\alpha,\beta;\gamma}},
\end{multline}
where again the last bound above is true of $\lambda^\kappa\leq 1/5$. This implies the bound~(B). To prove the bounds~(C) and~(D) we use exactly the same strategy but with the weight $w^m_s$ replaced by $\tilde w^m_s$. Let us turn to the proof of the bound~(E). By Lemma~\ref{lem:weights}~(e) we have
\begin{multline}
 \lambda^\kappa\, \lambda_{\tau\vee\varepsilon}^{-\kappa}\sum_{a\in\bA^m}\sum_{\sigma\in\bFF^m}\lambda_s^{-\rho_{\gamma,\kappa}(m)}\,s^{2-m/2-|a|}\,\|\tilde w^{m}_{s}(\fA_{t,s}^{m,a,\sigma}-\fA_{\tau,\varepsilon;t,s}^{m,a,\sigma})V_s\|_{\sM^m}
 \\
 \leq
 \sum_{k\in\bN_+}
 \sum_{a\in\bA^{m+k}}\sum_{\sigma\in\bFF^{m+k}}
 \frac{(m+k)!}{m!(k-1)!}\,
 \lambda_s^{-\rho_{\gamma,\kappa}(m+k)+\kappa k}\,s^{2-m/2-k/2-|a|}\,\|w^{m+k}_{s}\,V_s^{m+k,a,\sigma}\|_{\sM^{m+k}}.
\end{multline}
Using again Definition~\ref{dfn:V_space} of the norms $\|\Cdot\|_{\sV^{m+k;\gamma}}$ and $\|\Cdot\|_{\sV^{\alpha,\beta;\gamma}}$ we estimate the expression on the RHS of the above bound by
\begin{multline}
 \sum_{k\in\bN_+} \frac{(m+k)!}{m!(k-1)!}\,\lambda^{\kappa k}\,
 \|V^{m+k}_\Cdot\|_{\sV^{m+k;\gamma}} 
 \leq
 \sum_{k\in\bN_+} \frac{(m+k)!}{m!(k-1)!}\,\lambda^{\kappa k}\,\alpha^{-m-k}(m+k)^{-\beta}\,\|V_\Cdot\|_{\sV^{\alpha,\beta;\gamma}} 
 \\
 \leq
 \lambda^\kappa\,(m+1)\,\alpha^{-m}(m+1)^{-\beta}\,(1-\lambda^\kappa)^{-m-2}\,\|V_\Cdot\|_{\sV^{\alpha,\beta;\gamma}}
 \leq
 \lambda^\kappa\,2^m\,\alpha^{-m}m^{1-\beta}\, \|V_\Cdot\|_{\sV^{\alpha,\beta;\gamma}}.
\end{multline}
The last bound above is true of $\lambda^\kappa\leq 1/5$. This proves the bound~(E).
\end{proof}

\begin{dfn}\label{dfn:flow_of_charge}
We say that a distribution \mbox{$V\in\sN^m\subset\sNN^m$} is compatible with the flow of charge iff $V=0$ for $m\in\bN_+\setminus 2\bN_+$ and otherwise it holds $V=\fS W$ for some $W\in\sNN^m$ such that:
\begin{itemize}
 \item[(a)]  $\langle W,\varphi_1\otimes\ldots\otimes\varphi_{m}\rangle = 
 \langle W,\fC\varphi_{1}\otimes\fC\varphi_{2}\otimes\varphi_3\otimes\ldots\otimes\varphi_{m}\rangle$,
 \item[(b)] $\langle W,\varphi_1\otimes\ldots\otimes\varphi_{m}\rangle = 
 \langle W,\varphi_{2\pi(1)-1}\otimes\varphi_{2\pi(1)}\otimes\ldots\otimes\varphi_{2\pi(m/2)-1}\otimes\varphi_{2\pi(m/2)}\rangle$,
 \item[(c)] $\langle W,\varphi_1\otimes\ldots\otimes\varphi_{m}\rangle = 
 -\langle W,\varphi_2\otimes\varphi_1\otimes\varphi_3\otimes\ldots\otimes\varphi_{m}\rangle$
\end{itemize}
for all $\varphi_1,\ldots,\varphi_m\in\sS(\bR^2)^\bFF$ and $\pi\in\mathcal{P}_{m/2}$. 
\end{dfn}

\begin{rem}
We say that $V\in\sN$ is invariant under a certain symmetry or compatible with the flow of charge iff $V^m \in\sN^m\subset\sS'(\bR^{2m})^{\bA^m\times\bFF^m}$ is invariant under this symmetry or compatible with the flow of charge for all $m\in\bN_+$.
\end{rem}

\begin{rem}\label{rem:symmetries_flow_charge}
The map $\fC$ was introduced in Def.~\ref{dfn:charge}. Note that if $V\in\sN^m$ is compatible with the flow of charge, then it is invariant under the charge conjugation. If $m\in\{4,6,\ldots\}$, then generically the reverse implication is false. Observe also that the second and the third of the functionals~\eqref{eq:inv_functionals} are not compatible with the flow of charge. Hence, there are only three relevant or marginal local functionals invariant under all the symmetries listed in Sec.~\ref{sec:symmetries} and compatible with the flow of charge: the quadratic functionals~\eqref{eq:inv_functionals_quadratic} and the first of the quartic functionals~\eqref{eq:inv_functionals}. The reduction of the number of relevant or marginal local functionals that have to be investigated is the main reason behind introducing the notion of the compatibility with the flow of charge. See~\cite[Sec.~5.B]{GK85a} for a simple argument showing that perturbative corrections to an effective potential are compatible with the flow of charge.
\end{rem}

\begin{dfn}\label{dfn:sN_symmetries}
Let $m\in\bN_+$. We denote by $\sN^m_+$ the subspace of $\sN^m$ consisting of Schwartz distributions invariant under the symmetries of the torus and the internal symmetries and compatible with the flow of charge. We denote by $\sN^m_0$ the subspace of $\sN^m_+$ consisting of Schwartz distributions invariant under the symmetries of the plane. The closures of $\sN^m_+$ and $\sN^m_0$ in $\sN^m$ are denoted by $(\sN^m_+)^{\mathrm{c}}$ and $(\sN^m_0)^{\mathrm{c}}$, respectively. Let $\mathrm{sgn}(\tau):=0$ if $\tau=0$ and $\mathrm{sgn}(\tau):=+$ if $\tau\in(0,1]$. For $\tau\in(0,1]$ we set $\sN^m_\tau:=\sN^m_{\mathrm{sgn}(\tau)}$ and
\begin{equation}
\begin{aligned}
 \sN^{\mathrm{fin}}_\tau
 &:=
 \{V\in\sN^{\mathrm{fin}}\,|\,\forall_{m\in\bN_+} V^m\in\sN^m_\tau\},
 \\
 \sV^{\mathrm{fin};\gamma}_\tau
 &:=
 \{V_\Cdot\in\sV^{\mathrm{fin};\gamma}\,|\,\forall_{s\in(0,1]} V_s\in\sN^{\mathrm{fin}}_\tau\}.
\end{aligned} 
\end{equation}
\end{dfn}

\begin{dfn}\label{dfn:sW}
Let $\alpha,\beta\in[1,\infty)$, $\gamma\in[0,\infty)$ and $\tau,\varepsilon\in[0,1]$. The vector space $\sW^{\alpha,\beta;\gamma}_{\tau,\varepsilon}$ is the closure of $\sV^{\mathrm{fin};\gamma}_{\tau}\subset\sV^{\alpha,\beta;\gamma}$ with respect to the following norm
\begin{equation}\label{eq:norm_sW_alpha_beta}
 \|V_\Cdot\|_{\sW^{\alpha,\beta;\gamma}_{\tau,\varepsilon}}:=\sup_{t\in[0,1]}\|s\mapsto\fA_{\tau,\varepsilon;t\vee s,s} V_s\|_{\sV^{\alpha,\beta;\gamma}}.
\end{equation}
For $V_\Cdot\in\sW_{\tau,\varepsilon}^{\alpha,\beta;\gamma}$ we define $\|V_\Cdot\|_{\tilde\sW^{\alpha,\beta;\gamma}_{\tau,\varepsilon}}$ in an analogous way to $\|V_\Cdot\|_{\sW^{\alpha,\beta;\gamma}_{\tau,\varepsilon}}$ with $\|\Cdot\|_{\sV^{\alpha,\beta;\gamma}}$ replaced by $\|\Cdot\|_{\tilde\sV^{\alpha,\beta;\gamma}}$ in Eq.~\eqref{eq:norm_sW_alpha_beta}.
\end{dfn}

\begin{rem}\label{rem:sW_tilde}
By Remark~\ref{rem:sV_tilde} the norm $\|\Cdot\|_{\tilde\sW^{\alpha,\beta;\gamma}_{\tau,\varepsilon}}$ is weaker than the norm $\|\Cdot\|_{\sW^{\alpha,\beta;\gamma}_{\tau,\varepsilon}}$. The former norm will be used to study the infinite volume limit. 
\end{rem}
\begin{rem}\label{rem:sV_sW}
Note that
\begin{equation}
 \|V_\Cdot\|_{\sV^{\alpha,\beta;\gamma}}
 \leq
 \|V_\Cdot\|_{\sW^{\alpha,\beta;\gamma}_{\tau,\varepsilon}},
 \qquad
 \|V_\Cdot\|_{\tilde\sV^{\alpha,\beta;\gamma}}
 \leq
 \|V_\Cdot\|_{\tilde\sW^{\alpha,\beta;\gamma}_{\tau,\varepsilon}}.
\end{equation}
Thus, in particular $\sW^{\alpha,\beta;\gamma}_{\tau,\varepsilon}\subset\sV^{\alpha,\beta;\gamma}$. 
\end{rem}
\begin{rem}\label{rem:sW_sV}
Observe that by Lemma~\ref{lem:A_map_bound}~(A),~(C) it holds
\begin{equation}
 \|V_\Cdot\|_{\sW^{\alpha,\beta;\gamma}_{\tau,\varepsilon}}
 \leq
 \|V_\Cdot\|_{\sV^{2\alpha,\beta;\gamma}},
 \qquad
 \|V_\Cdot\|_{\tilde\sW^{\alpha,\beta;\gamma}_{\tau,\varepsilon}}
 \leq
 \|V_\Cdot\|_{\tilde\sV^{2\alpha,\beta;\gamma}}.
\end{equation}
\end{rem}

\begin{rem}
Note that $\sN^m_\tau$, $\tau\in[0,1]$, is not a closed subspace of $\sN^m$. It is easy to see that the invariance under the symmetries of the torus/plane, the internal symmetries and the charge conjugation symmetry are preserved under limits. However, this is not the case for the compatibility with the flow of charge. Hence, in general, it need not be true that for $V_\Cdot\in\sW_{\tau,\varepsilon}^{\alpha,\beta;\gamma}$ it holds $V^m_s\in \sN^m_\tau$ for $m\in\bN_+$ and $s\in(0,1]$. Nonetheless, $V^m_s\in (\sN^m_{\tau})^{\mathrm{c}}$ for all $m\in\bN_+$ and $s\in(0,1]$.
\end{rem}

\begin{lem}\label{lem:W_banach}
For all $\alpha,\beta\in[1,\infty)$, $\gamma\in[0,\infty)$ and $\tau,\varepsilon\in[0,1]$ the space $(\sW^{\alpha,\beta;\gamma}_{\tau,\varepsilon},\|\Cdot\|_{\sW^{\alpha,\beta;\gamma}_{\tau,\varepsilon}})$ is a Banach space. 
\end{lem}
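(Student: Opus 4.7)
The plan is a routine completeness argument whose only subtle point is that $\fA_{\tau,\varepsilon;t\vee s,s}$ is continuous only in a weaker topology than the one in which $\sW^{\alpha,\beta;\gamma}_{\tau,\varepsilon}$ is defined. First, I would take a Cauchy sequence $(V^{(n)}_\Cdot)_{n\in\bN}$ in $\sV^{\mathrm{fin};\gamma}_\tau$ with respect to $\|\Cdot\|_{\sW^{\alpha,\beta;\gamma}_{\tau,\varepsilon}}$. By Remark~\ref{rem:sV_sW} this norm dominates $\|\Cdot\|_{\sV^{\alpha,\beta;\gamma}}$, so the sequence is also Cauchy in $(\sV^{\alpha,\beta;\gamma},\|\Cdot\|_{\sV^{\alpha,\beta;\gamma}})$, which is complete by Lemma~\ref{lem:V_banach}. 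Let $V_\Cdot\in\sV^{\alpha,\beta;\gamma}$ denote the resulting limit; this will be the candidate limit in $\sW^{\alpha,\beta;\gamma}_{\tau,\varepsilon}$.

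Next, for each $t\in[0,1]$ I would set $U^{(n,t)}_\Cdot := (s\mapsto\fA_{\tau,\varepsilon;t\vee s,s}V^{(n)}_s)$, so the Cauchy assumption reads
$$\sup_{t\in[0,1]}\|U^{(n,t)}_\Cdot - U^{(m,t)}_\Cdot\|_{\sV^{\alpha,\beta;\gamma}}\to 0,\qquad n,m\to\infty.$$
By the same completeness proof as Lemma~\ref{lem:V_banach} (applied to $\sF$-valued kernels, where $\sF\subset\sB$) there exists, for every $t\in[0,1]$, a limit $U^{(t)}_\Cdot$ in the $\sV^{\alpha,\beta;\gamma}$-type norm, with the convergence $U^{(n,t)}_\Cdot\to U^{(t)}_\Cdot$ holding uniformly in $t$.

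The main obstacle is to identify $U^{(t)}_\Cdot$ with $\fA_{\tau,\varepsilon;t\vee\Cdot,\Cdot}V_\Cdot$: Lemma~\ref{lem:A_map_bound}~(A) only controls $\fA_{\tau,\varepsilon;t\vee\Cdot,\Cdot}$ as a bounded map from $\sV^{\alpha,\beta;\gamma}$ into the strictly weaker space $\sV^{\alpha/2,\beta;\gamma}$ rather than into $\sV^{\alpha,\beta;\gamma}$ itself. To circumvent this I would use $\sV^{\alpha/2,\beta;\gamma}$ as a common comparison topology. On the one hand, Lemma~\ref{lem:A_map_bound}~(A) applied to $V^{(n)}_\Cdot-V_\Cdot$ yields
$$\sup_{t\in[0,1]}\|U^{(n,t)}_\Cdot - \fA_{\tau,\varepsilon;t\vee\Cdot,\Cdot}V_\Cdot\|_{\sV^{\alpha/2,\beta;\gamma}}\leq\|V^{(n)}_\Cdot-V_\Cdot\|_{\sV^{\alpha,\beta;\gamma}}\to 0,$$
and on the other hand Remark~\ref{rem:sV_norm_property} implies $U^{(n,t)}_\Cdot\to U^{(t)}_\Cdot$ in $\sV^{\alpha/2,\beta;\gamma}$ as well. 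Uniqueness of limits then forces $U^{(t)}_\Cdot=\fA_{\tau,\varepsilon;t\vee\Cdot,\Cdot}V_\Cdot$ for every $t$; in particular $\fA_{\tau,\varepsilon;t\vee\Cdot,\Cdot}V_\Cdot$ belongs to $\sV^{\alpha,\beta;\gamma}$.

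Combining the two steps,
$$\|V^{(n)}_\Cdot-V_\Cdot\|_{\sW^{\alpha,\beta;\gamma}_{\tau,\varepsilon}}=\sup_{t\in[0,1]}\|U^{(n,t)}_\Cdot - U^{(t)}_\Cdot\|_{\sV^{\alpha,\beta;\gamma}}\to 0,$$
so $V_\Cdot$ lies in the $\sW$-closure of $\sV^{\mathrm{fin};\gamma}_\tau$, i.e.\ in $\sW^{\alpha,\beta;\gamma}_{\tau,\varepsilon}$, and the original Cauchy sequence converges to $V_\Cdot$ in $\sW$ norm. This establishes that $(\sW^{\alpha,\beta;\gamma}_{\tau,\varepsilon},\|\Cdot\|_{\sW^{\alpha,\beta;\gamma}_{\tau,\varepsilon}})$ is a Banach space.
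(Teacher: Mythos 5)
Your proof is correct, but it resolves the key difficulty by a different mechanism than the paper. Both arguments hinge on the same obstruction: Lemma~\ref{lem:A_map_bound}~(A) only maps $\sV^{\alpha,\beta;\gamma}$ boundedly into the weaker space $\sV^{\alpha/2,\beta;\gamma}$, so one cannot directly bound $\sup_t\|s\mapsto\fA_{\tau,\varepsilon;t\vee s,s}(V_s-V^{(n)}_s)\|_{\sV^{\alpha,\beta;\gamma}}$ by $\|V_\Cdot-V^{(n)}_\Cdot\|_{\sV^{\alpha,\beta;\gamma}}$. The paper never constructs the limit objects $U^{(t)}_\Cdot$ at all: it works component-wise in $m$, using the $m$-dependent estimate $\|s\mapsto\fA^m_{\tau,\varepsilon;t,s}W_s\|_{\sV^{m;\gamma}}\leq 2^m\alpha^{-m}m^{-\beta}\|W_\Cdot\|_{\sV^{\alpha,\beta;\gamma}}$ from the proof of Lemma~\ref{lem:A_map_bound}, and then a three-term (diagonal) argument with an index $p(m)$ chosen separately for each $m$, combined with the uniform Cauchy property in the $\sW$-norm, to get $\|s\mapsto\fA^m_{\tau,\varepsilon;t,s}(V_s-({}^pV)_s)\|_{\sV^{m;\gamma}}\leq 2\alpha^{-m}m^{-\beta}\delta$ uniformly in $t$ and $m$. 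You instead first produce the strong limits $U^{(t)}_\Cdot$ of the uniformly Cauchy family $s\mapsto\fA_{\tau,\varepsilon;t\vee s,s}V^{(n)}_s$ by invoking completeness of the $\sF$-valued (i.e.\ $\sM^m(\sB)$-based) analogue of $\sV^{\alpha,\beta;\gamma}$, and only then use the $\alpha\to\alpha/2$ loss, harmlessly, to identify $U^{(t)}_\Cdot$ with $\fA_{\tau,\varepsilon;t\vee\Cdot,\Cdot}V_\Cdot$ by uniqueness of limits in the weaker norm. What the paper's route buys is that it needs nothing beyond the scalar Lemma~\ref{lem:V_banach}; what yours buys is a more transparent ``uniform limit plus identification in a weaker topology'' structure, at the cost of two pieces of routine infrastructure you leave implicit: the completeness of the $\sB$-valued kernel spaces (true, and provable exactly as Lemma~\ref{lem:V_banach}, since Banach-space-valued measures of bounded variation form a Banach space under the variation norm), and the standard reduction of completeness of $\sW^{\alpha,\beta;\gamma}_{\tau,\varepsilon}$ to convergence of Cauchy sequences drawn from the dense subset $\sV^{\mathrm{fin};\gamma}_{\tau}$, since you only treat such sequences while the paper takes an arbitrary Cauchy sequence in $\sW^{\alpha,\beta;\gamma}_{\tau,\varepsilon}$; both gaps are easily filled.
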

\begin{proof}
Let $({}^pV)_{p\in\bN_+}$ be a Cauchy sequence in $\sW^{\alpha,\beta;\gamma}_{\tau,\varepsilon}$. Then, $({}^pV)_{p\in\bN_+}$ is a Cauchy sequence in $\sV^{\alpha,\beta;\gamma}$. Consequently, there exists $V\in\sV^{\alpha,\beta;\gamma}$ such that 
\begin{equation}
\lim_{p\to\infty}\|V_\Cdot-({}^pV)_\Cdot\|_{\sV^{\alpha,\beta;\gamma}}=0.
\end{equation} 
By Lemma~\ref{lem:A_map_bound}~(A) it holds
\begin{equation}
 \|s\mapsto\fA^m_{\tau,\varepsilon;t,s} (({}^pV)_s-V_s)\|_{\sV^{m;\gamma}} \leq 2^m\alpha^m m^{-\beta}\,\|({}^pV)_\Cdot-V_\Cdot\|_{\sV^{\alpha,\beta;\gamma}}.
\end{equation}
Let $\delta\in(0,\infty)$. There exists $P\in\bN_+$ such that for all $p,q\in\bN_+$, $p,q\geq P$ it holds
\begin{equation}
 \|({}^pV)_\Cdot-({}^qV)_\Cdot\|_{\sW^{\alpha,\beta;\gamma}_{\tau,\varepsilon}}\leq \delta
\end{equation}
and for every $m\in\bN_+$ there exists $p(m)\in\bN_+$, $p(m)\geq P$, such that 
\begin{equation}
 \|s\mapsto\fA^m_{\tau,\varepsilon;t,s} (({}^{p(m)}V)_s-V_s)\|_{\sV^{m;\gamma}} \leq \alpha^{-m} m^{-\beta}\,\delta.
\end{equation}
As a result, for all $p\in\bN_+$, $p\geq P$ it holds
\begin{multline}
 \|s\mapsto\fA^m_{\tau,\varepsilon;t,s} (V_s-({}^p V)_s)\|_{\sV^{m;\gamma}} 
 \\
 \leq 
 \|s\mapsto\fA^m_{\tau,\varepsilon;t,s} (V_s-({}^{p(m)} V)_s)\|_{\sV^{m;\gamma}} 
 +
 \|s\mapsto\fA^m_{\tau,\varepsilon;t,s} (({}^{p(m)} V)_s-({}^{p} V)_s)\|_{\sV^{m;\gamma}}
 \\
 \leq 
 \alpha^{-m} m^{-\beta}\,\delta 
 + \alpha^{-m} m^{-\beta}\,\|({}^pV)_\Cdot-({}^qV)_\Cdot\|_{\sW^{\alpha,\beta;\gamma}_{\tau,\varepsilon}}
 \leq 2\,\alpha^{-m} m^{-\beta}\,\delta.
\end{multline}
Hence, $V_\Cdot\in\sW^{\alpha,\beta;\gamma}_{\tau,\varepsilon}$ and $\lim_{p\to\infty}\|V_\Cdot-({}^pV)_\Cdot\|_{\sW^{\alpha,\beta;\gamma}_{\tau,\varepsilon}}=0$.
\end{proof}

\section{Local part and remainder}\label{sec:loc_ren}

In this section we introduce operators $\fL$, $\fL_\partial$ and $\fR$ and establish their most important properties. The operators $\fL$, $\fL_\partial$ and $\fR$ are used to decompose kernels of functionals into the local part and the remainder. As we discussed in Sec.~\ref{sec:strategy} the above-mentioned decomposition is crucial for the solution of the renormalization problem. 

\begin{dfn}\label{dfn:U_functional}
The distributions
\begin{equation}
 U^{2}
 \equiv(U^{2,a,\sigma})_{a\in\bA^2,\sigma\in\bFF^2},\,
 U_\partial^{2}
 \equiv(U_\partial^{2,a,\sigma})_{a\in\bA^2,\sigma\in\bFF^2}\in\sN^2_0\subset \sS'(\bR^4)^{\bA^2\times\bFF^2},
\end{equation}
such that for all $a\in\bA^2$ and $\sigma\in\bFF^2$ it holds $U^{2,a,\sigma}=0$ unless $|a|=0$ and $U_\partial^{2,a,\sigma}=0$ unless $|a|=1$ are defined by the equalities
\begin{equation}\label{eq:dfn_U_2}
\begin{gathered}
\langle U^{2},(\fJ\psi)^{\otimes 2}\rangle 
 = 
 \int_{\bR^2} \bar\psi(x)\cdot\ubar\psi(x)\,\rd x,
 \\
 \langle U_\partial^{2},(\fJ\psi)^{\otimes 2}\rangle 
 =
 \int_{\bR^2}\bar\psi(x)\cdot((\Gamma_1\partial_1+\Gamma_2\partial_2)\ubar\psi)(x)\,\rd x
\end{gathered} 
\end{equation}
for all $\psi\in\sS(\bR^2)^{\bFF}\otimes_{\mathrm{alg}}\sG^-$, where $\fJ$ is the jet prolongation introduced in Def.~\ref{dfn:jet}. The distribution
\begin{equation}
 U^{4}
 \equiv(U^{4,a,\sigma})_{a\in\bA^4,\sigma\in\bFF^4}\in\sN^4_0\subset\sS'(\bR^8)^{\bA^4\times\bFF^4}
\end{equation}
such that for all $a\in\bA^4$ and $\sigma\in\bFF^4$ it holds $U^{4,a,\sigma}=0$ unless $|a|=0$ is defined by the equalities
\begin{equation}\label{eq:dfn_U_4}
 \langle U^{4},(\fJ\psi)^{\otimes 4}\rangle 
 = \int_{\bR^2}(\bar\psi(x)\cdot\ubar\psi(x))^2\,\rd x
\end{equation}
for all $\psi\in\sS(\bR^2)^{\bFF}\otimes_{\mathrm{alg}}\sG^-$. For $g,r,z\in\bC$ we define $U(g,r,z)\in\sN$ by the equation
\begin{equation}
 U(g,r,z)[\psi]:= 
 g\,\langle U^{4},\psi^{\otimes4}\rangle 
 +
 r\, \langle U^{2},\psi^{\otimes2}\rangle
 +
 z\, \langle U_\partial^{2},\psi^{\otimes2}\rangle
\end{equation}
for all $\psi\in\sS(\bR^2)^{\bFF}\otimes_{\mathrm{alg}}\sG^-$.
\end{dfn}

\begin{rem}
Note that $U(g,r,z)=g\,U(1,0,0) + r\,U(0,1,0) + z\, U(0,0,1)$.
\end{rem}

\begin{rem}
Recall that by Def.~\ref{dfn:sN_symmetries} distributions belonging to the space $\sN^m_0\subset\sN^m\subset \sS'(\bR^{2m})^{\bA^m\times\bFF^m}$ are invariant under the symmetries of the plane and the internal symmetries and compatible with the flow of charge. Moreover, distributions belonging to $\sN^m$ are antisymmetric. Using the above-mentioned properties one proves that $U^2,U^2_\partial,U^4$ are uniquely defined by the equalities stated in Def.~\ref{dfn:U_functional}. In particular, by the charge conjugation invariance and Remark~\ref{rem:gamma} it holds
\begin{equation}
 \langle U_\partial^{2},\psi^{\otimes 2}\rangle 
 =
 \frac{1}{2}\sum_{i=1}^2\int_{\bR^2}\big(
 \bar\psi^0(x)\cdot \Gamma_i\ubar\psi^{a_i}(x)
 -
 \bar\psi^{a_i}(x)\cdot  \Gamma_i\ubar\psi^{0}(x)
 \big)\,\rd x
\end{equation}
for all $\psi=(\bar\psi^a,\ubar\psi^a)_{a\in\bA}\in\sS(\bR^2)^{\bA\times\bFF}\otimes_{\mathrm{alg}}\sG^-$ and $a_1=(1,0)\in\bA$, $a_2=(0,1)\in\bA$.
\end{rem}

\begin{rem}\label{rem:U_dfn}
It holds
\begin{itemize}
 \item if $\sigma=((-,1,1),(+,1,1))\in\bFF^2$, then $U^{2,\sigma,0}=\delta^{(2)}/2$,
 \item if $a=((0,0),(1,0))\in\bA^2$, $\sigma=((-,1,1),(+,1,2))\in\bFF^2$, then $U^{2,\sigma,a}=\delta^{(2)}/4$,
 \item if $\sigma=((-,1,1),(+,1,1),(-,1,1),(+,1,1))\in\bFF^4$, then $U^{4,\sigma,0}=\delta^{(4)}/6$,
\end{itemize}  
where $\delta^{(m)}\in\sS'(\bR^{2m})$ is the Dirac measure on the diagonal introduced in Remark~\ref{rem:Dirac_diagonal}.
\end{rem}

\begin{rem}
In the definitions stated below we use the fact that for every distribution $V\in\sM^m\subset \sS'(\bR^{2m})$ there exists a kernel $V_\rK$ in the sense of Definition~\ref{dfn:sM} such that Eq.~\eqref{eq:V_rK} holds. Note that the maps introduced in the definitions below do not depend on the choice of the kernel $V_\rK$ associated to a distribution $V$.
\end{rem}

\begin{dfn}\label{dfn:L}
Let $V^2=(V^{2,a,\sigma})_{a\in\bA^2,\sigma\in\bFF^2}\in\sN^2$ and $V^4=(V^{4,a,\sigma})_{a\in\bA^4,\sigma\in\bFF^4}\in\sN^4$ be translationally invariant. We define $\fL V^2\in\bC$ by the equality
\begin{equation}\label{eq:L_V_2_dfn}
 \fL V^2
 :=
 2\int_{\bR^2} V^{2,0,\sigma}(x_1,\rd x_2)
\end{equation}
with $\sigma=((-,1,1),(+,1,1))\in\bFF^2$. We define $\fL_\partial V^2:=\hat\fL_\partial V^2+\check\fL_\partial V^2\in\bC$ by the equalities
\begin{equation}\label{eq:L_V_2_partial_dfn}
 \hat\fL_\partial V^2
 :=
 4\int_{\bR^2} (x_2-x_1)^{a_1+a_2}\,V^{2,0,\sigma}(x_1,\rd x_2),
 \qquad
 \check\fL_\partial V^2
 := 
 4\int_{\bR^2} 
 V^{2,a,\sigma}(x_1,\rd x_2),
\end{equation}
with $a=(a_1,a_2)=((0,0),(1,0))\in\bA^2$ and $\sigma=((-,1,1),(+,1,2))\in\bFF^2$. We define $\fL V^4\in\bC$ by the equality
\begin{equation}\label{eq:L_V_4_dfn}
 \fL V^4
 :=
 6\int_{\bR^6} V^{4,0,\sigma}(x_1,\rd x_2,\rd x_3,\rd x_4)
\end{equation} 
with $\sigma=((-,1,1),(+,1,1),(-,1,1),(+,1,1))\in\bFF^4$.
\end{dfn}
\begin{rem}\label{rem:L_translational}
By translational invariance the above definitions of  $\fL V^2,\fL_\partial V^2,\fL V^4$ do not depend on the choice of $x_1\in\bR^2$.
\end{rem}
\begin{rem}
Note that the following equalities $\fL U^2 =1$, $\fL_\partial U_\partial^2=1$, $\fL U^4 = 1$ hold true.
\end{rem}

\begin{lem}\label{lem:fL_continuity}
Let $\nu\in[0,1/2]$. There exists $C\in(0,\infty)$ such that for all $t\in(0,1]$ and all translationally invariant \mbox{$V^2=(V^{2,a,\sigma})_{a\in\bA^2,\sigma\in\bFF^2}\in\sN^2$}, \mbox{$V^4=(V^{4,a,\sigma})_{a\in\bA^4,\sigma\in\bFF^4}\in\sN^4$} it holds:
\begin{itemize}
 \item[(A)] $|\fL V^2| \leq \sup_{a\in\bA^2}\sup_{\sigma\in\bFF^2}\|w^2_{t;\nu} V^{2,a,\sigma}\|_{\sM^2}$,
 \item[(B)] $|\fL_\partial V^2| \leq C\,t\, \sup_{a\in\bA^2}\sup_{\sigma\in\bFF^2}\|w^2_{t;\nu} V^{2,a,\sigma}\|_{\sM^2}$,
 \item[(C)] $|\fL V^4| \leq \sup_{a\in\bA^4}\sup_{\sigma\in\bFF^4}\|w^4_{t;\nu} V^{4,a,\sigma}\|_{\sM^4}$.
\end{itemize}
\end{lem}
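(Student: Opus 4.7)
The plan is to prove (A)--(C) by a common strategy: reduce to $x_1 = 0$ using the translational invariance of the relevant kernels (see Remark~\ref{rem:L_translational}), exploit the pointwise lower bound $w^m_{t;\nu}(0, x_2, \ldots, x_m) \geq 1$ valid for $\nu \in [0, 1/2]$, and, in case (B), additionally extract a factor of $t$ from the monomial $(x_2 - x_1)^{a_1 + a_2}$ by means of the stretched-exponential factor in the weight.

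First I would record the pointwise inequality $w^m_{t;\nu}(0, x_2, \ldots, x_m) \geq 1$ for all $x_2, \ldots, x_m \in \bR^2$ and $t \in (0,1]$: this holds because $(1+|x_1|)^{-\nu}$ equals $1$ at $x_1 = 0$ and the remaining factor $(1+\rD)^{1/2-\nu}\exp(t^{-\zeta}\rD^\zeta)$ is at least one for $\nu \in [0, 1/2]$. Applying this at $x_1 = 0$ inside the defining integrals of $\fL V^2$ and $\fL V^4$ (which are independent of the choice of $x_1$ by translation invariance) and using Definition~\ref{dfn:sM} of $\|\Cdot\|_{\sM^m}$ yields
\begin{equation*}
 |\fL V^2| \leq 2\,\|w^2_{t;\nu} V^{2,0,\sigma}\|_{\sM^2},
 \qquad
 |\fL V^4| \leq 6\,\|w^4_{t;\nu} V^{4,0,\sigma}\|_{\sM^4},
\end{equation*}
which, after absorbing the numerical prefactors into the supremum over $(a,\sigma)$, deliver (A) and (C).

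For (B) I would split $\fL_\partial V^2 = \hat\fL_\partial V^2 + \check\fL_\partial V^2$ as in Eq.~\eqref{eq:L_V_2_partial_dfn}. The main technical ingredient is the elementary optimization $\sup_{r \geq 0} r\,\exp(-t^{-\zeta} r^\zeta) \leq C_\zeta\, t$, attained at $r = t\,(1/\zeta)^{1/\zeta}$. Setting $x_1 = 0$ in $\hat\fL_\partial V^2$ and noting that $(x_2 - x_1)^{a_1+a_2}$ is bounded by $|x_2|$ (since $|a_1+a_2| = 1$), this inequality combined with $\exp(t^{-\zeta}|x_2|^\zeta) \leq w^2_{t;\nu}(0,x_2)$ yields $|\hat\fL_\partial V^2| \leq 4\,C_\zeta\, t\, \|w^2_{t;\nu} V^{2,0,\sigma}\|_{\sM^2}$. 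For $\check\fL_\partial V^2$, associated to $a_\star = ((0,0),(1,0))$ with $|a_\star|=1$, the pointwise bound $w^2_{t;\nu}(0,x_2) \geq 1$ combined with the reduction to $x_1 = 0$ directly yields $|\check\fL_\partial V^2| \leq 4\,\|w^2_{t;\nu} V^{2, a_\star, \sigma}\|_{\sM^2}$; using $t \in (0,1]$ the two contributions combine to give (B) with a suitable universal constant $C$.

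I do not anticipate a serious obstacle. The sole nontrivial ingredient is the calculus optimization producing the $t$-factor in (B); the rest follows immediately from the definitions of $\fL$, $\fL_\partial$, and $\|\Cdot\|_{\sM^m}$ together with translational invariance. A small technical point to verify is that the reduction to $x_1 = 0$ is compatible with the operator-valued kernel structure of Def.~\ref{dfn:sM}: translation invariance of the distribution $V^{m,a,\sigma}$ is equivalent to translation-covariance of its kernel representation $V_\rK$, which makes the reduction legitimate for the variation of the corresponding signed vector-valued measure.
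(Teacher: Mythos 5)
Your proposal takes exactly the paper's route: by translation invariance (Remark~\ref{rem:L_translational}) you set $x_1=0$, items (A) and (C) then follow from $w^m_{t;\nu}(0,x_2,\ldots,x_m)\ge 1$ together with Def.~\ref{dfn:sM} of $\|\Cdot\|_{\sM^m}$, and the factor $t$ in (B) comes from $\sup_{r\ge 0} r\,\re^{-t^{-\zeta}r^{\zeta}}\le C_\zeta\,t$, which is precisely the paper's inequality $|x_2-x_1|/w^2_{t;\nu}(x_1,x_2)\le C\,t$ at $x_1=0$. Your closing remark about translation covariance of the kernel $V_\rK$ is also fine.

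Two caveats, both of which the paper's one-sentence proof passes over in the same way. First, the prefactors $2$, $6$, $4$ cannot literally be ``absorbed into the supremum'' over $(a,\sigma)$; either keep them as harmless constants or distribute the integral over the permuted index tuples using antisymmetry (for $\fL V^4$ the six orderings of $\sigma$ account exactly for the factor $6$). Second, and more substantively, your bound $|\check\fL_\partial V^2|\le 4\sup_{a,\sigma}\|w^2_{t;\nu}V^{2,a,\sigma}\|_{\sM^2}$ carries no factor of $t$, and ``using $t\in(0,1]$'' cannot produce one: a $t$-independent bound does not imply a bound of the form $C\,t\,\sup(\cdots)$. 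This is a defect of the literal statement of (B) rather than of your strategy --- for $V^2$ equal to the antisymmetrization of $\delta^{(2)}$ placed only in the slot $a=((0,0),(1,0))$, $\sigma=((-,1,1),(+,1,2))$, the quantity $\fL_\partial V^2$ is a nonzero constant while the right-hand side of (B) tends to $0$ as $t\searrow0$. In every application the $|a|=1$ components are smaller than the $|a|=0$ ones by a factor of order $t$ (cf.\ the $s^{2-m/2-|a|}$ weights in the $\sV^{m;\gamma}$ norm), so the estimate is only ever invoked in a regime where this discrepancy is invisible; the paper's own proof, which addresses only the $\hat\fL_\partial$ term, glosses over the same point.
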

\begin{proof}
By Remark~\ref{rem:L_translational} we can set $x_1=0$ in the equations defining $\fL V^2,\fL_\partial V^2,\fL V^4$, which are given in Def.~\ref{dfn:L}. The bounds follow now immediately from Def.~\ref{dfn:sM} of the norm $\|\Cdot\|_{\sM^m}$ and the fact that there exists $C\in(0,\infty)$ such that $|x_2-x_1|/w^2_{t;\nu}(x_1,x_2)\leq C\, t$ for $x_1=0$ and all $x_2\in\bR^2$ and $t\in(0,1]$.
\end{proof}

\begin{lem}\label{lem:fL_symmetries}
Let $V^2=(V^{2,a,\sigma})_{a\in\bA^2,\sigma\in\bFF^2}\in(\sN^2_+)^{\mathrm{c}}$, $V^4=(V^{4,a,\sigma})_{a\in\bA^4,\sigma\in\bFF^4}\in(\sN^4_+)^{\mathrm{c}}$.
%
%
Then we have
\begin{equation}\label{eq:L_real}
 \fL V^2,  \hat\fL_\partial V^2, \check\fL_\partial V^2,  \fL_\partial V^4\in\bR
\end{equation}
and the following equalities
\begin{equation}
 \fL V^2\,\langle U^2,(\fJ\psi)^{\otimes 2}\rangle
 =
 \sum_{\sigma\in\bFF^2}\int_{\bR^4} V^{2,0,\sigma}(x_1,\rd x_2)\,
 \psi^{\sigma_1}(x_1)\,\psi^{\sigma_2}(x_1)\,\rd x_1,
\end{equation}
\begin{multline}
 \hat\fL_\partial V^2\,\langle U^2_\partial,(\fJ\psi)^{\otimes 2}\rangle
 \\
 =\sum_{\substack{a\in\bA^2\\|a|=1}}\sum_{\sigma\in\bFF^2}\int_{\bR^4} (x_2-x_1)^{a_1+a_2}\,V^{2,0,\sigma}(x_1,\rd x_2)\,
 (\fJ\psi)^{a_1,\sigma_1}(x_1)\,(\fJ\psi)^{a_2,\sigma_2}(x_1)\,\rd x_1,
\end{multline}
\begin{equation}
 \check\fL_\partial V^2\,\langle U^2_\partial,(\fJ\psi)^{\otimes 2}\rangle
 = 
 \sum_{\substack{a\in\bA^2\\|a|=1}}\sum_{\sigma\in\bFF^2}\int_{\bR^4} 
 V^{2,a,\sigma}(x_1,\rd x_2) 
 \,(\fJ\psi)^{a_1,\sigma_1}(x_1)\,(\fJ\psi)^{a_2,\sigma_2}(x_1)\,\rd x_1,
\end{equation}
\begin{equation}
 \fL V^4\,\langle U^{4},(\fJ\psi)^{\otimes 4}\rangle
 =
 \sum_{\sigma\in\bFF^4}\int_{\bR^8} V^{4,0,\sigma}(x_1,\rd x_2,\rd x_3,\rd x_4)\,\psi^{\sigma_1}(x_1)\,\psi^{\sigma_2}(x_1)\,\psi^{\sigma_3}(x_1)\,\psi^{\sigma_4}(x_1)\,\rd x_1
\end{equation} 
hold true for all $\psi\in\sS(\bR^2)^\bFF\otimes_{\mathrm{alg}}\sG^-$.
\end{lem}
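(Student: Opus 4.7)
The plan is to reduce by continuity to proving the statements for $V^2\in\sN^2_+$ and $V^4\in\sN^4_+$, and then to view each right-hand side of the asserted equalities as a local functional that inherits all the symmetries of $V$, so that the classification recalled in Remark~\ref{rem:symmetries} forces it to be a scalar multiple of the canonical invariants $\langle U^2,(\fJ\psi)^{\otimes 2}\rangle$, $\langle U_\partial^2,(\fJ\psi)^{\otimes 2}\rangle$, or $\langle U^4,(\fJ\psi)^{\otimes 4}\rangle$. The reduction is legitimate because the maps $\fL$ and $\fL_\partial$, together with the pairings against any fixed $(\fJ\psi)^{\otimes m}$, are continuous in the $\sN^m$ topology by Lemma~\ref{lem:fL_continuity} and the definition of $\|\Cdot\|_{\sN^m}$.

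Concretely, for $V^2\in\sN^2_+$ consider the map
\begin{equation*}
\psi\mapsto F[\psi]:=\sum_{\sigma\in\bFF^2}\int_{\bR^4}V^{2,0,\sigma}(x_1,\rd x_2)\,\psi^{\sigma_1}(x_1)\,\psi^{\sigma_2}(x_1)\,\rd x_1.
\end{equation*}
By translational invariance the inner integral over $x_2$ depends only on which $\sigma$ is chosen, so $F$ is a local functional of $\psi$ with no derivatives; by invariance of $V^2$ under flavor permutations, rotations and reflections of the torus, antisymmetry, and the flow-of-charge relations of Def.~\ref{dfn:flow_of_charge}, $F$ inherits every symmetry of the Gross-Neveu model. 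The classification of Remark~\ref{rem:symmetries} then yields $F[\psi]=c\,\langle U^2,(\fJ\psi)^{\otimes 2}\rangle$ for some $c\in\bC$. Testing against a $\psi$ whose only nonzero components are $\psi^{(-,1,1)}$ and $\psi^{(+,1,1)}$ collapses the $\sigma$-sum to the two antisymmetric pairings of these components and identifies $c=\fL V^2$ after matching the factor $2$ in Def.~\ref{dfn:L}. The quartic identity is obtained identically using the unique invariant $\int(\bar\psi\cdot\ubar\psi)^2$. The derivative identity splits into two pieces: $\hat\fL_\partial V^2$, extracted from the first moment $\int(x_2-x_1)^a V^{2,0,\sigma}(x_1,\rd x_2)$ (well-defined by the stretched-exponential weight, cf.~Remark~\ref{eq:N_polynomial}), and $\check\fL_\partial V^2$, coming from kernel components with $|a|=1$; each piece is again a symmetry-invariant local functional with one derivative, hence proportional to $\langle U_\partial^2,(\fJ\psi)^{\otimes 2}\rangle$, and the coefficients are read off by evaluating on the distinguished spinor component of Def.~\ref{dfn:L}.

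For the realness statement~\eqref{eq:L_real}, the plan is to combine the charge-conjugation identity from Def.~\ref{dfn:flow_of_charge} with antisymmetry and the explicit form of $\Gamma_2$. Applying the charge-conjugation condition to a test configuration supported on the distinguished $\sigma$ of Def.~\ref{dfn:L} introduces factors of matrix elements of $\Gamma_2$, which by Remark~\ref{rem:gamma} are purely imaginary $\pm\ri$; in the quadratic case two such factors multiply to $\pm 1$, and in the quartic case four of them multiply to $+1$. After reordering fields using antisymmetry of $V$ to restore the original ordering, the resulting identity becomes a relation between the integrated kernel and its complex conjugate, forcing $\fL V^2$, $\hat\fL_\partial V^2$, $\check\fL_\partial V^2$ and $\fL V^4$ to be real.

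The main obstacle is the explicit bookkeeping of how the various discrete symmetries act on spinor indices and combine with antisymmetrization signs and the $\Gamma_2$ factors. Although the classification invoked from Remark~\ref{rem:symmetries} is quoted from \cite{MW73,DY23}, to ensure the numerical coefficients (in particular the factors $2$, $4$, $6$ in Def.~\ref{dfn:L}) emerge correctly one must trace how many of the $|\bFF|^m$ terms in the $\sigma$-sum collapse to the distinguished $\sigma_\star$ under the combined flow-of-charge pair relation, flavor-permutation, and torus symmetries, and check that the combinatorial factor matches the normalization built into Def.~\ref{dfn:U_functional}. The derivative case is further complicated by the need to match a ``position moment'' ($\hat\fL_\partial$) against a ``derivative-index'' contribution ($\check\fL_\partial$) through the Taylor expansion of $\psi$ about $x_1$, and this is where the proof is most likely to accumulate intricacy.
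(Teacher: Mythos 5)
Your proposal follows essentially the same route as the paper's proof: reduce by continuity (Lemma~\ref{lem:fL_continuity}) to $V^2\in\sN^2_+$, $V^4\in\sN^4_+$, observe that each right-hand side is a local functional supported on the diagonal inheriting the torus/internal/flow-of-charge symmetries, invoke the classification of invariant local terms (Remark~\ref{rem:symmetries_flow_charge}) to get proportionality to $U^2$, $U^2_\partial$, $U^4$, fix the constants via the explicit components of Remark~\ref{rem:U_dfn} (equivalently your testing on the distinguished spinor configuration of Def.~\ref{dfn:L}), and derive realness from charge-conjugation invariance. The argument and the normalization bookkeeping you describe match the paper's proof, so no further comparison is needed.
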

\begin{rem}
Recall that $\fJ\psi^{a,\sigma}\equiv(\fJ\psi)^{a,\sigma}$ is the jet prolongation. In particular, $\psi^\sigma=\fJ\psi^{0,\sigma}$. The space $\sN^m_+$ was introduced in Def.~\ref{dfn:sN_symmetries} and $(\sN^m_+)^{\mathrm{c}}$ denotes its closure in $\sN^m$.
\end{rem}

\begin{proof}
By Lemma~\ref{lem:fL_continuity} without loss of generality we can assume that $V^2\in\sN^2_+$, $V^4\in\sN^4_+$. The conditions~\eqref{eq:L_real} follow from the charge conjugation invariance. It is clear that the equalities listed in the statement of the lemma hold for some distributions $U^2,U^2_\partial,U^{4}$ such that: (1) they satisfy all of the conditions stated in Def.~\ref{dfn:U_functional} possibly with the exception of the equalities~\eqref{eq:dfn_U_2} and~\eqref{eq:dfn_U_4}, (2) are proportional to the Dirac measures on the diagonal, (3) are invariant under the symmetries of the torus and the internal rotations and (4) are compatible with the flow of the charge. By Remark~\ref{rem:symmetries_flow_charge} the conditions~(2-4) imply that the equalities~\eqref{eq:dfn_U_2} and~\eqref{eq:dfn_U_4} hold up to a constant. To conclude it is enough to use the properties stated in Remark~\ref{rem:U_dfn}.
\end{proof}

\begin{dfn}\label{dfn:R_map}
Let $$V^2=(V^{2,a,\sigma})_{a\in\bA^2,\sigma\in\bFF^2}\in\sN^2\subset \sS'(\bR^4)^{\bA^2\times\bFF^2}$$ be such that \mbox{$\cX V^2\in\sN^2$} for all translationally invariant polynomials $\cX\in C^\infty(\bR^4)$ of degree two. We define $\hat W^2\in\sNN^2\subset \sS'(\bR^4)^{\bA^2\times\bFF^2}$ and $\check W^2\in\sS'(\bR^4)^{\bA^2\times\bA^2\times\bFF^2}$ by Eq.~\eqref{eq:V_rK} with 
\begin{equation}
 \hat W_\rK^{2,a,\sigma}(x_1,\rd x_2)
 :=
 \frac{1}{a!}(x_1-x_1)^{a_1} (x_2-x_1)^{a_2} \int_0^1 (1-u)\, u^{-2}\,(V_\rK^{2,0,\sigma})(u^{-1}x_1,u^{-1}\rd x_2)\,\rd u
\end{equation}
and
\begin{equation}
 \check W_\rK^{2,b,c,\sigma}(x_1,\rd x_2)
 :=
 (x_1-x_1)^{b_1}(x_2-x_1)^{b_2} \int_0^1 u^{-2}\,(V_\rK^{2,c,\sigma})(u^{-1}x_1,u^{-1}\rd x_2)\,\rd u
\end{equation}
if $|a|=2$, $|b|=|c|=1$ and $\hat W_\rK^{2,a,\sigma}=0$, $\check W_\rK^{2,b,c,\sigma}=0$ otherwise. We define $W^2\in\sNN^2$ by the equality
\begin{equation}
 W^{2,a,\sigma}=\hat W^{2,a,\sigma}+
 \sum_{\substack{b,c\in\bA^2\\b+c=a}}
 \check W^{2,b,c,\sigma}.
\end{equation}
We define $\fR V^2 \equiv ((\fR V^{2})^{a,\sigma})_{a\in\bA^2,\sigma\in\bFF^2}\in\sN^2$ by the equalities
\begin{equation}
\begin{aligned}
 (\fR V^{2})^{a,\sigma}&=0,
 &
 \qquad |a|&\leq 1,
 \\
 (\fR V^{2})^{a,\sigma}&=V^{2,a,\sigma}+(\fS W^{2})^{a,\sigma}, 
 &
 \qquad|a|&=2, 
 \\
 (\fR V^{2})^{a,\sigma}&=V^{2,a,\sigma},
 &
 \qquad|a|&\geq 3.
\end{aligned} 
\end{equation}
Let $$V^4=(V^{4,a,\sigma})_{a\in\bA^4,\sigma\in\bFF^4}\in\sN^4\subset \sS'(\bR^8)^{\bA^4\times\bFF^4}$$ be such that $\cX V^4\in\sN^4$ for all translationally invariant polynomials $\cX\in C^\infty(\bR^8)$ of degree one. We define $W^4\in\sNN^4\subset \sS'(\bR^8)^{\bA^4\times\bFF^4}$ by Eq.~\eqref{eq:V_rK} with
\begin{multline}\label{eq:dfn_W_4}
 W_\rK^{4,a,\sigma}(x_1,\rd x_2,\rd x_3,\rd x_4)
 :=
 (x_1-x_1)^{a_1}(x_2-x_1)^{a_2}(x_3-x_1)^{a_3}(x_4-x_1)^{a_4} 
 \\\times
 \int_0^1 u^{-6}\,(V_\rK^{4,0,\sigma})(u^{-1}x_1,u^{-1}\rd x_2,u^{-1}\rd x_3,u^{-1}\rd x_4)\,\rd u
\end{multline}
if $|a|=1$ and $W_\rK^{4,a,\sigma}=0$ otherwise. We define $\fR V^4 \equiv ((\fR V^{4})^{a,\sigma})_{a\in\bA^4,\sigma\in\bFF^4}\in\sN^4$ by the equalities 
\begin{equation}
\begin{aligned}
 (\fR V^{4})^{a,\sigma}&=0,
 &
 \qquad |a|&= 0,
 \\
 (\fR V^{4})^{a,\sigma}&=V^{4,a,\sigma}+(\fS W^{4})^{a,\sigma}, 
 &
 \qquad|a|&=1, 
 \\
 (\fR V^{4})^{a,\sigma}&=V^{4,a,\sigma},
 &
 \qquad|a|&\geq 2.
\end{aligned} 
\end{equation}
\end{dfn}
\begin{rem}
Recall that $\fS V$, introduced in Def.~\ref{dfn:permutations}, denotes the antisymmetric part of~$V$. Note that $\hat W_\rK^{2,a,\sigma}=0$ unless $a_1=0$, $\check W_\rK^{2,b,c,\sigma}=0$ unless $b_1=0$ and $W_\rK^{4,a,\sigma}=0$ unless $a_1=0$.
\end{rem}
\begin{rem}
It follows from Def.~\ref{dfn:weights} of the weight $w^m_s$ that if $V^2_\Cdot\in\sV^{2;\gamma}$, then \mbox{$\cX V^2_s\in\sN^2$} for all translationally invariant polynomials $\cX\in C^\infty(\bR^4)$ and all $s\in(0,1)$ and if $V^4_\Cdot\in\sV^{4;\gamma}$, then \mbox{$\cX V^4_s\in\sN^4$} for all translationally invariant polynomials $\cX\in C^\infty(\bR^8)$ and all $s\in(0,1)$.
\end{rem}
\begin{rem}
The map $\fR$ is compatible with the symmetries. More precisely, let $\tau\in\{0,+\}$ and suppose that $V^2\in(\sN^2_\tau)^{\mathrm{c}}\subset\sN^2$ is such that \mbox{$\cX V^2\in\sN^2$} for all translationally invariant polynomials $\cX\in C^\infty(\bR^4)$ of degree two and \mbox{$V^4\in(\sN^4_+)^{\mathrm{c}}\subset\sN^4$} is such that \mbox{$\cX V^4\in\sN^4$} for all translationally invariant polynomials $\cX\in C^\infty(\bR^8)$ of degree one. Then $\fR V^2\in\sN^2_\tau$ and $\fR V^4\in\sN^4_\tau$. 
\end{rem}

\begin{lem}\label{lem:R_bounds}
There exists $C\in(0,\infty)$ such that it holds
\begin{equation}
\begin{aligned}
 \|w^{2}_{t;\nu}\,(\fR V^{2})^{a,\sigma}\|_{\sM^2} 
 &\leq
 C\,
 (1-s/t)^{\zeta_\star}
 \sup_{b\in\bA^2} s^{|a|-|b|}\,\|w^2_{s;\nu}\,V^{2,b,\sigma}\|_{\sM^2},
 \\
 \|w^{4}_{t;\nu}\,(\fR V^{4})^{a,\sigma}\|_{\sM^4} 
 &\leq C\,
 (1-s/t)^{\zeta_\star}
 \sup_{b\in\bA^4} s^{|a|-|b|}\,\|w^4_{s;\nu}\,V^{4,b,\sigma}\|_{\sM^4}.
\end{aligned}
\end{equation}
for all $\nu\in[0,1/2]$, $t\in(0,1]$, $s\in(0,t)$ and all $V^2\in\sN^2$, \mbox{$V^4\in\sN^4$} such that the RHS of the above bounds are finite.
\end{lem}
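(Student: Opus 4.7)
The argument is a case analysis on $|a|$, reducing everything to Lemma~\ref{lem:weights}~(f). In the trivial ranges $|a|\le 1$ for $V^2$ and $|a|=0$ for $V^4$ the left-hand side vanishes. In the ranges $|a|\ge 3$ for $V^2$ and $|a|\ge 2$ for $V^4$ one has $(\fR V)^{a,\sigma}=V^{a,\sigma}$, so choosing $b=a$ in the supremum reduces the bound to the pointwise inequality $w^m_{t;\nu}\le w^m_{s;\nu}$ (valid since $t\ge s$ and the only $t$-dependence of $w^m_{t;\nu}$ is through the monotonically decreasing factor $\exp(t^{-\zeta}\rD^\zeta)$) combined with $(1-s/t)^{\zeta_\star}\ge 1$ (as $\zeta_\star<0$).

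\textbf{Borderline cases.} The real work is the cases $|a|=2$ for $V^2$ and $|a|=1$ for $V^4$, requiring estimates on $\hat W^2$, $\check W^2$, $W^4$. All three have the common Taylor-remainder form
\begin{equation*}
 W_\rK^{a,\sigma}(x_1,\rd x_2,\ldots,\rd x_m)=\bigl(\textstyle\prod_{j=2}^{m}(x_j-x_1)^{a'_j}\bigr)\int_0^1 (1-u)^l\,u^{-m}\,V^{m,b,\sigma}_\rK(u^{-1}x_1,u^{-1}\rd x_2,\ldots,u^{-1}\rd x_m)\,\rd u
\end{equation*}
with $(m,l,|a'|,b)$ equal to $(2,1,2,0)$ for $\hat W^2$, $(2,0,1,c)$ for $\check W^{2,b,c}$, and $(4,0,1,0)$ for $W^4$; crucially $l+1=|a|-|b|$ and $m-|a'|=1$ in every case. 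I will bound $\|w^m_{t;\nu}W^{a,\sigma}\|_{\sM^m}$ in four steps: (i) swap the $u$-integral with the $\sM^m$-defining integration by Fubini (positive integrand); (ii) change variables $y_j=u^{-1}x_j$ inside the latter, so that the Jacobian combines with the $u^{-m}$ prefactor and the $u^{|a'|}$ produced by rewriting $|uy_j-x_1|^{|a'_j|}=u^{|a'_j|}|y_j-u^{-1}x_1|^{|a'_j|}$ to cancel exactly (using $m-|a'|=1$); (iii) push the supremum in $x_1$ inside the $u$-integral, relabelling $\tilde y_1=u^{-1}x_1$; and (iv) apply the pointwise weight estimate
\begin{equation*}
 \rD(\tilde y_1,y_2,\ldots,y_m)^{l+1}\,w^m_{t;\nu}(u\tilde y_1,uy_2,\ldots,uy_m)\le C\,u^{-\nu}\,(s^{-\zeta}-u^\zeta t^{-\zeta})^{-(l+1)/\zeta}\,w^m_{s;\nu}(\tilde y_1,\ldots,y_m)
\end{equation*}
that appears as the intermediate step in the proof of Lemma~\ref{lem:weights}~(f). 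Integrating the resulting bound against $(1-u)^l\,\rd u$ yields the factor $C\,s^{l+1}(1-s/t)^{\zeta_\star}$ by exactly the computation concluding that proof, and the remaining supremum in $\tilde y_1$ produces $\|w^m_{s;\nu}V^{m,b,\sigma}\|_{\sM^m}$. Since $l+1=|a|-|b|$, this matches the target. The antisymmetrization $\fS$ contributes at most a factor $m!$ absorbed into $C$; when a permutation moves $x_1$, the antisymmetry of $V^{m,a,\sigma}\in\sN^m$ lets one relabel to restore the distinguished variable, and the mild weight asymmetry carried by the factor $(1+|x_1|)^{-\nu}$ is controlled by $(1+|x_1|)^{-\nu}\le(1+\rD(x_1,\ldots,x_m))^\nu(1+|x_j|)^{-\nu}$, which combines with the $(1+\rD)^{1/2-\nu}$ factor in the weight to give only a bounded constant.

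\textbf{Main obstacle.} The one delicate point is steps (iii)--(iv): Lemma~\ref{lem:weights}~(f) as stated is an estimate on $\int_0^1(1-u)^l(\cdots)\,\rd u$ uniform in $(\tilde y_1,y_2,\ldots,y_m)$, whereas after pushing $\sup_{x_1}$ inside the $u$-integral one has $\int_0^1(1-u)^l\sup_{\tilde y_1}(\cdots)\,\rd u$, which is strictly larger than $\sup_{\tilde y_1}\int_0^1(\cdots)\,\rd u$. Consequently one cannot use Lemma~\ref{lem:weights}~(f) as a black box but must invoke the pointwise-in-$u$ weight comparison appearing in its proof; once this is recognised, the rest of the argument is routine.
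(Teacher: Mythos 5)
Your proposal follows essentially the same route as the paper: reduce, via the antisymmetry of $V^2,V^4$, to bounds on the three Taylor-remainder kernels $\hat W^2$, $\check W^2$, $W^4$ from Def.~\ref{dfn:R_map}, and obtain these by the scaling substitution together with the weight estimate of Lemma~\ref{lem:weights}~(f); your observation that one cannot use Lemma~\ref{lem:weights}~(f) as a black box — because the distinguished point $\tilde y_1=u^{-1}x_1$ moves with $u$, so the supremum and the $u$-integral do not commute in the favourable direction — is a genuine and correct refinement of the paper's two-line proof, and using the pointwise-in-$u$ comparison from the proof of (f), whose right-hand side factorizes in $u$ and in the spatial variables, fixes this cleanly. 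Two points in your write-up are off, though neither affects the strategy. First, your ``common form'' bookkeeping is wrong as arithmetic: the paper's $W^4_\rK$ carries $u^{-6}$, not $u^{-m}=u^{-4}$, and your claimed identity $m-|a'|=1$ fails ($m-|a'|$ equals $0$, $1$, $3$ in the three cases), so the advertised ``exact cancellation'' does not check out as stated; what actually drives the bound is that, reading the scaled kernel as the pushforward dictated by the Taylor identities of Lemma~\ref{lem:R_L_identity}, the $u$-power matches the degree of the polynomial prefactor, the residual $u$-dependence is integrable, and the factors $s^{l+1}(1-s/t)^{\zeta_\star}$ with $l+1=|a|-|b|$ come solely from the weight estimate. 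Second, the antisymmetrization is handled more cleanly than by your weight manipulation: when a permutation moves the distinguished variable, the antisymmetry of $V$ lets you rewrite the transposed term as a kernel of exactly the same $\hat W/\check W/W^4$ form, built from a component $V^{m,\Cdot,\sigma}$ and with the first variable still the one appearing in the $\sup$ of the $\sM^m$-norm, so no asymmetry of $w^m_{t;\nu}$ ever enters; your inequality $(1+|x_1|)^{-\nu}\le(1+\rD)^{\nu}(1+|x_j|)^{-\nu}$ leaves a growing factor $(1+\rD)^{1/2}$ rather than ``a bounded constant'', so as written that patch needs the extra factor to be absorbed into the exponential part of the weight, which your text does not do.
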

\begin{rem}
Recall that $\zeta_\star=-3/4$ was introduced in Def.~\ref{dfn:weights}. See also Lemma~\ref{lem:weights}~(f).
\end{rem}

\begin{proof}
Since by assumption $V^2,V^4$ are antisymmetric it suffices to prove that for all $a\in\bA^2$, $|a|=2$, $\sigma\in\bFF^2$ it holds
\begin{equation}
 \|w^{2}_{t;\nu}\,\hat W^{2,a,\sigma}\|_{\sM^2} 
 \leq
 C\,
 (1-s/t)^{\zeta_\star}\, s^{|a|}
 \sup_{\sigma\in\bFF^2}\|w^2_{s;\nu}\,V^{2,0,\sigma}\|_{\sM^2},
\end{equation}
for all $b,c\in\bA^2$, $|b|=|c|=1$, $\sigma\in\bFF^2$ it holds
\begin{equation}
 \|w^{2}_{t;\nu}\,\check W^{2,b,c,\sigma}\|_{\sM^2} 
 \leq
 C\,
 (1-s/t)^{\zeta_\star}\, s^{|b|}
 \sup_{\sigma\in\bFF^2}\|w^2_{s;\nu}\,V^{2,c,\sigma}\|_{\sM^2}
\end{equation}
and for all $a\in\bA^4$, $|a|=1$, $\sigma\in\bFF^4$ it holds
\begin{equation}
 \|w^{4}_{t;\nu}\,W^{4,a,\sigma}\|_{\sM^4} 
 \leq C\,
 (1-s/t)^{\zeta_\star}\,s^{|a|}
 \sup_{\sigma\in\bFF^4}\|w^4_{s;\nu}\,V^{4,0,\sigma}\|_{\sM^4}.
\end{equation}
The above estimates follow easily from Def.~\ref{dfn:R_map} and Lemma~\ref{lem:weights}~(f).
\end{proof}

\begin{lem}\label{lem:R_L_identity}
Let $V^2\in(\sN^2_+)^{\mathrm{c}}\subset\sN^2$ be such that \mbox{$\cX V^2\in\sN^2$} for all translationally invariant polynomials $\cX\in C^\infty(\bR^4)$ of degree two and \mbox{$V^4\in(\sN^4_+)^{\mathrm{c}}\subset\sN^4$} be such that \mbox{$\cX V^4\in\sN^4$} for all translationally invariant polynomials $\cX\in C^\infty(\bR^8)$ of degree one. Then it holds
\begin{equation}
\begin{gathered}
 \langle(\fR+U^{2}\fL+U_\partial^2\fL_\partial)V^2,(\fJ\psi)^{\otimes2}\rangle
 =
 \langle V^2,(\fJ\psi)^{\otimes2}\rangle,
 \\
 \langle(\fR+U^{4}\fL)V^4,(\fJ\psi)^{\otimes4}\rangle
 =
 \langle V^4,(\fJ\psi)^{\otimes4}\rangle.
\end{gathered} 
\end{equation}
\end{lem}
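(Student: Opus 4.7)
The plan is to verify each identity by decomposing $V^m-\fR V^m-(U^m\fL+U^m_\partial\fL_\partial)V^m$ componentwise in the jet multi-index $a\in\bA^m$, showing that the high-$|a|$ components vanish automatically from the piecewise definition of $\fR$ in Def.~\ref{dfn:R_map}, and then reducing the remaining low-$|a|$ components to a Taylor identity that follows from the fundamental theorem of calculus combined with the explicit form of $U^m$, $U^m_\partial$, $W^4$, $\hat W^2$ and $\check W^2$.

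For the $V^4$ identity, set $H^4:=V^4-\fR V^4-U^4\fL V^4$. Inspecting the cases in Def.~\ref{dfn:R_map} gives $H^{4,a,\sigma}=0$ for $|a|\geq 2$, $H^{4,a,\sigma}=-(\fS W^4)^{a,\sigma}$ for $|a|=1$, and $H^{4,0,\sigma}=V^{4,0,\sigma}-U^{4,0,\sigma}\fL V^4$. Applying Lemma~\ref{lem:fL_symmetries} to rewrite the local-term piece, the claim reduces to
\begin{multline*}
\sum_\sigma \int V^{4,0,\sigma}(x_1,\mathrm{d}x_2,\mathrm{d}x_3,\mathrm{d}x_4)\,\psi^{\sigma_1}(x_1)\bigl[\Pi(x_2,x_3,x_4)-\Pi(x_1,x_1,x_1)\bigr]\,\mathrm{d}x_1\\
=\sum_{|a|=1,\sigma}\langle(\fS W^4)^{a,\sigma},\partial^{a_1}\psi^{\sigma_1}\otimes\ldots\otimes\partial^{a_4}\psi^{\sigma_4}\rangle,
\end{multline*}
where $\Pi(y_2,y_3,y_4):=\psi^{\sigma_2}(y_2)\psi^{\sigma_3}(y_3)\psi^{\sigma_4}(y_4)$. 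Writing the bracket as $\int_0^1 \partial_u\Pi(x_1+u(x_2-x_1),x_1+u(x_3-x_1),x_1+u(x_4-x_1))\,\mathrm{d}u$ produces a sum over $i\in\{2,3,4\}$ and $j\in\{1,2\}$ of terms of the form $(x_i-x_1)^{e_j}\,\partial_j\psi^{\sigma_i}(x_1+u(x_i-x_1))\prod_{k\neq 1,i}\psi^{\sigma_k}(x_1+u(x_k-x_1))$. For each such term the change of variables $\tilde x_i=x_1+u(x_i-x_1)$ in the $\mathrm{d}x_i$-integration, combined with translational invariance of $V^{4,0,\sigma}$, yields precisely the scaled measure $u^{-6}V^{4,0,\sigma}_\rK(u^{-1}x_1,u^{-1}\mathrm{d}\tilde x_2,u^{-1}\mathrm{d}\tilde x_3,u^{-1}\mathrm{d}\tilde x_4)$ from the definition of $W^{4,a,\sigma}_\rK$, with $a$ having its unique nonzero entry $e_j$ in position $i$. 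Summing over $(i,j)$ and invoking antisymmetry of $V^4$ to reorganise the sum produces exactly $\fS W^4$ on the right.

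The $V^2$ identity has the same flavour but one extra layer, since there are two local operators $U^2\fL$ and $U^2_\partial\fL_\partial$ with $\fL_\partial=\hat\fL_\partial+\check\fL_\partial$. Set $H^2:=V^2-\fR V^2-U^2\fL V^2-U^2_\partial\fL_\partial V^2$. Inspection of Def.~\ref{dfn:R_map} gives $H^{2,a,\sigma}=0$ for $|a|\geq 3$ and $H^{2,a,\sigma}=-(\fS W^2)^{a,\sigma}$ for $|a|=2$, so only the $|a|\leq 1$ sectors remain. For the $|a|=0$ sector I would second-order Taylor expand $\psi^{\sigma_2}(x_2)$ around $x_1$: the constant term reproduces $U^2\fL V^2$ via Lemma~\ref{lem:fL_symmetries}, the linear term reproduces $U^2_\partial\hat\fL_\partial V^2$ (the formula for $\hat\fL_\partial V^2$ in Def.~\ref{dfn:L} is engineered to match the first Taylor coefficient), and the integral form of the second-order remainder matches $\hat W^{2,a,\sigma}_\rK$ after the same rescaling argument. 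For the $|a|=1$ sector I would first-order Taylor expand $\partial^{a_2}\psi^{\sigma_2}(x_2)$ around $x_1$ (or symmetrically for $a_1$): the constant term reproduces $U^2_\partial\check\fL_\partial V^2$, and the first-order remainder matches $\check W^{2,b,c,\sigma}_\rK$. Antisymmetry of $V^2$ combines the two arrangements $(a_1,a_2)=(e_j,0)$ and $(0,e_j)$ into the $\fS$-antisymmetrised form on the right.

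The main obstacle is the change-of-variables bookkeeping: the Jacobian factor $u^{-6}$ for $V^4$ and $u^{-2}$ for $V^2$, arising from the two-dimensional rescaling in three and one variables respectively, must align exactly with the weights in the defining formulas for $W^{4,a,\sigma}_\rK$, $\hat W^{2,a,\sigma}_\rK$, and $\check W^{2,b,c,\sigma}_\rK$ in Def.~\ref{dfn:R_map}, and translational invariance of $V^m$ is what lets one reabsorb the base-point shift into the rescaled measure. A secondary subtlety is the antisymmetrisation: the Taylor remainder produces a kernel distinguished by a single differentiation slot, and only after summing over all slot choices and invoking antisymmetry of $V^m$ does one recover the $\fS$-projected kernel that appears in Def.~\ref{dfn:R_map}. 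No separate density argument is needed since the computation uses only antisymmetry, translational invariance, and the charge-conjugation/rotational structure encoded in Lemma~\ref{lem:fL_symmetries}, all of which pass to $(\sN^m_+)^{\mathrm{c}}$.
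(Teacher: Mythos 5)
Your proposal is correct and follows essentially the same route as the paper's proof: Taylor expansion of the test-function products around the first point $x_1$ with integral remainder, Lemma~\ref{lem:fL_symmetries} to identify the zeroth/first-order terms with $U^2\fL$, $U^2_\partial\hat\fL_\partial$, $U^2_\partial\check\fL_\partial$, $U^4\fL$, and the rescaling/translational-invariance matching of the remainders with the kernels $\hat W^2$, $\check W^2$, $W^4$ from Def.~\ref{dfn:R_map}, with the high-$|a|$ components matching trivially. The only notable difference is cosmetic: the paper opens with a continuity/density reduction to $V^2\in\sN^2_+$, $V^4\in\sN^4_+$ via Lemmas~\ref{lem:fL_continuity} and~\ref{lem:R_bounds}, whereas you absorb that step into the already-closure-valid statement of Lemma~\ref{lem:fL_symmetries}, which is legitimate since the remaining manipulations use only antisymmetry, translational invariance and the assumed polynomial-weighted integrability, all of which pass to $(\sN^m_+)^{\mathrm{c}}$.
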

\begin{proof}
By Lemmas~\ref{lem:fL_continuity} and~\ref{lem:R_bounds} both sides of the identities listed in the statement of the lemma depend continuously on $V^2\in\sN^2$, $V^4\in\sN^4$. Thus, without loss of generality we can assume that $V^2\in\sN^2_+$, $V^4\in\sN^4_+$. The rest of the proof is an application of the Taylor theorem. We first note that the equality
\begin{equation}
 \langle((\fR+U^{2}\fL+U_\partial^2\fL_\partial)V^2)^{a,\sigma},(\fJ\psi)^{a_1,\sigma_1}\otimes(\fJ\psi)^{a_2,\sigma_2}\rangle
 =
 \langle V^{2,a,\sigma},(\fJ\psi)^{a_1,\sigma_1}\otimes(\fJ\psi)^{a_2,\sigma_2}\rangle
\end{equation}
holds for all $a\in\bA^2$, $|a|\geq 2$, and $\sigma\in\bFF^2$ and the equality
\begin{equation}
 \langle((\fR+U^{4}\fL)V^4)^{a,\sigma},(\fJ\psi)^{a_1,\sigma_1}\otimes\ldots\otimes (\fJ\psi)^{a_4,\sigma_4}\rangle
 =
 \langle V^{4,a,\sigma},(\fJ\psi)^{a_1,\sigma_1}\otimes\ldots\otimes (\fJ\psi)^{a_4,\sigma_4}\rangle
\end{equation}
for all $a\in\bA^4$, $|a|\geq1$, and $\sigma\in\bFF^4$. In general the above equalities are false but we will prove that the sums of both sides over $a$ and $\sigma$ coincide. This follows from Def.~\ref{dfn:R_map} of the map $\fR$ as well as Eq.~\eqref{eq:fL_fR_lemma_2_hat},~\eqref{eq:fL_fR_lemma_2_check} and~\eqref{eq:fL_fR_lemma_4} established below. We first observe that by the Taylor theorem the following identity
\begin{multline}
 \psi^{\sigma_1}(x_1)\,\psi^{\sigma_2}(x_2) 
 =
 \psi^{\sigma_1}(x_1)\,\psi^{\sigma_2}(x_1) 
 +
 \sum_{\substack{a\in\bA^2\\|a|=1}} (x_1-x_1)^{a_1} (x_2-x_1)^{a_2} (\fJ\psi)^{a_1,\sigma_1}(x_1) \,(\fJ\psi)^{a_2,\sigma_2}(x_1)
 \\
 +\sum_{\substack{a\in\bA^2\\|a|=2}} \frac{1}{a!} (x_1-x_1)^{a_1}(x_2-x_1)^{a_2} \int_0^1 (1-u)\,
 (\fJ\psi)^{a_1,\sigma_1}(x_1)\,
 (\fJ\psi)^{a_2,\sigma_2}(x_1+u(x_2-x_1)) \,\rd u
\end{multline}
is true for $\sigma\in\bFF^2$. As a result, by Lemma~\ref{lem:fL_symmetries} and Def.~\ref{dfn:R_map} of $\hat W^{2,a,\sigma}$ we obtain
\begin{multline}\label{eq:fL_fR_lemma_2_hat}
 \sum_{\sigma\in\bFF^2}
 \langle V^{2,0,\sigma},\psi^{\sigma_1}\otimes\psi^{\sigma_2}\rangle
 \\
 =
 \sum_{a\in\bA^2}\sum_{\sigma\in\bFF^2}\langle U^{2,a,\sigma}\fL V^2 +U_\partial^{2,a,\sigma}\hat\fL_\partial V^2 + \hat W^{2,a,\sigma},(\fJ\psi)^{a_1,\sigma_1}\otimes (\fJ\psi)^{a_2,\sigma_2}\rangle.
\end{multline}
Next, we note that
\begin{multline}
 (\fJ\psi)^{c_1,\sigma_1}(x_1)\,(\fJ\psi)^{c_2,\sigma_2}(x_2)
 =
 (\fJ\psi)^{c_1,\sigma_1}(x_1)(\fJ\psi)^{c_2,\sigma_2}(x_1)
 \\
 +\sum_{\substack{b\in\bA^2\\|b|=1}} (x_1-x_1)^{b_1}(x_2-x_1)^{b_2} \int_0^1
 (\fJ\psi)^{b_1+c_1,\sigma_1}(x_1)\,
 (\fJ\psi)^{b_1+c_1,\sigma_2}(x_1+u(x_2-x_1)) \,\rd u.
\end{multline}
for $c\in\bA^2$ and $\sigma\in\bFF^2$. Consequently, by Lemma~\ref{lem:fL_symmetries} and Def.~\ref{dfn:R_map} of $\check W^{2,b,c,\sigma}$ we have
\begin{multline}\label{eq:fL_fR_lemma_2_check}
 \sum_{\substack{c\in\bA^2\\|c|=1}}\sum_{\sigma\in\bFF^2}
 \langle V^{2,c,\sigma},(\fJ\psi)^{c_1,\sigma_1}\otimes(\fJ\psi)^{c_2,\sigma_2}\rangle
 =
 \sum_{a\in\bA^2}\sum_{\sigma\in\bFF^2}\langle U_\partial^{2,a,\sigma}\check\fL_\partial V^2,(\fJ\psi)^{a_1,\sigma_1}\otimes (\fJ\psi)^{a_2,\sigma_2}\rangle
 \\
 +
 \sum_{b,c\in\bA^2}\sum_{\sigma\in\bFF^2}\langle \check W^{2,b,c,\sigma},(\fJ\psi)^{b_1+c_1,\sigma_1}\otimes (\fJ\psi)^{b_2+c_2,\sigma_2}\rangle.
\end{multline}
Finally, using the identity
\begin{equation}
 \prod_{j=1}^4 \psi^{\sigma_j}(x_j) 
 = 
 \prod_{j=1}^4 \psi^{\sigma_j}(x_1)
 +
 \sum_{\substack{a\in\bA^4\\|a|=1}} 
 \prod_{j=1}^4 (x_j-x_1)^{a_j} 
 \int_0^1\,\prod_{j=1}^4 (\fJ\psi)^{a_j,\sigma_j}(x_1+u(x_j-x_1))\,\rd u,
\end{equation}
Lemma~\ref{lem:fL_symmetries} and Def.~\ref{dfn:R_map} of $W^{4,a,\sigma}$ we get
\begin{multline}\label{eq:fL_fR_lemma_4}
 \sum_{\sigma\in\bFF^4}
 \langle V^{4,0,\sigma},\psi^{\sigma_1}\otimes\ldots\otimes \psi^{\sigma_4}\rangle
 \\
 =
 \sum_{a\in\bA^4}\sum_{\sigma\in\bFF^4}\langle U^{4,a,\sigma}\fL V^4 + W^{4,a,\sigma},(\fJ\psi)^{a_1,\sigma_1}\otimes\ldots\otimes (\fJ\psi)^{a_4,\sigma_4}\rangle.
\end{multline}
This finishes the proof.
\end{proof}

\section{Useful maps}\label{sec:maps}

In this section we study properties of the operators that appear on the RHS of the equations~\eqref{eq:flow_g_intro_ren}, \eqref{eq:flow_r_intro_ren},
\eqref{eq:flow_2_intro_z}, \eqref{eq:flow_2_intro_W1}, 
\eqref{eq:flow_2_intro_W2} introduced in Sec.~\ref{sec:strategy}. Recall that our goal is to solve this system of equation by rewriting it as a fixed point equation of a certain map $\fX_{\tau,\varepsilon;\Cdot}$, which is defined in Sec.~\ref{sec:fixed_point}. The estimates we establish below play a crucial role in the proof that the map $\fX_{\tau,\varepsilon;\Cdot}$ is well-defined and is a contraction.

\begin{dfn}\label{dfn:covariance_H}
For $\tau,\varepsilon\in[1,0]$ and $t\in(0,1]$ we define 
$$
\dot H_{\tau,\varepsilon;t}\equiv(\dot H^{a,\sigma}_{\tau,\varepsilon;t})_{a\in\bA^2,\sigma\in\bFF^2}\in C^\infty(\bR^2\times\bR^2)^{\bA^2\times\bFF^2}
$$ by the formula
\begin{equation}
 \dot H^{a,\sigma}_{\tau,\varepsilon;t}(x,y):=\partial^{a_1}_x\partial^{a_2}_y\dot G^\sigma_{\tau,\varepsilon;t}(x-y)
\end{equation}
for all $a=(a_1,a_2)\in\bA^2=\{0,1,2\}^2$, $\sigma\in\bFF^2$ and $x,y\in\bR^2$. We omit $\tau$ if $\tau=0$ and we omit~$\varepsilon$ if $\varepsilon=0$.
\end{dfn}

\begin{rem}\label{rem:H_symmetries}
The function $
\dot H_{\tau,\varepsilon;t}\in C^\infty(\bR^2\times\bR^2)^{\bA^2\times\bFF^2}$ is antisymmetric and invariant under the symmetries of the torus, the internal symmetries and the charge conjugation symmetry. The proof of the invariance under the symmetries of the torus and the internal symmetries is straightforward. To prove the invariance under the charge conjugation symmetry one uses Remark~\ref{rem:gamma}. If $\tau=0$, then the above function is also invariant under the symmetries of the plane.
\end{rem}

\begin{lem}\label{lem:H_dot_estimates}
There exists $\lambda_\star\in(0,1]$ such that for all $\lambda\in(0,\lambda_\star]$, $\varepsilon\in[1,0]$, $t\in(0,1]$, and $a\in\bA^2$, $\sigma\in\bFF^2$ it holds
\begin{equation}
 \|w_t^2 \dot H^{a,\sigma}_{\varepsilon;t}\|_{\sM^2} \leq \lambda^{-\kappa}
 \,t^{-|a|},
 \qquad
 \|w_t^2 (\dot H^{a,\sigma}_{t}-\dot H^{a,\sigma}_{\varepsilon;t})\|_{\sM^2} \leq \lambda^{-\kappa}\,\lambda_\varepsilon^\kappa\,\lambda_t^{-\kappa}\,t^{-|a|},
\end{equation}
where the weight $w^2_t$ was introduced in Def.~\ref{dfn:weights}. Moreover, the map $(0,1]\ni t\mapsto \dot H_{\varepsilon;t}\in\sN^2$ is continuous for all $\varepsilon\in[0,1]$.
\end{lem}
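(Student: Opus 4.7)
The strategy is to reduce the estimates on $\dot H_{\varepsilon;t}$ to the pointwise Gevrey bounds already established in Lemma~\ref{lem:dot_G_infty} for $G^{\pm;\varsigma_1,\varsigma_2}_{\varepsilon;t}$, and then to exploit the gap $\zeta_\flat>\zeta$ so that the decay $\exp(-\delta\,t^{-\zeta_\flat}|y|^{\zeta_\flat})$ of the propagator dominates the stretched exponential growth $\exp(t^{-\zeta}|y|^\zeta)$ appearing in the weight $w^2_t$ of Def.~\ref{dfn:weights}. I first read off from Def.~\ref{dfn:covariance} the identity $\dot G^{+,\varsigma,-,\tilde\varsigma}_{\varepsilon;t}=-G^{+;\varsigma,\tilde\varsigma}_{\varepsilon;t}$ and, combined with Remark~\ref{rem:G}(E), $\dot G^{-,\tilde\varsigma,+,\varsigma}_{\varepsilon;t}(x)=G^{+;\varsigma,\tilde\varsigma}_{\varepsilon;t}(-x)$, while $\dot G^{\pm,\varsigma,\pm,\tilde\varsigma}_{\varepsilon;t}=0$. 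Consequently Lemma~\ref{lem:dot_G_infty} gives, for all $a\in\bN_0^2$, the pointwise bound
\begin{equation*}
 |\partial^a\dot G^\sigma_{\varepsilon;t}(y)|\le C\,t^{-|a|-2}\exp(-\delta\,t^{-\zeta_\flat}|y|^{\zeta_\flat}).
\end{equation*}

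For the first bound, the kernel underlying $\dot H^{a,\sigma}_{\varepsilon;t}$ is translationally invariant with smooth density, so Def.~\ref{dfn:sM} together with the change of variables $y=x_1-x_2$ yields
\begin{equation*}
 \|w^2_t\,\dot H^{a,\sigma}_{\varepsilon;t}\|_{\sM^2}
 =\int_{\bR^2}(1+|y|)^{1/2}\,\exp(t^{-\zeta}|y|^\zeta)\,|\partial^{a_1+a_2}\dot G^\sigma_{\varepsilon;t}(y)|\,\rd y.
\end{equation*}
Plugging in the Gevrey estimate and rescaling $z=y/t$ transforms the right hand side into $C\,t^{-|a|}\int_{\bR^2}(1+t|z|)^{1/2}\exp(|z|^\zeta-\delta|z|^{\zeta_\flat})\,\rd z$, which is a finite constant times $t^{-|a|}$ uniformly in $t\in(0,1]$ because $\zeta_\flat>\zeta$. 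Choosing $\lambda_\star$ small enough so that $\lambda_\star^{-\kappa}$ majorizes this constant upgrades the bound to $\lambda^{-\kappa}t^{-|a|}$ for all $\lambda\in(0,\lambda_\star]$.

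The second bound is handled by splitting into cases. If $t\ge 4\varepsilon$ then Remark~\ref{rem:G}(C) gives $\dot G^\sigma_t=\dot G^\sigma_{\varepsilon;t}$ and the difference vanishes identically. If $t<4\varepsilon$, the triangle inequality together with the first bound produces $\|w^2_t(\dot H^{a,\sigma}_t-\dot H^{a,\sigma}_{\varepsilon;t})\|_{\sM^2}\le 2C\,t^{-|a|}$, and it remains to show $2C\le\lambda^{-\kappa}\,(\lambda_\varepsilon/\lambda_t)^\kappa$. Since $\lambda_s$ is increasing in $s$ by Def.~\ref{dfn:lambda} and $\varepsilon\ge t/4$ in this regime,
\begin{equation*}
 \lambda_\varepsilon/\lambda_t\ge\lambda_{t/4}/\lambda_t
 =\frac{\lambda^{-1}-\beta_2\log t}{\lambda^{-1}-\beta_2\log t+\beta_2\log 4},
\end{equation*}
which tends to $1$ as $\lambda\to 0$; hence $(\lambda_\varepsilon/\lambda_t)^\kappa\ge 1/2$ and then $\lambda^{-\kappa}(\lambda_\varepsilon/\lambda_t)^\kappa\ge 2C$ after a further shrinking of $\lambda_\star$. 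The case $\varepsilon=0$ is trivial since both sides vanish ($\lambda_0=0$ and $\dot H_t-\dot H_{0;t}\equiv 0$).

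Finally, continuity of $(0,1]\ni t\mapsto\dot H_{\varepsilon;t}\in\sN^2$ follows by dominated convergence: fixing $t_0\in(0,1]$ and restricting to $t\in[t_0/2,1]$, Lemma~\ref{lem:dot_G_infty} supplies a $t$-uniform envelope of the form $C_{t_0}\exp(-\delta_{t_0}|y|^{\zeta_\flat})$ for $|\partial^a\dot G^\sigma_{\varepsilon;t}(y)|$, which is integrable against the weight $w_1^2$ because $\zeta_\flat>\zeta$. Since $\dot\vartheta(t\omega(p))$ is smooth in $t$, Fourier inversion gives pointwise convergence $\partial^a\dot G^\sigma_{\varepsilon;t}(y)\to\partial^a\dot G^\sigma_{\varepsilon;t_0}(y)$, and hence $\|\dot H_{\varepsilon;t}-\dot H_{\varepsilon;t_0}\|_{\sN^2}\to 0$. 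The sole (minor) obstacle in the whole argument is to verify that the rescaling $z=y/t$ actually produces a $t$-independent Gaussian-like integrand, which reduces to the compatibility between the Gevrey class of $\vartheta$ fixed in Def.~\ref{dfn:theta} and the stretched exponential weight chosen in Def.~\ref{dfn:weights}; once $\zeta_\flat>\zeta$ is invoked, the remainder of the proof is bookkeeping.
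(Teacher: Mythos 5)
Your proposal is correct and follows essentially the same route as the paper: both reduce the estimates to the pointwise Gevrey decay of Lemma~\ref{lem:dot_G_infty} via the identifications $\dot G^{+,\varsigma,-,\tilde\varsigma}_{\varepsilon;t}=-G^{+;\varsigma,\tilde\varsigma}_{\varepsilon;t}$ and Remark~\ref{rem:G}~(E), exploit $\zeta=4/5<\zeta_\flat$ to absorb the weight $w^2_t$, and deduce the difference bound from Remark~\ref{rem:G}~(C) (vanishing for $t\ge4\varepsilon$, triangle inequality plus $\lambda_\varepsilon/\lambda_t$ bounded below for $t<4\varepsilon$). You merely spell out details the paper leaves implicit, including the case analysis and the continuity argument, and these are all sound.
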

\begin{proof}
Observe that $w^{2}_{t}(x_1,x_2) = 
(1+|x_1-x_2|)^{1/2}\exp(t^{-\zeta}|x_1-x_2|^\zeta)$ and $\zeta=4/5<\zeta_\flat$. As a result, by Lemma~\ref{lem:dot_G_infty} and Remark~\ref{rem:G}~(E) there exists a universal constant $C\in(0,\infty)$ such that $\|w_t^2 \dot H^{a,\sigma}_{\varepsilon;t}\|_{\sM^2} \leq C\,t^{-|a|}$. This proves the first of the estimates stated in the lemma. The second of the estimates follows from the above bound and Remark~\ref{rem:G}~(C).
\end{proof}

\begin{dfn}\label{dfn:B_map}
Let $m\in\bN_+$, $\varepsilon\in[0,1]$, $s\in(0,1]$. The map $\fB^m_{\varepsilon;s}\,:\,\sN(\sF)\to\sN^m(\sF)$ is defined by
\begin{equation}
 \langle\fB^m_{\varepsilon;s}(V),\varphi^{\otimes m}\rangle
 :=
 \frac{1}{2}\sum_{k=0}^m(-1)^{m-k}\, (k+1)(m-k+1)\, \langle V^{k+1}\otimes V^{m-k+1},
 \varphi^{\otimes k}\otimes \dot H_{\varepsilon;s} \otimes \varphi^{\otimes (m-k)}\rangle
\end{equation}
for all $\varphi\in\sS(\bR^2)^{\bA\times\bFF}\otimes_{\mathrm{alg}}\sG^-$. The map $\fB^m_{\varepsilon;s}\,:\,\sN(\sF)\times \sN(\sF)\to\sN^m(\sF)$ is defined by
\begin{equation}
 \fB^m_{\varepsilon;s}(V,W):=(\fB^m_{\varepsilon;s}(V+W,V+W)-\fB^m_{\varepsilon;s}(V-W,V-W))/4.
\end{equation}
The maps $\fB_{\varepsilon;s}\,:\,\sN\to\sN$, $\fB_{\varepsilon;s}\,:\,\sN\times \sN\to\sN$ are uniquely defined by the condition $\Pi^m\fB_{\varepsilon;s}=\fB^m_{\varepsilon;s}$ for all $m\in\bN_+$. We also set $\fB^{m,a}_{\varepsilon;s}:=\Pi^{m,a}\fB_{\varepsilon;s}$ and $\fB^{m,a,\sigma}_{\varepsilon;s}:=\Pi^{m,a,\sigma}\fB_{\varepsilon;s}$. We omit $\varepsilon$ if $\varepsilon=0$.
\end{dfn}

\begin{rem}\label{rem:fB}
Using the antisymmetric property one shows that for $V,W\in\sN$ it holds
\begin{equation}
 \fB_{\varepsilon;s}(V,W)[\varphi]=\langle \rD_\varphi V[\varphi]\otimes\rD_\varphi W[\varphi],\dot H_{\varepsilon;s}\rangle-\langle \rD_\varphi V[0]\otimes\rD_\varphi W[0],\dot H_{\varepsilon;s}\rangle.
\end{equation}
The above identity implies that
\begin{multline}
 \fB_{\varepsilon;s}(V,W)[\varphi+\fJ\varPsi_{\tau,\varepsilon;t\vee s,s}]
 -
 \fB_{\varepsilon;s}(V,W)[\fJ\varPsi_{\tau,\varepsilon;t\vee s,s}]
 \\
 =
 \langle \rD_\varphi V[\varphi+ \fJ\varPsi_{\tau,\varepsilon;t\vee s,s}]\otimes\rD_\varphi W[\varphi+ \fJ\varPsi_{\tau,\varepsilon;t\vee s,s}],\dot H_{\varepsilon;s}\rangle
 \\
 -
 \langle \rD_\varphi V[\fJ\varPsi_{\tau,\varepsilon;t\vee s,s}]\otimes\rD_\varphi W[\fJ\varPsi_{\tau,\varepsilon;t\vee s,s}],\dot H_{\varepsilon;s}\rangle.
\end{multline}
Taking into account Def.~\ref{dfn:A_map} of the map $\fA_{\tau,\varepsilon;t,s}$ we obtain
\begin{equation}\label{eq:fB_fA_identity}
 \fA_{\tau,\varepsilon;t,s}\fB_{\varepsilon;s}(V,W) = \fB_{\varepsilon;s}(\fA_{\tau,\varepsilon;t,s}V,\fA_{\tau,\varepsilon;t,s}W).
\end{equation}
\end{rem}

\begin{lem}\label{lem:B_map_bound}
Let \mbox{$\alpha\in[1,\infty)$}, $\beta\in(2,\infty)$, $\gamma_1,\gamma_2\in[0,\infty)$. There exists $\lambda_\star\in(0,1]$ such that for all $\lambda\in(0,\lambda_\star]$, $\tau,\varepsilon\in[0,1]$ and $V_\Cdot\in\sW^{\alpha,\beta;\gamma_1}_{\tau,\varepsilon}$, $W_\Cdot\in\sW^{\alpha,\beta;\gamma_2}_{\tau,\varepsilon}$ the map $s\mapsto s\,\fB_{\varepsilon;s}(V_s,W_s)$ belongs to $\sW^{\alpha,\beta-1;\gamma_1+\gamma_2}_{\tau,\varepsilon}$ and for all $V_\Cdot\in\sV^{\alpha,\beta;\gamma_1}$, $W_\Cdot\in\sV^{\alpha,\beta;\gamma_2}$ it holds 
\begin{itemize}
\item[(A)]
$\|s\mapsto s\,\fB_{\varepsilon;s}(V_s,W_s)\|_{\sW^{\alpha,\beta-1;\gamma_1+\gamma_2}_{\tau,\varepsilon}}
 \leq
 \lambda^\kappa\,
 \|V_\Cdot\|_{\sW^{\alpha,\beta;\gamma_1}_{\tau,\varepsilon}}\,
 \|W_\Cdot\|_{\sW^{\alpha,\beta;\gamma_2}_{\tau,\varepsilon}}$,
 
\item[(B)]
$\|s\mapsto s\,\fB_{\varepsilon;s}(V_s,W_s)\|_{\tilde\sW^{\alpha,\beta-1;\gamma_1+\gamma_2}_{\tau,\varepsilon}}
 \leq
 \lambda^\kappa\,
 \|V_\Cdot\|_{\tilde\sW^{\alpha,\beta;\gamma_1}_{\tau,\varepsilon}}\,
 \|W_\Cdot\|_{\sW^{\alpha,\beta;\gamma_2}_{\tau,\varepsilon}}$,
  
\item[(C)]
$\|s\mapsto s\,(\fB_{s}(V_s,W_s)-\fB_{\varepsilon;s}(V_s,W_s))\|_{\tilde\sW^{\alpha/2,\beta-1;\gamma_1+\gamma_2}_{\tau,\varepsilon}}
 \leq
 \lambda_\varepsilon^\kappa\,
 \|V_\Cdot\|_{\sV^{\alpha,\beta;\gamma_1}}\,
 \|W_\Cdot\|_{\sV^{\alpha,\beta;\gamma_2}}$.
\end{itemize}
\end{lem}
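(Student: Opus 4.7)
\medskip

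\noindent\textbf{Proof plan for Lemma~\ref{lem:B_map_bound}.}

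The plan is to reduce all three bounds to estimates in the simpler norms $\sV^{\alpha,\beta;\gamma}$ and $\tilde\sV^{\alpha,\beta;\gamma}$, and then establish the latter by combining the kernel splitting of the weights from Lemma~\ref{lem:weights}(a), the pointwise bounds on $\dot H_{\varepsilon;s}$ from Lemma~\ref{lem:H_dot_estimates}, and a combinatorial summation based on the assumption $\beta>2$.

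First, I would use the algebraic identity
\begin{equation}
 \fA_{\tau,\varepsilon;t\vee s,s}\fB_{\varepsilon;s}(V_s,W_s)=\fB_{\varepsilon;s}\bigl(\fA_{\tau,\varepsilon;t\vee s,s}V_s,\fA_{\tau,\varepsilon;t\vee s,s}W_s\bigr)
\end{equation}
established in Remark~\ref{rem:fB}. By the very definition of $\|\cdot\|_{\sW^{\alpha,\beta;\gamma}_{\tau,\varepsilon}}$ (and $\|\cdot\|_{\tilde\sW^{\alpha,\beta;\gamma}_{\tau,\varepsilon}}$), this reduces (A) and (B) to proving the ``core'' estimates
\begin{equation}
 \|s\mapsto s\,\fB_{\varepsilon;s}(V_s,W_s)\|_{\sV^{\alpha,\beta-1;\gamma_1+\gamma_2}}\leq \lambda^\kappa\,\|V_\Cdot\|_{\sV^{\alpha,\beta;\gamma_1}}\,\|W_\Cdot\|_{\sV^{\alpha,\beta;\gamma_2}},
\end{equation}
and the analogous bound with $\tilde\sV$ in place of $\sV$ on the $V$-factor and the output. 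For (C) the same reduction with $\fA_{\tau,\varepsilon;t\vee s,s}$ applies, but one must also treat the extra term coming from the difference $\fA_{t\vee s,s}-\fA_{\tau,\varepsilon;t\vee s,s}$ applied to $\fB$; this is handled by Lemma~\ref{lem:A_map_bound}(E) together with the $\sV$-version of the core estimate. The principal new content of (C) is the replacement of $\dot H_{\varepsilon;s}$ by $\dot H_s - \dot H_{\varepsilon;s}$, for which Lemma~\ref{lem:H_dot_estimates} yields the extra factor $\lambda_\varepsilon^\kappa$.

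For the core estimate, I would expand
\begin{equation}
 \langle \fB^{m,a,\sigma}_{\varepsilon;s}(V,W),\varphi^{\otimes m}\rangle=\tfrac{1}{2}\sum_{k=0}^{m}\pm(k+1)(m-k+1)\,\bigl\langle V^{k+1}\otimes W^{m-k+1},\varphi^{\otimes k}\otimes \dot H_{\varepsilon;s}\otimes \varphi^{\otimes(m-k)}\bigr\rangle,
\end{equation}
which after computing the $\sM^m$-norm becomes, for each term, a contraction of the $V$-kernel and the $W$-kernel through $\dot H$ in two variables. Using Lemma~\ref{lem:weights}(a) one splits
\begin{equation}
 w^m_s(x_1,\ldots,x_m)\leq w^{k+1}_s(x_1,\ldots,x_k,y)\,w^2_s(y,z)\,w^{m-k+1}_s(z,x_{k+1},\ldots,x_m),
\end{equation}
so that the $\sM^m$-norm factorises into the product
\begin{equation}
 \|w^{k+1}_s V^{k+1,\cdot}\|_{\sM^{k+1}}\,\|w^2_s \dot H^{\cdot}_{\varepsilon;s}\|_{\sM^2}\,\|w^{m-k+1}_s W^{m-k+1,\cdot}\|_{\sM^{m-k+1}}.
\end{equation}
Invoking the bound $\|w^n_s V^{n,a,\sigma}_s\|_{\sM^n}\leq \lambda_s^{\rho_{\gamma,\kappa}(n)}\,s^{n/2+|a|-2}\,\|V_\Cdot\|_{\sV^{n;\gamma}}$ from the definition of $\sV^{n;\gamma}$, together with $\|w^2_s \dot H^{a,\sigma}_{\varepsilon;s}\|_{\sM^2}\leq \lambda^{-\kappa}\,s^{-|a|}$ from Lemma~\ref{lem:H_dot_estimates}, and noting that the multi-indices and field indices on $V$, $W$, $\dot H$ sum correctly to those on the output and that $\rho_{\gamma_1,\kappa}(k+1)+\rho_{\gamma_2,\kappa}(m-k+1)=\rho_{\gamma_1+\gamma_2,\kappa}(m)+4\kappa$, one gets a pointwise in $s$ bound of the form
\begin{equation}
 s\,\lambda_s^{-\rho_{\gamma_1+\gamma_2,\kappa}(m)}\,s^{2-m/2-|a|}\,\|w^m_s \fB^{m,a,\sigma}_{\varepsilon;s}(V_s,W_s)\|_{\sM^m}\leq C\,\lambda^{3\kappa}\sum_{k=0}^m(k+1)(m-k+1)\,\|V^{k+1}\|_{\sV^{k+1;\gamma_1}}\,\|W^{m-k+1}\|_{\sV^{m-k+1;\gamma_2}},
\end{equation}
where we used $\lambda_s^{4\kappa}\leq \lambda^{4\kappa}$.

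The final step is the combinatorial summation. Substituting the $\sV^{\alpha,\beta}$-bounds $\|V^{n}\|_{\sV^{n;\gamma_1}}\leq \alpha^{-n} n^{-\beta}\|V_\Cdot\|_{\sV^{\alpha,\beta;\gamma_1}}$ and similarly for $W$, the sum over $k$ gives the factor
\begin{equation}
 \alpha^{-m-2}\sum_{k=0}^m (k+1)(m-k+1)\,\bigl((k+1)(m-k+1)\bigr)^{-\beta}=\alpha^{-m-2}\sum_{k=0}^m \bigl((k+1)(m-k+1)\bigr)^{1-\beta},
\end{equation}
which by splitting at $k=m/2$ and using that $\beta>2$ is bounded by $C\,\alpha^{-m-2}\,m^{1-\beta}$. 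Multiplying by $\alpha^m m^{\beta-1}$ yields a uniform constant, which proves the core $\sV$-bound after adjusting $\lambda_\star$ so that $\lambda^{3\kappa}C\leq \lambda^\kappa$. For (B) the same argument applies, but when decomposing the weight via Lemma~\ref{lem:weights}(a) the $(1+|x_1|)^{-1/2}$ factor of $\tilde w^m_s$ is placed on the factor carrying $x_1$; after using the antisymmetry of the output together with Remark~\ref{rem:sV_tilde2} to reduce to the case where $x_1$ lies on the $V$-side, this bound is obtained in the claimed form with $\|V\|_{\tilde\sW}\|W\|_{\sW}$. The continuity of $s\mapsto \fB_{\varepsilon;s}(V_s,W_s)$ and the membership in $\sW^{\alpha,\beta-1;\gamma_1+\gamma_2}_{\tau,\varepsilon}$ follow by approximating $V_\Cdot,W_\Cdot$ by elements of $\sV^{\mathrm{fin};\gamma}_\tau$ and applying the estimates already proved.

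The main obstacle will be the bookkeeping in the $\tilde\sV$-case: the weight $\tilde w^m_s$ is not symmetric in its arguments, and care is needed to ensure that the factor $\tilde w$ ends up on the $V$-kernel (not on $\dot H$ or $W$), which requires combining antisymmetry of the output $\fB^m\in\sN^m$ with the observation from Remark~\ref{rem:sV_tilde2}. Once this is set up, the remaining estimates are a direct consequence of Lemmas~\ref{lem:weights}(a), \ref{lem:H_dot_estimates} and the combinatorial lemma above.
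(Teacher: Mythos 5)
Your route for (A) and (B) is the paper's: by the definition of the $\sW$-norm and the identity $\fA_{\tau,\varepsilon;t,s}\fB_{\varepsilon;s}(V,W)=\fB_{\varepsilon;s}(\fA_{\tau,\varepsilon;t,s}V,\fA_{\tau,\varepsilon;t,s}W)$ from Remark~\ref{rem:fB} you reduce to the pointwise-in-$s$ bounds in the $\sV$ and $\tilde\sV$ norms (the paper's (A$_0$)--(C$_0$)), which you then establish by splitting the weight via Lemma~\ref{lem:weights}(a) (the paper also invokes (b), which is essentially what you describe when reducing to the case where $x_1$ sits on the $V$-side via antisymmetry and Remark~\ref{rem:sV_tilde2}), using $\|w^2_s\dot H^{a,\sigma}_{\varepsilon;s}\|_{\sM^2}\le\lambda^{-\kappa}s^{-|a|}$ from Lemma~\ref{lem:H_dot_estimates}, the identity $\rho_{\gamma_1,\kappa}(k+1)+\rho_{\gamma_2,\kappa}(m-k+1)=\rho_{\gamma_1+\gamma_2,\kappa}(m)+4\kappa$, and the combinatorial sum $\sum_k\bigl((k+1)(m-k+1)\bigr)^{1-\beta}\lesssim m^{1-\beta}$ (valid for $\beta>2$), with $\lambda_\star$ taken small enough to absorb the surplus constant into $\lambda^\kappa$. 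The continuity and membership in $\sW^{\alpha,\beta-1;\gamma_1+\gamma_2}_{\tau,\varepsilon}$ via density of $\sV^{\mathrm{fin};\gamma}_\tau$ is also the paper's argument (adding Lemma~\ref{lem:symmetries} to preserve $\sN^{\mathrm{fin}}_\tau$).

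Your description of (C) contains a phantom step. You posit an ``extra term coming from the difference $\fA_{t\vee s,s}-\fA_{\tau,\varepsilon;t\vee s,s}$ applied to $\fB$'' to be handled by Lemma~\ref{lem:A_map_bound}(E). No such term exists: the left-hand side of (C) is measured in $\|\cdot\|_{\tilde\sW^{\alpha/2,\beta-1;\cdot}_{\tau,\varepsilon}}$, whose definition involves only $\fA_{\tau,\varepsilon;t\vee s,s}$ -- the $\fA$-cutoffs never change, so there is no $\fA-\fA_{\tau,\varepsilon}$ difference. The only difference to estimate is $\fB_s-\fB_{\varepsilon;s}$, i.e.\ $\dot H_s-\dot H_{\varepsilon;s}$, which you correctly identify. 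The paper obtains (C) without the algebraic identity at all: Lemma~\ref{lem:A_map_bound}(C) directly converts $\|\cdot\|_{\tilde\sW^{\alpha/2,\beta-1;\cdot}_{\tau,\varepsilon}}$ of the integrand into its $\|\cdot\|_{\tilde\sV^{\alpha,\beta-1;\cdot}}$ norm (this is precisely why the $\alpha/2$ appears on the left of (C) while the right-hand side has plain $\sV$-norms), after which the $\tilde\sV$-core estimate (C$_0$) finishes. If you insist on the identity route, you must also apply Lemma~\ref{lem:A_map_bound}(A) to convert $\|s\mapsto\fA_{\tau,\varepsilon;t\vee s,s}V_s\|_{\sV^{\alpha/2,\beta;\gamma_1}}$ back to $\|V_\Cdot\|_{\sV^{\alpha,\beta;\gamma_1}}$; otherwise you obtain $\sW$-norms, not $\sV$-norms, on the right side, which is a strictly weaker bound than (C) claims.
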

\begin{proof}
Let $\nu\in\{0,1/2\}$, $s\in(0,1]$ and $V_s,W_s\in\sN(\sF)$. By Def.~\ref{dfn:B_map} of the map $\fB_{\varepsilon;s}$, Def.~\ref{dfn:sM} of the norm $\sM^m$, Lemma~\ref{lem:weights}~(a),~(b) and Remark~\ref{rem:sV_tilde2} we obtain
\begin{multline}\label{eq:B_map_bound_proof}
 \sum_{a\in\bA^m}\sum_{\sigma\in\bFF^m}\|w^m_{s;\nu}\,\fB^{m,a,\sigma}_{\varepsilon;s}(V_s,W_s)\|_{\sM^m} 
 \leq \sup_{a\in\bA^2}\sup_{\sigma\in\bFF^2}\|w^2_s \dot H^{a,\sigma}_{\varepsilon;s}\|_{\sM^2}
  \\
 \times
 \sum_{k=0}^m (k+1)(m-k+1)\,
 \sum_{a\in\bA^{k+1}}\sum_{\sigma\in\bFF^{k+1}}\|w^{k+1}_{s;\nu}V^{k+1,a,\sigma}_s\|_{\sM^{k+1}}\,
 \\
 \times
 \sum_{a\in\bA^{m-k+1}}\sum_{\sigma\in\bFF^{m-k+1}}\|w^{m-k+1}_s W^{m-k+1,a,\sigma}_s\|_{\sM^{m-k+1}}
\end{multline}
and
\begin{multline}\label{eq:B_map_bound_proof_diff}
 \sum_{a\in\bA^m}\sum_{\sigma\in\bFF^m}\|w^m_{s;\nu}\,(\fB^{m,a,\sigma}_{s}(V_s,W_s)-\fB^{m,a,\sigma}_{\varepsilon;s}(V_s,W_s))\|_{\sM^m} 
 \leq \sup_{a\in\bA^2}\sup_{\sigma\in\bFF^2}\|w^2_s (\dot H^{a,\sigma}_{s}-\dot H^{a,\sigma}_{\varepsilon;s})\|_{\sM^2}
  \\
 \times
 \sum_{k=0}^m (k+1)(m-k+1)\,
 \sum_{a\in\bA^{k+1}}\sum_{\sigma\in\bFF^{k+1}}\|w^{k+1}_{s;\nu}V^{k+1,a,\sigma}_s\|_{\sM^{k+1}}\,
 \\
 \times
 \sum_{a\in\bA^{m-k+1}}\sum_{\sigma\in\bFF^{m-k+1}}\|w^{m-k+1}_s W^{m-k+1,a,\sigma}_s\|_{\sM^{m-k+1}}.
\end{multline}
Note that
\begin{equation}
 \rho_{\gamma_1+\gamma_2,\kappa}(m) = \rho_{\gamma_1,\kappa}(k+1) + \rho_{\gamma_2,\kappa}(m-k+1) -4\kappa.
\end{equation}
Moreover, observe that for all $\beta\in(2,\infty)$ there exists $\lambda_\star\in(0,1]$ such that for all $\alpha\in[1,\infty)$, $m\in\bN_+$ and $\lambda\in(0,\lambda_\star]$ it holds
\begin{equation}
 \sum_{k=0}^m\frac{\alpha^{-(k+1)}}{(k+1)^{\beta-1}} \frac{\alpha^{-(m-k+1)}}{(m-k+1)^{\beta-1}} \leq \lambda^{-\kappa}\,\frac{\alpha^{-m}}{m^{\beta-1}}.
\end{equation}
Recall also that $\lambda_s^\kappa\leq\lambda^\kappa$ for $s\in(0,1]$. Consequently, by Def.~\ref{dfn:V_space} of the norm $\|\Cdot\|_{\sV^{\alpha,\beta;\gamma}}$ and Lemma~\ref{lem:H_dot_estimates} the bounds~\eqref{eq:B_map_bound_proof} and~\eqref{eq:B_map_bound_proof_diff} imply that
\begin{itemize}
\item[(A${}_0$)]
$\|s\mapsto s\,\fB_{\varepsilon;s}(V_s,W_s)\|_{\sV^{\alpha,\beta-1;\gamma_1+\gamma_2}}
 \leq
 \lambda^\kappa\,
 \|V_\Cdot\|_{\sV^{\alpha,\beta;\gamma_1}}\,
 \|W_\Cdot\|_{\sV^{\alpha,\beta;\gamma_2}}$,
 
\item[(B${}_0$)]
$\|s\mapsto s\,\fB_{\varepsilon;s}(V_s,W_s)\|_{\tilde\sV^{\alpha,\beta-1;\gamma_1+\gamma_2}}
 \leq
 \lambda^\kappa\,
 \|V_\Cdot\|_{\tilde\sV^{\alpha,\beta;\gamma_1}}\,
 \|W_\Cdot\|_{\sV^{\alpha,\beta;\gamma_2}}$,
  
\item[(C${}_0$)]
$\|s\mapsto s\,(\fB_{s}(V_s,W_s)-\fB_{\varepsilon;s}(V_s,W_s))\|_{\tilde\sV^{\alpha,\beta-1;\gamma_1+\gamma_2}}
 \leq
 \lambda_\varepsilon^\kappa\,
 \|V_\Cdot\|_{\sV^{\alpha,\beta;\gamma_1}}\,
 \|W_\Cdot\|_{\sV^{\alpha,\beta;\gamma_2}}$.
\end{itemize}
The bounds~(A) and~(B) follow from Def.~\ref{dfn:sW} of the norm $\sW^{\alpha,\beta;\gamma}_{\tau,\varepsilon}$, the bounds~(A${}_0$) and~(B${}_0$) and the identity~\eqref{eq:fB_fA_identity}. The bound~(C) is a consequence of the bound~(C${}_0$) and Lemma~\ref{lem:A_map_bound}~(C). 

It remains to prove that the map $s\mapsto s\,\fB_{\varepsilon;s}(V_s,W_s)$ belongs to $\sW^{\alpha,\beta-1;\gamma_1+\gamma_2}_{\tau,\varepsilon}$. By the bound~(A) and the fact that $\sV^{\mathrm{fin};\gamma}_{\tau}$ is a dense subset of $\sW^{\alpha,\beta;\gamma}_{\tau,\varepsilon}$ it suffices to establish the above claim for $V_\Cdot\in \sV^{\mathrm{fin};\gamma_1}_{\tau}$ and $W_\Cdot\in\sV^{\mathrm{fin};\gamma_2}_{\tau}$. It follows from Lemma~\ref{lem:symmetries} that $\fB_{\varepsilon;s}(V_s,W_s)\in \sN^{\mathrm{fin}}_{\tau}$ for all $s\in(0,1]$. Recall that the map $(0,1]\ni s\mapsto \dot H_{\varepsilon;s}\in\sN^2$ is continuous by Lemma~\ref{lem:H_dot_estimates}. Hence, by a bound similar to the bound~\eqref{eq:B_map_bound_proof} the map $(0,1]\ni s\mapsto \fB^m_{\varepsilon;s}(V_s,W_s)\in\sN^m$ is continuous. Using the above fact and the bound~(A) we infer that the map $s\mapsto s\,\fB_{\varepsilon;s}(V_s,W_s)$ belongs to $\sV^{\mathrm{fin};\gamma_1+\gamma_2}_{\tau}\subset \sW^{\alpha,\beta-1;\gamma_1+\gamma_2}_{\tau,\varepsilon}$. This finishes the proof.
\end{proof}

\begin{lem}\label{lem:I_bound}
Let $\alpha\in[2,\infty)$, $\beta\in[1,\infty)$, $\gamma\in[0,\infty)$. For $\tau,\varepsilon\in[0,1]$, $u\in(0,1]$ and $V_\Cdot\in\sV^{\alpha,\beta;\gamma}$ we define $\fI_{\tau,\varepsilon;u} V_\Cdot\equiv(\fI^m_{\tau,\varepsilon;u} V_\Cdot)_{m\in\bN_+}\in\sN$ by the equalities $\fI^m_{\tau,\varepsilon;u} V_\Cdot=0$ for $m\in\bN_+\setminus2\bN_+$,
\begin{equation}
 \fI^m_{\tau,\varepsilon;u} V_\Cdot
 :=
 \int_0^u\fE\fA^m_{\tau,\varepsilon;u,s}V_s/s~\rd s \in \sN^m
\end{equation}
for $m\in\{6,8,\ldots\}$ and
\begin{equation}
 \fI^m_{\tau,\varepsilon;u} V_\Cdot:=\int_0^u\fR\fE\fA^m_{\tau,\varepsilon;u,s}V_s/s~\rd s \in \sN^m
\end{equation}
for $m\in\{2,4\}$ provided the integrands appearing above are absolutely integrable. We set $\fI^{m,a,\sigma}_{\tau,\varepsilon;u}V=\Pi^{m,a,\sigma}\fI_{\tau,\varepsilon;u}V$. Moreover, we omit $\tau$ and $\varepsilon$ if $\tau=0$ and $\varepsilon=0$. There exists a constant $C\in(0,\infty)$ such that for all $\tau,\varepsilon\in[0,1]$ and $V_\Cdot\in\sW^{\alpha,\beta;\gamma}_{\tau,\varepsilon}$ the map $u\mapsto \fI_{\tau,\varepsilon;u} V_\Cdot$ belongs to $\sW^{\alpha,\beta+1;\gamma}_{\tau,\varepsilon}$ and for all $\tau,\varepsilon\in[0,1]$ and $V_\Cdot\in\sV^{\alpha,\beta;\gamma}$ it holds
\begin{itemize}
 \item[(A)] 
 $\|u\mapsto \fI_{\tau,\varepsilon;u} V_\Cdot\|_{\sW^{\alpha,\beta+1;\gamma}_{\tau,\varepsilon}}
 \leq
 C\,\|V_\Cdot\|_{\sW^{\alpha,\beta;\gamma}_{\tau,\varepsilon}}$,
 \item[(B)] 
 $\|u\mapsto \fI_{\tau,\varepsilon;u} V_\Cdot\|_{\tilde\sW^{\alpha,\beta+1;\gamma}_{\tau,\varepsilon}}
 \leq
 C\,\|V_\Cdot\|_{\tilde\sW^{\alpha,\beta;\gamma}_{\tau,\varepsilon}}$,
 \item[(C)] 
 $\|u\mapsto (\fI_u-\fI_{\tau,\varepsilon;u}) V_\Cdot\|_{\tilde\sW^{\alpha/4,\beta;\gamma}_{\tau,\varepsilon}}
 \leq
 C\,\lambda_{\tau\vee\varepsilon}^\kappa\,\|V_\Cdot\|_{\sV^{\alpha,\beta;\gamma}}$.
\end{itemize}
\end{lem}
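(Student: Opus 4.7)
The plan is to bound the $\sW^{\alpha,\beta+1;\gamma}_{\tau,\varepsilon}$-norm by appealing to Def.~\ref{dfn:sW}, that is to control $\sup_{t\in[0,1]}\|u\mapsto \fA_{\tau,\varepsilon;t\vee u, u}\fI_{\tau,\varepsilon;u}V_\Cdot\|_{\sV^{\alpha,\beta+1;\gamma}}$. The central observation is the identity $\fA_{\tau,\varepsilon;t',u}\fE\fA_{\tau,\varepsilon;u,s}=\fE_u\fA_{\tau,\varepsilon;t',s}$ from Remark~\ref{rem:fA_fE2}~(B), which will let me push $\fA_{\tau,\varepsilon;t\vee u, u}$ through the outer expected value at minimal cost: $\fE$ becomes the norm-decreasing conditional expectation $\fE_u$ (Remark~\ref{rem:fM_fE}) and the final scale of $\fA$ is promoted from $u$ to $t\vee u$, producing exactly the object controlled by the $\sW^{\alpha,\beta;\gamma}_{\tau,\varepsilon}$-norm of $V_\Cdot$ applied with auxiliary parameter $t'=t\vee u$. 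The weight monotonicity $w^m_u\leq w^m_s$ for $s\leq u$ will then transfer the weight from scale $u$ to scale $s$.

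For the irrelevant sector $m\geq 6$, $\fI^m_{\tau,\varepsilon;u}V_\Cdot$ carries no $\fR$, so the identity directly gives
\[
 \fA_{\tau,\varepsilon;t\vee u, u}\fI^m_{\tau,\varepsilon;u}V_\Cdot = \int_0^u \fE_u\fA^m_{\tau,\varepsilon;t\vee u, s}V_s/s\,\rd s.
\]
Combining the $\sW$-bound on $V_\Cdot$ applied with $t'=t\vee u$, the inequality $\lambda_s\leq\lambda_u$, the weight monotonicity and the Minkowski inequality, the estimate reduces to $\int_0^u (s/u)^{m/2+|a|-2}\,\rd s/s=1/(m/2+|a|-2)\leq 2/(m-4)$, which is of order $1/m$ uniformly for $m\geq 6$ and supplies exactly the improvement from $m^{-\beta}$ to $m^{-(\beta+1)}$.

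For the marginal/relevant sector $m\in\{2,4\}$, the component $(\fA_{\tau,\varepsilon;t\vee u, u}\fI_{\tau,\varepsilon;u}V_\Cdot)^m$ splits into three pieces: the $k=0$ contribution $\fI^m_{\tau,\varepsilon;u}V_\Cdot$ (with $\fR$), in the $m=2$ case also a $k=2$ contribution $6\langle\fI^4_{\tau,\varepsilon;u}V_\Cdot,\Cdot\otimes\fJ\varPsi^{\otimes 2}_{\tau,\varepsilon;t\vee u, u}\rangle$ (also with $\fR$), and a tail $\sum_{k:\,m+k\geq 6}$ (without $\fR$) which, after adding and subtracting the missing low-$k$ terms, reassembles into $\fE_u\fA^m_{\tau,\varepsilon;t\vee u, s}V_s$ minus explicit low-order corrections via the same identity. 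The $\fR$-containing contributions are estimated through Lemma~\ref{lem:R_bounds}, which supplies the factor $(1-s/u)^{\zeta_\star}$ with $\zeta_\star=-3/4>-1$; combined with the $\sW$-norm estimate $\|w^m_s\fA^{m,a,\sigma}_{\tau,\varepsilon;t\vee u,s}V_s\|_{\sM^m}\lesssim \alpha^{-m}m^{-\beta}\lambda_s^{\rho_{\gamma,\kappa}(m)}s^{m/2+|a|-2}\|V_\Cdot\|_{\sW^{\alpha,\beta;\gamma}_{\tau,\varepsilon}}$ this yields an absolutely integrable integrand at $s\to 0$; the bound $\|\fJ\varPsi^{a,\sigma}_{\tau,\varepsilon;t\vee u, u}\|_\sC\lesssim u^{-1/2-|a|}$ from Lemma~\ref{lem:Psi_bounds}~(A) controls the $\fJ\varPsi^{\otimes 2}$ factor in the $k=2$ term. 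Since $m$ ranges over the finite set $\{2,4\}$ there is no combinatorial loss.

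Parts~(B) and~(C) will follow by straightforward modifications of the scheme. For (B) I would repeat the argument with the tilted weight $\tilde w^m_s$ throughout, using Lemma~\ref{lem:R_bounds} in its $\nu=1/2$ form, Lemma~\ref{lem:weights} at $\nu=1/2$, and Remark~\ref{rem:w_tilde_young} to handle the infinite-volume convolution estimates. For (C) I would decompose $\fI_u V_\Cdot-\fI_{\tau,\varepsilon;u}V_\Cdot$ into integrals involving $(\fA_{u,s}-\fA_{\tau,\varepsilon;u,s})V_s$ and the analogous differences of the scale-decomposed free field, and invoke Lemma~\ref{lem:A_map_bound}~(E) to extract the small factor $\lambda_{\tau\vee\varepsilon}^\kappa$ (accompanied by a corresponding shrinkage of the $\alpha$-parameter to $\alpha/4$ after two applications of the $\fA$-bound); the $\fR$ and identity-based analyses of the previous paragraphs remain intact. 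The main technical obstacle is the bookkeeping in the marginal/relevant step: one must carefully track how $\fA_{\tau,\varepsilon;t\vee u, u}$ interweaves the $\fR$-regulated low-order components of $\fI_u V_\Cdot$ with the un-regulated higher-order ones, and verify that after the reorganization every surviving integrand is absolutely integrable at $s=0$.
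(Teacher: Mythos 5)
Your core mechanism is the same as the paper's: for the components carrying no $\fR$ you use the identity $\fA_{\tau,\varepsilon;t\vee u,u}\fE\fA_{\tau,\varepsilon;u,s}=\fE_u\fA_{\tau,\varepsilon;t\vee u,s}$ of Remark~\ref{rem:fA_fE2}, the contractivity of $\fE_u$ (Remark~\ref{rem:fM_fE}), the monotonicity $w^m_u\leq w^m_s$ and the integral $\int_0^u(s/u)^{m/2+|a|-2}\,\rd s/s\leq 2/(m-4)$ to produce the $m^{-1}$ gain, while for $m\in\{2,4\}$ you use Lemma~\ref{lem:R_bounds} with the factor $(1-s/u)^{\zeta_\star}$, Lemma~\ref{lem:bounds_relevant_irrelevant}~(B), and Lemma~\ref{lem:Psi_bounds} for the $\fJ\varPsi^{\otimes2}$ contraction; this is exactly how the paper proves its intermediate bounds for $\Pi_{>4}\fI$ and for $\fI^2,\fI^4$.

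The genuine gap is in your treatment of the tail $\sum_{k:\,m+k\geq6}$ inside the components $m\in\{2,4\}$ of $\fA_{\tau,\varepsilon;t\vee u,u}\fI_{\tau,\varepsilon;u}V_\Cdot$. You propose to add and subtract the missing low-$k$ terms so that the tail ``reassembles into $\fE_u\fA^m_{\tau,\varepsilon;t\vee u,s}V_s$ minus explicit low-order corrections''. But the quantities you would add and subtract are not absolutely integrable at $s=0$: the only available bound on the un-renormalized kernels is $\|w^m_s\,\fE_u\fA^{m,a,\sigma}_{\tau,\varepsilon;t\vee u,s}V_s\|_{\sM^m}\lesssim\lambda_s^{\rho_{\gamma,\kappa}(m)}\,s^{m/2+|a|-2}$, so after division by $s$ the integrand behaves like $\lambda_s^{\rho}s^{-2}$ for $m=2$, $|a|=0$, and like $\lambda_s^{\rho}s^{-1}$ in the marginal cases, which diverges for general $\gamma\in[0,\infty)$. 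Hence the pieces of your decomposition cannot be estimated separately, the Minkowski-type argument cannot be run on it, and the verification you defer to the end (``every surviving integrand is absolutely integrable after the reorganization'') would fail. The fix is not to re-expand the tail in $s$ at all: the tail is precisely the $m$-th component of $\fA_{\tau,\varepsilon;t\vee u,u}\Pi_{>4}\fI_{\tau,\varepsilon;u}V_\Cdot$, and since the case $t=u$ of your irrelevant-sector estimate already bounds $u\mapsto\Pi_{>4}\fI_{\tau,\varepsilon;u}V_\Cdot$ in $\sV^{\alpha,\beta+1;\gamma}$, Lemma~\ref{lem:A_map_bound}~(A) (resp.~(C),(E) for parts~(B),(C)) together with Remark~\ref{rem:sV_norm_estimate} transfers this through $\fA_{\tau,\varepsilon;t\vee u,u}$ to the finitely many low components at the harmless cost of shrinking $\alpha$ — this is the paper's route; a direct term-by-term bound on the tail (each summand involves only $\fI^{m+k}$ with $m+k\geq 6$, hence an absolutely convergent integral, contracted with $\fJ\varPsi$ factors bounded by Lemma~\ref{lem:Psi_bounds}) would also work. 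Finally, the lemma also asserts the membership $u\mapsto\fI_{\tau,\varepsilon;u}V_\Cdot\in\sW^{\alpha,\beta+1;\gamma}_{\tau,\varepsilon}$, i.e.\ approximability by elements of $\sV^{\mathrm{fin};\gamma}_{\tau}$, which your sketch does not address; the paper deduces it from bound~(A), density, and Lemma~\ref{lem:symmetries}.
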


\begin{proof}
We shall prove the following bounds
\begin{itemize}
 \item[(A${}_1$)] 
 $\|u\mapsto \Pi_k\fI_{\tau,\varepsilon;u} V_\Cdot\|_{\sV^{\alpha,\beta;\gamma}}
 \leq
 C\,\|V_\Cdot\|_{\sW^{\alpha,\beta;\gamma}_{\tau,\varepsilon}}$,
 \item[(B${}_1$)] 
 $\|u\mapsto \Pi_k\fI_{\tau,\varepsilon;u} V_\Cdot\|_{\tilde\sV^{\alpha,\beta;\gamma}}
 \leq
 C\,\|V_\Cdot\|_{\tilde\sW^{\alpha,\beta;\gamma}_{\tau,\varepsilon}}$,
 \item[(C${}_1$)] 
 $\|u\mapsto \Pi_k(\fI_u-\fI_{\tau,\varepsilon;u}) V_\Cdot\|_{\tilde\sV^{\alpha/2,\beta-1;\gamma}}
 \leq
 C\,\lambda_{\tau\vee\varepsilon}^\kappa\,\|V_\Cdot\|_{\sV^{\alpha,\beta;\gamma}}$.
\end{itemize}
with $k\in\{2,4\}$ and
\begin{itemize}
 \item[(A${}_2$)] 
 $\|u\mapsto \Pi_{>4}\fI_{\tau,\varepsilon;u} V_\Cdot\|_{\sW^{\alpha,\beta+1;\gamma}_{\tau,\varepsilon}}
 \leq
 C\,\|V_\Cdot\|_{\sW^{\alpha,\beta;\gamma}_{\tau,\varepsilon}}$,
 \item[(B${}_2$)] 
 $\|u\mapsto \Pi_{>4}\fI_{\tau,\varepsilon;u} V_\Cdot\|_{\tilde\sW^{\alpha,\beta+1;\gamma}_{\tau,\varepsilon}}
 \leq
 C\,\|V_\Cdot\|_{\tilde\sW^{\alpha,\beta;\gamma}_{\tau,\varepsilon}}$,
 \item[(C${}_2$)] 
 $\|u\mapsto \Pi_{>4}(\fI_u-\fI_{\tau,\varepsilon;u}) V_\Cdot\|_{\tilde\sW^{\alpha/4,\beta;\gamma}_{\tau,\varepsilon}}
 \leq
 C\,\lambda_{\tau\vee\varepsilon}^\kappa\,\|V_\Cdot\|_{\sV^{\alpha,\beta;\gamma}}$.
\end{itemize}
Recall that the operators $\Pi_k$ and $\Pi_{>k}$ were introduced in Def.~\ref{dfn:sN}. Note that by Remarks~\ref{rem:sW_sV} and~\ref{rem:sV_norm_estimate} the bounds~(A${}_1$),~(B${}_1$),~(C${}_1$) imply analogous bounds with the norms as in the bounds~(A),~(B),~(C) and some universal constant $C\in(0,\infty)$. Hence, to prove the lemma it is enough to establish the bounds~(A${}_1$),~(B${}_1$),~(C${}_1$) and~(A${}_2$),~(B${}_2$),~(C${}_2$). In order to verify the bound~(A${}_1$) note that by Lemma~\ref{lem:R_bounds} and Remark~\ref{rem:fM_fE} for $m\in\{2,4\}$ it holds
\begin{multline}
 \|w_{u}^m\fI^{m,a,\sigma}_{\tau,\varepsilon;u} V_\Cdot\|_{\sM^m} 
 \leq 
 \int_0^u s^{-1}\,\|w_u^m\fR\fE\fA^{m,a,\sigma}_{\tau,\varepsilon;u,s} V_s\|_{\sM^m}\,\rd s
 \\
 \leq
 \sup_{b\in\bA^m}
 \int_0^u s^{|a|-|b|-1}\,(1-s/u)^{\zeta_\star}\,\|w_s^m\fA^{m,b,\sigma}_{\tau,\varepsilon;u,s} V_s\|_{\sM^m}\,\rd s.
\end{multline}
By Def.~\ref{dfn:R_map} of the map $\fR$ it holds $\|w_{u}^m\fI^{m,a,\sigma}_{\tau,\varepsilon;u} V_\Cdot\|_{\sM^m} =0$ if $|a|\leq 1$ and $m=2$, or $|a|=0$ and $m=4$. Using Def.~\ref{dfn:sW} and~\ref{dfn:V_space} as well as Lemma~\ref{lem:bounds_relevant_irrelevant}~(B) we obtain
\begin{multline}
 \|w_{u}^m\fI^{m,a,\sigma}_{\tau,\varepsilon;u} V_\Cdot\|_{\sM^m} 
 \leq 
 \alpha^{-m} m^{-\beta}\,
 \|V_\Cdot\|_{\sW^{\alpha,\beta;\gamma}_{\tau,\varepsilon}}
 \int_0^u \lambda_s^{\rho_{\gamma,\kappa}(m)}\, s^{|a|+m/2-3}\,(1-s/u)^{\zeta_\star}
 \,\rd s
 \\
 \leq 
 C\,\alpha^{-m} m^{-\beta}\,
 \lambda_u^{\rho_{\gamma,\kappa}(m)}\, u^{|a|+m/2-2}\,
 \|V_\Cdot\|_{\sW^{\alpha,\beta;\gamma}_{\tau,\varepsilon}}
\end{multline}
with some universal constant $C\in(0,\infty)$ for $m=2$ and $|a|>1$, or $m=4$ and $|a|>0$. This proves the bound~(A${}_1$). The proof of the bound~(B${}_1$) is the same only the wight $w_s^m$ is replaced with the weight $\tilde w_s^m$. In order to verify the bound~(C${}_1$) note that by Lemma~\ref{lem:R_bounds} and Remark~\ref{rem:fM_fE} for $m\in\{2,4\}$ it holds
\begin{multline}
 \|w_{u}^m(\fI^{m,a,\sigma}_{\tau,\varepsilon;u}-\fI^{m,a,\sigma}_{u}) V_\Cdot\|_{\sM^m} 
 \leq 
 \int_0^u s^{-1}\,\|w_u^m\fR\fE(\fA^{m,a,\sigma}_{\tau,\varepsilon;u,s}-\fA^{m,a,\sigma}_{u,s}) V_s\|_{\sM^m}\,\rd s
 \\
 \leq
 \sup_{b\in\bA^m}
 \int_0^u s^{|a|-|b|-1}\,(1-s/u)^{\zeta_\star}\,\|w_s^m(\fA^{m,b,\sigma}_{\tau,\varepsilon;u,s}-\fA^{m,b,\sigma}_{u,s}) V_s\|_{\sM^m}\,\rd s.
\end{multline}
By Def.~\ref{dfn:R_map} of the map $\fR$ it holds $ \|w_{u}^m(\fI^{m,a,\sigma}_{\tau,\varepsilon;u}-\fI^{m,a,\sigma}_{u}) V_\Cdot\|_{\sM^m}  =0$ if $|a|\leq 1$ and $m=2$, or $|a|=0$ and $m=4$. Using Lemma~\ref{lem:A_map_bound}~(E) and Def.~\ref{dfn:V_space} as well as Lemma~\ref{lem:bounds_relevant_irrelevant}~(B) we obtain
\begin{multline}
 \|w_{u}^m(\fI^{m,a,\sigma}_{\tau,\varepsilon;u}-\fI^{m,a,\sigma}_{u}) V_\Cdot\|_{\sM^m}
 \\
 \leq 
 (\alpha/2)^{-m} m^{1-\beta}\,
 \|s\mapsto (\fA_{\tau,\varepsilon;u,s}-\fA_{u,s})V_s\|_{\sV^{\alpha/2,\beta-1;\gamma}}
 \int_0^u \lambda_s^{\rho_{\gamma,\kappa}(m)}\, s^{|a|+m/2-3}\,(1-s/u)^{\zeta_\star}
 \,\rd s
 \\
 \leq 
 C\,(\alpha/2)^{-m} m^{1-\beta}\,
 \lambda_u^{\rho_{\gamma,\kappa}(m)}\, u^{|a|+m/2-2}\,
 \|V_\Cdot\|_{\sV^{\alpha,\beta;\gamma}}
\end{multline}
with some universal constant $C\in(0,\infty)$ for $m=2$ and $|a|>1$, or $m=4$ and $|a|>0$.  This proves the bound~(C${}_1$). Let us turn to the proof of the bounds~(A${}_2$),~(B${}_2$),~(C${}_2$). By Def.~\ref{dfn:sW} of the norm $\|\Cdot\|_{\sW^{\alpha,\beta;\gamma}_{\tau,\varepsilon}}$ and the triangle inequality it suffices to show that
\begin{itemize}
 \item[(A${}_3$)] 
 $\|u\mapsto \Pi_{k}\fA_{\tau,\varepsilon;t,u} \Pi_{>4}\fI_{\tau,\varepsilon;u} V_\Cdot\|_{\sV^{\alpha,\beta+1;\gamma}}
 \leq
 C\,\|V_\Cdot\|_{\sW^{\alpha,\beta;\gamma}_{\tau,\varepsilon}}$,
 \item[(B${}_3$)] 
 $\|u\mapsto \Pi_{k}\fA_{\tau,\varepsilon;t,u}\Pi_{>4}\fI_{\tau,\varepsilon;u} V_\Cdot\|_{\tilde\sV^{\alpha,\beta+1;\gamma}}
 \leq
 C\,\|V_\Cdot\|_{\tilde\sW^{\alpha,\beta;\gamma}_{\tau,\varepsilon}}$,
 \item[(C${}_3$)] 
 $\|u\mapsto \Pi_{k}\fA_{\tau,\varepsilon;t,u}\Pi_{>4}(\fI_u-\fI_{\tau,\varepsilon;u}) V_\Cdot\|_{\tilde\sV^{\alpha/4,\beta;\gamma}}
 \leq
 C\,\lambda_{\tau\vee\varepsilon}^\kappa\,\|V_\Cdot\|_{\sV^{\alpha,\beta;\gamma}}$
\end{itemize}
for $k\in\{1,2,3,4\}$ and
\begin{itemize}
 \item[(A${}_4$)] 
 $\|u\mapsto \Pi_{>4}\fA_{\tau,\varepsilon;t,u} \Pi_{>4}\fI_{\tau,\varepsilon;u} V_\Cdot\|_{\sV^{\alpha,\beta+1;\gamma}}
 \leq
 C\,\|V_\Cdot\|_{\sW^{\alpha,\beta;\gamma}_{\tau,\varepsilon}}$,
 \item[(B${}_4$)] 
 $\|u\mapsto \Pi_{>4}\fA_{\tau,\varepsilon;t,u}\Pi_{>4}\fI_{\tau,\varepsilon;u} V_\Cdot\|_{\tilde\sV^{\alpha,\beta+1;\gamma}}
 \leq
 C\,\|V_\Cdot\|_{\tilde\sW^{\alpha,\beta;\gamma}_{\tau,\varepsilon}}$,
 \item[(C${}_4$)] 
 $\|u\mapsto \Pi_{>4}\fA_{\tau,\varepsilon;t,u}\Pi_{>4}(\fI_u-\fI_{\tau,\varepsilon;u}) V_\Cdot\|_{\tilde\sV^{\alpha/4,\beta;\gamma}}
 \leq
 C\,\lambda_{\tau\vee\varepsilon}^\kappa\,\|V_\Cdot\|_{\sV^{\alpha,\beta;\gamma}}$
\end{itemize}
with some constant $C\in(0,\infty)$ independent of $\tau,\varepsilon,t\in[0,1]$. Actually, the bounds~(A${}_4$), (B${}_4$), (C${}_4$) with $t=u$ together with Lemma~\ref{lem:A_map_bound}~(A),~(C),~(E) and Remark~\ref{rem:sV_norm_estimate} imply the bounds~(A${}_3$), (B${}_3$),~(C${}_3$). Hence, it remains to prove the bounds~(A${}_4$),~(B${}_4$),~(C${}_4$). Using Def.~\ref{dfn:A_map} of the map $\fA_{\tau,\varepsilon;t,u}$ and Remark~\ref{rem:fA_fE2}~(B) we obtain
\begin{multline}
 \Pi_{>4}\fA_{\tau,\varepsilon;t,u} \Pi_{>4}\fI_{\tau,\varepsilon;u} V_\Cdot
 =
 \Pi_{>4}\fA_{\tau,\varepsilon;t,u} \fI_{\tau,\varepsilon;u} V_\Cdot
 \\
 =
 \Pi_{>4}\fA_{\tau,\varepsilon;t,u} \fE\int_0^u \fA_{\tau,\varepsilon;u,s} V_s/s~\rd s
 =
 \Pi_{>4}\fE_u\int_0^u\fA_{\tau,\varepsilon;t,s} V_s/s~\rd s.
\end{multline}
As a result, using Def.~\ref{dfn:sW} and~\ref{dfn:V_space} as well as Remark~\ref{rem:fM_fE} and Lemma~\ref{lem:bounds_relevant_irrelevant}~(A) we obtain 
\begin{multline}
 \|w_{u}^m\fA^{m,a,\sigma}_{\tau,\varepsilon;t,u} \fI_{\tau,\varepsilon;u} V_\Cdot\|_{\sM^m} 
 \leq 
 \int_0^u s^{-1}\,\|w_u^m\fA^{m,a,\sigma}_{\tau,\varepsilon;t,s} V_s\|_{\sM^m}\,\rd s
 \leq
 \int_0^u s^{-1}\,\|w_s^m\fA^{m,a,\sigma}_{\tau,\varepsilon;t,s} V_s\|_{\sM^m}\,\rd s
 \\
 \leq 
 \alpha^{-m} m^{-\beta}\,
 \|V_\Cdot\|_{\sW^{\alpha,\beta;\gamma}_{\tau,\varepsilon}}
 \int_0^u \lambda_s^{\rho_{\gamma,\kappa}(m)}\, s^{|a|+m/2-3}
 \,\rd s
 \\
 \leq 
 10\,\alpha^{-m} m^{-\beta-1}\,
 \lambda_u^{\rho_{\gamma,\kappa}(m)}\, u^{|a|+m/2-2}\,
 \|V_\Cdot\|_{\sW^{\alpha,\beta;\gamma}_{\tau,\varepsilon}}
\end{multline}
for $m\in\{5,6,\ldots\}$. This implies the bound~(A${}_4$). The proof of the bound~(B${}_4$) is the same only the wight $w_s^m$ is replaced with the weight $\tilde w_s^m$. In order to verify the bound~(C${}_4$) we first note that by Lemma~\ref{lem:A_map_bound}~(C) it holds
\begin{multline}
 \|u\mapsto \Pi_{>4}\fA_{\tau,\varepsilon;t,u}\Pi_{>4}(\fI_u-\fI_{\tau,\varepsilon;u}) V_\Cdot\|_{\tilde\sV^{\alpha/4,\beta;\gamma}}
 \leq
 \|u\mapsto \fA_{\tau,\varepsilon;t,u}\Pi_{>4}(\fI_u-\fI_{\tau,\varepsilon;u}) V_\Cdot\|_{\tilde\sV^{\alpha/4,\beta;\gamma}}
 \\
 \leq
 \|u\mapsto \Pi_{>4}(\fI_u-\fI_{\tau,\varepsilon;u}) V_\Cdot\|_{\tilde\sV^{\alpha/2,\beta;\gamma}}.
\end{multline}
Using Def.~\ref{dfn:sW} and~\ref{dfn:V_space} as well as Lemmas~\ref{lem:A_map_bound}~(E) and~\ref{lem:bounds_relevant_irrelevant}~(A) we obtain
\begin{multline}
 \|w_{u}^m(\fI^m_{u}-\fI^m_{\tau,\varepsilon;u}) V_\Cdot\|_{\sM^m} 
 \leq 
 \int_0^u s^{-1}\,\|w_s^m(\fA^{m,a,\sigma}_{u,s}-\fA^{m,a,\sigma}_{\tau,\varepsilon;u,s}) V_s\|_{\sM^m}\,\rd s
 \\
 \leq 
 (\alpha/2)^{-m} m^{1-\beta}\,
 \|s\mapsto (\fA_{u,s}-\fA_{\tau,\varepsilon;u,s}) V_s\|_{\sV^{\alpha/2,\beta-1;\gamma}_{\tau,\varepsilon}}
 \int_0^u \lambda_s^{\rho_{\gamma,\kappa}(m)}\, s^{|a|+m/2-3}
 \,\rd s
 \\
 \leq 
 10\,(\alpha/2)^{-m} m^{-\beta}\,\lambda_{\tau\vee\varepsilon}^\kappa\,
 \lambda_u^{\rho_{\gamma,\kappa}(m)}\, u^{|a|+m/2-2}\,
 \|V_\Cdot\|_{\sV^{\alpha,\beta;\gamma}}
\end{multline}
for $m\in\{5,6,\ldots\}$, which implies that
\begin{equation}
 \|u\mapsto \Pi_{>4}(\fI_u-\fI_{\tau,\varepsilon;u}) V_\Cdot\|_{\tilde\sV^{\alpha/2,\beta;\gamma}}\leq C\, \|V_\Cdot\|_{\sV^{\alpha,\beta;\gamma}}.
\end{equation}
This finishes the proof of the bound~(C${}_4$).

It remains to prove that the map $u\mapsto \fI_{\tau,\varepsilon;u} V_\Cdot$ belongs to $\sW^{\alpha,\beta+1;\gamma}_{\tau,\varepsilon}$. By the bound~(A) and the fact that $\sV^{\mathrm{fin};\gamma}_{\tau}$ is a dense subset of $\sW^{\alpha,\beta;\gamma}_{\tau,\varepsilon}$ it suffices to establish the above claim for $V_\Cdot\in \sV^{\mathrm{fin};\gamma}_{\tau}$. It follows from Lemma~\ref{lem:symmetries} that $\fI_{\tau,\varepsilon;u} V_\Cdot\in \sN^{\mathrm{fin}}_{\tau}$ for all $u\in(0,1]$. By the bound~(A) the map $u\mapsto \fI_{\tau,\varepsilon;u} V_\Cdot$ belongs to $\sV^{\mathrm{fin};\gamma}_{\tau}\subset \sW^{\alpha,\beta+1;\gamma}_{\tau,\varepsilon}$. This finishes the proof.
\end{proof}

\begin{lem}\label{lem:symmetries}
Let $\tau,\varepsilon\in[0,1]$ and $t,s\in(0,1]$. If $V\in\sN^{\mathrm{fin}}_\tau$, then $\fB_{\varepsilon;t}(V)\in\sN^{\mathrm{fin}}_\tau$ and $\fE \fA_{\tau,\varepsilon;t,s} V\in\sN^{\mathrm{fin}}_\tau$. 
\end{lem}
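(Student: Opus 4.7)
The plan is to verify separately that each of the defining properties of $\sN^m_\tau$ --- invariance under the symmetries of the torus (respectively plane, if $\tau=0$), invariance under the internal symmetries, and compatibility with the flow of charge in the sense of Def.~\ref{dfn:flow_of_charge} --- is preserved by the two operations $\fB_{\varepsilon;t}$ and $\fE\fA_{\tau,\varepsilon;t,s}$. The antisymmetry of the output is automatic from $V^{m,a,\sigma}\in\sS'(\bR^{2m})^{\bA^m\times\bFF^m}$ being antisymmetric, since Def.~\ref{dfn:B_map} uses $\rD_\varphi$ and Def.~\ref{dfn:A_map} uses the substitution $\varphi\mapsto\varphi+\fJ\varPsi_{\tau,\varepsilon;t\vee s,s}$, both of which preserve antisymmetry. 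The finite-support condition $V^m=0$ for all but finitely many $m$ propagates trivially, and for $\fE\fA_{\tau,\varepsilon;t,s}V$ this uses that the sum in Eq.~\eqref{eq:fA_map_def_alt} terminates when $m+k$ exceeds the support of $V$.

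For the map $\fB_{\varepsilon;t}$, the key input is Remark~\ref{rem:H_symmetries}, which asserts that $\dot H_{\varepsilon;t}$ is invariant under the symmetries of the torus (and of the plane when $\tau=0$), the internal symmetries, and the charge conjugation symmetry. Unfolding Def.~\ref{dfn:B_map} we see that $\fB^m_{\varepsilon;t}(V)$ is a sum of antisymmetrized contractions of $V^{k+1}\otimes V^{m-k+1}$ with $\dot H_{\varepsilon;t}$ in one argument of each factor. Since each of the three tensors in the contraction carries the same spatial and internal symmetries, the contraction inherits them as well. For $\fE\fA_{\tau,\varepsilon;t,s}V$, I apply Lemma~\ref{lem:expected_F_incremenet} (Wick's theorem for the Grassmann Gaussian $\varPsi_{\tau,\varepsilon;t\vee s,s}$) to rewrite each term in Eq.~\eqref{eq:fA_map_def_alt} as a contraction of $V^{m+k}$ with an even number of copies of the covariance $G_{\tau,\varepsilon;s}-G_{\tau,\varepsilon;t\vee s}$. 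This covariance is invariant under the spatial and internal symmetries (directly from Def.~\ref{dfn:covariance}), so the same contraction argument as above gives the required invariances.

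The main obstacle is the flow of charge, which requires checking the three conditions of Def.~\ref{dfn:flow_of_charge} and the vanishing of odd-degree components. For $\fB_{\varepsilon;t}$, the vanishing of odd components follows from the fact that $k+1$ and $m-k+1$ must both be even (since $V\in\sN^{\mathrm{fin}}_\tau$), which forces $m$ to be even. Condition (a), charge-conjugation invariance of the first pair of arguments, is inherited from the matching condition on $V$ and the charge-conjugation invariance of $\dot H_{\varepsilon;t}$. Conditions (b) and (c), the pair-permutation and pair-antisymmetry identities, follow by inspecting the sum in Def.~\ref{dfn:B_map}: the contraction pairs one argument of $V^{k+1}$ with one argument of $V^{m-k+1}$ via $\dot H_{\varepsilon;t}$, and the remaining arguments split naturally into $V$-pairs; the antisymmetrization $\fS$ then realigns these pairs so that conditions (b), (c) for $V$ transfer to $\fB_{\varepsilon;t}(V)$.

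For $\fE\fA_{\tau,\varepsilon;t,s}V$, the odd components vanish because the $k$-fold Grassmann expectation in Eq.~\eqref{eq:fA_map_def_alt} vanishes unless $k$ is even (Wick), and $m+k$ is even by the flow-of-charge condition on $V$, so $m$ is even. The flow-of-charge structure is preserved because the pair covariance $G_{\tau,\varepsilon;u}^{\sigma_1,\sigma_2}$ is nonzero only when $\{\sigma_1,\sigma_2\}$ contains one $+$ and one $-$ spinor by Def.~\ref{dfn:covariance}, matching precisely the structure of the pairs in condition (b); consequently the Wick contractions pair a $+$ argument of $V^{m+k}$ with a $-$ argument and output a kernel whose remaining $m$ arguments still split into $+/-$ pairs inheriting (a), (b), (c) from $V^{m+k}$. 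The delicate check is condition (b), the symmetry under permutations of the pairs: it reduces, via the antisymmetry of $V^{m+k}$ and the symmetry of the Wick pairing over choices of contracted indices, to the pair-permutation symmetry of $V$, which holds by assumption. This verification, carried out component by component using Remark~\ref{rem:U_dfn}-style bookkeeping of spinor indices, completes the proof.
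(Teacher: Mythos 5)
The easy parts of your plan (antisymmetry, finiteness of the support in $m$, parity of $m$, invariance under the torus/plane and internal symmetries via Remark~\ref{rem:H_symmetries} for $\fB_{\varepsilon;t}$ and via the Wick/covariance representation for $\fE\fA_{\tau,\varepsilon;t,s}$) coincide with the paper's argument and are fine. The genuine gap is in the flow-of-charge part, which the paper itself singles out as the non-trivial point. By Def.~\ref{dfn:flow_of_charge}, compatibility with the flow of charge is the \emph{existence} of a not-necessarily-antisymmetric representative $W^m\in\sNN^m$ with $V^m=\fS W^m$ satisfying (a)--(c); it is strictly stronger than charge-conjugation invariance of $V$ for $m\geq 4$ (Remark~\ref{rem:symmetries_flow_charge}). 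An antisymmetric $V$ satisfies (b) and (c) trivially (pair permutations are even, transpositions give the sign), so your final reduction ``to the pair-permutation symmetry of $V$, which holds by assumption'' is vacuous: the conditions you need are properties of the representative $W$, not of $V$, and what has to be produced is a representative \emph{for the output} together with a proof that its antisymmetrization equals $\fB^m_{\varepsilon;t}(V)$, resp.\ $\fE\fA^m_{\tau,\varepsilon;t,s}V$.

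This is exactly what the paper constructs. For $\fB$ it takes $U^m:=\fB^m_{\varepsilon;t}(W)$ and checks (a)--(c) using the charge-conjugation invariance and antisymmetry of $\dot H_{\varepsilon;t}$ (note the contraction ties the second member of a $W$-pair in one factor to the first member of a $W$-pair in the other, and the two leftover half-pairs must be recombined into a new admissible pair --- a step your ``the antisymmetrization realigns these pairs'' glosses over). For $\fE\fA$ the paper introduces the contraction operators $\fD^{(i,j)}_{\tau,\varepsilon;t,s}$, uses antisymmetry of $V^m$ to prove $\fE\fA_{\tau,\varepsilon;t,s}V=\exp(\fD_{\tau,\varepsilon;t,s}/2)V=\exp(\tilde\fD_{\tau,\varepsilon;t,s}/2)V$ with $\tilde\fD_{\tau,\varepsilon;t,s}=n\fD^{(1,2)}_{\tau,\varepsilon;t,s}+n(n-2)\fD^{(2,3)}_{\tau,\varepsilon;t,s}$, then defines $U^m:=\sum_n \frac{1}{n!}(\tilde\fD_{\tau,\varepsilon;t,s}/2)^n W^{m+2n}$, proves $\fE\fA^m_{\tau,\varepsilon;t,s}V=\fS U^m$ using properties (b),(c) of $W$, and verifies (a)--(c) for $U^m$ from the charge-conjugation invariance and antisymmetry of $H_{\tau,\varepsilon;s}-H_{\tau,\varepsilon;t\vee s}$. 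The essential combinatorial content --- reducing arbitrary Wick contractions on the antisymmetric $V$ to contractions of $W$ at the canonical positions $(1,2)$ (within one pair) and $(2,3)$ (across two adjacent pairs, forcing a recombination of half-pairs) --- is absent from your sketch; observing that the covariance only couples a $+$ index to a $-$ index does not by itself yield a representative with the pair structure. Until this construction is supplied, compatibility with the flow of charge of $\fE\fA_{\tau,\varepsilon;t,s}V$ (and, to a lesser extent, of $\fB_{\varepsilon;t}(V)$) is not proved.
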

\begin{proof}
We have to prove that if $V\in\sN^{\mathrm{fin}}$ is invariant under the symmetries of the torus/plane and the internal symmetries and compatible with the flow of charge, then $\fB_{\varepsilon;t}(V)$ and $\fE \fA_{\tau,\varepsilon;t,s} V$ have the same properties. First, note that the invariance of $\fB_{\varepsilon;t}(V)$ and $\fE \fA_{\tau,\varepsilon;t,s} V$ under the symmetries of the torus/plane and the internal symmetries  is a simple consequence of Def.~\ref{dfn:A_map} and~\ref{dfn:B_map} of the maps $\fA_{\tau,\varepsilon;t,s}$ and $\fB_{\varepsilon;t}$ as well as Remarks~\ref{rem:H_symmetries} and~\ref{rem:Wick}. The non-trivial part of the proof is the verification of the compatibility with the flow of charge of $\fB_{\varepsilon;t}(V)$ and $\fE \fA_{\tau,\varepsilon;t,s} V$. For an argument based on the analysis of Feynman diagrams we refer the reader to~\cite[Sec.~5.B]{GK85a}. Let us give a direct proof of the above claim. Since by assumption $V\in\sN^{\mathrm{fin}}$ is compatible with the flow of charge for every $m\in\bN_+$ there exists $W^m\in\sNN^m$ such that $V^m=\fS W^m$ and $W^m$ satisfies the properties listed in Def.~\ref{dfn:flow_of_charge}. Our goal is to prove that:
\begin{itemize}
 \item[(1)] for all $m\in\bN_+$ there exists $U^m\in\sNN^m$ such that $\fB^m_{\varepsilon;t} V=\fS U^m$ and $U^m$ satisfies the conditions listed in Def.~\ref{dfn:flow_of_charge} and 
 \item[(2)] for all $m\in\bN_+$ there exists $U^m\in\sNN^m$ such that $\fE \fA^m_{\tau,\varepsilon;t,s} V=\fS U^m$ and $U^m$ satisfies the conditions listed in Def.~\ref{dfn:flow_of_charge}.
\end{itemize}
To prove (1) is enough to observe that $U^m := \fB^m_{\varepsilon;t}(W)$ satisfies the above-mentioned conditions by Remark~\ref{rem:H_symmetries}. The proof of (2) is more involved. Let us set $\sNN^{-1}=\sNN^{0}=\bC$. For $n\in\bN_+$ and $i,j\in\bN_+$ such that $i<j$ we define the operators
\begin{equation}
 \fD^{(i,j)}_{\tau,\varepsilon;t,s},\,
 \fD_{\tau,\varepsilon;t,s},\,
 \tilde\fD_{\tau,\varepsilon;t,s}\,:\,
 \sNN^n\to\sNN^{n-2}
\end{equation}
by the equalities $\fD^{(i,j)}_{\tau,\varepsilon;t,s} V=0$ for $j>n$,
\begin{multline}
 \langle\fD^{(i,j)}_{\tau,\varepsilon;t,s} V,\varphi_1\otimes\ldots\otimes\varphi_n\rangle
 \\
 :=
 \fE\langle V,\varphi_1\otimes\ldots
 \otimes\varphi_{i-1} 
 \otimes\fJ\varPsi_{\tau,\varepsilon;t\vee s,s}
 \otimes\varphi_{i+1}
 \otimes\ldots
 \otimes\varphi_{j-1}
 \otimes\fJ\varPsi_{\tau,\varepsilon;t\vee s,s}
 \otimes\varphi_{j+1}
 \ldots\otimes\varphi_n\rangle
\end{multline}
for $j\leq n$ and
\begin{equation}
 \fD_{\tau,\varepsilon;t,s}
 :=
 2\sum_{i=1}^\infty\sum_{j=i+1}^\infty\fD^{(i,j)}_{\tau,\varepsilon;t,s},
 \qquad
 \tilde\fD_{\tau,\varepsilon;t,s}:=
 (n\fD^{(1,2)}_{\tau,\varepsilon;t,s}
 +
 n(n-2)\fD^{(2,3)}_{\tau,\varepsilon;t,s}).
\end{equation}
Note that by Remark~\ref{rem:Wick} and Def.~\ref{dfn:covariance_H} the expected value appearing in the equation defining $\fD^{(i,j)}_{\tau,\varepsilon;t,s}$ can be expressed in terms of the function $H_{\tau,\varepsilon;s}-H_{\tau,\varepsilon;t\vee s}$. We observe that for all $V\in\sN^{\mathrm{fin}}$ and $m\in\bN_+$ it holds
\begin{equation}
\begin{gathered}
 \fE \fA_{\tau,\varepsilon;t,s} V = \exp(\fD_{\tau,\varepsilon;t,s}/2)V= \exp(\tilde\fD_{\tau,\varepsilon;t,s}/2)V,
 \\
 \fE \fA^m_{\tau,\varepsilon;t,s} V
 =
 \sum_{n=0}^\infty \frac{1}{n!} (\fD_{\tau,\varepsilon;t,s}/2)^n \,V^{m+2n}
 =
 \sum_{n=0}^\infty \frac{1}{n!} (\tilde\fD_{\tau,\varepsilon;t,s}/2)^n \,V^{m+2n},
\end{gathered} 
\end{equation}
where we used the fact that $V^m$ is antisymmetric for all $m\in\bN_+$. The series on the RHS of the second equation above are actually finite sums due to the assumption $V\in\sN^{\mathrm{fin}}$. For $m\in\bN_+$ we define 
\begin{equation}
 U^m := 
 \sum_{n=0}^\infty \frac{1}{n!} (\tilde\fD_{\tau,\varepsilon;t,s}/2)^n \,W^{m+2n}.
\end{equation}
Recall that $(W^m)_{m\in\bN_+}$ satisfies the properties~(a),~(b),~(c) stated in Def.~\ref{dfn:flow_of_charge}. Using the properties~(b) and~(c) we prove that $\fE \fA^m_{\tau,\varepsilon;t,s} V=\fS U^m$ for all $m\in\bN_+$. Using the property~(a) of $(W^m)_{m\in\bN_+}$ and the invariance of $H_{\tau,\varepsilon;s}-H_{\tau,\varepsilon;t\vee s}$ under the charge conjugation symmetry we show the property~(a) of $(U^m)_{m\in\bN_+}$. The properties~(b) and~(c) of $(U^m)_{m\in\bN_+}$ follow from the properties~(b) and~(c) of $(W^m)_{m\in\bN_+}$ and the fact that $H_{\tau,\varepsilon;s}-H_{\tau,\varepsilon;t\vee s}$ is antisymmetric. This finishes the proof.
\end{proof}

\section{Fixed point problem}\label{sec:fixed_point}

In this section we construct a solution $X_{\tau,\varepsilon;\Cdot}\equiv(g_{\tau,\varepsilon;\Cdot},r_{\tau,\varepsilon;\Cdot},z_{\tau,\varepsilon;\Cdot},W_{\tau,\varepsilon;\Cdot})$ of the system of equations~\eqref{eq:flow_g_intro_ren}, \eqref{eq:flow_r_intro_ren},
\eqref{eq:flow_2_intro_z}, \eqref{eq:flow_2_intro_W1}, 
\eqref{eq:flow_2_intro_W2} introduced in Sec.~\ref{sec:strategy}. To this end, we rewrite this system of equations as a fixed point equation of a certain map $\fX_{\tau,\varepsilon;\Cdot}$ acting in a complete metric space $\sY_{\tau,\varepsilon}$ and prove that the map $\fX_{\tau,\varepsilon;\Cdot}$ is a contraction provided the parameter $\lambda\in(0,1]$ is sufficiently small. As argued in Sec.~\ref{sec:strategy} and proved in Sec.~\ref{sec:relation_polchinski} using $X_{\tau,\varepsilon;\Cdot}$ it is possible to construct a solution $U_{\tau,\varepsilon;\Cdot}$ of the flow equation~\eqref{eq:polchinski_integral}.

\begin{dfn}\label{dfn:sX}
For $\tau,\varepsilon\in[0,1]$ and
\begin{equation}
 X_\Cdot\equiv(g_\Cdot,r_\Cdot,z_\Cdot,W_\Cdot)\in C((0,1],\bC)\times C((0,1],\bC)\times C((0,1],\bC)\times \sW^{8,4;2-80\kappa}_{\tau,\varepsilon}
\end{equation}
we define
\begin{equation}
\begin{gathered}
 \|X_\Cdot\|_{\sP}
 := 
 \sup_{t\in(0,1]}\lambda_t^{1+10\kappa}\, |g_t|
 +
 \sup_{t\in(0,1]}\lambda_t^{36\kappa-1}\, t\,|r_t|
 +
 \sup_{t\in(0,1]}\lambda_t^{36\kappa-1}\,|z_t|,
 \\
 \|X_\Cdot\|_{\sX_{\tau,\varepsilon}}
 := 
 \|X_\Cdot\|_{\sP}
 +
 \|W_\Cdot\|_{\sW^{8,4;2-80\kappa}_{\tau,\varepsilon}},
 \qquad
 \|X_\Cdot\|_{\tilde\sX_{\tau,\varepsilon}}
 := 
 \|X_\Cdot\|_{\sP}
 +
 \|W_\Cdot\|_{\tilde\sW^{2,3;2-80\kappa}_{\tau,\varepsilon}}.
\end{gathered} 
\end{equation}
By definition the vector space $\sX_{\tau,\varepsilon}$ over $\bC$ consists of maps $X_\Cdot$ of the above form such that $\|X_\Cdot\|_{\sX_{\tau,\varepsilon}}<\infty$. We also define the set
\begin{equation}
 \sY_{\tau,\varepsilon}:=\{X_\Cdot\in \sX_{\tau,\varepsilon}\,|\,\|X_\Cdot\|_{\sX_{\tau,\varepsilon}}\leq 1,~\sP_\varepsilon(X_\Cdot)\geq\lambda^\kappa,~\forall_{t\in(0,1]}\,\Im\, g_t=\Im\, r_t=\Im\, z_t=0\},
\end{equation}
where $\sP_\varepsilon(X_\Cdot):=\inf_{t\in(0,1]} \lambda_{\varepsilon\vee t} \, g_t$.
We omit $\tau,\varepsilon$ if $\tau=0$, $\varepsilon=0$.
\end{dfn}

\begin{dfn}\label{dfn:theta_ep_t}
For $\varepsilon\in[0,1]$ and $t\in(0,1]$ we define $\theta_{\varepsilon;t}:=\lambda_t/\lambda_{\varepsilon\vee t}\in[0,1]$.
\end{dfn}

\begin{lem}\label{lem:complete}
For all $\tau,\varepsilon\in[0,1]$ the space $(\sX_{\tau,\varepsilon},\|\Cdot\|_{\sX_{\tau,\varepsilon}})$ is a Banach space. Moreover, the set $\sY_{\tau,\varepsilon}$ is a closed subset of $\sX_{\tau,\varepsilon}$.
\end{lem}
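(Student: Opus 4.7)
The plan is to first verify that $(\sX_{\tau,\varepsilon},\|\Cdot\|_{\sX_{\tau,\varepsilon}})$ is a Banach space by exhibiting it as a direct sum of four Banach spaces, and then to check that each of the three defining conditions of $\sY_{\tau,\varepsilon}$ is closed with respect to $\|\Cdot\|_{\sX_{\tau,\varepsilon}}$.

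For the Banach property, note that the $W_\Cdot$ component lives in $\sW^{8,4;2-80\kappa}_{\tau,\varepsilon}$, which is already a Banach space by Lemma~\ref{lem:W_banach}. For each of the scalar components I would introduce the weighted space
\begin{equation}
 C_\mu := \{f\in C((0,1],\bC)~|~\|f\|_\mu:=\sup_{t\in(0,1]}\mu(t)\,|f(t)|<\infty\},
\end{equation}
where $\mu(t)$ is one of the continuous strictly positive weights $\lambda_t^{1+10\kappa}$, $\lambda_t^{36\kappa-1}\,t$, $\lambda_t^{36\kappa-1}$ occurring in Def.~\ref{dfn:sX}. A Cauchy sequence in $(C_\mu,\|\Cdot\|_\mu)$ becomes, after multiplication by $\mu$, a uniformly Cauchy sequence of bounded continuous functions on $(0,1]$; it therefore converges uniformly to a continuous bounded function $h$, and since $\mu$ is continuous and non-vanishing on $(0,1]$ the quotient $f:=h/\mu$ is continuous on $(0,1]$, and $\|f_n-f\|_\mu\to0$. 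Hence each $(C_\mu,\|\Cdot\|_\mu)$ is Banach, and $\sX_{\tau,\varepsilon}$ is the direct sum of three such spaces with $\sW^{8,4;2-80\kappa}_{\tau,\varepsilon}$ equipped with the sum of norms, which is Banach by the standard argument.

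For closedness of $\sY_{\tau,\varepsilon}$, observe that $\|\Cdot\|_{\sX_{\tau,\varepsilon}}$ dominates each of the weighted sup-norms on $g_\Cdot$, $r_\Cdot$, $z_\Cdot$. Therefore if $X_{\Cdot,n}\to X_\Cdot$ in $\sX_{\tau,\varepsilon}$ then $g_{n,t}\to g_t$, $r_{n,t}\to r_t$, $z_{n,t}\to z_t$ pointwise for every $t\in(0,1]$, since $\lambda_t^{1+10\kappa},\lambda_t^{36\kappa-1}\,t,\lambda_t^{36\kappa-1}$ are strictly positive on $(0,1]$. Pointwise convergence of complex sequences preserves real-valuedness, so the reality constraints $\Im g_t=\Im r_t=\Im z_t=0$ pass to the limit. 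The inequality $\sP_\varepsilon(X_{\Cdot,n})\geq\lambda^\kappa$ reads $\lambda_{\varepsilon\vee t}\,g_{n,t}\geq\lambda^\kappa$ for every $t\in(0,1]$; pointwise convergence of $g_{n,t}$ yields $\lambda_{\varepsilon\vee t}\,g_t\geq\lambda^\kappa$ for each $t$, and taking the infimum over $t$ preserves the inequality. Finally, the condition $\|X_\Cdot\|_{\sX_{\tau,\varepsilon}}\leq1$ is the closed unit ball of a normed space, hence closed. Intersecting these three closed conditions shows that $\sY_{\tau,\varepsilon}$ is closed in $\sX_{\tau,\varepsilon}$.

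There is no real obstacle in the argument; the only point requiring a little care is that pointwise convergence of the scalar components relies on the strict positivity and continuity of the three weights on the open interval $(0,1]$ (not on its closure), which is what makes the weighted continuous-function spaces complete and makes the conditions defining $\sY_{\tau,\varepsilon}$ closed under norm convergence.
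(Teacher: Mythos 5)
Your proposal is correct and fills in exactly the details that the paper leaves implicit: the paper's own proof is the one-line remark ``The statement follows from Lemma~\ref{lem:W_banach}'', which supplies completeness of the $\sW^{8,4;2-80\kappa}_{\tau,\varepsilon}$ component, and your argument supplies the routine remainder (completeness of the weighted $\sup$-norm spaces of scalar functions, and pointwise passage to the limit for the three defining constraints of $\sY_{\tau,\varepsilon}$). One small stylistic observation: the paper's proof of Lemma~\ref{lem:V_banach}, which you did not need to cite directly, uses precisely the weighted sup-norm template you apply to the scalar components, so your argument is not only correct but mirrors the paper's own pattern of reasoning for the $\sV$-spaces.
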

\begin{proof}
The statement follows from Lemma~\ref{lem:W_banach}.
\end{proof}

\begin{lem}\label{lem:X_norm}
Let $\tau,\varepsilon\in[0,1]$. 
\begin{itemize}
 \item[(A)] For all $X_\Cdot\in \sX_{\tau,\varepsilon}$ it holds $\|X_\Cdot\|_{\tilde\sX_{\tau,\varepsilon}}\leq \|X_\Cdot\|_{\sX_{\tau,\varepsilon}}$.
 
 \item[(B)] For all $X_\Cdot\in \sX$ it holds $\|X_\Cdot\|_{\tilde\sX_{\tau,\varepsilon}}\leq \|X_\Cdot\|_{\sX}$.
%
\end{itemize}
\end{lem}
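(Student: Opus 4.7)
My plan is to reduce both bounds to comparisons of the fourth component $W_\Cdot$, since by Def.~\ref{dfn:sX} the parameter piece $\|X_\Cdot\|_{\sP}$ enters identically in $\|\Cdot\|_{\sX_{\tau,\varepsilon}}$ and $\|\Cdot\|_{\tilde\sX_{\tau,\varepsilon}}$. It will therefore suffice to prove $\|W_\Cdot\|_{\tilde\sW^{2,3;2-80\kappa}_{\tau,\varepsilon}} \leq \|W_\Cdot\|_{\sW^{8,4;2-80\kappa}_{\tau,\varepsilon}}$ for part~(A) and $\|W_\Cdot\|_{\tilde\sW^{2,3;2-80\kappa}_{\tau,\varepsilon}} \leq \|W_\Cdot\|_{\sW^{8,4;2-80\kappa}}$ for part~(B).

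For part~(A), the two sides share the same pair of cutoffs, so the bound should be a pure monotonicity statement. I would first apply the tilde-to-untilde inequality $\|\Cdot\|_{\tilde\sW^{2,3;2-80\kappa}_{\tau,\varepsilon}} \leq \|\Cdot\|_{\sW^{2,3;2-80\kappa}_{\tau,\varepsilon}}$ (Remark~\ref{rem:sW_tilde}, based on $\tilde w^m_s \leq w^m_s$), and then the monotonicity of $\|\Cdot\|_{\sW^{\alpha,\beta;\gamma}_{\tau,\varepsilon}}$ in $(\alpha,\beta)$, inherited from Remark~\ref{rem:sV_norm_property} via Def.~\ref{dfn:sW}; the hypotheses $2 \leq 8$ and $3 < 4$ of that remark are exactly what one needs.

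Part~(B) is the real point, because the left-hand norm depends on $\tau,\varepsilon$ through $\fA_{\tau,\varepsilon;t\vee s,s}$ while the right-hand one uses $\fA_{t\vee s,s}$, so a brute monotonicity argument cannot bridge them. My plan is to decouple the cutoff dependence by invoking Lemma~\ref{lem:A_map_bound}~(C) with $\alpha = 4$, $\beta = 3$, $\gamma = 2-80\kappa$: for each fixed $t$ this gives $\|s\mapsto\fA_{\tau,\varepsilon;t\vee s,s}W_s\|_{\tilde\sV^{2,3;\gamma}} \leq \|W_\Cdot\|_{\tilde\sV^{4,3;\gamma}}$, the loss $\alpha\to\alpha/2$ in that lemma being exactly absorbed into the gap $8\to 4$ I have in hand. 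Taking the supremum in $t$ yields $\|W_\Cdot\|_{\tilde\sW^{2,3;\gamma}_{\tau,\varepsilon}} \leq \|W_\Cdot\|_{\tilde\sV^{4,3;\gamma}}$, whose right-hand side is intrinsic (independent of $\tau,\varepsilon$). From there I would chain the three standard comparisons: $\tilde\sV \hookrightarrow \sV$ (Remark~\ref{rem:sV_tilde}), then $(\alpha,\beta)$-monotonicity with $4 \leq 8$ and $3 < 4$ (Remark~\ref{rem:sV_norm_property}), and finally $\sV \hookrightarrow \sW$ from Remark~\ref{rem:sV_sW} at $\tau=\varepsilon=0$, obtaining $\|W_\Cdot\|_{\tilde\sV^{4,3;\gamma}} \leq \|W_\Cdot\|_{\sV^{4,3;\gamma}} \leq \|W_\Cdot\|_{\sV^{8,4;\gamma}} \leq \|W_\Cdot\|_{\sW^{8,4;\gamma}}$.

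The main (and only) nontrivial step is the identification of the correct input to Lemma~\ref{lem:A_map_bound}~(C): one must choose $\alpha=4$ so that the halving $\alpha \to \alpha/2$ lands on $2$, and $\beta=3$ so that the subsequent strict inequality $\tilde\beta < \beta$ with $(\tilde\beta,\beta)=(3,4)$ is available for the monotonicity remark. The only caveat is that Lemma~\ref{lem:A_map_bound}~(C) requires $\lambda \in (0,\lambda_\star]$ for some $\lambda_\star > 0$, which is compatible with the global smallness assumption on $\lambda$ underlying the entire construction.
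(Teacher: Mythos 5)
Your proposal is correct and takes essentially the same route as the paper: Item~(A) is the elementary monotonicity $\tilde w^m_s \le w^m_s$ plus the $(\alpha,\beta)$-monotonicity of Remark~\ref{rem:sV_norm_property} read through Def.~\ref{dfn:sW}, and Item~(B) is the four-step chain $\|W_\Cdot\|_{\tilde\sW^{2,3}_{\tau,\varepsilon}} \le \|W_\Cdot\|_{\tilde\sV^{4,3}} \le \|W_\Cdot\|_{\sV^{4,3}} \le \|W_\Cdot\|_{\sV^{8,4}} \le \|W_\Cdot\|_{\sW^{8,4}}$, where the first link uses the $\alpha\to\alpha/2$ loss from Lemma~\ref{lem:A_map_bound}. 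The paper writes the identical chain with the first intermediate stop at $\sW^{2,3}_{\tau,\varepsilon}$ rather than $\tilde\sV^{4,3}$ and packages the crucial halving step as Remark~\ref{rem:sW_sV}, but the content is the same, including your observation that $\alpha=4$ is forced so the halving lands on $2$.
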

\begin{proof}
Item~(A) follows from Def.~\ref{dfn:sW} and Remark~\ref{rem:sV_norm_property}. To prove Item~(B) observe that
\begin{equation}
 \|W_\Cdot\|_{\tilde\sW^{2,3;2-80\kappa}_{\tau,\varepsilon}}
 \leq
 \|W_\Cdot\|_{\sW^{2,3;2-80\kappa}_{\tau,\varepsilon}}
 \leq
 \|W_\Cdot\|_{\sV^{4,3;2-80\kappa}}
 \leq
 \|W_\Cdot\|_{\sV^{8,4;2-80\kappa}}
 \leq
 \|W_\Cdot\|_{\sW^{8,4;2-80\kappa}_{\tau,\varepsilon}}
\end{equation}
by Remark~\ref{rem:sW_tilde},~\ref{rem:sW_sV},~\ref{rem:sV_norm_property} and~\ref{rem:sV_sW}, respectively. 
\end{proof}

\begin{dfn}\label{dfn:X_map}
Let $\tau,\varepsilon\in[0,1]$. For $X_\Cdot \equiv (g_\Cdot,r_\Cdot,z_\Cdot,W_\Cdot)\in\sY_{\tau,\varepsilon}$ and $t\in(0,1]$ define 
\begin{equation}
\begin{aligned}
 \fW^m_{\tau,\varepsilon;t}(X_\Cdot) 
 &:=
 \int_0^t
 \fE\fA^m_{\tau,\varepsilon;t,s}\fB_{\varepsilon;s}(V_s)\,\rd s\in\sN^m,
 &&
 m\in\bN_+\setminus\{2,4\},
 \\
 \fW^m_{\tau,\varepsilon;t}(X_\Cdot) 
 &:=
 \int_0^t
 \fR\fE\fA^m_{\tau,\varepsilon;1,s}\fB_{\varepsilon;s}(V_s)\,\rd s
 -
 \fE\fC_{\tau,\varepsilon;1,t}^mW_t \in\sN^m,
 \qquad &&
 m\in\{2,4\},
\end{aligned} 
\end{equation}
as well as
\begin{equation}
\begin{aligned}
 \mathbf{g}_{\tau,\varepsilon;t}(X_\Cdot)
 :=&
 \lambda^{-1}+\int_t^1 (g_s)^2\,\fL\fE\fA^4_{\tau,\varepsilon;1,s}\fB_{\varepsilon;s}(V_s)\,\rd s\in\bR,
 \\
 \mathbf{r}_{\tau,\varepsilon;t}(X_\Cdot) 
 :=&
 -\fL\fE\fA^2_{\tau,\varepsilon;1,t} U(1/g_t,0,0)
%
 -\int_t^1
 \fL\fE\fA^2_{\tau,\varepsilon;1,s}\fB_{\varepsilon;s}(V_s)\,\rd s\in\bR,
 \\
 \mathbf{z}_{\tau,\varepsilon;t}(X_\Cdot)
 :=&
 \int_0^t
 \fL_\partial\fE\fA^2_{\tau,\varepsilon;1,s}\fB_{\varepsilon;s}(V_s)\,\rd s\in\bR,
\end{aligned} 
\end{equation}
where
\begin{equation}
 V_s\equiv V_s(X_\Cdot)= U(1/g_s,r_s,z_s)+W_s\in \sN,\qquad s\in(0,1].
\end{equation}
For $\lambda\in(0,1]$ small enough the map $\fX_{\tau,\varepsilon;\Cdot}\,:\, \sY_{\tau,\varepsilon}\to \sX_{\tau,\varepsilon}$ is defined by
\begin{equation}
 \fX_{\tau,\varepsilon;\Cdot} 
 :=
 \mathbf{g}_{\tau,\varepsilon;\Cdot}\times\mathbf{r}_{\tau,\varepsilon;\Cdot}\times\mathbf{z}_{\tau,\varepsilon;\Cdot}\times\fW_{\tau,\varepsilon;\Cdot},
 \qquad
 \fW_{\tau,\varepsilon;\Cdot} 
 :=
 \bigtimes_{m\in\bN_+}\fW^m_{\tau,\varepsilon;\Cdot}.
\end{equation}
\end{dfn}
\begin{rem}
The fact that the map $\fX_{\tau,\varepsilon;\Cdot}$ is well-defined is non-trivial and is a consequence of the estimates established below. We also point out that $\fW^m_{\tau,\varepsilon;t}(X_\Cdot) =0$ for $m\in\bN_+\setminus2\bN_+$ by Remark~\ref{rem:odd_vanish}.
\end{rem}

\begin{rem}
We call the parameter $g_{\tau,\varepsilon;t}$ the inverse of the effective coupling constant. Note that we fixed $g_{\tau,\varepsilon;t}$ at unit scale $t=1$ to be $g_{\tau,\varepsilon;t=1}:=\lambda^{-1}$, where $\lambda\in(0,\lambda_\star]$ and $\lambda_\star\in(0,1]$ is a small constant.
\end{rem}
\begin{rem}
The parameter $r_{\tau,\varepsilon;t}$ is the correction to the effective mass due to the interaction.  Note that we fixed $r_{\tau,\varepsilon;t}$ at unit scale $t=1$ to be $r_{\tau,\varepsilon;t=1}=0$. Recall that that the free part of the action contains the unit mass term. Hence, the above condition implies that the effective mass at unit scale equals one.
\end{rem}

\begin{rem}\label{rem:g_r_z_real}
The fact that $\mathbf{g}_{\tau,\varepsilon;t}(X_\Cdot),\mathbf{r}_{\tau,\varepsilon;t}(X_\Cdot),\mathbf{z}_{\tau,\varepsilon;t}(X_\Cdot)\in\bR$ follows from Lemma~\ref{lem:fL_symmetries} and the symmetry properties of functionals $U(g,r,z)$ and $W\in\sW^{\alpha,\beta;\gamma}_{\tau,\varepsilon}$.
\end{rem}

\begin{thm}\label{thm:contraction}
There exists $\lambda_\star\in(0,1]$ and $C\in(0,\infty)$ such that for all $\lambda\in(0,\lambda_\star]$, all $\tau,\varepsilon\in[0,1]$ and all $X_\Cdot,Y_\Cdot\in \sY_{\tau,\varepsilon}$, $Z_\Cdot\in\sY$ it holds:
\begin{itemize}
 \item[(A)] $\|s\mapsto\fX_{\tau,\varepsilon;s}(X_\Cdot)\|_{\sX_{\tau,\varepsilon}} 
 \leq
 C\,\lambda^\kappa$ and $\sP_\varepsilon(s\mapsto\fX_{\tau,\varepsilon;s}(X_\Cdot))\geq 1/C$,
 \item[(B)] $\|s\mapsto(\fX_{\tau,\varepsilon;s}(X_\Cdot)-\fX_{\tau,\varepsilon;s}(Y_\Cdot))\|_{\sX_{\tau,\varepsilon}}
 \leq
 C\,\lambda^\kappa\, \|X_\Cdot-Y_\Cdot\|_{\sX_{\tau,\varepsilon}}$,
 \item[(C)] $\|s\mapsto(\fX_{\tau,\varepsilon;s}(X_\Cdot)
 -
 \fX_{\tau,\varepsilon;s}(Z_\Cdot))\|_{\tilde\sX_{\tau,\varepsilon}}
 \leq
 C\,\lambda^\kappa\,
 \|X_\Cdot-Z_\Cdot\|_{\tilde\sX_{\tau,\varepsilon}}$,
 
 \item[(D)] $\|s\mapsto(\fX_s(Z_{\Cdot}) 
 -
 \fX_{\tau,\varepsilon;s}(Z_{\Cdot}))\|_{\tilde\sX_{\tau,\varepsilon}}
 \leq 
 C\,\lambda_{\tau\vee\varepsilon}^\kappa$.
\end{itemize}
\end{thm}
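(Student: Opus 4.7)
The four items all rest on a single pipeline: express $V_s(X_\Cdot)=U(1/g_s,r_s,z_s)+W_s$ as an element of an appropriate $\sW$-space, push it through $\fB_{\varepsilon;s}$ (bilinear, quadratic in $V$), then through the integration map $\fI_{\tau,\varepsilon;\Cdot}$, extracting the four components of $\fX_{\tau,\varepsilon;\Cdot}$ at the end. The plan is first to quantify the input $V_\Cdot$, then bound each component of $\fX_{\tau,\varepsilon;\Cdot}(X_\Cdot)$ in turn, then convert these bounds into Lipschitz estimates by bilinearity, and finally upgrade to $\tilde{}$ and $\tau,\varepsilon\to0$ versions by invoking the ``(C)/(E)''-type parts of Lemmas \ref{lem:A_map_bound}, \ref{lem:B_map_bound}, \ref{lem:I_bound}.

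\emph{Step 1 (input bound for $V_\Cdot$).} Because $U(g,r,z)$ has only $m\in\{2,4\}$ kernels proportional to $\delta^{(m)}$ (Remark \ref{rem:U_dfn}), the parameters $(g_s,r_s,z_s)$ bounded by the $\|\Cdot\|_\sP$ part of $\sY_{\tau,\varepsilon}$ (using $\sP_\varepsilon(X_\Cdot)\ge\lambda^\kappa$ to invert $g_s$) give
$\|s\mapsto U(1/g_s,r_s,z_s)\|_{\sV^{\alpha,\beta;1-40\kappa}}\lesssim 1$
for any $\alpha,\beta\ge1$, where the exponents $1+10\kappa$, $36\kappa-1$ in Def.~\ref{dfn:sX} were chosen so that the powers of $\lambda_s$ in Def.~\ref{dfn:V_space} match. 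Combining with $\|W_\Cdot\|_{\sW^{8,4;2-80\kappa}_{\tau,\varepsilon}}\le 1$ (which controls the $\gamma=1-40\kappa$ norm by Remark~\ref{rem:sV_norm_property}), I obtain $\|V_\Cdot\|_{\sW^{8,4;1-40\kappa}_{\tau,\varepsilon}}\lesssim 1$; analogously for the $\tilde{}$-norm.

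\emph{Step 2 (irrelevant part of $\fW$).} By Lemma \ref{lem:B_map_bound}(A),
$\|s\mapsto s\fB_{\varepsilon;s}(V_s)\|_{\sW^{8,3;2-80\kappa}_{\tau,\varepsilon}}\lesssim \lambda^\kappa \|V_\Cdot\|_{\sW^{8,4;1-40\kappa}_{\tau,\varepsilon}}^2$, and by Lemma \ref{lem:I_bound}(A) the map $u\mapsto \fI^m_{\tau,\varepsilon;u}(V_\Cdot)$, $m\ge 6$, lies in $\sW^{8,4;2-80\kappa}_{\tau,\varepsilon}$ with norm $\lesssim \lambda^\kappa$. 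For $m\in\{2,4\}$ the integrand contains $\fR$, which Lemma \ref{lem:I_bound} also handles, and I add the correction $-\fE\fC^m_{\tau,\varepsilon;1,t}W_t$; by Lemma \ref{lem:A_map_bound}(B) this contributes $\lesssim \lambda^\kappa \|W_\Cdot\|_{\sW^{8,4;2-80\kappa}_{\tau,\varepsilon}}$. Summing gives the $\fW$-part of (A).

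\emph{Step 3 (the parameters $\mathbf g$, $\mathbf r$, $\mathbf z$).} By Lemma \ref{lem:fL_continuity} applied to the kernels of $\fE\fA^k\fB_{\varepsilon;s}(V_s)$ for $k\in\{2,4\}$, together with the $\sN^k$ bounds inherited from Steps 1--2 (which give $\|\fE\fA^k_{\tau,\varepsilon;1,s}\fB_{\varepsilon;s}(V_s)\|_{\sN^k}\lesssim \lambda_s^{2-80\kappa+2\kappa k}s^{k/2-3}$), Lemma \ref{lem:bounds_relevant_irrelevant}(C,D,E) produces precisely the weights in $\|\Cdot\|_\sP$: for instance
$$
 \Bigl|\int_t^1 g_s^2\,\fL\fE\fA^4_{\tau,\varepsilon;1,s}\fB_{\varepsilon;s}(V_s)\,\rd s\Bigr|
 \lesssim \int_t^1 \lambda_s^{-2-20\kappa}\,\lambda_s^{2-72\kappa}\,s^{-1}\,\rd s
 \lesssim \lambda_t^{-92\kappa+1}
$$
which, multiplied by $\lambda_t^{1+10\kappa}$, is $\lesssim \lambda_t^{2-82\kappa}\le\lambda^\kappa$. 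Adding the boundary term $\lambda^{-1}$ gives the $\mathbf g$-bound. Analogous integrations handle $\mathbf r$ (using Lemma \ref{lem:bounds_relevant_irrelevant}(C) for the integral from $t$ to $1$ and Lemma \ref{lem:fL_continuity}(B) plus the ansatz for the $U(1/g_t,0,0)$ subtraction) and $\mathbf z$ (using Lemma \ref{lem:bounds_relevant_irrelevant}(D) since the integrand is marginal). For the lower bound $\sP_\varepsilon(\fX_{\tau,\varepsilon;\Cdot}(X_\Cdot))\ge 1/C$, I write $\lambda_{\varepsilon\vee t}\mathbf g_{\tau,\varepsilon;t}=\lambda_{\varepsilon\vee t}/\lambda + \lambda_{\varepsilon\vee t}\times(\text{integral})$; the first term equals $\lambda_{\varepsilon\vee t}/\lambda \ge 0$ and, by design of $\lambda_t$, is bounded below by the same quantity one would obtain from the free $\beta$-function; the integral correction is $O(\lambda^\kappa)$ by the estimate above, so for $\lambda$ small the whole expression is close to $1$.

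\emph{Step 4 (Lipschitz bounds (B), (C)).} The map $X_\Cdot\mapsto V_\Cdot(X_\Cdot)$ is affine in $(r,z,W)$ and rational in $g$; writing $1/g^X-1/g^Y=(g^Y-g^X)/(g^Xg^Y)$ and using the uniform lower bound $\sP_\varepsilon\ge\lambda^\kappa$ on both $X$ and $Y$ shows that $V_\Cdot(X_\Cdot)-V_\Cdot(Y_\Cdot)$ has $\sW^{\alpha,\beta;1-40\kappa}_{\tau,\varepsilon}$-norm controlled by $\|X_\Cdot-Y_\Cdot\|_{\sX_{\tau,\varepsilon}}$, with the loss of $\lambda^{-2\kappa}$ from $g$-inversion being absorbed by spare $\lambda^\kappa$'s in the quadratic term. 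Then bilinearity of $\fB$, $\fB_{\varepsilon;s}(V)-\fB_{\varepsilon;s}(V')=\fB_{\varepsilon;s}(V+V',V-V')$, converts Step 2/3 into (B) and its tilde-version into (C) (the tilde versions of Lemmas \ref{lem:A_map_bound}, \ref{lem:B_map_bound}, \ref{lem:I_bound} being available).

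\emph{Step 5 (convergence (D)).} With $Z_\Cdot\in\sY\subset \sY_{\tau,\varepsilon}$ (after restricting the norm appropriately), I split $\fX-\fX_{\tau,\varepsilon}$ into the differences $\fA-\fA_{\tau,\varepsilon}$, $\fB-\fB_{\varepsilon}$, and $\fI-\fI_{\tau,\varepsilon}$, and apply parts (E), (C), (C) of Lemmas \ref{lem:A_map_bound}, \ref{lem:B_map_bound}, \ref{lem:I_bound} respectively, each of which yields the factor $\lambda_{\tau\vee\varepsilon}^\kappa$; the loss of a factor of $4$ in $\alpha$ is compensated by going to the weaker norm $\tilde\sW^{2,3;2-80\kappa}_{\tau,\varepsilon}$ built into $\tilde\sX_{\tau,\varepsilon}$ (cf.~Remark~\ref{rem:sV_norm_estimate}).

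\emph{Main obstacle.} The delicate step is Step 3, specifically the lower bound $\sP_\varepsilon\ge 1/C$: it requires that the nonlinear correction in the flow of $g$ does not overwhelm the leading logarithmic running $\lambda_t^{-1}=\lambda^{-1}-\beta_2\log t$. Quantitatively one must show the integral $\int_t^1 g_s^2 \fL\fE\fA^4\fB_{\varepsilon;s}(V_s)\,\rd s$ differs from $\int_t^1 \beta_2 s^{-1}\,\rd s$ by a term dominated by a fraction of $\lambda_t^{-1}$; this is what ensures asymptotic freedom is actually realized and is the only place where the specific value of $\beta_2$ (and hence the hypothesis $N\ge 2$, guaranteeing $\beta_2>0$) enters. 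The remaining estimates are bookkeeping around Lemmas \ref{lem:A_map_bound}--\ref{lem:I_bound}, with smallness absorbed by the single factor $\lambda^\kappa$ picked up from $\fB$.
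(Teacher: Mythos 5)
Your overall architecture matches the paper's: the proof of this theorem is assembled exactly from bounds on the four components (the paper's Lemmas~\ref{lem:contraction_W}, \ref{lem:contraction_g}, \ref{lem:contraction_r}, \ref{lem:contraction_z}), fed by an input bound on $V_\Cdot=U(\cdot)+W_\Cdot$ and the general estimates of Lemmas~\ref{lem:A_map_bound}, \ref{lem:B_map_bound}, \ref{lem:I_bound} and \ref{lem:bounds_relevant_irrelevant}, with the Lipschitz and $\tau,\varepsilon$-difference statements obtained by bilinearity of $\fB$ and the (C)/(E)-type parts of those lemmas. However, your Step~3 contains a genuine gap at the heart of the theorem, namely the $\mathbf g$-component. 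Your displayed estimate lumps the whole integrand $g_s^2\,\fL\fE\fA^4_{\tau,\varepsilon;1,s}\fB_{\varepsilon;s}(V_s)$ together; carrying out that computation correctly (Lemma~\ref{lem:bounds_relevant_irrelevant}~(E) with $\rho=-1-92\kappa$) gives $\int_t^1\lambda_s^{-92\kappa}s^{-1}\rd s\lesssim\lambda_t^{-1-92\kappa}$, not $\lambda_t^{1-92\kappa}$, so after multiplying by $\lambda_t^{1+10\kappa}$ you are left with $\lambda_t^{-82\kappa}$, which is unbounded as $t\searrow0$. This is not a fixable typo: the integrand genuinely contains $g_s^2\cdot(1/g_s)^2\cdot\beta_2/s=\beta_2/s$ up to corrections, whose integral is $\lambda_t^{-1}-\lambda^{-1}$, i.e.\ the running coupling itself, so no crude norm bound can make it $O(\lambda^\kappa)$. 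The paper's proof isolates this $X_\Cdot$-independent one-loop piece as $f_{\tau,\varepsilon;t}$ (Def.~\ref{dfn:f_function}), proves $\lambda_{\varepsilon\vee t}^{-1}/C\le f_{\tau,\varepsilon;t}\le C\lambda_t^{-1}$ by an explicit momentum-space computation (Lemma~\ref{lem:contraction_f_h}), and shows only the remainders $\mathbf a^{(2)},\mathbf a^{(3)}$ of the decomposition in Lemma~\ref{lem:contraction_a} are small. This separation is also what makes (B), (C), (D) work for $\mathbf g$ — the large part cancels in differences because it does not depend on $X_\Cdot$ — and it is the only route to the lower bound $\sP_\varepsilon\ge1/C$: your formula attributes the lower bound to the boundary term $\lambda_{\varepsilon\vee t}/\lambda$, which tends to $0$, while in fact it comes from $\lambda_{\varepsilon\vee t}f_{\tau,\varepsilon;t}\ge1/C$. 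Your closing "main obstacle" paragraph names this issue qualitatively, but it contradicts your own Step~3 computation and no mechanism for the required splitting is supplied.

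Two further points you should repair. First, your Step~1 claim $\|s\mapsto U(1/g_s,r_s,z_s)\|_{\sV^{\alpha,\beta;1-40\kappa}}\lesssim1$ is false when $\varepsilon>0$: for $s\ll\varepsilon$ the flow has halted, $1/g_s\approx\lambda_\varepsilon\lambda^{-\kappa}$ does not vanish, while the norm requires $|1/g_s|\lesssim\lambda_s^{1-32\kappa}\to0$; the paper inserts the factor $\theta^2_{\varepsilon;s}$ (harmless since $\fB_{\varepsilon;s}=0$ for $s\le\varepsilon$) precisely to cure this, cf.\ Lemmas~\ref{lem:contraction_g_aux} and \ref{lem:contraction_V}. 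Second, for the $\mathbf r$- and $\mathbf z$-components the naive power counting of the cross terms between the quartic and the $(r,z)$-quadratic pieces does not close; the paper exploits that these terms vanish for $s\in(0,1/2)$ because $\int_{\bR^2}\dot G_{\varepsilon;s}=0$ there (Remark~\ref{rem:G}~(D), used in Lemma~\ref{lem:contraction_a}), a cancellation absent from your sketch.
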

\begin{proof}
The theorem is an immediate consequence of Lemmas~\ref{lem:contraction_W},~\ref{lem:contraction_g},~\ref{lem:contraction_r},~\ref{lem:contraction_z}.
\end{proof}

\begin{cor}\label{cor:contraction}
There exists $\lambda_\star\in(0,1]$ and $C\in(0,\infty)$ such that for all $\lambda\in(0,\lambda_\star]$ and all $\tau,\varepsilon\in[0,1]$ the map $\fX_{\tau,\varepsilon;\Cdot}\,:\,\sY_{\tau,\varepsilon}\to \sY_{\tau,\varepsilon}$ is well-defined and has the unique fixed point denoted by $X_{\tau,\varepsilon;\Cdot}\equiv (g_{\tau,\varepsilon;\Cdot},r_{\tau,\varepsilon;\Cdot},z_{\tau,\varepsilon;\Cdot},W_{\tau,\varepsilon;\Cdot})$ such that
\begin{equation}\label{eq:cor_bound}
 \|X_{\Cdot}-X_{\tau,\varepsilon;\Cdot}\|_{\tilde\sX_{\tau,\varepsilon}}
 \leq 
 C\,\lambda_{\tau\vee\varepsilon}^\kappa\,,
\end{equation}
where $X_{\Cdot}:=X_{\tau,\varepsilon;\Cdot}$ with $\tau=0$, $\varepsilon=0$, and
\begin{equation}\label{eq:cor_bound_2}
 \|V_{\tau,\varepsilon;\Cdot}\|_{\sV^{8,4;1-40\kappa}} \leq C\,,
 \qquad
 \|V_{\Cdot}-V_{\tau,\varepsilon;\Cdot}\|_{\tilde\sV^{2,3;1-40\kappa}} \leq C\,\lambda_{\tau\vee\varepsilon}^\kappa\,
\end{equation}
where
\begin{equation}
 V_{\tau,\varepsilon;t}\equiv (V^m_{\tau,\varepsilon;t})_{m\in\bN_+}:= U(\theta^2_{\varepsilon;t}/g_{\tau,\varepsilon;1},r_{\tau,\varepsilon;t},z_{\tau,\varepsilon;t})+W_{\tau,\varepsilon;t}\in \sN
\end{equation}
for all $\tau,\varepsilon\in[0,1]$ and $t\in(0,1]$, we set $V_t:=V_{\tau,\varepsilon;t}$ with $\tau=0$, $\varepsilon=0$, the norms $\|\Cdot\|_{\sN^m}$, $\|\Cdot\|_{\tilde\sN^m}$ were introduced in Def.~\ref{dfn:sN} and the function $\theta_{\varepsilon;t}$ was introduced in Def.~\ref{dfn:theta_ep_t}.
\end{cor}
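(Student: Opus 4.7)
The plan is to derive the corollary as a direct application of Banach's fixed point theorem using the estimates already packaged in Theorem~\ref{thm:contraction}, then to deduce the convergence bound \eqref{eq:cor_bound} by the usual trick of writing the difference of the two fixed points as a telescoping sum, and finally to extract the bounds on $V_{\tau,\varepsilon;\Cdot}$ by decomposing it into its $U$- and $W$-parts.

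For the fixed point step, I fix $\lambda_\star \in (0,1]$ small enough that $C\lambda_\star^\kappa \leq 1/2$ and $\lambda_\star^\kappa \leq 1/C$, where $C$ is the constant appearing in Theorem~\ref{thm:contraction}. Given any $X_\Cdot \in \sY_{\tau,\varepsilon}$, part~(A) then yields $\|\fX_{\tau,\varepsilon;\Cdot}(X_\Cdot)\|_{\sX_{\tau,\varepsilon}} \leq 1/2 \leq 1$ as well as $\sP_\varepsilon(\fX_{\tau,\varepsilon;\Cdot}(X_\Cdot)) \geq 1/C \geq \lambda^\kappa$, while Remark~\ref{rem:g_r_z_real} guarantees that the components $\mathbf{g}_{\tau,\varepsilon;t},\mathbf{r}_{\tau,\varepsilon;t},\mathbf{z}_{\tau,\varepsilon;t}$ are real-valued; altogether $\fX_{\tau,\varepsilon;\Cdot}$ sends $\sY_{\tau,\varepsilon}$ into itself. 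Since $\sY_{\tau,\varepsilon}$ is a closed subset of the Banach space $\sX_{\tau,\varepsilon}$ by Lemma~\ref{lem:complete}, it is a complete metric space. Part~(B) then says $\fX_{\tau,\varepsilon;\Cdot}$ is a strict contraction with Lipschitz constant at most $1/2$, so Banach's theorem produces the unique fixed point $X_{\tau,\varepsilon;\Cdot} \in \sY_{\tau,\varepsilon}$.

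To prove \eqref{eq:cor_bound} I combine the two fixed point equations, writing
\begin{equation*}
X_\Cdot - X_{\tau,\varepsilon;\Cdot}
= \bigl(\fX_\Cdot(X_\Cdot) - \fX_{\tau,\varepsilon;\Cdot}(X_\Cdot)\bigr)
+ \bigl(\fX_{\tau,\varepsilon;\Cdot}(X_\Cdot) - \fX_{\tau,\varepsilon;\Cdot}(X_{\tau,\varepsilon;\Cdot})\bigr),
\end{equation*}
and measuring each summand in $\|\Cdot\|_{\tilde\sX_{\tau,\varepsilon}}$. Part~(D) of Theorem~\ref{thm:contraction} (applied with $Z_\Cdot := X_\Cdot \in \sY$) bounds the first term by $C\lambda_{\tau\vee\varepsilon}^\kappa$, and part~(C) (with the same choice of $Z_\Cdot$) bounds the second by $C\lambda^\kappa \|X_\Cdot - X_{\tau,\varepsilon;\Cdot}\|_{\tilde\sX_{\tau,\varepsilon}}$. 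Since $C\lambda^\kappa \leq 1/2$, the last term can be absorbed into the left-hand side, yielding \eqref{eq:cor_bound} with doubled constant.

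The bounds \eqref{eq:cor_bound_2} on $V_{\tau,\varepsilon;\Cdot}$ follow by splitting $V_{\tau,\varepsilon;t} = U(\theta_{\varepsilon;t}^2/g_{\tau,\varepsilon;1},r_{\tau,\varepsilon;t},z_{\tau,\varepsilon;t}) + W_{\tau,\varepsilon;t}$ and estimating each piece. The $W$-part is controlled by $\|W_{\tau,\varepsilon;\Cdot}\|_{\sW^{8,4;2-80\kappa}_{\tau,\varepsilon}} \leq 1$ (from $X_{\tau,\varepsilon;\Cdot} \in \sY_{\tau,\varepsilon}$) combined with the embeddings of Remarks~\ref{rem:sV_sW} and~\ref{rem:sV_norm_property}, while the $U$-part, which has only $m \in \{2,4\}$ components, is handled directly from the $\|X_{\tau,\varepsilon;\Cdot}\|_{\sP}$-piece of the $\sX_{\tau,\varepsilon}$-norm and the explicit Dirac-delta form of the kernels recorded in Remark~\ref{rem:U_dfn}. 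The corresponding estimate on $V_\Cdot - V_{\tau,\varepsilon;\Cdot}$ is obtained by the same decomposition, using \eqref{eq:cor_bound}. The most delicate point here is matching the coefficient $\theta_{\varepsilon;t}^2/g_{\tau,\varepsilon;1} = \lambda\,\lambda_t^2/\lambda_{\varepsilon\vee t}^2$ against the scale weight $\lambda_s^{-\rho_{\gamma,\kappa}(m)}$ with $\gamma = 1-40\kappa$; the precise value of $\gamma$ (smaller than the $2-80\kappa$ used for $W$) is what makes this matching work, reflecting the logarithmic running of the effective coupling encoded in $\theta_{\varepsilon;t}$.
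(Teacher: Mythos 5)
Your first two steps coincide with the paper's proof: well-definedness and contraction of $\fX_{\tau,\varepsilon;\Cdot}$ on the closed set $\sY_{\tau,\varepsilon}$ via Theorem~\ref{thm:contraction}~(A),~(B), Remark~\ref{rem:g_r_z_real} and Lemma~\ref{lem:complete}, then Banach's theorem, and the bound~\eqref{eq:cor_bound} by the same add-and-subtract of $\fX_{\tau,\varepsilon;\Cdot}(X_\Cdot)$ with parts~(C),~(D) and absorption of the contractive term. That part is fine and needs no further comment.

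The step that does not go through as you wrote it is the verification of~\eqref{eq:cor_bound_2} for the quartic part. Taking the displayed coefficient literally as $\theta_{\varepsilon;t}^2/g_{\tau,\varepsilon;1}=\lambda\,\lambda_t^2/\lambda_{\varepsilon\vee t}^2$, the $\sV^{4;1-40\kappa}$ norm of the $U^4$-term requires this coefficient to be bounded by a constant times $\lambda_t^{1-32\kappa}$ uniformly in $\varepsilon$; at scales $t\sim\varepsilon$ the coefficient is of order $\lambda$ while $\lambda_t^{1-32\kappa}\sim\lambda_\varepsilon^{1-32\kappa}\to0$ as $\varepsilon\searrow0$, so the claimed ``matching'' fails and no choice of $\gamma$ rescues it. The coefficient that actually appears at the fixed point, and the one the paper's proof works with, is $\theta_{\varepsilon;t}^2/g_{\tau,\varepsilon;t}$ (cf.\ Def.~\ref{dfn:DV} and the map $\fV^{(0)}_{\varepsilon;t}$ of Lemma~\ref{lem:contraction_V}; the $g_{\tau,\varepsilon;1}$ in the corollary's display should be read as $g_{\tau,\varepsilon;t}$). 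Moreover, controlling it is not a matter of the $\|\Cdot\|_{\sP}$-piece of the norm, which only bounds $|g_{\tau,\varepsilon;t}|$ from above: what is needed is the lower-bound constraint $\sP_\varepsilon(X_{\tau,\varepsilon;\Cdot})=\inf_t\lambda_{\varepsilon\vee t}\,g_{\tau,\varepsilon;t}\ge\lambda^\kappa$ built into $\sY_{\tau,\varepsilon}$, which gives $|\theta^2_{\varepsilon;t}/g_{\tau,\varepsilon;t}|\le\lambda_t^{1-12\kappa}$ (Lemma~\ref{lem:contraction_g_aux}~(A)); this is exactly how Lemma~\ref{lem:contraction_V}~(A) with $\gamma(2)=1-20\kappa$ is proved, and the paper's proof of the corollary simply cites that lemma. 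Similarly, for the difference bound your ``same decomposition, using~\eqref{eq:cor_bound}'' misses the term coming from replacing $\theta_{0;t}=1$ by $\theta_{\varepsilon;t}$ at fixed argument (the paper inserts $\fV^{(0)}_{\varepsilon;\Cdot}(X_\Cdot)$ and bounds this piece by $C\lambda_\varepsilon^\kappa$ via Lemma~\ref{lem:contraction_g_aux}~(D), i.e.\ Lemma~\ref{lem:contraction_V}~(D)); it is not controlled by~\eqref{eq:cor_bound} alone. Once these two points are repaired — use $g_{\tau,\varepsilon;t}$ with the $\sY_{\tau,\varepsilon}$-constraint, and add the $\theta$-replacement term to the difference decomposition — your argument reduces to the paper's.
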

\begin{proof}
The fact that for all sufficiently small $\lambda\in(0,1]$ the map $\fX_{\tau,\varepsilon;\Cdot}\,:\,\sY_{\tau,\varepsilon}\to \sY_{\tau,\varepsilon}$ is well-defined and is a contraction follows from Theorem~\ref{thm:contraction} and Remark~\ref{rem:g_r_z_real}. The existence and uniqueness of the fixed point $X_{\tau,\varepsilon;\Cdot}$ is then a consequence of Lemma~\ref{lem:complete} and the Banach fixed point theorem. In order to prove the bound~\eqref{eq:cor_bound} note that
\begin{equation}
 X_\Cdot-X_{\tau,\varepsilon;\Cdot}
 =
 \fX_\Cdot(X_\Cdot) 
 -
 \fX_{\tau,\varepsilon;\Cdot}(X_{\tau,\varepsilon;\Cdot})\,.
\end{equation}
Consequently, by the triangle inequality we obtain
\begin{equation}
 \|X_\Cdot-X_{\tau,\varepsilon;\Cdot}\|_{\tilde\sX_{\tau,\varepsilon}}
 \leq
 \|\fX_\Cdot(X_{\Cdot}) 
 -
 \fX_{\tau,\varepsilon;\Cdot}(X_{\Cdot})\|_{\tilde\sX_{\tau,\varepsilon}}
 +
 \|\fX_{\tau,\varepsilon;\Cdot}(X_{\Cdot})
 -
 \fX_{\tau,\varepsilon;\Cdot}(X_{\tau,\varepsilon;\Cdot})\|_{\tilde\sX_{\tau,\varepsilon}}\,.
\end{equation}
Since $X_\Cdot\in \sY$, $X_{\tau,\varepsilon;\Cdot}\in \sY_{\tau,\varepsilon}$ by the bounds~(C) and~(D) stated in Theorem~\ref{thm:contraction} it holds
\begin{equation}
 \|\fX_{\tau,\varepsilon;\Cdot}(X_\Cdot)
 -
 \fX_{\tau,\varepsilon;\Cdot}(X_{\tau,\varepsilon;\Cdot})\|_{\tilde\sX_{\tau,\varepsilon}}
 \leq
 C\,\lambda^\kappa\,
 \|X_\Cdot-X_{\tau,\varepsilon;\Cdot}\|_{\tilde\sX_{\tau,\varepsilon}}
\end{equation}
and
\begin{equation}
 \|\fX_\Cdot(X_{\Cdot}) 
 -
 \fX_{\tau,\varepsilon;\Cdot}(X_{\Cdot})\|_{\tilde\sX_{\tau,\varepsilon}}
 \leq 
 C\,\lambda_{\tau\vee\varepsilon}^\kappa\,.
\end{equation}
Consequently, we have
\begin{equation}
 \|X_\Cdot-X_{\tau,\varepsilon;\Cdot}\|_{\tilde\sX_{\tau,\varepsilon}}
 \leq
 C\,\lambda_{\tau\vee\varepsilon}^\kappa
 +
 C\,\lambda^\kappa\,
 \|X_\Cdot-X_{\tau,\varepsilon;\Cdot}\|_{\tilde\sX_{\tau,\varepsilon}}.
\end{equation}
This proves the bound~\eqref{eq:cor_bound}. Next, note that
\begin{equation}
 V_{\tau,\varepsilon;t}=\fV^{(0)}_{\varepsilon;t}(X_{\tau,\varepsilon;\Cdot}),
 \qquad
 V_{t} - V_{\tau,\varepsilon;t}
 =
 \fV^{(0)}_{t}(X_{\Cdot})
 -
 \fV^{(0)}_{\varepsilon;\Cdot}(X_{\Cdot})
 +
 \fV^{(0)}_{\varepsilon;\Cdot}(X_{\Cdot})
 -
 \fV^{(0)}_{\varepsilon;\Cdot}(X_{\tau,\varepsilon;\Cdot})
\end{equation}
where the map $\fV^{(0)}_{\varepsilon;t}$ was defined in Lemma~\ref{lem:contraction_V}. Hence, the bounds~\eqref{eq:cor_bound_2} are immediate consequences of Lemma~\ref{lem:contraction_V}. This finishes the proof.
\end{proof}

\begin{lem}\label{lem:contraction_g_aux}
For all $\tau,\varepsilon\in[0,1]$ and
\begin{equation}
 X_\Cdot\equiv(g_\Cdot,r_\Cdot,z_\Cdot,W_\Cdot)\in \sY_{\tau,\varepsilon},
 \qquad
 Y_\Cdot\equiv(\tilde g_\Cdot,\tilde r_\Cdot,\tilde z_\Cdot,\tilde W_\Cdot)\in \sY_{\tau,\varepsilon},
 \qquad
 Z\equiv(\hat g_\Cdot,\hat r_\Cdot,\hat z_\Cdot,\hat W_\Cdot)\in\sY
\end{equation}
it holds
\begin{itemize}
 \item[(A)] $\sup_{s\in(0,1]}\lambda^{12\kappa-1}_{s}\,|\theta^2_{\varepsilon;s}/g_{s}|
 \leq 1$,
 \item[(B)] $\sup_{s\in(0,1]}\lambda^{12\kappa-1}_{s}\,|\theta^2_{\varepsilon;s}/g_{s}-\theta^2_{\varepsilon;s}/\tilde g_{s}|
 \leq \|X_\Cdot-Y_\Cdot\|_{\sX_{\tau,\varepsilon}}$,
 \item[(C)] $\sup_{s\in(0,1]}\lambda^{12\kappa-1}_{s}\,|\theta^2_{\varepsilon;s}/g_{s}-\theta^2_{\varepsilon;s}/\hat g_{s}|
 \leq \|X_\Cdot-Z_\Cdot\|_{\tilde\sX_{\tau,\varepsilon}}$,
 \item[(D)] $\sup_{s\in(0,1]}\lambda^{12\kappa-1}_{s}\,|1/\hat g_{s}-\theta^2_{\varepsilon;s}/\hat g_{s}|
 \leq \lambda_\varepsilon^\kappa$,
\end{itemize}
where $\theta_{\varepsilon;t}$ was introduced in Def.~\ref{dfn:theta_ep_t}.
\end{lem}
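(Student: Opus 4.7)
The plan is to extract from the definitions of $\sY_{\tau,\varepsilon}$ and $\sY$ a handful of elementary pointwise bounds on $g_s$, $\tilde g_s$ and $\hat g_s$, then reduce each of (A)--(D) to routine exponent bookkeeping in $\lambda$, $\lambda_s$, $\lambda_{\varepsilon\vee s}$ and $\lambda_\varepsilon$. From $\|X_\Cdot\|_{\sX_{\tau,\varepsilon}}\le 1$ I read off $|g_s|\le\lambda_s^{-1-10\kappa}$ and $|\tilde g_s-g_s|\le\lambda_s^{-1-10\kappa}\,\|X_\Cdot-Y_\Cdot\|_{\sX_{\tau,\varepsilon}}$; since the $\|\Cdot\|_{\sP}$ seminorm appears unchanged in $\|\Cdot\|_{\tilde\sX_{\tau,\varepsilon}}$, the analogous inequality $|\hat g_s-g_s|\le\lambda_s^{-1-10\kappa}\|X_\Cdot-Z_\Cdot\|_{\tilde\sX_{\tau,\varepsilon}}$ also holds. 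The condition $\sP_\varepsilon(X_\Cdot)\ge\lambda^\kappa$ together with $\Im g_s=0$ forces $g_s\ge\lambda^\kappa/\lambda_{\varepsilon\vee s}>0$ (and similarly $\tilde g_s\ge\lambda^\kappa/\lambda_{\varepsilon\vee s}$); since $Z\in\sY$ corresponds to $\varepsilon=0$, one gets instead $\hat g_s\ge\lambda^\kappa/\lambda_s$. Monotonicity of $t\mapsto\lambda_t$ on $(0,1]$ yields the chain $\lambda_s\le\lambda_{\varepsilon\vee s}\le\lambda\le 1$ and $\theta_{\varepsilon;s}=\lambda_s/\lambda_{\varepsilon\vee s}\in(0,1]$.

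For (A) I write $\theta^2_{\varepsilon;s}/g_s=\lambda_s^2/(\lambda_{\varepsilon\vee s}^2\,g_s)\le\lambda^{-\kappa}\lambda_s^2/\lambda_{\varepsilon\vee s}\le\lambda^{-\kappa}\lambda_s$, so multiplying by $\lambda_s^{12\kappa-1}$ gives $\lambda_s^{12\kappa}/\lambda^\kappa\le\lambda^{11\kappa}\le 1$. For (B) I use the identity $\theta^2_{\varepsilon;s}(1/g_s-1/\tilde g_s)=\theta^2_{\varepsilon;s}(\tilde g_s-g_s)/(g_s\tilde g_s)$ together with $g_s\tilde g_s\ge\lambda^{2\kappa}/\lambda_{\varepsilon\vee s}^2$; all factors of $\lambda_{\varepsilon\vee s}$ cancel and what remains after multiplying by $\lambda_s^{12\kappa-1}$ is $(\lambda_s/\lambda)^{2\kappa}\,\|X_\Cdot-Y_\Cdot\|_{\sX_{\tau,\varepsilon}}\le\|X_\Cdot-Y_\Cdot\|_{\sX_{\tau,\varepsilon}}$. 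Item (C) is the same calculation with $g_s\tilde g_s$ replaced by $g_s\hat g_s\ge\lambda^{2\kappa}/(\lambda_{\varepsilon\vee s}\lambda_s)$; the extra factor $\lambda_{\varepsilon\vee s}/\lambda_s\ge 1$ only improves the bound.

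For (D) I rewrite $1/\hat g_s-\theta^2_{\varepsilon;s}/\hat g_s=(1-\theta^2_{\varepsilon;s})/\hat g_s$ and split on the sign of $s-\varepsilon$. When $s\ge\varepsilon$ one has $\varepsilon\vee s=s$, so $\theta_{\varepsilon;s}=1$ and the left-hand side vanishes identically. When $s<\varepsilon$, $0\le 1-\theta^2_{\varepsilon;s}\le 1$ and $1/\hat g_s\le\lambda^{-\kappa}\lambda_s\le\lambda^{-\kappa}\lambda_\varepsilon$; multiplying by $\lambda_s^{12\kappa-1}$ and using $\lambda_s\le\lambda_\varepsilon$ gives $\lambda_\varepsilon^{12\kappa}/\lambda^\kappa\le\lambda_\varepsilon^\kappa$, the final inequality reducing to $\lambda_\varepsilon^{11}\le\lambda$, which follows from $\lambda_\varepsilon\le\lambda\le 1$.

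There is no analytic obstacle; the content is purely exponent bookkeeping, and the only non-mechanical step is the case split in (D). The factor $\lambda_\varepsilon^\kappa$ on the right-hand side is precisely what one extracts from the observation that $1-\theta^2_{\varepsilon;s}$ vanishes once $s\ge\varepsilon$, as the estimate would otherwise be only uniform in $\varepsilon$.
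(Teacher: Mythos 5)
Your proof is correct and follows essentially the same route as the paper: both rest on the lower bounds $g_s,\tilde g_s\geq\lambda^\kappa/\lambda_{\varepsilon\vee s}$ and $\hat g_s\geq\lambda^\kappa/\lambda_s$ coming from $\sP_\varepsilon\geq\lambda^\kappa$, the upper bounds on $|g_s|$ and on differences from the $\|\Cdot\|_{\sP}$ part of the norms, the cancellations $\lambda_{\varepsilon\vee s}\theta^2_{\varepsilon;s}\leq\lambda_s$ and $\lambda^2_{\varepsilon\vee s}\theta^2_{\varepsilon;s}\leq\lambda^2_s$, and for (D) the observation that $1-\theta^2_{\varepsilon;s}$ vanishes for $s\geq\varepsilon$ while $\lambda_s\leq\lambda_\varepsilon$ otherwise. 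The only difference is cosmetic: the paper records slightly sharper intermediate bounds (with weights $\lambda_s^{\kappa-1}$, $\lambda_s^{2\kappa-1}$) before invoking the definition of $\sY_{\tau,\varepsilon}$, whereas you carry out the exponent bookkeeping directly.
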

\begin{proof}
First observe that
$\lambda_{\varepsilon\vee s} \theta^2_{\varepsilon;s}\leq \lambda_{s}$,
$\lambda^{2}_{\varepsilon\vee s} \theta^2_{\varepsilon;s}\leq \lambda^{2}_{s}$ and $\lambda_s\leq\lambda$.
The above bounds and Def.~\ref{dfn:sX} imply that
\begin{itemize}
 \item[(A${}_1$)] $\sup_{s\in(0,1]}\lambda^{\kappa-1}_{s}\,|\theta^2_{\varepsilon;s}/g_{s}|
 \leq \lambda_s^\kappa/\sP_\varepsilon(X_\Cdot)$,
 \item[(B${}_1$)] $\sup_{s\in(0,1]}\lambda^{12\kappa-1}_{s}\,|\theta^2_{\varepsilon;s}/g_{s}-\theta^2_{\varepsilon;s}/\tilde g_{s}|
 \leq \lambda_s^\kappa/\sP_\varepsilon(X_\Cdot)~\lambda_s^\kappa/\sP_\varepsilon(Y_\Cdot)~\|X_\Cdot-Y_\Cdot\|_{\sP}$,
 \item[(C${}_1$)] $\sup_{s\in(0,1]}\lambda^{12\kappa-1}_{s}\,|\theta^2_{\varepsilon;s}/g_{s}-\theta^2_{\varepsilon;s}/\hat g_{s}|
 \leq \lambda_s^\kappa/\sP_\varepsilon(X_\Cdot)~\lambda_s^\kappa/\sP(X_\Cdot)~\|X_\Cdot-Z_\Cdot\|_{\sP}$,
 \item[(D${}_1$)] $\sup_{s\in(0,1]}\lambda^{2\kappa-1}_{s}\,|1/\hat g_{s}-\theta^2_{\varepsilon;s}/\hat g_{s}|
 \leq \lambda_\varepsilon^\kappa \lambda_s^\kappa/\sP(X_\Cdot)$.
\end{itemize}
The lemma follows now from Def.~\ref{dfn:sX}.
\end{proof}

\begin{lem}\label{lem:contraction_U}
Let $\alpha,\beta\in[1,\infty)$, $\gamma\in[0,\infty)$. There exists $C\in(0,\infty)$ such that for all $\tau,\varepsilon\in[0,1]$ it holds:
\begin{itemize}
\item[(A)] 
 $\|s\mapsto \lambda_s^{-\gamma-8\kappa}\, U(1,0,0)\|_{\sW^{\alpha,\beta;\gamma}_{\tau,\varepsilon}}\leq C$,
 
\item[(B)] 

 $\|s\mapsto \lambda_s^{-\gamma-4\kappa}\,s\, U(0,1,0)\|_{\sW^{\alpha,\beta;\gamma}_{\tau,\varepsilon}}\leq C$,

\item[(C)]
 
 $\|s\mapsto \lambda_s^{-\gamma-4\kappa}\, U(0,0,1)\|_{\sW^{\alpha,\beta;\gamma}_{\tau,\varepsilon}}
 \leq C$.
\end{itemize}
Moreover, analogous bounds with $\sW^{\alpha,\beta;\gamma}_{\tau,\varepsilon}$ replaced by $\tilde\sW^{\alpha,\beta;\gamma}_{\tau,\varepsilon}$ are true.
\end{lem}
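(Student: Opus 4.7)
The plan is to reduce the $\sW^{\alpha,\beta;\gamma}_{\tau,\varepsilon}$ bounds to the stronger $\sV^{2\alpha,\beta;\gamma}$ bounds via Remark~\ref{rem:sW_sV}, and then to verify the latter by an explicit computation that exploits the fact that the nonzero kernels of $U(1,0,0)$, $U(0,1,0)$, $U(0,0,1)$ are (multiples of) Dirac measures on the diagonal, as recorded in Remark~\ref{rem:U_dfn}.

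First I would apply Remark~\ref{rem:sW_sV} to obtain
\[
\|V_\Cdot\|_{\sW^{\alpha,\beta;\gamma}_{\tau,\varepsilon}}\leq \|V_\Cdot\|_{\sV^{2\alpha,\beta;\gamma}},\qquad \|V_\Cdot\|_{\tilde\sW^{\alpha,\beta;\gamma}_{\tau,\varepsilon}}\leq \|V_\Cdot\|_{\tilde\sV^{2\alpha,\beta;\gamma}},
\]
so it suffices to establish the analogous $\sV^{2\alpha,\beta;\gamma}$ and $\tilde\sV^{2\alpha,\beta;\gamma}$ bounds, which are independent of $\tau,\varepsilon$ and do not involve the map $\fA$. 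Now note that $U(1,0,0)$ contributes only at $m=4$ with $|a|=0$, $U(0,1,0)$ only at $m=2$ with $|a|=0$, and $U(0,0,1)$ only at $m=2$ with $|a|=1$. Thus each of the three $\sV^{2\alpha,\beta;\gamma}$ norms collapses to a single $\sV^{m;\gamma}$ norm, multiplied by the combinatorial factor $(2\alpha)^m m^\beta$ with $m\in\{2,4\}$ and summed over the finitely many relevant spinor tuples $\sigma$.

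Next, for each of the Dirac measures $U^{4,0,\sigma}=\delta^{(4)}/6$, $U^{2,0,\sigma}=\delta^{(2)}/2$ and $U^{2,a,\sigma}_\partial=\delta^{(2)}/4$ supplied by Remark~\ref{rem:U_dfn}, Def.~\ref{dfn:weights} of the weights yields the trivial identities
\[
 \|w^m_s \delta^{(m)}\|_{\sM^m}=\sup_{x_1\in\bR^2} w^m_s(x_1,\ldots,x_1)=1,\qquad \|\tilde w^m_s\delta^{(m)}\|_{\sM^m}=\sup_{x_1\in\bR^2}(1+|x_1|)^{-1/2}=1,
\]
because $\rD(x_1,\ldots,x_1)=0$ collapses the stretched-exponential and polynomial distance factors to $1$. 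Plugging this into Def.~\ref{dfn:V_space} of $\|\Cdot\|_{\sV^{m;\gamma}}$ and using $\rho_{\gamma,\kappa}(2)=\gamma+4\kappa$, $\rho_{\gamma,\kappa}(4)=\gamma+8\kappa$ together with the values of $|a|$ above, one checks that the $\lambda_s^{-\rho_{\gamma,\kappa}(m)}s^{2-m/2-|a|}$ factor in the definition of the norm is exactly compensated by the prefactor supplied in the lemma statement, leaving a universal constant; the same cancellation works verbatim for the $\tilde\sV^{m;\gamma}$ norm. Multiplying by $(2\alpha)^m m^\beta$ and by the (finite) number of relevant spinor tuples $\sigma$ gives a constant $C\in(0,\infty)$ depending only on $\alpha,\beta,\gamma$, as required.

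No substantive obstacle is expected: the argument is pure power-counting bookkeeping once the two key observations above are in place (the $\sW$-to-$\sV$ reduction and the triviality of weighted norms of diagonal Dirac measures). The three cases (A), (B), (C) are structurally identical modulo the values of $m$ and $|a|$, and the $\sW$/$\tilde\sW$ variants are handled simultaneously because the weights $w^m_s$ and $\tilde w^m_s$ have the same diagonal supremum.
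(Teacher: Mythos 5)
Your overall route coincides with the paper's: its proof likewise passes from $\sW^{\alpha,\beta;\gamma}_{\tau,\varepsilon}$ (and $\tilde\sW^{\alpha,\beta;\gamma}_{\tau,\varepsilon}$) to $\sV^{2\alpha,\beta;\gamma}$ (and $\tilde\sV^{2\alpha,\beta;\gamma}$) via Remark~\ref{rem:sW_sV}, and then asserts the $\sV$-level bounds directly from Def.~\ref{dfn:U_functional} and Def.~\ref{dfn:V_space}; your added value is spelling out that the nonzero kernels are diagonal Dirac measures whose weighted $\sM^m$ norms equal a constant, so everything reduces to matching powers of $\lambda_s$ and $s$.

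However, your central claim that the norm weight $\lambda_s^{-\rho_{\gamma,\kappa}(m)}s^{2-m/2-|a|}$ is ``exactly compensated by the prefactor supplied in the lemma statement'' is not true for the prefactors as literally written: they \emph{equal} the norm weights instead of inverting them. For (A) the only kernels sit at $m=4$, $|a|=0$, where $\rho_{\gamma,\kappa}(4)=\gamma+8\kappa$ and $s^{2-m/2-|a|}=1$, so the quantity to bound is $\sup_{s\in(0,1]}\lambda_s^{-(\gamma+8\kappa)}\cdot\lambda_s^{-(\gamma+8\kappa)}=\sup_{s\in(0,1]}\lambda_s^{-2(\gamma+8\kappa)}=\infty$ because $\lambda_s\searrow0$ as $s\searrow0$ (Def.~\ref{dfn:lambda}) and $\gamma+8\kappa>0$; similarly (C) produces $\sup_s\lambda_s^{-2(\gamma+4\kappa)}=\infty$, while (B) produces $\sup_s\lambda_s^{-2(\gamma+4\kappa)}s^2$, which is finite only with a constant of order $\lambda^{-2(\gamma+4\kappa)}$, not uniform in the coupling. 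The exact cancellation you describe holds precisely when the prefactors are the reciprocals $\lambda_s^{\gamma+8\kappa}$, $\lambda_s^{\gamma+4\kappa}s^{-1}$ and $\lambda_s^{\gamma+4\kappa}$, and this is evidently the intended statement (the written exponents appear to be a sign slip carried through the paper's own proof): it is exactly what the downstream use in Lemma~\ref{lem:contraction_V} requires, where the available inputs from Lemma~\ref{lem:contraction_g_aux} and Def.~\ref{dfn:sX} are $|\theta^2_{\varepsilon;s}/g_s|\le\lambda_s^{\gamma(2)+8\kappa}$, $|r_s|\le\lambda_s^{\gamma(1)+4\kappa}s^{-1}$ and $|z_s|\le\lambda_s^{\gamma(1)+4\kappa}$. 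So your computation is the right one for the corrected prefactors, but as a verification of the statement you were given it contains an arithmetic error that a literal check of the exponents would have exposed; you should either flag the discrepancy in the statement or redo the bookkeeping with the inverted prefactors, after which the argument matches the paper's.
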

\begin{proof}
From Def.~\ref{dfn:U_functional} of the functional $U(g,r,z)$ and Def.~\ref{dfn:V_space} of the norms~$\|\Cdot\|_{\sV^{\alpha,\beta;\gamma}}$ and~$\|\Cdot\|_{\tilde\sV^{\alpha,\beta;\gamma}}$ it follows that the bounds
\begin{itemize}
\item[(A${}_1$)] 
 $\|s\mapsto \lambda_s^{-\gamma-8\kappa}\, U(1,0,0)\|_{\sV^{2\alpha,\beta;\gamma}}\leq C$,
 
\item[(B${}_1$)] 

 $\|s\mapsto \lambda_s^{-\gamma-4\kappa}\,s\, U(0,1,0)\|_{\sV^{2\alpha,\beta;\gamma}}\leq C$,

\item[(C${}_1$)]
 
 $\|s\mapsto \lambda_s^{-\gamma-4\kappa}\, U(0,0,1)\|_{\sV^{2\alpha,\beta;\gamma}}
 \leq C$
\end{itemize}
as well as the bounds with $\sV^{2\alpha,\beta;\gamma}$ replaced by $\tilde\sV^{2\alpha,\beta;\gamma}$ hold true. The bounds~(A),~(B),~(C) are consequences of the bounds~(A${}_1$),~(B${}_1$),~(C${}_1$) and Remark~\ref{rem:sW_sV}.
\end{proof}

\begin{rem}\label{rem:sV_sW_function}
It follows immediately from Def.~\ref{dfn:V_space} of the norm~$\|\Cdot\|_{\sV^{\alpha,\beta;\gamma}}$ and Def.~\ref{dfn:sW} of the norm~$\|\Cdot\|_{\sW^{\alpha,\beta;\gamma}_{\tau,\varepsilon}}$ that for all continuous functions $h_\Cdot\,:\,(0,1]\to\bR$ it holds
\begin{equation}
 \|s\mapsto h_s\,V_s\|_{\sV^{\alpha,\beta;\gamma}} \leq\|h_\Cdot\|_\infty\, \|V_\Cdot\|_{\sV^{\alpha,\beta;\gamma}},
 \qquad
 \|s\mapsto h_s\,V_s\|_{\sW^{\alpha,\beta;\gamma}_{\tau,\varepsilon}} \leq \|h_\Cdot\|_\infty\,\|V_\Cdot\|_{\sW^{\alpha,\beta;\gamma}_{\tau,\varepsilon}},
\end{equation}
where $\|h_\Cdot\|_\infty:=\sup_{s\in(0,1]} |h_s|$. The analogous bounds with $\sV^{\alpha,\beta;\gamma}$ replaced by $\tilde\sV^{\alpha,\beta;\gamma}$ and $\sW^{\alpha,\beta;\gamma}_{\tau,\varepsilon}$ replaced by $\tilde\sW^{\alpha,\beta;\gamma}_{\tau,\varepsilon}$ hold true as well.
\end{rem}

\begin{lem}\label{lem:contraction_V}
For $\tau,\varepsilon\in[0,1]$, $X_\Cdot\equiv(g_\Cdot,r_\Cdot,z_\Cdot,W_\Cdot)\in\sX_{\tau,\varepsilon}$ and $s\in(0,1]$ define
\begin{equation}
 \fV^{(1)}_{\varepsilon;s}(X_\Cdot):=U(0,r_s,z_s),
 \qquad
 \fV^{(2)}_{\varepsilon;s}(X_\Cdot):=U(\theta^2_{\varepsilon;s}/g_{s},0,0),
 \qquad
 \fV^{(3)}_{\varepsilon;s}(X_\Cdot):=W_s
\end{equation}
and
\begin{equation}
 \fV^{(0)}_{\varepsilon;s}(X_\Cdot):=
 \fV^{(1)}_{\varepsilon;s}(X_\Cdot)+
 \fV^{(2)}_{\varepsilon;s}(X_\Cdot)+
 \fV^{(3)}_{\varepsilon;s}(X_\Cdot)
 =U(\theta^2_{\varepsilon;s}/g_{s},r_s,z_s)+W_s.
\end{equation}
We omit $\varepsilon$ if $\varepsilon=0$. There exists $\lambda_\star\in(0,1]$ and $C\in(0,\infty)$ such that for all $\lambda\in(0,\lambda_\star]$, all $\tau,\varepsilon\in[0,1]$, all $X_\Cdot,Y_\Cdot\in \sY_{\tau,\varepsilon}$, $Z_\Cdot\in\sY$ and all $i\in\{1,2,3\}$ it holds:
\begin{itemize}
 \item[(A)] 
 $
 \|s\mapsto \fV^{(i)}_{\varepsilon;s}(X_\Cdot)\|_{\sW^{8,4;\gamma(i)}_{\tau,\varepsilon}}
 \leq
 C$,
 \item[(B)] $\|s\mapsto (\fV^{(i)}_{\varepsilon;s}(X_\Cdot)-\fV^{(i)}_{\varepsilon;s}(Y_\Cdot))\|_{\sW^{8,4;\gamma(i)}_{\tau,\varepsilon}}
 \leq 
 C\,\|X_\Cdot-Y_\Cdot\|_{\sX_{\tau,\varepsilon}}$,

 \item[(C)] $\|s\mapsto (\fV^{(i)}_{\varepsilon;s}(X_\Cdot)-\fV^{(i)}_{\varepsilon;s}(Z_\Cdot))\|_{\tilde\sW^{2,3;\gamma(i)}_{\tau,\varepsilon}}
 \leq
 C\, \|X_\Cdot-Z_\Cdot\|_{\tilde\sX_{\tau,\varepsilon}}$,

 \item[(D)] $\|s\mapsto (\fV^{(i)}_{s}(Z_\Cdot)-\fV^{(i)}_{\varepsilon;s}(Z_\Cdot))\|_{\tilde\sW^{2,3;\gamma(i)}_{\tau,\varepsilon}}
 \leq
 C\,\lambda_\varepsilon^\kappa$,
\end{itemize}
where $\gamma(0)=\gamma(1)=1-40\kappa$, $\gamma(2)=1-20\kappa$ and $\gamma(3)=2-80\kappa$.
\end{lem}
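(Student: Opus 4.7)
} The plan is to reduce each of the four bounds to a combination of (i) the scalar estimates of Lemma~\ref{lem:contraction_g_aux}, (ii) the functional estimates of Lemma~\ref{lem:contraction_U} for $U(g,r,z)$, and (iii) the scalar-multiplication property of the $\sW^{\alpha,\beta;\gamma}_{\tau,\varepsilon}$ and $\tilde\sW^{\alpha,\beta;\gamma}_{\tau,\varepsilon}$ norms recorded in Remark~\ref{rem:sV_sW_function}. The cleanest of the three cases is $i=3$: by definition $\fV^{(3)}_{\varepsilon;s}(X_\Cdot)=W_s$, $\fV^{(3)}_{\varepsilon;s}(Y_\Cdot)=\tilde W_s$, $\fV^{(3)}_{\varepsilon;s}(Z_\Cdot)=\hat W_s$, and there is no $\varepsilon$-dependence; hence (A)--(C) are immediate from the defining norms $\|X_\Cdot\|_{\sX_{\tau,\varepsilon}}$ and $\|X_\Cdot\|_{\tilde\sX_{\tau,\varepsilon}}$ (using $\|X_\Cdot\|_{\sX_{\tau,\varepsilon}}\leq 1$ on $\sY_{\tau,\varepsilon}$), while (D) is trivial since $\fV^{(3)}_s(Z_\Cdot)-\fV^{(3)}_{\varepsilon;s}(Z_\Cdot)=0$.

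For $i=1$, I would split by linearity $\fV^{(1)}_{\varepsilon;s}(X_\Cdot)=r_s\,U(0,1,0)+z_s\,U(0,0,1)$ and handle each term separately. For the first, I write $r_s\,U(0,1,0)=h_s\cdot\bigl(\lambda_s^{-1+36\kappa}\,s\,U(0,1,0)\bigr)$ with $h_s:=r_s\,\lambda_s^{1-36\kappa}/s$; by Def.~\ref{dfn:sX} the scalar $h_\Cdot$ is continuous on $(0,1]$ and satisfies $\|h_\Cdot\|_\infty\leq\|X_\Cdot\|_\sP$, while Lemma~\ref{lem:contraction_U}(B) applied with $\gamma=\gamma(1)=1-40\kappa$ controls the bracketed factor in $\sW^{8,4;1-40\kappa}_{\tau,\varepsilon}$. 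Applying Remark~\ref{rem:sV_sW_function} gives (A) for that term; the $z$-term is analogous via Lemma~\ref{lem:contraction_U}(C). The bounds (B) and (C) for $i=1$ are obtained by the identical rewriting applied to $r_s-\tilde r_s$, $z_s-\tilde z_s$ (respectively $r_s-\hat r_s$, $z_s-\hat z_s$), using that these differences are bounded by $\|X_\Cdot-Y_\Cdot\|_\sP$ (or $\|X_\Cdot-Z_\Cdot\|_\sP$) times $\lambda_s^{1-36\kappa}/s$ respectively $\lambda_s^{1-36\kappa}$, and that Lemma~\ref{lem:contraction_U} also provides the analogous $\tilde\sW$-bounds. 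Finally (D) is trivial for $i=1$ since $\fV^{(1)}$ is $\varepsilon$-independent.

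For $i=2$, the same pattern works, but now I factor $\fV^{(2)}_{\varepsilon;s}(X_\Cdot)=(\theta^2_{\varepsilon;s}/g_s)\cdot U(1,0,0)=h_s\cdot\bigl(\lambda_s^{-1+12\kappa}\,U(1,0,0)\bigr)$ with $h_s:=(\theta^2_{\varepsilon;s}/g_s)\,\lambda_s^{1-12\kappa}$. Lemma~\ref{lem:contraction_U}(A) with $\gamma=\gamma(2)=1-20\kappa$ (note the choice $\gamma(2)>\gamma(1)$ is exactly what is needed to absorb the $\lambda_s^{1-12\kappa}$ loss into the available margin $\lambda_s^{-\gamma-8\kappa}=\lambda_s^{-1+12\kappa}$) controls the bracketed factor, and Lemma~\ref{lem:contraction_g_aux}(A) gives $|h_s|\leq 1$. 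Bounds (B) and (C) follow in the same way from Lemma~\ref{lem:contraction_g_aux}(B) and~(C), which provide the Lipschitz estimates in $\|\Cdot\|_{\sX_{\tau,\varepsilon}}$ and $\|\Cdot\|_{\tilde\sX_{\tau,\varepsilon}}$ respectively. The one genuinely non-trivial point is (D): here $\fV^{(2)}_s(Z_\Cdot)-\fV^{(2)}_{\varepsilon;s}(Z_\Cdot)=(1/\hat g_s-\theta^2_{\varepsilon;s}/\hat g_s)\,U(1,0,0)$, and the required small factor $\lambda_\varepsilon^\kappa$ is supplied precisely by Lemma~\ref{lem:contraction_g_aux}(D).

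No step is expected to present a real obstacle once the previous lemmas are in hand; the only subtlety is the book-keeping of exponents, specifically verifying that the three choices $\gamma(1)=1-40\kappa$, $\gamma(2)=1-20\kappa$, $\gamma(3)=2-80\kappa$ are mutually consistent: $\gamma(1)$ and $\gamma(3)$ are forced by the definition of $\sX_{\tau,\varepsilon}$, while $\gamma(2)$ is dictated by Lemma~\ref{lem:contraction_g_aux}(A) (which loses only $12\kappa$ rather than $36\kappa$), so that Lemma~\ref{lem:contraction_U}(A) with margin $8\kappa$ just suffices. Once these exponents are lined up, each of the twelve sub-bounds collapses to a one-line application of Remark~\ref{rem:sV_sW_function}.
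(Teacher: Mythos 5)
Your overall route is the same as the paper's very terse proof: factor each $\fV^{(i)}_{\varepsilon;s}$ into a scalar times a fixed reference functional, control the scalar via Def.~\ref{dfn:sX} or Lemma~\ref{lem:contraction_g_aux}, control the reference functional via Lemma~\ref{lem:contraction_U}, and combine with Remark~\ref{rem:sV_sW_function}; your handling of $i=3$ and of the vanishing $\varepsilon$-differences in (D) for $i=1,3$ is correct. The gap is that in the two nontrivial cases your weights sit on the wrong factor, so two of your claimed inequalities are false as written. For $i=1$ you take $h_s=r_s\lambda_s^{1-36\kappa}/s$ and assert $\|h_\Cdot\|_\infty\le\|X_\Cdot\|_\sP$; but Def.~\ref{dfn:sX} controls $\sup_s\lambda_s^{36\kappa-1}s\,|r_s|$, i.e.\ the reciprocal weighting, and from it one only gets $|h_s|\le\|X_\Cdot\|_\sP\,\lambda_s^{2(1-36\kappa)}s^{-2}$, which blows up as $s\searrow0$ (the factor $s^{-2}$ beats the logarithmic decay of $\lambda_s^{2(1-36\kappa)}$). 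For $i=2$ your scalar $h_s=\lambda_s^{1-12\kappa}\theta^2_{\varepsilon;s}/g_s$ is indeed bounded, but the bracketed factor $\lambda_s^{-1+12\kappa}U(1,0,0)$ is not admissible: by Def.~\ref{dfn:V_space}, for the quartic diagonal kernel ($m=4$, $a=0$, so $s^{2-m/2-|a|}=1$ and $w^4_s\delta^{(4)}=\delta^{(4)}$) the $\sV^{4;1-20\kappa}$ norm of $s\mapsto c_s\,U(1,0,0)$ is a constant times $\sup_s\lambda_s^{-(1-12\kappa)}|c_s|$, which with $c_s=\lambda_s^{-1+12\kappa}$ equals $\sup_s\lambda_s^{-2(1-12\kappa)}=\infty$. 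You are here relying on the literal wording of Lemma~\ref{lem:contraction_U}; note that what its proof (a direct computation from Def.~\ref{dfn:V_space} together with Remark~\ref{rem:sW_sV}) actually yields is the bound with the opposite orientation of the weights, namely that $s\mapsto\lambda_s^{\gamma+8\kappa}U(1,0,0)$, $s\mapsto\lambda_s^{\gamma+4\kappa}s^{-1}U(0,1,0)$ and $s\mapsto\lambda_s^{\gamma+4\kappa}U(0,0,1)$ have $O(1)$ norms, and that is the form the present lemma needs.

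The repair is to flip your factorizations so that the scalar is exactly the quantity your inputs control: write $r_sU(0,1,0)=(\lambda_s^{36\kappa-1}s\,r_s)\cdot(\lambda_s^{1-36\kappa}s^{-1}U(0,1,0))$, $z_sU(0,0,1)=(\lambda_s^{36\kappa-1}z_s)\cdot(\lambda_s^{1-36\kappa}U(0,0,1))$ and $(\theta^2_{\varepsilon;s}/g_s)\,U(1,0,0)=(\lambda_s^{12\kappa-1}\theta^2_{\varepsilon;s}/g_s)\cdot(\lambda_s^{1-12\kappa}U(1,0,0))$. The first factors are bounded by $\|X_\Cdot\|_\sP\le1$ and by Lemma~\ref{lem:contraction_g_aux}(A), since $\gamma(1)+4\kappa=1-36\kappa$ and $\gamma(2)+8\kappa=1-12\kappa$, and items (B), (C), (D) of that lemma (respectively the corresponding differences of $r,z$ measured in $\|\Cdot\|_{\sX_{\tau,\varepsilon}}$, $\|\Cdot\|_{\tilde\sX_{\tau,\varepsilon}}$) supply the Lipschitz and $\varepsilon$-difference versions needed for (B), (C), (D). The second factors have $O(1)$ norms in $\sW^{8,4;\gamma(i)}_{\tau,\varepsilon}$ and $\tilde\sW^{2,3;\gamma(i)}_{\tau,\varepsilon}$ by the kernel computation above. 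With this orientation of the weights, Remark~\ref{rem:sV_sW_function} closes each sub-bound exactly as you intend, and your exponent bookkeeping ($\gamma(2)=1-20\kappa$ matching the $12\kappa$ loss in Lemma~\ref{lem:contraction_g_aux}) is then correct.
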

\begin{proof}
For $i=1$ the bounds~(A),~(B),~(C),~(D) follow from Lemma~\ref{lem:contraction_U} applied with \mbox{$\gamma=\gamma(1)$}, Remark~\ref{rem:sV_sW_function} and Def.~\ref{dfn:sX}. For $i=2$ the bounds~(A),~(B),~(C),~(D) follow from Lemma~\ref{lem:contraction_U} applied with $\gamma=\gamma(2)$, Lemma~\ref{lem:contraction_g_aux}, Remark~\ref{rem:sV_sW_function} and Def.~\ref{dfn:sX}. For $i=3$ the bounds~(A),~(B),~(C) follow immediately from Def.~\ref{dfn:sX}. For $i=0$ the bounds~(A),~(B),~(C),~(D) are consequences of the bounds~(A),~(B),~(C) with $i\in\{1,2,3\}$ and the triangle inequality.
\end{proof}

\begin{lem}\label{lem:contraction_B}
For $\tau,\varepsilon\in[0,1]$, $X_\Cdot\in\sX_{\tau,\varepsilon}$, $i,j\in\{0,1,2,3\}$ and $s\in(0,1]$ define
\begin{equation}
 \fG^{(i,j)}_{\varepsilon;s}(X_\Cdot):=\fB_{\varepsilon;s}(\fV^{(i)}_{s}(X_\Cdot),\fV^{(j)}_{s}(X_\Cdot)),
 \qquad
 \fG_{\varepsilon;s}(X_\Cdot):=\fG^{(0,0)}_{\varepsilon;s}(X_\Cdot),
\end{equation}
where the maps $\fV^{(i)}_{\varepsilon;s}$, $i\in\{0,1,2,3\}$, were introduced in Lemma~\ref{lem:contraction_V}. We omit $\varepsilon$ if $\varepsilon=0$.  There exists $\lambda_\star\in(0,1]$ and $C\in(0,\infty)$ such that for all $\lambda\in(0,\lambda_\star]$, all $\tau,\varepsilon\in[0,1]$, all $X_\Cdot,Y_\Cdot\in \sY_{\tau,\varepsilon}$, $Z_\Cdot\in\sY$ and all $i,j\in\{0,1,2,3\}$ it holds:
\begin{itemize}
 \item[(A)] $\|s\mapsto s\,\fG^{(i,j)}_{\varepsilon;s}(X_\Cdot)\|_{\sW^{8,3;\gamma(i)+\gamma(j)}_{\tau,\varepsilon}}
 \leq
 C\,\lambda^\kappa$,
 \item[(B)] $\|s\mapsto s\,(\fG^{(i,j)}_{\varepsilon;s}(X_\Cdot)-\fG^{(i,j)}_{\varepsilon;s}(Y_\Cdot))\|_{\sW^{8,3;\gamma(i)+\gamma(j)}_{\tau,\varepsilon}}
 \leq 
 C\,\lambda^\kappa\, \|X_\Cdot-Y_\Cdot\|_{\sX_{\tau,\varepsilon}}$,

 \item[(C)] $\|s\mapsto s\,(\fG^{(i,j)}_{\varepsilon;s}(X_\Cdot)-\fG^{(i,j)}_{\varepsilon;s}(Z_\Cdot))\|_{\tilde\sW^{2,2;\gamma(i)+\gamma(j)}_{\tau,\varepsilon}}
 \leq
 C\,\lambda^\kappa\, \|X_\Cdot-Z_\Cdot\|_{\tilde\sX_{\tau,\varepsilon}}$,
 
 \item[(D)] $\|s\mapsto s\,(\fG^{(i,j)}_s(Z_\Cdot)-\fG^{(i,j)}_{\varepsilon;s}(Z_\Cdot))\|_{\tilde\sW^{2,2;\gamma(i)+\gamma(j)}_{\tau,\varepsilon}}
 \leq
 C\,\lambda_{\varepsilon}^\kappa$,
\end{itemize}
where $\gamma(0)=\gamma(1)=1-40\kappa$, $\gamma(2)=1-20\kappa$ and $\gamma(3)=2-80\kappa$.
\end{lem}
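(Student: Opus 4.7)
The idea is that $\fG^{(i,j)}_{\varepsilon;s}$ is bilinear in its two arguments $\fV^{(i)}$ and $\fV^{(j)}$, so all four bounds reduce to combining the bilinear estimates on $\fB_{\varepsilon;s}$ given in Lemma~\ref{lem:B_map_bound} with the linear estimates on $\fV^{(i)}$ furnished by Lemma~\ref{lem:contraction_V}. The index $i,j\in\{0,1,2,3\}$ enters only through the regularity indices $\gamma(i),\gamma(j)$, which add up additively, matching the structure of Lemma~\ref{lem:B_map_bound}.

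For item~(A), apply Lemma~\ref{lem:B_map_bound}~(A) with $\alpha=8$, $\beta=4$, $\gamma_1=\gamma(i)$, $\gamma_2=\gamma(j)$ to the pair $V_\Cdot=\fV^{(i)}_{\Cdot}(X_\Cdot)$, $W_\Cdot=\fV^{(j)}_{\Cdot}(X_\Cdot)$, then invoke Lemma~\ref{lem:contraction_V}~(A) to bound each factor by a universal constant. Item~(B) follows from the bilinear identity
\begin{equation}
 \fG^{(i,j)}_{\varepsilon;s}(X_\Cdot)-\fG^{(i,j)}_{\varepsilon;s}(Y_\Cdot)
 =\fB_{\varepsilon;s}(\fV^{(i)}_s(X_\Cdot)-\fV^{(i)}_s(Y_\Cdot),\fV^{(j)}_s(X_\Cdot))
 +\fB_{\varepsilon;s}(\fV^{(i)}_s(Y_\Cdot),\fV^{(j)}_s(X_\Cdot)-\fV^{(j)}_s(Y_\Cdot)),
\end{equation}
applying Lemma~\ref{lem:B_map_bound}~(A) to each summand and using Lemma~\ref{lem:contraction_V}~(A),~(B).

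Item~(C) is similar but uses the asymmetric tilde-estimate Lemma~\ref{lem:B_map_bound}~(B) with $\alpha=2$, $\beta=3$: one factor, the difference $\fV^{(i)}_s(X_\Cdot)-\fV^{(i)}_s(Z_\Cdot)$, is controlled in $\tilde\sW^{2,3;\gamma(i)}_{\tau,\varepsilon}$ by Lemma~\ref{lem:contraction_V}~(C), while the remaining factor is controlled in $\sW^{2,3;\gamma(j)}_{\tau,\varepsilon}$ by downgrading the $\sW^{8,4;\gamma(j)}_{\tau,\varepsilon}$-bound of Lemma~\ref{lem:contraction_V}~(A) (the analogue of Remark~\ref{rem:sV_norm_property} for the $\sW$-norms is immediate from Def.~\ref{dfn:sW} since $2\le 8$ and $3<4$). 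Summing the two bilinear contributions yields the claim.

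Item~(D) is the most delicate: it involves both the change of the kernel $\dot H$ defining $\fB$ and the change of the field $\varPsi$ defining $\fV^{(i)}$. The natural decomposition is
\begin{align}
 \fG^{(i,j)}_s(Z_\Cdot)-\fG^{(i,j)}_{\varepsilon;s}(Z_\Cdot)
 =&\;(\fB_s-\fB_{\varepsilon;s})(\fV^{(i)}_s(Z_\Cdot),\fV^{(j)}_s(Z_\Cdot)) \notag\\
 &+\fB_{\varepsilon;s}(\fV^{(i)}_s(Z_\Cdot)-\fV^{(i)}_{\varepsilon;s}(Z_\Cdot),\fV^{(j)}_s(Z_\Cdot)) \notag\\
 &+\fB_{\varepsilon;s}(\fV^{(i)}_{\varepsilon;s}(Z_\Cdot),\fV^{(j)}_s(Z_\Cdot)-\fV^{(j)}_{\varepsilon;s}(Z_\Cdot)).
\end{align}
The first term is handled by Lemma~\ref{lem:B_map_bound}~(C) applied with $\alpha=4$, $\beta=3$ (which produces the target space $\tilde\sW^{2,2;\gamma(i)+\gamma(j)}_{\tau,\varepsilon}$ and the prefactor $\lambda_\varepsilon^\kappa$), after bounding each $\fV$-factor in $\sV^{4,3;\gamma(\cdot)}$ via Remark~\ref{rem:sV_sW} and Lemma~\ref{lem:contraction_V}~(A). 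The remaining two terms are handled by Lemma~\ref{lem:B_map_bound}~(B) with $\alpha=2$, $\beta=3$, feeding the $\fV$-difference into the tilde-slot using Lemma~\ref{lem:contraction_V}~(D) (which carries the desired $\lambda_\varepsilon^\kappa$), while the other slot is bounded as in item~(C). The only step that requires any real care, and which I view as the main obstacle, is verifying that the three pieces fit the asymmetry of Lemma~\ref{lem:B_map_bound}~(B),~(C) correctly, i.e.\ that the $\alpha/2$ loss in~(C) and the asymmetry between tilde and non-tilde factors in~(B) are absorbed by the downgrades $\sW^{8,4}\hookrightarrow\sW^{2,3},\sW^{4,3}$ and $\tilde\sW^{8,4}\hookrightarrow\tilde\sW^{2,3}$. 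Since all arising constants combine to a universal constant and the worst prefactor is $\lambda_\varepsilon^\kappa$, the stated bound follows.
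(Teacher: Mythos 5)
Your overall strategy — bilinearity of $\fG^{(i,j)}_{\varepsilon;s}$ combined with Lemma~\ref{lem:B_map_bound} and Lemma~\ref{lem:contraction_V} — is the same as the paper's, but you have skipped a preliminary observation that the paper's proof opens with and that is load-bearing. Since $\dot H_{\varepsilon;s}=0$ for $s\in(0,\varepsilon]$ (Remark~\ref{rem:G}~(A)), the map $\fB_{\varepsilon;s}$ vanishes there; and since $\theta_{\varepsilon;s}=1$ for $s\geq\varepsilon$, one has $\fV^{(i)}_{\varepsilon;s}=\fV^{(i)}_s$ exactly where $\fB_{\varepsilon;s}$ is nonzero. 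Hence
\begin{equation}
 \fG^{(i,j)}_{\varepsilon;s}(X_\Cdot)=\fB_{\varepsilon;s}(\fV^{(i)}_{s}(X_\Cdot),\fV^{(j)}_{s}(X_\Cdot))=\fB_{\varepsilon;s}(\fV^{(i)}_{\varepsilon;s}(X_\Cdot),\fV^{(j)}_{\varepsilon;s}(X_\Cdot)),
\end{equation}
and it is the second form that must be fed into Lemma~\ref{lem:B_map_bound} so that Lemma~\ref{lem:contraction_V} can then be applied, because Lemma~\ref{lem:contraction_V} controls precisely $\fV^{(i)}_{\varepsilon;\Cdot}$, not $\fV^{(i)}_{\Cdot}$. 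Without this identity your items~(A)--(C) cite Lemma~\ref{lem:contraction_V} on objects it does not cover (for $i\in\{0,2\}$ the two versions differ by the factor $\theta^2_{\varepsilon;s}$).

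For item~(B) this is not a cosmetic slip but a genuine gap: the bound $\|s\mapsto(\fV^{(2)}_s(X_\Cdot)-\fV^{(2)}_s(Y_\Cdot))\|_{\sW^{8,4;\gamma(2)}_{\tau,\varepsilon}}\leq C\,\|X_\Cdot-Y_\Cdot\|_{\sX_{\tau,\varepsilon}}$ is false for $X_\Cdot,Y_\Cdot\in\sY_{\tau,\varepsilon}$. Via Lemma~\ref{lem:contraction_U} and Remark~\ref{rem:sV_sW_function} the norm controls $\sup_{s}\lambda_s^{1-12\kappa}|1/g_s-1/\tilde g_s|$, and on $\sY_{\tau,\varepsilon}$ the only available estimates give, for $s<\varepsilon$,
\begin{equation}
 \lambda_s^{1-12\kappa}\,\frac{|g_s-\tilde g_s|}{g_s\tilde g_s}
 \lesssim\lambda_s^{1-12\kappa}\,\lambda_s^{-1-10\kappa}\,\frac{\lambda_\varepsilon^2}{\lambda^{2\kappa}}\,\|X_\Cdot-Y_\Cdot\|_{\sP}
 =\lambda_s^{-22\kappa}\,\frac{\lambda_\varepsilon^2}{\lambda^{2\kappa}}\,\|X_\Cdot-Y_\Cdot\|_{\sP},
\end{equation}
which diverges as $s\searrow 0$. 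The factor $\theta^2_{\varepsilon;s}=\lambda_s^2/\lambda_{\varepsilon\vee s}^2$ restored by the observation is exactly what cancels the offending $\lambda_\varepsilon^2$ and yields the Lipschitz bound of Lemma~\ref{lem:contraction_g_aux}~(B) and hence Lemma~\ref{lem:contraction_V}~(B). Items~(A) and~(C) happen to remain true with the $\varepsilon=0$ version of $\fV^{(2)}$ (the $\lambda_\varepsilon$ factor enters to one lower power), but this requires a separate computation you did not supply. Finally, note that your three-term decomposition for item~(D) is algebraically a telescope between $\fG^{(i,j)}_{s}(Z_\Cdot)$ and $\fB_{\varepsilon;s}(\fV^{(i)}_{\varepsilon;s}(Z_\Cdot),\fV^{(j)}_{\varepsilon;s}(Z_\Cdot))$, so it equals $\fG^{(i,j)}_{s}(Z_\Cdot)-\fG^{(i,j)}_{\varepsilon;s}(Z_\Cdot)$ only once the observation is granted — and at that point the two extra terms vanish identically, so the paper just writes $\fG^{(i,j)}_{s}-\fG^{(i,j)}_{\varepsilon;s}=\fB_{s}(\fV^{(i)}_{s},\fV^{(j)}_{s})-\fB_{\varepsilon;s}(\fV^{(i)}_{s},\fV^{(j)}_{s})$ and applies Lemma~\ref{lem:B_map_bound}~(C) directly. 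Once you state the identity above and replace every $\fV^{(i)}_s$ by $\fV^{(i)}_{\varepsilon;s}$ in the arguments of $\fB_{\varepsilon;s}$, your argument becomes the paper's.
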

\begin{rem}
Observe that for $X_\Cdot\equiv(g_\Cdot,r_\Cdot,z_\Cdot,W_\Cdot)\in\sX_{\tau,\varepsilon}$ it holds
\begin{multline}
 \fB_{\varepsilon;s}(U(1/g_s,r_s,z_s)+W_s)=\fG_{\varepsilon;s}(X_\Cdot)=\fG^{(0,0)}_{\varepsilon;s}(X_\Cdot)
 \\
 =\fG^{(1,1)}_{\varepsilon;s}(X_\Cdot)+2\,\fG^{(1,2)}_{\varepsilon;s}(X_\Cdot)+2\,\fG^{(1,3)}_{\varepsilon;s}(X_\Cdot)+\fG^{(2,2)}_{\varepsilon;s}(X_\Cdot)+2\,\fG^{(2,3)}_{\varepsilon;s}(X_\Cdot)+\fG^{(3,3)}_{\varepsilon;s}(X_\Cdot).
\end{multline} 
\end{rem}
\begin{proof}
Note that since $\dot H_{\varepsilon;s}=0$ if $s\in(0,\varepsilon]$ it holds $\fB_{\varepsilon;s}(\Cdot)=0$ if $s\in(0,\varepsilon]$. Consequently, since $\theta^2_{\varepsilon;s}=1$ if $s\in(0,\varepsilon]$ we obtain
\begin{equation}
 \fG^{(i,j)}_{\varepsilon;s}(X_\Cdot):=\fB_{\varepsilon;s}(\fV^{(i)}_{\varepsilon;s}(X_\Cdot),\fV^{(j)}_{\varepsilon;s}(X_\Cdot))
\end{equation}
and
\begin{equation}
 \fG^{(i,j)}_{s}(Z_\Cdot)-\fG^{(i,j)}_{\varepsilon;s}(Z_\Cdot)
 =\fB_{s}(\fV^{(i)}_{s}(Z_\Cdot),\fV^{(j)}_{s}(Z_\Cdot))-\fB_{\varepsilon;s}(\fV^{(i)}_{s}(Z_\Cdot),\fV^{(j)}_{s}(Z_\Cdot)).
\end{equation}
Moreover, it holds
\begin{multline}
 \fG^{(i,j)}_{\varepsilon;s}(X_\Cdot)-\fG^{(i,j)}_{\varepsilon;s}(Y_\Cdot)
 =\fB_{\varepsilon;s}(\fV^{(i)}_{\varepsilon;s}(X_\Cdot),\fV^{(j)}_{\varepsilon;s}(X_\Cdot)-\fV^{(j)}_{\varepsilon;s}(Y_\Cdot))
 \\
 +
 \fB_{\varepsilon;s}(\fV^{(j)}_{\varepsilon;s}(Y_\Cdot),\fV^{(i)}_{\varepsilon;s}(X_\Cdot)-\fV^{(i)}_{\varepsilon;s}(Y_\Cdot))
\end{multline}
and
\begin{multline}
  \fG^{(i,j)}_{\varepsilon;s}(X_\Cdot)-\fG^{(i,j)}_{\varepsilon;s}(Z_\Cdot)
 =\fB_{\varepsilon;s}(\fV^{(i)}_{\varepsilon;s}(X_\Cdot),\fV^{(j)}_{\varepsilon;s}(X_\Cdot)-\fV^{(j)}_{\varepsilon;s}(Z_\Cdot))
 \\
 +
 \fB_{\varepsilon;s}(\fV^{(j)}_{\varepsilon;s}(Z_\Cdot),\fV^{(i)}_{\varepsilon;s}(X_\Cdot)-\fV^{(i)}_{\varepsilon;s}(Z_\Cdot)).
\end{multline}
The application of Lemma~\ref{lem:B_map_bound} with $\gamma_1=\gamma(i)$ and $\gamma_2=\gamma(j)$ yields
\begin{itemize}
 \item[(A)] $\|s\mapsto s\,\fG^{(i,j)}_{\varepsilon;s}(X_\Cdot)\|_{\sW^{8,3;\gamma(i)+\gamma(j)}_{\tau,\varepsilon}}
 \leq
 \lambda^\kappa
 \,
 \|s\mapsto \fV^{(i)}_{\varepsilon;s}(X_\Cdot)\|_{\sW^{8,4;\gamma(i)}_{\tau,\varepsilon}}\,
 \|s\mapsto \fV^{(j)}_{\varepsilon;s}(X_\Cdot)\|_{\sW^{8,4;\gamma(j)}_{\tau,\varepsilon}}$,
 \item[(B)] $\|s\mapsto s\,(\fG^{(i,j)}_{\varepsilon;s}(X_\Cdot)-\fG^{(i,j)}_{\varepsilon;s}(Y_\Cdot))\|_{\sW^{8,3;\gamma(i)+\gamma(j)}_{\tau,\varepsilon}}
 \\\leq 
 \lambda^\kappa
 \,
 \|s\mapsto \fV^{(i)}_{\varepsilon;s}(X_\Cdot)\|_{\sW^{8,4;\gamma(i)}_{\tau,\varepsilon}}
 \,
 \|s\mapsto (\fV^{(j)}_{\varepsilon;s}(X_\Cdot)-\fV^{(j)}_{\varepsilon;s}(Y_\Cdot))\|_{\sW^{8,4;\gamma(j)}_{\tau,\varepsilon}}
 \\
 +\lambda^\kappa
 \,
 \|s\mapsto \fV^{(j)}_{\varepsilon;s}(Y_\Cdot)\|_{\sW^{8,4;\gamma(j)}_{\tau,\varepsilon}}
 \,
 \|s\mapsto (\fV^{(i)}_{\varepsilon;s}(X_\Cdot)-\fV^{(i)}_{\varepsilon;s}(Y_\Cdot))\|_{\sW^{8,4;\gamma(i)}_{\tau,\varepsilon}}$,

 \item[(C)] $\|s\mapsto s\,(\fG^{(i,j)}_{\varepsilon;s}(X_\Cdot)-\fG^{(i,j)}_{\varepsilon;s}(Z_\Cdot))\|_{\tilde\sW^{2,2;\gamma(i)+\gamma(j)}_{\tau,\varepsilon}}
 \\
 \leq 
 \lambda^\kappa
 \, 
 \|s\mapsto \fV^{(i)}_{\varepsilon;s}(X_\Cdot)\|_{\sW^{2,3;\gamma(i)}_{\tau,\varepsilon}}
 \,
 \|s\mapsto (\fV^{(j)}_{\varepsilon;s}(X_\Cdot)-\fV^{(j)}_{\varepsilon;s}(Z_\Cdot))\|_{\tilde\sW^{2,3;\gamma(j)}_{\tau,\varepsilon}}
 \\
 +
 \lambda^\kappa
 \, 
 \|s\mapsto \fV^{(j)}_{\varepsilon;s}(Z_\Cdot)\|_{\sW^{2,3;\gamma(j)}_{\tau,\varepsilon}}
 \,
 \|s\mapsto (\fV^{(i)}_{\varepsilon;s}(X_\Cdot)-\fV^{(i)}_{\varepsilon;s}(Z_\Cdot))\|_{\tilde\sW^{2,3;\gamma(i)}_{\tau,\varepsilon}}$,
 
 \item[(D)] $\|s\mapsto s\,(\fG^{(i,j)}_s(Z_\Cdot)-\fG^{(i,j)}_{\varepsilon;s}(Z_\Cdot))\|_{\tilde\sW^{2,2;\gamma(i)+\gamma(j)}_{\tau,\varepsilon}}
 \\
 \leq
 \lambda_{\varepsilon}^\kappa
 \,
 \|s\mapsto \fV^{(i)}_{s}(Z_\Cdot)\|_{\sV^{4,3;\gamma(i)}}
 \,
 \|s\mapsto \fV^{(j)}_{s}(Z_\Cdot)\|_{\sV^{4,3;\gamma(j)}}$.
\end{itemize}
Next, observe that
\begin{equation}
\begin{gathered}
 \|s\mapsto \fV^{(i)}_{\varepsilon;s}(X_\Cdot)\|_{\sW^{2,3;\gamma(i)}_{\tau,\varepsilon}}
 \leq
 \|s\mapsto \fV^{(i)}_{\varepsilon;s}(X_\Cdot)\|_{\sV^{4,3;\gamma(i)}}
 \leq
 \|s\mapsto \fV^{(i)}_{\varepsilon;s}(X_\Cdot)\|_{\sW^{8,4;\gamma(i)}_{\tau,\varepsilon}},
 \\
 \|s\mapsto \fV^{(i)}_{\varepsilon;s}(Z_\Cdot)\|_{\sW^{2,3;\gamma(i)}_{\tau,\varepsilon}}
 \leq
 \|s\mapsto \fV^{(i)}_{\varepsilon;s}(Z_\Cdot)\|_{\sV^{4,3;\gamma(i)}}
 \leq
 \|s\mapsto \fV^{(i)}_{\varepsilon;s}(Z_\Cdot)\|_{\sW^{8,4;\gamma(i)}},
 \\
 \|s\mapsto \fV^{(i)}_{\varepsilon;s}(Z_\Cdot)\|_{\sV^{4,3;\gamma(i)}}
 \leq
 \|s\mapsto \fV^{(i)}_{\varepsilon;s}(Z_\Cdot)\|_{\sV^{8,4;\gamma(i)}}
 \leq
 \|s\mapsto \fV^{(i)}_{\varepsilon;s}(Z_\Cdot)\|_{\sW^{8,4;\gamma(i)}}
\end{gathered}
\end{equation}
by Remarks~\ref{rem:sV_norm_property},~\ref{rem:sV_sW} and~\ref{rem:sW_sV}. The statement of the lemma is now an immediate consequence of Lemma~\ref{lem:contraction_V}.
\end{proof}

\begin{lem}\label{lem:contraction_W}
There exists $\lambda_\star\in(0,1]$ and $C\in(0,\infty)$ such that for all $\lambda\in(0,\lambda_\star]$, all $\tau,\varepsilon\in[0,1]$ and all $X_\Cdot,Y_\Cdot\in \sY_{\tau,\varepsilon}$, $Z_\Cdot\in\sY$ it holds:
\begin{itemize}
 \item[(A)] $\|s\mapsto \fW_{\tau,\varepsilon;s}(X_\Cdot)\|_{\sW^{8,4;2-80\kappa}_{\tau,\varepsilon}}
 \leq
 C\,\lambda^\kappa$,
 \item[(B)] $\|s\mapsto (\fW_{\tau,\varepsilon;s}(X_\Cdot)-\fW_{\tau,\varepsilon;s}(Y_\Cdot))\|_{\sW^{8,4;2-80\kappa}_{\tau,\varepsilon}}
 \leq 
 C\,\lambda^\kappa\, \|X_\Cdot-Y_\Cdot\|_{\sX_{\tau,\varepsilon}}$,

 \item[(C)] $\|s\mapsto (\fW_{\tau,\varepsilon;s}(X_\Cdot)-\fW_{\tau,\varepsilon;s}(Z_\Cdot))\|_{\tilde\sW^{2,3;2-80\kappa}_{\tau,\varepsilon}}
 \leq
 C\,\lambda^\kappa\, \|X_\Cdot-Z_\Cdot\|_{\tilde\sX_{\tau,\varepsilon}}$,
 
 \item[(D)] $\|s\mapsto (\fW_s(Z_\Cdot)-\fW_{\tau,\varepsilon;s}(Z_\Cdot))\|_{\tilde\sW^{2,3;2-80\kappa}_{\tau,\varepsilon}}
 \leq
 C\,\lambda_{\tau\vee\varepsilon}^\kappa$.
\end{itemize}
\end{lem}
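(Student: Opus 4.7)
The starting observation is that $\fB_{\varepsilon;s}(\cdot)=0$ on $s\in(0,\varepsilon]$ by Remark~\ref{rem:G}~(A), while on $s\in(\varepsilon,1]$ one has $\theta_{\varepsilon;s}=1$, hence $V_s(X_\Cdot)=\fV^{(0)}_{\varepsilon;s}(X_\Cdot)$ on the support of $\fB_{\varepsilon;s}$, so $\fB_{\varepsilon;s}(V_s(X_\Cdot))=\fG_{\varepsilon;s}(X_\Cdot)$ throughout, with $\fG$ as in Lemma~\ref{lem:contraction_B}. This identification feeds the integrand in Definition~\ref{dfn:X_map} directly into the bilinear estimates of Lemma~\ref{lem:contraction_B}, and the proof then splits into the irrelevant case $m\in\bN_+\setminus\{2,4\}$ and the relevant case $m\in\{2,4\}$.

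For the irrelevant case one reads off directly $\Pi_{>4}\fW_{\tau,\varepsilon;t}(X_\Cdot)=\Pi_{>4}\fI_{\tau,\varepsilon;t}(s\mapsto s\,\fG_{\varepsilon;s}(X_\Cdot))$. Bound~(A) then follows from Lemma~\ref{lem:I_bound}~(A) with $(\alpha,\beta,\gamma)=(8,3,2-80\kappa)$ chained with Lemma~\ref{lem:contraction_B}~(A) at $i=j=0$, producing the factor $\lambda^\kappa$. The Lipschitz bounds~(B),~(C) follow by bilinearity of $\fG$ in $X_\Cdot$ and the matching items~(B),~(C) of these two lemmas, while~(D) additionally uses Lemma~\ref{lem:I_bound}~(C) (to control $\fI_t-\fI_{\tau,\varepsilon;t}$) together with Lemma~\ref{lem:contraction_B}~(D) (to control $\fG_s-\fG_{\varepsilon;s}$), each contributing a factor $\lambda_{\tau\vee\varepsilon}^\kappa$. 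For the relevant case $m\in\{2,4\}$, using $\fA_{\tau,\varepsilon;1,s}=(\mathrm{id}+\fC_{\tau,\varepsilon;1,t})\fA_{\tau,\varepsilon;t,s}$ together with Remark~\ref{rem:fA_fE2}~(C), one rewrites
\begin{equation*}
\fW^m_{\tau,\varepsilon;t}(X_\Cdot)
 =\fI^m_{\tau,\varepsilon;t}\bigl(s\mapsto s\,\fG_{\varepsilon;s}(X_\Cdot)\bigr)
 +\fE\fC^m_{\tau,\varepsilon;1,t}\bigl(\fI_{\tau,\varepsilon;t}(s\mapsto s\,\fG_{\varepsilon;s}(X_\Cdot))-W_t\bigr).
\end{equation*}
The first term is handled as above. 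The second is a $\fC$-correction of an $O(1)$ functional in $\sW^{8,4;2-80\kappa}_{\tau,\varepsilon}$, and Lemma~\ref{lem:A_map_bound}~(B) supplies the extra $\lambda^\kappa$ factor needed for~(A) and~(B), item~(D) of the same lemma delivers~(C), and item~(E) delivers~(D), the last producing $\lambda_{\tau\vee\varepsilon}^\kappa$.

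\textbf{Main obstacle.} The crux is the algebraic rearrangement for the relevant case: verifying that the counterterm $-\fE\fC^m_{\tau,\varepsilon;1,t}W_t$ in Definition~\ref{dfn:X_map} combines cleanly with the $\fA_{\tau,\varepsilon;1,s}$-integral to reproduce $\fI_{\tau,\varepsilon;t}$ plus a single $\fE\fC$-correction of $\fI_{\tau,\varepsilon;t}(s\mapsto s\fG_{\varepsilon;s}(X_\Cdot))-W_t$. This uses that the localization map $\fR$ acts purely on the $\bA^m$-derivative-index structure of kernels while $\fE\fA_{\tau,\varepsilon;1,t}$ acts on the $\sF$-valued scale-dependent coefficient, so that on quantities already in the image of $\fE\fA_{\tau,\varepsilon;t,s}$ the two operations can be interchanged modulo the tower property of Remark~\ref{rem:fA_fE2}~(C). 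Once this identity is in hand, all four bounds reduce to routine chaining of the estimates of Section~\ref{sec:maps}, the gains $\lambda^\kappa$ and $\lambda_{\tau\vee\varepsilon}^\kappa$ being those built into the maps $\fB_{\varepsilon;s}$ and $\fC_{\tau,\varepsilon;t,s}$.
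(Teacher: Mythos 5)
Your treatment of everything outside the relevant components matches the paper's own proof: the identification $\fB_{\varepsilon;s}(V_s(X_\Cdot))=\fG_{\varepsilon;s}(X_\Cdot)$ on the support of $\fB_{\varepsilon;s}$, the bound for $\Pi_{>4}\fW_{\tau,\varepsilon;\Cdot}(X_\Cdot)$ by chaining Lemma~\ref{lem:I_bound} with Lemma~\ref{lem:contraction_B}, and the use of Lemma~\ref{lem:A_map_bound}~(B),(D),(E) for the $\fE\fC$-counterterm are exactly the steps of the paper (there $\fW_{\tau,\varepsilon;\Cdot}=\fW^{(1)}+\fW^{(2)}$ with $\fW^{(1)}=\fI_{\tau,\varepsilon;\Cdot}(u\mapsto u\,\fG_{\varepsilon;u}(X_\Cdot))$ and $\fW^{(2)}=-(\Pi_2+\Pi_4)\fE\fC_{\tau,\varepsilon;1,\Cdot}W_\Cdot$). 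The gap is the rearrangement you propose for $m\in\{2,4\}$. Set $J_t:=\int_0^t\fE\fA_{\tau,\varepsilon;t,s}\fB_{\varepsilon;s}(V_s)\,\rd s$ and $J^m_t:=\Pi^m J_t$. Remark~\ref{rem:fA_fE2}~(C) gives
\begin{equation}
 \int_0^t\fR\fE\fA^m_{\tau,\varepsilon;1,s}\fB_{\varepsilon;s}(V_s)\,\rd s
 =\fR J^m_t+\fR\,\fE\fC^m_{\tau,\varepsilon;1,t}J_t,
\end{equation}
so the correct identity is $\fW^m_{\tau,\varepsilon;t}(X_\Cdot)=\fR J^m_t+\fR\,\fE\fC^m_{\tau,\varepsilon;1,t}J_t-\fE\fC^m_{\tau,\varepsilon;1,t}W_t$, whereas your formula drops the outer $\fR$ on the middle term (and, for $m=2$, additionally replaces the degree-four component $J^4_t$ seen by $\fC^2_{\tau,\varepsilon;1,t}$ with $\fR J^4_t$). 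By Lemma~\ref{lem:R_L_identity} the difference is the local functional $U^4\,\fL\fE\fC^4_{\tau,\varepsilon;1,t}J_t$ for $m=4$ (a combination of $U^2$, $U^2_\partial$ for $m=2$), and its coefficient $\int_0^t\fL\fE\bigl(\fA^4_{\tau,\varepsilon;1,s}-\fA^4_{\tau,\varepsilon;t,s}\bigr)\fB_{\varepsilon;s}(V_s)\,\rd s$ is generically nonzero: the contraction in $\fE\fA_{\tau,\varepsilon;u,s}$ uses the covariance $G_{\tau,\varepsilon;s}-G_{\tau,\varepsilon;u}$, which for $u=1$ is $G_{\tau,\varepsilon;s}$ but for $u=t$ is not, and such local parts are precisely the quantities that drive the flow of $g_t,r_t,z_t$ in Def.~\ref{dfn:X_map}. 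So $\fR$ does not interchange with $\fE\fC^m_{\tau,\varepsilon;1,t}$ as your ``main obstacle'' paragraph asserts: $\fR$ annihilates the $\fL$-part of a degree-$m$ kernel, while $\fE\fC^m_{\tau,\varepsilon;1,t}$ produces kernels with nonvanishing $\fL$-part. Whether or not the omitted local term is ultimately harmless, your argument neither identifies nor estimates it, so the bounds you prove concern a functional different from $\fW^m_{\tau,\varepsilon;t}(X_\Cdot)$ and (A)--(D) are not established for $m\in\{2,4\}$.

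The repair needs no rearrangement at all. Bound $\int_0^t\fR\fE\fA^m_{\tau,\varepsilon;1,s}\fB_{\varepsilon;s}(V_s)\,\rd s$ directly by repeating the argument for the relevant components in the proof of Lemma~\ref{lem:I_bound}: Lemma~\ref{lem:R_bounds} and the $\sW$-norm control of $s\mapsto\fA_{\tau,\varepsilon;u,s}V_s$ are uniform in the upper scale $u$, so the upper scale $1$ in place of $t$ changes nothing in those estimates (this also disposes of the scale mismatch you were trying to absorb algebraically). Then bound the counterterm $-(\Pi_2+\Pi_4)\fE\fC_{\tau,\varepsilon;1,t}W_t$ in $\sV^{4,3;2-80\kappa}$ (resp.\ $\tilde\sV^{1,2;2-80\kappa}$) using Lemma~\ref{lem:A_map_bound}~(B),(D),(E) together with Remark~\ref{rem:fM_fE}, and pass to the $\sW$-norms via Remarks~\ref{rem:sW_sV} and~\ref{rem:sV_norm_estimate}; this is exactly how the paper concludes.
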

\begin{proof}
For $\tau,\varepsilon\in[0,1]$, $X_\Cdot\equiv(g_\Cdot,r_\Cdot,z_\Cdot,W_\Cdot)\in\sX_{\tau,\varepsilon}$ and $s\in(0,1]$ we define $$\fH_{\tau,\varepsilon;s}(X_\Cdot):=(\fH^m_{\tau,\varepsilon;s}(X_\Cdot))_{m\in\bN_+}\in\sN$$ by the equality
\begin{equation}
 \fH_{\tau,\varepsilon;s}(X_\Cdot):=W_s-\fE\fA_{\tau,\varepsilon;1,s}W_s=-\fE\fC_{\tau,\varepsilon;1,s}W_s.
\end{equation}
Recalling that the maps $\fI_{\tau,\varepsilon;s}$ and $\fG_{\tau,\varepsilon;s}$ were defined in Lemmas~\ref{lem:I_bound} and~\ref{lem:contraction_B}, respectively, and using the notation introduced in Def.~\ref{dfn:sN} we obtain
\begin{equation}
 \fW_{\tau,\varepsilon;s}(X_\Cdot)
 :=
 \fW^{(1)}_{\tau,\varepsilon;s}(X_\Cdot)
 +
 \fW^{(2)}_{\tau,\varepsilon;s}(X_\Cdot),
\end{equation}
where
\begin{equation}
 \fW^{(1)}_{\tau,\varepsilon;s}(X_\Cdot):=\fI_{\tau,\varepsilon;s}(u\mapsto u\,\fG_{\varepsilon;u}(X_\Cdot)),
 \qquad
 \fW^{(2)}_{\tau,\varepsilon;s}(X_\Cdot):=(\Pi_2+\Pi_4)\fH_{\tau,\varepsilon;s}(X_\Cdot).
\end{equation}
Observe that
\begin{equation}
\begin{gathered}
 \fW^{(1)}_{\tau,\varepsilon;s}(X_\Cdot)-\fW^{(1)}_{\tau,\varepsilon;s}(Y_\Cdot)
 =
 \fI_{\tau,\varepsilon;s}(u\mapsto u\,(\fG_{\varepsilon;u}(X_\Cdot)-\fG_{\varepsilon;u}(Y_\Cdot))),
 \\
 \fW^{(1)}_{\tau,\varepsilon;s}(X_\Cdot)-\fW^{(1)}_{\tau,\varepsilon;s}(Z_\Cdot)
 =
 \fI_{\tau,\varepsilon;s}(u\mapsto u\,(\fG_{\varepsilon;u}(X_\Cdot)-\fG_{\varepsilon;u}(Z_\Cdot))),
 \\
 \fW^{(1)}_s(Z_\Cdot)-\fW^{(1)}_{\tau,\varepsilon;s}(Z_\Cdot)
 =
 (\fI_{s}-\fI_{\tau,\varepsilon;s})(u\mapsto u\,\fG_{u}(Z_\Cdot))
 +
 \fI_{\tau,\varepsilon;s}(u\mapsto u\,(\fG_{u}(Z_\Cdot)-\fG_{\varepsilon;u}(Z_\Cdot))).
\end{gathered}
\end{equation}
It follows now from Lemma~\ref{lem:I_bound} and Lemma~\ref{lem:contraction_B} that the bounds~(A),~(B),~(C) with $\fW_{\tau,\varepsilon;s}(X_\Cdot)$ replaced by $\fW^{(1)}_{\tau,\varepsilon;s}(X_\Cdot)$ hold true. Consequently, by the triangle inequality to complete the proof of the lemma it suffices to establish the bounds~(A),~(B),~(C) with $\fW_{\tau,\varepsilon;s}(X_\Cdot)$ replaced by $\fW^{(2)}_{\tau,\varepsilon;s}(X_\Cdot)$. By Remarks~\ref{rem:sW_sV} and~\ref{rem:sV_norm_estimate} the latter bounds are implied by the following bounds:
\begin{itemize}
 \item[(A${}_1$)] $\|s\mapsto \fH_{\tau,\varepsilon;s}(X_\Cdot)\|_{\sV^{4,3;2-80\kappa}}
 \leq
 C\,\lambda^\kappa$,
 \item[(B${}_1$)] $\|s\mapsto (\fH_{\tau,\varepsilon;s}(X_\Cdot)-\fH_{\tau,\varepsilon;s}(Y_\Cdot))\|_{\sV^{4,3;2-80\kappa}}
 \leq 
 C\,\lambda^\kappa\, \|X_\Cdot-Y_\Cdot\|_{\sX_{\tau,\varepsilon}}$,

 \item[(C${}_1$)] $\|s\mapsto (\fH_{\tau,\varepsilon;s}(X_\Cdot)-\fH_{\tau,\varepsilon;s}(Z_\Cdot))\|_{\tilde\sV^{1,2;2-80\kappa}}
 \leq
 C\,\lambda^\kappa\, \|X_\Cdot-Z_\Cdot\|_{\tilde\sX_{\tau,\varepsilon}}$,
 
 \item[(D${}_1$)] $\|s\mapsto (\fH_s(Z_\Cdot)-\fH_{\tau,\varepsilon;s}(Z_\Cdot))\|_{\tilde\sV^{1,2;2-80\kappa}}
 \leq
 C\,\lambda_{\tau\vee\varepsilon}^\kappa$.
\end{itemize}
The above bounds follow from Lemma~\ref{lem:A_map_bound}~(B) and~(D), Remark~\ref{rem:fM_fE} and Def.~\ref{dfn:sX}. This finishes the proof.
\end{proof}

\begin{lem}\label{lem:contraction_c}
Let $\gamma\in[0,\infty)$. For $\tau,\varepsilon\in[0,1]$, $W_\Cdot\in\sW_{\tau,\varepsilon}^{8,3;\gamma}$ and $s\in(0,1]$ define
\begin{equation}
\begin{aligned}
 \mathbf{c}^{m,0}_{\tau,\varepsilon;s}(W_\Cdot)
 &:=
 \fL\fE\fA^m_{\tau,\varepsilon;1,s}W_s,\qquad m\in\{2,4\},
 \\
 \mathbf{c}^{2,1}_{\tau,\varepsilon;s}(W_\Cdot)
 &:=
 \fL_\partial\fE\fA^2_{\tau,\varepsilon;1,s}W_s. 
\end{aligned}
\end{equation}
There exists $\lambda_\star\in(0,1]$ and $C\in(0,\infty)$ such that for all $\lambda\in(0,\lambda_\star]$, all $\tau,\varepsilon\in[0,1]$, $t\in(0,1]$, $W_\Cdot,\tilde W_\Cdot\in \sW_{\tau,\varepsilon}^{8,3;\gamma}$, $\hat W_\Cdot\in\sW^{8,3;\gamma}$ and all $(m,i)\in\{(2,0),(4,0),(2,1)\}$ it holds:
\begin{itemize}
 \item[(A)] $\lambda_t^{-\gamma-2\kappa m}\,t^{3-m/2-i}\,|\mathbf{c}^{(m,i)}_{\tau,\varepsilon;t}(W_\Cdot)|
 \leq C\, \|s\mapsto s\,W_s\|_{\sW_{\tau,\varepsilon}^{8,3;\gamma}}$,
 \item[(B)] $\lambda_t^{-\gamma-2\kappa m}\,t^{3-m/2-i}\,|\mathbf{c}^{(m,i)}_{\tau,\varepsilon;t}(W_\Cdot)-\mathbf{c}^{(m,i)}_{\tau,\varepsilon;t}(\tilde W_\Cdot)|
 \leq 
 C\, \|s\mapsto s\,(W_s-\tilde W_s)\|_{\sW_{\tau,\varepsilon}^{8,3;\gamma}}$.

 \item[(C)] $\lambda_t^{-\gamma-2\kappa m}\,t^{3-m/2-i}\,|\mathbf{c}^{(m,i)}_{\tau,\varepsilon;t}(W_\Cdot)-\mathbf{c}^{(m,i)}_{\tau,\varepsilon;t}(\hat W_\Cdot)|
 \leq 
 C\, \|s\mapsto s\,(W_s-\hat W_s)\|_{\tilde\sW_{\tau,\varepsilon}^{2,2;\gamma}}$,
 
 \item[(D)] $\lambda_t^{-\gamma-2\kappa m}\,t^{3-m/2-i}\,|\mathbf{c}^{(m,i)}_{t}(\hat W_\Cdot)-\mathbf{c}^{(m,i)}_{\tau,\varepsilon;t}(\hat W_\Cdot)|
 \leq
 C\,\lambda_{\tau\vee\varepsilon}^\kappa\, \|s\mapsto s\,\hat W_s\|_{\sW_{\tau,\varepsilon}^{8,3;\gamma}}$.
\end{itemize} 
\end{lem}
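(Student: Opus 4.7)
All four inequalities are linear in the argument $W_\Cdot$, so (B) follows from (A) applied to $W_\Cdot-\tilde W_\Cdot$, while (C) and (D) are linear estimates in their own right, with (C) using the weaker right-hand norm $\|\cdot\|_{\tilde\sW^{2,2;\gamma}_{\tau,\varepsilon}}$. The plan is to bound the numbers $\mathbf{c}^{(m,i)}_{\tau,\varepsilon;t}(W_\Cdot)$ by combining three ingredients: (i)~Lemma~\ref{lem:fL_continuity}, which controls $\fL$ and $\fL_\partial$ by the weighted $\sM^m$-norm of the underlying kernel (with a free parameter that I will choose equal to the current scale~$t$); (ii)~Remark~\ref{rem:fM_fE}, which says that $\fE=\fE_1$ only decreases the weighted $\sM^m$-norm; and (iii)~the definition of $\|\cdot\|_{\sW^{8,3;\gamma}_{\tau,\varepsilon}}$ with the free parameter $u=1$, which controls exactly the object $\fA^{m}_{\tau,\varepsilon;1,s}W_s$ (since $1\vee s=1$).

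For (A), fix $t\in(0,1]$. Lemma~\ref{lem:fL_continuity} with $\nu=0$ and free parameter equal to $t$ gives $|\fL\,\fE\fA^m_{\tau,\varepsilon;1,t}W_t|\le \sup_{a,\sigma}\|w^m_t\,\fE\fA^{m,a,\sigma}_{\tau,\varepsilon;1,t}W_t\|_{\sM^m}$ for $m\in\{2,4\}$, and the same expression multiplied by $Ct$ bounds $|\fL_\partial \fE\fA^2_{\tau,\varepsilon;1,t}W_t|$. Applying Remark~\ref{rem:fM_fE} to drop $\fE$, rewriting $\fA^{m,a,\sigma}_{\tau,\varepsilon;1,t}W_t=\frac1t\fA^{m,a,\sigma}_{\tau,\varepsilon;1,t}(tW_t)$, and invoking Definition~\ref{dfn:sW} yields, for each $a,\sigma$,
\begin{equation*}
 \|w^m_t\,\fA^{m,a,\sigma}_{\tau,\varepsilon;1,t}W_t\|_{\sM^m}
 \le C_m\,\lambda_t^{\gamma+2\kappa m}\,t^{|a|+m/2-3}\,\|s\mapsto sW_s\|_{\sW^{8,3;\gamma}_{\tau,\varepsilon}},
\end{equation*}
with $C_m=8^{-m}m^{-3}$. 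Taking $|a|=0$ (which minimises the exponent of $t$), multiplying by the extra factor $t^i$, and rearranging gives (A) for all three cases $(m,i)\in\{(2,0),(4,0),(2,1)\}$. Estimate~(C) is obtained by the same chain of inequalities with $\nu=1/2$ in Lemma~\ref{lem:fL_continuity} (so that $w^m_{t;\nu}=\tilde w^m_t$) and with $\|\cdot\|_{\tilde\sW^{2,2;\gamma}_{\tau,\varepsilon}}$ on the right; the combinatorial constants $2^{-m}m^{-2}$ stay finite for $m\in\{2,4\}$.

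For (D), write
\begin{equation*}
 \mathbf{c}^{(m,i)}_{t}(\hat W_\Cdot)-\mathbf{c}^{(m,i)}_{\tau,\varepsilon;t}(\hat W_\Cdot)
 =\fL^{(i)}\,\fE\,(\fA^m_{1,t}-\fA^m_{\tau,\varepsilon;1,t})\hat W_t,
\end{equation*}
bound the modulus via Lemma~\ref{lem:fL_continuity} with $\nu=1/2$, drop $\fE$ by Remark~\ref{rem:fM_fE}, and then apply Lemma~\ref{lem:A_map_bound}(E) with $\alpha=8,\beta=3$ to the scale-dependent functional $s\mapsto s\hat W_s$; the resulting $\tilde\sV^{4,2;\gamma}$-bound carries the crucial factor $\lambda_{\tau\vee\varepsilon}^\kappa$. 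Passing from $\|s\mapsto s\hat W_s\|_{\sV^{8,3;\gamma}}$ to $\|s\mapsto s\hat W_s\|_{\sW^{8,3;\gamma}_{\tau,\varepsilon}}$ on the right via Remark~\ref{rem:sV_sW} closes the argument.

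The computations are essentially bookkeeping and no step constitutes a real obstacle; the only thing to watch is that in both (A) and (D) the infimum over $a\in\bA^m$ of $t^{|a|}$ is attained at $|a|=0$, which sets the overall exponent of $t$ on both sides of every inequality, and that the conditional expectation $\fE$ commutes with the weight because the latter is scalar (Remark~\ref{rem:fM_fE}).
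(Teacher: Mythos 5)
Your proposal is correct and follows essentially the same route as the paper's proof: control the kernels of $\fA^m_{\tau,\varepsilon;1,t}W_t$ via the $\sW^{8,3;\gamma}_{\tau,\varepsilon}$-norm (the choice $u=1$ in its definition, equivalently Lemma~\ref{lem:A_map_bound}), drop $\fE$ by Remark~\ref{rem:fM_fE}, apply Lemma~\ref{lem:fL_continuity} at scale $t$ with $\nu=0$ (resp. $\nu=1/2$ for the tilde-norm items), get (B) by linearity, and use Lemma~\ref{lem:A_map_bound}~(E) together with Remark~\ref{rem:sV_sW} for (D). The bookkeeping of the powers of $t$ and $\lambda_t$ matches the stated exponents, so no gap.
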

\begin{rem}
Actually, since the map $\mathbf{c}^{(m,i)}_{\tau,\varepsilon;t}$ is linear Item~(B) of the above lemma follows immediately from Item~(A).
\end{rem}

\begin{proof}
By Remark~\ref{rem:sV_norm_estimate} and Lemma~\ref{lem:A_map_bound}~(A),~(C),~(E) there exists $C\in(0,\infty)$ such that
\begin{itemize}
 \item[(A${}_1$)] $\|s\mapsto s\,\fA^m_{\tau,\varepsilon;1,s}W_s\|_{\sV^{m;\gamma}}
 \leq 
 C\, \|s\mapsto s\,W_s\|_{\sW_{\tau,\varepsilon}^{8,3;\gamma}}$,
 \item[(B${}_1$)] $\|s\mapsto s\,(\fA^m_{\tau,\varepsilon;1,s}W_s-\fA^m_{\tau,\varepsilon;1,s}\tilde W_s)\|_{\sV^{m;\gamma}}
 \leq 
 C\, \|s\mapsto s\,(W_s-\tilde W_s)\|_{\sW_{\tau,\varepsilon}^{8,3;\gamma}}$,

 \item[(C${}_1$)] $\|s\mapsto s\,(\fA^m_{\tau,\varepsilon;1,s}W_s-\fA^m_{\tau,\varepsilon;1,s}\hat W_s)\|_{\tilde\sV^{m;\gamma}}
 \leq
 C\, \|s\mapsto s\,(W_s-\hat W_s)\|_{\tilde\sW_{\tau,\varepsilon}^{2,2;\gamma}}$,
 
 \item[(D${}_1$)] $\|s\mapsto s\,(\fA^m_{1,s}\hat W_s-\fA^m_{\tau,\varepsilon;1,s}\hat W_s)\|_{\tilde\sV^{m;\gamma}}
 \leq
 C\,\lambda_{\tau\vee\varepsilon}^\kappa\, \|s\mapsto s\,\hat W_s\|_{\sW_{\tau,\varepsilon}^{8,3;\gamma}}$
\end{itemize}
for all $\tau,\varepsilon\in[0,1]$ and $m\in\{2,4\}$. The bounds~(A), (B), (C), (D) follow now from Def.~\ref{dfn:V_space} of the norm $\|\Cdot\|_{\sV^{m;\gamma}}$, Remark~\ref{rem:fM_fE} and Lemma~\ref{lem:fL_continuity}. This completes the proof.
\end{proof}

\begin{lem}\label{lem:contraction_a}
For $\tau,\varepsilon\in[0,1]$, $X_\Cdot\equiv(g_\Cdot,r_\Cdot,z_\Cdot,W_\Cdot)\in\sX_{\tau,\varepsilon}$ and $s\in(0,1]$ define
\begin{equation}
\begin{aligned}
 \mathbf{a}^{(1)}_{\tau,\varepsilon;s}(X_\Cdot)
 &:=
 \fL\fE\fA^4_{\tau,\varepsilon;1,s}\fB_{\varepsilon;s}(U(1/g_s,0,0)),
 \\
 \mathbf{a}^{(2)}_{\tau,\varepsilon;s}(X_\Cdot)
 &:=
 \fL\fE\fA^4_{\tau,\varepsilon;1,s}\fB_{\varepsilon;s}(U(2/g_s,r_s,z_s),U(0,r_s,z_s)),
 \\
 \mathbf{a}^{(3)}_{\tau,\varepsilon;s}(X_\Cdot)
 &:=
 \fL\fE\fA^4_{\tau,\varepsilon;1,s}\fB_{\varepsilon;s}(2U(1/g_s,r_s,z_s)+W_s,W_s)
\end{aligned} 
\end{equation}
and 
\begin{equation}
\begin{aligned}
 \mathbf{a}^{(4)}_{\tau,\varepsilon;s}(X_\Cdot)
 &:=
 \fL_\partial\fE\fA^2_{\tau,\varepsilon;1,s}\fB_{\varepsilon;s}(U(1/g_s,0,0)),
 \\
 \mathbf{a}^{(5)}_{\tau,\varepsilon;s}(X_\Cdot)
 &:=
 \fL_\partial\fE\fA^2_{\tau,\varepsilon;1,s}\fB_{\varepsilon;s}(U(2/g_s,r_s,z_s),U(0,r_s,z_s)),
 \\
 \mathbf{a}^{(6)}_{\tau,\varepsilon;s}(X_\Cdot)
 &:=
 \fL_\partial\fE\fA^2_{\tau,\varepsilon;1,s}\fB_{\varepsilon;s}(2U(1/g_s,r_s,z_s)+W_s,W_s).
\end{aligned} 
\end{equation}
There exists $\lambda_\star\in(0,1]$ and $C\in(0,\infty)$ such that for all $\lambda\in(0,\lambda_\star]$, all $\tau,\varepsilon\in[0,1]$, all $t\in(0,1]$, all \mbox{$X_\Cdot,Y_\Cdot\in \sY_{\tau,\varepsilon}$}, $Z_\Cdot\in\sY$ and all \mbox{$i\in\{1,\ldots,6\}$} it holds:
\begin{itemize}
 \item[(A)] $\lambda_t^{-\gamma(i)}\,t\,|\mathbf{a}^{(i)}_{\tau,\varepsilon;t}(X_\Cdot)|
 \leq C\, \lambda^\kappa$,
 \item[(B)] $\lambda_t^{-\gamma(i)}\,t\,|\mathbf{a}^{(i)}_{\tau,\varepsilon;t}(X_\Cdot)-\mathbf{a}^{(i)}_{\tau,\varepsilon;t}(Y_\Cdot)|
 \leq 
 C\,\lambda^\kappa\, \|X_\Cdot-Y_\Cdot\|_{\sX}$.

 \item[(C)] $\lambda_t^{-\gamma(i)}\,t\,|\mathbf{a}^{(i)}_{\tau,\varepsilon;t}(X_\Cdot)-\mathbf{a}^{(i)}_{\tau,\varepsilon;t}(Z_\Cdot)|
 \leq 
 C\,\lambda^\kappa\, \|X_\Cdot-Z_\Cdot\|_{\tilde\sX}$,
 
 \item[(D)] $\lambda_t^{-\gamma(i)}\,t\,|\mathbf{a}^{(i)}_{t}(Z_\Cdot)-\mathbf{a}^{(i)}_{\tau,\varepsilon;t}(Z_\Cdot)|
 \leq
 C\,\lambda_{\tau\vee\varepsilon}^\kappa$,
\end{itemize}
where $\gamma(1)=2-32\kappa$, $\gamma(2)=\gamma(3)=3-108\kappa$, $\gamma(4)=2-36\kappa$, $\gamma(5)=\gamma(6)=3-112\kappa$.
\end{lem}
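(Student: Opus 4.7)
The strategy is to express each $\mathbf{a}^{(i)}_{\tau,\varepsilon;t}$ as a composition $\mathbf{c}^{(m(i),j(i))}_{\tau,\varepsilon;t}$ (in the sense of Lemma~\ref{lem:contraction_c}) applied to a scale-dependent functional built from the maps $\fG^{(k,l)}_{\varepsilon;\cdot}$ introduced in Lemma~\ref{lem:contraction_B} and the components $\fV^{(k)}_{\varepsilon;\cdot}$ introduced in Lemma~\ref{lem:contraction_V}, and then to chain Lemmas~\ref{lem:contraction_c} and~\ref{lem:contraction_B}. For $i\in\{1,2,3\}$ we use $(m(i),j(i))=(4,0)$ while for $i\in\{4,5,6\}$ we use $(m(i),j(i))=(2,1)$, so that the factor $t^{3-m(i)/2-j(i)}$ in Lemma~\ref{lem:contraction_c} equals $t$ in each case, matching the factor of $t$ on the left-hand sides of the bounds to be proved.

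Since $\fB_{\varepsilon;s}$ vanishes for $s\leq\varepsilon$ (because $\dot H_{\varepsilon;s}=0$ there by Remark~\ref{rem:G}~(A)), the bounds are trivial for $t\leq\varepsilon$, and for $t>\varepsilon$ we have $\theta_{\varepsilon;t}=1$, so that $U(1/g_s,0,0)=\fV^{(2)}_{\varepsilon;s}$, $U(0,r_s,z_s)=\fV^{(1)}_{\varepsilon;s}$ and $W_s=\fV^{(3)}_{\varepsilon;s}$ on the support of $\fB_{\varepsilon;s}$. Using the bilinearity of $\fB_{\varepsilon;s}(\cdot,\cdot)$ this gives the decomposition
\begin{equation*}
\mathbf{a}^{(1)}=\mathbf{c}^{(4,0)}(\fG^{(2,2)}_{\varepsilon;\cdot}),\quad \mathbf{a}^{(4)}=\mathbf{c}^{(2,1)}(\fG^{(2,2)}_{\varepsilon;\cdot}),\quad \mathbf{a}^{(2)}=2\mathbf{c}^{(4,0)}(\fG^{(2,1)}_{\varepsilon;\cdot}),
\end{equation*}
the identity $\mathbf{a}^{(5)}=2\mathbf{c}^{(2,1)}(\fG^{(2,1)}_{\varepsilon;\cdot})+\mathbf{c}^{(2,1)}(\fG^{(1,1)}_{\varepsilon;\cdot})$, and analogous formulas $\mathbf{a}^{(3)}=\mathbf{c}^{(4,0)}(2\fG^{(2,3)}_{\varepsilon;\cdot}+2\fG^{(1,3)}_{\varepsilon;\cdot}+\fG^{(3,3)}_{\varepsilon;\cdot})$, $\mathbf{a}^{(6)}=\mathbf{c}^{(2,1)}(2\fG^{(2,3)}_{\varepsilon;\cdot}+2\fG^{(1,3)}_{\varepsilon;\cdot}+\fG^{(3,3)}_{\varepsilon;\cdot})$. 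In the decomposition of $\mathbf{a}^{(2)}$ the term $\fG^{(1,1)}_{\varepsilon;\cdot}$ is absent because $\fB$ decreases total degree by two, so $\Pi^4\fG^{(1,1)}=0$ (both arguments being quadratic).

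For Item~(A) I would combine Lemma~\ref{lem:contraction_c}~(A) and Lemma~\ref{lem:contraction_B}~(A) on each $\fG^{(k,l)}$ appearing in the decomposition, yielding
$\lambda_t^{-\gamma_V(k)-\gamma_V(l)-2\kappa m(i)}\,t\,|\mathbf{c}^{(m(i),j(i))}(\fG^{(k,l)})|\leq C\lambda^\kappa$,
and then check that the resulting exponent of $\lambda_t^{-1}$ is at least $\gamma(i)$ in every case (so that the final bound is obtained by paying a harmless power of $\lambda_t\leq 1$). For $\mathbf{a}^{(1)}$ this gives $2\gamma_V(2)+8\kappa=2-32\kappa=\gamma(1)$, and for $\mathbf{a}^{(4)}$ it gives $2\gamma_V(2)+4\kappa=2-36\kappa=\gamma(4)$; the remaining cases are to be verified by direct arithmetic together with the elementary inequality $\lambda_t\leq\lambda\leq\lambda_\star$. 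For Items~(B), (C) and (D) I would use the bilinearity $\fG^{(k,l)}(X)-\fG^{(k,l)}(Y)=\fB(\fV^{(k)}(X),\fV^{(l)}(X)-\fV^{(l)}(Y))+\fB(\fV^{(k)}(X)-\fV^{(k)}(Y),\fV^{(l)}(Y))$ together with analogous identities for the difference in $\varepsilon$, and then plug in the corresponding parts~(B),~(C),~(D) of Lemmas~\ref{lem:contraction_V},~\ref{lem:contraction_B} and~\ref{lem:contraction_c}, noting that the maps $\mathbf{c}^{(m,i)}$ are linear so Lemma~\ref{lem:contraction_c}~(B),~(C),~(D) transfers the $\sW^{8,3;\gamma}_{\tau,\varepsilon}$ (resp.\ $\tilde\sW^{2,2;\gamma}_{\tau,\varepsilon}$) estimates on $s\fG^{(k,l)}$-differences directly to estimates on $\mathbf{c}^{(m(i),j(i))}$-differences.

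The main technical point is the bookkeeping of the $\gamma$-exponents across the six cases and the three composition layers (Lemma~\ref{lem:contraction_V} $\to$ Lemma~\ref{lem:contraction_B} $\to$ Lemma~\ref{lem:contraction_c}); in particular, one has to verify that for each $\fG^{(k,l)}$ contributing to $\mathbf{a}^{(i)}$ the inequality $\gamma_V(k)+\gamma_V(l)+2\kappa m(i)\geq \gamma(i)$ holds with the choices of $\gamma(i)$ stated in the lemma. This is a direct arithmetic check in each case, using the numerical values $\gamma_V(1)=1-40\kappa$, $\gamma_V(2)=1-20\kappa$, $\gamma_V(3)=2-80\kappa$ from Lemma~\ref{lem:contraction_V} and the smallness of $\kappa=1/1000$. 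Once the gamma matching is confirmed, no further estimate is needed: the conclusions (A), (B), (C), (D) follow simply by summing finitely many terms and invoking the triangle inequality.
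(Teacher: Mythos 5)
Your first step — expanding each $\mathbf{a}^{(i)}$ by bilinearity into $\mathbf{c}^{(m,j)}$-type local parts of the pieces $\fG^{(k,l)}_{\varepsilon;\Cdot}$ and then chaining Lemmas~\ref{lem:contraction_V}, \ref{lem:contraction_B} and~\ref{lem:contraction_c} — is exactly the paper's strategy, and your handling of Items~(B)--(D) via bilinearity and the difference estimates of those lemmas is also the paper's route. The problem is the exponent bookkeeping that you defer to a ``direct arithmetic check'': your own criterion $\gamma_V(k)+\gamma_V(l)+2\kappa m(i)\geq\gamma(i)$ fails precisely in the cases you did not verify. For $i=2$ the only pair contributing to the four-point local part is $(k,l)=(1,2)$, and the chain yields at best the exponent $(1-40\kappa)+(1-20\kappa)+8\kappa=2-52\kappa$, which falls short of the claimed $\gamma(2)=3-108\kappa$ by essentially a full power of $\lambda_t$; similarly for $i=5$ one gets about $2-56\kappa$ against $\gamma(5)=3-112\kappa$. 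This is not a matter of ``the smallness of $\kappa$'' or of paying a harmless power of $\lambda_t\leq 1$: the inequality goes the wrong way, so the plan as written cannot deliver the stated bounds for $i\in\{2,5\}$, and hence not the lemma. (A smaller bookkeeping issue of the same kind appears for $i\in\{3,6\}$ through the $(1,3)$ term, where the naive sum gives $3-112\kappa$; here you need the paper's precise choice of $\gamma$ in Lemma~\ref{lem:contraction_c} rather than a generic ``$\geq\gamma(i)$'' check.)

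The missing ingredient is the one the paper flags in the remark right after the lemma: the direct estimate is \emph{not} strong enough because it ignores that $\mathbf{a}^{(2)}_{\tau,\varepsilon;s}(X_\Cdot)$ and $\mathbf{a}^{(5)}_{\tau,\varepsilon;s}(X_\Cdot)$ vanish identically for $s\in(0,1/2)$. Indeed, since $\int_{\bR^2}\dot G_{\varepsilon;s}(x)\,\rd x=0$ for $s\in(0,1/2)$ by Remark~\ref{rem:G}~(D), one has
\begin{equation}
 \fL\fE\fA^4_{\tau,\varepsilon;1,s}\fB_{\varepsilon;s}(U(2/g_s,r_s,z_s),U(0,r_s,z_s))=0,
 \qquad
 \fL_\partial\fE\fA^2_{\tau,\varepsilon;1,s}\fB_{\varepsilon;s}(U(2/g_s,r_s,z_s),U(0,r_s,z_s))=0
\end{equation}
for $s\in(0,1/2)$, so these quantities are supported in $[1/2,1]$. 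The paper encodes this by inserting the prefactor $\theta_{1/2;s}^{1-40\kappa}$, i.e.\ by using the identities of the form~\eqref{eq:a2_identity}, and only then runs the Lemma~\ref{lem:contraction_B}/Lemma~\ref{lem:contraction_c} machinery; it is this support/vanishing argument that upgrades the naive exponents to $\gamma(2)$ and $\gamma(5)$. Your proposal contains no trace of this step, so while it is adequate for $i\in\{1,4\}$ (where your arithmetic does close), it has a genuine gap for the counterterm cross-terms $i\in\{2,5\}$.
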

\begin{rem}
Observe that it holds
\begin{equation}\label{eq:contraction_a_decomposition}
\begin{aligned}
 (g_s)^2\fL\fE\fA^4_{\tau,\varepsilon;1,s}\fB_{\varepsilon;s}(V_s)
 &=
 (g_s)^2\mathbf{a}^{(1)}_{\tau,\varepsilon;s}(X_\Cdot)
 +
 (g_s)^2\mathbf{a}^{(2)}_{\tau,\varepsilon;s}(X_\Cdot)
 +
 (g_s)^2\mathbf{a}^{(3)}_{\tau,\varepsilon;s}(X_\Cdot),
 \\
 \fL_\partial\fE\fA^2_{\tau,\varepsilon;1,s}\fB_{\varepsilon;s}(V_s)
 &=
 \mathbf{a}^{(4)}_{\tau,\varepsilon;s}(X_\Cdot)
 +
 \mathbf{a}^{(5)}_{\tau,\varepsilon;s}(X_\Cdot)
 +
 \mathbf{a}^{(6)}_{\tau,\varepsilon;s}(X_\Cdot),
\end{aligned} 
\end{equation}
where $V_s=U(1/g_s,r_s,z_s)+W_s$ and $X_\Cdot\equiv(g_\Cdot,r_\Cdot,z_\Cdot,W_\Cdot)$. Note that the expressions 
\begin{equation}
 (g_s)^2\fL\fE\fA^4_{\tau,\varepsilon;1,s}\fB_{\varepsilon;s}(V_s),
 \qquad
 \fL\fE\fA^2_{\tau,\varepsilon;1,s}\fB_{\varepsilon;s}(V_s),
 \qquad
 \fL_\partial\fE\fA^2_{\tau,\varepsilon;1,s}\fB_{\varepsilon;s}(V_s)
\end{equation}
appear in Def.~\ref{dfn:X_map} of the maps $\mathbf{g}_{\tau,\varepsilon;\Cdot}$, $\mathbf{r}_{\tau,\varepsilon;\Cdot}$ and $\mathbf{z}_{\tau,\varepsilon;\Cdot}$. The second of the above expressions will be estimated directly in Lemma~\ref{lem:contraction_r} using Lemmas~\ref{lem:contraction_c} and~\ref{lem:contraction_B}. One could also estimate directly the remaining two expressions without using the decompositions~\eqref{eq:contraction_a_decomposition}. However, estimates obtained in that way are not strong enough for our purposes as they do not take into account the fact that $\mathbf{a}^{(2)}_{\tau,\varepsilon;s}(X_\Cdot)=0$ and $\mathbf{a}^{(6)}_{\tau,\varepsilon;s}(X_\Cdot)=0$ for $s\in(0,1/2)$, which we prove below. Let us also mention that we will not use at all the bounds for $\mathbf{a}^{(1)}_{\tau,\varepsilon;s}(X_\Cdot)$ stated in the above lemma. Instead we will estimate directly $(g_s)^2\mathbf{a}^{(1)}_{\tau,\varepsilon;s}(X_\Cdot)=f_{\tau,\varepsilon;s}$, where $f_{\tau,\varepsilon;s}$ is introduced in Def.~\ref{dfn:f_function}. Note that $f_{\tau,\varepsilon;s}$ does not depend on $X_\Cdot$.
\end{rem}
\begin{proof}
First observe that for $X_\Cdot\equiv(g_\Cdot,r_\Cdot,z_\Cdot,W_\Cdot)\in\sX_{\tau,\varepsilon}$ it holds
\begin{equation}
\begin{aligned}
 \fB_{\varepsilon;s}(U(1/g_s,0,0))
 &=
 \fG^{(2,2)}_{\tau,\varepsilon}(X_\Cdot),
 \\
 \fB_{\varepsilon;s}(U(2/g_s,r_s,z_s),U(0,r_s,z_s))
 &=
 \fG^{(1,1)}_{\tau,\varepsilon}(X_\Cdot)+2\,\fG^{(1,2)}_{\tau,\varepsilon}(X_\Cdot),
 \\
 \fB_{\varepsilon;s}(2U(1/g_s,r_s,z_s)+W_s,W_s)
 &=
 2\,\fG^{(1,3)}_{\tau,\varepsilon}(X_\Cdot)
 +
 2\,\fG^{(2,3)}_{\tau,\varepsilon}(X_\Cdot)
 +
 \fG^{(3,3)}_{\tau,\varepsilon}(X_\Cdot),
\end{aligned}
\end{equation}
where the maps $\fG^{(i,j)}_{\tau,\varepsilon}$ were introduced in Lemma~\ref{lem:contraction_B}. Using the fact that $\int_{\bR^2}\dot G_{\varepsilon;s}(x)\,\rd x=0$ for $s\in(0,1/2)$ by Remark~\ref{rem:G}~(D) as well as Def.~\ref{dfn:A_map},~\ref{dfn:B_map} and~\ref{dfn:U_functional} one shows that
\begin{equation}
\begin{gathered}
 \fL\fE\fA^4_{\tau,\varepsilon;1,s}\fB_{\varepsilon;s}(U(2/g_s,r_s,z_s),U(0,r_s,z_s))=0,
 \\
 \fL_\partial\fE\fA^2_{\tau,\varepsilon;1,s}\fB_{\varepsilon;s}(U(2/g_s,r_s,z_s),U(0,r_s,z_s))=0
\end{gathered} 
\end{equation}
for $s\in(0,1/2)$. Recall that $\theta_{1/2;t}=\lambda_t/\lambda_{1/2\vee t}$. Since $\theta_{1/2;s}=1$ for $s\in[1/2,1]$ we obtain
\begin{equation}\label{eq:a2_identity}
 \mathbf{a}^{(2)}_{\tau,\varepsilon;s}(X_\Cdot)
 =
 \theta_{1/2;s}^{1-40\kappa}\,\fL\fE\fA^4_{\tau,\varepsilon;1,s}\fB_{\varepsilon;s}(U(2/g_s,r_s,z_s),U(0,r_s,z_s))
\end{equation} 
and
\begin{equation}\label{eq:a6_identity}
 \mathbf{a}^{(6)}_{\tau,\varepsilon;s}(X_\Cdot)
 =
 \theta_{1/2;s}^{1-40\kappa}\,\fL_\partial\fE\fA^2_{\tau,\varepsilon;1,s}\fB_{\varepsilon;s}(U(2/g_s,r_s,z_s),U(0,r_s,z_s)).
\end{equation}
To prove the bounds~(A), (B), (C), (D) in the cases $i\in\{1,\ldots,6\}$ we use the following arguments, respectively.
\begin{itemize}
 \item[(1)] We apply Lemma~\ref{lem:contraction_c} with $m=4$, $i=0$ and $\gamma=2-40\kappa$ and Lemma~\ref{lem:contraction_B} with $(i,j)=(2,2)$. 
 
 \item[(2)] We apply Lemma~\ref{lem:contraction_c} with $m=4$, $i=0$ and $\gamma=2-80\kappa$ and Lemma~\ref{lem:contraction_B} with $(i,j)\in\{(1,1),(1,2)\}$ and use the identity~\eqref{eq:a2_identity}.
 
 \item[(3)] We apply Lemma~\ref{lem:contraction_c} with $m=4$, $i=0$ and $\gamma=3-120\kappa$ and Lemma~\ref{lem:contraction_B} with $(i,j)\in\{(1,3),(2,3),(3,3)\}$.
 
 \item[(4)] We apply Lemma~\ref{lem:contraction_c} with $m=2$, $i=1$ and $\gamma=2-40\kappa$ and Lemma~\ref{lem:contraction_B} with $(i,j)=(2,2)$. 
 
 \item[(5)] We apply Lemma~\ref{lem:contraction_c} with $m=2$, $i=1$ and $\gamma=2-80\kappa$ and Lemma~\ref{lem:contraction_B} with $(i,j)\in\{(1,1),(1,2)\}$ and use the identity~\eqref{eq:a6_identity}.
 
 \item[(6)] We apply Lemma~\ref{lem:contraction_c} with $m=2$, $i=1$ and $\gamma=3-120\kappa$ and Lemma~\ref{lem:contraction_B} with $(i,j)\in\{(1,3),(2,3),(3,3)\}$.
\end{itemize}
This finishes the proof.
\end{proof}

\begin{dfn}\label{dfn:f_function}
For $\tau,\varepsilon\in[0,1]$ and $t\in(0,1]$ we define
\begin{equation}
 f_{\tau,\varepsilon;t}:=
 \lambda^{-1}+
 \int_t^1\fL\fE\fA^4_{\tau,\varepsilon;1,s}\fB_{\varepsilon;s}(U(1,0,0))\,\rd s,
 \qquad
 h_{\tau,\varepsilon;t}:=\fL\fE\fA^2_{\tau,\varepsilon;1,t} U(1,0,0).
\end{equation}
\end{dfn}
\begin{rem}
For $X_\Cdot \equiv (g_\Cdot,r_\Cdot,z_\Cdot,W_\Cdot)\in\sX_{\tau,\varepsilon}$ it holds
\begin{equation}
 \mathbf{g}_{\tau,\varepsilon;t}(X_\Cdot)
 =
 f_{\tau,\varepsilon;t}+\int_t^1 (g_s)^2\,(\fL\fE\fA^4_{\tau,\varepsilon;1,s}\fB_{\varepsilon;s}(V_s)-\fL\fE\fA^4_{\tau,\varepsilon;1,s}\fB_{\varepsilon;s}(U(1/g_s,0,0)))\,\rd s
\end{equation}
and
\begin{equation}
 \mathbf{r}_{\tau,\varepsilon;t}(X_\Cdot) 
 =
 -h_{\tau,\varepsilon;t}/g_t
 -
 \int_t^1
 \fL\fE\fA^2_{\tau,\varepsilon;1,s}\fB_{\varepsilon;s}(V_s(X_\Cdot))\,\rd s,
\end{equation}
where $V_s(X_\Cdot)= U(1/g_s,r_s,z_s)+W_s$.
\end{rem}

\begin{lem}\label{lem:contraction_f_h}
There exists $\lambda_\star\in(0,1]$ and $C\in(0,\infty)$ such that for all $\lambda\in(0,\lambda_\star]$ it holds
\begin{itemize}
 \item[(A)] $\lambda_{\varepsilon\vee t}^{-1}/C\leq f_{\tau,\varepsilon;t}\leq C\,\lambda_t^{-1}$,
 \item[(B)] $|f_t-f_{\tau,\varepsilon;t}|\leq C\, \lambda_{\varepsilon\vee\tau}^\kappa\,\lambda_t^{-1-8\kappa}$,
 \item[(C)] $|h_{\tau,\varepsilon;t}|\leq C\,\lambda_t^{-4\kappa}\,t^{-1}$,
 \item[(D)] $|h_t-h_{\tau,\varepsilon;t}|\leq C\, \lambda_{\varepsilon\vee\tau}^\kappa\,\lambda_t^{-4\kappa}\,t^{-1}$
\end{itemize}
for all $\tau,\varepsilon\in[0,1]$ and $t\in(0,1]$.
\end{lem}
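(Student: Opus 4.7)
The heart of parts (A) and (B) is the explicit identity $f_t = \lambda_t^{-1}$ in the zero cut-off case. By direct computation, using Def.~\ref{dfn:B_map} of $\fB_s$ together with the Wick contraction from Remark~\ref{rem:Wick} to express $\fE\fA^4_{1,s}\fB_s(U(1,0,0))$ as a one-loop integral whose spinor trace coincides with $\beta_2 = 2(N-1)/\pi$ divided by $s$ (as indicated in Sec.~\ref{sec:strategy} right after Def.~\ref{dfn:X_map}), we obtain
\begin{equation*}
 \iota_s := \fL\fE\fA^4_{1,s}\fB_s(U(1,0,0)) = \beta_2\,s^{-1},
\end{equation*}
whence $f_t = \lambda^{-1} + \beta_2\int_t^1 s^{-1}\,\rd s = \lambda^{-1} - \beta_2\log t = \lambda_t^{-1}$. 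In particular (A) holds trivially with $\tau=\varepsilon=0$ and $C=1$.

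For (B) I decompose
\begin{equation*}
 f_t - f_{\tau,\varepsilon;t}
 = \int_{t\vee\varepsilon}^1 (\iota_s - \iota_{\tau,\varepsilon;s})\,\rd s + \int_t^{t\vee\varepsilon}\iota_s\,\rd s,
\end{equation*}
where $\iota_{\tau,\varepsilon;s}$ is the cut-off integrand and I used Remark~\ref{rem:G}(A) to see that it vanishes for $s\leq\varepsilon$. The boundary contribution $\int_t^{t\vee\varepsilon}\beta_2 s^{-1}\,\rd s = \beta_2\log((t\vee\varepsilon)/t) = \lambda_t^{-1} - \lambda_{t\vee\varepsilon}^{-1}$ is nonzero only when $t\leq\varepsilon$, and in that regime $\lambda_\varepsilon\leq \lambda_{\tau\vee\varepsilon}$ so elementary calculus with the definition of $\lambda_\Cdot$ gives $\lambda_t^{-1}-\lambda_\varepsilon^{-1}\leq C\lambda_{\tau\vee\varepsilon}^\kappa \lambda_t^{-1-8\kappa}$. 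For the first integral I split
\begin{equation*}
 \iota_s - \iota_{\tau,\varepsilon;s}
 = \fL\fE(\fA^4_{1,s}-\fA^4_{\tau,\varepsilon;1,s})\fB_s(U(1,0,0))
 + \fL\fE\fA^4_{\tau,\varepsilon;1,s}(\fB_s-\fB_{\varepsilon;s})(U(1,0,0))
\end{equation*}
and apply Lemma~\ref{lem:contraction_c}(D) together with Lemma~\ref{lem:contraction_c}(A) combined with Lemma~\ref{lem:B_map_bound}(C). The inputs are estimated using Lemma~\ref{lem:contraction_U} and Lemma~\ref{lem:B_map_bound}(A), which give $\|s\mapsto s\,\fB_{\bullet;s}(U(1,0,0))\|_{\sW^{8,3;\gamma}_{\tau,\varepsilon}}\leq C\lambda^{\gamma+8\kappa+\kappa}$ for a suitable $\gamma$; after absorbing $\lambda$-factors this yields $|\iota_s-\iota_{\tau,\varepsilon;s}|\leq C\lambda_{\tau\vee\varepsilon}^\kappa\lambda_s^{-8\kappa}\,s^{-1}$ on $[t\vee\varepsilon,1]$. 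Integrating with Lemma~\ref{lem:bounds_relevant_irrelevant}(D) completes (B).

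Part (A) now follows: the upper bound is $f_{\tau,\varepsilon;t}\leq f_t + |f_t-f_{\tau,\varepsilon;t}|\leq \lambda_t^{-1}(1+C\lambda_{\tau\vee\varepsilon}^\kappa\lambda_t^{-8\kappa})\leq C\lambda_t^{-1}$ for $\lambda_\star$ small. For the lower bound, when $t\geq \varepsilon$ we have $\lambda_{t\vee\varepsilon}=\lambda_t$ and the same argument gives $f_{\tau,\varepsilon;t}\geq \lambda_t^{-1}/2$; when $t<\varepsilon$ we use $f_{\tau,\varepsilon;t}=f_{\tau,\varepsilon;\varepsilon}$ (since the integrand vanishes on $(0,\varepsilon]$) and invoke the previous case at $t=\varepsilon$ to get $f_{\tau,\varepsilon;\varepsilon}\geq \lambda_\varepsilon^{-1}/C = \lambda_{t\vee\varepsilon}^{-1}/C$. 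For parts (C) and (D), I compute $h_{\tau,\varepsilon;t}$ explicitly: by Eq.~\eqref{eq:fA_map_def_alt} only the $k=2$ term contributes, so after Wick contraction $h_{\tau,\varepsilon;t}$ is a finite combinatorial linear combination of traces of $G_{\tau,\varepsilon;t}(0)$ against the kernels $U^{4,0,\sigma}$. Thus $|h_{\tau,\varepsilon;t}|\leq C\,|G_{\tau,\varepsilon;t}(0)|$ and $|h_t-h_{\tau,\varepsilon;t}|\leq C\,|G_t(0)-G_{\tau,\varepsilon;t}(0)|$. The first quantity is bounded by $C\lambda_{t\vee\varepsilon}^{-1}\leq C\lambda_t^{-4\kappa}t^{-1}$ by standard Fourier-space analysis (using that $\int |\fF G_{\tau,\varepsilon;t}(p)|\,\rd p$ is $O(|\log(t\vee\varepsilon)|)$), which yields (C). For (D) the same analysis applied to $G_t-G_{\tau,\varepsilon;t}$ using Remark~\ref{rem:G}(C),(F) and the pointwise bounds of Lemma~\ref{lem:dot_G_infty} gives the extra $\lambda_{\tau\vee\varepsilon}^\kappa$ factor, following the same template as in the proof of Lemma~\ref{lem:Psi_bounds}(B).

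The main obstacle is the computation in the first paragraph identifying the one-loop coefficient $\beta_2$: tracking the spinor algebra through Def.~\ref{dfn:gamma}, Def.~\ref{dfn:covariance_H} and the antisymmetrization encoded in $\fS$ is tedious, but it is a direct consequence of the structure of the four-fermion vertex $U(1,0,0)$ combined with Remark~\ref{rem:symmetries_flow_charge}. Everything else is a routine application of the estimates established in Lemmas~\ref{lem:contraction_c}, \ref{lem:B_map_bound}, \ref{lem:contraction_U}, and~\ref{lem:bounds_relevant_irrelevant}.
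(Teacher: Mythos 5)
Your treatment of (B), (C), (D) follows essentially the paper's route (the decomposition through $\mathbf{c}^{4,0}_{\tau,\varepsilon;t}$, Lemmas~\ref{lem:contraction_c}, \ref{lem:B_map_bound}, \ref{lem:contraction_U}, then integration — note the integral $\int_t^1\lambda_s^{-8\kappa}s^{-1}\,\rd s$ needs Lemma~\ref{lem:bounds_relevant_irrelevant}~(E) with $\rho=-1-8\kappa$, not part~(D)), but your treatment of (A) has two genuine problems. First, the identity $\fL\fE\fA^4_{1,s}\fB_s(U(1,0,0))=\beta_2\,s^{-1}$, and hence $f_t=\lambda_t^{-1}$, is not exact: the propagator is massive ($\omega(p)=(1+|p|^2)^{1/2}$), and for $s\in[1/2,1]$ the self-contraction (tadpole) diagrams do not vanish since $\int\dot G_{\varepsilon;s}\neq0$ there. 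The display in Sec.~\ref{sec:strategy} is explicitly heuristic ("neglecting sub-leading corrections"); what is actually true, and what the paper proves from the explicit one-loop integral, is only $\lim_{s\searrow0}s\,\fL\fE\fA^4_{s,1}\fB_s(U(1,0,0))=\beta_2$ together with $|\fL\fE\fA^4_{\varepsilon;1,s}\fB_{\varepsilon;s}(U(1,0,0))-\beta_2/s|\leq C$ for $s\geq4\varepsilon$, giving $|f_t-\lambda_t^{-1}|\leq C$ rather than equality. This alone would be repairable, but second, and more seriously, (A) cannot be deduced from (B) the way you do it: the error in (B) is $C\,\lambda_{\tau\vee\varepsilon}^\kappa\,\lambda_t^{-1-8\kappa}$, and the factor $\lambda_{\tau\vee\varepsilon}^\kappa\,\lambda_t^{-8\kappa}$ is not uniformly bounded — it diverges as $t\searrow0$ for fixed $\tau$ (and even along $\varepsilon=t\searrow0$ it is $\geq\lambda_t^{-7\kappa}\to\infty$). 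So $\lambda_t^{-1}\bigl(1\pm C\lambda_{\tau\vee\varepsilon}^\kappa\lambda_t^{-8\kappa}\bigr)$ gives neither the upper bound $C\lambda_t^{-1}$ nor the lower bound $\lambda_{\varepsilon\vee t}^{-1}/C$ at small $t$; indeed (B) by itself is even consistent with $f_{\tau,\varepsilon;t}<0$ there. The paper avoids this by proving the uniform $O(1)$ statement $|f_{\tau,\varepsilon;t}-\lambda_{\varepsilon\vee t}^{-1}|\leq C$ directly from the explicit diagram: comparing the torus Riemann sum with the plane integral (integrand difference $O(1-\log s)$, hence $O(1)$ after integration in $s$), using $|\fL\fE\fA^4_{\varepsilon;1,s}\fB_{\varepsilon;s}(U(1,0,0))|\leq C/(\varepsilon\vee s)$ on $s\leq4\varepsilon$, and then choosing $\lambda_\star$ so that $C\leq\lambda_{\varepsilon\vee t}^{-1}/2$. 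Your proposal contains no substitute for this uniform comparison, so (A) is not established.

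A smaller point: in (C) the chain $|h_{\tau,\varepsilon;t}|\leq C\lambda_{t\vee\varepsilon}^{-1}\leq C\lambda_t^{-4\kappa}t^{-1}$ fails at $t$ of order one for small $\lambda$, since $\lambda^{-1}\not\leq C\lambda^{-4\kappa}$ uniformly in $\lambda$. The tadpole itself is only $O(1+\log(1/(t\vee\varepsilon)))$ (the $\gamma\cdot p$ part of $\fF G$ drops out under $\fL$), which is $\leq C\,t^{-1}\leq C\,\lambda_t^{-4\kappa}t^{-1}$, so the conclusion survives, but as written the intermediate inequality is wrong; the paper instead gets (C), (D) by writing $h_{\tau,\varepsilon;t}=t\,\mathbf{c}^{2,0}_{\tau,\varepsilon;t}(s\mapsto s^{-1}U(1,0,0))$ and applying Lemmas~\ref{lem:contraction_c} and~\ref{lem:contraction_U} with $\gamma=-8\kappa$.
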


\begin{proof}
Let us first note that $\fA^4_{\tau,\varepsilon;s,1}\fB_{\varepsilon;s}(U(1,0,0))$ coincides with the RHS of Eq.~\eqref{eq:fA_map_def_alt} with $V=\fB_{\varepsilon;s}(U(1,0,0))$. Since $\fB^{4+k}_{\varepsilon;s}(U(1,0,0))=0$ for all $k\in\bN_0$ but $k=2$ in the case at hand only the term $k=2$ contributes to the sum over $k\in\bN_0$ on the RHS of Eq.~\eqref{eq:fA_map_def_alt}. Consequently, taking into account Eq.~\eqref{eq:covariance_Psi_H} and Def.~\ref{dfn:B_map} we conclude that $\fL\fE\fA^4_{\tau,\varepsilon;1,s}\fB_{\varepsilon;s}(U(1,0,0))$ can be represented in terms of the propagators $H_{\tau,\varepsilon;s}$ and $\dot H_{\varepsilon;s}$ introduced in Def.~\ref{dfn:covariance_H}, which are defined in terms of $G_{\tau,\varepsilon;s}$ and $\dot G_{\varepsilon;s}$. More specifically, $\fL\fE\fA^4_{\tau,\varepsilon;1,s}\fB_{\varepsilon;s}(U(1,0,0))$ is a~linear combination of one-loop Feynman diagrams with two quartic vertices, one of which is integrated over $\bR^2$. The following types of diagrams appear: (1) diagrams with the vertices connected by two edges representing the propagators $\dot G_{\varepsilon;s}$ and $G_{\tau,\varepsilon;s}$, respectively, and (2) diagrams with the vertices connected by a single edge representing the propagator $\dot G_{\varepsilon;s}$ and with a self-contraction of one of of the vertices by the propagator $G_{\tau,\varepsilon;s}$. Note that for $s\in(0,1/2)$ the diagrams of the type~(2) vanish identically since $\int_{\bR^2}\dot G_{\varepsilon;s}(x)\,\rd x=0$ for $s\in(0,1/2)$ by Remark~\ref{rem:G}~(D). For $s\in[1/2,1]$ both of the above-mentioned diagrams contribute to $\fL\fE\fA^4_{\tau,\varepsilon;s,1}\fB_{\varepsilon;s}(U(1,0,0))$. However, in this regime both propagators $\dot G_{\varepsilon;s}$ and $G_{\tau,\varepsilon;s}$ have the UV cutoff of order one and by elementary estimates $\fL\fE\fA^4_{\tau,\varepsilon;s,1}\fB_{\varepsilon;s}(U(1,0,0))$ is uniformly bounded for all $\tau,\varepsilon\in[0,1]$ and $s\in[1/2,1]$. Thus, we can restrict attention to the diagrams of the type~(1) and $s\in(0,1/2]$. 

Taking into account all possible contractions of the fields represented by edges of the diagrams of the type~(1) we obtain that for $s\in(0,1/2]$ and $\tau\in(0,1]$ it holds
\begin{equation}
 \fL\fE\fA^4_{\tau,\varepsilon;s,1}\fB_{\varepsilon;s}(U(1,0,0))
 =
 8(N-1)\, \tau^2\!
 \sum_{p\in(2\pi\tau\bZ)^2}\!
 \frac{\vartheta(2\varepsilon\omega(p))^2\,\vartheta(s\omega(p))\,\dot\vartheta(s\omega(p))\,\omega(p)\,(1-|p|^2)}{(1+|p|^2)^2}
\end{equation}
whereas for $s\in(0,1/2]$ and $\tau=0$ (recall that we omit $\tau$ in the notation if $\tau=0$) it holds
\begin{equation}
 \fL\fE\fA^4_{\varepsilon;s,1}\fB_{\varepsilon;s}(U(1,0,0))
 =
 \frac{8(N-1)}{(2\pi)^2} 
 \int_{\bR^2}
 \frac{\vartheta(2\varepsilon\omega(p))^2\,\vartheta(s\omega(p))\,\dot\vartheta(s\omega(p))\,\omega(p)\,(1-|p|^2)}{(1+|p|^2)^2}\,\rd p.
\end{equation}
As a result, there exists $C\in(0,\infty)$ such that
\begin{equation}
 |\fL\fE\fA^4_{\varepsilon;s,1}\fB_{\varepsilon;s}(U(1,0,0))
 -
 \fL\fE\fA^4_{\tau,\varepsilon;s,1}\fB_{\varepsilon;s}(U(1,0,0))|\leq C\,(1-\log s)
\end{equation}
for all $\tau,\varepsilon\in[0,1]$ and $s\in(0,1/2]$. This and the first paragraph of the proof implies that there exists $C\in(0,\infty)$ such that
\begin{equation}\label{eq:f_proof_bound_tau}
 |f_{\tau,\varepsilon;t}-f_{\varepsilon;t}|\leq C
\end{equation}
for all $\tau,\varepsilon\in[0,1]$ and $t\in(0,1]$. By the Lebesgue dominated convergence theorem it holds
\begin{equation}
 \lim_{s\searrow0} s\,\fL\fE\fA^4_{s,1}\fB_{s}(U(1,0,0))
 =
 -\frac{\beta_2}{\pi} 
 \int_{\bR^2}
 \frac{\vartheta(|p|)\,\dot\vartheta(|p|)}{|p|}\,\rd p
 =
 -2\beta_2 
 \int_0^\infty
 \vartheta(|p|)\,\dot\vartheta(|p|)\,\rd |p|=\beta_2.
\end{equation}
Next, observe that
\begin{multline}
 \fL\fE\fA^4_{s,1}\fB_{\varepsilon;s}(U(1,0,0))-\beta_2/s
 \\
 =
 \frac{\beta_2}{\pi s} \int_{\bR^2}
 \bigg(
 \frac{\vartheta((s^2+|p|^2)^{1/2})\,\dot\vartheta((s^2+|p|^2)^{1/2})\,(s^2-|p|^2)}{(s^2+|p|^2)^{3/2}}
 +
 \frac{\vartheta(|p|)\,\dot\vartheta(|p|)}{|p|}
 \bigg)\,\rd p.
\end{multline}
Hence, we conclude that there exists $C\in(0,\infty)$ such that it holds
\begin{equation}
 |\fL\fE\fA^4_{\varepsilon;s,1}\fB_{\varepsilon;s}(U(1,0,0))-\beta_2/s|\leq C
\end{equation}
for all $\varepsilon\in[0,1]$ and $s\in(4\varepsilon,1]$. Consequently,
\begin{equation}\label{eq:f_proof_bound_t}
 |f_{\varepsilon;t}-\lambda^{-1}_t|\leq
 \int_t^1|\fL\fE\fA^4_{\varepsilon;1,s}\fB_{\varepsilon;s}(U(1,0,0))-\beta_2/s|\leq C
\end{equation}
for all $\varepsilon\in[0,1]$ and $t\in[4\varepsilon,1]$. Using the bound
\begin{multline}
 |\fL\fE\fA^4_{\varepsilon;s,1}\fB_{\varepsilon;s}(U(1,0,0))|
 \\
 \leq
 \frac{\beta_2}{\pi s} \int_{\bR^2}
 \frac{\vartheta(2\varepsilon(1+|p|^2/s^2)^{1/2})^2\,\vartheta((s^2+|p|^2)^{1/2})\,|\dot\vartheta((s^2+|p|^2)^{1/2})|\,(s^2-|p|^2)}{(s^2+|p|^2)^{3/2}}\,\rd p
\end{multline}
we show that there exists $C\in(0,\infty)$ such that 
\begin{equation}
 |\fL\fE\fA^4_{\varepsilon;s,1}\fB_{\varepsilon;s}(U(1,0,0))|\leq
 C/(\varepsilon\vee s)
\end{equation}
for all $\varepsilon\in[0,1]$ and $s\in(0,1]$. Hence,
\begin{equation}\label{eq:f_proof_bound_epsilon}
 |f_{\varepsilon;t}-f_{\varepsilon;4\varepsilon}|\leq
 \int_t^{4\varepsilon} |\fL\fE\fA^4_{\varepsilon;1,s}\fB_{\varepsilon;s}(U(1,0,0))|\,\rd s\leq 4\,C
\end{equation}
for all $\varepsilon\in[0,1]$ and $t\in(0,4\varepsilon]$. Combining the bounds~\eqref{eq:f_proof_bound_tau},~\eqref{eq:f_proof_bound_t} and~\eqref{eq:f_proof_bound_epsilon} we obtain that there exists $\lambda_\star\in(0,1]$ such that 
\begin{equation}
 |f_{\tau,\varepsilon;t}-\lambda^{-1}_{\varepsilon\vee t}|\leq \lambda_\star^{-1}/2\leq \lambda^{-1}/2 \leq \lambda^{-1}_{\varepsilon\vee t}/2
\end{equation}
for all $\lambda\in(0,\lambda_\star]$, $\tau,\varepsilon\in[0,1]$ and $s\in(0,1]$. This proves the bound~(A).

To prove the bound~(B) first note that
\begin{multline}
 |\fL\fE\fA^4_{1,s}\fB_{s}(U(1,0,0))-\fL\fE\fA^4_{\tau,\varepsilon;1,s}\fB_{\varepsilon;s}(U(1,0,0))|
 \\
 \leq
 |\mathbf{c}^{4,0}_{s}(\fB_{s}(U(1,0,0))
 -
 \mathbf{c}^{4,0}_{\tau,\varepsilon;s}(\fB_{s}(U(1,0,0))|
 \\
 +
 |\mathbf{c}^{4,0}_{\tau,\varepsilon;s}(\fB_{s}(U(1,0,0))
 -
 \mathbf{c}^{4,0}_{\tau,\varepsilon;s}(\fB_{\varepsilon;s}(U(1,0,0))|
\end{multline}
where the map $\mathbf{c}^{4,0}_{\tau,\varepsilon;t}$ was introduced in Lemma~\ref{lem:contraction_c}. By Lemma~\ref{lem:B_map_bound}~(A),~(C) applied with $\gamma_1=\gamma_2=-8\kappa$ and Lemma~\ref{lem:contraction_U} applied with $\gamma=-8\kappa$ there exists $C\in(0,\infty)$ such that
\begin{equation}
 \|\fB_{s}(U(1,0,0))\|_{\sW^{8,3;-16\kappa}_{\tau,\varepsilon}}\leq C,
 \qquad
 \|\fB_{s}(U(1,0,0))-\fB_{\varepsilon;s}(U(1,0,0))\|_{\tilde\sW^{2,2;-16\kappa}_{\tau,\varepsilon}}\leq C\,\lambda_\varepsilon^\kappa
\end{equation}
for all $\tau,\varepsilon\in[0,1]$ and $s\in(0,1]$. Consequently, by Lemma~\ref{lem:contraction_c}~(B),~(C),~(D) applied with $m=4$, $i=0$ and $\gamma=-16\kappa$ there exists $C\in(0,\infty)$ such that
\begin{equation}
 |\fL\fE\fA^4_{1,s}\fB_{s}(U(1,0,0))-\fL\fE\fA^4_{\tau,\varepsilon;1,s}\fB_{\varepsilon;s}(U(1,0,0))|\leq C\,\lambda_{\tau\vee\varepsilon}^\kappa\, \lambda_s^{-8\kappa}\,s^{-1}
\end{equation}
for all $\tau,\varepsilon\in[0,1]$ and $s\in(0,1]$. The bound~(B) follows now from Lemma~\ref{lem:bounds_relevant_irrelevant}~(E) applied with $\rho=-1-8\kappa$.

Finally, let us turn to the proof of the bounds~(C),~(D). Observe that
\begin{equation}
 h_{\tau,\varepsilon;t}
 =
 \fL\fE\fA^2_{\tau,\varepsilon;1,t} U(1,0,0)
 =
 t\,\mathbf{c}^{2,0}_{\tau,\varepsilon;t}(s\mapsto s^{-1}\,U(1,0,0)),
\end{equation}
where the map $\mathbf{c}^{2,0}_{\tau,\varepsilon;t}$ was introduced in Lemma~\ref{lem:contraction_c}. The bounds~(C),~(D)  follow now from Lemma~\ref{lem:contraction_c}~(A),~(D) applied with $m=2$, $i=0$ and $\gamma=-8\kappa$ as well as Lemma~\ref{lem:contraction_U}~(A) applied with $\gamma=-8\kappa$. This finishes the proof.
\end{proof}

\begin{lem}\label{lem:contraction_g}
There exists $\lambda_\star\in(0,1]$ and $C\in(0,\infty)$ such that for all $\lambda\in(0,\lambda_\star]$, all $\tau,\varepsilon\in[0,1]$, all $t\in(0,1]$ and all $X_\Cdot,Y_\Cdot\in \sY_{\tau,\varepsilon}$, $Z_\Cdot\in\sY$ it holds:
\begin{itemize}
 \item[(A)] $\lambda_t^{1+10\kappa}\,|\mathbf{g}_{\tau,\varepsilon;t}(X_\Cdot)|
 \leq C\, \lambda^\kappa$,
 \item[(B)] $\lambda_t^{1+10\kappa}\,|\mathbf{g}_{\tau,\varepsilon;t}(X_\Cdot)-\mathbf{g}_{\tau,\varepsilon;t}(Y_\Cdot)|
 \leq 
 C\,\lambda^\kappa\, \|X_\Cdot-Y_\Cdot\|_{\sX}$.

 \item[(C)] $\lambda_t^{1+10\kappa}\,|\mathbf{g}_{\tau,\varepsilon;t}(X_\Cdot)-\mathbf{g}_{\tau,\varepsilon;t}(Z_\Cdot)|
 \leq 
 C\,\lambda^\kappa\, \|X_\Cdot-Z_\Cdot\|_{\tilde\sX}$,
 
 \item[(D)] $\lambda_t^{1+10\kappa}\,|\mathbf{g}_{t}(Z_\Cdot)-\mathbf{g}_{\tau,\varepsilon;t}(Z_\Cdot)|
 \leq
 C\,\lambda_{\tau\vee\varepsilon}^\kappa$,
 \item[(E)] $\lambda_{\varepsilon\vee t}\,\mathbf{g}_{\tau,\varepsilon;t}(X_\Cdot)
 \geq 1/C$.
\end{itemize}
\end{lem}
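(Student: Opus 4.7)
The starting point is the identity
\[
 \mathbf{g}_{\tau,\varepsilon;t}(X_\Cdot)
 =
 f_{\tau,\varepsilon;t}
 +
 \int_t^1 (g_s)^2\,
 \bigl(\mathbf{a}^{(2)}_{\tau,\varepsilon;s}(X_\Cdot)+\mathbf{a}^{(3)}_{\tau,\varepsilon;s}(X_\Cdot)\bigr)\,\rd s,
\]
which follows from Def.~\ref{dfn:X_map}, Def.~\ref{dfn:f_function} and the decomposition~\eqref{eq:contraction_a_decomposition}, using that $(g_s)^2\,\mathbf{a}^{(1)}_{\tau,\varepsilon;s}(X_\Cdot)=\fL\fE\fA^4_{\tau,\varepsilon;1,s}\fB_{\varepsilon;s}(U(1,0,0))$ does not depend on $X_\Cdot$ and coincides with the integrand in the definition of $f_{\tau,\varepsilon;t}$. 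This cleanly separates the ``leading'' one-loop bubble contribution, already controlled by Lemma~\ref{lem:contraction_f_h}, from the subleading terms which carry an extra power of $\lambda_s$.

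\textbf{Pointwise estimates on the integrand.} For $X_\Cdot\in\sY_{\tau,\varepsilon}$ the definition of $\sP$ gives $|g_s|\le\lambda_s^{-1-10\kappa}$, so $(g_s)^2\le\lambda_s^{-2-20\kappa}$. Combined with Lemma~\ref{lem:contraction_a}~(A) for $i\in\{2,3\}$ (where $\gamma(i)=3-108\kappa$) this yields
\[
 (g_s)^2\,|\mathbf{a}^{(i)}_{\tau,\varepsilon;s}(X_\Cdot)|
 \le
 C\,\lambda^{\kappa}\,\lambda_s^{1-128\kappa}\,s^{-1},
\]
and Lemma~\ref{lem:bounds_relevant_irrelevant}~(E) applied with $\rho=-128\kappa<0$ gives
$\int_t^1\lambda_s^{1-128\kappa}\,s^{-1}\,\rd s\le C\,\lambda_t^{-128\kappa}$. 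Multiplying by $\lambda_t^{1+10\kappa}$ produces the factor $\lambda_t^{1-118\kappa}\le 1$ (since $118\kappa\le 1$), so the correction term contributes at most $C\lambda^\kappa$. For the main term, Lemma~\ref{lem:contraction_f_h}~(A) gives $\lambda_t^{1+10\kappa}\,|f_{\tau,\varepsilon;t}|\le C\,\lambda_t^{10\kappa}\le C\,\lambda^\kappa$. Summing the two yields~(A).

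\textbf{Lower bound and differences.} Bound~(E) uses the lower bound $\lambda_{\varepsilon\vee t}\,f_{\tau,\varepsilon;t}\ge 1/C$ from Lemma~\ref{lem:contraction_f_h}~(A); the correction term is bounded in absolute value by $C\lambda^\kappa\,\lambda_{\varepsilon\vee t}\,\lambda_t^{-128\kappa}\le C\lambda^\kappa$ (using $\lambda_{\varepsilon\vee t}\le\lambda_t$ and $1-128\kappa\ge 0$), and is therefore absorbed by choosing $\lambda_\star$ small. For~(B), I split
\[
 (g_s)^2\,\mathbf{a}^{(i)}_{\tau,\varepsilon;s}(X_\Cdot)-(\tilde g_s)^2\,\mathbf{a}^{(i)}_{\tau,\varepsilon;s}(Y_\Cdot)
 =
 \bigl((g_s)^2-(\tilde g_s)^2\bigr)\,\mathbf{a}^{(i)}_{\tau,\varepsilon;s}(X_\Cdot)
 +(\tilde g_s)^2\,\bigl(\mathbf{a}^{(i)}_{\tau,\varepsilon;s}(X_\Cdot)-\mathbf{a}^{(i)}_{\tau,\varepsilon;s}(Y_\Cdot)\bigr),
\]
estimate $|(g_s)^2-(\tilde g_s)^2|\le 2\,\lambda_s^{-2-20\kappa}\,\|X_\Cdot-Y_\Cdot\|_{\sP}$, apply Lemma~\ref{lem:contraction_a}~(A),(B), and integrate as in the previous paragraph. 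Bound~(C) is the same split but using Lemma~\ref{lem:contraction_a}~(C) and the $\tilde\sX_{\tau,\varepsilon}$ norm, noting that $g_\Cdot,\tilde g_\Cdot$ still satisfy the $\sP$ bound (which is part of both $\sX$ and $\tilde\sX$ norms). Bound~(D) combines the $f$-difference estimate of Lemma~\ref{lem:contraction_f_h}~(B), which contributes $\lambda_t^{1+10\kappa}\,C\,\lambda_{\tau\vee\varepsilon}^\kappa\,\lambda_t^{-1-8\kappa}=C\,\lambda_{\tau\vee\varepsilon}^\kappa\,\lambda_t^{2\kappa}\le C\,\lambda_{\tau\vee\varepsilon}^\kappa$, with Lemma~\ref{lem:contraction_a}~(D). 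The only real obstacle is bookkeeping of the $\kappa$-exponents to ensure that after multiplication by $\lambda_t^{1+10\kappa}$ every exponent of $\lambda_t$ remains nonnegative; this holds with the chosen $\kappa=1/1000$ and the $\gamma(i)=3-108\kappa$ in Lemma~\ref{lem:contraction_a}, with margin to spare.
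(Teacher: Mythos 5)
Your treatment of (A)--(D) follows essentially the paper's route: the split $\mathbf{g}_{\tau,\varepsilon;t}(X_\Cdot)=f_{\tau,\varepsilon;t}+\int_t^1(g_s)^2(\mathbf{a}^{(2)}_{\tau,\varepsilon;s}(X_\Cdot)+\mathbf{a}^{(3)}_{\tau,\varepsilon;s}(X_\Cdot))\,\rd s$, the pointwise bound $(g_s)^2|\mathbf{a}^{(i)}_{\tau,\varepsilon;s}|\le C\lambda^\kappa\lambda_s^{1-128\kappa}s^{-1}$ from Lemma~\ref{lem:contraction_a} together with $|g_s|\le\lambda_s^{-1-10\kappa}$, the integration via Lemma~\ref{lem:bounds_relevant_irrelevant}~(E) with $\rho=-128\kappa$, and Lemma~\ref{lem:contraction_f_h} for the $f$-term; these parts are correct.

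There is, however, a genuine gap in your proof of (E). You claim the correction contributes at most $C\lambda^\kappa\,\lambda_{\varepsilon\vee t}\,\lambda_t^{-128\kappa}\le C\lambda^\kappa$ ``using $\lambda_{\varepsilon\vee t}\le\lambda_t$''. But $s\mapsto\lambda_s=(\lambda^{-1}-\beta_2\log s)^{-1}$ is \emph{increasing}, so $\lambda_{\varepsilon\vee t}\ge\lambda_t$, and your inequality is reversed. The failure is not cosmetic: for fixed $\varepsilon>0$ and $t\searrow0$ one has $\lambda_{\varepsilon\vee t}=\lambda_\varepsilon$ fixed while $\lambda_t^{-128\kappa}=(\lambda^{-1}-\beta_2\log t)^{128\kappa}\to\infty$, so $\lambda_{\varepsilon\vee t}\,\lambda_t^{-128\kappa}$ is unbounded in $t$ and cannot be absorbed by shrinking $\lambda_\star$. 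The missing ingredient, which the paper uses at exactly this point, is that $\dot G_{\varepsilon;s}=0$ (hence $\fB_{\varepsilon;s}(\Cdot)=0$ and $\mathbf{a}^{(2)}_{\tau,\varepsilon;s}=\mathbf{a}^{(3)}_{\tau,\varepsilon;s}=0$) for $s\le\varepsilon$, so the correction integral is constant in $t$ on $(0,\varepsilon]$ and therefore obeys the improved bound $C\lambda^\kappa\lambda_{\varepsilon\vee t}^{-128\kappa}$ with $\varepsilon\vee t$ in place of $t$. Then $\lambda_{\varepsilon\vee t}\cdot C\lambda^\kappa\lambda_{\varepsilon\vee t}^{-128\kappa}=C\lambda^\kappa\lambda_{\varepsilon\vee t}^{1-128\kappa}\le C\lambda^{1-127\kappa}$, which is absorbed against the lower bound $\lambda_{\varepsilon\vee t}f_{\tau,\varepsilon;t}\ge 1/C$ from Lemma~\ref{lem:contraction_f_h}~(A) for $\lambda$ small. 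With this observation added, your argument for (E) goes through; without it, the step fails.
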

\begin{proof}
For $\tau,\varepsilon\in[0,1]$, $X_\Cdot\equiv(g_\Cdot,r_\Cdot,z_\Cdot,W_\Cdot)\in\sX_{\tau,\varepsilon}$ and $s\in(0,1]$ define
\begin{equation}
 \mathbf{g}^{(1)}_{\tau,\varepsilon;s}(X_\Cdot)
 :=(g_s)^2\, 
 (
 \mathbf{a}^{(2)}_{\tau,\varepsilon;s}(X_\Cdot)
 +
 \mathbf{a}^{(3)}_{\tau,\varepsilon;s}(X_\Cdot)),
 \qquad
 \mathbf{g}^{(0)}_{\tau,\varepsilon;t}(X_\Cdot):=
 \int_t^1 
 \mathbf{g}^{(1)}_{\tau,\varepsilon;s}(X_\Cdot)
 \,\rd s,
\end{equation}
where the maps $\mathbf{a}^{(2)}_{\tau,\varepsilon;s}$ and $\mathbf{a}^{(3)}_{\tau,\varepsilon;s}$ were introduced in Lemma~\ref{lem:contraction_a}. Note that it holds
\begin{equation}
 \mathbf{g}_{\tau,\varepsilon;t}(X_\Cdot)
 =
 f_{\tau,\varepsilon;t}
 +
 \mathbf{g}^{(0)}_{\tau,\varepsilon;t}(X_\Cdot).
\end{equation}
We shall prove that for $i\in\{0,1\}$ it holds
\begin{itemize}
 \item[(A${}_1$)] $\lambda_t^{-\gamma-i}\,t^{i}\,|\mathbf{g}^{(i)}_{\tau,\varepsilon;t}(X_\Cdot)|
 \leq C\, \lambda^\kappa$,
 \item[(B${}_1$)] $\lambda_t^{-\gamma-i}\,t^{i}\,|\mathbf{g}^{(i)}_{\tau,\varepsilon;t}(X_\Cdot)-\mathbf{g}^{(i)}_{\tau,\varepsilon;t}(Y_\Cdot)|
 \leq 
 C\,\lambda^\kappa\, \|X_\Cdot-Y_\Cdot\|_{\sX}$.

 \item[(C${}_1$)] $\lambda_t^{-\gamma-i}\,t^{i}\,|\mathbf{g}^{(i)}_{\tau,\varepsilon;t}(X_\Cdot)-\mathbf{g}^{(i)}_{\tau,\varepsilon;t}(Z_\Cdot)|
 \leq 
 C\,\lambda^\kappa\, \|X_\Cdot-Z_\Cdot\|_{\tilde\sX}$,
 
 \item[(D${}_1$)] $\lambda_t^{-\gamma-i}\,t^{i}\,|\mathbf{g}^{(i)}_{t}(Z_\Cdot)-\mathbf{g}^{(i)}_{\tau,\varepsilon;t}(Z_\Cdot)|
 \leq
 C\,\lambda_{\tau\vee\varepsilon}^\kappa$
\end{itemize}
for all $\tau,\varepsilon\in[0,1]$ and $t\in(0,1]$, where $\gamma=-128\kappa$. For $i=1$ the above bounds follow from Lemma~\ref{lem:contraction_a} and Def.~\ref{dfn:sX}. To prove the above bounds for $i=0$ we use the bounds with $i=1$ and Lemma~\ref{lem:bounds_relevant_irrelevant}~(E) applied with $\rho=\gamma$. Note that by Lemma~\ref{lem:contraction_f_h}~(A),~(B) it holds 
\begin{equation}
 \lambda_t^{1+10\kappa}\,|f_{\tau,\varepsilon;t}|\leq C\,\lambda_t^{10\kappa}\leq C\,\lambda^\kappa,
 \qquad
 \lambda_t^{1+10\kappa}\,|f_t-f_{\tau,\varepsilon;t}|\leq C\, \lambda_{\varepsilon\vee\tau}^\kappa\,\lambda_t^{2\kappa}
 \leq C\,\lambda_{\varepsilon\vee\tau}^\kappa.
\end{equation}
The bounds~(A), (B), (C),(D) follow from the above bounds and the bounds~(A${}_1$), (B${}_1$), (C${}_1$), (D${}_1$) with $i=0$.

To prove the bound~(E) first note that since $\dot H_{\varepsilon;s}=0$ if $s<\varepsilon$ it holds $B_{\varepsilon;s}=0$ and $\mathbf{g}^{(1)}_{\tau,\varepsilon;s}(X_\Cdot)=0$ if $s<\varepsilon$. Consequently, $\mathbf{g}^{(0)}_{\tau,\varepsilon;t}(X_\Cdot)=\mathbf{g}^{(0)}_{\tau,\varepsilon;\varepsilon}(X_\Cdot)$ if $t<\varepsilon$. Recall also that by Lemma~\ref{lem:contraction_f_h}~(A) it holds $f_{\tau,\varepsilon;t}\geq \lambda_{\varepsilon\vee t}^{-1}/C$. Consequently, there exists $C\in(0,\infty)$ such that it holds
\begin{multline}
 \mathbf{g}_{\tau,\varepsilon;t}(X_\Cdot)\geq \lambda_{\varepsilon\vee t}^{-1}/C - C\,\lambda^\kappa \, \lambda_{\varepsilon\vee t}^{\gamma}
 =
 \lambda_{\varepsilon\vee t}^{-1}\,(1/C- C\,\lambda^\kappa \, \lambda_{\varepsilon\vee t}^{1-128\kappa})
 \\
 \geq
 \lambda_{\varepsilon\vee t}^{-1}\, (1/C- C \, \lambda^{1-127\kappa})
 \geq
 \lambda_{\varepsilon\vee t}^{-1}/(2C)
\end{multline}
for all $\lambda\in(0,\lambda_\star]$, $\tau,\varepsilon\in[0,1]$ and $t\in(0,1]$ provided $\lambda_\star\in(0,1]$ is small enough. This proves the bound~(E) and completes the proof of the lemma.
\end{proof}

\begin{lem}\label{lem:contraction_r}
There exists $\lambda_\star\in(0,1]$ and $C\in(0,\infty)$ such that for all $\lambda\in(0,\lambda_\star]$, all $\tau,\varepsilon\in[0,1]$, all $t\in(0,1]$ and all $X_\Cdot,Y_\Cdot\in \sY_{\tau,\varepsilon}$, $Z_\Cdot\in\sY$ it holds:
\begin{itemize}
 \item[(A)] $\lambda_t^{36\kappa-1}\, t\,|\mathbf{r}_{\tau,\varepsilon;t}(X_\Cdot)|
 \leq C\, \lambda^\kappa$,
 \item[(B)] $\lambda_t^{36\kappa-1}\, t\,|\mathbf{r}_{\tau,\varepsilon;t}(X_\Cdot)-\mathbf{r}_{\tau,\varepsilon;t}(Y_\Cdot)|
 \leq 
 C\,\lambda^\kappa\, \|X_\Cdot-Y_\Cdot\|_{\sX}$,

 \item[(C)] $\lambda_t^{36\kappa-1}\, t\,|\mathbf{r}_{\tau,\varepsilon;t}(X_\Cdot)-\mathbf{r}_{\tau,\varepsilon;t}(Z_\Cdot)|
 \leq 
 C\,\lambda^\kappa\, \|X_\Cdot-Z_\Cdot\|_{\tilde\sX}$,
 
 \item[(D)] $\lambda_t^{36\kappa-1}\, t\,|\mathbf{r}_{t}(Z_\Cdot)-\mathbf{r}_{\tau,\varepsilon;t}(Z_\Cdot)|
 \leq
 C\,\lambda_{\tau\vee\varepsilon}^\kappa$.
\end{itemize}
\end{lem}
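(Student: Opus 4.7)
The plan is to decompose
\begin{equation*}
 \mathbf{r}_{\tau,\varepsilon;t}(X_\Cdot) = -h_{\tau,\varepsilon;t}/g_t - \int_t^1 \fL\fE\fA^2_{\tau,\varepsilon;1,s}\fB_{\varepsilon;s}(V_s)\,\rd s,
\end{equation*}
where the identification of the first term follows from the linearity $U(1/g_t,0,0) = (1/g_t)U(1,0,0)$ together with $h_{\tau,\varepsilon;t} = \fL\fE\fA^2_{\tau,\varepsilon;1,t}U(1,0,0)$ from Def.~\ref{dfn:f_function}. I will show each piece contributes at most $C\,\lambda^\kappa$ to $\lambda_t^{36\kappa-1}\,t\,|\mathbf{r}_{\tau,\varepsilon;t}(X_\Cdot)|$, giving~(A); the bounds~(B),~(C),~(D) follow by applying the same decomposition to the corresponding differences.

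For the integral term, the key observation is that $\fB_{\varepsilon;s}(V_s) = \fG^{(0,0)}_{\varepsilon;s}(X_\Cdot)$ as scale-dependent functionals: for $s \geq \varepsilon$ this holds because $\theta^2_{\varepsilon;s} = 1$ makes $V_s = \fV^{(0)}_{\varepsilon;s}(X_\Cdot)$, while for $s \leq \varepsilon$ both sides vanish since $\dot H_{\varepsilon;s}=0$. Lemma~\ref{lem:contraction_B}~(A) with $(i,j) = (0,0)$ and $\gamma_1 = \gamma_2 = \gamma(0) = 1-40\kappa$ yields $\|s \mapsto s\,\fB_{\varepsilon;s}(V_s)\|_{\sW^{8,3;2-80\kappa}_{\tau,\varepsilon}} \leq C\,\lambda^\kappa$. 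Lemma~\ref{lem:contraction_c}~(A) applied with $m=2$, $i=0$, $\gamma = 2-80\kappa$ then gives $|\fL\fE\fA^2_{\tau,\varepsilon;1,s}\fB_{\varepsilon;s}(V_s)| \leq C\,\lambda^\kappa\,\lambda_s^{2-76\kappa}\,s^{-2}$, and integration via Lemma~\ref{lem:bounds_relevant_irrelevant}~(C) with $\rho = 2-76\kappa$, $\varrho = -1$ (valid once $\lambda$ is small enough that $-1 < -\beta_2(2-76\kappa)\,\lambda$) produces the bound $C\,\lambda^\kappa\,\lambda_t^{2-76\kappa}\,t^{-1}$ on the integral. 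Multiplication by $\lambda_t^{36\kappa-1}\,t$ gives $C\,\lambda^\kappa\,\lambda_t^{1-40\kappa} \leq C\,\lambda^\kappa$ as desired.

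The boundary piece is the main obstacle. The constraint $\lambda_{\varepsilon\vee t}\,g_t \geq \lambda^\kappa$ defining $\sY_{\tau,\varepsilon}$ gives $1/g_t \leq \lambda_{\varepsilon\vee t}/\lambda^\kappa$, but inserting the raw estimate $|h_{\tau,\varepsilon;t}| \leq C\,\lambda_t^{-4\kappa}\,t^{-1}$ from Lemma~\ref{lem:contraction_f_h}~(C) produces a term that diverges like $\lambda_t^{32\kappa-1}$ as $t \searrow 0$ whenever $\varepsilon > 0$. The remedy is that $G^\pm_{\tau,\varepsilon;u} = 0$ for $u \leq \varepsilon/2$ by Remark~\ref{rem:G}~(B), which forces $\varPsi_{\tau,\varepsilon;1,t} = \varPsi_{\tau,\varepsilon;1,\varepsilon/2}$ and hence $h_{\tau,\varepsilon;t} = h_{\tau,\varepsilon;\varepsilon/2}$ for $t \leq \varepsilon/2$; combined with Lemma~\ref{lem:contraction_f_h}~(C) on $[\varepsilon/2,1]$, this yields the sharp uniform bound $|h_{\tau,\varepsilon;t}| \leq C\,\lambda_{\varepsilon\vee t}^{-4\kappa}\,(\varepsilon\vee t)^{-1}$. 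A short computation shows $s \mapsto \lambda_s^{36\kappa-1}\,s$ is monotonically increasing on $(0,1]$ for small $\lambda$, so $\lambda_t^{36\kappa-1}\,t \leq \lambda_{\varepsilon\vee t}^{36\kappa-1}\,(\varepsilon\vee t)$; collecting the estimates gives $\lambda_t^{36\kappa-1}\,t\,|h_{\tau,\varepsilon;t}/g_t| \leq C\,\lambda^{-\kappa}\,\lambda_{\varepsilon\vee t}^{32\kappa} \leq C\,\lambda^{31\kappa}$. For the differences~(B),~(C),~(D), the integral part reduces to the analogous combinations of Lemmas~\ref{lem:contraction_B}~(B),~(C),~(D) and~\ref{lem:contraction_c}~(B),~(C),~(D) after writing $\fB_{\varepsilon;s}(V_s) - \fB_{\varepsilon;s}(\tilde V_s) = \fG^{(0,0)}_{\varepsilon;s}(X_\Cdot) - \fG^{(0,0)}_{\varepsilon;s}(Y_\Cdot)$; the boundary part is treated by expanding $1/g_t - 1/\tilde g_t = (\tilde g_t - g_t)/(g_t\tilde g_t)$, using the $\sP$-norm bound $|\tilde g_t - g_t| \leq \lambda_t^{-1-10\kappa}\|X_\Cdot - Y_\Cdot\|_{\sP}$ together with the lower bound $|g_t\tilde g_t| \geq (\lambda^\kappa/\lambda_{\varepsilon\vee t})^2$ and reusing the sharp $h$-estimate above, with Lemma~\ref{lem:contraction_f_h}~(D) additionally invoked for~(D) to control $|h_t - h_{\tau,\varepsilon;t}|$.
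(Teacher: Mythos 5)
Your decomposition $\mathbf{r}_{\tau,\varepsilon;t}(X_\Cdot)=-h_{\tau,\varepsilon;t}/g_t-\int_t^1\fL\fE\fA^2_{\tau,\varepsilon;1,s}\fB_{\varepsilon;s}(V_s)\,\rd s$ is exactly the paper's $\mathbf{r}^{(2)}+\mathbf{r}^{(1)}$ split, and your treatment of the integral piece (Lemma~\ref{lem:contraction_B} with $(i,j)=(0,0)$, Lemma~\ref{lem:contraction_c} with $m=2$, $i=0$, $\gamma=2-80\kappa$, then Lemma~\ref{lem:bounds_relevant_irrelevant}~(C) with $\rho=2-76\kappa$, $\varrho=-1$) is identical to the paper's. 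The divergence is in the boundary piece. The paper only cites Lemma~\ref{lem:contraction_f_h}~(C),~(D) and Lemma~\ref{lem:contraction_g_aux}; but Lemma~\ref{lem:contraction_f_h}~(C) gives the literal bound $|h_{\tau,\varepsilon;t}|\leq C\lambda_t^{-4\kappa}t^{-1}$, and Lemma~\ref{lem:contraction_g_aux} controls $\theta^2_{\varepsilon;s}/g_s$ rather than $1/g_s$. Running the arithmetic with just those two ingredients produces a factor $\lambda_t^{12\kappa-1}\lambda_{\varepsilon\vee t}$ for $t<\varepsilon$, which blows up as $t\searrow0$; passing through $\theta^2_{\varepsilon;t}$ makes it worse, not better. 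So the paper's stated proof of the $i=2$ case is, at best, very terse. Your improved bound $|h_{\tau,\varepsilon;t}|\leq C\,\lambda_{\varepsilon\vee t}^{-4\kappa}(\varepsilon\vee t)^{-1}$, obtained from the observation that $\varPsi_{\tau,\varepsilon;1,t}=\varPsi_{\tau,\varepsilon;1,\varepsilon/2}$ for $t\leq\varepsilon/2$ (so $h_{\tau,\varepsilon;t}$ is constant there), is correct and is precisely what is needed; combined with the monotonicity of $s\mapsto\lambda_s^c s$ for small $\lambda$ (which you invoke for $c=36\kappa-1$ in~(A), and which one also needs for $c=26\kappa-2$ and $c=26\kappa-1$ when you carry out~(B) and~(C)), the estimates close uniformly in $\tau,\varepsilon$. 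So your proposal is correct and arguably more explicit than the paper: same decomposition, same integral estimate, but the boundary term is handled by a sharper and fully justified $h$-bound rather than by the paper's ambiguous appeal to Lemma~\ref{lem:contraction_g_aux}.
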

\begin{proof}
We first observe that for all $\tau,\varepsilon\in[0,1]$, $X_\Cdot\equiv(g_\Cdot,r_\Cdot,z_\Cdot,W_\Cdot)\in\sX_{\tau,\varepsilon}$ and $s\in(0,1]$ it holds
\begin{equation}
 \mathbf{r}_{\tau,\varepsilon;t}(X_\Cdot)=
 \mathbf{r}^{(1)}_{\tau,\varepsilon;t}(X_\Cdot) 
 +
 \mathbf{r}^{(2)}_{\tau,\varepsilon;t}(X_\Cdot),
\end{equation}
where
\begin{equation}
 \mathbf{r}^{(1)}_{\tau,\varepsilon;t}(X_\Cdot) :=
 -
 \int_t^1
 \mathbf{r}^{(3)}_{\tau,\varepsilon;s}(X_\Cdot)\,\rd s,
 \qquad
 \mathbf{r}^{(2)}_{\tau,\varepsilon;t}(X_\Cdot) :=
 -h_{\tau,\varepsilon;t}/g_t
\end{equation}
and
\begin{equation}
 \mathbf{r}^{(3)}_{\tau,\varepsilon;t}(X_\Cdot) :=
 \fL\fE\fA^2_{\tau,\varepsilon;1,t}\fG_{\varepsilon;t}(X_\Cdot)=
 \mathbf{c}^{2,0}_{\tau,\varepsilon;t}(\fG_{\varepsilon;t}(X_\Cdot)).
\end{equation}
Recall that the map $\fG_{\varepsilon;s}$ was introduced in Lemma~\ref{lem:contraction_B}, the maps $\mathbf{a}^{(1)}_{\tau,\varepsilon;s}$, $\mathbf{a}^{(2)}_{\tau,\varepsilon;s}$ and $\mathbf{a}^{(3)}_{\tau,\varepsilon;s}$ were introduced in Lemma~\ref{lem:contraction_a}, the map $\mathbf{c}^{2,0}_{\tau,\varepsilon;s}$ was introduced in Lemma~\ref{lem:contraction_c} and $h_{\tau,\varepsilon;t}$ was introduced in Def.~\ref{dfn:f_function}. We shall prove that for $i\in\{1,2,3\}$ it holds
\begin{itemize}
 \item[(A${}_1$)] $\lambda_t^{-\gamma(i)}\, t^{-\varrho(i)}\,|\mathbf{r}^{(i)}_{\tau,\varepsilon;t}(X_\Cdot)|
 \leq C\, \lambda^\kappa$,
 \item[(B${}_1$)] $\lambda_t^{-\gamma(i)}\, t^{-\varrho(i)}\,|\mathbf{r}^{(i)}_{\tau,\varepsilon;t}(X_\Cdot)-\mathbf{r}^{(i)}_{\tau,\varepsilon;t}(Y_\Cdot)|
 \leq 
 C\,\lambda^\kappa\, \|X_\Cdot-Y_\Cdot\|_{\sX}$,

 \item[(C${}_1$)] $\lambda_t^{-\gamma(i)}\, t^{-\varrho(i)}\,|\mathbf{r}^{(i)}_{\tau,\varepsilon;t}(X_\Cdot)-\mathbf{r}^{(i)}_{\tau,\varepsilon;t}(Z_\Cdot)|
 \leq 
 C\,\lambda^\kappa\, \|X_\Cdot-Z_\Cdot\|_{\tilde\sX}$,
 
 \item[(D${}_1$)] $\lambda_t^{-\gamma(i)}\, t^{-\varrho(i)}\,|\mathbf{r}^{(i)}_{t}(Z_\Cdot)-\mathbf{r}^{(i)}_{\tau,\varepsilon;t}(Z_\Cdot)|
 \leq
 C\,\lambda_{\tau\vee\varepsilon}^\kappa$,
\end{itemize}
where $\gamma(1)=2-76\kappa$, $\varrho(1)=-1$, $\gamma(2)=1-16\kappa$, $\varrho(2)=-1$ and $\gamma(3)=1-76\kappa$, $\varrho(3)=-2$. Note that the above bounds with $i\in\{1,2\}$ imply immediately the bounds~(A), (B), (C), (D). Actually, even stronger bounds with $\lambda_t^{36\kappa-1}$ replaced by $\lambda_t^{16\kappa-1}$ hold true.

It remains to prove the bounds~(A${}_1$), (B${}_1$), (C${}_1$), (D${}_1$). Application of Lemma~\ref{lem:contraction_c} with $m=2$, $i=0$ and $\gamma=2-80\kappa$ and Lemma~\ref{lem:contraction_B} with $(i,j)=(0,0)$ yields the bounds~(A${}_1$), (B${}_1$), (C${}_1$), (D${}_1$) with $i=3$. Consequently, the bounds~(A${}_1$), (B${}_1$), (C${}_1$), (D${}_1$) with $i=1$ follow from Lemma~\ref{lem:bounds_relevant_irrelevant}~(C) applied with $\varrho=-1$ and $\rho=2-76\kappa$. To show the bounds~(A${}_1$), (B${}_1$), (C${}_1$), (D${}_1$) with $i=2$ we use Lemma~\ref{lem:contraction_f_h}~(C),~(D) and Lemma~\ref{lem:contraction_g_aux}.
\end{proof}

\begin{lem}\label{lem:contraction_z}
There exists $\lambda_\star\in(0,1]$ and $C\in(0,\infty)$ such that for all $\lambda\in(0,\lambda_\star]$, all $\tau,\varepsilon\in[0,1]$, all $t\in(0,1]$ and all $X_\Cdot,Y_\Cdot\in \sY_{\tau,\varepsilon}$, $Z_\Cdot\in\sY$ it holds:
\begin{itemize}
 \item[(A)] $\lambda_t^{36\kappa-1}\,|\mathbf{z}_{\tau,\varepsilon;t}(X_\Cdot)|
 \leq C\, \lambda^\kappa$,
 \item[(B)] $\lambda_t^{36\kappa-1}\,|\mathbf{z}_{\tau,\varepsilon;t}(X_\Cdot)-\mathbf{z}_{\tau,\varepsilon;t}(Y_\Cdot)|
 \leq 
 C\,\lambda^\kappa\, \|X_\Cdot-Y_\Cdot\|_{\sX}$,

 \item[(C)] $\lambda_t^{36\kappa-1}\,|\mathbf{z}_{\tau,\varepsilon;t}(X_\Cdot)-\mathbf{z}_{\tau,\varepsilon;t}(Z_\Cdot)|
 \leq 
 C\,\lambda^\kappa\, \|X_\Cdot-Z_\Cdot\|_{\tilde\sX}$,
 
 \item[(D)] $\lambda_t^{36\kappa-1}\,|\mathbf{z}_{t}(Z_\Cdot)-\mathbf{z}_{\tau,\varepsilon;t}(Z_\Cdot)|
 \leq
 C\,\lambda_{\tau\vee\varepsilon}^\kappa$.
\end{itemize}
\end{lem}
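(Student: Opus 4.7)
The plan is to follow the template of the proof of Lemma~\ref{lem:contraction_r} but to exploit the second identity of~\eqref{eq:contraction_a_decomposition} rather than attempt a direct bound on the integrand viewed as $\mathbf{c}^{2,1}_{\tau,\varepsilon;s}(\fG_{\varepsilon;\Cdot}(X_\Cdot))$. A naive estimate combining Lemmas~\ref{lem:contraction_c} (with $m=2$, $i=1$, $\gamma=2-80\kappa$) and~\ref{lem:contraction_B} (with $(i,j)=(0,0)$) would produce only $|\fL_\partial\fE\fA^2_{\tau,\varepsilon;1,s}\fB_{\varepsilon;s}(V_s)| \leq C\,\lambda^\kappa\,\lambda_s^{2-76\kappa}\,s^{-1}$; integration from $0$ to $t$ via Lemma~\ref{lem:bounds_relevant_irrelevant}~(D) then yields $C\,\lambda^\kappa\,\lambda_t^{1-76\kappa}$, which after multiplication by the $\sP$-weight $\lambda_t^{36\kappa-1}$ leaves the uncontrolled factor $\lambda_t^{-40\kappa}$ as $t\searrow 0$. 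The required improvement must therefore come from Lemma~\ref{lem:contraction_a}, which supplies for the pieces $\mathbf{a}^{(4)}$, $\mathbf{a}^{(5)}$, $\mathbf{a}^{(6)}$ the strictly better rates $\gamma(4)=2-36\kappa$ and $\gamma(5)=\gamma(6)=3-112\kappa$; the decisive factor for $\mathbf{a}^{(4)}$ is the extra $\theta_{\varepsilon;s}^2/g_s\lesssim\lambda_s^{1-12\kappa}$ from Lemma~\ref{lem:contraction_g_aux}, and for $\mathbf{a}^{(5)}$, $\mathbf{a}^{(6)}$ it is the additional $\theta_{1/2;s}^{1-40\kappa}$ identified in Lemma~\ref{lem:contraction_a} via Remark~\ref{rem:G}~(D).

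Concretely, I would introduce
\begin{equation}
 \mathbf{z}^{(i)}_{\tau,\varepsilon;t}(X_\Cdot) := \int_0^t \mathbf{a}^{(i)}_{\tau,\varepsilon;s}(X_\Cdot)\,\rd s, \qquad i\in\{4,5,6\},
\end{equation}
so that, by the second equality of~\eqref{eq:contraction_a_decomposition} together with Def.~\ref{dfn:X_map}, one has $\mathbf{z}_{\tau,\varepsilon;t}(X_\Cdot) = \sum_{i=4}^{6}\mathbf{z}^{(i)}_{\tau,\varepsilon;t}(X_\Cdot)$. Then, for each $i\in\{4,5,6\}$, I would establish four sub-estimates of the form $\lambda_t^{1-\gamma(i)}\,|\mathbf{z}^{(i)}_{\tau,\varepsilon;t}(X_\Cdot)| \leq C\,\lambda^\kappa$ (and its analogues for (B), (C), (D)) by inserting the matching part of Lemma~\ref{lem:contraction_a} — which gives $|\mathbf{a}^{(i)}_{\tau,\varepsilon;s}(X_\Cdot)| \leq C\,\lambda^\kappa\,\lambda_s^{\gamma(i)}\,s^{-1}$ — and then invoking Lemma~\ref{lem:bounds_relevant_irrelevant}~(D) with $\rho = \gamma(i)-1 > 0$ (which is legitimate since $\kappa=1/1000$). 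The critical case is $i=4$, whose resulting exponent $\gamma(4)-1 = 1-36\kappa$ matches exactly the weight $\lambda_t^{36\kappa-1}$ demanded on the left; the contributions $i\in\{5,6\}$ come with a spare factor $\lambda_t^{1-76\kappa}\leq 1$ and are absorbed into the constant.

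The bounds (B), (C), (D) follow in the same way, substituting parts (B), (C), (D) of Lemma~\ref{lem:contraction_a} in place of (A); the dependence of $\mathbf{a}^{(i)}_{\tau,\varepsilon;s}$ on $X_\Cdot$, on the difference $X_\Cdot - Y_\Cdot$ or $X_\Cdot - Z_\Cdot$, and on the cutoff defect $\mathbf{a}^{(i)}_s - \mathbf{a}^{(i)}_{\tau,\varepsilon;s}$ is already packaged there in precisely the form needed. In contrast to the proof of Lemma~\ref{lem:contraction_r}, no boundary term of the form $-h_{\tau,\varepsilon;t}/g_t$ has to be treated separately, since $\mathbf{z}$ is defined by integration from $0$ to $t$ with no renormalization condition at $t=1$. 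The only genuine obstacle — the extraction of the enhanced decay rates $\gamma(4),\gamma(5),\gamma(6)$, and in particular the cancellation that makes $\mathbf{a}^{(5)}$, $\mathbf{a}^{(6)}$ vanish on $(0,1/2)$ — is already resolved inside Lemma~\ref{lem:contraction_a}, so that what remains here reduces to the bookkeeping of exponents and an application of Lemma~\ref{lem:bounds_relevant_irrelevant}~(D).
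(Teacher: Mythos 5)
Your proposal is correct and follows essentially the same route as the paper: decompose $\mathbf{z}_{\tau,\varepsilon;t}$ via the second identity of~\eqref{eq:contraction_a_decomposition} into the pieces governed by $\mathbf{a}^{(4)},\mathbf{a}^{(5)},\mathbf{a}^{(6)}$, invoke Lemma~\ref{lem:contraction_a}, and integrate using Lemma~\ref{lem:bounds_relevant_irrelevant}~(D) (the paper applies it uniformly with $\rho=1-36\kappa$, which is equivalent to your piecewise choice since $\lambda_s^{3-112\kappa}\leq\lambda_s^{2-36\kappa}$). Your exponent bookkeeping, including the identification of $i=4$ as the critical term, matches the paper's argument.
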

\begin{proof}
It holds
\begin{equation}
 \mathbf{z}_{\tau,\varepsilon;t}(X_\Cdot)
 :=
 \int_0^t
 (\mathbf{a}^{(4)}_{\tau,\varepsilon;s}(X_\Cdot)
 +
 \mathbf{a}^{(5)}_{\tau,\varepsilon;s}(X_\Cdot)
 +
 \mathbf{a}^{(6)}_{\tau,\varepsilon;s}(X_\Cdot))\,\rd s,
\end{equation}
where the maps $\mathbf{a}^{(4)}_{\tau,\varepsilon;s}$, $\mathbf{a}^{(5)}_{\tau,\varepsilon;s}$ and $\mathbf{a}^{(6)}_{\tau,\varepsilon;s}$ were introduced in Lemma~\ref{lem:contraction_a}. The bounds~(A), (B), (C), (D) follow from Lemma~\ref{lem:contraction_a} and Lemma~\ref{lem:bounds_relevant_irrelevant}~(D) applied with $\rho=1-36\kappa$. 
\end{proof}

\section{Relation to Polchinski equation}\label{sec:relation_polchinski}

In this section we construct a solution $U_{\tau,\varepsilon;\Cdot}$ of the flow equation~\eqref{eq:polchinski_integral} using the fixed point $X_{\tau,\varepsilon;\Cdot}$ of the map $\fX_{\tau,\varepsilon;\Cdot}$ defined in Sec.~\ref{sec:fixed_point}. The strategy of the construction was discussed in Sec.~\ref{sec:strategy}. We first construct a solution $V_{\tau,\varepsilon;\Cdot}$ of the flow equation~\eqref{eq:polchinski_aux} and subsequently use it to construct a solution $U_{\tau,\varepsilon;\Cdot}$ of the flow equation~\eqref{eq:polchinski_integral}.

\begin{thm}\label{thm:polchinski}
Let $\tau,\varepsilon\in(0,1]$ and $\lambda\in(0,\lambda_\star]$, where $\lambda_\star$ is as in Corollary~\ref{cor:contraction}. Let 
\begin{equation}
 (0,1]\ni t\mapsto X_{\tau,\varepsilon;t}\equiv (g_{\tau,\varepsilon;t},r_{\tau,\varepsilon;t},z_{\tau,\varepsilon;t},W_{\tau,\varepsilon;t})\in \bR\times\bR\times\bR\times\sN
\end{equation}
be the fixed point of the map $\fX_{\tau,\varepsilon;\Cdot}\,:\,\sY_{\tau,\varepsilon}\to \sY_{\tau,\varepsilon}$, introduced in Def.~\ref{dfn:X_map}. Moreover, set
\begin{equation}
 g_{\tau,\varepsilon;0}:=g_{\tau,\varepsilon;\varepsilon},
 \quad
 r_{\tau,\varepsilon;0}:=r_{\tau,\varepsilon;\varepsilon},
 \quad
 z_{\tau,\varepsilon;0}:=0,
 \quad
 W_{\tau,\varepsilon;0}:=0.
\end{equation} 
For $t\in[0,1]$ define
\begin{equation}\label{eq:V_def_polchinski}
 V_{\tau,\varepsilon;t}\equiv(V_{\tau,\varepsilon;t}^m)_{m\in\bN_+}:= U(1/g_{\tau,\varepsilon;t},r_{\tau,\varepsilon;t},z_{\tau,\varepsilon;t})+W_{\tau,\varepsilon;t}\in \sN.
\end{equation}
Then it holds
\begin{equation}\label{eq:polchinski_aux}
 \langle V^m_{\tau,\varepsilon;t} ,(\fJ\varphi)^{\otimes m}\rangle
 =
 \langle\fE\fA^m_{\tau,\varepsilon;t,0}V_{\tau,\varepsilon;0},(\fJ\varphi)^{\otimes m}\rangle
 +
 \int_0^t
 \langle\fE\fA^m_{\tau,\varepsilon;t,s}\fB_{\varepsilon;s}(V_{\tau,\varepsilon;s}),(\fJ\varphi)^{\otimes m}\rangle\,\rd s
\end{equation} 
for all $t\in(0,1]$, $m\in\bN_+$ and $\varphi\in\sS(\bR^2)^\bFF\otimes_{\mathrm{alg}}\sG^-$.
\end{thm}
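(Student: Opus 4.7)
The proof proceeds degree by degree in $m$. The odd-$m$ case is immediate: Remark~\ref{rem:odd_vanish} combined with Lemma~\ref{lem:symmetries} forces both $V^m_{\tau,\varepsilon;t}$ and every term on the right-hand side of~\eqref{eq:polchinski_aux} to vanish. For $m\geq 5$, the fixed-point definition in Def.~\ref{dfn:X_map} reads $W^m_{\tau,\varepsilon;t}=\int_0^t\fE\fA^m_{\tau,\varepsilon;t,s}\fB_{\varepsilon;s}(V_{\tau,\varepsilon;s})\,\rd s$, which coincides with~\eqref{eq:polchinski_aux} once one observes that $V_{\tau,\varepsilon;0}=U(1/g_{\tau,\varepsilon;\varepsilon},r_{\tau,\varepsilon;\varepsilon},0)$ has only degree-$2$ and degree-$4$ parts, so $\fE\fA^m_{\tau,\varepsilon;t,0}V_{\tau,\varepsilon;0}=0$ for $m>4$.

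The core of the proof is the two cases $m\in\{2,4\}$; I detail $m=4$, the case $m=2$ being entirely analogous (with $\fL,\fL_\partial$ in place of $\fL$, the $r$- and $z$-fixed points in place of the $g$-fixed point, and an additional Wick contraction of $V_{\tau,\varepsilon;0}^4$ entering $\fE\fA^2_{\tau,\varepsilon;t,0}V_{\tau,\varepsilon;0}$). Write $V^4_{\tau,\varepsilon;t}=(1/g_{\tau,\varepsilon;t})U^4+W^4_{\tau,\varepsilon;t}$ and apply Lemma~\ref{lem:R_L_identity} to each integrand on the right, splitting it into a $\fR$-part and an $\fL$-part times $U^4$. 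Substituting the $g$-fixed-point identity $1/g_{\tau,\varepsilon;t}-1/g_{\tau,\varepsilon;0}=\int_0^t\fL\fE\fA^4_{\tau,\varepsilon;1,s}\fB_{\varepsilon;s}(V_s)\,\rd s$ (from Def.~\ref{dfn:X_map} together with $g_{\tau,\varepsilon;0}=g_{\tau,\varepsilon;\varepsilon}$), the $W^4$-fixed-point identity $\fE\fA^4_{\tau,\varepsilon;1,t}W_{\tau,\varepsilon;t}=\int_0^t\fR\fE\fA^4_{\tau,\varepsilon;1,s}\fB_{\varepsilon;s}(V_s)\,\rd s$ (obtained from $\fC=\fA-\mathrm{id}$), and exploiting the semigroup identity $\fA^4_{\tau,\varepsilon;1,s}-\fA^4_{\tau,\varepsilon;t,s}=\fC^4_{\tau,\varepsilon;1,t}\fA_{\tau,\varepsilon;t,s}$ implied by Remark~\ref{rem:fA_fE2}(A), the difference $D^4(t)[\fJ\varphi]$ between the two sides of~\eqref{eq:polchinski_aux} collapses to
\begin{equation*}
 D^4(t)[\fJ\varphi]=\fE\fC^4_{\tau,\varepsilon;1,t}E_t[\fJ\varphi],\qquad E_t:=\int_0^t\fA_{\tau,\varepsilon;t,s}\fB_{\varepsilon;s}(V_s)\,\rd s - W_{\tau,\varepsilon;t}.
\end{equation*}

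The final step is to verify $\fE\fC^4_{\tau,\varepsilon;1,t}E_t[\fJ\varphi]=0$. Expanding $\fC=\fA-\mathrm{id}$ produces a sum over $k\geq 1$ of pairings of $E_t^{4+k}$ with $k$ copies of $\fJ\varPsi_{\tau,\varepsilon;1,t}$. Two observations make every such term vanish. First, $E_t^{4+k}$ takes values in the even part of $\sF_{t,0}$ (since $\fB_{\varepsilon;s}(V_s)$ is $\bC$-valued by Def.~\ref{dfn:B_map} and the $\fA_{\tau,\varepsilon;t,s}$-contractions produce an even number of $\fJ\varPsi_{\tau,\varepsilon;t,s}\in\sF_{t,s}$ factors when applied to the even-degree functional $\fB_{\varepsilon;s}(V_s)$), while $\fJ\varPsi_{\tau,\varepsilon;1,t}\in\sF_{1,t}$; a short application of properties~(1) and~(3) of Def.~\ref{dfn:conditional_exp}, combined with the fact that $\fE$ annihilates the odd part of $\sF$ (the vacuum being the zero-particle state), yields the factorization $\fE(XY)=\fE(X)\fE(Y)$ for $X\in\sF_{0,t}$, $Y\in\sF_{1,t}$. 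Second, for $4+k>4$ the fixed-point definition of $W^{4+k}$ in Def.~\ref{dfn:X_map} gives $W^{4+k}_{\tau,\varepsilon;t}=\int_0^t\fE\fA^{4+k}_{\tau,\varepsilon;t,s}\fB_{\varepsilon;s}(V_s)\,\rd s$ and hence $\fE E_t^{4+k}=0$. Applying the factorization to each pairing in the sum forces every $k\geq 1$ contribution to vanish.

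The main obstacle is precisely the scale mismatch outlined above: the fixed-point equations are formulated in terms of $\fA_{\tau,\varepsilon;1,s}$ (as dictated by the renormalization subtractions anchored at unit scale), whereas~\eqref{eq:polchinski_aux} involves $\fA_{\tau,\varepsilon;t,s}$. The algebraic reduction to $\fE\fC^4_{\tau,\varepsilon;1,t}E_t[\fJ\varphi]$ and its vanishing via the independence of the subalgebras $\sF_{t,0}$ and $\sF_{1,t}$ is the key step. It is crucial that~\eqref{eq:polchinski_aux} is tested only against jets $\fJ\varphi$: the analogous equation tested against arbitrary $\varphi\in C^\infty(\bR^2)^{\bA\times\bFF}$ (namely~\eqref{eq:flow_3_intro}) is in fact \emph{false} for the solution constructed here, reflecting the non-uniqueness of the representation~\eqref{eq:functionals_derivatives} that was exploited in Sec.~\ref{sec:fixed_point} to solve the renormalization problem.
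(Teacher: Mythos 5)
Your handling of $m\in\bN_+\setminus\{2,4\}$ and of $m=4$ is correct and is, in substance, the paper's own argument: the reduction of the scale mismatch between $\fA_{\tau,\varepsilon;1,s}$ and $\fA_{\tau,\varepsilon;t,s}$ to the vanishing of $\fE\fC^4_{\tau,\varepsilon;1,t}E_t$, proved by factorizing $\fE$ over the independent subalgebras $\sF_{t,0}$ and $\sF_{1,t}$ and using $\fE E^{4+k}_t=0$ for $k\geq1$, is exactly the content of Remark~\ref{rem:fA_fE2}~(B),(C), which the paper invokes together with the already-established equations for $m>4$.

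The gap is the assertion that $m=2$ is ``entirely analogous''. It is not, and your mechanism breaks precisely there: $\fE\fC^2_{\tau,\varepsilon;1,t}$ reaches the degree-four component of the error functional, and that component is \emph{not} expectation-null. Since $W^4_{\tau,\varepsilon;t}$ solves the renormalized equation with $\fR\fE\fA^4_{\tau,\varepsilon;1,s}$ rather than $\fE\fA^4_{\tau,\varepsilon;t,s}$, the $m=4$ result only gives
\begin{equation*}
 W^4_{\tau,\varepsilon;t}-\int_0^t\fE\fA^4_{\tau,\varepsilon;t,s}\fB_{\varepsilon;s}(V_{\tau,\varepsilon;s})\,\rd s
 =
 U^4\,(1/g_{\tau,\varepsilon;0}-1/g_{\tau,\varepsilon;t})+N_t,
\end{equation*}
where $N_t$ vanishes only when tested against tensor products of jets; so the step ``$\fE E^{m+k}_t=0$ for all $k\geq1$'' fails at $k=2$. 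The local leftover, after the double contraction inside $\fC^2_{\tau,\varepsilon;1,t}$, produces $\fE\fA^2_{\tau,\varepsilon;1,t}U(1/g_{\tau,\varepsilon;0}-1/g_{\tau,\varepsilon;t},0,0)$, and this must cancel exactly against the terms generated by the unit-scale anchoring of the $r$-equation, namely the counterterm $-\fL\fE\fA^2_{\tau,\varepsilon;1,t}U(1/g_{\tau,\varepsilon;t},0,0)$ in Def.~\ref{dfn:X_map}; the cancellation in addition uses that the tadpole $\fE\fA^2_{\tau,\varepsilon;1,t}U(1,0,0)$ is exactly local (proportional to $U^2$, so its $\fR$- and $\fL_\partial$-parts vanish), and the same locality is needed to identify $U^2\,\fL\fE\fA^2_{\tau,\varepsilon;t,0}U(1/g_{\tau,\varepsilon;0},0,0)$ with the degree-two contribution of the quartic initial datum to $\fE\fA^2_{\tau,\varepsilon;t,0}V_{\tau,\varepsilon;0}$. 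This bookkeeping — Eqs.~\eqref{eq:r_relation_pol} and \eqref{eq:polchinski_C_2}--\eqref{eq:polchinski_fC_W} in the paper — is the actual content of the $m=2$ case and is not supplied by your sketch; moreover, where the jets-null remainder $N_t$ enters $\fC^2_{\tau,\varepsilon;1,t}$ it is paired with $\fE[(\fJ\varPsi_{\tau,\varepsilon;1,t})^{\otimes2}]$, i.e.\ with jets of the smooth covariance, and one must note that the jet-tested $m=4$ identity still applies to such tests.
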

\begin{rem}\label{rem:fixed_point_epsilon}
Since $\dot G_{\tau,\varepsilon;t}=0$ for $t\in(0,\varepsilon]$ by Remark~\ref{rem:G}~(A), it follows from Def.~\ref{dfn:B_map} that $\fB_{\varepsilon;t}(\Cdot)=0$ for all $t\in(0,\varepsilon]$. Consequently, by Def.~\ref{dfn:X_map} we have
\begin{equation}
 V_{\tau,\varepsilon;0}=V_{\tau,\varepsilon;t},
 \quad
 g_{\tau,\varepsilon;0}=g_{\tau,\varepsilon;t},
 \quad
 r_{\tau,\varepsilon;0}=r_{\tau,\varepsilon;t},
 \quad
 z_{\tau,\varepsilon;0}=z_{\tau,\varepsilon;t},
 \quad
 W_{\tau,\varepsilon;0}=W_{\tau,\varepsilon;t}
\end{equation}
for all $t\in(0,\varepsilon]$, where $(g_{\tau,\varepsilon;\Cdot},r_{\tau,\varepsilon;\Cdot},z_{\tau,\varepsilon;\Cdot},W_{\tau,\varepsilon;\Cdot})$ is the fixed point of $\fX_{\tau,\varepsilon;\Cdot}$.
\end{rem}
\begin{proof}
First note that Eq.~\eqref{eq:V_def_polchinski} is equivalent to the system of equations
\begin{equation}
\begin{aligned}
 V_{\tau,\varepsilon;t}^m&= W^m_{\tau,\varepsilon;t},
 \qquad m\in\bN_+\setminus\{2,4\},
 \\
 V_{\tau,\varepsilon;t}^4&= U^4\,1/g_{\tau,\varepsilon;t}+W^4_{\tau,\varepsilon;t},
 \\
 V_{\tau,\varepsilon;t}^2&= U^2\,r_{\tau,\varepsilon;t} + U^2_\partial\,z_{\tau,\varepsilon;t}+W^2_{\tau,\varepsilon;t},
\end{aligned} 
\end{equation}
where $U^4,U^2,U^2_\partial$ were introduced in Def.~\ref{dfn:U_functional}. Observe that the fixed point of the map $\fX_{\tau,\varepsilon;\Cdot}\,:\,\sY_{\tau,\varepsilon}\to \sY_{\tau,\varepsilon}$, introduced in Def.~\ref{dfn:X_map}, is equivalent to the system of Eqs.~\eqref{eq:polchinski_proof_m},~\eqref{eq:polchinski_proof_4} and~\eqref{eq:polchinski_proof_2} presented below. These equations imply in particular that
\begin{equation}
\begin{aligned}
 V_{\tau,\varepsilon;0}^m&= 0,
 \qquad m\in\bN_+\setminus\{2,4\},
 \\
 V_{\tau,\varepsilon;0}^4&= U^4\,1/g_{\tau,\varepsilon;0},
 \\
 V_{\tau,\varepsilon;0}^2&= U^2\,r_{\tau,\varepsilon;0}.
\end{aligned} 
\end{equation}
As a result, it holds
\begin{equation}
\begin{aligned}
 \fE\fA^m_{\tau,\varepsilon;t,0}V_{\tau,\varepsilon;0}&= 0,
 \qquad m\in\bN_+\setminus\{2,4\},
 \\
 \fE\fA^4_{\tau,\varepsilon;t,0}V_{\tau,\varepsilon;0}&= U^4\,1/g_{\tau,\varepsilon;0},
 \\
 \fE\fA^2_{\tau,\varepsilon;t,0}V_{\tau,\varepsilon;0}&= U^2\,r_{\tau,\varepsilon;0} +1/g_{\tau,\varepsilon;0}\, \fE\fA^2_{\tau,\varepsilon;t,0}U(1,0,0).
\end{aligned} 
\end{equation}
If $m\in\bN_+\setminus\{2,4\}$, then the first term on the RHS of Eq.~\eqref{eq:polchinski_aux} vanishes identically and $V^m_{\tau,\varepsilon;t}= W^m_{\tau,\varepsilon;t}$. Consequently, for $m\in\bN_+\setminus\{2,4\}$ Eq.~\eqref{eq:polchinski_aux} follows immediately from the fixed point equation 
\begin{equation}\label{eq:polchinski_proof_m}
 W^m_{\tau,\varepsilon;t}(X_\Cdot) 
 =
 \int_0^t
 \fE\fA^m_{\tau,\varepsilon;t,s}\fB_{\varepsilon;s}(V_{\tau,\varepsilon;s})\,\rd s.
\end{equation}
Let us turn to the proof of Eq.~\eqref{eq:polchinski_aux} for $m=4$. By the fixed point equation we obtain
\begin{equation}\label{eq:polchinski_proof_4}
\begin{aligned}
 W^4_{\tau,\varepsilon;t}
 &=
 \int_0^t
 \fR\fE\fA^4_{\tau,\varepsilon;1,s}\fB_{\varepsilon;s}(V_{\tau,\varepsilon;s})\,\rd s
 -
 \fE\fC_{\tau,\varepsilon;1,t}^4W_{\tau,\varepsilon;t},
 \\
 g_{\tau,\varepsilon;t}
 &=
 \lambda^{-1}+\int_t^1 (g_{\tau,\varepsilon;s})^2\, \fL\fE\fA^4_{\tau,\varepsilon;1,s}\fB_{\varepsilon;s}(V_{\tau,\varepsilon;s})\,\rd s.
\end{aligned} 
\end{equation}
In particular, the function $(0,1]\ni t\mapsto g_{\tau,\varepsilon;t}\in\bR$ is continuously differentiable and
\begin{equation}
 \partial_t(1/g_{\tau,\varepsilon;t})=-(\partial_t g_{\tau,\varepsilon;t})/(g_{\tau,\varepsilon;t})^2=\fL\fE\fA^4_{\tau,\varepsilon;1,t}\fB_{\varepsilon;t}(V_{\tau,\varepsilon;t}).
\end{equation}
Consequently, we obtain
\begin{equation}\label{eq:Polchinski_proof_g}
 1/g_{\tau,\varepsilon;t}=
 1/g_{\tau,\varepsilon;0}+\int_0^t \fL\fE\fA^4_{\tau,\varepsilon;1,s}\fB_{\varepsilon;s}(V_{\tau,\varepsilon;s})\,\rd s.
\end{equation}
Next, we observe that by Lemma~\ref{lem:R_L_identity} it holds
\begin{multline}
 \langle U^4\,\fL\fE\fA^4_{\tau,\varepsilon;1,s}\fB_{\varepsilon;s}(V_{\tau,\varepsilon;s})+\fR\fE\fA^4_{\tau,\varepsilon;1,s}\fB_{\varepsilon;s}(V_{\tau,\varepsilon;s}),(\fJ\varphi)^{\otimes 4}\rangle
 \\
 =
 \langle \fE\fA^4_{\tau,\varepsilon;1,s}\fB_{\varepsilon;s}(V_{\tau,\varepsilon;s}),(\fJ\varphi)^{\otimes 4}\rangle.
\end{multline}
We also note that it follows from Def.~\ref{dfn:A_map} of the map $\fC_{\tau,\varepsilon;s,t}$ that $\fE\fC_{\tau,\varepsilon;1,t}^4W_{\tau,\varepsilon;t}$ depends only on $W_{\tau,\varepsilon;t}^m$ with $m>4$. Hence, taking into account Eq.~\eqref{eq:polchinski_proof_m} we obtain
\begin{equation}
 \fE\fC_{\tau,\varepsilon;1,t}^4W_{\tau,\varepsilon;t}
 =
 \fE\fC_{\tau,\varepsilon;1,t}^4  \int_0^t
 \fE\fA_{\tau,\varepsilon;t,s}\fB_{\varepsilon;s}(V_{\tau,\varepsilon;s})\,\rd s.
\end{equation}
Consequently, by Remark~\ref{rem:fA_fE2}~(C) and the equality $\fC_{\tau,\varepsilon;1,t}V=\fA_{\tau,\varepsilon;1,t}V-V$ we get
\begin{equation}
 \fE\fC_{\tau,\varepsilon;1,t}^4W_{\tau,\varepsilon;t}
 =
 \int_0^t
 \fE\fA^4_{\tau,\varepsilon;1,s}\fB_{\varepsilon;s}(V_{\tau,\varepsilon;s})\,\rd s
 -
 \int_0^t
 \fE\fA^4_{\tau,\varepsilon;t,s}\fB_{\varepsilon;s}(V_{\tau,\varepsilon;s})\,\rd s.
\end{equation}
Using the above identities we arrive at
\begin{multline}
 \langle V^4_{\tau,\varepsilon;t} ,(\fJ\varphi)^{\otimes 4}\rangle
 =
 \langle U^4\,1/g_{\tau,\varepsilon;t} ,(\fJ\varphi)^{\otimes 4}\rangle
 +
 \langle W^4_{\tau,\varepsilon;t} ,(\fJ\varphi)^{\otimes 4}\rangle
 \\
 =
 \langle U^4\,1/g_{\tau,\varepsilon;0},(\fJ\varphi)^{\otimes 4}\rangle
 +
 \int_0^t
 \langle\fE\fA^4_{\tau,\varepsilon;1,s}\fB_{\varepsilon;s}(V_{\tau,\varepsilon;s}),(\fJ\varphi)^{\otimes 4}\rangle\,\rd s
 -
 \langle\fE\fC_{\tau,\varepsilon;1,t}^4W_{\tau,\varepsilon;t},(\fJ\varphi)^{\otimes 4}\rangle
 \\
 =
 \langle\fE\fA^4_{\tau,\varepsilon;t,0}V_{\tau,\varepsilon;0},(\fJ\varphi)^{\otimes 4}\rangle
 +
 \int_0^t
 \langle\fE\fA^4_{\tau,\varepsilon;t,s}\fB_{\varepsilon;s}(V_{\tau,\varepsilon;s}),(\fJ\varphi)^{\otimes 4}\rangle\,\rd s.
\end{multline}
This proves Eq.~\eqref{eq:polchinski_aux} for $m=4$. It remains to show that Eq.~\eqref{eq:polchinski_aux} holds true for $m=2$. Observe that by the fixed point equation we obtain
\begin{equation}\label{eq:polchinski_proof_2}
\begin{aligned}
 W^2_{\tau,\varepsilon;t}
 =&
 \int_0^t
 \fR\fE\fA^2_{\tau,\varepsilon;1,s}\fB_{\varepsilon;s}(V_{\tau,\varepsilon;s})\,\rd s
 -
 \fE\fC_{\tau,\varepsilon;1,t}^2W_{\tau,\varepsilon;t},
 \\
 r_{\tau,\varepsilon;t}
 =&
 -\fL\fE\fA^2_{\tau,\varepsilon;1,t} U(1/g_{\tau,\varepsilon;t},0,0)
 -
 \int_t^1
 \fL\fE\fA^2_{\tau,\varepsilon;1,s}\fB_{\varepsilon;s}(V_{\tau,\varepsilon;s})\,\rd s,
 \\
 z_{\tau,\varepsilon;t}
 =&
 \int_0^t
 \fL_\partial\fE\fA^2_{\tau,\varepsilon;1,s}\fB_{\varepsilon;s}(V_{\tau,\varepsilon;s})\,\rd s.
\end{aligned} 
\end{equation}
The second of the above equations implies
\begin{equation}
 r_{\tau,\varepsilon;0}
 =
 -
 \fL\fE\fA^2_{\tau,\varepsilon;1,0} U(1/g_{\tau,\varepsilon;0},0,0)
 -
 \int_0^1
 \fL\fE\fA^2_{\tau,\varepsilon;1,s}\fB_{\varepsilon;s}(V_{\tau,\varepsilon;s})\,\rd s,
\end{equation}
and
\begin{multline}\label{eq:r_relation_pol}
 r_{\tau,\varepsilon;t}
 =
 r_{\tau,\varepsilon;0}
 +
 \fL\fE\fA^2_{\tau,\varepsilon;t,0} U(1/g_{\tau,\varepsilon;0},0,0)
 +
 \fL\fE\fA^2_{\tau,\varepsilon;1,t} U(1/g_{\tau,\varepsilon;0}-1/g_{\tau,\varepsilon;t},0,0)
 \\
 +
 \int_0^t
 \fL\fE\fA^2_{\tau,\varepsilon;1,s}\fB_{\varepsilon;s}(V_{\tau,\varepsilon;s})\,\rd s,
\end{multline}
where we used the identity
\begin{equation}
 \fE\fA^2_{\tau,\varepsilon;1,0} U(1/g_{\tau,\varepsilon;0},0,0)
 =
 \fE\fA^2_{\tau,\varepsilon;1,t} U(1/g_{\tau,\varepsilon;0},0,0)
 +
 \fE\fA^2_{\tau,\varepsilon;t,0} U(1/g_{\tau,\varepsilon;0},0,0).
\end{equation}
Using Def.~\ref{dfn:A_map} of the maps $\fA_{\tau,\varepsilon;s,t}$ and $\fC_{\tau,\varepsilon;s,t}$ and Remark~\ref{rem:fA_fE2}~(C) we show that
\begin{multline}\label{eq:polchinski_C_2}
 \fE\fC_{\tau,\varepsilon;1,t}^2 W_{\tau,\varepsilon;t}
 =
 \fE\fC_{\tau,\varepsilon;1,t}^2 V_{\tau,\varepsilon;t}
 -
 \fE\fC_{\tau,\varepsilon;1,t}^2 U(1/g_{\tau,\varepsilon;t},r_{\tau,\varepsilon;t},z_{\tau,\varepsilon;t})
 \\
 =
 \fE\fC_{\tau,\varepsilon;1,t}^2 V_{\tau,\varepsilon;t}
 -
 \fE\fA_{\tau,\varepsilon;1,t}^2 U(1/g_{\tau,\varepsilon;t},0,0)
\end{multline}
and
\begin{multline}\label{eq:polchinski_C_2_init}
 \fE\fC_{\tau,\varepsilon;1,t}^2\fE\fA_{\tau,\varepsilon;t,0}V_{\tau,\varepsilon;0}
 =
 \fE\fC_{\tau,\varepsilon;1,t}^2\fE\fA_{\tau,\varepsilon;t,0} U(1/g_{\tau,\varepsilon;0},0,0)
 \\
 =
 (\fE\fA^2_{\tau,\varepsilon;1,0}-\fE\fA^2_{\tau,\varepsilon;t,0})U(1/g_{\tau,\varepsilon;0},0,0)
 =
 \fE\fA^2_{\tau,\varepsilon;1,t}U(1/g_{\tau,\varepsilon;0},0,0).
\end{multline}
Moreover, we observe that $\fE\fC_{\tau,\varepsilon;1,t}^2 V_{\tau,\varepsilon;t}$ depends only on $V_{\tau,\varepsilon;t}^m$ with $m>2$. Hence, by Eq.~\eqref{eq:polchinski_aux} with $m>2$, which has already been proved to be true, and Eq.~\eqref{eq:polchinski_C_2_init} we obtain
\begin{multline}
 \langle \fE\fC_{\tau,\varepsilon;1,t}^2 V_{\tau,\varepsilon;t},(\fJ\varphi)^{\otimes 2}\rangle
 =
 \langle\fE\fA^2_{\tau,\varepsilon;1,t}U(1/g_{\tau,\varepsilon;0},0,0),(\fJ\varphi)^{\otimes 2}\rangle
 \\+
 \int_0^t\langle \fE\fC_{\tau,\varepsilon;1,t}^2
 \fE\fA_{\tau,\varepsilon;t,s}\fB_{\varepsilon;s}(V_{\tau,\varepsilon;s}),(\fJ\varphi)^{\otimes 2}\rangle\,\rd s.
\end{multline}
Consequently, by Remark~\ref{rem:fA_fE2}~(C) and Eq.~\eqref{eq:polchinski_C_2} it holds
\begin{multline}
 \langle \fE\fC_{\tau,\varepsilon;1,t}^2 W_{\tau,\varepsilon;t},(\fJ\varphi)^{\otimes 2}\rangle
 =
 \langle\fE\fA^2_{\tau,\varepsilon;1,t}U(1/g_{\tau,\varepsilon;0}-1/g_{\tau,\varepsilon;t},0,0),(\fJ\varphi)^{\otimes 2}\rangle
 \\
 +
 \int_0^t\langle
 (\fE\fA^2_{\tau,\varepsilon;1,s}-\fE\fA^2_{\tau,\varepsilon;t,s})\fB_{\varepsilon;s}(V_{\tau,\varepsilon;s}),(\fJ\varphi)^{\otimes 2}\rangle\,\rd s.
\end{multline}
Using Lemma~\ref{lem:R_L_identity} and the fact that $\fR\fE\fA^2_{\tau,\varepsilon;1,t}U(1/g_{\tau,\varepsilon;0}-1/g_{\tau,\varepsilon;t},0,0)=0$ as well as $\fL_\partial\fE\fA^2_{\tau,\varepsilon;1,t}U(1/g_{\tau,\varepsilon;0}-1/g_{\tau,\varepsilon;t},0,0)=0$ we obtain
\begin{multline}\label{eq:polchinski_fC_W}
 \langle \fE\fC_{\tau,\varepsilon;1,t}^2 W_{\tau,\varepsilon;t},(\fJ\varphi)^{\otimes 2}\rangle
 =
 \langle U^2\fL\fE\fA^2_{\tau,\varepsilon;1,t}U(1/g_{\tau,\varepsilon;0}-1/g_{\tau,\varepsilon;t},0,0),(\fJ\varphi)^{\otimes 2}\rangle
 \\
 +
 \int_0^t\langle
 (\fE\fA^2_{\tau,\varepsilon;1,s}-\fE\fA^2_{\tau,\varepsilon;t,s})\fB_{\varepsilon;s}(V_{\tau,\varepsilon;s}),(\fJ\varphi)^{\otimes 2}\rangle\,\rd s.
\end{multline}
Next, we observe that by Lemma~\ref{lem:R_L_identity} it holds
\begin{multline}
 \langle U^2\,\fL\fE\fA^2_{\tau,\varepsilon;1,s}\fB_{\varepsilon;s}(V_{\tau,\varepsilon;s})
 +
 U^2_\partial\,\fL_\partial\fE\fA^2_{\tau,\varepsilon;1,s}\fB_{\varepsilon;s}(V_{\tau,\varepsilon;s})
 +\fR\fE\fA^2_{\tau,\varepsilon;1,s}\fB_{\varepsilon;s}(V_{\tau,\varepsilon;s}),(\fJ\varphi)^{\otimes 2}\rangle
 \\
 =
 \langle \fE\fA^2_{\tau,\varepsilon;1,s}\fB_{\varepsilon;s}(V_{\tau,\varepsilon;s}),(\fJ\varphi)^{\otimes 2}\rangle.
\end{multline}
Hence, by the first and third of Eqs.~\eqref{eq:polchinski_proof_2} and Eq.~\eqref{eq:r_relation_pol} we obtain
\begin{multline}
 \langle V^2_{\tau,\varepsilon;t} ,(\fJ\varphi)^{\otimes 2}\rangle
 =
 \langle U^2\,r_{\tau,\varepsilon;t} + U^2_\partial\,z_{\tau,\varepsilon;t} 
 +
 W^2_{\tau,\varepsilon;t} ,(\fJ\varphi)^{\otimes 2}\rangle
 \\[5pt]
 =
 \langle U^2\,r_{\tau,\varepsilon;0}
 +
 U^2\,
 \fL\fE\fA^2_{\tau,\varepsilon;t,0} U(1/g_{\tau,\varepsilon;0},0,0)
 +
 U^2\,
 \fL\fE\fA^2_{\tau,\varepsilon;1,t} U(1/g_{\tau,\varepsilon;0}-1/g_{\tau,\varepsilon;t},0,0),(\fJ\varphi)^{\otimes 2}\rangle
 \\[2pt]
 +
 \int_0^t
 \langle\fE\fA^2_{\tau,\varepsilon;1,s}\fB_{\varepsilon;s}(V_{\tau,\varepsilon;s}),(\fJ\varphi)^{\otimes 2}\rangle\,\rd s
 -
 \langle \fE\fC_{\tau,\varepsilon;1,t}^2 W_{\tau,\varepsilon;t},(\fJ\varphi)^{\otimes 2}\rangle.
\end{multline}
Consequently, by Eq.~\eqref{eq:polchinski_fC_W} we have
\begin{multline}
 \langle V^2_{\tau,\varepsilon;t} ,(\fJ\varphi)^{\otimes 2}\rangle
 =
 \langle U^2\,r_{\tau,\varepsilon;0}
 +
 U^2\,
 \fL\fE\fA^2_{\tau,\varepsilon;t,0} U(1/g_{\tau,\varepsilon;0},0,0)
 ,(\fJ\varphi)^{\otimes 2}\rangle
 \\
 +
 \int_0^t
 \langle\fE\fA^2_{\tau,\varepsilon;t,s}\fB_{\varepsilon;s}(V_{\tau,\varepsilon;s}),(\fJ\varphi)^{\otimes 2}\rangle\,\rd s.
\end{multline}
Hence, Eq.~\eqref{eq:polchinski_aux} is true for $m=2$. This finishes the proof.
\end{proof}

\begin{cor}\label{cor:polchinski}
Let $\tau,\varepsilon\in(0,1]$ and $\lambda\in(0,\lambda_\star]$, where $\lambda_\star$ is as in Corollary~\ref{cor:contraction}. For $t\in[0,1]$ define $V_{\tau,\varepsilon;t}=(V_{\tau,\varepsilon;t}^m)_{m\in\bN_+}\in\sN$ as in the statement of Theorem~\ref{thm:polchinski}. For $m\in\bN_+$ and $t\in[0,1]$ define the antisymmetric distribution $U^m_{\tau,\varepsilon;t}\in\sS'(\bT_\tau^{2m})^{\bFF^m}$ by the equality
\begin{equation}\label{eq:relation_U_V}
 \langle U^m_{\tau,\varepsilon;t},\phi^{\otimes m}\rangle_\tau 
 =
 \langle V^m_{\tau,\varepsilon;t},\fJ(\chi_\tau\phi)\otimes (\fJ\phi)^{\otimes (m-1)}\rangle
\end{equation}
for all $\phi\in C^\infty(\bT_\tau^2)\otimes_{\mathrm{alg}}\sG^-$. For $t\in[0,1]$ define the functional $\tilde U_{\tau,\varepsilon;t}\in\sN(C^\infty(\bT_\tau^2)^\bFF)$ by the equality
\begin{equation}
 \tilde U_{\tau,\varepsilon;t}(\phi):= \sum_{m\in\bN_+} \langle U^m_{\tau,\varepsilon;t},\phi^{\otimes m}\rangle_\tau \in \sG,
 \qquad
 \phi\in C^\infty(\bT_\tau^2)^\bFF\otimes_{\mathrm{alg}}\sG^-.
\end{equation}
For $t\in[0,1]$ define $U^0\in\bC$ by the equality 
\begin{equation}\label{eq:U_0}
 U^0_{\tau,\varepsilon;t} = 
 \fE U_{\tau,\varepsilon}(\varPsi_{\tau,\varepsilon;t,0}) 
 + 
 \frac{1}{2} \int_0^t \fE\langle \rD_\phi  \tilde U_{\tau,\varepsilon;s}(\varPsi_{\tau,\varepsilon;t,s}),\dot G_{\varepsilon;s}\ast \rD_\phi  \tilde U_{\tau,\varepsilon;s}(\varPsi_{\tau,\varepsilon;t,s})\rangle_\tau\,\rd s. 
\end{equation}
For $t\in[0,1]$ define the functional $U_{\tau,\varepsilon;t}\in\sN(C^\infty(\bT_\tau^2)^\bFF)$ by the equality
\begin{equation}
 U_{\tau,\varepsilon;t}(\phi):= \sum_{m\in\bN_0} \langle U^m_{\tau,\varepsilon;t},\phi^{\otimes m}\rangle_\tau \in \sG,
 \qquad
 \phi\in C^\infty(\bT_\tau^2)^\bFF\otimes_{\mathrm{alg}}\sG^-.
\end{equation}
Finally, define the functional $U_{\tau,\varepsilon}\in\sN(C^\infty(\bT_\tau^2)^\bFF)$
as specified in Def.~\ref{dfn:measures} with $g_{\tau,\varepsilon}:=g_{\tau,\varepsilon;0}$ and $r_{\tau,\varepsilon}:=r_{\tau,\varepsilon;0}$, where $g_{\tau,\varepsilon;0}$ and $r_{\tau,\varepsilon;0}$ are as in the statement of Theorem~\ref{thm:polchinski}.
Then the map
$
 [0,1] \ni t\mapsto  U_{\tau,\varepsilon;t}\in \sN(C^\infty(\bT_\tau^2)^\bFF)
$
is continuous and is a solution of the integral form of the Polchinski equation
\begin{multline}\label{eq:polchinski_corollary}
  U_{\tau,\varepsilon;t}(\phi) = \fE  U_{\tau,\varepsilon}(\varPsi_{\tau,\varepsilon;t,0}+\phi) 
 \\
 + \frac{1}{2}  \int_0^t \fE\langle \rD_\phi  U_{\tau,\varepsilon;s}(\varPsi_{\tau,\varepsilon;t,s}+\phi),\dot G_{\varepsilon;s}\ast \rD_\phi  U_{\tau,\varepsilon;s}(\varPsi_{\tau,\varepsilon;t,s}+\phi)\rangle_\tau\,\rd s
\end{multline}
for all $\phi\in C^\infty(\bT_\tau^2)^\bFF \otimes_{\mathrm{alg}}\sG^-$.
\end{cor}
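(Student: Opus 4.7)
The plan is to verify Eq.~\eqref{eq:polchinski_corollary} by decomposing each side into its order-$m$ piece in $\phi$ (for $m\in\bN_0$) and checking the resulting identities separately. First I would establish continuity of $t\mapsto U_{\tau,\varepsilon;t}\in\sN(C^\infty(\bT_\tau^2)^\bFF)$: for $m\in\bN_+$, continuity of $t\mapsto U^m_{\tau,\varepsilon;t}$ follows from continuity of $t\mapsto V^m_{\tau,\varepsilon;t}$ (guaranteed by Corollary~\ref{cor:contraction}) via the linear relation~\eqref{eq:relation_U_V}, while continuity at $t=0$ uses Remark~\ref{rem:fixed_point_epsilon}; continuity of $U^0_{\tau,\varepsilon;t}$ follows from the defining formula~\eqref{eq:U_0} together with the regularity properties of $\varPsi_{\tau,\varepsilon;t,s}$ (Lemma~\ref{lem:Psi_bounds}) and of $\dot G_{\varepsilon;s}$ (Lemma~\ref{lem:dot_G_infty}).

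Next, the $m=0$ piece of~\eqref{eq:polchinski_corollary} is obtained by setting $\phi=0$. Since $U_{\tau,\varepsilon;s}$ and $\tilde U_{\tau,\varepsilon;s}$ differ only by the constant $U^0_{\tau,\varepsilon;s}\in\bC$, we have the identity $\rD_\phi U_{\tau,\varepsilon;s}=\rD_\phi\tilde U_{\tau,\varepsilon;s}$, and $U_{\tau,\varepsilon;t}(0)=U^0_{\tau,\varepsilon;t}$. Therefore the $\phi=0$ specialization of~\eqref{eq:polchinski_corollary} coincides exactly with the defining equation~\eqref{eq:U_0}, so the $m=0$ component holds by construction.

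For $m\in\bN_+$ I would Taylor-expand both sides of~\eqref{eq:polchinski_corollary} in $\phi$ using the identity $F(\varPsi+\phi)=\sum_{k\in\bN_0}\frac{1}{k!}\langle\rD^k_\phi F(\varPsi),\phi^{\otimes k}\rangle_\tau$ and equate the order-$m$ coefficients. The coefficient of $\phi^{\otimes m}$ in the first term of the RHS becomes, after applying Def.~\ref{dfn:functional} and relation~\eqref{eq:relation_U_V}, the torus pairing of $\phi^{\otimes m}$ with the $U$-kernel associated (via~\eqref{eq:relation_U_V}) to $\fE\fA^m_{\tau,\varepsilon;t,0}V_{\tau,\varepsilon;0}^{\text{(as a collection)}}$, on account of the explicit formula~\eqref{eq:fA_map_def_alt}. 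The coefficient of $\phi^{\otimes m}$ in the integrand of the second term decomposes, via the Leibniz rule applied to $\rD_\phi U_{\tau,\varepsilon;s}(\varPsi_{\tau,\varepsilon;t,s}+\phi)\otimes\rD_\phi U_{\tau,\varepsilon;s}(\varPsi_{\tau,\varepsilon;t,s}+\phi)$, as a sum over $k_1+k_2=m$ of pairings of two $\rD_\phi^{k_i+1}$-derivatives with $\dot G_{\varepsilon;s}$ and with $\varPsi_{\tau,\varepsilon;t,s}$-insertions; by Def.~\ref{dfn:B_map} and Def.~\ref{dfn:A_map} this is precisely the contribution of $\fE\fA^m_{\tau,\varepsilon;t,s}\fB_{\varepsilon;s}(V_{\tau,\varepsilon;s})$ translated to the torus by~\eqref{eq:relation_U_V}. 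Hence the order-$m$ piece of~\eqref{eq:polchinski_corollary} is nothing but the identity~\eqref{eq:polchinski_aux} for $V^m_{\tau,\varepsilon;t}$, already established in Theorem~\ref{thm:polchinski}.

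The main technical point, which I expect to be the most delicate step, is the translation between the $\bR^2$-kernel flow~\eqref{eq:polchinski_aux} and the torus functional equation~\eqref{eq:polchinski_corollary}. Two ingredients have to be combined carefully: (a)~the pairing~\eqref{eq:relation_U_V} uses $\fJ(\chi_\tau\phi)\otimes(\fJ\phi)^{\otimes(m-1)}$, and by translation invariance and antisymmetry of $V^m_{\tau,\varepsilon;t}\in\sN^m$ this does not depend on the particular choice of $\chi_\tau$, and unambiguously unfolds torus pairings into $\bR^2$ pairings; (b)~the convolution $\dot G_{\varepsilon;s}\ast$ interpreted on torus distributions agrees with $\fP_\tau \dot G_{\varepsilon;s}\ast_\tau$ by Remark~\ref{rem:convolution_tau}, and the jet-matrix $\dot H_{\varepsilon;s}$ of Def.~\ref{dfn:covariance_H} is obtained by differentiating $\dot G_{\varepsilon;s}$ in each slot, so the $\bR^2$ pairing $\langle\cdot,\dot H_{\varepsilon;s}\rangle$ reproduces $\langle\cdot,\dot G_{\varepsilon;s}\ast\cdot\rangle_\tau$ after jet-expansion of the two arguments. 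Combinatorially, the factors $\frac{(m+k)!}{m!k!}$ in~\eqref{eq:fA_map_def_alt}, the factor $\frac{1}{2}$ in front of the integral in~\eqref{eq:polchinski_corollary}, and the coefficient $(k+1)(m-k+1)$ in Def.~\ref{dfn:B_map} have to align with the multinomial coefficients coming from Taylor expansion; this is a routine but careful check. With these identifications, the $m\geq 1$ and $m=0$ components of~\eqref{eq:polchinski_corollary} are verified, concluding the proof.
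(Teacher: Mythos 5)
Your proposal is correct and follows essentially the same route as the paper: it reduces \eqref{eq:polchinski_corollary} to the kernel identities \eqref{eq:polchinski_aux} of Theorem~\ref{thm:polchinski} via the definitions of $\fA_{\tau,\varepsilon;t,s}$, $\fB_{\varepsilon;s}$ and the periodization relation \eqref{eq:relation_U_V}, handles the constant term by observing that the $\phi=0$ component holds by the very definition \eqref{eq:U_0}, and obtains continuity from the fixed-point space structure together with constancy of the flow on $[0,\varepsilon]$. The only difference is organizational — you Taylor-expand the functional equation and match order-$m$ coefficients, while the paper polarizes \eqref{eq:polchinski_aux} to general test functions $\psi\in\sS(\bR^{2m})^{\bFF^m}$ by antisymmetry and then resums — which amounts to the same argument.
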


\begin{proof}
Using Theorem~\ref{thm:polchinski} and the fact that the distributions $V^m_{\tau,\varepsilon;t}$, $\fE\fA^m_{\tau,\varepsilon;t,0}V_{\tau,\varepsilon;0}$ and  $\fE\fA^m_{\tau,\varepsilon;t,s}\fB_{\varepsilon;s}(V_{\tau,\varepsilon;s})$ are antisymmetric we obtain
\begin{equation}
 \langle V^m_{\tau,\varepsilon;t} ,\fJ^{\otimes m}\psi\rangle
 =
 \langle\fE\fA^m_{\tau,\varepsilon;t,0}V_{\tau,\varepsilon;0},\fJ^{\otimes m}\psi\rangle
 +
 \int_0^t
 \langle\fE\fA^m_{\tau,\varepsilon;t,s}\fB_{\varepsilon;s}(V_{\tau,\varepsilon;s}),\fJ^{\otimes m}\psi\rangle\,\rd s
\end{equation}
for all $t\in(0,1]$, $m\in\bN_+$ and $\psi\in\sS(\bR^{2m})^{\bFF^m}$. Hence, by Def.~\ref{dfn:A_map} and~\ref{dfn:B_map} of the maps $\fA^m_{\tau,\varepsilon;t,s}$ and $\fB_{\varepsilon;s}$ we have
\begin{multline}
 U_{\tau,\varepsilon;t}(\phi) = \fE U_{\tau,\varepsilon}(\varPsi_{\tau,\varepsilon;t,0}+\phi) 
 \\
 + \frac{1}{2} \int_0^t \fE\langle \rD_\phi U_{\tau,\varepsilon;s}(\varPsi_{\tau,\varepsilon;t,s}+\phi),\dot G_{\varepsilon;s}\ast \rD_\phi U_{\tau,\varepsilon;s}(\varPsi_{\tau,\varepsilon;t,s}+\phi)\rangle_\tau\,\rd s 
\end{multline}
for all $t\in(0,1)$ and $\phi\in C^\infty(\bT_\tau^2)^\bFF\otimes_{\mathrm{alg}}\sG^-$. Finally, to prove the continuity of the map $
 [0,1] \ni t\mapsto  U_{\tau,\varepsilon;t}\in \sN(C^\infty(\bT_\tau^2)^\bFF)
$ we observe that the map is constant for on $[0,\varepsilon]$ and use its definition in terms of the fixed point of the map $\fX_{\tau,\varepsilon;\Cdot}\,:\,\sY_{\tau,\varepsilon}\to \sY_{\tau,\varepsilon}$ as well as Def.~\ref{dfn:sX} of the set $\sY_{\tau,\varepsilon}$ and Def.~\ref{dfn:sW} of the space $\sW_{\tau,\varepsilon}^{\alpha,\beta;\gamma}$. This finishes the proof.
\end{proof}

\begin{rem}\label{rem:Z_not_zero}
In the presence of the  UV and IR cutoffs $\tau,\varepsilon\in(0,1]$ it is straightforward to show that $U^0_{\tau,\varepsilon;t}\in\bC$ given by Eq.~\eqref{eq:U_0} is well-defined for all $t\in[0,1]$. Consequently, for the choice of parameters $g_{\tau,\varepsilon}$ and $r_{\tau,\varepsilon}$ as in the above corollary it holds
\begin{equation}
 \fE\exp(U_{\tau,\varepsilon}(\varPsi_{\tau,\varepsilon;t,0}))
 =
 \exp(U_{\tau,\varepsilon;t}(0))
 =
 \exp(U^0_{\tau,\varepsilon;t})\neq 0.
\end{equation}
In particular,
\begin{equation}
 \fE\exp(U_{\tau,\varepsilon}(\varPsi_{\tau,\varepsilon}))
 =
 \exp(U_{\tau,\varepsilon;t=1}(0))
 =
 \exp(U^0_{\tau,\varepsilon;t=1})\neq 0.
\end{equation}
\end{rem}

\section{Convergence of Schwinger functions}\label{sec:convergence}

In this section we establish a relation between an effective potential $U_{\tau,\varepsilon;t}$ at the scale $t=1$ and the generating functional of the Schwinger functions and complete the proof of Theorem~\ref{thm:main}. Recall that the effective potential $U_{\tau,\varepsilon;t}$ was constructed in Sec.~\ref{sec:relation_polchinski} with the use of the fixed point $X_{\tau,\varepsilon;\Cdot}$ of the map $\fX_{\tau,\varepsilon;\Cdot}$ constructed in Sec.~\ref{sec:fixed_point}. 

\begin{lem}\label{lem:generating_functional}
Let $\tau,\varepsilon\in(0,1]$. Suppose that $\fE\exp(U_{\tau,\varepsilon}(\varPsi_{\tau,\varepsilon}))
 \neq 0$. The following equality
\begin{equation}
 \mu_{\tau,\varepsilon}(\exp(\langle\Cdot,\phi\rangle_\tau))
 =
 \exp(\langle \phi,G_{\tau,\varepsilon}\ast_\tau\phi\rangle_\tau/2+U_{\tau,\varepsilon;1}(G_{\tau,\varepsilon}\ast_\tau\phi)-U_{\tau,\varepsilon;1}(0))
\end{equation}
holds for all $\phi\in C^\infty(\bT^2_\tau)^\bFF\otimes_{\mathrm{alg}}\sG^-$, where the interacting measure $\mu_{\tau,\varepsilon}$ was introduced in Def.~\ref{dfn:measures} and $U_{\tau,\varepsilon;1}$ is an effective potential at unit scale introduced in Def.~\ref{dfn:effective_potential}.
\end{lem}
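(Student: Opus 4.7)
The strategy is to reduce the generating functional to a ratio of $\fE$-expectations of $\exp(U_{\tau,\varepsilon})$ at shifted arguments, and then apply the effective potential identity at unit scale. Start from the Berezin-integral definition of $\mu_{\tau,\varepsilon}$ (Def.~\ref{dfn:measures}), so that
\begin{equation}
 \mu_{\tau,\varepsilon}(\exp(\langle\Cdot,\phi\rangle_\tau)) = \frac{\int\exp(\langle\vartheta_\varepsilon\ast\psi_{\tau,\varepsilon},\phi\rangle_\tau-A_\tau(\psi_{\tau,\varepsilon})+U_{\tau,\varepsilon}(\vartheta_\varepsilon\ast\psi_{\tau,\varepsilon}))\,\rd\psi_{\tau,\varepsilon}}{\int\exp(-A_\tau(\psi_{\tau,\varepsilon})+U_{\tau,\varepsilon}(\vartheta_\varepsilon\ast\psi_{\tau,\varepsilon}))\,\rd\psi_{\tau,\varepsilon}},
\end{equation}
and use the symmetry of $\vartheta_\varepsilon$ (cf.~Def.~\ref{dfn:theta}) to rewrite the source term as $\langle\psi_{\tau,\varepsilon},\vartheta_\varepsilon\ast\phi\rangle_\tau$.

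Next, perform the translation $\psi_{\tau,\varepsilon}\mapsto\psi_{\tau,\varepsilon}+\eta$ in both Berezin integrals in the numerator, with $\eta:=G_\tau\ast_\tau\vartheta_\varepsilon\ast\phi$. A standard completion of the square for the quadratic form $A_\tau$ gives
\begin{equation}
 -A_\tau(\psi)+\langle\psi,\vartheta_\varepsilon\ast\phi\rangle_\tau=-A_\tau(\psi-\eta)+\tfrac12\langle\vartheta_\varepsilon\ast\phi,G_\tau\ast_\tau\vartheta_\varepsilon\ast\phi\rangle_\tau,
\end{equation}
and the identity $G_{\tau,\varepsilon}=\vartheta_\varepsilon\ast G_\tau\ast\vartheta_\varepsilon$ (Def.~\ref{dfn:covariance}) collapses the boundary term to $\tfrac12\langle\phi,G_{\tau,\varepsilon}\ast_\tau\phi\rangle_\tau$, while the shifted argument of $U_{\tau,\varepsilon}$ becomes $\vartheta_\varepsilon\ast\psi+G_{\tau,\varepsilon}\ast_\tau\phi$. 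Pulling the Gaussian factor out and reinterpreting the residual ratio of Berezin integrals by Lemma~\ref{lem:free_measure} yields
\begin{equation}
 \mu_{\tau,\varepsilon}(\exp(\langle\Cdot,\phi\rangle_\tau))=\exp(\tfrac12\langle\phi,G_{\tau,\varepsilon}\ast_\tau\phi\rangle_\tau)\cdot\frac{\fE\exp(U_{\tau,\varepsilon}(\varPsi_{\tau,\varepsilon}+G_{\tau,\varepsilon}\ast_\tau\phi))}{\fE\exp(U_{\tau,\varepsilon}(\varPsi_{\tau,\varepsilon}))}.
\end{equation}

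To close, apply Def.~\ref{dfn:effective_potential} at $t=1$ (using $\varPsi_{\tau,\varepsilon;1,0}=\varPsi_{\tau,\varepsilon}$) first with the argument $G_{\tau,\varepsilon}\ast_\tau\phi$ and then with the zero argument, obtaining numerator $\exp(U_{\tau,\varepsilon;1}(G_{\tau,\varepsilon}\ast_\tau\phi))$ and denominator $\exp(U_{\tau,\varepsilon;1}(0))$, whose ratio combines with the Gaussian prefactor to give the asserted identity.

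The main point requiring care is the applicability of the Berezin translation invariance (Lemma~\ref{lem:berezin_translation}) and of the effective potential identity at $t=1$: both demand that the shift $\eta$ and the argument $G_{\tau,\varepsilon}\ast_\tau\phi$ lie in $\mathcal{C}_{\tau,\varepsilon}\otimes\sG^-$. This is however automatic: $\fF\vartheta_\varepsilon$ is supported in the ball of radius $1/\varepsilon$ and hence in $\Lambda_{\tau,\varepsilon}$ on the torus lattice, so any convolution of a distribution with $\vartheta_\varepsilon$ — in particular $\eta=G_\tau\ast_\tau\vartheta_\varepsilon\ast\phi$ and $G_{\tau,\varepsilon}\ast_\tau\phi=\vartheta_\varepsilon\ast G_\tau\ast_\tau\vartheta_\varepsilon\ast\phi$ — has Fourier support in $\Lambda_{\tau,\varepsilon}$, placing both objects in $\mathcal{C}_{\tau,\varepsilon}\otimes\sG^-$ as required.
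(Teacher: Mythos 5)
Your proof is correct and follows essentially the same route as the paper: evenness of $\vartheta_\varepsilon$ to move the smearing onto the source, completion of the square for $A_\tau$, Berezin translation invariance (Lemma~\ref{lem:berezin_translation}), reduction to free-field expectations via Lemma~\ref{lem:free_measure}, and Def.~\ref{dfn:effective_potential} at $t=1$, with the Fourier-support check placing the shift in $\mathcal{C}_{\tau,\varepsilon}\otimes\sG^-$. The only minor blemish is the phrase ``in both Berezin integrals in the numerator'' (the numerator is a single integral, and the denominator needs no translation), but the actual computation is carried out correctly.
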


\begin{proof}
First observe that by Def.~\ref{dfn:measures} of the measure $\mu_{\tau,\varepsilon}$ it holds
\begin{equation}
 \mu_{\tau,\varepsilon}(\exp(\langle\Cdot,\phi\rangle_\tau)) 
 = 
 \frac{\int \exp(-A_{\tau}(\psi_{\tau,\varepsilon})+U_{\tau,\varepsilon}(\vartheta_\varepsilon\ast\psi_{\tau,\varepsilon})+\langle\psi_{\tau,\varepsilon},\vartheta_\varepsilon\ast\phi\rangle_\tau)\,\rd\psi_{\tau,\varepsilon}}{\int \exp(-A_{\tau}(\psi_{\tau,\varepsilon})+U_{\tau,\varepsilon}(\vartheta_\varepsilon\ast\psi_{\tau,\varepsilon}))\,\rd\psi_{\tau,\varepsilon}}
\end{equation}
for all $\phi\in C^\infty(\bT^2_\tau)^\bFF\otimes_{\mathrm{alg}}\sG^-$.
Using Def.~\ref{dfn:measures} of the free action $A_\tau$ one proves the following identity
\begin{equation}
 \langle\psi,\phi\rangle_\tau
 -
 A_{\tau}(\psi) 
 =
 \frac{1}{2} \langle
 G\ast\phi,\phi\rangle_\tau
 -
 A_{\tau}(\psi-G\ast\phi) 
\end{equation}
for all $\psi,\phi\in C^\infty(\bT^2_\tau)^\bFF\otimes_{\mathrm{alg}}\sG^-$. Consequently, using the equalities 
\begin{equation}
\vartheta_\varepsilon \ast G \ast \vartheta_\varepsilon 
 =
 G_\varepsilon,
 \qquad
 G_\varepsilon\ast\phi = G_{\tau,\varepsilon}\ast_\tau\phi
\end{equation}
we obtain
\begin{multline}
 \mu_{\tau,\varepsilon}(\exp(\langle\Cdot,\phi\rangle_\tau)) 
 \\
 =
 \exp(\langle \phi,G_{\tau,\varepsilon}\ast_\tau\phi\rangle_\tau/2)
 \,
 \frac{\int \exp(-A_{\tau}(\psi_{\tau,\varepsilon}-G\ast \vartheta_{\varepsilon}\ast\phi)+U_{\tau,\varepsilon}(\vartheta_\varepsilon\ast\psi_{\tau,\varepsilon}))\,\rd\psi_{\tau,\varepsilon}}{\int \exp(-A_{\tau}(\psi_{\tau,\varepsilon})+U_{\tau,\varepsilon}(\vartheta_\varepsilon\ast\psi_{\tau,\varepsilon}))\,\rd\psi_{\tau,\varepsilon}}
\end{multline}
for all $\phi\in C^\infty(\bT^2_\tau)^\bFF\otimes_{\mathrm{alg}}\sG^-$. Note that since $\fF_\tau\fP_\tau\vartheta_\varepsilon\subset\Lambda_{\tau,\varepsilon}$ it holds
\begin{equation}
 G\ast \vartheta_{\varepsilon}\ast\phi(x) = \tau^2\sum_{p\in\Lambda_{\tau,\varepsilon}} (\fF_\tau G\ast \vartheta_{\varepsilon}\ast\phi)(p)\,\re^{\ri p\cdot x}\,\rd p
\end{equation}
and $G\ast \vartheta_{\varepsilon}\ast\phi\in \mathcal{C}_{\tau,\varepsilon}$ for all $\phi\in C^\infty(\bT^2_\tau)^\bFF\otimes_{\mathrm{alg}}\sG^-$, where $\mathcal{C}_{\tau,\varepsilon}$ was introduced in Def.~\ref{dfn:GN_grassmann}. Taking advantage of the invariance under translation of the Berezin integral, which was stated in Lemma~\ref{lem:berezin_translation}, we show the following equality
\begin{multline}
 \mu_{\tau,\varepsilon}(\exp(\langle\Cdot,\phi\rangle_\tau)) 
 \\
 =
 \exp(\langle \phi,G_{\tau,\varepsilon}\ast_\tau\phi\rangle_\tau/2)
 \,
 \frac{\int \exp(-A_{\tau}(\psi_{\tau,\varepsilon})+U_{\tau,\varepsilon}(\vartheta_\varepsilon\ast(\psi_{\tau,\varepsilon}+G\ast \vartheta_{\varepsilon}\ast\phi)))\,\rd\psi_{\tau,\varepsilon}}{\int \exp(-A_{\tau}(\psi_{\tau,\varepsilon})+U_{\tau,\varepsilon}(\vartheta_\varepsilon\ast\psi_{\tau,\varepsilon}))\,\rd\psi_{\tau,\varepsilon}}.
\end{multline}
The above equality can be rewritten as
\begin{multline}
 \mu_{\tau,\varepsilon}(\exp(\langle\Cdot,\phi\rangle_\tau)) 
 \\
 =
 \exp(\langle \phi,G_{\tau,\varepsilon}\ast_\tau\phi\rangle_\tau/2)
 \,
 \frac{\int \exp(-A_{\tau}(\psi_{\tau,\varepsilon})+U_{\tau,\varepsilon}(\vartheta_\varepsilon\ast \psi_{\tau,\varepsilon}+G_{\tau,\varepsilon}\ast_\tau\phi))\,\rd\psi_{\tau,\varepsilon}}{\int \exp(-A_{\tau}(\psi_{\tau,\varepsilon})+U_{\tau,\varepsilon}(\vartheta_\varepsilon\ast\psi_{\tau,\varepsilon}))\,\rd\psi_{\tau,\varepsilon}}
\end{multline}
for all $\phi\in C^\infty(\bT^2_\tau)^\bFF\otimes_{\mathrm{alg}}\sG^-$. Using Lemma~\ref{lem:free_measure} we arrive at
\begin{equation}
 \mu_{\tau,\varepsilon}(\exp(\langle\Cdot,\phi\rangle_\tau))
 =
 \exp(\langle \phi,G_{\tau,\varepsilon}\ast_\tau\phi\rangle_\tau/2)
 \,\frac{\fE(\exp(U_{\tau,\varepsilon}(\varPsi_{\tau,\varepsilon}+G_{\tau,\varepsilon}\ast_\tau\phi)))}{\fE(\exp(U_{\tau,\varepsilon}(\varPsi_{\tau,\varepsilon})))}.
\end{equation}
The statement follows now from the fact that $G_{\tau,\varepsilon}\ast_\tau\phi\in \mathcal{C}_{\tau,\varepsilon}$ for all $\phi\in C^\infty(\bT^2_\tau)^\bFF\otimes_{\mathrm{alg}}\sG^-$, $\varPsi_{\tau,\varepsilon}=\varPsi_{\tau,\varepsilon;1,0}$ and Def.~\ref{dfn:effective_potential} of an effective potential.
\end{proof}

\begin{dfn}\label{dfn:correlations}
Let $\tau,\varepsilon\in(0,1]$. Suppose that $\fE\exp(U_{\tau,\varepsilon}(\varPsi_{\tau,\varepsilon}))
 \neq 0$. We call
\begin{equation}
 \sS(\bR^2)^\bFF\otimes_{\mathrm{alg}}\sG^-\ni \varphi\mapsto Z_{\tau,\varepsilon}[\varphi]
 :=
 \mu_{\tau,\varepsilon}(\exp(\langle\Cdot,\varphi\rangle))\in\sG
\end{equation}
the generating functional. The Schwinger function of order $m\in\bN_+$ is defined by the equation
\begin{equation}
 S^m_{\tau,\varepsilon}:=\rD_\varphi^m Z_{\tau,\varepsilon}[\varphi]\big|_{\varphi=0}
 \in\sS'(\bR^{2m})^{\bFF^m}.
\end{equation}
The truncated Schwinger function of order $m\in\bN_+$ is defined by the equation
\begin{equation}
 T^m_{\tau,\varepsilon}:=\rD_\varphi^m \log(Z_{\tau,\varepsilon}[\varphi])\big|_{\varphi=0}
 \in\sS'(\bR^{2m})^{\bFF^m}.
\end{equation}
\end{dfn}

\begin{rem}\label{rem:S_T}
The Schwinger function can be expressed in terms of the truncated Schwinger functions using the formula
\begin{equation}
\langle S^m_{\tau,\varepsilon},\varphi^{\otimes m}\rangle 
=
\sum_{\pi\in\Pi_m}\prod_{S\in\pi}
\langle T^{|S|}_{\tau,\varepsilon},\varphi^{\otimes |S|}\rangle
\end{equation}
valid for all $\varphi\in\sS(\bR^2)^\bFF\otimes_{\mathrm{alg}}\sG^-$, where $\Pi_m$ denotes the set of partitions of the set $\{1,\ldots,m\}$ into disjoint subsets whose union coincides with $\{1,\ldots,m\}$ and $|S|$ denotes the number of elements of a set $S$.
\end{rem}

\begin{thm}\label{thm:convergence}
Fix $\lambda\in(0,\lambda_\star]$, where $\lambda_\star$ is as in Corollary~\ref{cor:contraction}. For all $\tau,\varepsilon\in[0,1]$ let 
\begin{equation}
 (0,1]\ni t\mapsto X_{\tau,\varepsilon;t}\equiv (g_{\tau,\varepsilon;t},r_{\tau,\varepsilon;t},z_{\tau,\varepsilon;t},W_{\tau,\varepsilon;t})\in \bR\times\bR\times\bR\times\sN
\end{equation}
be the fixed point of the map $\fX_{\tau,\varepsilon;\Cdot}\,:\,\sY_{\tau,\varepsilon}\to \sY_{\tau,\varepsilon}$, which was introduced in Def.~\ref{dfn:X_map}. Moreover, set
\begin{equation}
 g_{\tau,\varepsilon}:=g_{\tau,\varepsilon;\varepsilon},
 \quad
 r_{\tau,\varepsilon}:=r_{\tau,\varepsilon;\varepsilon},
\end{equation} 
Suppose that for all $\tau,\varepsilon\in(0,1]$ the interacting measure $\mu_{\tau,\varepsilon}$ is defined as specified in Def.~\ref{dfn:measures} with the above choice of the parameters $g_{\tau,\varepsilon}$ and $r_{\tau,\varepsilon}$. Then for all $m\in\bN_+$ there exist $S^m,T^m\in \sS'(\bR^{2m})^{\bFF^m}$ such that: 
\begin{itemize}
\item[(A)] $\lim_{\tau,\varepsilon\searrow0}\langle T^m_{\tau,\varepsilon},\varphi\rangle 
 =
 \langle T^m,\varphi\rangle$ for all $\varphi\in\sS(\bR^{2m})^{\bFF^m}$,
 \item[(B)] $\lim_{\tau,\varepsilon\searrow0}\langle S^m_{\tau,\varepsilon},\varphi\rangle 
 =
 \langle S^m,\varphi\rangle$ for for all $\varphi\in\sS(\bR^{2m})^{\bFF^m}$,

 \item[(C)] $S^m,T^m$ are invariant under the symmetries of the plane in the sense of Def~\ref{dfn:symm_plane},
 
 \item[(D)] $T^m$ has the properties stated in Items~(C) and~(D) of Theorem~\ref{thm:main}.
 
 \item[(E)] $\lim_{|x|\to\infty}\exp(|x|^{1/2})\,|\langle T^m,\varphi_x\otimes\psi \rangle|=0$ for all $n\in\{1,\ldots,m-1\}$, $\varphi\in C^\infty_\rc(\bR^{2n})^{\bFF^n}$, \mbox{$\psi\in C^\infty_\rc(\bR^{2(m-n)})^{\bFF^{m-n}}$}, where for $x\in\bR^2$ we define $\varphi_x\in C^\infty_\rc(\bR^{2n})^{\bFF^n}$ by the equality $\varphi_x(y_1,\ldots,y_n):=\varphi(y_1-x,\ldots,y_n-x)$ for all $y_1,\ldots,y_n\in\bR^2$.
\end{itemize}
\end{thm}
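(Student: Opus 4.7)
The plan is to extract the correlation functions from the effective-potential formalism of Lemma~\ref{lem:generating_functional} and then pass to the limit using the convergence of the fixed-point data provided by Corollary~\ref{cor:contraction}. Fix $\tau,\varepsilon\in(0,1]$. By Theorem~\ref{thm:polchinski} and Corollary~\ref{cor:polchinski}, the functional $U_{\tau,\varepsilon;t}$ built from the fixed point $X_{\tau,\varepsilon;\Cdot}$ satisfies the integral Polchinski equation with original potential $U_{\tau,\varepsilon}=U_\tau(g_{\tau,\varepsilon;0},r_{\tau,\varepsilon;0})$; in particular $\fE\exp(U_{\tau,\varepsilon}(\varPsi_{\tau,\varepsilon}))\neq 0$ by Remark~\ref{rem:Z_not_zero}, so $\mu_{\tau,\varepsilon}$ is well defined, and Lemma~\ref{lem:generating_functional} gives
\[
 \log Z_{\tau,\varepsilon}[\varphi]=\tfrac{1}{2}\langle\varphi,G_{\tau,\varepsilon}\ast_\tau\varphi\rangle_\tau+U_{\tau,\varepsilon;1}(G_{\tau,\varepsilon}\ast_\tau\varphi)-U_{\tau,\varepsilon;1}(0).
\]
Differentiating $m$ times at $\varphi=0$ and using Eq.~\eqref{eq:relation_U_V} to replace the functional $U^m_{\tau,\varepsilon;1}$ on $\bT^2_\tau$ by the translation-invariant kernel $V^m_{\tau,\varepsilon;1}$ on $\bR^{2m}$, I express $T^m_{\tau,\varepsilon}$ as (the antisymmetrization of) $m!$ times a pairing of $V^m_{\tau,\varepsilon;1}$ with the jet-extensions $\fJ(G_{\tau,\varepsilon}\ast_\tau\varphi_i)$, plus a free-propagator contribution when $m=2$.

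The candidate limit $T^m$ is defined by the same formula with $V^m_{\tau,\varepsilon;1}$, $G_{\tau,\varepsilon}$, and $\ast_\tau$ replaced by $V^m_{1}:=V^m_{0,0;1}$, $G$, and $\ast$, where $V_\Cdot$ is produced by the $\tau=\varepsilon=0$ fixed point from Corollary~\ref{cor:contraction}. The bound
\[
 \|V_\Cdot-V_{\tau,\varepsilon;\Cdot}\|_{\tilde\sV^{2,3;1-40\kappa}}\leq C\,\lambda_{\tau\vee\varepsilon}^{\kappa},\qquad \|V_{\tau,\varepsilon;\Cdot}\|_{\sV^{8,4;1-40\kappa}}\leq C
\]
of Corollary~\ref{cor:contraction} controls $\|\tilde w_1^m(V^m_1-V^m_{\tau,\varepsilon;1})\|_{\sM^m}$ for each fixed $m$. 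To match this with the test functions, I estimate $\fJ(G_{\tau,\varepsilon}\ast\varphi_i)$ and $\fJ(G\ast\varphi_i)$ with the polynomial decay and uniform boundedness required by the $(1+|x_1|)^{-1/2}$-weight appearing in $\tilde w_1^m$ (the stretched-exponential part of $\tilde w_1^m$ is absorbed by the stretched-exponential decay of $G_{\tau,\varepsilon}$ and $G$, which follows from an argument analogous to Lemma~\ref{lem:dot_G_infty}), and I use $G_{\tau,\varepsilon}\ast_\tau\varphi\to G\ast\varphi$ in this topology (smoothness and the cutoff properties of $G_{\tau,\varepsilon}$ reduce this to elementary Fourier estimates). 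Combined with convergence of $V_{\tau,\varepsilon;1}^m\to V_1^m$ in the weighted measure norm, this yields Item~(A); Item~(B) then follows from Remark~\ref{rem:S_T}.

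For Item~(C), observe that $\sY$ is built as the closure (in $\sW^{\alpha,\beta;\gamma}_{0,0}$) of $\sV^{\mathrm{fin};0}_0$, whose kernels lie in $\sN_0^m$, i.e.\ are invariant under the full Euclidean group; since $G$ also has this invariance, so do $T^m$ and $S^m$. For Item~(D), the formulas of Theorem~\ref{thm:main}~(C) and~(D) amount to extracting $\fL V^2_1$ and $\fL V^4_1$: at the fixed point, $W^4_{\tau,\varepsilon;1}=\int_0^1\fR\fE\fA^4_{\tau,\varepsilon;1,s}\fB_{\varepsilon;s}(V_s)\,\rd s$ lies in the range of $\fR$ so $\fL W^4_{\tau,\varepsilon;1}=0$, giving $\fL V^4_{\tau,\varepsilon;1}=1/g_{\tau,\varepsilon;1}=\lambda$ by the renormalization condition $\mathbf{g}_{\tau,\varepsilon;1}=1/\lambda$; similarly $\fL V^2_{\tau,\varepsilon;1}=1$ from $\mathbf{r}_{\tau,\varepsilon;1}=0$ together with the mass term in $A_\tau$. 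These values are preserved in the limit and yield the stated local parts of $T^2$ and $T^4$. For Item~(E), the factor $\exp(t^{-\zeta}\rD(x_1,\ldots,x_m)^\zeta)$ with $\zeta=4/5>1/2$ present in $w_1^m$ gives stretched-exponential decay of $V^m_1$ in the diameter; convolving with $G$ (which decays stretched-exponentially at rate $\zeta_\flat>\zeta$) preserves this decay, implying $\exp(|x|^{1/2})|\langle T^{m+n},\varphi_x\otimes\psi\rangle|\to 0$.

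\textbf{Main obstacle.} The principal technical difficulty is converting the $\tilde\sV^{2,3;1-40\kappa}$-convergence of the abstract kernels $V_{\tau,\varepsilon;\Cdot}$ into convergence of the truncated Schwinger functions paired with arbitrary Schwartz test functions on $\bR^{2m}$. This requires, on the one hand, uniform control of $G_{\tau,\varepsilon}\ast_\tau\varphi$ and its jet extension in a space that is dual to the weighted measure norm carried by $V^m_{\tau,\varepsilon;1}$ (in particular, sufficient polynomial decay in the ``center'' variable to beat the $(1+|x_1|)^{-1/2}$ factor in $\tilde w_1^m$), and on the other hand, a careful replacement of the periodic pairing $\langle\Cdot,\Cdot\rangle_\tau$ (defined via the cutoff $\chi_\tau$ of Def.~\ref{dfn:distributions_torus}) by the infinite-volume pairing $\langle\Cdot,\Cdot\rangle$ as $\tau\searrow 0$, which is precisely why the decaying weight $\tilde w$ was introduced in Def.~\ref{dfn:weights}.
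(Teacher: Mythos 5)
Your proposal follows essentially the same route as the paper's proof: the generating-functional identity of Lemma~\ref{lem:generating_functional}, the passage from $U^m_{\tau,\varepsilon;1}$ to the infinite-volume kernels $V^m_{\tau,\varepsilon;1}$ via Eq.~\eqref{eq:relation_U_V}, convergence from the bounds of Corollary~\ref{cor:contraction} combined with $G_{\tau,\varepsilon}\ast\varphi\to G\ast\varphi$ (Lemma~\ref{lem:G_L1_infty}), Remark~\ref{rem:S_T} for the untruncated functions, symmetry of the fixed-point kernels for (C), the renormalization conditions and $\int_{\bR^2}G=1$ for (D), and the stretched-exponential weights for (E). One small correction: the unit local part of $T^2$ comes from the free term $\langle\varphi,G\ast\varphi\rangle$ together with $\int_{\bR^2}G=1$, since $\fL V^2_{\tau,\varepsilon;1}=r_{\tau,\varepsilon;1}=0$ at the renormalization scale, rather than from $\fL V^2_{\tau,\varepsilon;1}=1$ as written.
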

\begin{rem}
Note that $g_{\tau,\varepsilon}=g_{\tau,\varepsilon;t}$, $r_{\tau,\varepsilon}=r_{\tau,\varepsilon;t}$ for all $t\in(0,\varepsilon]$ by Remark~\ref{rem:fixed_point_epsilon}. 
\end{rem}
\begin{rem}
The exponent $1/2$ in Item~(C) of the above theorem has no significance. With some extra work, it is possible to prove exponential decay of the truncated correlations. To this end, one has to use a decomposition $(G_{\varepsilon,t})_{t\in[0,1]}$ of the covariance $G_\varepsilon$ at $\varepsilon=0$, which in contrast to the decomposition introduced in Sec.~\ref{sec:covariance_decomposition}, has exponential decay.
\end{rem}

\begin{proof}
First, note that by Corollary~\ref{cor:contraction} for all $\tau,\varepsilon\in[0,1]$ the map $\fX_{\tau,\varepsilon;\Cdot}\,:\,\sY_{\tau,\varepsilon}\to \sY_{\tau,\varepsilon}$ is well-defined and has a unique fixed point $X_{\tau,\varepsilon;\Cdot}\equiv (g_{\tau,\varepsilon;\Cdot},r_{\tau,\varepsilon;\Cdot},z_{\tau,\varepsilon;\Cdot},W_{\tau,\varepsilon;\Cdot})\in \sY_{\tau,\varepsilon}$. For all $\tau,\varepsilon\in[0,1]$ and $t\in(0,1]$ we define $V_{\tau,\varepsilon;t}=(V_{\tau,\varepsilon;t}^m)_{m\in\bN_+}\in\sN$ by the equation
\begin{equation}\label{eq:schwinger_V_def}
 V_{\tau,\varepsilon;t}[\varphi]:= 
 U(1/g_{\tau,\varepsilon;t},r_{\tau,\varepsilon;t},z_{\tau,\varepsilon;t})[\varphi]+W_{\tau,\varepsilon;t}[\varphi]
\end{equation}
for all $\varphi\in\sS(\bR^2)^\bFF\otimes_{\mathrm{alg}}\sG^-$. Recall that we omit $\tau$ if $\tau=0$ and we omit~$\varepsilon$ if $\varepsilon=0$. Moreover, for all $\tau,\varepsilon\in(0,1]$ and $t\in[0,1]$ we define $U_{\tau,\varepsilon;t}\in\sN(C^\infty(\bT_\tau^2)^\bFF)$ as in Corollary~\ref{cor:polchinski}. Then by Corollary~\ref{cor:polchinski} and Lemma~\ref{lem:polchinski2} for all $t\in[0,1]$ the functional $U_{\tau,\varepsilon;t}\in\sN(C^\infty(\bT_\tau^2)^\bFF)$ is an effective potential at scale $t$.  Since by Remark~\ref{rem:Z_not_zero} we have $\fE\exp(U_{\tau,\varepsilon}(\varPsi_{\tau,\varepsilon}))\neq 0$ using Lemma~\ref{lem:generating_functional} we conclude that
\begin{equation}\label{eq:generating_proof}
 \mu_{\tau,\varepsilon}(\exp(\langle\Cdot,\phi\rangle_\tau))
 =
 \exp(\langle \phi,G_{\tau,\varepsilon}\ast_\tau\phi\rangle_\tau/2+U_{\tau,\varepsilon;1}(G_{\tau,\varepsilon}\ast_\tau\phi)-U_{\tau,\varepsilon;1}(0))
\end{equation}
for all $\phi\in C^\infty(\bT^2_\tau)^\bFF\otimes_{\mathrm{alg}}\sG^-$ and $\tau,\varepsilon\in(0,1]$.

Let us prove Items~(A) and~(B). We observe that it suffices to show that for all $m\in\bN_+$ there exist $S^m,T^m\in \sS'(\bR^{2m})^{\bFF^m}$ such that:
\begin{itemize}
\item[(A')]
$\lim_{\tau,\varepsilon\searrow0}\langle T^m_{\tau,\varepsilon},\varphi_1\otimes\ldots\otimes\varphi_m\rangle 
 =
 \langle T^m,\varphi_1\otimes\ldots\otimes\varphi_m\rangle$ for all $\varphi_1,\ldots,\varphi_m\in\sS(\bR^2)^\bFF$,
\item[(B')]
 $\lim_{\tau,\varepsilon\searrow0}\langle S^m_{\tau,\varepsilon},\varphi_1\otimes\ldots\otimes\varphi_m\rangle 
 =
 \langle S^m,\varphi_1\otimes\ldots\otimes\varphi_m\rangle$ for all $\varphi_1,\ldots,\varphi_m\in\sS(\bR^2)^\bFF$.
\end{itemize}
By Remark~\ref{rem:S_T} Item~(B') follows from Item~(A'). We proceed to the proof of Item~(A'). For $\varphi\in\sS(\bR^2)^\bFF\otimes_{\mathrm{alg}}\sG^-$ we apply Eq.~\eqref{eq:generating_proof} with $\phi=\fP_\tau\varphi\in C^\infty(\bT_\tau^2)^\bFF\otimes_{\mathrm{alg}}\sG^-$ to obtain
\begin{equation}
 \log(Z_{\tau,\varepsilon}[\varphi])
 =
 \log(\mu_{\tau,\varepsilon}(\exp(\langle\Cdot,\varphi\rangle)))
 =
 \langle \varphi,G_{\tau,\varepsilon}\ast\varphi\rangle/2+ U_{\tau,\varepsilon;1}(G_{\tau,\varepsilon}\ast\varphi)
 -U_{\tau,\varepsilon;1}(0))
\end{equation}
for $\tau,\varepsilon\in(0,1]$. Note that by the translational invariance of $V^m_{\tau,\varepsilon;t}$ Eq.~\eqref{eq:relation_U_V} implies that
\begin{equation}
 \langle U^m_{\tau,\varepsilon;t},(G_{\tau,\varepsilon}\ast\varphi)^{\otimes m}\rangle_\tau 
 =
 \langle V^m_{\tau,\varepsilon;t},\fJ(G_{\varepsilon}\ast\varphi)\otimes (\fJ(G_{\tau,\varepsilon}\ast\varphi))^{\otimes (m-1)}\rangle.
\end{equation}
for all $m\in\bN+$, $\tau,\varepsilon,t\in(0,1]$ and $\varphi\in\sS(\bR^2)^\bFF\otimes_{\mathrm{alg}}\sG^-$.
Consequently, using Def.~\ref{dfn:correlations} we arrive at
\begin{equation}
\begin{aligned}
 \langle T_{\tau,\varepsilon}^2 ,\varphi^{\otimes 2}\rangle&:=
 \langle \varphi,G_{\tau,\varepsilon}\ast\varphi\rangle
 +
 2\,\langle V^2_{\tau,\varepsilon;1},\fJ(G_{\varepsilon}\ast\varphi)\otimes (\fJ(G_{\tau,\varepsilon}\ast\varphi))\rangle,
 \\
 \langle T_{\tau,\varepsilon}^m,\varphi^{\otimes m}\rangle&:=m!\, \langle V^m_{\tau,\varepsilon;1},\fJ(G_{\varepsilon}\ast\varphi)\otimes (\fJ(G_{\tau,\varepsilon}\ast\varphi))^{\otimes (m-1)}\rangle,\qquad m\in\bN_+\setminus\{2\},
\end{aligned} 
\end{equation}
for all $\tau,\varepsilon\in(0,1]$ and $\varphi\in\sS(\bR^2)^\bFF\otimes_{\mathrm{alg}}\sG^-$. For $m\in\bN+$ we define an antisymmetric distribution $T^m\in \sS'(\bR^{2m})^{\bFF^m}$ by the equalities  
\begin{equation}
\begin{aligned}
 \langle T^2 ,\varphi^{\otimes 2}\rangle&:=
 \langle \varphi,G\ast\varphi\rangle
 +
 2\,\langle V^2_{1},(\fJ(G\ast\varphi))^{\otimes 2}\rangle,
 \\
 \langle T^m,\varphi^{\otimes m}\rangle&:=m!\, \langle V^m_{1},(\fJ (G\ast\varphi))^{\otimes m}\rangle,\qquad m\in\bN_+\setminus\{2\},
\end{aligned} 
\end{equation}
for all $\varphi\in\sS(\bR^2)^\bFF\otimes_{\mathrm{alg}}\sG^-$, where $V_1=(V_{1}^m)_{m\in\bN_+}\in\sN$ is defined by Eq.~\eqref{eq:schwinger_V_def} with $\tau=0$, $\varepsilon=0$ and $t=1$. Since by Lemma~\ref{lem:G_L1_infty}
\begin{equation}
 \lim_{\tau,\varepsilon\searrow0} \langle \varphi_1,G_{\tau,\varepsilon}\ast\varphi_2\rangle
 =
 \langle \varphi_1,G\ast\varphi_2\rangle
\end{equation}
for all $\varphi_1,\varphi_2\in\sS(\bR^2)^\bFF$ in order to prove Item~(A') it suffices to show that
\begin{multline}
 \lim_{\tau,\varepsilon\searrow0}
 \langle V^m_{\tau,\varepsilon;1},\fJ(G_{\varepsilon}\ast\varphi_1)\otimes \fJ(G_{\tau,\varepsilon}\ast\varphi_2)\otimes\ldots\otimes \fJ(G_{\tau,\varepsilon}\ast\varphi_m)\rangle
 \\
 =
 \langle V^m_{1},\fJ(G\ast\varphi_1)\otimes\ldots\otimes \fJ(G\ast\varphi_m)\rangle
\end{multline}
for all $m\in\bN_+$ and $\varphi_1,\ldots,\varphi_m\in\sS(\bR^2)^\bFF$. To this end, we observe that
\begin{multline}
 |\langle V^m_{1},\fJ(G\ast\varphi_1)\otimes\ldots\otimes \fJ(G\ast\varphi_m)\rangle
 -
 \langle V^m_{\tau,\varepsilon;1},\fJ(G_{\varepsilon}\ast\varphi_1)\otimes \fJ(G_{\tau,\varepsilon}\ast\varphi_2)\otimes\ldots\otimes \fJ(G_{\tau,\varepsilon}\ast\varphi_m)\rangle|
 \\
 \leq
 |\langle V^m_{\tau,\varepsilon;1},
 \fJ(G\ast\varphi_1)\otimes\ldots\otimes \fJ(G\ast\varphi_m)
 -
 \fJ(G_{\varepsilon}\ast\varphi_1)\otimes \fJ(G_{\tau,\varepsilon}\ast\varphi_2)\otimes\ldots\otimes \fJ(G_{\tau,\varepsilon}\ast\varphi_m)\rangle|
 \\
 +
 |\langle V^m_{1}-V^m_{\tau,\varepsilon;1},\fJ(G\ast\varphi_1)\otimes\ldots\otimes \fJ(G\ast\varphi_m)\rangle|
\end{multline}
By Remark~\ref{rem:sN_estimate} the above expression is bounded by
\begin{equation}
 \|V^m_{\tau,\varepsilon;1}\|_{\tilde\sN^m}
 \,
 (A+B_{\tau,\varepsilon})^{m-1}\, B_{\tau,\varepsilon} + 
 \|V^m_{1}-V^m_{\tau,\varepsilon;1}\|_{\tilde\sN^m} \,A^{m},
\end{equation}
where
\begin{equation}
 A:=\|\fJ(G\ast\varphi_1)/\tilde w\|_{L^1(\bR^2)^{\bA\times\bFF}}
 \vee
 \|\fJ(G\ast\varphi_2)\|_{L^\infty(\bR^2)^{\bA\times\bFF}}
 \vee\ldots\vee
 \|\fJ(G\ast\varphi_m)\|_{L^\infty(\bR^2)^{\bA\times\bFF}}
\end{equation}
and
\begin{multline}
 B_{\tau,\varepsilon}:=
 \|\fJ((G-G_\varepsilon)\ast\varphi_1)/\tilde w\|_{L^1(\bR^2)^{\bA\times\bFF}}
 \\
 +
 \|\fJ((G-G_{\tau,\varepsilon})\ast\varphi_2)\|_{L^\infty(\bR^2)^{\bA\times\bFF}}
 +
 \ldots
 +
 \|\fJ((G-G_{\tau,\varepsilon})\ast\varphi_m)\|_{L^\infty(\bR^2)^{\bA\times\bFF}}.
\end{multline}
We used above the notation introduced in Remark~\ref{rem:Lp_K} below. Note that by the bounds~\eqref{eq:cor_bound_2} established in Corollary~\ref{cor:contraction} and Def.~\ref{dfn:sN} and~\ref{dfn:V_space} there exists $C\in(0,\infty)$ such that
\begin{equation}\label{eq:bounds_V_N}
 \|V^m_{\tau,\varepsilon;1}\|_{\sN^m} \leq C\,,
 \qquad
 \|V^m_{1}-V^m_{\tau,\varepsilon;1}\|_{\tilde\sN^m} \leq C\,\lambda_{\tau\vee\varepsilon}^\kappa
\end{equation}
for all $\tau,\varepsilon\in(0,1]$ and $m\in\bN_+$. Item~(A') follows now from Lemma~\ref{lem:G_L1_infty} and the bounds for $\|V^m_{\tau,\varepsilon;1}\|_{\tilde\sN^m}\leq \|V^m_{\tau,\varepsilon;1}\|_{\sN^m}$ and $\|V^m_{1}-V^m_{\tau,\varepsilon;1}\|_{\tilde\sN^m}$ stated above. As argued above, this proves Item~(B') as well as Items~(A) and~(B).

Let us turn to the proof of Item~(C). Note that $V_1\in\sN$ defined by Eq.~\eqref{eq:schwinger_V_def} is invariant under the symmetries of the plane by Def.~\ref{dfn:U_functional},~\ref{dfn:sX} and~\ref{dfn:sW}. This together with the fact that the kernel $G$ is invariant under the symmetries of the plane implies Item~(C).

Item~(D) is an immediate consequence of the definition of 
$T^m\in \sS'(\bR^{2m})^{\bFF^m}$ given above, the properties of $V_1=(V_{1}^m)_{m\in\bN_+}\in\sN$, the fact that $\int_{\bR^2} G(x)\,\rd x=1$ and Def.~\ref{dfn:X_map}.

In order to prove Item~(E) it is enough to show that
\begin{equation}
 \lim_{x\to\infty}\exp(|x|^{1/2})\,|\langle \varphi_x,G\ast\psi\rangle|=0
\end{equation}
for all $\varphi\in C^\infty_\rc(\bR^{2})^{\bFF}$ and $\psi\in C^\infty_\rc(\bR^{2})^{\bFF}$ and
\begin{equation}
 \lim_{x\to\infty}\exp(|x|^{1/2})\,|\langle V^m_{1},\fJ^{\otimes m} (G^{\otimes m}\ast(\varphi_x\otimes \psi))\rangle|
 =0
\end{equation}
for all $m\in\bN_+$, $n\in\{1,\ldots,m-1\}$, $\varphi\in C^\infty_\rc(\bR^{2n})^{\bFF^n}$, $\psi\in C^\infty_\rc(\bR^{2(m-n)})^{\bFF^{m-n}}$. Both of the above equalities follow from Remarks~\ref{rem:sN_estimate} and~\ref{rem:sN_G} and the bounds~\eqref{eq:bounds_V_N} for $\|V^m_{1}\|_{\tilde\sN^m}\leq \|V^m_{1}\|_{\sN^m}$. This finishes the proof.
\end{proof}

\begin{rem}\label{rem:Lp_K}
Given $p\in[1,\infty]$ and a finite tuple $\varphi=(\varphi^k)_{k\in\bK}$ of measurable functions over $\bR^2$ we define $\|\varphi\|_{L^p(\bR^2)^\bK}:=\sum_{k\in\bK}\|\varphi^k\|_{L^p(\bR^2)}$. 
\end{rem}

\begin{rem}\label{rem:sN_estimate}
Let $m\in\bN_+$ and $V=(V^{a,\sigma})_{a\in\bA^m,\sigma\in\bFF^m}\in\sN^m$. Recall that 
\begin{equation}
 \|V\|_{\tilde\sN^m}=\sum_{a\in\bA^m}\sum_{\sigma\in\bFF^m}\|\tilde w_1^{m}V^{a,\sigma}\|_{\sM^m}
 \leq
 \|V\|_{\sN^m},
\end{equation}
where
\begin{equation}
 \tilde w^{m}_{1}(x_1,\ldots,x_m) = 
 \tilde w(x_1) \exp(\rD(x_1,\ldots,x_m)^\zeta),
 \qquad
 \tilde w(x)=(1+|x|)^{-1/2},
 \qquad
 \zeta=4/5,
\end{equation}
for all $x,x_1,\ldots,x_m\in\bR^2$ and $\rD(x_1,\ldots,x_m)\geq 0$ is the diameter of $\{x_1,\ldots,x_m\}\subset\bR^2$. The following bound
\begin{equation}
 |\langle V,\varphi\rangle|\leq \|V\|_{\tilde\sN^m}
 \sup_{a\in\bA^m}\sup_{\sigma\in\bFF^m}\sup_{x_2,\ldots,x_m\in\bR^2}\int_{\bR^2} \frac{|\varphi^{a,\sigma}(x_1,\ldots,x_m)|}{\tilde w_1^{m}(x_1,\ldots,x_m)}\,\rd x_1
\end{equation}
is true for all $\varphi\in\sS(\bR^{2m})^{\bA^{m}\times\bFF^{m}}$.
\end{rem}

\begin{rem}\label{rem:sN_G}
By Lemma~\ref{lem:dot_G_infty} and Remark~\ref{rem:G}~(E),~(F) there exists $C\in(0,\infty)$ such that
\begin{equation}
 \|x\mapsto(1+|x|)^{1/2}\,\exp(|x|^\zeta)\,G(x)\|_{L^1(\bR^2)^{\bFF^2}}\leq C,\qquad
 \zeta=4/5.
\end{equation}
Consequently, by Lemma~\ref{lem:weights}~(c) the following bound
\begin{equation}
 \|G^{\otimes m}\ast V\|_{\tilde\sN^m}\leq C^m\,\|V\|_{\tilde\sN^m}
\end{equation}
holds for all $m\in\bN_+$ and $V\in\sN^m$.
\end{rem}

\begin{lem}\label{lem:G_L1_infty}
The following equalities
\begin{equation}
 \lim_{\varepsilon\searrow0}\|((G-G_{\varepsilon})\ast\varphi)/\tilde w\|_{L^1(\bR^2)^\bFF}=0,
 \qquad
 \lim_{\tau,\varepsilon\searrow0}\|(G-G_{\tau,\varepsilon})\ast\varphi\|_{L^\infty(\bR^2)^\bFF}
 =0
\end{equation}
hold for all $\varphi\in \sS(\bR^2)^\bFF$.
\end{lem}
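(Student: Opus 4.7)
The plan is to reduce both statements to the single claim that $\psi_\varepsilon := \varphi - \vartheta_\varepsilon \ast \vartheta_\varepsilon \ast \varphi$ converges to $0$ in $\sS(\bR^2)^\bFF$ as $\varepsilon \searrow 0$. Indeed, since $G_\varepsilon = \vartheta_\varepsilon \ast G \ast \vartheta_\varepsilon$, the commutativity of convolution gives
\begin{equation}
 (G - G_\varepsilon) \ast \varphi = G \ast \psi_\varepsilon.
\end{equation}
The Schwartz convergence $\psi_\varepsilon \to 0$ follows from standard mollifier arguments: $\fF(\vartheta_\varepsilon \ast \vartheta_\varepsilon)(p) = \vartheta(2\varepsilon\omega(p))$ is uniformly bounded by $1$ and tends to $1$ pointwise, so $\|(1+|\cdot|)^N \psi_\varepsilon\|_{L^1} \to 0$ and $\|\psi_\varepsilon\|_{L^\infty} \to 0$ for every $N\in\bN_0$.

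For the first limit, using $(1+|x|)^{1/2} \leq (1+|x-y|)^{1/2}(1+|y|)^{1/2}$ and Remark~\ref{rem:Lp_K}, the weighted Young inequality yields
\begin{equation}
 \|(G \ast \psi_\varepsilon)/\tilde w\|_{L^1(\bR^2)^\bFF} \leq \|G/\tilde w\|_{L^1(\bR^2)^{\bFF^2}} \cdot \|\psi_\varepsilon/\tilde w\|_{L^1(\bR^2)^\bFF}.
\end{equation}
The first factor is finite, since Remark~\ref{rem:sN_G} yields the stronger bound $\|(1+|\cdot|)^{1/2} \exp(|\cdot|^\zeta) G\|_{L^1} < \infty$, and the second factor tends to $0$ by the Schwartz convergence of $\psi_\varepsilon$.

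For the second limit, split $G - G_{\tau,\varepsilon} = (G - G_\varepsilon) + (G_\varepsilon - G_{\tau,\varepsilon})$. The first piece is dispatched by $\|G \ast \psi_\varepsilon\|_{L^\infty} \leq \|G\|_{L^1}\,\|\psi_\varepsilon\|_{L^\infty} \to 0$. For the periodization remainder, the formula $G_{\tau,\varepsilon} - G_\varepsilon = \sum_{n \in \bZ^2 \setminus\{0\}} G_\varepsilon(\cdot + n/\tau)$ gives
\begin{equation}
 ((G_{\tau,\varepsilon} - G_\varepsilon) \ast \varphi)(x) = \sum_{n \in \bZ^2 \setminus\{0\}} (G_\varepsilon \ast \varphi)(x + n/\tau),
\end{equation}
whose left-hand side is $1/\tau$-periodic in $x$. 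The supremum may therefore be taken over the fundamental domain $[-1/(2\tau), 1/(2\tau)]^2$, on which the elementary estimate $|x + n/\tau|_\infty \geq |n|_\infty/(2\tau)$ holds for every $n \neq 0$. The main obstacle of the proof is then to establish rapid decay of $G_\varepsilon \ast \varphi$ \emph{uniformly in} $\varepsilon \in [0,1]$. This will be handled by writing $G_\varepsilon \ast \varphi = G \ast (\vartheta_\varepsilon \ast \vartheta_\varepsilon \ast \varphi)$, observing that the family $\vartheta_\varepsilon \ast \vartheta_\varepsilon \ast \varphi$ is uniformly Schwartz by the rescaling $\vartheta_\varepsilon \ast \vartheta_\varepsilon = \varepsilon^{-2}(\fF^{-1} h_\varepsilon)(\cdot/\varepsilon)$ with the uniformly Gevrey-class symbols $h_\varepsilon(q) = (2\pi)^{-2}\,\vartheta(2(\varepsilon^2+|q|^2)^{1/2})$ supported in a bounded set, and finally convolving with the stretched-exponentially decaying $G$ from Remark~\ref{rem:sN_G}. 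This yields $|(G_\varepsilon \ast \varphi)(y)| \leq C_N (1+|y|)^{-N}$ uniformly in $\varepsilon$, so summing the series bounds the $L^\infty$ norm by $C_N \tau^N \sum_{n \neq 0} |n|^{-N}_\infty = O(\tau^{N-2}) \to 0$ for $N > 2$.
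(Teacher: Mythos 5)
Your treatment of the first limit is correct, though it takes a different route from the paper: you transfer the regularization onto the test function, writing $(G-G_\varepsilon)\ast\varphi=G\ast\psi_\varepsilon$ with $\psi_\varepsilon=\varphi-\vartheta_\varepsilon\ast\vartheta_\varepsilon\ast\varphi\to0$ in $\sS(\bR^2)^\bFF$, and then apply the weighted Young inequality together with $\|G/\tilde w\|_{L^1}<\infty$. The paper instead puts the smallness on the kernel itself: by Remark~\ref{rem:G}~(F) one has $G-G_\varepsilon=-\int_0^{4\varepsilon}(\dot G_t-\dot G_{\varepsilon;t})\,\rd t$, and Lemma~\ref{lem:dot_G_infty} gives $\|\dot G_{\varepsilon;t}/\tilde w\|_{L^1}\leq C$ uniformly, whence $\|(G-G_\varepsilon)/\tilde w\|_{L^1}\leq 4C\varepsilon$; this single kernel estimate also disposes of the $(G-G_\varepsilon)\ast\varphi$ contribution to the second limit. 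Both arguments work; the paper's is shorter, gives an explicit $O(\varepsilon)$ rate, and avoids the mollifier discussion.

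The periodization part of the second limit contains a genuine gap. The identity $((G_{\tau,\varepsilon}-G_\varepsilon)\ast\varphi)(x)=\sum_{n\in\bZ^2\setminus\{0\}}(G_\varepsilon\ast\varphi)(x+n/\tau)$ is correct, but this function is \emph{not} $1/\tau$-periodic: only the full sum over all $n\in\bZ^2$ is periodic, and deleting the $n=0$ term destroys the periodicity. Hence you are not entitled to restrict the supremum to the fundamental domain, and this is not a cosmetic point: at $x=-n_0/\tau$ with $n_0\neq0$ the term $n=n_0$ equals $(G_\varepsilon\ast\varphi)(0)$, so the supremum of your sum over all of $\bR^2$ is of order $|(G_\varepsilon\ast\varphi)(0)|$ and does not become small; the smallness you exploit holds only on the region where all nonzero translates are far away, and some justification for working there must be supplied. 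The paper's proof treats this piece by the identity $(G_\varepsilon-G_{\tau,\varepsilon})\ast\varphi=G_\varepsilon\ast(\varphi-\fP_\tau\varphi)$ and the bound $\|G_\varepsilon\|_{L^1}\,\|\varphi-\fP_\tau\varphi\|_{L^\infty}$, i.e.\ it carries the periodization error on the Schwartz function $\varphi$, whose translates $\varphi(\Cdot+n/\tau)$, $n\neq0$, are rapidly small on the fundamental domain (note that also there the smallness is that of the periodization error evaluated on the fundamental domain). Your uniform-in-$\varepsilon$ rapid-decay estimate for $G_\varepsilon\ast\varphi$ is itself a sound sub-argument and plays the role of the rapid decay of $\varphi$ in the paper's version, but the reduction of the supremum that precedes it is the step that fails as written and must be repaired along the lines just indicated.
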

\begin{proof}
By Remark~\ref{rem:w_tilde_young} we have
\begin{equation}
 \|((G-G_{\varepsilon})\ast\varphi)/\tilde w\|_{L^1(\bR^2)^\bFF}
 \leq
 \|(G-G_{\varepsilon})/\tilde w\|_{L^1(\bR^2)^{\bFF^2}}
 \,
 \|\varphi/\tilde w\|_{L^1(\bR^2)^\bFF}
\end{equation}
Moreover, by elementary estimates we obtain
\begin{multline}
 \|(G-G_{\tau,\varepsilon})\ast\varphi\|_{L^\infty(\bR^2)^\bFF}
 \leq
 \|(G-G_{\varepsilon})\ast\varphi\|_{L^\infty(\bR^2)^\bFF}
 +
 \|(G_\varepsilon-G_{\tau,\varepsilon})\ast\varphi\|_{L^\infty(\bR^2)^\bFF}
 \\
 \leq
 \|(G-G_{\varepsilon})\ast\varphi\|_{L^\infty(\bR^2)^\bFF}
 +
 \|G_\varepsilon\ast(\varphi-\fP_\tau\varphi)\|_{L^\infty(\bR^2)^\bFF}
 \\
 \leq
 \|G-G_{\varepsilon}\|_{L^1(\bR^2)^{\bFF^2}}\,\|\varphi\|_{L^\infty(\bR^2)^\bFF}
 +
 \|G_\varepsilon\|_{L^1(\bR^2)^{\bFF^2}}\, \|\varphi-\fP_\tau\varphi\|_{L^\infty(\bR^2)^\bFF}
\end{multline} 
Note that by Lemma~\ref{lem:dot_G_infty} and Remark~\ref{rem:G}~(E) there exists $C\in(0,\infty)$ such that
\begin{equation}
 \|\dot G_{\varepsilon;t}/\tilde w\|_{L^1(\bR^2)^{\bFF^2}} \leq C
\end{equation}
for all $\varepsilon\in[0,1]$ and $t\in(0,1]$. Consequently, by Remark~\ref{rem:G}~(F) we obtain
\begin{equation}
 \|G_{\varepsilon}/\tilde w\|_{L^1(\bR^2)^{\bFF^2}} \leq C,
 \qquad
 \|G-G_{\varepsilon}\|_{L^1(\bR^2)^{\bFF^2}}\leq
 \|(G-G_{\varepsilon})/\tilde w\|_{L^1(\bR^2)^{\bFF^2}} \leq 4\,\varepsilon\, C
\end{equation}
for all $\varepsilon\in[0,1]$. This finishes the proof.
\end{proof}

\section{Interacting Gross-Neveu field}\label{sec:interacting_field}

In this section we construct the interacting Gross-Neveu field in the non-commutative probability space $(\sF,\fE)$ introduced in Sec.~\ref{sec:free_field_decomposition}. To this end, we study the following stochastic differential equation
\begin{equation}\label{eq:fbsde}
 \varPhi_{\tau,\varepsilon;t} = -\int_t^1 \dot G_{\tau,\varepsilon;s}\ast_\tau \rD U_{\tau,\varepsilon;s}[\varPhi_{\tau,\varepsilon;s}]\,\rd s + \varPsi_{\tau,\varepsilon;t}
\end{equation}
for $(0,1]\ni t\mapsto \,\varPhi_{\tau,\varepsilon;t}\in \sC^\bFF$. The functional $U_{\tau,\varepsilon;s}[\varphi]$ is the effective potential, $\varPsi_{\tau,\varepsilon;t}$ is the scale decomposition of the free field, $\ast_\tau$ denotes the convolution on the torus $\bT_\tau$ and $G_{\tau,\varepsilon;s}$ is the periodization of the scale decomposition of the propagator $G_{\varepsilon;t}$. For the motivation behind the above equation see Remark 2.2 in~\cite{BB21} and Sec.~4 of~\cite{BBD23}. Let us only note that the solution of the above equation satisfies for all $F\in\sN(C^\infty(\bT_\tau^2)^\bFF)$ the following identity
\begin{equation}
 \fE F(\varPhi_{\tau,\varepsilon;t}) = \frac{\fE F(\varPsi_{\tau,\varepsilon;t})\exp(U_{\tau,\varepsilon;t}[\varPsi_{\tau,\varepsilon;t}])}{\fE \exp(U_{\tau,\varepsilon;t}[\varPsi_{\tau,\varepsilon;t}])},
\end{equation}
which was proved in Prop. 3.10 of~\cite{DFG22}. In particular, by Lemma~\ref{lem:interacting_measure} and $U_{\tau,\varepsilon}=U_{\tau,\varepsilon;t=0}$ the field $\varPhi_{\tau,\varepsilon}=\varPhi_{\tau,\varepsilon;t=0}$ is distributed according to the Gross-Neveu measure $\mu_{\tau,\varepsilon}$ with cutoffs $\tau,\varepsilon\in(0,1]$. We call $\varPhi_{\tau,\varepsilon}$ the interacting Gross-Neveu field. We study its properties and complete the proof of Theorem~\ref{thm:main2}.

\begin{dfn}\label{dfn:sZ_norm}
The vector space $\sZ$ consists of functions $\varPhi_\Cdot\in C((0,1],\sC^{\bFF})$ such that the following norm
\begin{equation}
 \|\varPhi_\Cdot\|_{\sZ}:=\sup_{a\in\bA}\sup_{\sigma\in\bFF}\sup_{t\in(0,1]}\lambda^{\kappa}\, t^{1/2+|a|}\,\|\partial^a \varPhi^\sigma_t\|_\sC
\end{equation}
is finite. For $\varPhi_\Cdot\in\sZ$ we define
\begin{equation}
  \|\varPhi_\Cdot\|_{\tilde\sZ}:=\sup_{a\in\bA}\sup_{\sigma\in\bFF}\sup_{t\in(0,1]}\lambda^{\kappa}\,\lambda_t^{\kappa}\, t^{1/2+|a|}\,\|\tilde w\,\partial^a \varPhi^\sigma_t\|_\sC,
\end{equation}
where $\tilde w\in C^\infty(\bR^2)$ is defined by the equality $\tilde w(x)=(1+|x|)^{-1/2}$ for all $x\in\bR^2$. We also define
\begin{equation}
 \sU :=\{\varPhi_\Cdot\in \sZ\,|\,\|\varPhi_\Cdot\|_{\sZ}\leq 1\}.
\end{equation}

\end{dfn}
\begin{rem}
Recall that the vector space $(\sC,\|\Cdot\|_\sC)$ was introduced in Def.~\ref{dfn:sC}. It is easy to see that $(\sC,\|\Cdot\|_\sC)$ and $(\sZ,\|\Cdot\|_\sZ)$ are Banach spaces.
\end{rem}

\begin{dfn}\label{dfn:DV}
Let $\lambda\in(0,\lambda_\star]$, where $\lambda_\star$ is as in Corollary~\ref{cor:contraction}. Let
\begin{equation}
 (0,1]\ni t\mapsto X_{\tau,\varepsilon;t}\equiv (g_{\tau,\varepsilon;t},r_{\tau,\varepsilon;t},z_{\tau,\varepsilon;t},W_{\tau,\varepsilon;t})\in \bR\times\bR\times\bR\times\sN
\end{equation}
be the fixed point of the map $\fX_{\tau,\varepsilon;\Cdot}\,:\,\sY_{\tau,\varepsilon}\to \sY_{\tau,\varepsilon}$, introduced in Def.~\ref{dfn:X_map}. For $\tau,\varepsilon\in[0,1]$ and $t\in(0,1]$ define
\begin{equation}
 V_{\tau,\varepsilon;t}\equiv (V^m_{\tau,\varepsilon;t})_{m\in\bN_+}:= U(\theta^2_{\varepsilon;t}/g_{\tau,\varepsilon;t},r_{\tau,\varepsilon;t},z_{\tau,\varepsilon;t})+W_{\tau,\varepsilon;t}\in \sN
\end{equation}
and set $V_t:=V_{\tau,\varepsilon;t}$ with $\tau=0$, $\varepsilon=0$. For $k\in\bN_0$, $\varphi\in\sS(\bR^2)^{\bA\times\bFF}\otimes_{\mathrm{alg}}\sG$ we define
\begin{equation}
 \rD V_{\tau,\varepsilon;t}[\varphi]\equiv (\rD^{1,a,\sigma} V_{\tau,\varepsilon;t}[\varphi])_{a\in\bA,\sigma\in\bFF} \in L^\infty(\bR^2)^{\bA\times\bFF}\otimes_{\mathrm{alg}}\sG
\end{equation}
by the equality
\begin{equation}
 \langle\rD V_{\tau,\varepsilon;t}[\varphi],\psi\rangle
 \equiv
\sum_{a\in\bA,\sigma\in\bFF} \langle\rD^{1,a,\sigma} V_{\tau,\varepsilon;t}[\varphi],\psi^{a,\sigma}\rangle
:=
 \sum_{m\in\bN_+} m\,\langle V^m_{\tau,\varepsilon;t},\varphi^{\otimes (m-1)}\otimes \psi\rangle\in\sG
\end{equation}
for all $\psi\in\sS(\bR^2)^{\bA\times\bFF}$.
\end{dfn}

\begin{rem}
Recall that $\theta_{\varepsilon;t}$ was introduced in Def.~\ref{dfn:theta_ep_t} and $\theta_{\varepsilon;t}=1$ unless $\dot G_{\varepsilon;t}=0$.
\end{rem}

\begin{dfn}\label{dfn:Z_map}
For $\lambda\in(0,1]$ small enough and $\tau,\varepsilon\in(0,1]$ we define a map $$\fZ_{\varepsilon,\tau;\Cdot}\,:\,\sU\to\sZ$$ by the equality
\begin{equation}
 \fZ_{\varepsilon,\tau;t}[\varPhi_\Cdot]:=\int_t^1 \fJ\dot G_{\varepsilon;s}\ast \rD V_{\tau,\varepsilon;s}[\fJ\varPhi_s]\,\rd s+\varPsi_{\tau,\varepsilon;t},
\end{equation}
where
\begin{equation}
 (\fJ\dot G_{\varepsilon;t}\ast \rD V_{\tau,\varepsilon;t}[\fJ\varPhi_t])^\sigma
 :=
 \sum_{a\in\bA} (-1)^{|a|}\,\partial^a\dot G_{\varepsilon;t} \ast \rD^{1,a,\sigma} V_{\tau,\varepsilon;t}[\fJ\varPhi_t].
\end{equation}
We omit $\tau$ and $\varepsilon$ if $\tau=0$ and $\varepsilon=0$.
\end{dfn}
\begin{rem}
The fact that the map $\fZ_{\varepsilon,\tau;\Cdot}$ is well-defined is non-trivial and is a consequence of the estimates stated below.
\end{rem}

\begin{thm}\label{thm:contraction_fbsde}
There exists $\lambda_\star\in(0,1]$ and $C\in(0,\infty)$ such that for all $\lambda\in(0,\lambda_\star]$, all $\tau,\varepsilon\in[0,1]$ and all $\varPhi_\Cdot,\varPsi_\Cdot\in \sU$ it holds:
\begin{itemize}
 \item[(A)] $\|s\mapsto\fZ_{\tau,\varepsilon;s}(\varPhi_\Cdot)\|_{\sZ}
 \leq
 C\,\lambda^\kappa$,
 \item[(B)] $\|s\mapsto(\fZ_{\tau,\varepsilon;s}(\varPhi_\Cdot)-\fZ_{\tau,\varepsilon;s}(\varPsi_\Cdot))\|_{\sZ}
 \leq
 C\,\lambda^\kappa\, \|\varPhi_\Cdot-\varPsi_\Cdot\|_{\sZ}$,
 \item[(C)] $\|s\mapsto(\fZ_{\tau,\varepsilon;s}(\varPhi_\Cdot)
 -
 \fZ_{\tau,\varepsilon;s}(\varPsi_\Cdot))\|_{\tilde\sZ}
 \leq
 C\,\lambda^\kappa\,
 \|\varPhi_\Cdot-\varPsi_\Cdot\|_{\tilde\sZ}$,

 \item[(D)] $\|s\mapsto(\fZ_s(\varPhi_\Cdot)
 -
 \fZ_{\tau,\varepsilon;s}(\varPhi_\Cdot))\|_{\tilde\sZ}
 \leq
 C\,\lambda_{\tau\vee\varepsilon}^\kappa$,
 \item[(E)] $\|s\mapsto \lambda_s^{40\kappa-1}(\fZ_{\tau,\varepsilon;s}(\varPhi_\Cdot)-\varPsi_{\tau,\varepsilon;s})\|_{\sZ}
 \leq
 C$.
\end{itemize}
\end{thm}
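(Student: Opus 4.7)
The plan is to decompose $\fZ_{\tau,\varepsilon;t}[\varPhi_\Cdot]=\varPsi_{\tau,\varepsilon;t}+I_{\tau,\varepsilon;t}[\varPhi_\Cdot]$ with $I_{\tau,\varepsilon;t}[\varPhi_\Cdot]:=\int_t^1\fJ\dot G_{\varepsilon;s}\ast\rD V_{\tau,\varepsilon;s}[\fJ\varPhi_s]\,\rd s$. The free-field term is controlled directly by Lemma~\ref{lem:Psi_bounds}: part~(A) yields $t^{1/2+|a|}\|\partial^a\varPsi^\sigma_{\tau,\varepsilon;t}\|_\sC\leq C$, matching the $\sZ$-norm in~(A); part~(B) supplies the $\lambda_{\tau\vee\varepsilon}^\kappa$-estimate needed for~(D); and the weighted version feeds~(C). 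All the real work is on $I_{\tau,\varepsilon;t}[\varPhi_\Cdot]$, and the key inputs are $\|V_{\tau,\varepsilon;\Cdot}\|_{\sV^{8,4;1-40\kappa}}\leq C$ uniformly in $\tau,\varepsilon\in[0,1]$ together with $\|V_\Cdot-V_{\tau,\varepsilon;\Cdot}\|_{\tilde\sV^{2,3;1-40\kappa}}\leq C\lambda_{\tau\vee\varepsilon}^\kappa$ from Corollary~\ref{cor:contraction}.

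The central pointwise estimate is $\|\rD^{1,b,\sigma_1}V_{\tau,\varepsilon;s}[\fJ\varPhi_s]\|_{\sC}\leq C\lambda^\kappa\lambda_s^{1-40\kappa}s^{|b|-3/2}$ for $\varPhi_\Cdot\in\sU$. To derive it, I expand $\rD V$ via Def.~\ref{dfn:DV}, apply the pointwise bound $\|\partial^{a_i}\varPhi^{\sigma_i}_s\|_\sC\leq\lambda^{-\kappa}s^{-1/2-|a_i|}$ to each factor, and use $\|V^{m,\tilde a,\sigma}_s\|_{\sM^m}\leq\lambda_s^{\rho_{1-40\kappa,\kappa}(m)}s^{m/2-2+|\tilde a|}K^{m,\tilde a,\sigma}$ with $\sum_{\tilde a,\sigma}K^{m,\tilde a,\sigma}\leq\|V^m_\Cdot\|_{\sV^{m;1-40\kappa}}\leq 8^{-m}m^{-4}\,C$. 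The $s$-powers telescope to $s^{|b|-3/2}$ and the geometric series $\sum_m m^{-3}(\lambda_s^{2\kappa}/(8\lambda^\kappa))^m$ converges for $\lambda$ small since $\lambda_s^{2\kappa}\lambda^{-\kappa}\leq\lambda^\kappa$, producing the $\lambda^\kappa$ factor in the claimed bound. Combining this with $\|\partial^c\dot G_{\varepsilon;s}\|_{L^1}\leq Cs^{-|c|}$, which follows from Lemma~\ref{lem:dot_G_infty} and the identity $\dot G=-G^+\ast G^-$ of Remark~\ref{rem:G}(E), Young's inequality gives $\|\partial^b(\partial^a\dot G_{\varepsilon;s}\ast\rD^{1,a,\sigma}V_s[\fJ\varPhi_s])\|_\sC\leq C\lambda^\kappa\lambda_s^{1-40\kappa}s^{-|b|-3/2}$. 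Integration in $s\in[t,1]$ via Lemma~\ref{lem:bounds_relevant_irrelevant}(C) with $\rho=1-40\kappa$, $\varrho=-|b|-1/2$ (valid for $|b|\geq 0$ and $\lambda$ small) yields $\|\partial^b I^\sigma_{\tau,\varepsilon;t}[\varPhi_\Cdot]\|_\sC\leq C\lambda^\kappa\lambda_t^{1-40\kappa}t^{-|b|-1/2}$, which combined with the $\varPsi$-term proves~(A). Item~(E) is the same computation read against $\lambda_t^{40\kappa-1}I_{\tau,\varepsilon;t}$, where the prefactors $\lambda_t^{40\kappa-1}\cdot\lambda_t^{1-40\kappa}=1$ absorb cleanly.

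For the Lipschitz bounds (B) and (C) I use multilinearity together with the telescoping identity $\fJ\varPhi^{\otimes(m-1)}-\fJ\varPsi^{\otimes(m-1)}=\sum_{k=1}^{m-1}\fJ\varPhi^{\otimes(k-1)}\otimes\fJ(\varPhi-\varPsi)\otimes\fJ\varPsi^{\otimes(m-1-k)}$, which introduces one extra factor of $m$ in the series (still summable) and replaces precisely one $\|\fJ\varPhi\|$-type factor by $\|\varPhi-\varPsi\|_\sZ$ (or $\|\varPhi-\varPsi\|_{\tilde\sZ}$ for~(C)). For the weighted norm in~(C), I propagate $\tilde w$ past the $\dot G$-convolution using the submultiplicative inequality $\tilde w(x)\leq(1+|x-y|)^{1/2}\tilde w(y)$, which is a reformulation of Remark~\ref{rem:w_tilde_young}; the extra polynomial weight is absorbed by the stretched-exponential decay of $\dot G_{\varepsilon;s}$, giving $\|(1+|\Cdot|)^{1/2}\partial^c\dot G_{\varepsilon;s}\|_{L^1}\leq Cs^{-|c|}$. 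For~(D), I split $\fZ_s(\varPhi_\Cdot)-\fZ_{\tau,\varepsilon;s}(\varPhi_\Cdot)$ into three pieces: the $\varPsi$-difference handled by Lemma~\ref{lem:Psi_bounds}(B); the $V$-difference handled by the $\tilde\sV^{2,3;1-40\kappa}$-bound of Corollary~\ref{cor:contraction}, which forces $\alpha=2$ in the geometric series and is still summable for $\lambda$ small; and the $\dot G-\dot G_\varepsilon$ piece which vanishes for $s\geq 4\varepsilon$ by Remark~\ref{rem:G}(C) and is thus confined to $(0,4\varepsilon]$, yielding the $\lambda_{\tau\vee\varepsilon}^\kappa$ factor.

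The main obstacle will be~(C) and~(D): marrying the stretched-exponential weight $w^m_s$ built into the $\sV$-norms of $V_{\tau,\varepsilon;\Cdot}$ with the polynomial weight $\tilde w$ in the target $\tilde\sZ$-norm. The submultiplicative estimate for $\tilde w$ only frees up one polynomial factor, so in the telescoped Lipschitz sum one must ensure that the $\tilde w$ lands on precisely the $\fJ(\varPhi-\varPsi)$-factor (not on any other $\fJ\varPhi$ or $\fJ\varPsi$ factor) so that the partial sum $\sum K^{m,\tilde a,\sigma}$ remains bounded by the full $\sV^{8,4;1-40\kappa}$-norm rather than by a weighted variant that we have not established.
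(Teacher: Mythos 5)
Your proposal is correct and follows essentially the same route as the paper's proof: the same split $\fZ_{\tau,\varepsilon;t}=\varPsi_{\tau,\varepsilon;t}+\int_t^1\fJ\dot G_{\varepsilon;s}\ast\rD V_{\tau,\varepsilon;s}[\fJ\varPhi_s]\,\rd s$ with the free part handled by Lemma~\ref{lem:Psi_bounds}, the same kernel bounds from Corollary~\ref{cor:contraction}, the same convolution estimates for $\dot G_{\varepsilon;s}$ and integration via Lemma~\ref{lem:bounds_relevant_irrelevant}~(C), and the same telescoped multilinear argument for the Lipschitz bounds. The obstacle you flag for (C)--(D) is resolved exactly as you anticipate: by antisymmetry the difference factor can be placed in the last slot, and the inequality $\tilde w(x_1)\leq w^m_s(x_1,\ldots,x_m)\,\tilde w(x_m)$ transfers the polynomial weight onto that factor, so the $w^m_s$-weighted (i.e.\ the $\sV^{8,4;1-40\kappa}$) kernel norm already in hand suffices.
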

\begin{proof}
We write $\fZ_{\varepsilon,\tau;t}[\varPhi_\Cdot]=\fZ^{(1)}_{\varepsilon,\tau;t}[\varPhi_\Cdot]+\fZ^{(2)}_{\varepsilon,\tau;t}[\varPhi_\Cdot]$, where
\begin{equation}
 \fZ^{(1)}_{\varepsilon,\tau;t}[\varPhi_\Cdot]:=\varPsi_{\tau,\varepsilon;t},
 \qquad
 \fZ^{(2)}_{\varepsilon,\tau;t}[\varPhi_\Cdot]:=\int_t^1 \fJ\dot G_{\varepsilon;s}\ast \rD V_{\tau,\varepsilon;s}[\fJ\varPhi_s]\,\rd s,
\end{equation}
and prove the bounds~(A),~(B),~(C),~(D) separately for $\fZ^{(1)}_{\varepsilon,\tau;t}$ and $\fZ^{(2)}_{\varepsilon,\tau;t}$. The bound~(E) involves only $\fZ^{(2)}_{\varepsilon,\tau;t}$. The bounds~(A),~(B),~(C),~(D) for $\fZ^{(1)}_{\varepsilon,\tau;t}$ follow immediately from Lemma~\ref{lem:Psi_bounds}. Thus, it remains to prove the bounds for $\fZ^{(2)}_{\varepsilon,\tau;t}$.

First note that the weights introduced in Def.~\ref{dfn:weights} satisfy the estimate
\begin{equation}\label{eq:Z_weights}
 \tilde w(x_1)\leq\tilde w^m_s(x_1,\ldots,x_m)\leq w^m_s(x_1,\ldots,x_m)\tilde w(x_m)
\end{equation}
for all $m\in\bN_+$, $x_1,\ldots,x_m\in\bR^2$ and $s\in[0,1]$. Using Lemma~\ref{lem:H_dot_estimates}, Def.~\ref{dfn:sC} of $\|\Cdot\|_{\sC}$ and $\|\Cdot\|_{\tilde\sC}$ and the above estimate with $m=2$ we obtain
\begin{equation}
\begin{gathered}
 \|\partial^a\dot G_{\varepsilon;s}\ast\varphi\|_{\sC} \leq C \,s^{-|a|}\,\|\varphi\|_\sC,
 \qquad
 \|\partial^a\dot G_{\varepsilon;s}\ast\varphi\|_{\tilde\sC} \leq C \,s^{-|a|}\,\|\varphi\|_{\tilde\sC},
 \\
 \|\partial^a(\dot G^{\sigma}_{s}-\dot G^{\sigma}_{\varepsilon;s})\ast \varphi\|_{\tilde\sC} \leq C\,\lambda_\varepsilon^\kappa\,\lambda_s^{-\kappa}\,s^{-|a|}\,\|\varphi\|_{\tilde\sC}.
\end{gathered}
\end{equation}
By Def.~\ref{dfn:DV} of $\rD^{1,a,\sigma} V_{\tau,\varepsilon;s}$ and Def.~\ref{dfn:sM} of $\|\Cdot\|_{\sM}$ we have
\begin{equation}
 \sum_{a\in\bA}\sum_{\sigma\in\bFF} \|\rD^{1,a,\sigma} V_{\tau,\varepsilon;s}[\fJ\varPhi_s]\|_\sC
 \leq
 \sum_{m\in\bN_+} m \left(\sum_{a\in\bA^m}\sum_{\sigma\in\bFF^m}
 \|V_{\tau,\varepsilon;s}^{m,a,\sigma}\|_{\sM^m}\right)
 \left(\sup_{a\in\bA}\sup_{\sigma\in\bFF}\|\partial^a\varPhi_s^\sigma\|_\sC\right)^{m-1}
\end{equation}
and
\begin{multline}
 \sum_{a\in\bA}\sum_{\sigma\in\bFF} \|\rD^{1,a,\sigma} V_{\tau,\varepsilon;s}[\fJ\varPhi_s]-\rD^{1,a,\sigma} V_{\tau,\varepsilon;s}[\fJ\varPsi_s]\|_\sC
 \leq
 \sum_{m\in\bN_+} m^2 \left(\sum_{a\in\bA^m}\sum_{\sigma\in\bFF^m}
 \|V_{\tau,\varepsilon;s}^{m,a,\sigma}\|_{\sM^m}\right)
 \\
 \times
 \left(\sup_{a\in\bA}\sup_{\sigma\in\bFF}\|\partial^a\varPhi_s^\sigma\|_\sC+\sup_{a\in\bA}\sup_{\sigma\in\bFF}\|\partial^a\varPsi_s^\sigma\|_\sC\right)^{m-1}
 \,
 \left(\sup_{a\in\bA}\sup_{\sigma\in\bFF}\|\partial^a(\varPhi^\sigma_s-\varPsi^\sigma_s)\|_\sC\right)
\end{multline}
Similarly, using the estimate~\eqref{eq:Z_weights} we show that
\begin{multline}
 \sum_{a\in\bA}\sum_{\sigma\in\bFF} \|\rD^{1,a,\sigma} V_{s}[\fJ\varPhi_s]-\rD^{1,a,\sigma} V_{\tau,\varepsilon;s}[\fJ\varPhi_s]\|_{\tilde\sC}
 \\
 \leq
 \sum_{m\in\bN_+} m \left(\sum_{a\in\bA^m}\sum_{\sigma\in\bFF^m}
 \|\tilde w^m_s(V_s^{m,a,\sigma}-V_{\tau,\varepsilon;s}^{m,a,\sigma})\|_{\sM^m}\right)
 \left(\sup_{a\in\bA}\sup_{\sigma\in\bFF}\|\partial^a\varPhi_s^\sigma\|_\sC\right)^{m-1}
\end{multline}
and
\begin{multline}
 \sum_{a\in\bA}\sum_{\sigma\in\bFF} \|\rD^{1,a,\sigma} V_{\tau,\varepsilon;s}[\fJ\varPhi_s]-\rD^{1,a,\sigma} V_{\tau,\varepsilon;s}[\fJ\varPsi_s]\|_{\tilde\sC}
 \leq
 \sum_{m\in\bN_+} m^2 \left(\sum_{a\in\bA^m}\sum_{\sigma\in\bFF^m}
 \|w_s^m V_{\tau,\varepsilon;s}^{m,a,\sigma}\|_{\sM^m}\right)
 \\
 \times
 \left(\sup_{a\in\bA}\sup_{\sigma\in\bFF}\|\partial^a\varPhi_s^\sigma\|_\sC+\sup_{a\in\bA}\sup_{\sigma\in\bFF}\|\partial^a\varPsi_s^\sigma\|_\sC\right)^{m-1}
 \,
 \left(\sup_{a\in\bA}\sup_{\sigma\in\bFF}\|\partial^a(\varPhi^\sigma_s-\varPsi^\sigma_s)\|_{\tilde\sC}\right).
\end{multline}
Next, recall that by Corollary~\ref{cor:contraction}
\begin{equation}
 \|V_{\tau,\varepsilon;\Cdot}\|_{\sV^{8,4;1-40\kappa}} \leq C\,,
 \qquad
 \|V_{\Cdot}-V_{\tau,\varepsilon;\Cdot}\|_{\tilde\sV^{2,3;1-40\kappa}} \leq C\,\lambda_{\tau\vee\varepsilon}^\kappa.\,
\end{equation}
Moreover, by Def.~\ref{dfn:V_space}
\begin{multline}
 \sum_{a\in\bA^m}\sum_{\sigma\in\bFF^m}
 \|V_s^{m,a,\sigma}\|_{\sM^m}
 \leq
 \sum_{a\in\bA^m}\sum_{\sigma\in\bFF^m}
 \|w^m_s V_s^{m,a,\sigma}\|_{\sM^m}
 \\
 \leq
 \|V_\Cdot\|_{\sV^{8,4;1-40\kappa}}\,\lambda_s^{1-40\kappa}\, \lambda^{2\kappa m}\,
 s^{-2+m/2+|a|}\,
\end{multline}
and
\begin{equation}
 \sum_{a\in\bA^m}\sum_{\sigma\in\bFF^m}
 \|\tilde w^m_s V_s^{m,a,\sigma}\|_{\sM^m}
 \leq
 \|V_\Cdot\|_{\tilde\sV^{2,3;1-40\kappa}}\,\lambda_s^{1-40\kappa}\, \lambda^{2\kappa m}\,
 s^{-2+m/2+|a|}\,.
\end{equation}
Note also that $\lambda_t\leq \lambda$ and
\begin{equation}
 \int_t^1 \lambda_s^{1-40\kappa}\, s^{-3/2}\,\rd s \leq C\,\lambda_t^{1-40\kappa}\,t^{-1/2},
 \qquad
 \int_t^1 \lambda_s^{1-41\kappa}\, s^{-3/2}\,\rd s \leq C\,\lambda_t^{1-41\kappa}\,t^{-1/2},
\end{equation}
for $\lambda\in(0,1]$ small enough by Lemma~\ref{lem:bounds_relevant_irrelevant}~(C). The bounds~(A),~(B),~(C),~(D),~(E) for $\fZ^{(2)}_{\varepsilon,\tau;t}$ follow now from the estimates gathered above. This finishes the proof.
\end{proof}

\begin{cor}\label{cor:fbsde}
There exists $\lambda_\star\in(0,1]$ and $C\in(0,\infty)$ such that for all $\lambda\in(0,\lambda_\star]$ and all $\tau,\varepsilon\in[0,1]$ the map $\fZ_{\tau,\varepsilon;\Cdot}\,:\,\sU\to \sU$ is well-defined and has the unique fixed point denoted by $\varPhi_{\tau,\varepsilon;\Cdot}$ such that
\begin{equation}\label{eq:cor_bound_fbsde}
 \|\varPhi_{\Cdot}-\varPhi_{\tau,\varepsilon;\Cdot}\|_{\tilde\sZ}
 \leq
 C\,\lambda_{\tau\vee\varepsilon}^\kappa\,,
\end{equation}
where $\varPhi_{\Cdot}:=\varPhi_{\tau,\varepsilon;\Cdot}$ with $\tau=0$, $\varepsilon=0$. For $\tau,\varepsilon\in[0,1]$ define
\begin{equation}\label{eq:cor_limit_fbsde}
 \varPhi_{\tau,\varepsilon}:=\lim_{s\searrow0}\varPhi_{\tau,\varepsilon;s}\in \sS'(\bR^2)^\bFF.
\end{equation}
and $\varPhi=\varPhi_{\tau,\varepsilon}$ with $\tau=0$, $\varepsilon=0$. For all $\tau,\varepsilon\in(0,1]$ the field $\varPhi_{\tau,\varepsilon}\in C^\infty(\bT^2_\tau)^\bFF$ is the interacting Gross-Neveu field with the cutoffs $\tau,\varepsilon$, i.e. for all $F\in\sN(C^\infty(\bT_\tau^2)^\bFF)$ it holds
\begin{equation}\label{eq:cor_measure_fbsde}
 \mu_{\tau,\varepsilon}(F)
 \equiv
 \frac{\int F(\vartheta_\varepsilon\ast\psi_{\tau,\varepsilon})\,\exp\!\big(-A_{\tau}(\psi_{\tau,\varepsilon})+U_{\tau,\varepsilon}(\vartheta_\varepsilon\ast\psi_{\tau,\varepsilon})\big)\,\rd\psi_{\tau,\varepsilon}}{\int \exp\!\big(-A_{\tau}(\psi_{\tau,\varepsilon})+U_{\tau,\varepsilon}(\vartheta_\varepsilon\ast\psi_{\tau,\varepsilon})\big)\,\rd\psi_{\tau,\varepsilon}}
 =
 \fE F(\varPhi_{\tau,\varepsilon}).
\end{equation}
Moreover, $\varPhi^\sigma\in \sC^{-1/2}$, $\lim_{\tau,\varepsilon\searrow0}\|\varPhi^\sigma-\varPhi^\sigma_{\tau,\varepsilon}\|_{\tilde\sC^\alpha}=0$ and
 \begin{equation}\label{eq:cor_improved_fbsde}
  \sup_{i\in\{-1,0,1,\ldots\}} (i+2)^{1-40\kappa}\,2^{i/2} \|\Delta_i (\varPhi^\sigma-\varPsi^\sigma)\|_{\sC}<\infty
 \end{equation}
for all $\sigma\in\bFF$ and $\alpha\in(-\infty,-1/2)$
\end{cor}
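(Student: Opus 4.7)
The plan is to apply the Banach theorem to $\fZ_{\tau,\varepsilon;\Cdot}$ on the closed ball $\sU\subset\sZ$, extract $\varPhi_{\tau,\varepsilon}$ as an $s\searrow0$ limit of the fixed point, identify its law with $\mu_{\tau,\varepsilon}$, and deduce the Besov estimates from Theorem~\ref{thm:contraction_fbsde}(E). Since $\sU$ is a closed subset of the Banach space $(\sZ,\|\Cdot\|_\sZ)$, it is complete; choosing $\lambda_\star$ small enough that $C\lambda^\kappa\leq1$ in Theorem~\ref{thm:contraction_fbsde}(A) and~(B), the map $\fZ_{\tau,\varepsilon;\Cdot}$ sends $\sU$ into itself and contracts strictly, yielding the unique fixed point $\varPhi_{\tau,\varepsilon;\Cdot}$. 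The estimate~\eqref{eq:cor_bound_fbsde} follows from writing
\begin{equation*}
\varPhi_\Cdot-\varPhi_{\tau,\varepsilon;\Cdot}=\bigl(\fZ_\Cdot(\varPhi_\Cdot)-\fZ_{\tau,\varepsilon;\Cdot}(\varPhi_\Cdot)\bigr)+\bigl(\fZ_{\tau,\varepsilon;\Cdot}(\varPhi_\Cdot)-\fZ_{\tau,\varepsilon;\Cdot}(\varPhi_{\tau,\varepsilon;\Cdot})\bigr),
\end{equation*}
applying Theorem~\ref{thm:contraction_fbsde}(D) and~(C) to the two summands, and absorbing the resulting $C\lambda^\kappa\|\varPhi_\Cdot-\varPhi_{\tau,\varepsilon;\Cdot}\|_{\tilde\sZ}$ into the left-hand side.

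The limit~\eqref{eq:cor_limit_fbsde} is constructed via the identity $\varPhi_{\tau,\varepsilon;s}=\fZ^{(2)}_{\tau,\varepsilon;s}[\varPhi_{\tau,\varepsilon;\Cdot}]+\varPsi_{\tau,\varepsilon;s}$, where $\fZ^{(2)}_{\tau,\varepsilon;s}$ denotes the integral term of Def.~\ref{dfn:Z_map}. For $\tau,\varepsilon\in(0,1]$ the kernels $\dot G_{\varepsilon;s}$ vanish for $s\leq\varepsilon/2$ (Remark~\ref{rem:G}), so the integral truncates and both summands converge in $C^\infty(\bT^2_\tau)^\bFF$. In general the bound of Theorem~\ref{thm:contraction_fbsde}(E) together with Lemma~\ref{lem:bounds_relevant_irrelevant}(A) shows that the integrand of $\fZ^{(2)}$ is absolutely integrable in $\sC^\alpha$ for any $\alpha<-1/2$, while $\varPsi_{\tau,\varepsilon;s}\to\varPsi_{\tau,\varepsilon}$ in $\tilde\sC^\alpha$ by Lemma~\ref{lem:Psi_bounds}(C), so the limit exists in $\sS'(\bR^2)^\bFF$.

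The main obstacle is the identification~\eqref{eq:cor_measure_fbsde}. The strategy is to recast the fixed-point equation as the backward SDE~\eqref{eq:fbsde} expressed in terms of the effective potential $U_{\tau,\varepsilon;s}$ from Corollary~\ref{cor:polchinski}: using the relation~\eqref{eq:relation_U_V} between the kernels of $U_{\tau,\varepsilon;s}$ and $V_{\tau,\varepsilon;s}$ together with integration by parts (which accounts for the signs $(-1)^{|a|}$ in Def.~\ref{dfn:Z_map}), one verifies
\begin{equation*}
\dot G_{\tau,\varepsilon;s}\ast_\tau\rD U_{\tau,\varepsilon;s}[\varphi]=-\fJ\dot G_{\varepsilon;s}\ast\rD V_{\tau,\varepsilon;s}[\fJ\varphi]
\end{equation*}
for $\varphi\in C^\infty(\bT^2_\tau)^\bFF\otimes_{\mathrm{alg}}\sG^-$, so the fixed-point equation coincides with~\eqref{eq:fbsde}. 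Then the Girsanov-type identity of~\cite[Prop.~3.10]{DFG22} asserts that any such solution satisfies
\begin{equation*}
\fE F(\varPhi_{\tau,\varepsilon;t})=\frac{\fE\bigl(F(\varPsi_{\tau,\varepsilon;t})\exp U_{\tau,\varepsilon;t}[\varPsi_{\tau,\varepsilon;t}]\bigr)}{\fE\exp U_{\tau,\varepsilon;t}[\varPsi_{\tau,\varepsilon;t}]}
\end{equation*}
for every polynomial $F\in\sN(C^\infty(\bT^2_\tau)^\bFF)$. Setting $t=0$ and combining with $U_{\tau,\varepsilon;0}=U_{\tau,\varepsilon}$ (Remark~\ref{rem:fixed_point_epsilon}) and Lemma~\ref{lem:interacting_measure} yields~\eqref{eq:cor_measure_fbsde}.

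Finally, the Besov statements follow from translating the $\sZ$-norm bound of Theorem~\ref{thm:contraction_fbsde}(E) into Littlewood--Paley bounds. At the fixed point one has $\|\varPhi^\sigma_s-\varPsi^\sigma_s\|_\sC\leq C\lambda_s^{1-40\kappa}s^{-1/2}$; since $\dot G_{\varepsilon;u}$ has Fourier support at frequency of order $1/u$, the block $\Delta_i$ restricts the integral defining $\fZ^{(2)}$ to $u\sim2^{-i}$, which combined with $\lambda_{2^{-i}}\sim(i+2)^{-1}$ yields the estimate~\eqref{eq:cor_improved_fbsde}. Together with Lemma~\ref{lem:Psi_bounds}(C) this gives $\varPhi^\sigma\in\sC^{-1/2}$. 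Convergence $\varPhi^\sigma_{\tau,\varepsilon}\to\varPhi^\sigma$ in $\tilde\sC^\alpha$ for $\alpha<-1/2$ is obtained by splitting $\varPhi^\sigma-\varPhi^\sigma_{\tau,\varepsilon}=[(\varPhi^\sigma-\varPsi^\sigma)-(\varPhi^\sigma_{\tau,\varepsilon}-\varPsi^\sigma_{\tau,\varepsilon})]+[\varPsi^\sigma-\varPsi^\sigma_{\tau,\varepsilon}]$, controlling the first bracket via~\eqref{eq:cor_bound_fbsde} translated to $\tilde\sC^\alpha$ by the same Littlewood--Paley argument and the second via Lemma~\ref{lem:Psi_bounds}(C).
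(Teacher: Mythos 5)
Your proposal is correct and follows essentially the same route as the paper: the Banach fixed point theorem applied on $\sU$ using Theorem~\ref{thm:contraction_fbsde}, identification of the law of $\varPhi_{\tau,\varepsilon}$ via the effective potential of Corollary~\ref{cor:polchinski} and \cite[Prop.~3.10]{DFG22} (the point being that the solution is constant below scale $\varepsilon/2$ and $U_{\tau,\varepsilon;t=0}=U_{\tau,\varepsilon}$, so Lemma~\ref{lem:interacting_measure} applies), and the Besov statements from Theorem~\ref{thm:contraction_fbsde}~(E) together with frequency localization. The only cosmetic difference is that the paper packages the localization as the stabilization of Littlewood--Paley blocks, $\Delta_i\varPhi_{\tau,\varepsilon;s}=\Delta_i\varPhi_{\tau,\varepsilon;t}$ for $s,t\in(0,c\,2^{-i})$, whereas you invoke the shell support of $\dot G_{\varepsilon;u}$ inside the integral defining $\fZ^{(2)}_{\tau,\varepsilon;s}$; these are equivalent arguments.
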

\begin{rem}
Recall that $(\Delta_i)_{i\in\{-1,0,1,\ldots\}}$ are the Littlewood-Paley blocks and the Besov spaces $\sC^\alpha,\tilde\sC^\alpha$ were introduced in Def.~\ref{dfn:Besov}.
\end{rem}
\begin{proof}
The existence and uniqueness of the fixed point $\varPhi_{\tau,\varepsilon;\Cdot}$ of $\fZ_{\tau,\varepsilon;\Cdot}\,:\,\sU\to \sU$ for $\lambda\in(0,1]$ small enough satisfying the bound~\ref{eq:cor_bound_fbsde} follows from Theorem~\ref{thm:contraction_fbsde} and the Banach fixed point theorem.

Let $\tau,\varepsilon\in(0,1]$. For $t\in[0,1]$ define the functional $U_{\tau,\varepsilon;t}\in\sN(C^\infty(\bT_\tau^2)^\bFF)$ as in Corollary~\ref{cor:polchinski}. Note that $U_{\tau,\varepsilon;t}$ satisfies the Polchinski equation~\eqref{eq:polchinski_corollary}. Moreover, the fixed point $\varPhi_{\tau,\varepsilon;\Cdot}$ satisfies Eq.~\eqref{eq:fbsde}. We have $\varPhi_{\tau,\varepsilon;t}=\varPhi_{\tau,\varepsilon;\varepsilon/2}$ for $t\in(0,\varepsilon/2]$ since $\varPsi_{\tau,\varepsilon;t}=\varPsi_{\tau,\varepsilon;\varepsilon/2}$ for $t\in(0,\varepsilon/2]$ by Def.~\ref{dfn:Psi} and Remark~\ref{rem:G}~(B) and $\dot G_{\varepsilon;t}=0$ for $t\in(0,\varepsilon]$ by Remark~\ref{rem:G}~(A). Consequently, by Prop. 3.10 of~\cite{DFG22} the field $\varPhi_{\tau,\varepsilon}=\varPhi_{\tau,\varepsilon;\varepsilon/2}$ is distributed according to the Gross-Neveu measure $\mu_{\tau,\varepsilon}$ with the cutoffs $\tau,\varepsilon$, i.e. the equality~\eqref{eq:cor_measure_fbsde} holds true.

The existence of the limit~\eqref{eq:cor_limit_fbsde} for $\tau=0$, $\varepsilon=0$ follows from the fact that
\begin{equation}
 \Delta_i \varPhi_{\tau,\varepsilon;s} = \Delta_i\varPhi_{\tau,\varepsilon;t}
\end{equation}
for all $\tau,\varepsilon\in[0,1]$, $i\in\{-1,0,1,\ldots\}$ and $s,t\in(0,c\,2^{-i})$, where $c\in(0,\infty)$ is a universal constant depending only on the choice of the Littlewood-Paley blocks. Since $\varPhi_\Cdot\in\sZ$ using the above observation and Def.~\ref{dfn:sZ_norm} we conclude that $\varPhi^\sigma\in \sC^{-1/2}$. Similarly, using the bound~\eqref{eq:cor_bound_fbsde} we obtain $\lim_{\tau,\varepsilon\searrow0}\|\varPhi^\sigma-\varPhi^\sigma_{\tau,\varepsilon}\|_{\tilde\sC^\alpha}=0$ for all $\alpha<-1/2$. Finally, to show the bound~\eqref{eq:cor_improved_fbsde} we use the identity $\varPhi_t-\varPsi_t = \fZ_t(\varPhi_\Cdot)-\varPsi_t$ and Theorem~\ref{thm:contraction_fbsde}~(E).
\end{proof}

\section*{Acknowledgments}

The financial support by the grant `Sonata Bis' 2019/34/E/ST1/00053 of the National Science Centre, Poland, is gratefully acknowledged. I would like to thank Roland Bauerschmidt, David Brydges, Ayay Chandra, Rafael Greenblatt, Antti Kupiainen, Manfred Salmhofer and Christian Webb for comments and suggestions.

\end{document}